\newif\ifdraft\draftfalse 
\newif\ifanon\anonfalse   
\newif\iffull\fullfalse   
\newif\iflongrefs\longrefsfalse 
\newif\ifbackref\backreffalse 
\newif\ifsooner\soonerfalse
\newif\iflater\laterfalse
\newif\ifieee\ieeetrue
\newif\ifcamera\cameratrue 
\newif\ifallcites\allcitesfalse
\newif\ifneedspace\needspacefalse
\makeatletter \@input{texdirectives.tex} \makeatother
\DeclareMathAlphabet{\mathit}{\encodingdefault}{\familydefault}{m}{it}
\else\copyrightyear{2017}\fi 
\def\@copyrightpermission{\ifcamera\\\\\\\fi This work is licensed under a \href{https://creativecommons.org/licenses/by/4.0/}{Creative Commons Attribution 4.0 International License}}
\def\@authorsaddresses{}
  \renewcommand{\headrulewidth}{\z@}%
  \renewcommand{\footrulewidth}{\z@}%
  \renewcommand{\headrulewidth}{\z@}%
  \renewcommand{\footrulewidth}{\z@}%
\def\@mkbibcitation{}
\definecolor{dkblue}{rgb}{0,0.1,0.5}
\definecolor{dkgreen}{rgb}{0,0.4,0}
\definecolor{dkred}{rgb}{0.6,0,0}
\definecolor{dkpurple}{rgb}{0.7,0,1.0}
\definecolor{purple}{rgb}{0.9,0,1.0}
\definecolor{olive}{rgb}{0.4, 0.4, 0.0}
\definecolor{teal}{rgb}{0.0,0.4,0.4}
\definecolor{azure}{rgb}{0.0, 0.5, 1.0}
\definecolor{gray}{rgb}{0.5, 0.5, 0.5}
\definecolor{dkgray}{rgb}{0.3, 0.3, 0.3}
\def\Snospace~{\S{}}
\def\Nnospace~{}
\patchcmd{\hyper@makecurrent}{%
    \ifx\Hy@param\Hy@chapterstring
        \let\Hy@param\Hy@chapapp
    \fi
}{%
    \iftoggle{inappendix}{
        \@checkappendixparam{chapter}%
        \@checkappendixparam{section}%
        \@checkappendixparam{subsection}%
        \@checkappendixparam{subsubsection}%
        \@checkappendixparam{paragraph}%
        \@checkappendixparam{subparagraph}%
    }{}%
}{}{\errmessage{failed to patch}}
\newcommand*{\@checkappendixparam}[1]{%
    \def\@checkappendixparamtmp{#1}%
    \ifx\Hy@param\@checkappendixparamtmp
        \let\Hy@param\Hy@appendixstring
    \fi
}
\apptocmd{\appendix}{\toggletrue{inappendix}}{}{\errmessage{failed to patch}}
\newcommand{\MP}[1]{\marco{#1}}
\newcommand*{\EG}{e.g.,\xspace}
\newcommand*{\IE}{i.e.,\xspace}
\newcommand*{\ETAL}{et al.\xspace}
\newcommand*{\ETC}{etc.\xspace}
\newcommand{\mi}[1]{\ensuremath{\mathit{#1}}}
\newcommand{\ii}[1]{\mi{#1}}
\newcommand{\mtt}[1]{\ensuremath{\mathtt{#1}}}
\newcommand{\mf}[1]{\ensuremath{\mathbf{#1}}}
\newcommand{\mc}[1]{\ensuremath{\mathcal{#1}}}
\newcommand{\ms}[1]{\ensuremath{\mathsf{#1}}}
\newcommand{\mb}[1]{\ensuremath{\mathbb{#1}}}
\newcommand{\isdef}[0]{\ensuremath{\mathrel{\overset{\makebox[0pt]{\mbox{\normalfont\tiny\sffamily def}}}{=}}}}
\newcommand{\relmiddle}[1]{\mathrel{}\middle#1\mathrel{}}
\newcommand\bnfdef{\ensuremath{\mathrel{::=}}}
\newcommand{\OB}[1]{\ensuremath{\overline{#1}}}
\newcommand{\myset}[2]{\ensuremath{\left\{#1 ~\relmiddle|~ #2\right\}}}
\newcommand{\divrc}[0]{\ensuremath{\com{\Uparrow}}\xspace}
\newcommand{\termc}[0]{\ensuremath{\com{\Downarrow}}\xspace}
\newcommand*{\QEDA}{\hfill\ensuremath{\blacksquare}}%
\Crefname{lstlisting}{Listing}{Listings}
\Crefname{problem}{Problem}{Problems}
\Crefname{equation}{Rule}{Rules}
\newcommand{\compgen}[1]{\compskel{#1}{\S}{\T}}
\newcommand{\cmp}[1]{\compskel{#1}{}{}} 
\newcommand{\funname}[1]{\mtt{#1}}
\newcommand{\fun}[2]{\ensuremath{{\bl{\funname{#1}\left(#2\right)}}}\xspace}
\newcommand{\funn}[2]{\ensuremath{{\funname{#1}\left(#2\right)}}\xspace}
\newcommand{\dom}[1]{\fun{dom}{#1}}
\newcommand{\Nat}[0]{\ensuremath{\mb{N}}\xspace}
\newcommand{\backtrskel}[3]{\ensuremath{\bl{\left\langle\!\left\langle {#1} \right\rangle\!\right\rangle^{#2}_{#3}}}}
\newcommand{\backtr}[1]{\backtrskel{#1}{}{}}
\newcommand{\ctx}[0]{\ensuremath{\mb{C}}}
\newcommand{\ctxs}[0]{\src{\ctx}\xspace}
\newcommand{\ctxt}[0]{\trg{\ctx}\xspace}
\newcommand{\ctxc}[0]{\com{\ctx}\xspace}
\newcommand{\ctxhs}[1]{\ctxs\src{\hole{#1}}\xspace}
\newcommand{\ctxht}[1]{\ctxt\trg{\hole{#1}}\xspace}
\newcommand{\ctxhc}[1]{\ctxc\com{\hole{#1}}\xspace}
\newcommand{\hole}[1]{\ensuremath{\left[#1\right]}}
\newcommand{\evalctx}[0]{\ensuremath{\mb{E}}}
\newcommand{\evalctxs}[1]{\src{\evalctx\hole{#1}}\xspace}
\newcommand{\evalctxt}[1]{\trg{\evalctx\hole{#1}}\xspace}
\newcommand{\Bools}[0]{\src{{Bool}}\xspace}
\newcommand{\Nats}[0]{\src{{Nat}}\xspace}
\newcommand{\Boolt}[0]{\trg{{Bool}}\xspace}
\newcommand{\Natt}[0]{\trg{{Nat}}\xspace}
\newcommand{\trues}[0]{\src{{true}}\xspace}
\newcommand{\falses}[0]{\src{{false}}\xspace}
\newcommand{\truet}[0]{\trg{{true}}\xspace}
\newcommand{\falset}[0]{\trg{false}\xspace}
\newcommand{\truec}[0]{\com{{true}}\xspace}
\newcommand{\falsec}[0]{\com{{false}}\xspace}
\newcommand{\srce}[0]{\src{\emptyset}\xspace}
\newcommand{\trge}[0]{\trg{\emptyset}\xspace}
\newcommand{\SInits}[1]{\ensuremath{\src{\Omega_0}(\src{#1})}\xspace}
\newcommand{\SInitt}[1]{\ensuremath{\trg{\Omega_0}(\trg{#1})}\xspace}
\newcommand{\neutcol}[0]{black}
\newcommand{\stlccol}[0]{RoyalBlue}
\newcommand{\ulccol}[0]{RedOrange}
\newcommand{\commoncol}[0]{black}    
\newcommand{\col}[2]{\ensuremath{{\color{#1}{#2}}}}
\newcommand{\src}[1]{\ms{\col{\stlccol}{#1}}}
\newcommand{\trg}[1]{\mf{\col{\ulccol }{#1}}}
\newcommand{\bl}[1]{\col{\neutcol }{#1}}
\newcommand{\com}[1]{\mi{\col{\commoncol }{#1}}}
\newcommand{\fails}[0]{\src{fail}\xspace}
\newcommand{\failt}[0]{\trg{fail}\xspace}
\newcounter{typerule}
\crefname{typerule}{rule}{rules}
\newcommand{\typeruleInt}[5]{
	\def\thetyperule{#1}%
	\refstepcounter{typerule}%
	\label{tr:#4}%
  \ensuremath{\begin{array}{c}#5 \inference{#2}{#3}\end{array}} 
}
\newcommand{\typerule}[4]{
  \typeruleInt{#1}{#2}{#3}{#4}{\textsf{\scriptsize ({#1})} \\      }
}
\newcounter{criteria}
\crefname{criteria}{}{}
\newcommand{\criteria}[2]{%
	\def\thecriteria{\detokenize{#1}}%
  	\refstepcounter{criteria}%
  	\label{cr:#2}%
  	#1%
}
\newcommand{\myfig}[3]{\begin{figure} [!h]
#1
\caption{\label{fig:#2}#3}
\end{figure}}
\newcommand{\mytoprule}[1]{\vspace{1mm}\noindent\hrulefill\ \raisebox{-0.5ex}{\fbox{\ensuremath{#1}}} \hrulefill\hrulefill\hrulefill\vspace{0.5mm}}
\def\botrule{\vspace{0mm}\hrule\vspace{2mm}}
\newcounter{line}
\newcommand{\asm}[1]{\mtt{#1}}
\newcommand{\xto}[1]{\ensuremath{~\mathrel{\xrightarrow{~#1~}}~}}
\newcommand{\Xto}[1]{\ensuremath{~\mathrel{\xRightarrow{~#1~}}~}}
\newcommand{\xtos}[1]{\src{\xto{#1}}}
\newcommand{\nXtos}[1]{\src{\centernot{\Xto{#1}}}}
\newcommand{\xtot}[1]{\trg{\xto{#1}}}
\newcommand{\nXtot}[1]{\trg{\centernot{\Xto{#1}}}}
\newcommand{\xtosb}[1]{\src{\xto{\bl{#1}}}}
\newcommand{\Xtosb}[1]{\src{\Xto{\bl{#1}}}}
\newcommand{\xtotb}[1]{\trg{\xto{\bl{#1}}}}
\newcommand{\Xtotb}[1]{\trg{\Xto{\bl{#1}}}}
\newcommand{\Xtoltb}[1]{\trg{\Xtol{\bl{#1}}}}
\newcommand{\Xtolsb}[1]{\src{\Xtol{\bl{#1}}}}
\newcommand{\Xtol}[1]{\ensuremath{\xRightarrow{~#1~}\Low}}
\newcommand{\Low}[0]{\ensuremath{\!\!\!\!\Rightarrow} }
\definecolor{mygreen}{rgb}{0,0.6,0}
\definecolor{mygray}{rgb}{0.5,0.5,0.5}
\definecolor{mymauve}{rgb}{0.58,0,0.82}
\lstdefinelanguage{Java} 
{morekeywords={abstract, all, and, as, assert, but, check, disj, else, exactly, extends, fact, for, fun, iden, if, iff, implies, in, Int, int, let, lone, module, no, none, not, one, open, or, part, pred, run, seq, set, sig, some, sum, then, univ, package, class, public, private, null, return, new, interface, extern, object, implements, System, static, super, try , catch, throw, throws, Unit, var, val, of, principal, trust},
sensitive=true,
keywordstyle=\bfseries\color{green!40!black},
commentstyle=\itshape\color{purple!40!black},
morecomment=[l][\small\itshape\color{purple!40!black}]{//},
identifierstyle=\color{blue},
stringstyle=\color{orange},
basicstyle=\small,
basicstyle={\small\ttfamily},
numbers=left,
numberstyle=\tiny\color{mygray},
tabsize=2,
numbersep=3pt,
breaklines=true,
lineskip=-2pt,
stepnumber=1,
captionpos=b,
breaklines=true,
breakatwhitespace=false,
showspaces=false,
showtabs=false,
float=!h,
columns=fullflexible,escapeinside={(*@}{@*)},
moredelim=**[is][\color{red!60}]{@}{@},
literate={->}{{$\to$}}1 {^}{{$\mspace{-3mu}\widehat{\quad}\mspace{-3mu}$}}1
{<}{$<$ }2 {>}{$>$ }2 {>=}{$\geq$ }2 {=<}{$\leq$ }2
{<:}{{$<\mspace{-3mu}:$}}2 {:>}{{$:\mspace{-3mu}>$}}2
{=>}{{$\Rightarrow$ }}2 {+}{$+$ }2 {++}{{$+\mspace{-8mu}+$ }}2
{<=>}{{$\Leftrightarrow$ }}2 {+}{$+$ }2 {++}{{$+\mspace{-8mu}+$ }}2
{\~}{{$\mspace{-3mu}\widetilde{\quad}\mspace{-3mu}$}}1
{!=}{$\neq$ }2 {*}{${}^{\ast}$}1 
{\#}{$\#$}1
}
\DeclareMathOperator\ceq{\ensuremath{\mathrel{\simeq_{\mi{ctx}}}}}
\def\teqaux#1{\vcenter{\hbox{\ooalign{\hfil
       \raise6pt \hbox{\scriptsize{T}}\hfil\cr\hfil
       $=$}}}}
\def\beqaux#1{\vcenter{\hbox{\ooalign{\hfil
       \raise6pt \hbox{\scriptsize{\ensuremath{\beta}}}\hfil\cr\hfil
       $=$}}}}
\def\ceqwaux#1{\vcenter{\hbox{\ooalign{\hfil
       \raise6pt \hbox{\scriptsize{w-b}}\hfil\cr\hfil
       $\ceq$}}}}
\def\praux#1{\vcenter{\hbox{\ooalign{\hfil
       \raise6pt \hbox{$\sqsubset$}\hfil\cr\hfil
       $\sim$}}}}
\newcommand{\labelfont}[1]{\ensuremath{\asm{#1}}}
\newcommand{\clpaper}[2]{\ensuremath{\labelfont{call}~ #1~ #2}}
\newcommand{\cl}[2]{\ensuremath{\labelfont{call}~ #1~ #2{?}}}
\newcommand{\rtpaper}[1]{\ensuremath{\labelfont{ret}~ #1}}
\newcommand{\rt}[1]{\ensuremath{\labelfont{ret}~ #1{!}}}
\newcommand{\rdl}[1]{\ensuremath{\labelfont{read}~#1}}
\newcommand{\wrl}[1]{\ensuremath{\labelfont{write}~#1}}
\newcommand{\inpl}[1]{\ensuremath{\labelfont{input}~#1}}
\newcommand{\outl}[1]{\ensuremath{\labelfont{output}~#1}}
\newcommand{\traces}[3]{\ensuremath{\ms{TR}^{#2}_{#3}(#1)}}
\newcommand{\trt}[1]{\trg{\traces{#1}{}{}}}
\newcommand{\behav}[1]{\funn{Behav}{#1}}
\newcommand{\behavs}[1]{\src{\behav{#1}}}
\newcommand{\behavt}[1]{\trg{\behav{#1}}}
\newcommand{\behavc}[1]{\com{\behav{#1}}}
\newcommand{\behavAll}{\funname{Behavs}}
\newcommand{\srcAll}{\funname{\src{Progs}}}
\newcommand{\divc}[0]{\com{\Uparrow}\xspace}
\newcommand{\terc}[0]{\com{\Downarrow}\xspace}
\newcommand{\failactc}[0]{\com{\failact}\xspace}
\newcommand{\failact}[0]{\bot}
\newcommand{\set}[1]{\ensuremath{\widehat{#1}} }
\newcommand{\card}[1]{\ensuremath{|\!|{#1}|\!|}}
\theoremstyle{definition}
\newtheorem{definition}{Definition}[section]
\newtheorem{theorem}[definition]{Theorem}
\newtheorem{example}[definition]{Example}
\newtheorem{corollary}[definition]{Corollary}
\newtheorem{lemma}[definition]{Lemma}
\Crefname{corollary}{Corollary}{Corollaries}
\Crefname{informal}{Definition}{Definition}
\Crefname{assumption}{Assumption}{Assumptions}
\crefname{assumption}{Assumption}{Assumptions}
\Crefname{property}{Property}{Properties}
\crefname{property}{Property}{Properties}
\newcommand{\Bool}[0]{\mtt{Bool}\xspace}
\newcommand{\op}[0]{\ensuremath{\oplus}}
\newcommand{\ret}[1]{{return}~#1}
\newcommand{\retpaper}[1]{{ret}~#1}
\newcommand{\letin}[3]{{let}~#1=#2~{in}~#3}
\newcommand{\letins}[3]{\src{let}~#1\src{=}#2~\src{in}~#3}
\newcommand{\letint}[3]{\trg{let}~#1\trg{=}#2~\trg{in}~#3}
\newcommand{\call}[1]{{call}~#1}
\newcommand{\iftet}[3]{\trg{if}~#1~\trg{then}~#2~\trg{else}~#3}
\newcommand{\ifte}[3]{{if}~#1~{then}~#2~{else}~#3}
\newcommand{\iftes}[3]{\src{if}~#1~\src{then}~#2~\src{else}~#3}
\def\formatCompilers#1{\ms{#1}\xspace}
\newcommand{\scc}{\formatCompilers{SCC}}
\newcommand{\ccc}{\formatCompilers{CCC}}
\def\porc{P}
\def\porcc{C}
\def\criterion#1{\formatCompilers{#1\porc}}
\def\criterionprimed#1{\formatCompilers{#1\porc '}}
\def\pf#1{\ensuremath{\let\porc\porcc #1}}
\newcommand{\rhp}[0]{\criterion{RH}}
\newcommand{\rschp}[0]{\criterion{RSCH}} 
\newcommand{\rkschp}[0]{\criterion{R\mi{K}SCH}}
\newcommand{\rtschp}[0]{\criterion{R2SCH}}
\newcommand{\rtp}[0]{\criterion{RT}}
\newcommand{\tp}{\criterion{T}}
\newcommand{\rfrschp}[0]{\criterion{RFrSCH}}
\newcommand{\rhsp}[0]{\criterion{RHS}}
\newcommand{\rhlp}[0]{\criterion{RHL}}
\newcommand{\rkhsp}[0]{\criterion{R\mathit{K}HS}}
\newcommand{\rkkhsp}[0]{\criterion{R\mathit{(K{+}1)}HS}}
\newcommand{\rthsp}[0]{\criterion{R2HS}}
\newcommand{\rsp}[0]{\criterion{RS}}
\newcommand{\rdp}[0]{\criterion{RD}}
\newcommand{\rrhp}[0]{\criterion{RrH}}
\newcommand{\rrhpprimed}[0]{\criterionprimed{RrH}}
\newcommand{\rkrhp}[0]{\criterion{R\mathit{K}rH}}
\newcommand{\rtrhp}[0]{\criterion{R2rH}}
\newcommand{\rrsp}[0]{\criterion{RrS}}
\newcommand{\rrspprimed}[0]{\criterionprimed{RrS}}
\newcommand{\rrxp}[0]{\criterion{RrX}}
\newcommand{\rrxpprimed}[0]{\criterionprimed{RrX}}
\newcommand{\rrtp}[0]{\criterion{RrT}}
\newcommand{\rrtpprimed}[0]{\criterionprimed{RrT}}
\newcommand{\rkrsp}[0]{\criterion{R\mathit{K}rS}}
\newcommand{\rkrxp}[0]{\criterion{R\mathit{K}rX}}
\newcommand{\rkrtp}[0]{\criterion{R\mathit{K}rT}}
\newcommand{\rtrsp}[0]{\criterion{R2rS}}
\newcommand{\rtrxp}[0]{\criterion{R2rX}}
\newcommand{\rtrtp}[0]{\criterion{R2rT}}
\newcommand{\rfrsp}[0]{\criterion{RFrS}}
\newcommand{\rfrxp}[0]{\criterion{RFrX}}
\newcommand{\rtep}[0]{\criterion{RTE}}
\newcommand{\oep}[0]{\criterion{OE}}
\newcommand{\rtinip}[0]{\criterion{RTINI}}
\newcommand{\rrhpA}[0]{\formatCompilers{RrH\porc'}}
\newcommand{\rhpref}[0]{\Cref{cr:rhp}}
\newcommand{\rschpref}[0]{\Cref{cr:rschp}} 
\newcommand{\rtpref}[0]{\Cref{cr:rtp}}
\newcommand{\rhspref}[0]{\Cref{cr:rhsp}}
\newcommand{\rkhspref}[0]{\Cref{cr:rkhsp}}
\newcommand{\rthspref}[0]{\Cref{cr:rthsp}}
\newcommand{\rspref}[0]{\Cref{cr:rsp}}
\newcommand{\rrhpref}[0]{\Cref{cr:rrhp}}
\newcommand{\rkrhpref}[0]{\Cref{cr:rkrhp}}
\newcommand{\rtrhpref}[0]{\Cref{cr:rtrhp}}
\newcommand{\rrspref}[0]{\Cref{cr:rrsp}}
\newcommand{\rrtpref}[0]{\Cref{cr:rrtp}}
\newcommand{\rkrspref}[0]{\Cref{cr:rkrsp}}
\newcommand{\rkrtpref}[0]{\Cref{cr:rkrtp}}
\newcommand{\rtrspref}[0]{\Cref{cr:rtrsp}}
\newcommand{\rtrtpref}[0]{\Cref{cr:rtrtp}}
\newcommand{\rtepref}[0]{\Cref{cr:rtep}}
\newcommand{\rtinipref}[0]{\Cref{cr:rtinip}}
\newcommand{\rtrxpref}[0]{\Cref{cr:rtrxp}}
\newcommand{\rfrxpref}[0]{\ifcamera RFrXP\else\Cref{cr:rfrxp}\fi}
\newcommand{\red}[0]{\ensuremath{\ \hookrightarrow\ }}
\newcommand{\redapp}[1]{\ensuremath{\!\!\!^{#1}\ }}
\newcommand{\redstar}[0]{\redapp{*}}
\newcommand{\redstars}[0]{\src{\redstar}}
\newcommand{\redstart}[0]{\trg{\redstar}}
\DeclareMathOperator\reds{\src{\red}}
\DeclareMathOperator\redt{\trg{\red}}
\xdef\@thefnmark{\@empty}
\newcommand{\Thmref}[1]{\Cref{#1}~(\nameref{#1})}
\newcommand{\subst}[2]{\ensuremath{\bl{[}#1\bl{/}#2\bl{]}}} 
\newcommand{\subs}[2]{\subst{\src{#1}}{\src{#2}}}
\newcommand{\subt}[2]{\subst{\trg{#1}}{\trg{#2}}}
\renewcommand{\emptyset}[0]{\varnothing}
\newcounter{hps}
\crefname{hps}{}{}
\newcommand{\langlett}[0]{L} 
\newcommand{\Lt}[0]{\src{\langlett^{\tau}}\xspace}
\newcommand{\Ld}[0]{\trg{\langlett^{u}}\xspace}
\newcommand{\comptd}[1]{\compskel{#1}{\Lt}{\Ld}}
\newcommand{\backtrdt}[1]{\backtrskel{#1}{\Ld}{\Lt}}
\newcommand{\readexp}[0]{read}
\newcommand{\writeexp}[1]{write~#1}
\newcommand{\checkty}[1]{~has~#1}
\newcommand{\inject}[1]{\src{inject_{#1}}}
\newcommand{\extract}[1]{\src{extract_{#1}}}
\newcommand{\psd}[1]{\src{\hat{#1}}}
\newcommand{\emuldv}[0]{\src{EmulTy}}
\newcommand{\toemul}[1]{\fun{toEmul}{#1}}
\newcommand{\logrelgen}[1]{\ensuremath{\operatorname{\approx}^{#1}}}
\DeclareMathOperator\bothlogrel{\logrelgen{}}
\newcommand{\underapproxlogrelgen}[1]{\ensuremath{\operatorname{\lesssim}^{#1}}}
\DeclareMathOperator\underlogrel{\underapproxlogrelgen{}}
\newcommand{\overapproxlogrelgen}[1]{\ensuremath{\operatorname{\gtrsim}^{#1}}}
\DeclareMathOperator\overlogrel{\overapproxlogrelgen{}}
\newcommand{\arbsim}{\ensuremath{\triangledown}}
\newcommand{\genlogrel}[0]{\arbsim}
\DeclareMathOperator\anylogrel{\genlogrel{}}
\newcommand{\anylogreln}[1]{\ensuremath{\mathrel{\triangledown_{#1}}}}
\newcommand{\genrel}[4]{\ensuremath{\bl{\mc{#1}\left\llbracket\src{#2}\right\rrbracket^{#3}_{#4}}}}
\newcommand{\valrel}[1]{\genrel{V}{#1}{}{\anylogrel}}
\newcommand{\contrel}[1]{\genrel{K}{#1}{}{\anylogrel}}
\newcommand{\termrel}[1]{\genrel{E}{#1}{}{\anylogrel}}
\newcommand{\envrel}[1]{\genrel{G}{#1}{}{\anylogrel}}
\newcommand{\langsp}[1]{\ensuremath{\mi{#1}}}
\newcommand{\langspfun}[2]{\ensuremath{\langsp{#1}(#2)}}
\newcommand{\monotfun}[1]{\fun{\nearrow}{#1}}
\newcommand{\stepsfungen}[2]{\ensuremath{\langspfun{lev^{#1}}{#2}}}
\newcommand{\stepsfun}[1]{\stepsfungen{}{#1}}
\newcommand{\progsfun}[1]{\langspfun{progs}{#1}}
\newcommand{\srcprogfun}[1]{\ensuremath{\langspfun{srcprog}{#1}}}
\newcommand{\trgprogfun}[1]{\ensuremath{\langspfun{trgprog}{#1}}}
\newcommand{\latergen}[1]{\ensuremath{\triangleright^{#1}}}
\DeclareMathOperator\later{\latergen{}}
\newcommand{\laterfun}[1]{\langspfun{\triangleright}{#1}}
\newcommand{\obswfungen}[3]{\langspfun{O^{#1}}{#2}_{#3}}
\newcommand{\obsfun}[2]{\obswfungen{}{#1}{#2}}
\newcommand{\futwgen}[1]{\ensuremath{\sqsupseteq^{#1}}}
\newcommand{\strfutwgen}[1]{\ensuremath{\sqsupset_{\later}^{#1}}}
\DeclareMathOperator\futw{\futwgen{}}
\DeclareMathOperator\strfutw{\strfutwgen{}}
\DeclareMathOperator{\typeop}{type}
\DeclareMathOperator{\inputtypeop}{input\_type}
\newcommand{\bttrace}{\!\uparrow}
\newcommand{\plug}[2]{\ensuremath{#1\!\left[#2\right]}}
\newcommand{\comc}{\com}
\newcommand{\xtoctx}[1]{\ensuremath{~\mathrel{\xrightarrow{~#1~}}_{\text{ctx}}~}}
\newcommand{\Xtoctx}[1]{\ensuremath{~\mathrel{\xRightarrow{~#1~}}_{\text{ctx}}~}}
\newcommand{\Xtolctx}[1]{\ensuremath{~\mathrel{\Xtol{~#1~}}_{\text{ctx}}~}}
\newcommand{\xtosbctx}[1]{\src{\xtoctx{\bl{#1}}}}
\newcommand{\Xtosbctx}[1]{\src{\Xtoctx{\bl{\mathnormal{#1}}}}}
\newcommand{\Xtolsbctx}[1]{\src{\Xtolctx{\bl{\mathnormal{#1}}}}}
\newcommand{\xtotbctx}[1]{\trg{\xtoctx{\bl{#1}}}}
\newcommand{\Xtoltbctx}[1]{\trg{\Xtolctx{\bl{\mathnormal{#1}}}}}
\newcommand{\sem}{\boldsymbol{\rightsquigarrow}}
\newcommand{\sdiv}{\boldsymbol{\circlearrowleft}}
\renewcommand{\comptd}[1]{\ensuremath{\bl{\left.\src{#1}\right\downarrow}}}
\renewcommand{\backtrdt}[1]{\ensuremath{\bl{\left.\trg{#1}\right\uparrow}}}
\newcommand{\comm}[3]{\ifdraft{{\color{#1}[#2: #3]}}\fi}
\newcommand{\marco}[1]{\comm{dkblue}{Marco}{#1}}
\newcommand{\dg}[1]{\comm{dkpurple}{Deepak}{#1}}
\newcommand{\ch}[1]{\comm{teal}{CH}{#1}}
\newcommand{\ca}[1]{\comm{olive}{CA}{#1}}
\newcommand{\jt}[1]{\comm{violet}{JT}{#1}}
\newcommand{\rb}[1]{\comm{orange}{RB}{#1}}
\newcommand{\fstar}{F$^\star$}
\newcommand{\lowstar}{Low$^\star$\xspace}
\newcommand{\citeFull}[2]{\ifallcites\cite{#1,#2}\else\cite{#1}\fi}
\renewcommand{\cmp}[1]{\comptd{#1}} 
\ifieee\pagestyle{plain}\else\pagestyle{standardpagestyle}\fi
\title{\bf{Journey Beyond Full Abstraction}\\[0.5ex]
\LARGE Exploring Robust Property Preservation for Secure Compilation
\ifneedspace\ifieee\ifanon\vspace{-2em}\else\vspace{-1em}\fi\fi\else\ifanon\vspace{2em}\fi\fi}
\title{Journey Beyond Full Abstraction}
\author{}
\author{
  Carmine Abate\textsuperscript{1} \quad
  Roberto Blanco\textsuperscript{1} \quad
  Deepak Garg\textsuperscript{2} \quad
  C\u{a}t\u{a}lin Hri\c{t}cu\textsuperscript{1} \quad
  Marco Patrignani\textsuperscript{3,4} \quad
  J\'er\'emy Thibault\textsuperscript{1}\\[1em]
  \textsuperscript{1}Inria Paris\quad
  \textsuperscript{2}MPI-SWS\quad
  \textsuperscript{3}Stanford University\quad
  \textsuperscript{4}CISPA Helmholz Center for Information Security\\[0em]
}
\author{Carmine Abate}
\affiliation{
  \ifcamera\institution{Inria}\city{Paris}\country{France}
  \else\institution{Inria Paris}\fi}
\email{carmine.abate@inria.fr}
\author{Roberto Blanco}
\affiliation{
  \ifcamera\institution{Inria}\city{Paris}\country{France}
  \else\institution{Inria Paris}\fi}
\email{roberto.blanco@inria.fr}
\author{Deepak Garg}
\affiliation{
  \institution{MPI-SWS}
  \ifcamera\city{Saarbr\"ucken}\country{Germany}\fi}
\email{dg@mpi-sws.org}
\author{C\u{a}t\u{a}lin Hri\c{t}cu}
\affiliation{
  \ifcamera\institution{Inria}\city{Paris}\country{France}
  \else\institution{Inria Paris}\fi}
\email{catalin.hritcu@gmail.com}
\author{Marco Patrignani}
\affiliation{
  \institution{Stanford University}
  \ifcamera\city{Palo Alto}\country{USA}\fi}
\affiliation{
  \institution{CISPA}
  \ifcamera\city{Saarbr\"ucken}\country{Germany}\fi}
\email{marcopat@mpi-sws.org}
\author{J\'er\'emy Thibault}
\affiliation{
  \ifcamera\institution{Inria}\city{Paris}\country{France}
  \else\institution{Inria Paris}\fi}
\email{jeremy.thibault@ens-rennes.fr}
\renewcommand{\@shortauthors}{C. Abate, R. Blanco, D. Garg, C. Hri\c{t}cu, M. Patrignani, and J. Thibault}
\begin{document}


\ifieee\maketitle\fi

\begin{abstract}
Good programming languages provide helpful abstractions for writing
secure code, but the security properties of the source language
are generally not preserved when compiling a program and linking it
with adversarial code in a low-level target language (e.g., a library or
a legacy application).  Linked target code that is compromised or malicious
may, for instance, read and write the compiled program's data and code,
jump to arbitrary memory locations, or smash the stack, blatantly
violating any source-level abstraction.  By contrast, a fully abstract
compilation chain protects source-level abstractions all the way down,
ensuring that linked adversarial target code cannot observe
more about the compiled program than what some linked source code could
about the source program.  However, while research in this area has so
far focused on preserving observational equivalence, as needed for
achieving full abstraction, there is a much larger space of security
properties one can choose to preserve against linked adversarial code.
And the precise class of security properties one chooses crucially
impacts not only the supported security goals and the strength of the
attacker model, but also the kind of protections a secure compilation
chain has to introduce.

We are the first to thoroughly explore a large space of formal secure
compilation criteria based on robust property preservation, i.e., the
preservation of properties satisfied against arbitrary adversarial
contexts.  We study robustly preserving various classes of trace
properties such as safety, of hyperproperties such as noninterference,
and of relational hyperproperties such as trace equivalence. This
leads to many new secure compilation criteria, some of which are
easier to practically achieve and prove than full abstraction, and
some of which provide strictly stronger security guarantees.  For each
of the studied criteria we propose an equivalent ``property-free''
characterization that clarifies which proof techniques apply.  For
relational properties and hyperproperties, which relate the behaviors
of multiple programs, our formal definitions of the property classes
themselves are novel. We order our criteria by their relative strength
and show several collapses and separation results.  Finally, we adapt
existing proof techniques to show that even the strongest of our
secure compilation criteria, the robust preservation of all relational
hyperproperties, is achievable for a simple translation from a
statically typed to a dynamically typed language.

\end{abstract}

\ifieee\else\maketitle\fi

\ifcamera
\fi






\section{Introduction}\label{sec:intro}

Good programming languages provide helpful abstractions for writing
secure code.
Even in unsafe low-level languages like C, safe programs have structured
control flow and obey the procedure call and return discipline.
Languages such as Java, C\#, ML, Haskell, or Rust provide type and
memory safety for all programs and additional abstractions such as
modules and interfaces.
Languages for efficient cryptography such as qhasm~\cite{qhasm},
Jasmin~\cite{AlmeidaBBBGLOPS17}, and \lowstar{}~\cite{lowstar} enforce
a ``constant-time'' coding discipline to rule out certain side-channel
attacks.
Finally, verification languages such as Coq and
\fstar{}~\cite{lowstar,fstar-popl2016}
provide abstractions such as dependent
types, logical pre- and postconditions, and tracking side-effects,
\EG distinguishing pure from stateful computations.
Such abstractions make reasoning about security more tractable and
have, for instance, enabled developing high-assurance
libraries in areas such as cryptography~\cite{AlmeidaBBBGLOPS17,
  haclstarccs, ErbsenPGSC19, Delignat-Lavaud17}.

However, such abstractions are not enforced all the way down by
mainstream compilation chains.
The security properties a program satisfies in the source language are
generally not preserved when compiling the program and linking it with
adversarial target code.
High-assurance cryptographic libraries, for instance, get linked into
real applications such as web
browsers~\cite{HaclInFirefox,ErbsenPGSC19} and web servers, which
include millions of lines of legacy C/C++ code.
Even if the abstractions of the source language ensure that the API of
a TLS library cannot leak the server's private
key~\cite{Delignat-Lavaud17}, such guarantees are completely lost when
compiling the library and linking it into a C/C++ application that can
get compromised via a buffer overflow, simply allowing the adversary
to read the private key from memory~\cite{DurumericKAHBLWABPP14}.
A compromised or malicious application that links in a
high-assurance library can easily read and write its data and
code, jump to arbitrary memory locations, or smash the stack,
blatantly violating any source-level abstraction and breaking any
security guarantee obtained by source-level reasoning.
\ifsooner\ch{Eric: linking followed by various prepositions with, into, in;
  my impression is that it's just fine, but double check}\fi

An idea that has been gaining increasing traction recently is that it
should be possible to build secure compilation chains that protect
source-level abstractions even against linked adversarial target code,
which is generally represented by target language \emph{contexts}.
Research in this area has so far focused on achieving
{\em full abstraction}~\citeFull{AgtenSJP12, MarcosSurvey, AbadiFG02,
  AhmedB11, Ahmed15, NewBA16, AbadiP12,
  JagadeesanPRR11, FournetSCDSL13, PatrignaniASJCP15,
   DevriesePP16, JuglaretHAEP16}{PatrignaniDP16, AhmedB08, LarmuseauPC15},
whose security-relevant direction ensures that even an adversarial
target context cannot observe more about the compiled program than
some source context could about the source program.
In order to achieve full abstraction, the various parts of the secure
compilation chain---including, \EG 
the compiler, linker,
loader, runtime, system, and hardware---have to work together to
provide enough protection to the compiled program, so that whenever
two programs are {\em observationally equivalent} in the source language
(\IE no source context can distinguish them), the two programs obtained
by compiling them are observationally equivalent in the target
language (\IE no target context can distinguish them).

Observational equivalences are, however, not the only class of
security properties one may want to {\em robustly preserve},
\IE preserve against arbitrary adversarial contexts.
One could instead be interested in robustly preserving, for instance,
classes of trace properties such as safety~\cite{LamportS84} or
liveness~\cite{AlpernS85}, or of hyperproperties~\cite{ClarksonS10}
such as hypersafety, including variants of
noninterference~\citeFull{GoguenM82, McLean92, ZdancewicM03,
  AskarovHSS08, SabelfeldS01}{SabelfeldM03}, which cover data confidentiality and integrity.
However, full abstraction is generally not strong enough on its own to
imply the robust preservation of any of these properties (as we show
in \autoref{sec:fa}, and as was also argued by
others~\cite{PatrignaniG17}).
%
At the same time, the kind of protections one has to put in place for
achieving full abstraction seem like
overkill if all one wants is to robustly preserve safety or hypersafety.
Indeed, it is significantly harder to hide the differences between two programs
that are observationally equivalent but otherwise arbitrary, than
to protect the internal invariants and the secret data of a single program.
Thus, a secure compilation chain for robust safety or
hypersafety can likely be more efficient than one for observational
equivalence.
Moreover, hiding the differences between two observationally equivalent
programs is hopeless in the presence of any
side-channels, while robustly preserving safety is not a problem and
even robustly preserving noninterference seems possible in
specific scenarios~\cite{BartheGL18}.
Finally, even when efficiency is not a concern (\EG when security is
enforced by \emph{static} restrictions on target
contexts~\citeFull{Abadi99, AhmedB11, Ahmed15, NewBA16}{AhmedB08}),
proving full abstraction is notoriously challenging even for
simple languages, and conjectures have survived for
decades before being settled~\cite{DevriesePP18}.

Convinced that there is no ``one-size-fits-all'' criterion for secure
interoperability with linked target code,
we explore, for the first time, a large space of secure compilation
criteria based on robust property preservation.
Some of the criteria we introduce are strictly stronger than full
abstraction and, moreover, immediately imply the robust preservation of
well-studied property classes such as safety and hypersafety.
Other criteria we introduce seem easier to practically achieve and
prove than full abstraction.
In general, the richer the class of security properties one tries to
robustly preserve, the harder efficient enforcement becomes, so the
best one can hope for is to strike a pragmatic balance between
security and efficiency that matches each application domain.

For informing such difficult design decisions,
we explore robustly preserving classes
of trace properties (\autoref{sec:prop}),
of hyperproperties (\autoref{sec:hyperprop}),
and of relational hyperproperties (\autoref{sec:rel}).
All these property notions are phrased in terms of execution traces,
which for us are (finite or infinite) sequences of events such as inputs
from and outputs to an external environment~\cite{Leroy09, KumarMNO14}.
Trace properties such as safety~\cite{LamportS84} restrict what
happens along individual program traces, while
hyperproperties~\cite{ClarksonS10} such as noninterference generalize
this to predicates over multiple traces of a program.
In this work we generalize this further to a new class we call
{\em relational hyperproperties},
which relate the traces of {\em different} programs.
An example of relational hyperproperty is trace equivalence, which
requires that two programs produce the same set of traces.
We work out many interesting subclasses that are also novel, such as
{\em relational trace properties}, which relate \emph{individual}
traces of multiple programs.
For instance, ``On every input, program A's output is less than program B's''
is a relational trace property.

\begin{figure*}[!ht]
\input{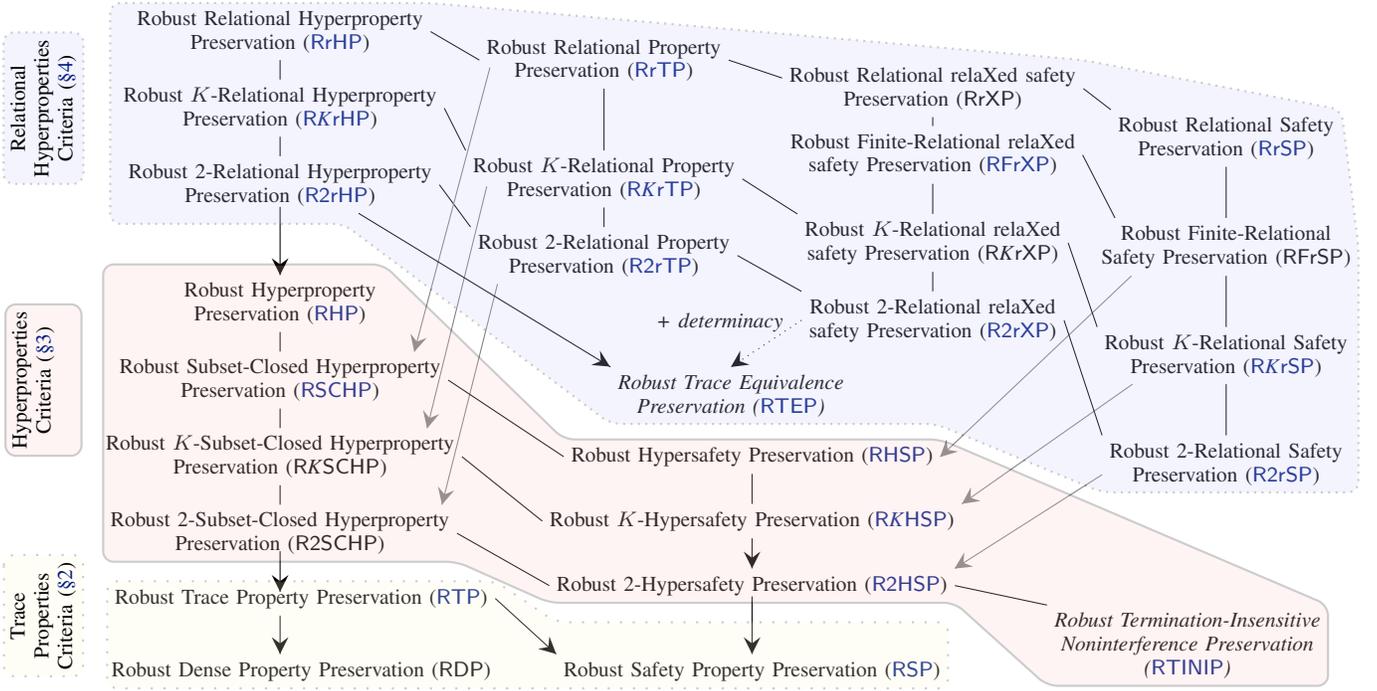}
\caption{\ifieee\else\footnotesize\fi
         Partial order with the secure compilation criteria studied in this paper.
         Criteria higher in the diagram imply
         the lower ones to which they are connected by edges.
         Criteria based on trace properties are grouped in a yellow area,
         those based on hyperproperties are in a red
         area, and those based on relational hyperproperties
         are in a blue area.
         Criteria with an \emph{italics name}
         preserve a \emph{single} property that belongs to the class they are connected to;
         the dotted edge requires an additional determinacy assumption.
         Finally, each edge with a thick arrow denotes a {\em strict} implication that we have proved as a separation result.
}
\label{fig:order}
\ifneedspace\vspace{-1em}\fi
\end{figure*}
We order the secure compilation criteria we introduce by their relative
strength as illustrated by the partial order in \autoref{fig:order}.
In this Hasse diagram edges represent logical implication from higher
criteria to lower ones, so the higher a criterion is, the harder it is
to achieve and prove.
Intuitively, the criteria based on the robust preservation of trace
properties (in the yellow area) only require sandboxing the context
(\IE linked adversarial code) and protecting the internal invariants of
the program from it, i.e.,\ {\em only data integrity}.
The criteria based on hyperproperties (in the red area) require
additionally hiding the data of the program from the context, i.e.,\
{\em data confidentiality}.
Finally, the criteria based on relational hyperproperties (in the blue
area) require additionally hiding the code of the program from the context, i.e.,\ {\em code confidentiality}.

While most \iffull of the \fi implications in the diagram
\iffull are unsurprising and \fi
follow directly from the inclusion between the property
classes~\cite{ClarksonS10}, {\em strict} inclusion between property
classes does not imply strict implication between criteria.
Robustly preserving two distinct
property classes can in fact lead to equivalent criteria, as 
happens in general for hyperliveness and hyperproperties
(\autoref{sec:hyperliveness}) and, in the presence of
source-level reflection or internal nondeterminism,
for many criteria involving hyperproperties and relational hyperproperties
(\autoref{sec:context-composition}).
To show the absence of more collapses,
we also prove various separation results, for instance
that {\em Robust Safety Property Preservation} (\rsp)
%
%
is {\em strictly} weaker than {\em Robust Trace Property Preservation} (\rtp).
For this, we design
(counterexample) compilation chains that satisfy the weaker criterion
but not the stronger one.

For each introduced secure compilation criterion we also discovered an
{\em equivalent ``property-free'' characterization} that is generally better
tailored for proofs and that provides important insights into what
kind of techniques one can use to prove the criterion.
For instance, for proving \rsp and \rtp we can produce a different source
context to explain {\em each} attack trace, while for proving stronger
criteria such as {\em Robust Hyperproperty Preservation} (\rhp) we
have to produce a single source context that works for {\em any} attack trace.

We also formally study the relation between our new security
criteria and full abstraction (\autoref{sec:fa}) proxied by 
 the robust preservation of trace equivalence (\rtep), which in
determinate languages---\IE languages without internal nondeterminism---was
shown to coincide with observational equivalence~\cite{Engelfriet85, ChevalCD13}.
In one direction, \rtep follows unconditionally from {\em
  Robust 2-relational Hyperproperty Preservation}, which is one of our
stronger criteria.
However, if the source and target languages are determinate
and we make some mild extra assumptions (such as input
totality~\cite{ZakinthinosL97, FocardiG95}) \rtep follows even from the
weaker {\em Robust 2-relational relaXed safety Preservation} (\rtrxp).
Here, the challenge was identifying these extra assumptions and
showing that they are sufficient to establish \rtep.
In the other direction, we adapt a counterexample proposed
by~\citet{PatrignaniG17} to show that \rtep (and thus full
abstraction), even in conjunction with compositional compiler
correctness, does \emph{not} imply even the weakest of our criteria,
\rsp, \rdp, and \rtinip.

Finally, we show that two proof techniques originally developed for
full abstraction can be readily adapted to prove our new secure
compilation criteria (\autoref{sec:example}).
First, we use a ``universal embedding''~\cite{NewBA16} to prove that
the strongest of our secure compilation criteria, Robust Relational
Hyperproperty Preservation (\rrhp), is achievable for a simple
translation from a statically typed to a dynamically typed first-order
language with first-order functions and I/O.
Second, we use the same simple translation to illustrate that for
proving {\em Robust Finite-relational relaXed safety Preservation}
(\criteria{\rfrxp}{rfrxp}) we can employ a ``trace-based
back-translation''~\cite{PatrignaniASJCP15, JeffreyR05}, a slightly
less powerful but more generic technique that we extend to back-translate
a finite set of finite execution prefixes into a source context.
This second technique is applicable to all criteria implied by \rfrxp,
which includes robust preservation of safety, of hypersafety, and
in a determinate setting also of trace (and thus observational) equivalence.


In summary, our paper makes five {\bf contributions}:

\smallskip\noindent
{\bf C1.} We phrase the formal security guarantees
obtained by protecting compiled programs from adversarial contexts in
terms of robustly preserving classes of properties.
We are the first to explore a large space of security criteria
based on this idea,
including criteria that provide strictly stronger security guarantees
than full abstraction, and also criteria that are easier to
practically achieve and prove, which is important for building more
realistic secure compilation chains.

\smallskip\noindent
{\bf C2.} We carefully study each new secure compilation criterion and the
non-trivial relations between them.
For each criterion we propose a property-free characterization
that clarifies which proof techniques apply.
For relating the criteria, we order them by their relative strength,
show several interesting collapses, and prove several challenging
separation results.

\smallskip\noindent
{\bf C3.} We introduce \emph{relational} properties and hyperproperties,
which are new property classes
of independent interest, even outside of secure compilation.
%
%
%

\smallskip\noindent
{\bf C4.} We formally study the relation between our \iffull new \fi security
criteria and full abstraction.
In one direction, we show that determinacy is enough for robustly
preserving classes of relational properties and hyperproperties to
imply preservation of observational equivalence\iffull (\IE the
direction of full abstraction interesting for security)\fi.
In the other direction, we show that, even when assuming compiler
correctness, full abstraction does not imply even our weakest criteria.

\smallskip\noindent
{\bf C5.} We show that two existing proof techniques
originally developed for full abstraction can be readily adapted
to our new criteria, which is important since good proof
techniques are difficult to find in this
space~\citeFull{NewBA16,MarcosSurvey}{domipoplta}.

\smallskip
The paper closes with discussions of
related~(\autoref{sec:related}) and future work~(\autoref{sec:future}).
The \ifcamera online \fi appendix \ifcamera at
{\tt \url{https://arxiv.org/abs/1807.04603}} \fi
contains omitted technical details.
%
Many of the theorems formally or informally mentioned in the paper
were also mechanized in the Coq proof assistant and are marked with \CoqSymbol;
this development has around 4400 lines of code and is
available \ifanon for review both as a supplementary material
attached to the submission and in anonymized form \fi at
{\tt\ifanon
\input{hash}
\url{https://github.com/JourneyBeyondFullAbstraction/Anonymous/tree/\hash}
\else
\url{https://github.com/secure-compilation/exploring-robust-property-preservation}
\fi}

\section{Robustly Preserving Trace Properties}
\label{sec:prop}

In this section we look at robustly preserving classes of
{\em trace properties}, and first study the robust preservation of
{\em all} trace properties and its relation
to correct compilation (\autoref{sec:rtp}).
We then look at robustly preserving {\em safety properties}
(\autoref{sec:rsp}), which are the trace properties that can be
falsified by a finite trace prefix (\EG a program never performs a
certain dangerous system call).
These criteria are grouped in the Trace Properties yellow area in
\autoref{fig:order}.
We also carefully studied the robust preservation of {\em liveness
  properties}, but it turns out that the very definition of liveness
is highly dependent on the specifics of the program execution traces,
which makes that part more technical.
For saving space and avoiding a technical detour, we relegate to the
appendix \ifcamera\else (\autoref{app:sec:traces-details}) \fi the details of our
CompCert-inspired trace model, as well as the part about liveness.

\subsection{Robust Trace Property Preservation (\rtp)}
\label{sec:rtp}

Like all secure compilation criteria we study in this paper, the
\rtp criterion below is a {\em generic}
property of an arbitrary {\em compilation chain}, which includes a source and a
target language, each with a notion of partial programs ($P$) and
contexts ($C$) that can be linked together to produce whole programs
($C[P]$), and each with a trace-producing semantics for whole
programs ($C[P] \sem t$).
The sets of partial programs and of contexts of the source
and target languages are unconstrained parameters of our secure
compilation criteria; our criteria make no assumptions about
their structure,
or whether the program or the context
gets control initially once linked and executed (\EG the
context could be an application that embeds a library program or the
context could be a library that is embedded into an application program).\footnote{
  One limitation of our formal setup, is that for simplicity we assume
  that any partial program can be linked with any context, irrespective
  of their interfaces (\EG types or specs). One can extend our criteria to
  take interfaces into account, as we illustrate
  in \ifcamera the appendix \else \autoref{app:sec:proofs-appendix} \fi
  for the example in \autoref{sec:example}.}
The traces produced by the source and target
semantics\footnote{\iffull Throughout the \else In this \fi
  paper we assume for simplicity that traces are
  exactly the same in both the source and target language, as is also
  the case in the CompCert verified C compiler~\cite{Leroy09}.
\ifanon
  In the appendix (\autoref{app:sec:different-traces}), we illustrate
  how to lift this restriction for the specific case of \rtp.
\else
  We hope to lift this restriction in the future (\autoref{sec:conclusion}).
\fi}
are arbitrary for \rtp, but for \rsp 
we have to consider traces with a specific structure (finite or
infinite sequences of events drawn from an arbitrary set).
Intuitively, traces capture the interaction between a
whole program and its external environment, including for instance
user input, output to a terminal, network communication, system calls,
\ETC~\cite{Leroy09, KumarMNO14}.
As opposed to a context, which is just a piece of a program, the
environment's behavior is not (and often {\em cannot} be) modeled by
the programming language, beyond the (often nondeterministic) interaction events
that we store in the trace.
%
Finally, a compilation chain includes a compiler: the compilation of a
partial source program $\src{P}$ is a partial target program
we write $\cmp{P}$.\footnote{For easier reading, we use
  a \src{blue, sans\text{-}serif} font for \src{source} elements,
 an \trg{orange, bold} font for \trg{target} elements and a
  \com{\commoncol, italic} font generically for elements of either language.}


 
\iffull As also mentioned in the introduction, t\else T\fi{}he
responsibility of enforcing
secure compilation does not have to rest just with the compiler, but
may be freely shared by various parts of the compilation chain.
In particular, to help enforce security, the target-level {\em linker} could
disallow linking with a suspicious context (\EG one that is not
well-typed~\citeFull{Abadi99, AhmedB11, Ahmed15, NewBA16}{AhmedB08})
or could always allow linking but
introduce protection barriers between the program and the context (\EG
by instrumenting the program~\cite{DevriesePP16, NewBA16} or the
context~\cite{sfi_sosp1993, Tan17, AbateABEFHLPST18} to introduce
dynamic checks).
Similarly, the semantics of the target language can include various
protection mechanisms (\EG processes with different virtual address
spaces~\citeFull{ProvosFH03}{ReisG09, Kilpatrick03, wedge_nsdi2008, GudkaWACDLMNR15},
protected enclaves~\cite{PatrignaniASJCP15},
capabilities~\citeFull{WatsonWNMACDDGL15,
  ChisnallDGBJWMM17, SkorstengaardDB19}{ElKorashyTPDGP18},
\ifallcites
tags~\cite{micropolicies2015, AbateABEFHLPST18}\else etc.\fi).
Finally, the compiler might have to refrain from aggressive
optimizations that would break security~\cite{DSilvaPS15, SimonCA18, BartheGL18}.
Our secure compilation criteria are agnostic to the concrete
enforcement mechanism used by the compilation chain to protect the
compiled program from the adversarial target context.

{\em Trace properties} are defined simply as sets of allowed
traces~\cite{LamportS84}.
A whole program $C[P]$ {\em satisfies} a trace property $\pi$ when the
set of traces produced by $C[P]$ is included in the set $\pi$ or, formally,
$\{t ~|~ C[P] \sem t\} \subseteq \pi$.
More interestingly, we say that a partial program $P$ {\em robustly
  satisfies}~\cite{KupfermanV99, GordonJ04, SwaseyGD17} a trace property
$\pi$ when $P$ linked with \emph{any} (adversarial) context $C$ satisfies $\pi$.
%
%
Armed with this, 
{\em Robust Trace Property Preservation} (\rtp) is defined as the
preservation of robust satisfaction of all trace properties.
So if a partial source program $\src{P}$ robustly satisfies a trace
property $\pi \in 2^\ii{Trace}$
(wrt.\ all source contexts) then its compilation $\cmp{P}$ must also
robustly satisfy $\pi$ (wrt.\ all target contexts).
If we unfold all intermediate definitions, a compilation chain
satisfies \criteria{\rtp}{rtp} iff:
\begin{align*}
  \rtp :\quad
  \forall \pi \in 2^\ii{Trace}.~\forall\src{P}.~
    &
    (\forall\src{C_S}~t \ldotp
    \mathrel{\src{C_S\hole{P} \sem}} t \Rightarrow t\in\pi)
    \Rightarrow
    \\
    &
    (\forall \trg{C_T}~t \ldotp
    \mathrel{\trg{C_T\hole{\cmp{P}} \sem}} t \Rightarrow t\in\pi)
\end{align*}


%

This definition directly captures which 
properties (specifically, all trace properties) of the source are robustly preserved by the compilation chain.
However, in order to prove that a compilation chain satisfies \rtp we
propose an equivalent~(\CoqSymbol) ``property-free'' characterization,
which we call \criteria{\pf{\rtp}}{pfrtp} (for ``\rtp Characterization''):
\begin{align*}
  \pf{\rtp}:\quad\ \forall\src{P}.~ \forall\trg{C_T}.~ \forall t.~
        \mathrel{\trg{C_T\hole{\cmp{P}} \sem}} t \Rightarrow
        \mathrel{\exists\src{C_S}\ldotp \src{C_S\hole{P}\sem}} t
\end{align*}
\pf{\rtp} requires that, given a compiled program $\cmp{P}$ and a
target context $\trg{C_T}$ which together produce an attack trace $t$,
we can generate a source context $\src{C_S}$ that causes trace $t$
to be produced by $\src{P}$.
When proving that a compilation chain satisfies
\pf{\rtp} we can pick a different context $\src{C_S}$ for each $t$ and,
in fact, try to construct $\src{C_S}$ from \iffull the \fi
trace $t$ \iffull itself \fi or from
the execution $\trg{C_T\hole{\cmp{P}}} \mathrel{\trg{\sem}} t$.

We present similar property-free characterizations for each of our criteria (\autoref{fig:order}). However, for criteria stronger than \rtp \iffull (such as {\em Robust Hyperproperty
  Preservation}  or \rhp)\fi,
a single context $\src{C_S}$ will have to work for more than one trace.
In general, the shape of the property-free characterization explains
what information can be used to produce the source
context $\src{C_S}$ when proving a compilation chain secure.

\paragraph{Relation to compiler correctness}
\pf{\rtp} is similar to ``backward simulation'' (\pf{\tp}), a standard
compiler \emph{correctness} criterion~\cite{Leroy09}. Let $W$ denote a
whole program.
\begin{align*}
\pf{\tp}:\quad
  \forall\src{W}.~ \forall t.~ \trg{\cmp{W}} \mathrel{\trg{\sem}} t \Rightarrow
  \src{W} \mathrel{\src{\sem}} t
\end{align*}
Maybe slightly less known is that this property-free characterization of
correct compilation also has an equivalent property-full characterization
as the preservation of all trace properties:
$$\begin{multlined}
\criteria{\tp}{tp}:\quad
  \forall \pi \in 2^\ii{Trace}.~\forall\src{W}.~\\
  \qquad\qquad(\forall t \ldotp
    \src{W} \mathrel{\src{\sem}} t \Rightarrow t\in\pi)
    \Rightarrow
    (\forall t \ldotp
    \trg{\cmp{W}} \mathrel{\trg{\sem}} t \Rightarrow t\in\pi)
\end{multlined}$$
The major difference compared to \rtp
is that \tp only preserves the trace properties
of whole programs and does not consider adversaries.
In contrast, \rtp allows linking a compiled partial program with
arbitrary target contexts and protects the program so that all
{\em robustly satisfied} trace properties are preserved.
In general, \rtp and \tp are incomparable.
However, \rtp strictly implies \tp when whole
programs ($W$) are a subset of partial programs ($P$) and,
additionally, the semantics of {\em whole} programs is independent of any
linked context (\IE
$\forall W~t~C.~
W \sem t \iff
C[W] \sem t$, which happens, intuitively, when the whole program
starts execution and, being whole, never calls into the context).
\ifsooner\ch{Have we actually proved this in Coq? Carmine says no,
  but maybe we should eventually do it}\fi
%
%

\ifsooner
\ch{The names \scc and \ccc are made up, but maybe that's okay
  given that I anyway wrote \scc and \ccc in a way that's specific to
  our setting. Could at least make sure that we don't badly overlap
  with any widely-used names.}
\ch{Renamed \formatCompilers{BCC} to \tp, but that no longer matches
  \scc and \ccc? Does coming up with better names for \scc and \ccc
  involve coming up with a property-full characterization for them?}
\ch{Arthur asks: What do the acronyms SCC and CCC stand for?
  Separate Compiler Correctness (SCC) and Compositional Compiler Correctness (CCC)}
\fi


More compositional criteria for compiler correctness have also been
proposed\iffull in the literature\fi~\citeFull{KangKHDV15,
  NeisHKMDV15}{PercontiA14, StewartBCA15}.
At a minimum such criteria allow linking with contexts that are the
compilation of source contexts~\cite{KangKHDV15}, which
\iffull in our setting \fi can be formalized as follows:
\[\begin{multlined}
\criteria{\scc}{scc}:\quad
  \forall\src{P}.~ \forall\src{C_S}.~ \forall t.~
    \trg{\cmp{C_S}\hole{\cmp{P}}} \mathrel{\trg{\sem}} t \Rightarrow
    \src{C_S\hole{P}} \mathrel{\src{\sem}} t
\end{multlined}\]
More permissive criteria allow linking with any target context that
behaves like some source context~\cite{NeisHKMDV15}, which in our
setting can be written as:
\[\begin{multlined}
\criteria{\ccc}{ccc}:~
  \forall\src{P}~\trg{C_T}~\src{C_S}~t.~
    \trg{C_T} {\approx} \src{C_S} \wedge
    \trg{C_T\hole{\cmp{P}}} \mathrel{\trg{\sem}} t \Rightarrow
    \src{C_S\hole{P}} \mathrel{\src{\sem}} t
\end{multlined}\]
Here $\approx$ relates equivalent partial programs in the target and
the source, and could, for instance, be instantiated with a
cross-language logical relation~\cite{NeisHKMDV15, Ahmed15}.\iffull\ch{More
  references in \cite[Section 6.1.2]{PatrignaniG17}}\fi{}
%
%
\rtp is incomparable to \iffull these \fi \scc and \ccc
\iffull criteria \fi.
On the one hand,
\rtp allows linking with {\em arbitrary} target-level
contexts, which is not allowed by \scc and \ccc,
and requires inserting strong protection barriers.
On the other hand, in \rtp all source-level reasoning has to be done
with respect to an {\em arbitrary} source context, while with
\scc and \ccc one can reason about a known source context.
\iffull
\ch{might cut sentence for space:}%
Technically, \pf{\rtp} does not imply \scc, since
even if we instantiate \pf{\rtp} with \cmp{C_S} for \trg{C_T}, what
we obtain in the source is
$\exists \src{C_S'}\ldotp \src{C_S'\hole{P}}\sem t$,
for some $\src{C_S'}$ that is unrelated to the original $\src{C_S}$.
Similarly, \pf{\rtp} does not imply \ccc, which is strictly stronger than
\scc under the natural assumption that $\cmp{C_S} \approx \src{C_S}$.
\fi


\iflater
\ch{It would be nice if eventually the informal arguments here would
  actually be backed some proofs, even if just on paper!
  Added to my TODO list}
\fi




\subsection{Robust Safety Property Preservation (\rsp)}
\label{sec:rsp}

Robust safety preservation is an interesting criterion for secure
compilation because it is easier to achieve and prove than most
criteria of \autoref{fig:order}, while still being quite
expressive~\cite{SwaseyGD17, GordonJ04}.

Recall that a trace property is a safety property if, within any (possibly
infinite) trace that violates the property, there exists a finite
``bad prefix'' that violates it.
We write $m \leq t$ for the prefix relation between a finite trace
prefix $m$ and a trace $t$\ifcamera\else (and give a precise
definition in \autoref{app:sec:traces-details})\fi.
Using this we define safety properties in the usual
way~\cite{AlpernS85, LamportS84, schneider1997concurrent}:
\[
\ii{Safety} \triangleq \{\pi \in 2^\ii{Trace} ~|~ \forall t \not\in \pi.~
  \exists m \leq t.~ \forall t' \geq m.~ t' \not\in \pi \}\\
\]

The definition of \criteria{\rsp}{rsp} simply restricts the preservation of robust
satisfaction from all trace properties in \rtp to only safety properties;
otherwise the definition is exactly the same:
\begin{align*}
\rsp:\quad
  \forall \pi \in \ii{Safety}.~\forall\src{P}.~
    &
    (\forall\src{C_S}~t \ldotp
    \src{C_S\hole{P}} \mathrel{\src{\sem}} t \Rightarrow t\in\pi)
    \Rightarrow
    \\
    &
    (\forall \trg{C_T}~t \ldotp
    \trg{C_T\hole{\cmp{P}}} \mathrel{\trg{\sem}} t \Rightarrow t\in\pi)
\end{align*}

One might wonder how safety properties can be \emph{robustly} satisfied in the
source, given that execution traces can contain events emitted not
just by the partial program but also by the adversarial context, which
could trivially emit ``bad events'' and, hence, violate any safety
property.
A first alternative is for the semantics of the source language to
simply prevent the context from producing any events, maybe other than 
termination\iffull, as we do in the compilation chain presented in the of
\autoref{sec:example}\fi, or, at least, prevent the
context from producing any events the safety properties of interest consider bad.
The compilation chain has then to ``sandbox'' the context to restrict
the events it can produce~\cite{sfi_sosp1993, Tan17}.
%
A second alternative is for the source semantics to record enough
information in the trace so that one can determine the origin of each
event---the partial program or the context\iffull---as done, for instance, in the appendix by what we call informative traces of \autoref{sec:instance-tracebased}\fi.
Then, safety properties in which the context's events are never bad can
be robustly satisfied.
With this second alternative, the obtained global guarantees are
weaker, \EG one cannot enforce that the whole program never makes a
dangerous system call, but only that the partial program cannot be
tricked by the context into making it.

The equivalent (\CoqSymbol) property-free characterization for \rsp
requires one to back-translate a program ($\src{P}$), a target context
($\trg{C_T}$), and a {\em finite} bad trace prefix
($\trg{C_T\hole{\cmp{P}}} \mathrel{\trg{\sem}} m$)
into a source context ($\src{C_S}$) producing the same finite trace prefix ($m$)
in the source ($\src{C_S\hole{P}} \mathrel{\src{\sem}} m$):
$$\begin{multlined}
  \criteria{\pf{\rsp}}{pfrsp}:\quad \forall\src{P}.~ \forall\trg{C_T}.~ \forall m.~
        \trg{C_T\hole{\cmp{P}}} \mathrel{\trg{\sem}} m \Rightarrow
\exists \src{C_S}\ldotp \src{C_S\hole{P}} \mathrel{\src{\sem}} m
\end{multlined}$$
Syntactically, the only change with respect to \pf{\rtp}
is the switch from whole traces $t$ to finite trace prefixes $m$.
As for \pf{\rtp}, we can pick a different context $\src{C_S}$
for each execution $\trg{C_T\hole{\cmp{P}}} \mathrel{\trg{\sem}} m$.
(In our formalization we define $W \mathrel{\sem} m$ generically as
$\exists t {\geq} m.~W \mathrel{\sem} t$.)
The fact that for \pf{\rsp} these are {\em finite} execution prefixes
can significantly simplify the back-translation into 
source contexts (as we show in \autoref{sec:instance-tracebased}).
\MP{ integrated above }
%

It is trivially true that \rtp implies \rsp, since the former robustly preserves all trace properties while the latter only robustly preserves safety properties.
We have also proved that \rtp \emph{strictly implies} \rsp.
\begin{theorem}\label{thm:rsp-doesnt-imply-rtp}
$\rtp \Rightarrow \rsp$, but $\rsp \not \Rightarrow \rtp$
\Coqed
\end{theorem}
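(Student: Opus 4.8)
The plan is to treat the two directions separately and to phrase everything through the property-free characterizations $\pf{\rtp}$ and $\pf{\rsp}$ established above, together with their (\CoqSymbol-marked) equivalences to $\rtp$ and $\rsp$.

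For the implication $\rtp \Rightarrow \rsp$ I would argue directly from the definitions. The two criteria have exactly the same shape and differ only in the class of properties quantified over: $\rtp$ ranges over all $\pi \in 2^\ii{Trace}$, whereas $\rsp$ ranges over $\pi \in \ii{Safety}$. Since every safety property is a trace property, i.e.\ $\ii{Safety} \subseteq 2^\ii{Trace}$, instantiating the universally quantified $\pi$ of $\rtp$ at an arbitrary $\pi \in \ii{Safety}$ immediately yields the body of $\rsp$. This is the ``trivially true'' direction already announced before the statement.

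For the separation $\rsp \not\Rightarrow \rtp$ I would exhibit a counterexample compilation chain that satisfies $\rsp$ but violates $\rtp$. Because the criteria are generic over the structure of programs, contexts, linking, and the trace semantics, it suffices to build an \emph{abstract} chain rather than genuine programming languages. Concretely, I would fix a single source partial program $\src{P}$ together with a family of source contexts $\{\src{C_S^n}\}_{n\in\Nat}$ such that $\src{C_S^n\hole{P}}$ produces exactly the finite trace $0,1,\dots,n$ and then halts; on the target side I would take the compiled program $\cmp{P}$ and a single context $\trg{C_T}$ such that $\trg{C_T\hole{\cmp{P}}}$ produces the one infinite trace $t^{*} = 0,1,2,\dots$. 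The trace sets are arranged so that the \emph{finite prefixes} producible in the target are precisely the prefixes $0,1,\dots,k$ of $t^{*}$, each already producible in the source, while the \emph{infinite} trace $t^{*}$ is producible in the target but by no source context.

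To verify that this chain satisfies $\rsp$ I would appeal to $\pf{\rsp}$: given any finite prefix $m$ with $\trg{C_T\hole{\cmp{P}}} \mathrel{\trg{\sem}} m$, the prefix must have the form $0,1,\dots,k$, and the source context $\src{C_S^k}$ satisfies $\src{C_S^k\hole{P}} \mathrel{\src{\sem}} m$; since $\pf{\rsp}$ is equivalent to $\rsp$, the chain is robustly safe. To verify that it violates $\rtp$ I would use $\pf{\rtp}$ with the witnessing triple $(\src{P},\trg{C_T},t^{*})$: we have $\trg{C_T\hole{\cmp{P}}} \mathrel{\trg{\sem}} t^{*}$, yet every source context produces only a finite trace, so no $\src{C_S}$ satisfies $\src{C_S\hole{P}} \mathrel{\src{\sem}} t^{*}$; hence $\pf{\rtp}$, and therefore $\rtp$, fails. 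The main obstacle is the \emph{design} of the chain, namely ensuring finite-prefix completeness of the source relative to the target (so that $\pf{\rsp}$ really holds for \emph{every} prefix) while simultaneously opening an infinite-trace gap that no single source context can close. This liveness-flavored slack between finite observability and limit behavior is exactly what distinguishes safety from arbitrary trace properties, and it is the conceptual crux that makes the implication strict.
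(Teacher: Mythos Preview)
Your proposal is correct and follows essentially the same approach as the paper: the forward direction is the trivial subclass inclusion, and the separation hinges on a chain whose source can produce only finite traces while the target can produce an infinite one, so every finite prefix is back-translatable (giving $\pf{\rsp}$) but the infinite trace is not (refuting $\pf{\rtp}$). The only cosmetic difference is that the paper packages the counterexample as a generic ``fuel'' transformation---source contexts are target contexts paired with an event bound, and back-translation sets the bound to $\ii{length}(m)$---whereas you hand-roll a minimal instance with a single program and an explicit family $\src{C_S^n}$; both realize the same idea.
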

\begin{proof}[Proof sketch]
As explained above, $\rtp \Rightarrow \rsp$ is trivial.
Showing strictness requires constructing a counterexample compilation chain to
the reverse implication.
%
We take any target language that can produce infinite traces.
%
We take the source language to be a variant of the target with the
same partial programs, but where we extend whole programs and contexts with
a bound on the number of events they can produce before being
terminated.
Compilation simply erases this bound.
\ifallcites
(While this construction might seem artificial, languages with a fuel
bound are gaining popularity~\cite{Ethereum}.)
\fi
This compilation chain satisfies \rsp (equivalently, \pf{\rsp})
but not \rtp.
To show that it satisfies \pf{\rsp}, we simply back-translate a target
context $\trg{C_T}$ and a finite trace prefix $m$ to a source context
%
$\src{(\trg{C_T},\ii{length}(\com{m}))}$ that uses the length of $m$
as the allowed bound, so this context can still produce $m$ in the
source without being prematurely terminated.
However, this compilation chain does not satisfy \rtp, since in the source all
executions are finite and, hence, no infinite target trace can be simulated by \emph{any} source context.
\end{proof}

\section{Robustly Preserving Hyperproperties}
\label{sec:hyperprop}

So far, we have studied the robust preservation of trace properties,
which are properties of {\em individual} traces of a program.
In this section we generalize this to {\em hyperproperties}, which are
properties of {\em multiple} traces of a program~\cite{ClarksonS10}.
A well-known hyperproperty is noninterference\iffull, which has many
variants\fi~\citeFull{GoguenM82, McLean92, ZdancewicM03,
  AskarovHSS08}{SabelfeldM03}, \iffull but \else which \fi usually
requires considering two traces of a program that differ on secret inputs.
Another hyperproperty is bounded mean response time over all executions.
We study robust preservation of many subclasses of hyperproperties:
all hyperproperties (\autoref{sec:rhp}),
subset-closed 
hyperproperties (\autoref{sec:rschp}),
hypersafety and $K$-hypersafety (\autoref{sec:rhsp}),
and hyperliveness (\autoref{sec:hyperliveness}).
These criteria are in the red area
in \autoref{fig:order}.



\subsection{Robust Hyperproperty Preservation (\rhp)}
\label{sec:rhp}


While trace properties are sets of traces,
hyperproperties are sets of sets of traces~\cite{ClarksonS10}.
We call the set of traces of a whole program $W$ the {\em behavior} of $W$:
$\behav{W} = \{ t ~|~ W \sem t \}$. A hyperproperty
is a set of allowed behaviors.
Program $W$ satisfies hyperproperty $H$ if
the behavior of $W$ is a member of $H$,
\IE $\behav{W} \in H$, or, equivalently,
$\{ t ~|~ W \sem t \} \in H$.
Contrast this to $W$ satisfying trace property $\pi$,
which holds if the behavior of $W$ is a subset of the set $\pi$,
\IE $\behav{W} \subseteq \pi$, or, equivalently,
$\forall t.~ W \sem t \Rightarrow t \in \pi$.
So while a trace property determines whether each individual trace of a
program should be allowed or not, a hyperproperty determines whether
the set of traces of a program, its behavior, should be allowed or not.
\iflater\MP{ begin example: hyperproperties VS properties}\fi
For instance, the trace property $\pi_{123} = \{t_1, t_2, t_3\}$ is
satisfied by programs with behaviors such as $\{t_1\}$, $\{t_2\}$, $\{t_2, t_3\}$,
and $\{t_1, t_2, t_3\}$, but a program with
behavior $\{t_1, t_4\}$ does not satisfy $\pi_{123}$.
A hyperproperty like $H_{1+23} = \{ \{t_1\}, \{t_2, t_3\}\}$
is satisfied only by programs with behavior $\{t_1\}$ or
with behavior $\{t_2, t_3\}$.
A program with behavior $\{t_2\}$ does not satisfy $H_{1+23}$,
so hyperproperties can express that if some traces (\EG $t_2$) are possible
then some other traces (\EG $t_3$) should also be possible.
A program with behavior $\{t_1,t_2,t_3\}$ also does not satisfy $H_{1+23}$,
so hyperproperties can express that if some traces (\EG $t_2$ and
$t_3$) are possible then some other traces (\EG $t_1$) should not be possible.
Finally, trace properties can be easily lifted to hyperproperties:
A trace property $\pi$ becomes the hyperproperty $[\pi] = 2^{\pi}$,
the powerset of $\pi$.

We say that a partial program $P$ {\em robustly satisfies} a
hyperproperty $H$ if it satisfies $H$ for any context $C$.
Given this we define \criteria{\rhp}{rhp} as the preservation of robust satisfaction
of arbitrary hyperproperties:
\begin{align*}
\rhp:\quad
  \forall H \in 2^{2^\ii{Trace}}.~\forall\src{P}.~
    &
    (\forall\src{C_S} \ldotp
    \src{\behav{C_S\hole{P}}} \in H)
    \Rightarrow
    \\&
    (\forall \trg{C_T} \ldotp
    \trg{\behav{C_T\hole{\cmp{P}}}} \in H)
\end{align*}

The equivalent (\CoqSymbol) \iffull property-free \fi characterization of \rhp
is \criteria{\pf{\rhp}}{pfrhp}
\iffull not very surprising\fi
:
\begin{align*}
  \pf{\rhp}:\quad&\ \forall\src{P}.~ \forall\trg{C_T}.~ \exists \src{C_S}.~
            \trg{\behav{C_T\hole{\cmp{P}}}} = \src{\behav{C_S\hole{P}}}
  \\
 \pf{\rhp}:\quad&\ \forall\src{P}.~ \forall\trg{C_T}.~ \exists \src{C_S}.~ \forall t.~
        \trg{C_T\hole{\cmp{P}} \sem} t \iff
        \src{C_S\hole{P}\sem} t
\end{align*}
This requires that, for every partial program $\src{P}$ and target
context $\trg{C_T}$, there is a (back-translated) source context
$\src{C_S}$ that perfectly preserves the set of traces of $\trg{C_T\hole{\cmp{P}}}$ when linked to $\src{P}$.
There are two differences from \rtp:
(1)~the $\exists \src{C_S}$ and $\forall t$ quantifiers are swapped,
    so the back-translated $\src{C_S}$
    must work for all traces $t$, and
(2)~the implication in \pf{\rtp} ($\Rightarrow$)
    became a two-way implication in \pf{\rhp} ($\iff$), so
    compilation has to perfectly preserve the set of traces.
In particular the compiler cannot refine behavior (remove traces), \EG it
cannot implement nondeterministic scheduling via a deterministic
scheduler.

\smallskip
In the following subsections we study restrictions of \rhp to various
subclasses of hyperproperties. To prevent duplication we define
$\rhp(X)$ to be the robust satisfaction of a class $X$ of hyperproperties
(so \rhp above is simply $\rhp(2^{2^\ii{Trace}})$):
\begin{align*}
\rhp(X):\quad
  \forall H \in X.~\forall\src{P}.~
    &(\forall\src{C_S} \ldotp
    \src{\behav{C_S\hole{P}}} \in H)
    \Rightarrow\\
    & (\forall \trg{C_T} \ldotp
    \trg{\behav{C_T\hole{\cmp{P}}}} \in H)
\end{align*}

\subsection{Robust Subset-Closed Hyperproperty Preservation
  (\rschp)}
\label{sec:rschp}

If one restricts robust preservation to only subset-closed
hyperproperties then refinement of behavior is again allowed.
A hyperproperty $H$ is subset-closed, written $H {\in} \ii{SC}$,
if for any two behaviors \iffull $b_1$ and $b_2$ so that\fi $b_1 {\subseteq} b_2$,
if $b_2 {\in} H$ then $b_1 {\in} H$.
For instance, the lifting $[\pi]$ of any trace property $\pi$ is
subset-closed, but the hyperproperty $H_{1+23}$ above is not.
It can be made subset-closed by allowing all smaller behaviors:
$H_{1+23}^\ii{SC} = \{\emptyset, \{t_1\}, \{t_2\}, \{t_3\}, \{t_2, t_3\}\}$
is subset-closed.

{\em Robust Subset-Closed Hyperproperty
  Preservation} (\criteria{\rschp}{rschp}) is simply defined as $\rhp(\ii{SC})$.
The equivalent (\CoqSymbol) property-free characterization of \pf{\rschp}
simply gives up the $\Leftarrow$ direction of \pf{\rhp}:
$$\begin{multlined}
  \pf{\rschp}:\quad \forall\src{P}.~ \forall\trg{C_T}.~ \exists \src{C_S}.~ \forall t.~
        \trg{C_T\hole{\cmp{P}}} \mathrel{\trg{\sem}} t \Rightarrow
        \src{C_S\hole{P}} \mathrel{\src{\sem}} t
\end{multlined}$$

The most interesting subclass of subset-closed hyperproperties is
hypersafety, which we discuss \iffull in the next sub-section\else next\fi.
The appendix \ifcamera\else(\autoref{app:sec:k-subset-closed})\fi
also studies \iffull the subclass of \fi  $K$-subset-closed
hyperproperties~\cite{mastroeni2018verifying},
which can be seen as generalizing $K$-hypersafety below.


\subsection{Robust Hypersafety Preservation (\rhsp)}
\label{sec:rhsp}
\label{sec:rnip}

Hypersafety is a generalization of safety that is very
important in practice, since several important notions of
noninterference are hypersafety, such as
termination-insensitive noninterference~\cite{SabelfeldS01, Fenton74,
  AskarovHSS08},
observational determinism~\cite{McLean92, ZdancewicM03, Roscoe95}, and
nonmalleable information flow~\cite{CecchettiMA17}.

According to Alpern and Schneider~\cite{AlpernS85}, the ``bad thing''
that a safety property disallows must be {\em finitely observable} and
{\em irremediable}.
For safety the ``bad thing'' is a finite trace prefix that cannot
be extended to any trace satisfying the safety property.
For hypersafety, \citet{ClarksonS10} generalize
the ``bad thing'' to a finite set of finite trace prefixes
that they call an {\em observation},
drawn from the set $\ii{Obs} = 2^\ii{FinPref}_\ii{Fin}$,
which denotes the set of all finite subsets of finite prefixes.
They then lift the prefix relation to sets:
an observation $o \in \ii{Obs}$
is a prefix of a behavior $b \in 2^\ii{Trace}$, written $o {\leq} b$,
if $\forall m \in o.~ \exists t \in b.~ m {\leq} t$.
Finally, they define hypersafety analogously to safety,
but the domains involved include an extra level of sets:
\[
\ii{Hypersafety} {\triangleq} \{ H ~|~
  \forall b {\not\in} H.~
    (\exists o {\in} \ii{Obs}.~ o {\leq} b \wedge
      (\forall b' {\geq} o.~ b' {\not\in} H)) \}
\]
Here the ``bad thing'' is an observation $o$ that cannot be
extended to a behavior $b'$ satisfying the hypersafety property $H$.
We use this to define
{\em  Robust Hypersafety Preservation}~(\criteria{\rhsp}{rhsp})
as $\rhp(\ii{Hypersafety})$
and propose the following equivalent~(\CoqSymbol) characterization for it:
$$\begin{multlined}
\pf{\rhsp}:~
  \forall\src{P}.~ \forall\trg{C_T}.~ \forall o \in \ii{Obs}.~\\
    \quad\qquad o \leq \trg{\behav{C_T\hole{\cmp{P}}}} \Rightarrow
    \exists \src{C_S}.~
    o \leq \src{\behav{C_S\hole{P}}}
\end{multlined}$$
This says that to prove \rhsp one needs to be able to back-translate a
partial program $\src{P}$, a context $\trg{C_T}$, and a prefix $o$ of
the behavior of $\trg{C_T\hole{\cmp{P}}}$, to a source context
$\src{C_S}$ so that the behavior of $\src{C_S\hole{P}}$ extends $o$.
It is possible to use the finite set of finite executions
corresponding to observation $o$ to drive this back-translation
(as we do in \autoref{sec:instance-tracebased}).

For hypersafety the involved observations are finite sets but their
cardinality is otherwise unrestricted.
In practice though, most hypersafety properties can be falsified by
very small sets: counterexamples to termination-insensitive
noninterference~\cite{SabelfeldS01, Fenton74, AskarovHSS08} and observational
determinism~\cite{McLean92, ZdancewicM03, Roscoe95} are observations containing
2 finite prefixes, while counterexamples to nonmalleable information
flow~\cite{CecchettiMA17} are observations containing 4 finite prefixes.
To account for this, \citet{ClarksonS10}
introduce $K$-hypersafety as a restriction of hypersafety to
observations of a fixed cardinality $K$.
Given $\ii{Obs}_K = 2^\ii{FinPref}_\ii{Fin(K)}$, the set of
observations with cardinality $K$, all definitions and results above
can be ported to $K$-hypersafety by simply replacing $\ii{Obs}$ with
$\ii{Obs}_K$. Specifically, we denote by $\criteria{\rkhsp}{rkhsp}$ the
criterion $\rhp(\ii{\mbox{$K$-}Hypersafety})$.

The set of lifted safety properties,
$\{[\pi] ~|~ \pi \in \ii{Safety} \}$, is precisely the same as
$1$-hypersafety, since the counterexample for them is a single finite prefix.
For a more interesting example, termination-insensitive
noninterference (\ii{TINI})~\cite{AskarovHSS08, SabelfeldS01,
  Fenton74} can be defined as follows in our setting:
\begin{align*}
\ii{TINI} \triangleq \{ b ~|~ \forall t_1~t_2 {\in} b.~
  &(t_1~\ii{terminating} \land t_2~\ii{terminating} ~\\
  &\land \ii{pub\text{-}inputs}(t_1){=}\ii{pub\text{-}inputs}(t_2))
\\
&\Rightarrow
   \ii{pub\text{-}events}(t_1){=}\ii{pub\text{-}events}(t_2) \}
\end{align*}
%
%
This \iffull definition \fi
requires that trace events are either inputs or outputs, each of
them associated with a security level: public or secret.
\ii{TINI} ensures that for any two terminating traces of the program
behavior for which the two sequences of public inputs are the same,
the two sequences of public events---inputs and outputs---are also the same.
\ii{TINI} is $2$-hypersafety, since $b \not\in \ii{TINI}$ implies
that there exist finite traces $t_1$ and $t_2$ that agree on the
public inputs but not on all public events, so we can simply take
$o = \{ t_1, t_2 \}$.
Since the traces in $o$ are already terminated,
any extension $b'$ of $o$ can only add extra traces,
\IE $\{ t_1, t_2 \} \subseteq b'$, so $b' \not\in \ii{TINI}$
as needed to conclude that \ii{TINI} is in $2$-hypersafety.
In \autoref{fig:order}, we write {\em Robust Termination-Insensitive
  Noninterference Preservation} (\criteria{\rtinip}{rtinip})
for $\rhp(\{\ii{TINI}\})$.

\subsection{Separation Between Properties and Hyperproperties}
\label{sec:hypersafety-separations}

\ifsooner
\ch{\ii{TINI} seems quite weak.
How about the following noninterference definition
  that works with infinite traces:
\[
\ii{NI?} = \{ b ~|~ \forall t_1, t_2 \in b.~
  \ii{pub\text{-}inputs}(t_1)=\ii{pub\text{-}inputs}(t_2) \Rightarrow
   \ii{pub\text{-}events}(t_1)=\ii{pub\text{-}events}(t_2) \}
\]
Q1: Does it make sense?
Q1.5: How is it related to \ii{TSNI}? Or progress-sensitivity?
It doesn't require
co-termination (which is exactly what makes \ii{TSNI} not be
hypersafety)? It does require co-termination if pub-events
includes termination, so it does seem related to TSNI!\\
Q2: Is it hypersafety? A counterexample to this is 2 potentially
infinite traces $t_1$ and $t_2$ that have the same public inputs but
that have different public outputs. The first different public outputs
must happen at a finite position though, so one could hope to take
the 2 prefixes up to that position in $o$.
These prefixes can still be extended, but is that a problem?
They will keep being different!?\\
Q3: How about something stronger, that works like determinacy,
as long as the public inputs are the same, the public outputs are the
same? Is this related to observational determinism?
\[
\begin{array}{l}
\ii{NI??} = \{ b ~|~ \forall t_1, t_2 \in b.~
  \ii{pub\text{-}events}(t_1)=\ii{pub\text{-}events}(t_2)
\\\qquad\qquad\qquad\qquad\qquad
  \vee~\exists m_1 m_2.~ \exists i_1 \neq i_2 \text{ public inputs}.~
\\\qquad\qquad\qquad\qquad\qquad\quad
  \ii{pub\text{-}events}(m_1)=\ii{pub\text{-}events}(m_2) \wedge
   m_1 \cdot i_1 \leq t_1 \land m_2 \cdot i_2 \leq t_2 \}
\end{array}
\]
This is basically the determinacy of the \ii{pub\text{-}events}
with respect to the public inputs.
}
\fi

\ifsooner
\ch{Is there any way to phrase this in a way that doesn't involve
  designating secret/public inputs/outputs. For instance, in terms of
  knowledge: a target context doesn't gain out any more information
  about \cmp{P}'s data than a source context gets. Is this the same as
  universally quantifying over all input/output labelings? There is
  definitely a difference between a compiler that preserves security
  for a single (known to it!) labeling vs a compiler that preserves
  it for all labelings (which is basically the same as treating
  everything as secret). We might need to clarify the fact that
  there are two different choices here, and just by looking at
  \ii{TINI} one cannot even tell which one we took}
\fi

\iflater
\ch{Would reactive noninterference be well suited for our setting?
  The paper looks complicated though~\cite{BohannonP10}.}
\fi


Enforcing \rhsp is strictly more demanding than enforcing \rsp.
Because even \rthsp (robust 2-hypersafety preservation) implies 
\rtinip, a compilation chain satisfying \rthsp has to make
sure that a target-level context cannot infer more information about
the internal data of $\cmp{P}$ than a source context could infer about
the data of $\src{P}$.
By contrast, a \rsp compilation chain can allow arbitrary {\em reads} of $\cmp{P}$'s
internal data, even if $\src{P}$'s data is private at the source level.
Intuitively, for proving \pf{\rsp}, the source context produced by
back-translation can guess any secret $\cmp{P}$ receives in the {\em
  single} considered execution, but for \criteria{\rthsp}{rthsp}
the single source context needs to work for {\em two} different executions,
potentially with two different secrets, so guessing is no longer an option.
We use this idea to prove a more general separation result
$\rtp \not\Rightarrow \rtinip$, by exhibiting a toy compilation chain
in which private variables are readable in the target language, but
not in source.
%
%
\begin{theorem}\label{thm:rtp-doesnt-imply-rtinip}
$\rtp \not\Rightarrow \rtinip$
\end{theorem}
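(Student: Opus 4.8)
The plan is to exhibit a toy compilation chain that satisfies \rtp but violates \rtinip, exploiting the fact that \rtp constrains each \emph{single} trace in isolation while \rtinip constrains \emph{pairs} of traces simultaneously. Following the pattern of \Cref{thm:rsp-doesnt-imply-rtp}, I would take the source language to be a variant of the target with the \emph{same} partial programs and contexts and the \emph{same} compiler (the identity), so that the only difference is one semantic rule. The program keeps a distinguished \emph{private} cell into which it writes, once at startup, the value of a secret input event $\inpl{s}$; this cell is immutable thereafter. In the target, a context read of this cell returns the stored value $s$, whereas in the source it returns a fixed default $0$. I would also take both languages to be deterministic and restrict the I/O model so that secret inputs are delivered only to the program while the context can perform only public I/O (plus reads of the private cell); this is what makes confidentiality of $s$ meaningful in the source.

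To show $\rtinip$ fails, I would pick $\src{P}$ to be the program that reads $\inpl{s}$, stores $s$ in its private cell, and emits nothing itself. In the source no context can learn $s$: reads of the private cell return $0$ and the context never receives a secret input, so for every $\src{C_S}$ the public events of $\src{C_S\hole{P}}$ depend only on the public inputs and not on $s$; by determinism, two runs with the same public inputs have identical public events, whence $\behavs{C_S\hole{P}} \in \ii{TINI}$ and $\src{P}$ robustly satisfies $\ii{TINI}$. In the target, however, the context $\trg{C_T}$ that reads the private cell and publicly outputs its contents yields two runs with secret inputs $s_1 \ne s_2$, identical (empty) public inputs, but distinct public outputs $\outl{s_1} \ne \outl{s_2}$; hence $\behavt{C_T\hole{\cmp{P}}} \notin \ii{TINI}$, so $\cmp{P}$ does not robustly satisfy $\ii{TINI}$ and \rtinip is violated.

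To show \rtp holds I would verify the property-free characterization \pf{\rtp}. Given any $\src{P}$, any $\trg{C_T}$, and any single trace $t$ with $\trg{C_T\hole{\cmp{P}}} \mathrel{\trg{\sem}} t$, I read the secret value $s$ off the (at most one) event $\inpl{s}$ occurring in $t$ and define $\src{C_S}$ to be $\trg{C_T}$ with every private-cell read replaced by the literal constant $s$. Because the cell is immutable and written before control reaches the context, every target read along this run returned exactly $s$, so the hardcoded source context performs the same computation and emits the same events; thus $\src{C_S\hole{P}} \mathrel{\src{\sem}} t$. Crucially, \pf{\rtp} permits a \emph{different} $\src{C_S}$ for each $t$, so hardcoding the one secret of that particular run is legitimate --- exactly the ``guessing'' that the \emph{single} source context demanded by \rtinip cannot perform across two runs with different secrets.

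The main obstacle, I expect, is arranging the source language so that $\src{P}$ robustly satisfies $\ii{TINI}$ against \emph{all} source contexts rather than merely well-behaved ones: one must rule out every channel by which a source context could come to depend on $s$, which is why I confine secret inputs to the program, disable the private read in the source, and take the languages deterministic. A secondary technical point is that the back-translation must reproduce whole, possibly infinite, traces using a \emph{finite} source context; the single-immutable-secret design handles this, since the single constant $s$ suffices no matter how many times the context rereads the cell.
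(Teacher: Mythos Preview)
Your proposal is correct and follows essentially the same approach as the paper: a private datum that target contexts can read but source contexts cannot, with the identity compiler, so that \rtinip fails via a leaking target context while \pf{\rtp} holds because the back-translated source context may hard-code the single secret appearing in the given trace. The paper's instantiation is slightly leaner---it fixes a single source program \texttt{x = input; y = f(); output y} with two-event traces, and the back-translation simply picks the context whose \src{f()} returns the observed output $o$ (guessing the \emph{output} rather than the secret)---but the underlying idea and the reason it works are exactly what you describe.
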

This implies a strict separation between all criteria based on
hyperproperties (the red area in \autoref{fig:order}, having \rtinip
as the bottom) and all the ones based on trace properties (the yellow
area in \autoref{fig:order} having \rtp as the top).



Using a more complex counterexample involving a system of $K$ linear
equations, we have also shown that, for any $K$, robust preservation
of $K$-hypersafety, does not imply robust preservation of $(K{+}1)$-hypersafety.

\begin{theorem}\label{thm:rkhsp-doesnt-imple-rkkhsp}
$\forall K.~ \rkhsp \not\Rightarrow \rkkhsp$
\end{theorem}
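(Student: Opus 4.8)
The plan is to separate the two criteria by exhibiting a \emph{counterexample compilation chain} that satisfies \rkhsp but violates \rkkhsp, reasoning throughout via the property-free characterizations \pf{\rkhsp} and \pf{\rkkhsp} (the analogues of \pf{\rhsp} obtained by replacing $\ii{Obs}$ with $\ii{Obs}_K$ and $\ii{Obs}_{K+1}$). The guiding intuition generalizes the guessing argument behind \Cref{thm:rtp-doesnt-imply-rtinip}: to reproduce an observation of $N$ finite prefixes, a \emph{single} source context must simultaneously account for $N$ distinct executions, and I design the chain so that each such prefix pins down exactly one linear equation on the effective behavior of the context meant to produce it. With $K$ equations the induced system is solvable, so a matching source context always exists; with $K{+}1$ equations I can force the system to be inconsistent, so no source context can match.

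Concretely, I fix a field $\mathbb{F}$ and design $\src{P}$ so that it exposes to any linked context only a \emph{$K$-dimensional linear interface}: whatever a context does, the portion of each execution prefix that another context would have to reproduce depends on the context solely through a linear functional valued in $\mathbb{F}^{K}$. The compiled program $\cmp{P}$ is identical except that it leaks one additional, otherwise-protected coordinate (just as private data becomes readable in the target in \Cref{thm:rtp-doesnt-imply-rtinip}), effectively widening the context's interface to $\mathbb{F}^{K+1}$. Consequently a single finite prefix $m$ imposes exactly one linear equation $\langle a, w\rangle = b$ on the effective parameter $w$ of whatever context is meant to produce it, with $a,b$ read off from $m$, and an observation of $N$ prefixes imposes a system of $N$ such equations.

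For the positive direction I verify \pf{\rkhsp}: given $\src{P}$, a target context $\trg{C_T}$, and an observation $o = \{m_1,\dots,m_K\} \in \ii{Obs}_K$ with $o \leq \trg{\behav{C_T\hole{\cmp{P}}}}$, I extract the $K$ equations $\langle a_i, w\rangle = b_i$ and solve this $K$-variable system for an effective parameter, which I realize as a concrete source context $\src{C_S}$; by construction $\src{C_S\hole{P}}$ then produces, for each $i$, a trace extending $m_i$, giving $o \leq \src{\behav{C_S\hole{P}}}$. Since the system has exactly $K$ unknowns and at most $K$ equations (after a non-degeneracy check on the $a_i$ that I arrange to hold for any target context), it is always solvable. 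For the negative direction I instantiate \pf{\rkkhsp} with the same $\src{P}$, a chosen target context, and an observation $o' = \{m_0,\dots,m_K\} \in \ii{Obs}_{K+1}$ engineered so that its $K{+}1$ equations in the $K$ source-side unknowns are inconsistent: because $\cmp{P}$ supplies the extra coordinate, $\trg{C_T}$ can still realize $o'$, i.e.\ $o' \leq \trg{\behav{C_T\hole{\cmp{P}}}}$, yet for \emph{every} source context $\src{C_S}$ the overdetermined system has no solution and hence $o' \not\leq \src{\behav{C_S\hole{P}}}$. This refutes \pf{\rkkhsp} and therefore \rkkhsp.

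The main obstacle is making the $K$-dimensional confinement airtight in the negative direction, where the quantifier over source contexts is completely unrestricted, so I cannot simply assume a context is a $K$-parameter linear map. I must design $\src{P}$ so that no context---however cleverly it uses $\src{P}$'s operations across the $K{+}1$ relevant executions---can influence the reproducible content of those prefixes through more than $K$ independent linear functionals, so that the effective parameter $w$ genuinely lives in $\mathbb{F}^{K}$ and the chosen $o'$ is genuinely unrealizable. Establishing this containment rigorously, together with the complementary fact that $\cmp{P}$'s single extra leak suffices to realize $o'$ in the target, is the heart of the argument; granting it, the elementary dichotomy between a solvable $K \times K$ system and an inconsistent $(K{+}1)\times K$ system delivers $\rkhsp \not\Rightarrow \rkkhsp$ for every $K$.
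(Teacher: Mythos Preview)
Your proposal is correct and follows essentially the same approach as the paper: the paper's counterexample chain has a single program whose context supplies $K$ constant functions $f_1(),\dots,f_K()$ (your $K$-dimensional parameter $w$), with the target language additionally allowing the context to read the program's private input (your ``extra coordinate''), so that any $K$-observation yields a solvable $K\times K$ linear system while a specific $(K{+}1)$-observation yields an inconsistent $(K{+}1)\times K$ system. The paper's proof is more concrete---it writes down the program explicitly and does the case analysis on which input value is missing to verify solvability---but the linear-algebraic mechanism you outline is exactly the one used.
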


\subsection{Where Is Robust Hyperliveness Preservation?}
\label{sec:hyperliveness}

{\em Robust Hyperliveness Preservation} (\rhlp) does not appear in
\autoref{fig:order}, because it is provably equivalent to \rhp (or, equivalently, \pf{\rhp}).
We define \rhlp as $\rhp(\ii{Hyperliveness})$ for the following standard
definition of \ii{Hyperliveness}~\cite{ClarksonS10}:
\[
\ii{Hyperliveness} \triangleq
  \{ H ~|~ \forall o \in \ii{Obs}.~ \exists b {\geq} o.~ b \in H  \}
\]
The proof that \rhlp implies \pf{\rhp}  (\CoqSymbol) involves showing that
$\{ b ~|~ b {\not=} \trg{\behav{C_T\hole{\cmp{P}}}} \}$,
the hyperproperty allowing all behaviors other than
$\trg{\behav{C_T\hole{\cmp{P}}}}$, is hyperliveness.
\ifsooner\ch{Using Alix-style proof this hyperproperty could change
  to turn $\not=$ to $=$?}\fi
Another way to obtain this result \iffull though \fi is from the fact that, as in
previous models~\cite{AlpernS85}, each
hyperproperty can be decomposed as the intersection of two
hyperliveness properties.
This collapse of {\em preserving} hyperliveness and {\em preserving} all
hyperproperties happens irrespective of the adversarial contexts.

\section{Robustly Preserving Relational Hyperproperties}
\label{sec:rel}

Trace properties and hyperproperties are predicates on the behavior of
a single program.
However, we may be interested in showing that compilation robustly preserves
\emph{relations} between the behaviors of two or more programs.
For example, suppose we optimize a partial source program $\src{P_1}$
to $\src{P_2}$ such that $\src{P_2}$ runs
faster than $\src{P_1}$ in any source context.
We may want compilation to preserve this ``runs faster than''
\emph{relation} between the two program behaviors against arbitrary
target contexts.
Similarly, in any source context, the behaviors of
$\src{P_1}$ and $\src{P_2}$ may be equal and we may want the compiler
to preserve such trace equivalence~\citeFull{BaeldeDH17,
  DelauneH17}{ChevalKR18} in arbitrary target contexts.
This last criterion, which we call \emph{Robust Trace Equivalence
  Preservation} (\rtep) in \autoref{fig:order}, is interesting because
in various determinate settings~\cite{ChevalCD13, Engelfriet85} it
coincides with preserving observational equivalence, the
security-relevant part of full abstraction (see \autoref{sec:fa}).

In this section, we study the robust preservation of such
\emph{relational hyperproperties} and several interesting subclasses,
still relating the behaviors of multiple programs.
%
Unlike hyperproperties and trace properties, relational
hyperproperties have not been defined as a general concept in the
literature, so even their definitions are new. We describe relational
hyperproperties and their robust preservation in
\autoref{sec:rel:hyper}, then look at subclasses induced by what we call
\emph{relational properties} (\autoref{sec:rel:trace})
and \emph{relational safety properties} (\autoref{sec:rel:safety}).
The appendix \ifcamera\else(\autoref{app:sec:rel-trace-prop-criteria})\fi
presents a few other subclasses.
The corresponding secure compilation criteria are grouped in the blue area
in \autoref{fig:order}. In \autoref{sec:rel:sep-rel} we show that, in
general, none of these \iffull criteria for robust preservation of
relations between programs \else relational criteria \fi are implied
by any non-relational \iffull robust preservation \fi criterion (from
\autoref{sec:prop} and \autoref{sec:hyperprop}), while in
\autoref{sec:context-composition} we show two specific situations in
which most relational criteria collapse to non-relational ones.






\subsection{
Relational Hyperproperty Preservation (\rrhp)}
\label{sec:rel:hyper}

We define a \emph{relational hyperproperty} as a predicate (relation)
on a sequence of behaviors of some length. A sequence of programs of the same
length is then said to have the relational hyperproperty if their
behaviors collectively satisfy the predicate. Depending on the arity
of the predicate, we get different subclasses of relational
hyperproperties. For arity $1$, the resulting subclass describes
relations on the behavior of individual programs, which coincides with
hyperproperties (\autoref{sec:hyperprop}). For arity $2$, the
resulting subclass consists of relations on the behaviors of two
programs. Both examples described at the beginning of this section lie
in this subclass. This generalizes to any finite arity $K$ (predicates
on behaviors of $K$ programs), and to the infinite arity.

Next, we define the robust preservation of these subclasses.
\iffull
For arity $1$, robust relational hyperproperty
preservation coincides with robust hyperproperty preservation, \rhp,
of \autoref{sec:rhp}.\ch{dupe}
\fi
For arity $2$, \emph{robust 2-relational hyperproperty preservation},
\rtrhp, is defined as follows:
$$\begin{multlined}
\criteria{\rtrhp}{rtrhp}:\,
  \forall R \in 2^{(\behavAll^2)}.~\forall\src{P_1}~\src{P_2}.~\\
  ~(\forall\src{C_S} \ldotp
  (\src{\behav{{C_S}\hole{P_1}}},\,\src{\behav{C_S\hole{P_2}}}) \in R)
  \Rightarrow\\
  ~(\forall\trg{C_T} \ldotp
  (\trg{\behav{{C_T}\hole{\cmp{P_1}}}},\,\trg{\behav{C_T\hole{\cmp{P_2}}}}) \in R)
\end{multlined}$$
\rtrhp says that for any binary relation $R$ on behaviors
of programs, if the behaviors of $\src{P_1}$ and $\src{P_2}$ satisfy
$R$ in every source context, then so do the behaviors of $\cmp{P_1}$
and $\cmp{P_2}$ in every target context. In other words, a compiler
satisfies \rtrhp iff it preserves any relation between pairs of
program behaviors that hold in all contexts. In particular, such a
compilation chain preserves trace equivalence in all contexts (\IE
\criteria{\rtep}{rtep}),
which we obtain by instantiating $R$ with equality in the
above definition (\CoqSymbol).
If execution time is recorded on
program traces, then such a compilation chain also preserves relations like
``the average execution time of $\src{P_1}$ across all inputs is no
more than the average execution time of $\src{P_2}$ across all
inputs'' and ``$\src{P_1}$ runs faster than $\src{P_2}$ on all
inputs'' (i.e., $\src{P_1}$ is an improvement of $\src{P_2}$).
This last property can also be described as a relational predicate on
pairs of traces (rather than behaviors); we return to this point in
\autoref{sec:rel:trace}.

\criteria{\rtrhp}{pfrtrhp} has an equivalent (\CoqSymbol)
property-free variant that does not mention relations $R$:
\begin{align*}
  \pf{\rtrhp}: \forall\src{P_1}\,\src{P_2}\,\trg{C_T}.\exists \src{C_S}.\,
  &
  \trg{\behav{C_T\hole{\cmp{P_1}}}} {=} \src{\behav{C_S\hole{P_1}}}
  \\
  \mathrel{\wedge} \
  &
  \trg{\behav{C_T\hole{\cmp{P_2}}}} {=} \src{\behav{C_S\hole{P_2}}}
\end{align*}
\pf{\rtrhp} is a \iffull direct \fi generalization of \pf{\rhp} from
\autoref{sec:rhp}, but now the same source context $\src{C_S}$ has to
simulate the behaviors of \emph{two} target programs,
$\trg{C_T\hole{\cmp{P_1}}}$ and $\trg{C_T\hole{\cmp{P_2}}}$.

\rtrhp generalizes to any finite arity $K$ in the obvious way,
yielding \criteria{\rkrhp}{rkrhp}.
Finally, this also generalizes to the infinite arity. We call this
\emph{Robust Relational Hyperproperty Preservation} (\criteria{\rrhp}{rrhp}):
$$\begin{multlined}
\criteria{\rrhp}{rrhp}:\,
  \forall R \in 2^{(\behavAll^\omega)}.~\forall\src{P_1},..,\src{P_K},...~\\
  ~(\forall\src{C_S} \ldotp
  (\src{\behav{{C_S}\hole{P_1}}},..,\src{\behav{C_S\hole{P_K}}},..) \in R)
  \Rightarrow\\
  ~(\forall\trg{C_T} \ldotp
  (\trg{\behav{{C_T}\hole{\cmp{P_1}}}},..,\trg{\behav{C_S\hole{\cmp{P_K}}}},..) \in R)
\end{multlined}$$

%
\rrhp is the strongest criterion we study and, hence, it is the
highest point in \autoref{fig:order}.
This includes robustly preserving predicates on all programs of the
language, although we have not yet found practical uses for this.
More interestingly, \rrhp has a very natural equivalent
property-free characterization, \criteria{\pf{\rrhp}}{pfrrhp},
requiring for every target context $\trg{C_T}$, a source context
$\src{C_S}$ that can simulate the behavior of $\trg{C_T}$ for
\emph{any} program:
\begin{align*}
\pf{\rrhp}:\quad
\forall\trg{C_T}.~ \exists \src{C_S}.~\forall \src{P}.~
  \trg{\behav{C_T\hole{\cmp{P}}}} {=} \src{\behav{C_S\hole{P}}}
\end{align*}

It is instructive to compare the property-free characterizations of
the preservation of robust trace properties (\pf{\rtp}),
hyperproperties (\pf{\rhp}), and relational hyperproperties
(\pf{\rrhp}). In \pf{\rtp}, the source context \src{C_S} may depend on
the target context \trg{C_T}, the source program \src{P} and a given
trace $t$. In \pf{\rhp}, \src{C_S} may depend only on \trg{C_T} and
\src{P}. In \pf{\rrhp}, \src{C_S} may depend only on \trg{C_T}. This
directly reflects the increasing expressive power of trace properties,
hyperproperties, and relational hyperproperties, as predicates on
traces, behaviors (set of traces), and sequences of behaviors, respectively.



\subsection{
Relational Trace Property Preservation (\rrtp)}
\label{sec:rel:trace}

\emph{Relational (trace) properties} are the subclass of relational
hyperproperties that are fully characterized by relations on
\emph{individual} traces of multiple programs. For example, the
relation ``$\src{P_1}$ runs faster than $\src{P_2}$ on every input''
is a 2-ary relational property characterized by pairs of traces, one
from $\src{P_1}$ and the other from $\src{P_2}$, which either differ
in the input or where the execution time in $\src{P_1}$'s trace is
less than that in $\src{P_2}$'s trace.
\iflater
\ca{To be honest,
  \textbf{formally} they are two different mathematical objects, sets
  of tuples of traces and sets of tuples of behaviors.}\ch{In general,
  property class inclusion is up to a coercion, which is described
  below, so I don't understand the complaint. Would you also claim
  that trace properties are formally not hyperproperties, just because
  there is a representation change?}%
\fi
Formally, relational properties of arity $K$ are a subclass of relational hyperproperties
of the same arity. A $K$-ary relational hyperproperty is a relational
(trace) property if there is a $K$-ary relation $R$ on \emph{traces}
such that $P_1,..,P_K$ are related by the relational hyperproperty iff
$(t_1,\ldots,t_k) {\in} R$ for any $t_1 {\in} \behav{P_1}, \ldots, t_k
{\in} \behav{P_K}$.  \iffull

\fi
Next, we define the robust preservation of relational properties of
different arities. For arity 1, this coincides with \iffull robust trace
property preservation, \fi \rtp  from \autoref{sec:rtp}. For arity 2, we
define \emph{Robust 2-relational \iffull Trace\fi Property
Preservation}\iffull (\rtrtp) as follows\fi:
\begin{align*}
&
\rtrtp:
  \forall R \in 2^{(\ii{Trace}^2)}.~\forall\src{P_1}~\src{P_2}.
  \\&
  \left(\begin{aligned}
    \forall\src{C_S}\,t_1\,t_2 \ldotp
    (\src{C_S\hole{P_1}} \mathrel{\src{\sem}} t_1
    \mathrel{\wedge}
    \src{C_S\hole{P_2}} \mathrel{\src{\sem}} t_2)
    \Rightarrow (t_1{,}t_2) {\in} R
    \end{aligned}\right) 
    \Rightarrow
    \\&
  \left(\begin{aligned}
    \forall \trg{C_T}\,t_1\,t_2 \ldotp
    (\trg{C_T\hole{\cmp{P_1}}} \trg{\sem} t_1
    \wedge
    \trg{C_T\hole{\cmp{P_2}}} \trg{\sem} t_2)
    \Rightarrow (t_1{,} t_2) {\in} R
    \end{aligned}\right)
\end{align*}
%
\criteria{\rtrtp}{rtrtp} is weaker than its relational hyperproperty
counterpart, \rtrhp (\autoref{sec:rel:hyper}): Unlike \rtrhp, \rtrtp
does not imply the robust preservation of relations like ``the average
execution time of $\src{P_1}$ across all inputs is no more than the
average execution time of $\src{P_2}$ across all inputs'' (a relation
between average execution times of $\src{P_1}$ and $\src{P_2}$ cannot
be characterized by any relation between individual traces of
$\src{P_1}$ and $\src{P_2}$).
\iffull
\ch{What's missing here is also relating
  this to 2-subset closed} \dg{We can't do this since we didn't define
  2-subset closed in the paper's body. Let's not get into this.}
\fi

\rtrtp also has an equivalent (\CoqSymbol)
\iffull property-free \fi characterization\iffull,
\pf{\rtrtp}\fi:
\begin{align*}
\pf{\rtrtp}:~ &\forall\src{P_1}~\src{P_2}~\trg{C_T}~t_1~t_2.~
\\&\ 
(\trg{C_T\hole{\cmp{P_1}}} \mathrel{\trg{\sem}} t_1
\mathrel{\wedge}
\trg{C_T\hole{\cmp{P_2}}} \mathrel{\trg{\sem}} t_2)
\Rightarrow
\\&\  \ 
\exists \src{C_S}\ldotp (\src{C_S\hole{P_1}} \mathrel{\src{\sem}} t_1
\mathrel{\wedge}
\src{C_S\hole{P_2}} \mathrel{\src{\sem}} t_2 )
\end{align*}
Establishing \pf{\rtrtp} requires constructing a source context
$\src{C_S}$ that can simultaneously simulate a given trace of
$\trg{C_T \hole{ \cmp{P_1}}}$ and a given trace of $\trg{C_T
  \hole{\cmp{P_2}}}$.
\rtrtp generalizes from arity 2 to any finite arity $K$ (yielding
\criteria{\rkrtp}{rkrtp}) and the infinite one
(yielding \criteria{\rrtp}{rrtp}) in the obvious way.

\subsection{Robust Relational Safety Preservation (\rrsp)}
\label{sec:rel:safety}

\emph{Relational safety properties} are a natural generalization of
safety and hypersafety properties to multiple programs, and an
important subclass of relational trace properties. Several interesting
relational trace properties are actually relational safety
properties. For instance, if we restrict the earlier relational trace
property ``\src{P_1} runs faster than \src{P_2} on all inputs'' to
terminating programs it becomes a relational safety property,
characterized by pairs of bad terminating prefixes, where both
prefixes have the same input, and the left prefix shows termination no
earlier than the right prefix.

Formally, a relation $R \in 2^{(\ii{Trace}^K)}$ is 
{\em $K$-relational safety} if for every $K$ ``bad'' traces
$(t_1, \ldots, t_K) \not\in R$,
there exist $K$ ``bad'' finite prefixes $m_1, \ldots, m_k$ such that
$\forall i.~ m_i \leq t_i$, and any $K$ traces
$(t_1', \ldots, t_K')$ pointwise extending $m_1, \ldots, m_k$
are also not in the relation, \IE
$\forall i.~ m_i \leq t_i'$ implies
$(t_1', \ldots, t_K') \not\in R$.
Then, \emph{Robust 2-relational Safety Preservation}
(\criteria{\rtrsp}{rtrsp}) is simply defined
by restricting \rtrtp to only 2-relational safety properties.
The equivalent (\CoqSymbol) property-free characterization for \rtrsp
is the following:
\begin{align*}
  \pf{\rtrsp} :~& \forall\src{P_1}~\src{P_2}~\trg{C_T}~m_1~m_2.~\\
  &\quad (\trg{C_T\hole{\cmp{P_1}}} \rightsquigarrow m_1
  \mathrel{\wedge}
  \trg{C_T\hole{\cmp{P_2}}} \rightsquigarrow m_2)
  \Rightarrow\\
  &\quad \exists \src{C_S}\ldotp (\src{C_S\hole{P_1}}\rightsquigarrow m_1
  \mathrel{\wedge}
  \src{C_S\hole{P_2}}\rightsquigarrow m_2)
\end{align*}

The only difference from the stronger \pf{\rtrtp}
(\autoref{sec:rel:trace}) is between considering full traces and only
finite prefixes. Again,
\rtrsp generalizes to any finite arity $K$ (yielding
\criteria{\rkrsp}{rkrsp}) and the infinite one
(yielding \criteria{\rrsp}{rrsp}) in the obvious way.
\iflater
\ca{It is not exactly the obvious way, we use Theorem R2rSP\_R2rSC' in Coq (but we mention it in the appendix)}
\rb{Could you add a reference to the point in the appendix where this happens?}
\ch{The obviousness claim here is just about the {\em standard
    property-full criteria}.  The fact that going to property-free
  happens in more steps or that for strange reasons we started from a
  weirder property-full criterion is irrelevant here.  Don't want to
  make this section super repetitive just for obvious stuff.}
\fi


\iffull
\subsection{Robust Non-Relational Preservation \iffull Does Not \else
  Doesn't \fi Imply Robust Relational Preservation}
\else
\subsection{Separation Between Relational and Non-Relational}
\label{sec:rel:sep-rel}

Relational (hyper)properties (\autoref{sec:rel:hyper},
\autoref{sec:rel:trace}) and hyperproperties (\autoref{sec:hyperprop})
are different but both have a ``relational'' nature:
relational (hyper)properties are relations on the behaviors or traces
of multiple programs, while hyperproperties are relations on multiple
traces of the same program.
So one may wonder whether there is any case in which the
\emph{robust preservation} of a class of relational (hyper)properties
is equivalent to that of a class of hyperproperties.
Could a compiler that robustly
preserves all hyperproperties (\rhp, \autoref{sec:rhp}) also robustly
preserves at least some class of 2-relational (hyper)properties?
In \autoref{sec:context-composition} we show special
cases in which this is indeed the case, while here 
we now show that {\em in general} \rhp does not imply the robust
preservation of any subclass of relational properties that we have
described so far (except, of course, relational properties of
arity 1, that are just hyperproperties). 
Since \rhp is the strongest non-relational robust preservation 
criterion that we study, this also means that no non-relational
robust preservation criterion implies any relational robust 
preservation criterion in \autoref{fig:order}. 
So, all edges from relational to non-relational criteria in
\autoref{fig:order} are strict implications.

To prove this, we build a compilation chain satisfying \rhp,
but not \rtrsp, the weakest relational criterion in
\autoref{fig:order}. 

\begin{theorem}\label{thm:rhp-doesnt-imply-r2rsp}
$\rhp \not\Rightarrow \rtrsp$
\end{theorem}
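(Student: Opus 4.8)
The plan is to exhibit a counterexample compilation chain that satisfies $\rhp$ but violates $\rtrsp$, working throughout with the equivalent property-free characterizations: to establish $\rhp$ I would verify $\pf{\rhp}$, and to refute $\rtrsp$ I would produce a witness against $\pf{\rtrsp}$. The entire design exploits the difference in quantifier structure between the two. In $\pf{\rhp}$ the back-translated source context $\src{C_S}$ may depend on the program $\src{P}$, whereas in $\pf{\rtrsp}$ a \emph{single} $\src{C_S}$ must simultaneously reproduce bad prefixes of \emph{two} different programs $\src{P_1}$ and $\src{P_2}$. I would therefore arrange the chain so that source contexts are ``blind'' to the linked program while target contexts can read it off --- a failure of code confidentiality, precisely the ingredient the blue area of \autoref{fig:order} demands.

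Concretely, I would let partial programs in both languages be a single bit, $\com{P} \in \{\com{0},\com{1}\}$, and take the compiler to be the identity on bits. I would make \emph{both} languages deterministic, so every whole program emits exactly one trace, and choose a trace alphabet containing two events differing already on their first symbol, yielding two incompatible finite prefixes $m_0,m_1$ (neither a prefix of the other). In the target I include a context $\trg{C^*}$ that inspects the program bit $\trg{b}$ and emits a trace $t_b$ with $m_b \leq t_b$, so target contexts may depend on the program. In the source, by contrast, every context emits a fixed single trace \emph{independently} of the program it is linked with, and for every trace $t$ that some target whole program can emit I include a program-blind source context $\src{C_t}$ whose unique trace is $t$.

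With this setup I would check $\pf{\rhp}$ directly: given $\src{P}$ and $\trg{C_T}$, the deterministic whole program $\trg{C_T\hole{\cmp{P}}}$ emits a single trace $t^*$, and the program-blind context $\src{C_{t^*}}$ reproduces exactly the singleton behavior when linked with $\src{P}$, giving $\trg{\behav{C_T\hole{\cmp{P}}}} = \src{\behav{C_{t^*}\hole{P}}}$. To refute $\pf{\rtrsp}$ I would take $\src{P_1} = \src{0}$, $\src{P_2} = \src{1}$, and $\trg{C_T} = \trg{C^*}$, so that $\trg{C^*\hole{\cmp{P_1}}} \rightsquigarrow m_0$ and $\trg{C^*\hole{\cmp{P_2}}} \rightsquigarrow m_1$; yet any source context $\src{C_S}$ emits the same single trace $t$ for both bits, and no $t$ extends the incompatible $m_0$ and $m_1$, so it cannot satisfy both $\src{C_S\hole{P_1}} \rightsquigarrow m_0$ and $\src{C_S\hole{P_2}} \rightsquigarrow m_1$. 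The main obstacle, and what forces these design choices, is determinism: were source contexts allowed internal nondeterminism, a single context emitting extensions of \emph{both} $m_0$ and $m_1$ would satisfy $\pf{\rtrsp}$, so source determinism is essential; this in turn forces the target to be deterministic too, since otherwise program-blind deterministic source contexts could not match nondeterministic target behaviors and $\pf{\rhp}$ would fail. Confirming that these two constraints are mutually consistent, and that the construction is a legitimate compilation chain in the sense of \autoref{sec:rtp}, is the crux of the argument.
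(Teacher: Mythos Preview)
Your proposal is correct and follows essentially the same approach as the paper: both exploit a failure of code confidentiality where target contexts can distinguish the linked program (the paper via a code-introspection primitive added to the target, you via direct inspection of the program bit) while source contexts cannot, so the per-program back-translation needed for $\pf{\rhp}$ succeeds but no single source context can simultaneously serve both programs in $\pf{\rtrsp}$. Your instantiation is a more abstract, stripped-down version of the paper's construction (which uses concrete programs differing only in dead code and a context that hard-codes the program wherever introspection is used), but the structure and the key insight are identical.
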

\begin{proof}[Proof sketch]
Consider a source language that lacks code introspection, and a target
language that is exactly the same, but additionally has a primitive
with which the context can read the code of the compiled \iffull partial \fi
program as data~\cite{Smith84}.
Consider the trivial compiler that is syntactically
the identity. It is clear that this compiler satisfies \rhp
since the added operation of code introspection offers no advantage to
the context when we consider properties of a single program, as is the
case in \rhp. More precisely, in establishing \pf{\rhp}, the
property-free characterization of \rhp, given a target context
$\trg{C_T}$ and a program $\src{P}$, we can construct a simulating
source context $\src{C_S}$ by modifying $\trg{C_T}$ to hard-code
$\src{P}$ wherever $\trg{C_T}$ performs code introspection. This works
as $\src{C_S}$ can depend on $\src{P}$ in \pf{\rhp}.

Now consider two programs that differ only in some dead code, that
both read a value from the context and write it back verbatim to the
output. These two programs satisfy the relational safety property
``the outputs of the two programs are equal'' in any \emph{source}
context. However, there is a trivial \emph{target} context that causes
the compiled programs to break this relational property. This context
reads the code of the program it is linked to, and provides $1$ as
input if it happens to be the first of our two programs and $2$
otherwise. Consequently, in this target context, the two programs
produce outputs $1$ and $2$ and do not have this relational safety
property in all contexts. Hence, this compiler does not satisfy
\rtrsp. Technically, the trick of hard-coding the program in
$\src{C_S}$ no longer works since there are two different programs
here.
\end{proof}

This proof provides a fundamental insight: To robustly preserve any
subclass of relational (hyper)properties, compilation must ensure that
target contexts cannot learn anything about the \emph{syntactic
  program} they interact with beyond what source contexts can also
learn.
%
When the target language is low-level, hiding code attributes can be
difficult:
it may require padding the code segment of the compiled program to a
fixed size, and cleaning or hiding any code-layout-dependent data like
code pointers from memory and registers when passing control to the
context.
These complex protections are not necessary for any non-relational
preservation criteria (even \rhp), but are already known to be
necessary for fully abstract compilation to low-level
code~\citeFull{PatrignaniASJCP15, JuglaretHAEP16,
  Kennedy06}{PatrignaniDP16}.
They can also be trivially circumvented if the context has access to
any side-channels, \EG it can measure time via a different thread.
In fact, in such settings trying to hide the source code can be seen
as a hopeless attempt at ``security through obscurity'', which is
widely rejected by cryptographers since the early days~\cite{Kerckhoffs1883}.

\subsection{Composing Contexts Using Full Reflection or Internal
  Nondeterminism in the Source Language}
\label{sec:context-composition}

The proof of the previous separation theorem strongly relies on the absence of
code introspection in the source language.
However, if source contexts can obtain {\em complete} intrinsic
information about the programs they are linked with, then \rhp implies
\rtrhp.
Such ``full reflection'' facilities are available in languages such as
Lisp~\cite{Smith84} and Smalltalk.
\ifallcites\footnote{Full reflection was shown to cause observational
  equivalence to degenerate to syntactical identity~\cite{WandF88, Felleisen91}.}\fi
For proving this collapse we inspect
the alternative characterizations, \pf{\rhp} and \pf{\rtrhp}.
The main difference between these two criteria, as explained
in \autoref{sec:rel:hyper}, is that the source context
\src{C_S} obtained by \pf{\rtrhp} depends on two, possibly distinct
programs \src{P_1} and \src{P_2} and a target context \trg{C_T}, while
every possible source context obtained by \pf{\rhp} depends on one
single program.
Hence, by applying \pf{\rhp} once for \src{P_1} and once for \src{P_2},
with the same context \trg{C_T}, we obtain two source contexts
\src{C_{S_1}} and \src{C_{S_2}} that are a priori unrelated.
Without further hypotheses, one cannot show \pf{\rtrhp}.
However, with full reflection we can define a source context
\src{C_S'} that behaves exactly like \src{C_{S_1}} when linked with
\src{P_1}, and like \src{C_{S_2}} otherwise.
%
%
%
%
We can use this construction to show not only that \rhp
implies \rtrhp, but also that robust preservation of each class of
finite-relational properties collapses to the corresponding 
hyperproperty-based criterion:
\begin{theorem}\label{thm:main-reflection}
If the source language has full reflection then
$\rhp {\Rightarrow} \rkrhp$,
$\rschp {\Rightarrow} \rkrtp$, and
$\rthsp {\Rightarrow} \rfrsp$.\Coqed
\end{theorem}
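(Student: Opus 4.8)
The plan is to prove all three implications with a single template that specialises the reflection-based context composition sketched just above to the three relevant property-free characterizations. In every case I would first pass to the property-free side on both ends, using the equivalences already established (\pf{\rhp}, \pf{\rschp}, and the \pf{\rhsp}-style characterization, together with \pf{\rkrhp}, \pf{\rkrtp}, \pf{\rfrsp}). The goal then becomes uniform: given a target context $\trg{C_T}$ and a finite family of source programs $\src{P_1},\ldots,\src{P_K}$ (together with the traces or finite prefixes named in the relational characterization), construct a \emph{single} source context $\src{C_S}$ that matches $\trg{C_T}$ on every $\src{P_i}$ at once. The arity-$1$ instance recovers the hypothesis, so the content is entirely in stitching the single-program data together.

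First I would apply the non-relational hypothesis once for each $\src{P_i}$, reusing the \emph{same} target context $\trg{C_T}$, obtaining contexts $\src{C_{S_1}},\ldots,\src{C_{S_K}}$ where $\src{C_{S_i}}$ matches $\trg{C_T\hole{\cmp{P_i}}}$ in the sense dictated by the hypothesis: for $\rhp \Rightarrow \rkrhp$ the match is behaviour equality $\trg{\behav{C_T\hole{\cmp{P_i}}}} = \src{\behav{C_{S_i}\hole{P_i}}}$; for $\rschp \Rightarrow \rkrtp$ it is behaviour inclusion, which in particular places the prescribed trace $t_i$ of $\trg{C_T\hole{\cmp{P_i}}}$ in $\src{\behav{C_{S_i}\hole{P_i}}}$; and for $\rthsp \Rightarrow \rfrsp$ it is the extension of the relevant observation, obtained by feeding the hypersafety characterization the finite set of prefixes demanded of $\src{P_i}$. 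I would then invoke full reflection to assemble the $\src{C_{S_i}}$ into one context $\src{C_S}$ that reads the code of the program it is linked with and runs as $\src{C_{S_i}}$ exactly when that code is (syntactically) $\src{P_i}$, behaving arbitrarily otherwise. By construction $\src{C_S\hole{P_i}}$ has the same traces as $\src{C_{S_i}\hole{P_i}}$, discharging the relational characterization position by position, and folding the equivalences back yields the target criterion.

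The three lines differ only in bookkeeping. For $\rhp \Rightarrow \rkrhp$ the dispatcher must reproduce the trace set of each $\src{C_{S_i}\hole{P_i}}$ \emph{exactly}, since the characterization uses equality. For $\rschp \Rightarrow \rkrtp$ only inclusion is required, so a dispatcher that merely realises $t_i$ on $\src{P_i}$ suffices. For $\rthsp \Rightarrow \rfrsp$ the new wrinkle is that one program may occur at several positions of the finite family with different required prefixes: because reflection can only branch on \emph{code}, all such positions are served by the same $\src{C_{S_i}\hole{P_i}}$, which must therefore realise the whole finite set of their prefixes simultaneously. That finite set is exactly an observation in the sense of \pf{\rhsp}, which is why a hypersafety hypothesis rather than a plain safety one is the right ingredient on this line, and it is where the hypersafety degree of the assumption enters.

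The hard part will be making the reflective dispatcher faithful. For $\rkrhp$ the composite $\src{C_S}$ must add no traces (preserving the $\supseteq$ direction of the equality) and lose none (the $\subseteq$ direction): the code inspection and the case split have to be observationally silent and must not perturb the continuation that runs $\src{C_{S_i}}$, so the delicate step is arguing that splicing $\src{C_{S_i}}$ under the reflective guard preserves its full behaviour. The companion accounting obstacle lives on the safety line: I must verify that the observation collected for each distinct program stays within the cardinality supplied by the hypothesis (two, for \rthsp), i.e. that matching the arity of the relational witness against the arity of the admissible hypersafety observations actually goes through; once this accounting is pinned down, the per-program applications and the single dispatch step close all three cases uniformly.
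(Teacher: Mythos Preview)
Your template matches the paper's approach exactly: pass to the property-free characterizations, apply the non-relational hypothesis once per distinct program with the fixed $\trg{C_T}$, and then glue the resulting source contexts using the reflection-based dispatch $\otimes$. The paper's own argument is in fact terser than yours---it formalizes only the arity-$2$ instances in Coq (e.g.\ \textsc{R2HSP\_R2rSP}) and asserts the finite-arity versions ``by the same argument''.

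The accounting worry you flag on the third line is not mere bookkeeping; it is a genuine gap, and it does not close. Instantiate $\pf{\rfrsp}$ at some $K>2$ with all programs equal to a single $\src{P}$ and with pairwise distinct prefixes $m_1,\ldots,m_K$: you must exhibit one $\src{C_S}$ such that $\src{C_S\hole{P}}$ realises the $K$-observation $\{m_1,\ldots,m_K\}$, which is exactly what $\pf{\rhsp}$ supplies and what $\pf{\rthsp}$ in general does not. Reflection is no help here, since $\otimes$ branches on code and there is only one program in sight. Concretely, the paper's own separation $\rkhsp \not\Rightarrow \rkkhsp$ is built over a source language with a \emph{single} program, where the full-reflection hypothesis holds vacuously; the $K=2$ instance of that construction therefore already witnesses $\rthsp \not\Rightarrow \rfrsp$ even under full reflection. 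What the reflection argument does deliver is $\rkhsp \Rightarrow \rkrsp$ at each matched arity $K$, hence $\rhsp \Rightarrow \rfrsp$; the stated premise $\rthsp$ on the third implication appears to be a slip in the paper's informal generalization step, and your plan cannot repair it.
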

\iffull
\IE \rhp implies \rkrhp, \rhsp implies
\rfrsp, and \rschp\ch{no longer introduced} implies robust preservation of
relational subset-closed hyperproperties (\CoqSymbol).
\fi

One may wonder whether some other condition exists that makes robust
preservation of relational hyperproperty classes collapse even to the
corresponding \emph{trace-property-based} criteria (\autoref{sec:prop}).
This is indeed the case when the source language has an \emph{internal
  nondeterministic choice operator $\oplus$}, such that the
behavior of $\src{P_1} \oplus \src{P_2}$ is at least the union of the
behaviors of $\src{P_1}$ and $\src{P_2}$.
Such an operator is standard in process calculi~\cite{booksangio}.
To illustrate this we show that \pf{\rtp} implies \pf{\rtrtp}.
Note that \pf{\rtrtp} produces a source context
\src{C_S} that depends on a target context, two source programs
$\src{P_1}$ and $\src{P_2}$ and two, possibly incomparable, traces
$t_1$ and $t_2$.
\pf{\rtp} produces a context depending only on a single trace of a
single source program.
We can apply \pf{\rtp} twice: once for $t_1$ and $\src{P_1}$
obtaining \src{C_{S_1}} and once for $t_2$ and $\src{P_2}$
obtaining \src{C_{S_2}}.
To prove \pf{\rtrtp} we need to build a source context that
over-approximates the behaviors of both \src{C_{S_1}} and
\src{C_{S_2}}.
This context can be $\src{C_{S_1}} \oplus \src{C_{S_2}}$.
Hence, in this setting \pf{\rtp} (\rtp) implies \pf{\rtrtp} (\rtrtp).
%
This result generalizes to any finite arity.
\begin{theorem}\label{thm:main-nondet-choice}
If the source language has an internal nondeterministic choice
operator on contexts then
$\rtp \Rightarrow \rkrtp$,
$\rschp \Rightarrow \rfrschp$, and
$\rsp \Rightarrow \rfrsp$.\Coqed
\end{theorem}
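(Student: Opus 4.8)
The plan is to prove all three implications through their equivalent property-free characterizations, lifting the $\pf{\rtp} \Rightarrow \pf{\rtrtp}$ argument just sketched to arbitrary (finite) arities. In each case the relational criterion on the right quantifies over finitely many source programs $\src{P_1},\ldots,\src{P_n}$ --- with $n$ fixed to $K$ for $\rkrtp$ and ranging over all finite arities for $\rfrsp$ and $\rfrschp$ --- together with a single shared target context $\trg{C_T}$, and it demands one source context $\src{C_S}$ that works simultaneously for all $\src{P_i}$. The non-relational criterion on the left, by contrast, only supplies a source context for one program at a time. First I would apply the non-relational characterization once per index $i$, always reusing the same $\trg{C_T}$, and then glue the $n$ resulting contexts together with the nondeterministic choice operator $\oplus$.

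Concretely, for $\rtp \Rightarrow \rkrtp$, given $\trg{C_T}$, programs $\src{P_1},\ldots,\src{P_K}$ and traces $t_1,\ldots,t_K$ with $\trg{C_T\hole{\cmp{P_i}}} \mathrel{\trg{\sem}} t_i$, applying $\pf{\rtp}$ to each pair $(\src{P_i},t_i)$ yields contexts $\src{C_{S_i}}$ with $\src{C_{S_i}\hole{P_i}} \mathrel{\src{\sem}} t_i$; setting $\src{C_S} = \src{C_{S_1}} \oplus \cdots \oplus \src{C_{S_K}}$, the defining union property of $\oplus$ gives $t_i \in \src{\behav{C_S\hole{P_i}}}$ for every $i$, which is exactly $\pf{\rkrtp}$. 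The case $\rsp \Rightarrow \rfrsp$ is the same, replacing whole traces by finite prefixes $m_i$ and invoking $\pf{\rsp}$ (whose witness reproduces a finite prefix), with the arity $n$ now unbounded. For $\rschp \Rightarrow \rfrschp$ the step is again identical, except that $\pf{\rschp}$ returns for each $i$ a context $\src{C_{S_i}}$ whose behaviour over-approximates the whole set $\trg{\behav{C_T\hole{\cmp{P_i}}}}$; the union property then yields $\src{\behav{C_S\hole{P_i}}} \supseteq \src{\behav{C_{S_i}\hole{P_i}}} \supseteq \trg{\behav{C_T\hole{\cmp{P_i}}}}$, exactly the inclusion required by $\pf{\rfrschp}$.

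The one substantive ingredient --- and the only place the language-specific hypothesis is used --- is the behaviour of $\oplus$ on contexts: I will need that for every program $\src{P}$, $\src{\behav{(C_{S_1} \oplus C_{S_2})\hole{P}}} \supseteq \src{\behav{C_{S_1}\hole{P}}} \cup \src{\behav{C_{S_2}\hole{P}}}$, i.e.\ that $\oplus$ distributes over the hole and over-approximates the union of the plugged behaviours. From the binary case stated in the hypothesis this extends to the $K$-fold and to arbitrary finite combinations by a routine induction, so the main (mild) obstacle is simply formulating this context-level union law precisely and checking that it composes; no single source context needs to be hand-crafted, which is exactly what makes this collapse easier than the reflection-based one of \Cref{thm:main-reflection}. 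Finally, it is essential that the \emph{same} $\trg{C_T}$ be shared across all per-index applications of the non-relational criterion: this is what prevents us from reasoning about the programs independently, and is precisely why a single combined source context --- obtainable only through $\oplus$ --- is required.
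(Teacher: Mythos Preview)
Your proposal is correct and follows essentially the same approach as the paper: work with the property-free characterizations, apply the non-relational criterion once per index $i$ (with the shared $\trg{C_T}$) to obtain contexts $\src{C_{S_i}}$, and take $\src{C_S} = \src{C_{S_1}} \oplus \cdots \oplus \src{C_{S_n}}$, using exactly the over-approximation law $\src{\behav{(C_1\oplus C_2)\hole{P}}} \supseteq \src{\behav{C_1\hole{P}}} \cup \src{\behav{C_2\hole{P}}}$ that the paper takes as its hypothesis. The paper spells this out for the binary case and then remarks that the same argument generalizes to finite arities, which is precisely the induction you describe.
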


Notice that since contexts are finite objects, the techniques above
only produce collapses in cases where finitely many source contexts
need to be composed.
Criteria relying on infinite-arity relations such as \rrhp and \rrtp
are thus not impacted by these collapses.
%
%
The appendix \ifcamera\else(\autoref{app:sec:composing-contexts})\fi has
more details and collapsed variants of \autoref{fig:order}.

\section{Where Is Full Abstraction?}
\label{sec:fa}

Full abstraction---the preservation and reflection of observational
equivalence---is a well-studied criterion for secure compilation
(\autoref{sec:related}).
The security-relevant direction of full
abstraction is {\em Observational Equivalence Preservation} (\oep)~\cite{DevriesePP16,MarcosSurvey}:
$$\begin{multlined}
\criteria{\oep}{oep}:
  \forall\src{P_1}~\src{P_2}.~ \src{P_1 \approx P_2} \Rightarrow
  \trg{\cmp{P_1} \approx \cmp{P_2}}
\end{multlined}$$
One natural question is how \oep relates to our
criteria of robust preservation.

Here we answer this question for languages without internal
nondeterminism.
In such {\em determinate}~\cite{Engelfriet85,Leroy09} settings
observational equivalence coincides with trace equivalence in all
contexts~\cite{Engelfriet85, ChevalCD13} and, hence, \oep coincides
with robust trace-equivalence preservation (\rtep).
As explained in \autoref{sec:rel:hyper}, it is obvious that \rtep is
an instance of \rtrhp, obtained by choosing equality as the relation $R$.
However, for determinate languages with {\em input
  totality}~\cite{ZakinthinosL97, FocardiG95} (if the program accepts
one input value, it has to also accept any other input value) we have
proved that even the weaker \rtrtp implies \rtep (\CoqSymbol).
This proof also requires that if a whole program can produce every
finite prefix of an infinite trace then it can also produce the
complete trace, but we have showed that this holds for the infinite
traces produced in a standard way by {\em any} determinate small-step
semantics.
%
%
%
%
Under these assumptions, we have in fact proved that \rtep follows
from the even weaker {\em Robust 2-relational relaXed safety Preservation}
(\criteria{\rtrxp}{rtrxp}).
The class {\em 2-relational relaXed safety} is a variant of {\em
  2-relational Safety} from \autoref{sec:rel:safety}; with this relaxed variant
``bad'' prefixes $x_1$ and $x_2$ are allowed to end with silent divergence
(denoted as $\ii{XPref}$):
\[
\begin{array}{l}
R \in \text{2-}\ii{relational\ relaXed\ safety} \iff\\
\qquad
\forall (t_1,t_2) {\not\in} R.~ \exists x_1\,x_2 {\in} \ii{XPref}.~
\forall t_1' {\geq} x_1\,t_2' {\geq} x_2.~ (t_1',t_2') {\not\in} R 
\end{array}
\]

\begin{theorem}\label{thm:main-rtrxp-rtep}
Assuming a determinate source language and a determinate and input
total small-step semantics for the target language,
$\rtrxp{} \Rightarrow \rtep$.\Coqed
\end{theorem}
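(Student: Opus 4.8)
The plan is to prove the implication by working with the equivalent property-free characterization $\pf{\rtrxp}$ and exploiting determinacy to turn a relaxed-safety back-translation into a genuine trace-equivalence argument. Unfolding $\rtep$, I assume two source programs $\src{P_1},\src{P_2}$ that are robustly trace equivalent in the source, i.e.\ $\forall\src{C_S}.~\src{\behav{C_S\hole{P_1}}} = \src{\behav{C_S\hole{P_2}}}$, and I must show $\trg{\behav{C_T\hole{\cmp{P_1}}}} = \trg{\behav{C_T\hole{\cmp{P_2}}}}$ for every target context $\trg{C_T}$. Since both the hypothesis and $\pf{\rtrxp}$ are symmetric in the two programs, it suffices to prove one inclusion: for all $\trg{C_T}$ and traces $t$, if $\trg{C_T\hole{\cmp{P_1}}}\sem t$ then $\trg{C_T\hole{\cmp{P_2}}}\sem t$. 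I argue by contradiction, assuming $t$ is produced by the first whole program but not the second.

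The first step is to isolate a finite \emph{point of conflict} using determinacy and input totality of the target. Since $\trg{C_T\hole{\cmp{P_2}}}$ does not produce $t$, and since a whole program that produces every finite prefix of $t$ also produces $t$ itself in a determinate small-step semantics, there must be a finite prefix $m \leq t$ that is the maximal prefix of $t$ the second program follows. Input totality guarantees that the deviation cannot happen at an input step, so it is an output mismatch, a premature termination, or a silent divergence right after $m$. This yields two relaxed prefixes in $\ii{XPref}$: an ordinary finite prefix $x_1 \leq t$ extending $m$ along $t$ (hence $\trg{C_T\hole{\cmp{P_1}}}\rightsquigarrow x_1$), and $x_2$ recording the deviating behavior of the second program (hence $\trg{C_T\hole{\cmp{P_2}}}\rightsquigarrow x_2$). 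The relaxed form of $x_2$ is exactly what lets us record the silent-divergence case, which is why $\rtrxp$ rather than ordinary relational safety is needed here.

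Next I invoke $\pf{\rtrxp}$ on $\src{P_1},\src{P_2},\trg{C_T},x_1,x_2$ to obtain a single source context $\src{C_S}$ with $\src{C_S\hole{P_1}}\rightsquigarrow x_1$ and $\src{C_S\hole{P_2}}\rightsquigarrow x_2$. From $\src{C_S\hole{P_1}}\rightsquigarrow x_1$ (an ordinary prefix) I get a source trace $s$ with $x_1 \leq s$ and $s \in \src{\behav{C_S\hole{P_1}}}$; the source robust-equivalence hypothesis then gives $s \in \src{\behav{C_S\hole{P_2}}}$, i.e.\ $\src{C_S\hole{P_2}}\rightsquigarrow x_1$. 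Thus the single determinate program $\src{C_S\hole{P_2}}$ produces both $x_1$ and $x_2$. But $x_1$ and $x_2$ agree up to $m$ and then conflict (distinct next outputs, or one terminates or silently diverges where the other continues), which is impossible for a determinate program: after the visible prefix $m$ it is in a unique configuration, from which determinacy forbids two distinct visible continuations, forbids stepping from a halted state, and forbids both diverging silently and later producing an event. This contradiction shows $\trg{C_T\hole{\cmp{P_2}}}\sem t$, establishing the inclusion and, by symmetry, $\rtep$.

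The main obstacle I expect is the first step: justifying rigorously that the two target executions share a well-defined finite point of conflict and carving out the correct relaxed prefixes $x_1,x_2$. This is where all three hypotheses are load-bearing --- determinacy to make ``the prefix of $t$ followed by the second program'' well defined, input totality to rule out input-driven deviations so that the conflict is genuinely observable and irremediable, and the infinite-prefix closure property to guarantee that a real finite deviation exists when $t$ is infinite. Treating the silent-divergence case carefully and matching it to the $\ii{XPref}$ relaxation is the delicate part; the concluding determinacy-based contradiction in the source is then comparatively routine.
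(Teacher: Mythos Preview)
Your argument is correct, but it is organized differently from the paper's. The paper works with the \emph{property-full} criterion: it defines an explicit $2$-relational relaxed safety relation
\[
x_1 \,\mathcal{R}_X\, x_2 \;\iff\; x_1 \leq x_2 \;\vee\; x_2 \leq x_1 \;\vee\; \exists m.~\exists e_1\neq e_2 \in \mathit{Input}.~ m{::}e_1 \leq x_1 \wedge m{::}e_2 \leq x_2,
\]
shows (from source determinacy and the source trace-equivalence hypothesis) that $\mathcal{R}_X$ is robustly satisfied in the source, and then applies $\rtrxp$ to push $\mathcal{R}_X$ to the target. The contradiction is then derived \emph{in the target}: starting from $t_1 \in \behav{\trg{C_T\hole{\cmp{P_1}}}}\setminus\behav{\trg{C_T\hole{\cmp{P_2}}}}$, the safety-like hypothesis yields a maximal prefix $m_{\max}$, any target trace $t_2$ of $\trg{C_T\hole{\cmp{P_2}}}$ extending $m_{\max}$ must satisfy $t_1\,\mathcal{R}\,t_2$, and target input totality then contradicts the maximality of $m_{\max}$.

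You instead use the \emph{property-free} characterization $\pf{\rtrxp}$ in the contrapositive direction: you manufacture conflicting relaxed prefixes $x_1,x_2$ in the target, back-translate them to a single source context, and derive the contradiction \emph{in the source} from determinacy of $\src{C_S\hole{P_2}}$. This avoids naming the auxiliary relation $\mathcal{R}_X$ and is arguably more elementary; the paper's route, in turn, makes the invariant that is being transported explicit, which is what one would naturally generalize when strengthening or varying the statement. One small imprecision in your write-up: the deviation after $m$ can involve $\trg{C_T\hole{\cmp{P_2}}}$ taking an input while the next event of $t$ is an output (so ``cannot happen at an input step'' is too strong); what target input totality actually rules out is the case where \emph{both} sides take inputs and they differ, and the remaining mixed case still yields a source-determinacy contradiction by the argument you give.
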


\ifsooner
\dg{If this is the only use of relaXed safety in this paper, I would
  be really inclined to leave it out of the paper and be content with
  writing \rtrtp implies \rtep. I really don't see enough discussion
  to justify introducing a nuanced notion so late in the paper.}
\ch{I'm a bit reluctant to water down one of our most interesting
  theorems. For once, it justifies a big addition to Figure 1}
\fi






In the other direction, we adapt an existing
counterexample~\cite{PatrignaniG17} to show that \rtep (and, hence, for
determinate languages also \oep) does \emph{not} imply \rsp or any of
the criteria above it in \autoref{fig:order}.
%
Fundamentally, \rtep only requires preserving \emph{equivalence} of behavior.
%
%
Consequently, an \rtep compiler can insert code that violates any
security property, as long as it doesn't alter these
equivalences~\cite{PatrignaniG17}.
Worse, even when the \rtep compiler is also required to be correct
(\IE \tp, \scc, and \ccc from \autoref{sec:rtp}),
the compiled program only needs to properly deal with
interactions with target contexts that
behave like source ones, and can behave insecurely when
interacting with target contexts that have no source equivalent.
%


\begin{theorem}\label{thm:rtep-useless}
There exists a compiler between two deterministic
languages that satisfies \rtep, \tp, \scc, and \ccc,
but that does not satisfy \rsp.
\end{theorem}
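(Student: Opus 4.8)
The plan is to adapt the counterexample of \citet{PatrignaniG17}: I would build a compiler that is behaviorally the identity but that installs a \emph{dormant backdoor} in the compiled program, one that fires only on a stimulus that no source context---and no compiled source context---can ever produce. Concretely, take the source language \src{L} to be one in which contexts emit no observable events of their own, so that the safety property ``never perform the dangerous event $\dagger$'' is robustly satisfied by any program that does not itself perform $\dagger$. The target language \trg{L} is the same, except that a raw target context has one extra capability: it may send a distinguished token $\omega$ to the program. The compiler $\cmp{\cdot}$ acts as the identity on both programs and contexts, with a single addition: the compiled program, upon receiving $\omega$, performs $\dagger$ and halts. Since source contexts cannot produce $\omega$ and the compiler does not grant this capability to compiled contexts, every $\src{C_S}$ and every $\cmp{C_S}$ is \emph{benign} (never triggers the backdoor); only certain raw target contexts are not.

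First I would establish \rtep. Working in the deterministic setting, each pair (context, program) yields a single trace. Given any $\trg{C_T}$, let its \emph{benign projection} $\src{C}$ be the source context obtained by stripping the $\omega$-capability, so that $\src{C}$ mimics $\trg{C_T}$ verbatim up to the (unique, by determinacy) point at which $\trg{C_T}$ first sends $\omega$, if ever. For $\src{P_1} \approx \src{P_2}$ (trace equivalent in all source contexts), $\src{C\hole{P_1}}$ and $\src{C\hole{P_2}}$ produce the same trace; hence $\trg{C_T}$ observes the same interaction and sends $\omega$ after the same observable prefix in both $\trg{C_T\hole{\cmp{P_1}}}$ and $\trg{C_T\hole{\cmp{P_2}}}$, and from that point both compiled programs emit exactly $\dagger$ and halt. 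The backdoor's reaction is \emph{program-independent}, so the two target traces coincide, giving $\cmp{P_1} \approx \cmp{P_2}$ in all target contexts, i.e.\ \rtep.

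Next I would discharge the three correctness criteria, each of which reduces to the observation that the backdoor never fires in the relevant configuration. For \tp the argument is a whole program with no linked context, so $\omega$ is never sent and $\cmp{W}$ behaves exactly like $\src{W}$. For \scc the context is a compiled source context $\cmp{C_S}$, which is benign, so $\trg{\cmp{C_S}\hole{\cmp{P}}}$ behaves exactly like $\src{C_S\hole{P}}$. For \ccc I would instantiate the cross-language relation $\approx$ with trace indistinguishability across the compiler (equivalently, any reasonable logical relation): a token-sending $\trg{C_T}$ forces $\dagger$ on a benign program and therefore has no source equivalent, so the premise $\trg{C_T} \approx \src{C_S}$ already excludes token senders, again leaving the backdoor dormant and yielding the \ccc conclusion.

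Finally, \rsp fails on the safety property $\pi = \{ t \mid t \text{ never performs } \dagger \}$ together with any program $\src{P}$ that does not perform $\dagger$. In the source $\src{P}$ robustly satisfies $\pi$ (contexts emit nothing and $\src{P}$ avoids $\dagger$), but the target context that immediately sends $\omega$ drives $\trg{C_T\hole{\cmp{P}}}$ to perform $\dagger$, violating $\pi$; hence $\cmp{P}$ does not robustly satisfy $\pi$ and \rsp fails. I expect the only genuinely delicate step to be the \rtep argument---specifically, making the benign projection and the program-independence of the backdoor precise enough that the decomposition of an arbitrary $\trg{C_T}$ into a benign phase followed by a uniform $\dagger$-emission is airtight in the chosen operational model. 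The other three criteria follow routinely once the backdoor is known to stay dormant against every benign context.
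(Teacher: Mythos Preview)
Your proposal is correct and follows essentially the same approach as the paper: install a program-independent backdoor triggered only by a target-context capability that source contexts (and compiled source contexts) lack, so that \rtep and the correctness criteria survive while \rsp fails. The paper's construction is simply a concrete instance of your abstract scheme---the target has only naturals, the source also has booleans, and the ``token $\omega$'' is an out-of-range argument ($\geq 2$) passed to a boolean-expecting function, with the backdoor outputting $42$ (or diverging); for \ccc the paper takes the syntactic instantiation $\trg{C_T} \approx \src{C_S} \iff \trg{C_T} = \cmp{\src{C_S}}$, which collapses \ccc to \scc and avoids the semantic-equivalence reasoning you sketch.
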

\begin{proof}
Consider a source language where a partial program receives a natural
number or boolean from the context, and produces a number output,
which is the only event. We compile to a restricted language that only
has numbers by mapping booleans \src{true} and \src{false} to \trg{0}
and \trg{1} respectively.
The compiler's only interesting aspect is that it translates a source
function $\src{P} = \src{f(x{:}\Bools) \mapsto e}$ that inputs
booleans to $\trg{\comptd{\src{P}}} =$ \scalebox{.95}{%
  $\trg{f(x{:}\Natt) {\mapsto} \ifte{x {<}
      2}{\comptd{\src{e}}}{\ifte{x {<} 3}{f(x)}{42}}}$}.
%
The compiled function checks if its input is a valid boolean (\trg{0}
or \trg{1}). If so, it executes $\comptd{\src{e}}$.  Otherwise, it
behaves insecurely, silently diverging on input $2$ and outputting
$42$ on inputs $3$ or more. This compiler does not satisfy \rsp since
the source program $\src{f(x{:}\Bools) {\mapsto} 0}$ robustly
satisfies the safety property ``never output 42'', but the program's
compilation does not.

On the other hand, it is easy to see that this compiler is correct
since a compiled program behaves exactly like its source counterpart
on correct inputs. It is also easily seen to satisfy \rtep, since the
additional behaviors added by the compiler (silently diverging on
input $2$ and outputting $42$ on inputs $3$ or more) are independent
of the source code (they only depend on the type), so these cannot be
used by any target context to distinguish two compiled programs.

In the appendix\ifcamera\else (\autoref{app:sec:rtep-useless})\fi, we use the same
counterexample compilation chain to also show that \rtep does not
imply the robust preservation of (our variant of) liveness properties.
We also use a simple extension of this compilation chain to show that \rtep
does not imply \rtinip either. The idea is similar: we add a secret
external input to the languages and when receiving an out of bounds
argument the compiled code simply leaks the secret input, which breaks
\rtinip, but not \rtep. 
\qedhere

%
\end{proof}

\section{Proof Techniques for \rrhp and \rfrxp}
\label{sec:example}

This section demonstrates that the criteria we introduce can be proved
by adapting existing back-translation techniques.
%
We introduce a statically typed source language and a similar
dynamically typed target one (\autoref{sec:src-trg-instances}), as
well as a simple translation between the two
(\autoref{sec:comp-instances}).
We then describe the essence of two very different secure compilation
proofs for this compilation chain, both based on techniques originally
developed for showing fully abstract compilation.
The first proof shows (a typed variant of)
\rrhp (\autoref{sec:instance-top-lattice}), the
strongest criterion from \autoref{fig:order}, using a {\em
  context-based} back-translation, which provides a ``universal
embedding'' of a target context into a source context~\cite{NewBA16}.
The second proof shows a slightly weaker criterion, {\em Robust
  Finite-relational relaXed safety Preservation} (\rfrxp;
\autoref{sec:instance-tracebased}), but which is still very useful, as
it implies robust preservation of arbitrary safety and hypersafety
properties as well as \rtep.
%
%
This second proof relies on a {\em trace-based}
back-translation~\cite{PatrignaniASJCP15, JeffreyR05}, extended to
produce a context from {\em a finite set} of finite execution prefixes.
These finiteness restrictions are offset by a more generic proof
technique that only depends on the context-program interaction (\EG
calls and returns), while ignoring all other language details.
For space reasons, we leave the details of the proofs for\ifcamera{}
the appendix\else{} \autoref{app:sec:proofs-appendix}\fi.

\ifsooner
\jt{ And mention that this criterion still imply RTEP/ RTINIP, etc. }
\jt{ Should I write about that? I can do it but I'm afraid space is going
  to be an issue. }
\ch{said that in the intro, so enough for now.}
\fi

\subsection{Source and Target Languages}\label{sec:src-trg-instances}
The two languages we consider are simple first-order languages with
named procedures and boolean and natural values.
The source language \Lt is typed while the target language \Ld is untyped.
A program in either language is a collection of function definitions,
each function body is a pure expression that can perform comparison
and natural operations ($\oplus$), conditional branching, recursive calls, and use let-in
bindings.
Expressions can also read naturals from the environment and write naturals
to the environment, both of which generate trace events.
\Ld has all the features of \Lt and adds a primitive \trg{e \checkty{\tau}} 
to dynamically check whether an expression $\trg{e}$ has type $\trg{\tau}$.
A context \ctxc{} can call functions and perform general computation
on the returned values, but it cannot directly generate {\com{\readexp}}
and {\com{\writeexp{e}}} events, as those are security-sensitive. 
Since contexts are single expressions, we disallow callbacks from the
program to the context: thus calls go from context to program, and
returns from program to context.
\begin{align*}
  \mi{Programs}~\src{P} \bnfdef&\ \src{\OB{I};\OB{F}}
  \qquad\qquad\quad
  \mi{Contexts}~\ctxs{} \bnfdef\ \src{e}
  \\
  \mi{Types}~\src{\tau} \bnfdef&\ \Bools \mid \Nats
  \qquad
  \mi{Interfaces}~\src{I} \bnfdef\ \src{f:\tau\to\tau}
  \\
  \mi{Functions}~\src{F} \bnfdef&\ \src{f(x:\tau):\tau \mapsto \retpaper e}
  \\
  \mi{Expressions}~\src{e} \bnfdef&\ \src{x} \mid \trues \mid \falses \mid \src{n}\in\mb{N} \mid \src{e \op e} \mid \src{e \geq e} 
  \\
  \mid&\ \src{\letin{x:\tau}{e}{e}} \mid \src{\ifte{e}{e}{e}} 
  \\
  \mid&\ \src{\call{f}~e} \mid \src{\readexp} \mid \src{\writeexp{e}} \mid \fails 
  \\
  \mi{Programs}~\trg{P} \bnfdef&\ \trg{\OB{I};\OB{F}}
  \qquad\qquad\quad\ \ 
  \mi{Contexts}~\ctxt \bnfdef\ \trg{e}
  \\
  \mi{Types}~\trg{\tau} \bnfdef&\ \Boolt \mid \Natt 
  \qquad
  \mi{Interfaces}~\trg{I} \bnfdef\ \trg{f}
  \\
  \mi{Functions}~\trg{F} \bnfdef&\ \trg{f(x) \mapsto \retpaper e}
  \\
  \mi{Expressions}~\trg{e} \bnfdef&\ \trg{x} \mid \truet \mid \falset \mid \trg{n}\in\mb{N} \mid \trg{e \op e} \mid \trg{e \geq e} 
  \\
    \mid&\ \trg{\letin{x}{e}{e}} \mid \trg{\ifte{e}{e}{e}} 
  \\
    \mid&\ \trg{\call{f}~e} \mid \trg{\readexp} \mid \trg{\writeexp{e}} \mid \failt \mid \trg{e \checkty{\tau}}
  \\
  \mi{Labels}~\bl{\lambda} \bnfdef&\ \bl{\epsilon} \mid \bl{\alpha}
  \\
  \mi{Actions}~\bl{\alpha} \bnfdef&\ \bl{\rdl{n}} \mid \bl{\wrl{n}} \mid \terc \mid \divc \mid \failactc
\end{align*}

\iffull
\Lt is typed using a simple type system that, given its simplicity, is
elided.
\fi

Each language has a standard small-step operational
semantics 
(omitted for brevity), as well as
a big-step trace semantics ($\Omega\sem{\OB{\alpha}}$, as in previous
sections). 
%
The initial state of a program $P$ plugged into a context $\ctxc{}$ is
denoted as 
$P\triangleright\ctxc$ and the behavior of such a program is
the set of traces that can be produced by the semantics:
\begin{align*}
  \behavc{\ctxhc{P}}
  &= \myset{ \bl{\OB{\alpha}} }{ P\triangleright\ctxc \sem\OB{\alpha} }
\end{align*}  

\subsection{Compiler}\label{sec:comp-instances}
The compiler \comptd{\cdot} takes programs of \Lt and generates programs of
\Ld,
by replacing static type annotations with dynamic type checks of
function arguments upon function invocation:

\begin{align*}
  \comptd{\src{I_1,\cdots,I_m;F_1,\cdots,F_n}} 
    =&\ 
    \trg{\comptd{\src{I_1}},\cdots,\comptd{I_m};\comptd{F_1},\cdots,\comptd{F_n}}
  \\
  \comptd{\src{f:\tau\to\tau'}} 
    =&\ 
    \trg{f}
  \\
  \comptd{
    \begin{aligned}
      &
      \src{f(x:\tau):\tau'\mapsto}
      \\
      &\ 
      \src{\retpaper e}    
    \end{aligned}
  }
  =&\
        \trg{
          \left(\begin{aligned}
            \trg{f(x)\mapsto} 
            &\trg{\retpaper\ }
            \iftet{
              \trg{x \checkty{\comptd{\tau}}}
              \\&\ 
            }{
              \comptd{e}
            }{\failt}
        \end{aligned}\right)} 
\end{align*}
\begin{align*}
  \comptd{\Nats} 
    =&\ 
    \Natt
  &
  \comptd{\Bools} 
    =&\ 
    \Boolt
        \\
  \comptd{\trues} 
    =&\ 
    \truet
  &
  \comptd{\falses} 
    =&\ 
    \falset
  \\
  \comptd{\src{n}} 
    =&\ 
    \trg{n}
  &
  \comptd{\src{x}} 
    =&\ 
    \trg{x}
  \\
  \comptd{\src{e\op e'}} 
    =&\ 
    \trg{\comptd{e}\op \comptd{e'}}
  &
  \comptd{\src{e\geq e'}} 
    =&\ 
    \trg{\comptd{e}\geq \comptd{e'}}
  \\
  \comptd{\src{\readexp}} 
    =&\ 
    \trg{\readexp}
  &
  \comptd{\src{\writeexp{e}}} 
    =&\ 
    \trg{\writeexp{\comptd{e}}}
        \\
  \comptd{\src{\call{f}~e}} 
    =&\ 
    \trg{\call{f}~\comptd{e}}
  \\
  \comptd{
    \begin{aligned}
      &
      \letins{\src{x:\tau}}{\src{e}
      \\
      &\ }{\src{e'}}
    \end{aligned}
  } 
    =&\ 
    \trg{
      \begin{aligned}
        &
          \letint{\trg{x}}{\comptd{e}
        \\
        &\
          }{\comptd{e'}}
      \end{aligned}
    }
  &
  \comptd{
    \begin{aligned}
      &
        \iftes{\src{e}}{\src{e'}
      \\
      &\
        }{\src{e''}}
    \end{aligned}
  }
    =&\ 
    \trg{
      \begin{aligned}
        &
          \iftet{\comptd{e}}{\comptd{e'}
        \\
        &\
        }{\comptd{e''}}
      \end{aligned}
    }
\end{align*}

\subsection{Proof of \iffull Robust Relational Hyperproperty
  Preservation \else \rrhp\fi by Context-Based Back-Translation}
\label{sec:instance-top-lattice}
To prove that \comptd{\cdot} attains
\iffull  Robust Relational Hyperproperty Preservation \else \rrhp \fi,
we need a way to back-translate target contexts into source contexts.
To this end we use a universal embedding, a technique previously proposed
for proving fully abstract compilation~\cite{NewBA16}.
The back-translation 
needs to generate a source context that respects source-level constraints; in this case, the resulting source context must be well-typed.
To ensure this, we use \Nats as an {\em universal back-translation
  type} in the produced source contexts.
The intuition of the back-translation is that it will encode \truet as \src{0}, \falset as \src{1} and an arbitrary natural number \trg{n} as \src{n+2}.
Based on this encoding, we translate values between
regular source types and the back-translation type.
Specifically, we define the following shorthand for the back-translation:
\src{\inject{\tau} (e)} takes an expression \src{e} of type \src{\tau} and returns an expression of back-translation type; \src{\extract{\tau} (e)} takes an expression \src{e} of the back-translation type and returns an expression of type \src{\tau}.
\begin{align*}
  \src{\inject{\Nats} (e)}
    =&\
    \src{e+2}
  \\
  \src{\inject{\Bools} (e)}
    =&\
    \src{\ifte{e}{1}{0}}
  \\
  \src{\extract{\Nats} (e)}
  =&\
        \src{
          \left(\begin{aligned}
              \letins{\src{x}}{\src{e}}{
                \iftes{
                  \src{x\geq 2}
                }{
                  \src{x-2}
                }{
                  \fails
                }
            }
          \end{aligned}\right)
        }
  \\
  \src{\extract{\Bools} (e)}
  =&\
        \src{
          \left(\begin{aligned}
           &\letins{\src{x}}{\src{e}}
            {\iftes{\src{x \geq 2}}{\fails\\&\ }
              {\iftes{\src{x+1\geq 2}}{\trues}{\falses}}}
        \end{aligned}\right)}
\end{align*}
\src{\inject{\tau} (e)} never incurs runtime errors, but \src{\extract{\tau} (e)} may.
This mimics the ability of target contexts to write ill-typed code (\EG \trg{3+\truet}) which we must be able to back-translate and whose semantics we must preserve (see \Cref{ex:backtr}).

Concretely, the back-translation is defined inductively on the
structure of target contexts:
\iffull
For conciseness we omit the list of function interfaces
\src{I} needed for the \trg{\call{\cdot}} case (i.e., what the
back-translated context links against).
\jt{Try to clarify this case a  bit.}
\fi
\begin{align*}
  \backtrdt{\truet} 
    =&\ \src{1}
  &
  \backtrdt{\falset} 
    =&\ \src{0}
  &
  \backtrdt{\trg{n}} 
    =&\ \src{n+2}
  &
  \backtrdt{\trg{x}} 
    =&\ \src{x}
\end{align*}
\vspace*{-1.6em}
\begin{align*}
  \backtrdt{\trg{e \geq e'}} 
    =&\ \src{
      \begin{aligned}[t]
        &
        \letins{\src{x1}:\Nats}{\extract{\Nats}(\backtrdt{\trg{e}})
        \\
        &\
        }{\letins{\src{x2}:\Nats}{\extract{\Nats}(\backtrdt{\trg{e'}})
        \\
        &\ \ 
        }{\inject{\Bools}(\src{x1\geq x2})}}
      \end{aligned}
    }
  \\
  \backtrdt{\trg{e \op e'}} 
    =&\ \src{
      \begin{aligned}[t]
        &
        \letins{\src{x1}:\Nats}{\extract{\Nats}(\backtrdt{\trg{e}})
        \\
        &\
        }{\letins{\src{x2}:\Nats}{\extract{\Nats}(\backtrdt{\trg{e'}})
        \\
        &\ \ 
        }{\inject{\Nats}(\src{x1\op x2})}}
      \end{aligned}
    }
  \\
  \backtrdt{\trg{
      \letint{\trg{x}}{\trg{e}
      }{\trg{e'}}
  }} 
    =&\ \src{
        \letins{\src{x:\Nats}}{\backtrdt{\trg{e}}
        }{\backtrdt{\trg{e'}}}
      }
  \\
  \backtrdt{\trg{
    \left(\begin{aligned}
      &
      \iftet{\trg{e}}{
      \\
      &\
      \trg{e'}}{\trg{e''}}
    \end{aligned}\right)
    }} 
    =&\ \src{
        \iftes{\extract{\Bools}(\backtrdt{\trg{e}})
        }{
        \backtrdt{\trg{e'}}}{\backtrdt{\trg{e''}}}
    }
\\
  \backtrdt{\trg{e \checkty{\Boolt}}} =&\ \src{
          \letins{\src{x:\Nats}}{\backtrdt{\trg{e}}
           }{\src{\ifte{x \geq 2}{0}{1}}}
  }
  \\
  \backtrdt{\trg{e \checkty{\Natt}}} =&\ \src{
          \begin{aligned}
          &
          \letins{\src{x:\Nats}}{\backtrdt{\trg{e}}
         }{\src{\ifte{x \geq 2}{1}{0}}}
          \end{aligned}}
          \\
    \backtrdt{\trg{\call{f}~e}} 
    =&\ 
    \src{\inject{\tau'}(\call{f}~\extract{\tau}(\backtrdt{\trg{e}}))}
  \\
    &\
    \text{ if }\src{f:\tau\to\tau'}\in\src{\OB{I}}
  \\
  \backtrdt{\failt}
    =&\
    \fails 
\end{align*}

\begin{example}[Back-Translation]\label{ex:backtr}
  Through the back-translation of two simple target contexts we explain why \backtrdt{\cdot} is correct and why it needs \inject{\cdot} and \extract{\cdot}.
\iflater
  \rb{It would be nice to have a name for those two that we could use here.
      See question for Marco above.}
\fi

  Consider the context \trg{\ctxt_1 = 3 * 5}, which reduces to
        \trg{15} irrespective of the program it links against.
  The back-translation must intuitively ensure that \backtrdt{\ctxt_1} reduces to \src{17}, which is the back-translation of \trg{15}.
  If we unfold the definition of \backtrdt{\ctxt_1} we have the following (given that \backtrdt{3}=\src{5} and \backtrdt{5}=\src{7}):
  \begin{align*}
    \src{
      \begin{aligned}[t]
        &
        \letins{
          \src{x1}:\Nats
        }{
          \extract{\Nats}(5)
        \\
        &\
        }{
          \letins{
            \src{x2}:\Nats
          }{
            \extract{\Nats}(7)
          }{
            \inject{\Nats}(\src{x1* x2})
          }
        }
      \end{aligned}
    }
  \end{align*}
  By examining the code of \extract{\Nats} we see that in both cases it will just perform a subtraction by 2, turning \src{5} and \src{7} respectively into \src{3} and \src{5}.
  So after some reduction steps we arrive at the following term: \src{\inject{\Nats}(\src{3* 5})}.
  The inner multiplication then returns \src{15} and its injection returns \src{17}, which is also the result of \backtrdt{\trg{15}}.

  Let us now consider a different context, \trg{\ctxt_2 = \falset + 3}.
  We know that no matter what program links against it, it will reduce to \failt.
  Its statically well-typed back-translation is:
  \begin{align*}
    \src{
      \begin{aligned}[t]
        &
        \letins{
          \src{x1}:\Nats
        }{
          \extract{\Nats}(0)
        \\&\ 
        }{
          \letins{\src{x2}:\Nats}{\extract{\Nats}(7)}{\inject{\Nats}(\src{x1* x2})}
        }
      \end{aligned}
    }
  \end{align*}
  By looking at its code we can see that the execution of \src{\extract{\Nats}(0)} will indeed result in \fails, which is what we want and expect, as that is precisely the back-translation of \failt.
\end{example}


The \rrhp proof for this compilation chain uses a simple logical
relation that includes cases for both terms of source type
(intuitively used for compiler correctness) and for terms of
back-translation type~\cite{NewBA16,DevriesePP16}.

\subsection{Proof of \iffull Robust Finite-Relational Relaxed Safety
  Preservation\else \rfrxp \fi by Trace-Based Back-Translation}
\label{sec:instance-tracebased}


Proving that this simple compilation chain attains \pf{\rfrxp} does
not require back-translating a target context, as we only need to
build a source context that can reproduce a finite set of finite trace
prefixes, but that is not necessarily equivalent to the original
target context.  We describe this back-translation on an example
leaving again details to \ifcamera the online appendix\else
\autoref{app:sec:proofs-appendix}\fi.
\begin{example}[Back-Translation of Traces]\label{ex:backtr_traces}

  Consider the following two programs\iffull
    (whose interfaces are omitted for brevity)\fi:
  \begin{align*}
    \src{P_1} =&\ (\src{f(x{:}\Nats):\Nats \mapsto \retpaper x},
    \src{g(x{:}\Nats):\Bools \mapsto \retpaper \trues}) 
    \\
    \src{P_2} =&\ (\src{f(x{:}\Nats):\Nats \mapsto \retpaper \readexp},
    \src{g(x{:}\Nats):\Bools \mapsto \retpaper \trues})
  \end{align*}

  Their compiled counterparts are almost identical, with the only addition of dynamic type checks on function arguments:
  \begin{align*}
    \comptd{P_1} =&\ \trg{f(x) \mapsto \retpaper (\ifte{x \checkty{\Natt}}{x}{\failt})}, \\
    &\trg{g(x) \mapsto \retpaper (\ifte{x \checkty{\Natt}}{\truet}{\failt})} \\
    \comptd{P_2} =&\ \trg{f(x) \mapsto \retpaper (\ifte{x \checkty{\Natt}}{\readexp}{\failt})}, \\
    &\trg{g(x) \mapsto \retpaper (\ifte{x\checkty{\Natt}}{\truet}{\failt})}
  \end{align*}

\ifsooner
\ch{The ret syntax is silly}
\fi

  Now, consider the following target context:
  \begin{align*}
  \ctxt 
  = \trg{
    \begin{aligned}[t]
      &
      \letint{
        \trg{x1}
      }{
        \trg{\call{f}~5}
      \\&\ 
      }{
        \iftet{
          \trg{x1 \geq 5}
        }{
          \trg{\call{g}~(x1)}
        }{
          \trg{\call{g}~(\falset)}
        }
      }
    \end{aligned}
    }	
  \end{align*}

  The two programs plugged into this context can generate (at least) the following traces (where \(\termc\) indicates termination and \(\failactc\) indicates failure):
  \begin{align*}
    \trg{\ctxht{\comptd{P_1}}} &\sem \termc
    &
    \trg{\ctxht{\comptd{P_2}}} &\sem \rdl{5}; \termc
    &
    \trg{\ctxht{\comptd{P_2}}} &\sem \rdl{0}; \failactc
  \end{align*}
  In the execution of \trg{\ctxht{\comptd{P_1}}},
  the program executes completely and terminates,
  producing no side effects.
  In the first execution of \trg{\ctxht{\comptd{P_2}}}, the program reads \(5\),
  and the {\em then} branch of the context's conditional is executed.
  In the second execution of \trg{\ctxht{\comptd{P_2}}}, the program reads \(0\),
  the {\em else} branch of the context's conditional is executed
  and the program fails in \(\trg{g}\) after detecting a type error.

  \MP{ some shortening done below }%
  These traces alone are not enough to construct a source context
  since 
  they do not record information about the control flow of
  program executions, 
  specifically on which function produces which input or
  output.
  To recover this information 
  %
  we enrich execution prefixes with
  information about calls (from context to program) and returns (from program to context).
  The enriched rules on
  calls and returns now generate events to model these control flows.
  If a call or
  return occurs internally within the program, no trace event
  is generated 
  since they are 
  not relevant for back-translating the context.
%
  The revised semantics is almost identical to the original, and allows exactly
  the same program executions, only producing more informative traces.
  Hence, the original execution can be enriched in a valid way for the new semantics.
\ifsooner
  \rb{I know what you want to say here, but this reads hazily.
      What I understand here is this: the new semantics allows exactly the same
      set of program executions, but producing more informative traces, so any
      execution can be enriched while remaining valid.}
  \jt{Rephrased. Is it better now?}
\fi
  \begin{align*}
	\mi{Labels}~\bl{\lambda} \bnfdef&\ \dots \mid \bl{\beta}
  &
        \mi{Interactions}~\bl{\beta} \bnfdef&\ \bl{\clpaper{f}{v}} \mid \bl{\rtpaper{v}}
  \end{align*}

  The traces produced by the compiled programs plugged into the context become:
  \begin{align*}
    \trg{\ctxht{\comptd{P_1}}} &\sem \clpaper{f}{5}; \phantom{\rdl{5};} \rtpaper{5}; \clpaper{g}{5}; \rtpaper{\truec}; \termc\\
    \trg{\ctxht{\comptd{P_2}}} &\sem \clpaper{f}{5}; \rdl{5}; \rtpaper{5}; \clpaper{g}{5}; \rtpaper{\truec}; \termc\\
    \trg{\ctxht{\comptd{P_2}}} &\sem \clpaper{f}{5}; \rdl{0}; \rtpaper{0}; \clpaper{g}{\falsec}; \failactc
  \end{align*}

  In our languages, reads and writes can only be performed by
  programs, while the context only performs a sequence of calls to the
  program, possibly performing some computation and branching on return
  values.
  Thus, the role of the back-translated source is to perform
  the appropriate calls to the program, depending of the values returned.
  The inner workings of the programs, that is inputs, outputs, and internal
  calls and returns, are not a concern of the back-translation and are
  obtained through compiler correctness.
%
  %
\ifsooner
  \rb{The 'context' description reads a little vaguely. The last sentence
      could also use some additional exposition.}
  \jt{Tried rewriting a bit.}
\fi
  Furthermore, the context is shared by all executions, but each
  execution has its own program. Hence, since I/O occurs only in the
  program, the only source of variation among all executions come from
  the program.

  From this, one can conclude that the context is a deterministic
  expression, calling the program, and branching on the returned
  values. This can be seen in the way traces are organized: ignoring
  the I/O, the traces form a tree (\autoref{fig:backtr-trace}, on the
  left).
  This tree can be translated to a source context using nested
  conditionals as depicted below (\autoref{fig:backtr-trace}, on the
  right, dotted lines indicated what the back-translation generates
  for each action in the tree).
  When additional branches are missing (e.g., there is no third
  trace that analyzes the first return or no second trace that
  analyses the second return on the left execution), the
  back-translation inserts \fails in the code -- they are dead code
  branches (marked with a **).

\ifsooner
  \rb{On non-determinism: partly true, but also because I/O can only occur
      in the program. Assuming those are enriched traces. After this, refine
      the intuition a bit more.}
  \jt{Rewritten.}
\fi
 
  \myfig{\small
  \begin{center}
  \begin{tikzpicture}[remember picture]
  \node (root) {$\clpaper{f}{5}$}
  	child {
  		node[xshift = -1em,yshift = 1.5em] (1l) {$\rtpaper{5}$}
  		child{
  			node[yshift = 1.5em] (2l) {$\clpaper{g}{5}$}
  			child{
  				node[yshift = 1.5em] (3l) {$\rtpaper{\truec}$}
  				child{
  					node[yshift = 2.5em] (4l) {$\termc$}
  				}
  			}
  		}
  	}
  	child {
  		node[align=center,xshift = 1em,yshift = 1.5em] (1r) {\\ $\rtpaper{0}$}
  		child{
  			node[yshift = 1.5em] (2r) {$\clpaper{g}{\falsec}$}
  			child{
  				node[yshift = 2.5em] (3r) {$\failact$}
  			}
  		}
  	};

  	\node[align = left, right of= root, xshift = 11em, yshift= -5em] (code){ 
		\src{
  			\begin{aligned}[t]
  				&
  				\letins{\src{x}}{\src{\call{f}~{5}\tikz\node(croot){};}
  				\\ 
  				&\
  				}{
  				\begin{aligned}[t]
  					&
  					\iftes{\src{x == 5} \tikz\node(c1l){};
  					\\ 
  					&\ 
  					}{
  					\letins{\src{y}}{\src{\call{g}~{5}\tikz\node(c2l){};}}{
  					\\ 
  					&\ \ \qquad
            \iftes{\src{y==\trues}\tikz\node(c3l){};}{\src{0}\tikz\node(c4l){};
            \\
            &\ \ \qquad \ }{\fails\tikz\node(cextra2){};}
  					}
  					\\ 
  					&\
  					}{
	  				\begin{aligned}[t]
	  					&
	  					\iftes{\src{x==0}\tikz\node(c1r){}; 
	  					\\ 
	  					&\ 
	  					}{
	  					\fails\tikz\node(c2r){};
	  					}{\fails\tikz\node(cextra){};}	
	  				\end{aligned}
  					}
  				\end{aligned}
  				}
  			\end{aligned}
  		}
  	};
  	\draw[dotted] ([xshift=-1em]root.0) 
    -| ([xshift=-1em]croot.90);
  	\draw[dotted] ([xshift=1em]1l.90) |- ([yshift = .8em,xshift = 1em]c1l.0) |- ([xshift=-1em]c1l.0);
  	\draw[dotted] ([xshift=1em]2l.90) |- ([yshift = 1.5em,xshift = 2em]2l.90) -| ([xshift=-1em]c2l.90);
  	\draw[dotted] ([xshift=1em]3l.-90) |- ([yshift = -6em,xshift=-1em]c3l.-90) -| ([xshift=-1em]c3l.-90);
     \draw[dotted] (4l.0)  -| ([xshift=-1em]c4l.0); 

  	\draw[dotted] ([xshift=-1em]1r.0) -| ([xshift=-9em, yshift = .5em]c1r.90) -| ([xshift=-1em]c1r.90);
  	\draw[dotted] ([xshift=2em]2r.-90) |- ([yshift = .5em,xshift=-1em]c2r.90) -| ([xshift=-1em]c2r.90);
  	\draw[dotted] ([xshift=-.5em]3r.0) |- ([yshift = .5em,xshift=-1em]c2r.90) -| ([xshift=-1em]c2r.90);

  	\node[right of = cextra2] (cdest) {**};
  	\draw[dotted] ([xshift=-1.5em]cextra.90) -- (cdest);
  	\draw[dotted] ([xshift=-.5em]cextra2.0) -- (cdest);
.
  \end{tikzpicture}
  \end{center}
  }{backtr-trace}{Example of a back-translation of traces.}  

  To prove \rfrxp we show correctness of the back-translation, which
  ensures that the back-translated source context produces exactly the
  original non-informative traces.
  This is, however, not completely true of informative traces (that
  track calls and returns).
  Since calling \(\src{g}\) with a boolean is ill-typed, our
  back-translation shifts the failure from the program to the context,
  so the picture links $\clpaper{g}{\falsec}$ action to a
  \fails. The call is never executed at the source level.
\end{example}





\section{Related Work}
\label{sec:related}





\paragraph{Full Abstraction}
was originally used as a criterion for secure compilation in the
seminal work of \citet{Abadi99} and has since received a lot of
attention~\cite{MarcosSurvey}.
\citet{Abadi99} and, later, \citet{Kennedy06} identified
failures of full abstraction in the Java to JVM and C\# to CIL compilers, some of
which were fixed, but also others for which fixing was deemed
too costly compared to the perceived practical security gain.
%
\citet{AbadiFG02} proved full abstraction of secure channel
implementations using cryptography, but to prevent network traffic
attacks they had to introduce noise in their translation, which in
practice would consume network bandwidth.
%
Ahmed~\ETAL~\citeFull{AhmedB11, Ahmed15, NewBA16}{AhmedB08} proved the
full abstraction of type-preserving compiler passes for simple
functional languages.
%
\citet{AbadiP12} and
\citet{JagadeesanPRR11} expressed the protection
provided by address space layout randomization as a probabilistic
variant of full abstraction.
\citet{FournetSCDSL13} devised a fully abstract compiler from a subset
of ML to JavaScript.
Patrignani~\ETAL~\citeFull{PatrignaniASJCP15}{LarmuseauPC15,PatrignaniDP16}
studied fully abstract compilation to machine code, starting from
single modules written in simple, idealized object-oriented and
functional languages and targeting a hardware isolation mechanism
similar to Intel's SGX~\cite{sgx}.

Until recently, most formal work on secure interoperability with
linked target code was focused only on fully abstract compilation.
The goal of our work is to explore a
diverse set of secure compilation criteria, some of them formally
stronger than (the interesting direction of) full abstraction at least
in various determinate settings, and thus potentially harder to
achieve and prove, some of them apparently easier to achieve and prove than full
abstraction, but most of them not directly comparable to full abstraction.
This exploration clarifies the trade-off between security guarantees
and efficient enforcement for secure compilation:
On one extreme, \rtp robustly preserves only trace properties, but
does not require enforcing confidentiality; on the other extreme,
robustly preserving relational properties gives very strong
guarantees, but requires enforcing that both the private data and the
code of a program remain hidden from the context, which is often much
harder to achieve.
The best criterion to apply depends on the application domain, but our
framework can be used to address interesting design questions such as
the following:
{\em (1)~What secure compilation criterion, when violated, would the
developers of practical compilers be willing to fix at least in
principle?}
The work of \citet{Kennedy06} indicates that fully abstract
compilation is not such a good answer to this question, and we wonder
whether \rtp or \rhp could be better answers.
%
{\em (2)~What secure compilation criterion would the
translations of \citet{AbadiFG02} still satisfy if they did not
introduce (inefficient) noise to prevent network traffic analysis?}
\citet{AbadiFG02} explicitly leave this problem open in their
paper, and we believe one answer could be \rtp, since it does not
require preserving any confidentiality.

We also hope that our work can help eliminate common misconceptions
about the security guarantees provided (or not) by full abstraction.
For instance, \citet{FournetSCDSL13} illustrate the difficulty of
achieving security for JavaScript code using a simple example policy
that (1) restricts message sending to only correct URLs and (2)
prevents leaking certain secret data.
Then they go on to prove full abstraction apparently in the hope of
preventing contexts from violating such policies.
However, part (1) of this policy is a safety property and part (2) is
hypersafety, and as we showed in \autoref{sec:context-composition}
fully abstract compilation does not imply the robust preservation of
such properties.
In contrast, proving \rhsp would directly imply this, without putting
any artificial restrictions on code introspection, which are
unnecessarily required by full abstraction.
Unfortunately, this is not the only work in the literature that uses
full abstraction even when it is not the right hammer.


\iffull
Our exploration also forced us to challenge the assumptions and design
decisions of prior work. This is most visible in our attempt to use as
generic and realistic a trace model as possible.
To start, this meant moving away from the standard assumption in
the hyperproperties literature~\cite{ClarksonS10}
that all traces are infinite, and switching instead to a
trace model inspired by CompCert's~\cite{Leroy09} with both
terminating and non-terminating traces, and where non-terminating
traces can be finite but not finitely observable (to account for
silent divergence).
This more realistic model required us to find a class of trace
properties to replace liveness.
%
\ch{Don't think the assumption is properly explained there though}

\ch{There was also discussion about FA traces often being very
  impoverished (only one bit of information), while for us they are
  usually richer -- i.e. the richer they are the more properties we
  can state. Will probably need to use Git history to dig that up.}
\fi


\paragraph{Development of \rsp}
Two pieces of concurrent work have examined more carefully how to
attain and prove one of the weakest of our criteria, \rsp
(\autoref{sec:rsp}).
{}\citet{PatrignaniG18} show \rsp for compilers from simple sequential
and concurrent languages to capabilities~\cite{WatsonWNMACDDGL15}.
They observe that if the source language has a verification system for robust safety and
compilation is limited to verified programs, then \rsp can be
established without directly resorting to back-translation. (This observation has also been made independently by Dave Swasey in private
communication to us.)
%
%
\citet{AbateABEFHLPST18} aim at devising secure compilation
chains for protecting mutually distrustful components written in an
unsafe language like C. They show that by moving away from the full
abstraction variant used in earlier work~\cite{JuglaretHAEP16} to a
variant of our \rsp criterion from \autoref{sec:rsp}, they can support
a more realistic model of dynamic component compromise, while at the
same time obtaining a criterion that is easier to achieve and prove
than full abstraction.

\paragraph{Hypersafety Preservation}
The high-level idea of specifying secure compilation as the
preservation of properties and hyperproperties
goes back to the work of \citet{PatrignaniG17}. However, that work's
technical development is limited to one criterion---the preservation
of finite prefixes of program traces by compilation. Superficially,
this is similar to one of our criteria, \rhsp, but there are several
differences even from \rhsp. First, \citet{PatrignaniG17} do not
consider adversarial contexts explicitly. This might suffice for their
setting of closed reactive programs, where traces are inherently fully
abstract (so considering the adversarial context is irrelevant), but
not in general. Second, they are interested in designing a criterion
that accommodates specific fail-safe like mechanisms for low-level
enforcement, so the preservation of hypersafety properties is not
perfect, and one has to show, for every relevant property, that the
criterion is meaningful. However, \citet{PatrignaniG17} consider
translations of trace symbols induced by compilation, an extension
that would also be interesting for our criteria (\ifanon
as shown for \rtp in \autoref{app:sec:different-traces}
\else\autoref{sec:conclusion}\fi).

\paragraph{Proof techniques}
{}\citet{NewBA16} present a back-translation technique based on a
universal type embedding in the source for the purpose of proving full
abstraction of translations from typed to untyped languages. In 
\autoref{sec:instance-top-lattice} we adapted the same
technique to show \rrhp for a simple translation from a statically
typed to a dynamically typed language with first-order functions and
\iffull input-output\else I/O\fi.
\citet{DevriesePP16} show that even when a precise
universal type does not exist in the source, one can use an
approximate embedding that only works for a certain number of
execution steps.  They illustrate such an approximate back-translation
by proving full abstraction for a compiler from the simply-typed to
the untyped $\lambda$-calculus.

\citet{JeffreyR05\ifallcites, JeffreyR05b\fi} introduced a
``trace-based'' back-translation technique.
They were interested in proving full abstraction for so-called trace
semantics. This technique was then adapted to show
full abstraction of compilation chains to low-level target languages
\citeFull{PatrignaniASJCP15}{PatrignaniC15,PatrignaniDP16,AgtenSJP12}.
In 
\autoref{sec:instance-tracebased}, we showed how
these trace-based techniques can be extended to prove all the criteria
below \rfrxp in \autoref{fig:order}, which includes robust
preservation of safety, of noninterference, and in a determinate
setting also of observational equivalence.

%
While many other proof techniques have been previously
proposed~\citeFull{AbadiFG02, AhmedB11, AbadiP12, FournetSCDSL13,
JagadeesanPRR11}{AhmedB08}, proofs of full abstraction remain
notoriously difficult, even for simple translations, with
apparently simple conjectures surviving for decades before being
finally settled~\cite{DevriesePP18}. It will be interesting to
investigate which existing full abstraction techniques can be
repurposed to show the stronger criteria from \autoref{fig:order}.
%
For instance, it will be interesting to determine the strongest
criterion from \autoref{fig:order} for which an approximate
back-translation~\cite{DevriesePP16} can be used.
%



\paragraph{Source-level verification of robust satisfaction}

While this paper studies the \emph{preservation} of robust properties
in compilation chains, formally verifying that a partial source
program robustly satisfies a specification is a challenging problem
\iffull in itself\else too\fi.
So far, most of the research has focused on techniques for proving
observational equivalence~\citeFull{JeffreyR05,
  DelauneH17}{JeffreyR05b, AbadiBF18, ChevalKR18}\ifsooner\ch{bisimulations too?}\fi{}
or trace equivalence~\cite{BaeldeDH17, ChevalCD13}.
Robust satisfaction of trace properties has been model checked for
systems modeled by nondeterministic Moore machines and properties
specified by branching temporal logic~\cite{KupfermanV99}.
Robust safety, the robust satisfaction of safety properties, was
studied for the analysis of security
protocols~\citeFull{GordonJ04}{Backes:Hritcu:Maffei:11, Backes:Hritcu:Maffei:08bb},
and more recently for compositional verification~\cite{SwaseyGD17}.
Verifying the {\em robust} satisfaction of relational hyperproperties
beyond observational equivalence and trace equivalence seems to be an
open research problem.
For addressing it, one can hopefully take inspiration in extensions of
relational Hoare logic~\cite{benton04relational} for dealing with
cryptographic adversaries represented as procedures parameterized by
oracles~\cite{BartheDGKSS13}.


\paragraph{Other Kinds of Secure Compilation}
In this paper we investigated the various kinds of security guarantees
one can obtain from a compilation chain that protects the compiled
program against linked adversarial low-level code.
While this is an instance of {\em secure compilation}~\cite{dagstuhl-sc-2018},
this emerging area is much broader.
Since there are many ways in which a compilation chain can be ``more
secure'', there are also many different notions of secure compilation,
with different security goals and attacker models.
A class secure compilation chains is aimed at providing a ``safer''
semantics for unsafe low-level languages like C and C++, for instance
ensuring memory safety~\cite{NagarakatteMZ15, cheri_asplos2015, CheckedC}.
Other secure compilation work is targeted at closing down
side-channels: for instance by preserving the secret independence
guarantees of the source code~\cite{BartheGL18}, or making sure that
the code erasing secrets is not simply optimized away by the unaware
compilers~\cite{SimonCA18, BessonDJ18, DSilvaPS15, DengN18}.
Closer to our work is the work on building compartmentalizing
compilation chains~\cite{AbateABEFHLPST18, BessonBDJW19,
  WatsonWNMACDDGL15, GudkaWACDLMNR15} for unsafe languages like C and C++.
In particular, as mentioned above, \citet{AbateABEFHLPST18} have
recently showed how \rsp can be extended to express the security
guarantees obtained by protecting mutually distrustful components
against each other.

\iffull
\paragraph{Relation to provably correct compilers}
%
%
The closest related work is on ``compositional compiler correctness'',
but that work doesn't do security as it assumes that the linked
target-level context is well-behaved, for instance it behaves
like a source-level context.
\ch{If that stays there then just reiterate from S2.1?}
\fi

\section{Conclusion and Future Work}
\label{sec:conclusion}
\label{sec:future}

This paper proposes a foundation for secure interoperability
with linked target code by exploring many different criteria based on
robust property preservation (\autoref{fig:order}).
Yet the road to building {\em practical} secure compilation chains
achieving any of these criteria remains long and challenging.
Even for \rsp, scaling up to realistic programming languages and
efficiently enforcing protection of the compiled program without
restrictions on the linked context is
challenging~\cite{AbateABEFHLPST18, PatrignaniG18}.
For \rthsp the problem is even harder, because one also needs to
protect the secrecy of the program's data, which is especially
challenging in a realistic model in which the context can observe
side-channels like timing.
Here, an {\rtinip}-like property might be the best one can hope for in practice.
%

In this paper we assumed for simplicity that traces are exactly the
same in both the source and target language, and while this assumption
is currently true for other work like CompCert~\cite{Leroy09} as well,
it is a restriction nonetheless. We plan to lift this restriction in
the future.
\ifanon
In the appendix (\autoref{app:sec:different-traces}), we illustrate
how to lift this restriction for the \rtp, and in the future we hope
to generalize all other criteria in a similar way.
\fi

\iflater
\ch{from sets of traces/behaviors to probability distribution
-- seems too open ended and we didn't think that much about it,
also \citet{ClarksonS10} claim that this is already handled by
hyperproperties?}
\fi

\iflater
\ch{Just a thought: would it be conceivable to move from sets of
  traces to trees of traces? I guess that's only relevant for
  internally nondeterministic languages, and we anyway don't care too
  much about those? Still, could it help us relate to observational
  equivalence in the fully no-deterministic setting? How would one
  obtain the internal branching though ... isn't that something
  very intensional? Also have a look at Li-yao et al's CPP paper.}
\fi

\ifanon\else
\begin{acks}\small
We are grateful to
Akram El-Korashy,
Arthur Azevedo de Amorim,
\c{S}tefan Ciob\^{a}c\u{a},
Dominique Devriese,
Guido Mart\'inez,
Marco Stronati,
Dave Swasey,
\'Eric Tanter,
and the anonymous reviewers
for their valuable feedback
and in many cases also for participating in various discussions.
This work was in part supported
by the
\ifieee
\href{https://erc.europa.eu}{\iffull European Research Council\else ERC\fi}
\else
\grantsponsor{1}{European Research Council}{https://erc.europa.eu/}
\fi
under ERC Starting Grant SECOMP (\ifieee 715753\else\grantnum{1}{715753}\fi),
by the German Federal Ministry of
Education and Research (BMBF) through funding for the CISPA-Stanford
Center for Cybersecurity (FKZ: 13N1S0762), and
by DARPA grant SSITH/HOPE (FA8650-15-C-7558).
\end{acks}
\fi

\ifanon\clearpage\fi

\ifcamera\else
\onecolumn
\appendices

\setcounter{theorem}{0}
\renewcommand{\thetheorem}{\thesection.\arabic{theorem}}

\let\oldlabel\label
\renewcommand{\label}[1]{\oldlabel{app:#1}}
\let\oldautoref\autoref
\renewcommand{\autoref}[1]{\oldautoref{app:#1}}
\let\oldCref\Cref
\renewcommand{\Cref}[1]{\oldCref{app:#1}}
\let\oldref\ref
\renewcommand{\ref}[1]{\oldref{app:#1}}
\let\oldnameref\nameref
\renewcommand{\nameref}[1]{\oldnameref{app:#1}}

\newcommand{\LL}{\mathcal{L}}
\theoremstyle{definition}
\newcommand{\nequiv}{\not\equiv}
\renewcommand{\compgen}[1]{\cmp{#1}}

\jt{In some parts of the appendices, there is way too much space between paragraphs (appendix B for instance). I suspected wrong. Seems it has to do with LaTeX trying to fill the page completely. I added some ``clearpage'' to temporarily solve the issue.}
\section{Notations}\label{sec:not}

\newcommand*\circled[1]{\tikz[baseline=(char.base)]{
            \node[shape=circle,draw,inner sep=2pt] (char) {\ensuremath{#1}};}}
\newcommand*\termevent{\circled{\varepsilon}}

We use \src{blue, sans\text{-}serif} font for \src{\mi{source}}
elements, \trg{red, bold} font for \trg{\mi{target}} elements and
\com{\commoncol, italic} for elements common to both languages (to
avoid repeating similar definitions twice).  Thus, \src{P} is a
source-level program, \trg{P} is a target-level program and \com{P} is
generic notation for either a source-level or a target-level program.

\begin{align*}
  &\mi{Whole\ Programs} &W
  \\
  &\mi{Partial\ Programs} &P
  \\
  &\mi{Contexts} &C
  \\
  &\mi{Termination\ Events} &\varepsilon
  \\
  &\mi{Events} &e
  \\
  &\mi{Finite\ Trace\ Prefixes} &m &\triangleq
    \\
    &  \quad \mi{(terminated)}
    && e_1 \cdot \cdots \cdot e_n \termevent
    \\
    &  \quad \mi{(not\ yet\ terminated)}
    && e_1 \cdot \cdots \cdot e_n \circ
  \\
 &\mi{relaXed\ Trace\ Prefixes} &x &\triangleq
    \\
    &  \quad \mi{(terminated)}
    && e_1 \cdot \cdots \cdot e_n \termevent
    \\
    &  \quad \mi{(not\ yet\ terminated)}
    && e_1 \cdot \cdots \cdot e_n \circ
    \\ 
    &  \quad \mi{(silent\ divergence)}
    && e_1 \cdot \cdots \cdot e_n \sdiv     
  \\
  &\mi{Traces} &t &\triangleq
    \\
    &  \quad \mi{(program\ termination)}
    && e_1 \cdot \cdots \cdot e_n \termevent
    \\
    &  \quad \mi{(silent\ divergence)}
    && e_1 \cdot \cdots \cdot e_n \sdiv
    \\
    &  \quad \mi{(infinitely\ reactive)}
    && e_1 \cdot \cdots \cdot e_n \cdot \cdots
  \\
  &\mi{Prefix\ relation} &&m \leq t
  \\
  &\mi{The\ set\ of\ all\ traces} &&\com{Trace}
  \\
  &\mi{The\ set\ of\ all\ finite\ trace\ prefixes} &&\com{FinPref}
  \\
  &\mi{The\ set\ of\ all\ relaxed\ trace\ prefixes} &&\com{XPref}
  \\
  &\mi{Semantics\ of\ W} &&W \sem t
  \\
        &\mi{Behavior\ of\ W}
    &\behavc{W} 
    &= \myset{ t }{ W \sem t }
  \\
  &\mi{Set\ with\ elements\ from\ }X &&\quad~2^X
  \\
  &\mi{Set\ of\ size\ }K\mi{\ with\ elements\ from\ }X &&\quad~2_K^X
        \\
  &\mi{Set\ literal} &\set{ x }
    &\triangleq \{ x_1, x_2, \cdots \}
  \\
  &\mi{Property} &\pi&\in 2^\com{Trace}
  \\
  &\mi{Behavior (the\ set\ of\ traces\ of\ a\ program)} &b&\in 2^\com{Trace}
  \\
  &\mi{Hyperproperty} &H&\in 2^{2^{\com{Trace}}}
  \\
  &\mi{Cardinality} &\card{\cdot}
\end{align*}

In addition to trace-based, whole-program semantics,
we also define a finite prefix-based semantics,
\( \com{W} \sem m\) as \( \exists t\geq m\ldotp \com{W}
\sem t\). The notations introduced for finite prefixes (\(\leq\), \(\geq\), \(\sem\), etc.)
are used not only for finite trace prefixes, but also for relaxed trace prefixes.


\clearpage
\section{Safety and Dense Properties with Event-Based Traces}
\label{sec:traces-details}


We start by presenting the CompCert-inspired model for program
execution traces we use in this work (\autoref{sec:traces}).
In this model {\em safety properties} are defined in the standard way
as the trace properties that can be falsified by a finite trace prefix
(\oldautoref{sec:rsp}).
Perhaps more surprisingly, in this trace model the role generally
played by liveness is taken by what we call {\em dense properties},
which we define simply as the trace properties that can only be
falsified by non-terminating traces (\EG a reactive program that runs
forever eventually answers every network request it receives).
Next, to validate the claim that dense properties indeed play \iffull
in our model \fi the same role that liveness plays in previously
proposed trace models~\cite{AlpernS85, LamportS84,
  schneider1997concurrent, PasquaM17}, we prove several related properties
(\autoref{sec:dense-in-depth}), including the fact that every trace
property is the intersection of a safety property and a dense property
(this is our variant of a standard decomposition
result~\cite{AlpernS85}), and the fact that our definition
of dense properties is unique (\autoref{sec:dense-in-depth}).
Finally, we study the robust preservation of dense properties
(\rdp; \autoref{sec:rdp}).

\subsection{Event-Based Trace Model for Safety and Liveness}
\label{sec:traces}



For defining safety and liveness, traces need a bit of structure, and
for this we use a variant of CompCert's {\em realistic} trace
model~\cite{Leroy09}.\footnote{Our trace model is close to that of
  CompCert, but as opposed to CompCert, in this paper we use the word
  ``trace'' for the result of a single program execution and later
  ``behavior'' for the set of all traces of a program
  (\oldautoref{sec:hyperprop}).}
This model is different from the trace
models generally used for studying safety and liveness of reactive
systems~\citeFull{AlpernS85, LamportS84, schneider1997concurrent,
  ClarksonS10}{lamport2002specifying, manna2012temporal}
(\EG in a transition system or a process calculus).
A first important difference is that in CompCert's model, traces are
built from events, not from states.
This is important for efficient compilation, since taking these events
to be relatively coarse-grained gives the compiler more freedom to perform
program optimizations.
For instance, CompCert is inspired by the C programming language
standard and defines the outcome of the program to be a trace of all
I/O and volatile operations it performs, plus an indication of whether
and how it terminates.
%

\ifsooner
\ch{If we added ``final events'' we could make a stronger claim here
  that the CompCert trace model is an instance / variant of what we do.}
\rb{We should be ready for the stronger claim. Main body, here?}
\fi

The {\em events} in our traces are drawn from an arbitrary nonempty
set\iffull (and a few of our results require at least 2 events\ch{this is a detail,
  and unclear how this will change when we have ``final events''})\fi.
Intuitively, traces $t$ are finite or infinite lists of events, where
a finite trace means that the program terminates (possibly with some
related information recording the cause of termination, such as an
exit code) or enters an
unproductive infinite loop after producing all the events in the list.
%
This kind of trace model is natural for usual programming languages
where most programs do indeed terminate
%
%
and is standard for formally correct compilers~\cite{Leroy09, KumarMNO14}.
It is different, however, from the trace model usually considered for
abstract modeling of reactive systems, which considers only infinite
traces~\citeFull{schneider1997concurrent,
  ClarksonS10}{manna2012temporal, lamport2002specifying}
and where a common trick to force all traces to be infinite
is to use stuttering on the final state of an execution
to represent termination~\cite{ClarksonS10}.
In our model, however, events are observable and infinitely repeating the
last event would result in a trace of a non-terminating execution, so
we have to be honest about the fact that terminating executions
produce finite traces.
%
%
Moreover, working with traces of events also means that execution
steps can be silent (i.e., add no events to the trace) and one has to
distinguish termination from silent divergence (a non-terminating execution),
although both of them produce a finite number of events.
So in our model terminating traces are those that end in an explicit
termination event and can thus no longer be extended; all other
traces, whether silently divergent or infinite, are non-terminating.
The proper treatment of program termination and silent divergence 
distinguishes the realistic trace model we use here from previous
theoretical work that extends safety and liveness to finite and
infinite traces~\cite{Rosu12, PasquaM17}.

Using this realistic trace model directly impacts the meaning of
safety, which we try to keep as standard and natural as possible, and
also created the need for a new definition of \emph{dense properties}
to take the place of liveness. 
\ca{I do not completely agree with ``.. impacts the meaning of Safety''.
    I think the meaning of Safety is finitely refutable and  is not changed. }


%


\paragraph{Safety Properties}

The main component of the characterization of safety properties is a definition of
\emph{finite} trace prefixes, which capture the finite observations that can
be made about an execution, for instance by a reference monitor.
We take the stance that a reference monitor can observe that the
program has terminated.
To reflect this, in our trace model finite trace prefixes are lists of
events in which it is observable whether a prefix is terminated and
can no longer be extended, or whether it is not yet terminated and can still
be extended with further events.
Moreover, while termination and silent
divergence are two different terminal trace events, no monitor can
distinguish between the two in finite time, since one cannot tell whether a
program that seems to be looping will {\em eventually} terminate.
Technically, in our model finite trace prefixes $m$ are lists with two
different final constructors: \circled{\varepsilon} for a prefix
terminated with final event $\varepsilon$ (which for instance
distinguishes successful from erroneous termination) and $\circ$ for
not yet terminated prefixes.
In contrast, traces can end either with \circled{\varepsilon}
if the program terminates or with $\sdiv$ if the program silently diverges,
or they can go on infinitely.
The prefix relation $m \leq t$ is defined between a finite prefix $m$
and a trace $t$ according to the intuition above:
$\circled{\varepsilon} \leq \circled{\varepsilon}$,
$\circ \leq t$, and
$e \cdot m' \leq e \cdot t'$ whenever $m' \leq t'$
(where $\cdot$ is concatenation).

The definition of safety properties is then unsurprising
(as already seen in \oldautoref{sec:rsp}):
\[
\ii{Safety} \triangleq \{\pi \in 2^\ii{Trace} ~|~ \forall t \not\in \pi.~
  \exists m \leq t.~ \forall t' \geq m.~ t' \not\in \pi \}\\
  \]

A trace property $\pi$ is \ii{Safety} if, within any trace $t$ that
violates $\pi$, there exists a finite ``bad prefix'' $m$ that can
only be extended to traces $t'$ that also violate $\pi$.
%
\begin{example}
For instance, the trace property
$\pi_{\square\lnot e} = \{ t ~|~ e \not\in t \}$, stating that the bad event
$e$ never occurs in the trace, is \ii{Safety}, since for every trace
$t$ violating $\pi_{\square\lnot e}$ there exists a finite prefix
$m = m' {\cdot} e {\cdot} \circ$ (some prefix $m'$ followed by $e$ and
then by the unfinished prefix symbol $\circ$) that is a prefix of
$t$, and every trace extending $m$ still contains $e$, so it
continues to violate $\pi_{\square\lnot e}$.
\end{example}


\begin{example}
Consider the property $\pi_{\square\lnot \varepsilon} = \{ t ~|~ \forall \varepsilon. \termevent \not\in t \}$
that rejects all terminating traces and accepts all non-terminating traces.
This is a safety property, the justification of which crucially relies on allowing $\termevent$ in the finite
trace prefixes.
For any finite trace $t = e_1 \cdot \ldots \cdot e_n \termevent$ rejected by $\pi_{\square\lnot \varepsilon}$, there
exists a bad prefix $m = e_1 \cdot \ldots \cdot e_n  \termevent$ such that all
extensions of $m$ are also rejected by $\pi_{\square\lnot \varepsilon}$.
This last condition is trivial since the prefix $m$ is terminating (i.e., ends
with $\termevent$) and can thus only be extended to $t$ itself.
\end{example}

\begin{example}
The trace property
$\pi_{\lozenge e}^\ii{term} = \{ t ~|~ t~\ii{terminating} \Rightarrow e \in t \}$
states that in every {\em terminating} trace the event $e$ must eventually
happen. This is also a safety property in our model, since for each terminating
trace $t = e_1 \cdot \ldots \cdot e_n \termevent$ violating $\pi_{\lozenge e}^\ii{term}$
there exists a bad prefix $m = e_1 \cdot \ldots \cdot e_n \termevent$ that can only
be extended to traces that also violate $\pi_{\lozenge e}^\ii{term}$, \IE only to $t$ itself.
\end{example}

Generally speaking, all trace properties (like $\pi_{\square\lnot \varepsilon}$ and
$\pi_{\lozenge e}^\ii{term}$) that only reject terminating traces and therefore
allow all non-terminating traces are safety properties in our model. That is, if
$\forall t~\ii{non\text{-}terminating}.~ t \in \pi$, then $\pi$ is a safety property.
%
Consequently, for any property $\pi$, the derived trace property
$\pi_S = \pi \cup \{t ~|~ t~\ii{non\text{-}terminating} \}$ is a safety property.

\paragraph{Dense Properties}
In our trace model the liveness definition of
\citet{AlpernS85} does not have its intended intuitive
meaning, so instead we focus on the main properties that the Alpern
and Schneider liveness definition satisfies in the infinite state-based trace
model and, in particular, that each trace property can be decomposed as
the intersection of a safety property and a liveness property.
We discovered that in our model the following \iffull surprisingly \fi simple
notion of {\em dense properties} satisfies \iffull all \fi the characterizing
properties of liveness and is, in fact, uniquely determined by these
properties\iffull and the definition of safety above \fi:
%
\[
\ii{Dense} \triangleq \{\pi \in 2^\ii{Trace} ~|~
  \forall t~\ii{terminating}.~ t \in \pi \}
\]
We say that a trace property $\pi$ is \ii{Dense} if it allows all
terminating traces; or, conversely, it can only be violated by
non-terminating traces.
For instance, the property
$\pi_{\lozenge e}^{\lnot\ii{term}} = \{ t ~|~ t~\ii{non\text{-}terminating} \Rightarrow e \in t \}$,
stating that the \iffull good \fi event $e$ will eventually happen along every
non-terminating trace is a dense property, since it accepts all terminating traces.
The property
$\pi_{\square\lnot\sdiv} = \{ t ~|~ \sdiv \not\in t \}
= \{ t ~|~ t~\ii{non\text{-}terminating} \Rightarrow t~\ii{infinite} \}$
stating that the program does not silently diverge is also dense.
Again, more examples are given below:


\begin{example}
The property $\pi_{\square\lozenge e}^{\lnot\ii{term}} =
\{ t ~|~ t~\ii{non\text{-}terminating} \Rightarrow
  t~\ii{infinite} \wedge \forall m \leq t.~ \exists m'.~ m {\cdot} m' {\cdot} e {\leq} t \}$
states that event $e$ happens infinitely often in
any non-terminating trace.
%
Because it allows all terminating traces, it is a dense property.
\end{example}

\begin{example}
The property
$\pi_{\lozenge \varepsilon} = \{ t ~|~ t~\ii{terminating} \} = \{ t ~|~ \exists \varepsilon. \termevent \in t \}$
contains exactly all terminating traces and
rejects all non-terminating
traces. It is therefore the minimal dense property of our trace model.
\end{example}

Trivially, any property becomes dense in our model if we modify it to
accept all terminating traces. That is, given any property $\pi$,
the derived $\pi_L = \pi \cup \{ t ~|~ t~\ii{terminating} \}$ is
dense.

\begin{example}
Take the safety property $\pi_{\square \lnot e} = \{ t ~|~ e \not\in t \}$,
which forbids an event $e$ from appearing in traces.
The modified dense property $\pi_{\square \lnot e}^{\lnot\ii{term}}$
states that event $e$ never occurs along non-terminating traces:
$\pi_{\square \lnot e}^{\lnot\ii{term}} = \{ t ~|~ t~\ii{non\text{-}terminating} \Rightarrow
  e \not\in t \}$.
\end{example}

%

\subsection{Theory of Dense Properties}
\label{sec:dense-in-depth}

We have proved that our definition of dense properties satisfies the main
properties of Alpern and Schneider's related concept of liveness~\cite{AlpernS85},
including its topological characterization, and in particular the following fact.

\begin{theorem}\label{thm:tp-safety-cap-dense}
Any trace property can be decomposed into the intersection of a safety
property and of a dense property (\CoqSymbol):
$\forall \pi. \exists \pi_S \in \ii{Safety}. \exists \pi_D \in \ii{Dense}. \pi = \pi_S \cap \pi_D$.
\end{theorem}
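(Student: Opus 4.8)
The plan is to reuse the two ``padding'' constructions already introduced in \autoref{sec:traces}. Given an arbitrary trace property $\pi$, define
\[
\pi_S = \pi \cup \{t \mid t~\ii{non\text{-}terminating}\}
\qquad\text{and}\qquad
\pi_D = \pi \cup \{t \mid t~\ii{terminating}\}.
\]
As noted there, $\pi_D$ is exactly the derived property $\pi_L$, which allows every terminating trace and is therefore in $\ii{Dense}$ by definition, while $\pi_S$ is a safety property: for any $t \notin \pi_S$ we have $t \notin \pi$ and $t$ terminating, so taking the bad prefix $m$ to be the terminated finite prefix that coincides with $t$, the only trace $t' \geq m$ is $t$ itself, which indeed lies outside $\pi_S$. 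Hence the two candidate properties are of the required classes, and it remains only to verify the set-theoretic identity $\pi_S \cap \pi_D = \pi$.

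For this, write $N = \{t \mid t~\ii{non\text{-}terminating}\}$ and $T = \{t \mid t~\ii{terminating}\}$. In our trace model every trace is either terminating or non-terminating and none is both, so $N$ and $T$ partition $\ii{Trace}$ and in particular $N \cap T = \emptyset$. By distributivity of intersection over union,
\[
\pi_S \cap \pi_D = (\pi \cup N) \cap (\pi \cup T) = \pi \cup (N \cap T) = \pi \cup \emptyset = \pi,
\]
which is the desired decomposition.

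The genuinely substantive content here is not this one-line algebraic identity but the two imported facts from \autoref{sec:traces}, and I expect the $\pi_S \in \ii{Safety}$ direction to be the only delicate point: it relies crucially on the realistic trace model making termination observable, so that a terminating violating trace possesses a \emph{terminated} finite prefix that cannot be extended to any other trace at all. The choice of $N$ and $T$ as the padding sets is in fact forced by the definitions --- to make the union with $\pi$ land in $\ii{Safety}$ one must add precisely the non-terminating traces, and dually for $\ii{Dense}$ --- and their disjointness is exactly what makes the intersection collapse back to $\pi$. This mirrors the classical safety/liveness decomposition of Alpern and Schneider~\cite{AlpernS85}, with dense properties playing the role of liveness.
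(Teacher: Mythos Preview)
Your proof is correct and matches the paper's own argument essentially line for line: the paper defines the same $\pi_S = \pi \cup \{t \mid t~\ii{non\text{-}terminating}\}$ and $\pi_D = \pi \cup \{t \mid t~\ii{terminating}\}$, appeals to the earlier discussion for membership in $\ii{Safety}$ and $\ii{Dense}$, and concludes with the same distributivity calculation. If anything, your version is slightly more explicit in spelling out why $\pi_S$ is safety and why $N \cap T = \emptyset$.
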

%

\begin{proof}
The proof of this decomposition theorem is in fact very simple in our model.
Given any trace property $\pi$, define
$\pi_S = \pi \cup \{t ~|~ t~\ii{non\text{-}terminating} \}$ and
$\pi_D = \pi \cup \{t ~|~ t~\ii{terminating} \}$.
As discussed above, $\pi_S \in \ii{Safety}$ and $\pi_D \in \ii{Dense}$.
Finally, $\pi_S \cap \pi_L =
(\pi \cup \{t ~|~ t~\ii{non\text{-}terminating} \}) \cap
(\pi \cup \{t ~|~ t~\ii{terminating} \}) = \pi$.
\end{proof}

\begin{example}
In our trace model, the property
$\pi_{\lozenge e} = \{ t ~|~ e \in t \}$ is neither
safety nor dense. However, it can be decomposed as the intersection of
$\pi_{\lozenge e}^\ii{term}$ (a safety property)
and $\pi_{\lozenge e}^{\lnot\ii{term}}$ (a dense property).
\end{example}
Concerning the relation between dense properties
and the liveness definition of \cite{AlpernS85},
the two are in fact equivalent {\em in our model}, but this seems to
be a coincidence and only happens because Alpern and Schneider's
definition completely loses its original intent in our model, as the
following theorem and simple proof suggests.

\begin{theorem}
$\forall \pi {\in} 2^\ii{Trace}.~
\pi {\in} \ii{Dense} {\iff} \forall m. \exists t. m {\leq} t \wedge t {\in} \pi$
\Coqed
\end{theorem}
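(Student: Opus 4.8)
The plan is to prove the two implications separately, in both cases exploiting the special structure of the finite trace prefixes in this model---specifically the distinction between terminated prefixes (ending in $\termevent$) and not-yet-terminated ones (ending in $\circ$). Neither direction requires the earlier decomposition theorem; both reduce to the behaviour of the prefix relation $\leq$ on the two kinds of final constructors.

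For the forward direction I would assume $\pi \in \ii{Dense}$, fix an arbitrary finite prefix $m$, and exhibit a \emph{terminating} trace $t$ with $m \leq t$; density then immediately gives $t \in \pi$. The construction of $t$ proceeds by case analysis on the final constructor of $m$. If $m = e_1 \cdots e_n \termevent$ is already terminated, I take $t = e_1 \cdots e_n \termevent$, which is a terminating trace and satisfies $m \leq t$ by the base case $\termevent \leq \termevent$ iterated through the shared events. If instead $m = e_1 \cdots e_n \circ$ is not yet terminated, I append an arbitrary (fixed) termination event $\varepsilon$ to obtain the terminating trace $t = e_1 \cdots e_n \termevent$; then $m \leq t$ follows from the rule $\circ \leq t'$ lifted through the common events $e_1, \ldots, e_n$.

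For the backward direction I would assume the liveness-style condition and show every terminating trace lies in $\pi$. Given a terminating trace $t_0 = e_1 \cdots e_n \termevent$, consider its canonical terminated prefix $m_0 = e_1 \cdots e_n \termevent$. Instantiating the hypothesis at $m_0$ yields some $t$ with $m_0 \leq t$ and $t \in \pi$. The crux is the observation that a terminated prefix has a \emph{unique} extension: since the prefix relation only relates $\termevent$ to $\termevent$ and never lets a $\termevent$ constructor be followed by further events, the sole trace $t$ satisfying $m_0 \leq t$ is $t_0$ itself. Hence $t = t_0 \in \pi$, establishing $\pi \in \ii{Dense}$.

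The proof itself is short; the conceptual point---and the reason the statement is worth recording---is precisely this uniqueness of extensions for terminated prefixes. It is this feature of the realistic trace model, absent from the purely-infinite-trace models of Alpern and Schneider, that makes the classical liveness condition collapse onto the much simpler notion of density. I anticipate no real obstacle beyond carefully tracking the two prefix constructors and confirming that the set of termination events is nonempty, so that the completion step in the forward direction is always available.
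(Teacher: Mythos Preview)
Your proposal is correct and follows essentially the same approach as the paper's proof: in the forward direction you complete $m$ to a terminating trace by replacing a trailing $\circ$ with some $\termevent$, and in the backward direction you take the terminating trace itself as the prefix and use that its only extension is itself. The paper's version is terser (it folds your case split into ``replace any final $\circ$ with $\termevent$''), but the argument and the key observation about unique extensions of terminated prefixes are identical.
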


%
%
\begin{proof}
We will prove each of the directions in turn.

To show the $\Rightarrow$ direction, take some
$\pi \in \ii{Dense}$ and some finite prefix $m$. We can construct
$t_{m\varepsilon}$ from $m$ by simply replacing any final $\circ$ with $\termevent$,
for some designated $\varepsilon$.
By definition $m {\leq} t_{m\varepsilon}$ and moreover,
since $t_{m\varepsilon}$ is terminating and
$\pi \in \ii{Dense}$, we can conclude that $t \in \pi$.

To show the $\Leftarrow$ direction, take some
$\pi \in 2^\ii{Trace}$ and some terminating trace $t$; since $t$ is terminating
we can choose $m = t$ and since this finite prefix extends only to $t$
we immediately obtain $t \in \pi$.
\end{proof}
\iflater
\ch{TODO Extend this trace model with ``final events''
  that can only appear at the end of the trace (see {\tt
    coq/TODO.org}) for details. This is important for the story of
  this section (relation to CompCert) and our instance from 6.
  For now only did silent divergence right!}
\fi
We now show that our definition of dense properties is uniquely
determined given the trace model, the definition of safety, and three
conditions (see \Cref{thm:densethm}) usually satisfied by the class of
liveness properties \cite{AlpernS85}.  The key idea consists in
looking at safety properties from a topological point of view
\cite{AlpernS85, ClarksonS10}.  Conditions in \Cref{thm:densethm}
provide a characterization of another topological class of interest,
that is shown to be exactly the class we called \textit{Dense}
(\Cref{thm:densedense}).

\begin{definition}[Trace Topology \cite{AlpernS85, ClarksonS10}]
 $\mathcal{A}$ is the topology on $\ii{Trace}$ defined by
 its closed set being all and only the Safety properties.  
\end{definition}

\begin{theorem} \label{thm:densethm} Let $X \subseteq 2^{\ii{Trace}}$ such that  
\begin{align*}
   \ii{i)} & ~\ii{Safety} \cap X = \{ \ii{True} \}  & (\emph{trivial intersection})
  \\
   \ii{ii)} &  ~\forall \pi \in 2^{\ii{Trace}}. ~\exists S \in \ii{Safety} ~\exists x \in X. ~ \pi = S \cap x & (\emph{decomposition})
  \\
   \ii{iii)} & ~ \forall x_1 x_2 \in X. ~\forall S \in \ii{Safety}. ~ x_1 = x_2 \cap S \Rightarrow x_2 = x_1 \wedge S = \ii{True} & (\emph{unique decomposition for $X$})
\end{align*}

Then $X$ is the class of the dense sets in $\mathcal{A}$.  
\end{theorem}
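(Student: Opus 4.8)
The plan is to read the three conditions purely topologically, exploiting the given topology $\mathcal{A}$ on $\ii{Trace}$ whose closed sets are exactly the safety properties and in which $\ii{True} = \ii{Trace}$ is the whole space. Writing $\overline{\pi}$ for the closure of a trace property $\pi$ (equivalently, the least safety property containing $\pi$), the goal is to show that the only class $X$ satisfying (i)--(iii) is the class $\mathcal{D}$ of topologically dense sets, i.e.\ those $D$ with $\overline{D} = \ii{Trace}$. Since the conclusion is literally $X = \mathcal{D}$, the argument is a double inclusion, and its engine is the classical safety/liveness decomposition $\pi = \overline{\pi} \cap (\pi \cup \overline{\pi}^{\,c})$, which I will reuse in the special case $\pi = x$ for $x \in X$.

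First I would prove $\mathcal{D} \subseteq X$ using only the decomposition axiom (ii). Given a dense $D$, condition (ii) yields some $S \in \ii{Safety}$ and $x \in X$ with $D = S \cap x$; then $D \subseteq S$, and because $S$ is closed we get $\ii{Trace} = \overline{D} \subseteq \overline{S} = S$, forcing $S = \ii{True}$ and hence $D = \ii{True} \cap x = x \in X$. In words: every dense set is automatically the ``dense half'' of its own decomposition, so it must already live in $X$.

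Then I would prove $X \subseteq \mathcal{D}$ using the uniqueness axiom (iii). For $x \in X$ set $D_x = x \cup \overline{x}^{\,c}$. The key lemma is that $D_x$ is dense: its complement is $\overline{x} \setminus x$, which has empty interior, since any nonempty open set $U \subseteq \overline{x}$ must meet $x$ by the very definition of closure (a point of $\overline{x}$ has every open neighborhood meeting $x$). Hence $D_x \in \mathcal{D} \subseteq X$ by the previous step. A direct set computation gives $x = \overline{x} \cap D_x$, so instantiating (iii) with $x_1 = x$, $x_2 = D_x$, and $S = \overline{x} \in \ii{Safety}$ forces $\overline{x} = \ii{True} = \ii{Trace}$, i.e.\ $x$ is dense and $x \in \mathcal{D}$. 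Combining the two inclusions yields $X = \mathcal{D}$.

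The only genuine obstacle is the topological lemma that $x \cup \overline{x}^{\,c}$ is always dense; everything else is bookkeeping about closures together with the identifications $\ii{True} = \ii{Trace}$ and ``closed $\cap$ dense $=$ whole space''. It is worth flagging that the argument never invokes the trivial-intersection condition (i): axioms (ii) and (iii) already pin down $X = \mathcal{D}$, so (i) is in fact a consequence (the only dense safety property is $\ii{True}$) rather than an independent requirement. To confirm the theorem is non-vacuous I would separately check that $\mathcal{D}$ itself satisfies (i)--(iii), which is routine from the same closure facts, while the remaining link---that this topological $\mathcal{D}$ coincides with the concrete class $\ii{Dense}$ of properties allowing all terminating traces---is exactly what the companion characterization theorem establishes.
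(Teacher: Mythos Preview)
Your argument is correct. The paper itself does not give a written proof of this theorem: it simply points to the Coq development (file \texttt{TopologyTrace.v}, theorem \texttt{X\_dense\_class}), so there is no textual proof to compare against line by line. That said, your route is the natural topological one and almost certainly mirrors the mechanized argument: the engine is the standard closure decomposition $x = \overline{x} \cap (x \cup \overline{x}^{\,c})$, which is precisely the Alpern--Schneider safety/liveness split read in the topology $\mathcal{A}$, together with the elementary fact that $\overline{x} \setminus x$ always has empty interior.

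Your observation that condition (i) is never invoked---and is in fact a consequence of (ii) and (iii)---is correct and worth keeping: once $X = \mathcal{D}$, the only closed dense set is the whole space, so $\ii{Safety} \cap X = \{\ii{True}\}$ follows automatically. The paper does not flag this redundancy.
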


\begin{proof} See file TopologyTrace.v, Theorem X\_dense\_class.
\end{proof}

\begin{theorem} \label{thm:densedense} \textit{Dense} is the class of the dense sets
                in the topology $\mathcal{A}$.  
\end{theorem}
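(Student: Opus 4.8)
The plan is to prove \Cref{thm:densedense} by instantiating \Cref{thm:densethm} with $X = \ii{Dense}$: since that theorem asserts that \emph{any} class satisfying conditions (i)--(iii) coincides with the dense sets of $\mathcal{A}$, it suffices to verify the three conditions for \textit{Dense}. Condition (ii), the decomposition $\pi = S \cap x$ with $S \in \ii{Safety}$ and $x \in \ii{Dense}$, is already available for free: it is exactly \Cref{thm:tp-safety-cap-dense}, whose proof takes $\pi_D = \pi \cup \{t \mid t~\ii{terminating}\}$. So the genuinely new work is conditions (i) and (iii), both of which I would reduce to a single termination-extension observation.

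First I would dispatch the trivial-intersection condition (i), $\ii{Safety} \cap \ii{Dense} = \{\ii{True}\}$. The inclusion $\supseteq$ is immediate, as $\ii{Trace}$ is closed (hence safety) and contains every terminating trace (hence dense). For $\subseteq$, suppose $\pi$ is both safety and dense but $\pi \neq \ii{Trace}$, so some $t \notin \pi$; density forces $t$ to be non-terminating, and safety yields a bad prefix $m \leq t$ all of whose extensions avoid $\pi$. Because $t$ is non-terminating, $m$ must end in the unfinished marker $\circ$, so — exactly as in the $\Rightarrow$ direction of the preceding characterization theorem — I extend $m$ to a terminating trace $t_{m\varepsilon}$ by replacing $\circ$ with $\termevent$. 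Then $t_{m\varepsilon} \geq m$ is terminating yet outside $\pi$, contradicting density; hence $\pi = \ii{Trace}$.

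For the unique-decomposition condition (iii), I would take $x_1, x_2 \in \ii{Dense}$ and $S \in \ii{Safety}$ with $x_1 = x_2 \cap S$ and first show $S = \ii{True}$. If not, a trace $t \notin S$ gives (by safety) a bad prefix whose terminating extension $t_{m\varepsilon}$ lies outside $S$, hence outside $x_1 = x_2 \cap S$; but $t_{m\varepsilon}$ is terminating and $x_1$ is dense, so $t_{m\varepsilon} \in x_1$, a contradiction. With $S = \ii{Trace}$ we conclude $x_1 = x_2 \cap \ii{Trace} = x_2$, completing (iii), and \Cref{thm:densethm} then identifies \textit{Dense} with the dense sets. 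The main obstacle is not any isolated hard computation but recognizing that all three conditions hinge on one lemma: in our event-based model every unfinished finite prefix can be closed off into a terminating trace that any dense property must contain, which is precisely what reconciles the finite refutability of safety counterexamples with density's acceptance of all terminating traces. (An equivalent route would bypass the conditions entirely and compute the safety closure $\overline{\pi} = \{t \mid \forall m \leq t.~\exists t' \geq m.~ t' \in \pi\}$ directly, showing $\overline{\pi} = \ii{Trace}$ iff $\forall m.~\exists t \geq m.~ t \in \pi$, which by the preceding theorem is membership in \textit{Dense}; but the condition-checking route reuses more of the machinery already in place.)
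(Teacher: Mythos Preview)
Your argument is correct on all three conditions, and the reduction to \Cref{thm:densethm} is logically sound. The paper, however, gives no written proof of \Cref{thm:densedense} beyond a pointer to the Coq lemma \texttt{Dense\_dense}, so there is nothing explicit to line up against. That said, the surrounding layout --- with the subsequent corollary obtained by \emph{combining} \Cref{thm:densethm} and \Cref{thm:densedense} --- indicates that the authors treat \Cref{thm:densedense} as an independent result rather than an instance of \Cref{thm:densethm}; your parenthetical alternative (compute the safety closure and invoke the Alpern--Schneider equivalence proved just above) is exactly that direct route and is almost certainly what the Coq development does.

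Your main route is slightly roundabout: the heart of your checks of (i) and (iii) is the observation that any safety property containing a dense property must be all of $\ii{Trace}$, and that observation is already one direction of ``$\ii{Dense} \subseteq$ topologically dense.'' So you prove half the theorem by hand inside the hypothesis verification and then appeal to the abstract machinery of \Cref{thm:densethm} for the other half. The direct closure computation handles both inclusions in one stroke and is self-contained; your route reuses more of the stated results (notably \Cref{thm:tp-safety-cap-dense} for condition (ii)) but at the cost of a detour through the general characterization. Either is acceptable.
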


\begin{proof}
  See file TopologyTrace.v, Lemma Dense\_dense.
\end{proof}

\begin{corollary} Assume $X \subseteq 2^{\ii{Trace}}$ satisfies the assumptions 
                  of \Cref{thm:densethm}, then $X = \textit{Dense}$
\end{corollary}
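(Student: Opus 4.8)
The plan is to derive this corollary as an immediate consequence of the two theorems that precede it, \Cref{thm:densethm} and \Cref{thm:densedense}. Both results characterize a single, topologically canonical object: the class of dense sets of the trace topology $\mathcal{A}$, in which the closed sets are exactly the safety properties. Since density is a purely topological notion --- a set is dense precisely when its closure is the whole space --- this class is uniquely determined once $\mathcal{A}$ is fixed, so there is nothing to choose and no ambiguity to resolve.

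First I would invoke \Cref{thm:densethm}: the hypothesis of the corollary is exactly that $X$ satisfies conditions (i)--(iii), so that theorem yields that $X$ \emph{is} the class of dense sets in $\mathcal{A}$. Next I would invoke \Cref{thm:densedense}, which states that $\textit{Dense}$ is \emph{likewise} the class of dense sets in $\mathcal{A}$. Chaining these two characterizations through the uniqueness of the dense class of a fixed topology gives $X = \textit{Dense}$, which is the claim.

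There is essentially no obstacle here: all of the mathematical content has already been discharged upstream, in establishing \Cref{thm:densethm} (the decomposition and unique-decomposition argument encoded in conditions (i)--(iii)) and in identifying the topological dense class with $\textit{Dense}$ in \Cref{thm:densedense}. The only point one must take care to make explicit is that both theorems refer to the \emph{same} canonical object, namely the dense sets of $\mathcal{A}$; once this is observed, transitivity applies and the proof collapses to a single line.
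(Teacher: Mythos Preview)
Your proposal is correct and matches the paper's approach: the paper's proof simply defers to the Coq development (file \texttt{TopologyTrace.v}, lemma \texttt{X\_Dense\_class}), and the argument there is exactly the composition of \Cref{thm:densethm} and \Cref{thm:densedense} that you spell out.
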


\begin{proof}
  See file TopologyTrace.v, X\_Dense\_class.
\end{proof}

\medskip

A property of legacy trace models that does not hold in our model is that any trace property can
be decomposed as the intersection of two liveness properties~\cite{AlpernS85}. 
To show it, first recall that if a set is dense, then every set including it is still dense. 
This means that if the topology allows for two disjoint dense sets $D_1 \cap D_2 = \emptyset$,
we can always write an arbitrary property $\pi$ as intersection of two dense sets. 

\begin{equation*}
  \pi = (D_1 \cup \pi) \cap (D_2 \cup \pi) 
\end{equation*}

This happens for instance in the trace model of Clarkson \emph{et al.},
where it is possible to write an arbitrary property as intersection of two liveness properties
(that play the role of the dense sets) \cite{AlpernS85, ClarksonS10} and is strictly related to 
the fact that only infinite traces are considered. 
In our trace model it is not possible to have disjoint dense sets as they must all include the set 
of all finite traces. It follows that a property discarding some terminating trace cannot have a similar decomposition.

\subsection{Robust Dense Property Preservation (\rdp)}
\label{sec:rdp}


\criteria{\rdp}{rdp} restricts \rtp to only dense properties:
\begin{align*}
\rdp:
  \forall \pi \in \ii{Dense}.~\forall\src{P}.~
    &(\forall\src{C_S}~t \ldotp
    \src{C_S\hole{P} \sem} t \Rightarrow t\in\pi)
    \Rightarrow
    \\&
    (\forall \trg{C_T}~t \ldotp
    \trg{C_T\hole{\cmp{P}} \sem} t \Rightarrow t\in\pi)
\end{align*}

Again, one might wonder how one can get dense properties to be
\emph{robustly} satisfied in the source and then preserved by
compilation.
As for robust safety, one concern is that the context may perform bad
events to violate the dense property. This can be handled in the same
way as for robust safety (\oldautoref{sec:rsp}). An additional concern is
that the context may refuse to give back control (but not terminate) or
silently diverge, thus violating a dense property such as ``along
every infinite trace, an infinite number of good outputs are produced''.
For this, the enforcement mechanism may use time-outs on
the context, forcing it to relinquish control to the partial program
periodically. Alternatively, we may add information to traces about
whether the context or the partial program produces an event, and
weaken dense properties of interest to include traces in which the
context keeps control forever.



The property-free variant of \rdp, called
\criteria{\pf{\rdp}}{pfrdp}, restricts 
\pf{\rtp} to only back-translating {\em non-terminating} traces:
\begin{align*}
  \pf{\rdp}:\ \forall\src{P}.~ \forall\trg{C_T}.~ \forall t~\ii{non\text{-}terminating}.~
        &
        \trg{C_T\hole{\cmp{P}}} \mathrel{\trg{\sem}} t \Rightarrow
        \\&
        \exists \src{C_S}\ldotp \src{C_S\hole{P}} \mathrel{\src{\sem}} t
\end{align*}
Non-terminating traces are either infinite or silently divergent.
We are not aware of good ways to make use of \emph{infinite} executions
$\trg{C_T\hole{\cmp{P}}} \sem t$ to produce a {\em finite} context
$\src{C_S}$, so, unlike for \pf{\rsp}, back-translation proofs of
\pf{\rdp} will likely have to rely only on $\trg{C_T}$ and $\src{P}$,
not $t$, to construct $\src{C_S}$.

Finally, we have proved that \rtp \emph{strictly implies} \rdp
(\CoqSymbol; \autoref{sec:sepCoq}).
The counterexample compilation chain we use for showing the separation
is roughly the inverse of the one we used for \rsp
(\oldautoref{thm:rsp-doesnt-imply-rtp}).
We take the source to be arbitrary, with the sole assumption that
there exists a program $\src{P_{\Omega}}$ that can produce a single
infinite trace $w$ irrespective of the context.
We compile programs by simply pairing them with a constant bound on
the number of steps, \IE $\cmp{P} = \trg{(\src{P},}\,k\trg{)}$.
%
%
On the one hand, \pf{\rdp} holds vacuously, as target programs
cannot produce infinite traces.
On the other hand, this compilation chain does not have \rtp, since
the property $\pi = \{ w \}$ is robustly satisfied by $\src{P_{\Omega}}$ in
the source but not by its compilation
$\trg{(\src{P_\Omega},}\,k\trg{)}$ in the target.

This separation result does not hold in models
with only infinite traces, wherein any trace property can be
decomposed as the intersection of two liveness properties~\cite{AlpernS85}.
In fact, in that model, the analogue of \rdp---Robust Liveness Property Preservation---and \rtp trivially coincide.

Further, neither \rdp nor \rsp implies the other. This follows because
every property can be written as the intersection of a safety and a
dense property  (\iffull see \fi\autoref{sec:dense-in-depth}).
So, if \rdp implies \rsp, then \rdp must imply \rtp, which we just
proved to not hold. By a dual argument, \rsp does not imply \rdp.
More details are given in \autoref{sec:sepCoq}.

\ifsooner

\subsection{Towards More Informative Traces}

\ch{This will be an imperative change we do everywhere (starting with
  Appendix A), not just a potential change we just describe at the
  end.}  \jt{Yet we probably don't have enough space to do the change
  in the main paper.\ch{There's nothing about this in the paper after
    the recent cuts, this is all about this appendices!} Should we add
  this section at the very beginning of the appendices,\ stating that
  in the main paper we did it with only one termination event, and now
  we're going to use this model, and that it doesn't change any result
  presented in the paper?}\ch{No, we should not do any of this stuff,
  we should just get rid of bullet {\bf everywhere} and only use final events.}
\rb{After the change this sub-section is redundant.}

So far in this section, we have assumed that terminating traces and terminating
finite prefixes consist of a sequence of events and a marker to indicate
termination, the solid bullet $\bullet$. However, it may be desirable to
distinguish between several possible reasons for termination, \IE normal and
erroneous termination (possible with some associated data), each reason (and
variation thereof) being considered different from all the others. Such an
extension removes the final gap of expressivity between our trace model and the
realistic model adopted by CompCert.

In \autoref{sec:traces}, this richer trace model decorates the bullet with an
abstract type of termination events, represented by $\varepsilon$, and writes
\circled{\varepsilon}. The various definitions on traces, prefixes, properties,
etc., are adjusted accordingly.

\fi

\clearpage
\section{Secure Compilation Criteria}\label{sec:new-crit}
This appendix describes all the new secure compilation criteria considered
in this work, depending on what class of properties they robustly
preserve: arbitrary trace properties (\Cref{sec:prop}), safety
properties (\Cref{sec:safety}), dense properties (\Cref{sec:dense});
arbitrary hyperproperties (\Cref{sec:hyper}), subset-closed
hyperproperties (\Cref{sec:subset-closed}), including
\(K\)-subset-closed hyperproperties (\Cref{sec:k-subset-closed}),
hypersafety (\Cref{sec:hypersafety}), including \(K\)-hypersafety
(\Cref{sec:k-hypersafety}), hyperliveness (\Cref{sec:hyperliveness});
and arbitrary relational hyperproperties (\Cref{sec:rel-hyper}) and
properties (\Cref{sec:rel-prop}), their \(K\)- and \(2\)-relational
variants (\Cref{sec:k-rel-hyper}, \Cref{sec:k-rel-prop}), and safety
relational properties (\Cref{sec:rel-safety}), including the finite,
\(K\)-, and \(2\)-relational variants (\Cref{sec:fin-rel-safety}). We
also describe the relaxed (X) variants of relational safety
(\Cref{sec:rel-xsafety}).

Each of these sections gives two definitions: a criterion that is
explicit about the class of properties it robustly preserves, and an
equivalent characterization that is \emph{property free}, and is thus
better suited for proofs.


As in the introduction, we organize these criteria in the diagram from
\Cref{fig:order}, where criteria above imply criteria below, and arrows
indicate the strict separation between the two criteria, that is the
existence of a compilation chain satisfying the lower criterion but
not the higher. These separation results are described in \Cref{sec:sep}.
\vspace{1em}

\hspace{-2em} 
\begin{minipage}{\textwidth}
\let\autoref\oldautoref
\begin{center}
{\centering
\begin{tikzpicture}[node distance=4mm,remember picture]\footnotesize
\tikzset{myptr/.style={decoration={markings,mark=at position 1 with {\arrow[scale=2,>=stealth]{>}}},postaction={decorate}}}
	\node[align=center] (rrhp) { Robust Relational Hyperproperty\\  Preservation (\rrhpref) };
	\node[align=center, below = of rrhp.south, yshift=.5em] (rkrhp) { Robust $K$-Relational Hyperproperty\\ Preservation  (\rkrhpref)};
	\node[align=center, below = of rkrhp.south, yshift=.5em] (r2rhp) { Robust 2-Relational Hyperproperty\\ Preservation (\rtrhpref)};

	\node[align=center, right = of rrhp.south, xshift = 8em ] (rrpp) { Robust Relational Property\\ Preservation  (\rrtpref)};
	\node[align=center, below = of rrpp.south, yshift=-1.5em] (rkrpp) { Robust $K$-Relational Property\\ Preservation  (\rkrtpref)};
	\node[align=center, below = of rkrpp.south, yshift=.5em] (r2rpp) { Robust 2-Relational Property\\ Preservation  (\rtrtpref)};

	\node[align=center, right = of rrpp.south, xshift = 7em] (rrxp) { Robust Relational relaXed safety\\ Preservation (\criterion{RrX})};
	\node[align=center, below = of rrxp.south, yshift=1em] (rfrxp) { Robust Finite-Relational relaXed\\ safety Preservation  (\rfrxpref)};
	\node[align=center, below = of rfrxp.south, ] (rkrxp) { Robust $K$-Relational relaXed\\safety Preservation  (\criterion{R\mi{K}rX})};
	\node[align=center, below = of rkrxp.south, yshift=.5em] (r2rxp) { Robust 2-Relational relaXed\\safety Preservation  (\rtrxpref)};

	\node[align=center, right = of rrxp.south, xshift=7em, yshift = -1em] (rrsp) { Robust Relational Safety\\  Preservation (\rrspref)};
	\node[align=center, below = of rrsp.south, yshift=-1em] (rfrsp) { Robust Finite-Relational \\Safety Preservation  (\rfrsp)};
	\node[align=center, below = of rfrsp.south, yshift=-1em] (rkrsp) { Robust $K$-Relational Safety\\ Preservation  (\rkrspref)};
	\node[align=center, below = of rkrsp.south, yshift=-1em] (r2rsp) { Robust 2-Relational Safety\\ Preservation  (\rtrspref)};

	\node[align=center, below = of r2rhp.south, yshift = -1.5em] (h) { Robust Hyperproperty \\ Preservation (\rhpref)};
	\node[align=center, below = of h.south, yshift=.5em] (sc) { Robust Subset-Closed Hyperproperty\\ Preservation  (\rschpref)};
	\node[align=center, below = of sc.south, yshift=.5em] (ksc) { Robust $K$-Subset-Closed Hyperproperty \\ Preservation  (\rkschp)};
	\node[align=center, below = of ksc.south, yshift=.5em] (2sc) { Robust 2-Subset-Closed Hyperproperty \\ Preservation (\rtschp)};
	\node[align=center, below = of 2sc.south, yshift=.5em,xshift=1em] (p) { Robust Trace  Property Preservation (\rtpref)};

	\node[align=center] at ([xshift = 7em]ksc.east -| r2rpp) (hs) { Robust Hypersafety Preservation (\rhspref)};
	\node[align=center, below = of hs.south] (khs) { Robust $K$-Hypersafety Preservation  (\rkhspref)};
	\node[align=center, below = of khs.south] (2hs) { Robust 2-Hypersafety Preservation (\rthspref)};
	\node[align=center, below = of 2hs.south, yshift=-1em] (sp) { Robust Safety Property Preservation (\rspref)};

	\node[align=center] at (p.south |- sp.west) (lp) { Robust Dense Property Preservation (\rdp)};

	\node[align=center, below = of r2rpp.south, yshift=-2.5em, font=\itshape, xshift = 6em, 
	] (fac2) { Robust Trace Equivalence \\Preservation (\rtepref) };
	\node[align=center, below = of 2hs.east, xshift = 11em, font=\itshape, yshift=.5em, 
		] (rnip) { Robust Termination-Insensitive\\ Noninterference Preservation\\ (\rtinipref) };

	\draw[-] (rrhp.south) to (rkrhp.north);
	\draw[-] (rkrhp.south) to (r2rhp.north);
	\draw[myptr] ([yshift=.3em]r2rhp.south) -- (h.north);

	\draw[-] (rrhp.east) to (rrpp.west);
	\draw[-] (rkrhp.east) to (rkrpp.north west);
	\draw[-] (r2rhp.east) to (r2rpp.north west);

	\draw[-] (rrpp.south) to (rkrpp.north);
	\draw[-] (rkrpp.south) to (r2rpp.north);

	\draw[myptr,draw = gray] ([xshift =.5em, yshift =1em]rrpp.south west) -- (sc.12);
	\draw[myptr,draw = gray] ([xshift =1em, yshift =1em]rkrpp.south west) -- (ksc.11);
	\draw[myptr,draw = gray] (r2rpp.195) -- (2sc.10);
	\draw[-] (rrpp.east) -- ([yshift=.5em]rrxp.west) ;
	\draw[-] (rkrpp.east) -- (rkrxp.north west) ;
	\draw[-] (r2rpp.east) -- (r2rxp.north west) ;

	\draw[-] (h.south) to (sc.north);
	\draw[-] (sc.south) to (ksc.north);
	\draw[-] (ksc.south) to (2sc.north);
	\draw[myptr] ([yshift=.5em]2sc.south) to ([xshift=-1em,yshift=-.5em]p.north);

	\draw[-] (sc.east) to (hs.west);
	\draw[-] (hs.south) to (khs.north);
	\draw[myptr,draw = black] (khs.south) to (2hs.north);
	\draw[myptr] ([yshift=.3em]2hs.south) to (sp.north);

	\draw[-] (ksc.east) to (khs.west);
	\draw[-] (2sc.east) to (2hs.west);
	\draw[myptr] (p.east) to (sp.north west);
	\draw[myptr] ([xshift=-1em]p.south) to ([xshift=-1em]lp.north);

	\draw[-] (rrxp.south) to (rfrxp.north);
	\draw[-] (rfrxp.south) to (rkrxp.north);
	\draw[-] (rkrxp.south) to (r2rxp.north);
	
	\draw[-] (rrsp.south) -- (rfrsp.north) ;
	\draw[-] (rfrsp.south) to (rkrsp.north);
	\draw[-] (rkrsp.south) to (r2rsp.north);
	
	\draw[myptr,draw=gray] (rfrsp.197) -- (hs.0);
	\draw[myptr,draw=gray] (rkrsp.197) -- (khs.north east);
	\draw[myptr,draw=gray] (r2rsp.185) -- (2hs.north east);

	\draw[-] (rrxp.east) -- (rrsp.north west);
	\draw[-] (rfrxp.east) -- ([xshift=1em]rfrsp.north west);
	\draw[-] (rkrxp.east) -- (rkrsp.north west);
	\draw[-] (r2rxp.east) -- (r2rsp.north west);

    \draw[myptr] (r2rhp.-20) -- (fac2.north west);
	\draw[myptr, dotted] (r2rxp.west) -- (fac2.90) node [left,align=left,pos=.05, font=\itshape] { + determinacy } ;  
	\draw[-] (2hs.east) to (rnip.north west);


	\draw[rounded corners, thick, fill=blue!20!white, opacity=0.2, dotted]
	(rrhp.north) -- (rrhp.north east) -- (rrxp.north east) -- (rrsp.north east) -- (rfrsp.east) |- (r2rsp.south west) -- ([xshift=4em]fac2.south east) -- (fac2.south west) -- ([xshift=3em,yshift=-.5em]r2rhp.south)
	-| ([xshift=-0.5em]r2rhp.north west) |- (rrhp.north);

	\node[draw=black, dotted, left = of rkrhp.180, align = center, rotate=90, anchor = center, yshift = 2em, rounded corners, thick, fill=blue!20!white, opacity=0.2, dotted, text opacity = 1](rhptitle){Relational \\ Hyperproperties \\ Criteria (\autoref{sec:rel})};

	\draw[rounded corners, thick, fill=red!20!white, opacity=0.2]
	([yshift=.5em]h.north) -- ([yshift=.5em]h.north east) -- (hs.north west) -- (hs.north east) 
	-- (rnip.north east) |- (rnip.south west) -- (2hs.south east)
	-- (2hs.south west) -- (2sc.south east) -- (2sc.south west)
	|- ([yshift=.5em]h.north west) -- ([yshift=.5em]h.north);

	\node[draw=black, align = center, rotate=90, anchor = center, rounded corners, thick, fill=red!20!white, opacity=0.2, text opacity = 1] at (rhptitle |- sc)  (hptitle){Hyperproperties \\ Criteria (\autoref{sec:hyperprop})\\};

	\draw[rounded corners, thick, fill=yellow!20!white, opacity=0.2, loosely dotted]
	(p.north) -- (p.north east) -- ([yshift=1.5em]sp.north west) -- ([yshift=1.5em]sp.north east) -- (sp.south east) -| (p.north west) -- (p.north);

	\node[draw=black, align = center, rotate=90, anchor = center, rounded corners, thick, fill=yellow!20!white, opacity=0.2, loosely dotted, text opacity = 1] at (rhptitle |- p.-90)  (ptitle){Trace \\ Properties \\ Criteria (\autoref{sec:prop})};


\end{tikzpicture}
}
\end{center}
\begin{tikzpicture}[remember picture, overlay]\footnotesize
\draw[rounded corners, thick, dashed, fill=green!80!black, opacity=0.05]
	(rfrsp.north) -| (rfrsp.east) |- (sp.south west) |- (hs.north west) -- (fac2.north west) -- ([xshift=-1.5em]rkrxp.west) -- (rfrxp.north west) -- (rfrxp.north east) -- (rfrsp.north) -- (rfrsp.north);

	\node[draw=black, right = of r2rsp.0, align = center, rotate=-90, anchor = center, yshift = -1em, rounded corners, thick, fill=green!80!black, opacity=0.1, dashed, text opacity = 1](tbbttitle){Trace-based \\ Back-translation Proofs};
\end{tikzpicture}
\end{minipage}
\myfig{}{order}{Partial order of the secure compilation criteria studied in this paper.}


\subsection{Trace Property-Based Criteria}\label{sec:prop-criteria}

We start by describing the three criteria at the bottom of the
lattice, \rtp, \rsp, and \rdp, corresponding to the robust
preservation of \emph{trace properties}, defined as sets of
allowed traces.

These criteria state that for any property \(\pi\) in the class
they preserve, if a source program can only produce traces
belonging to \(\pi\), when linked with any source context,
then the same is true of the compiled program linked with any target
context.

\subsubsection{Robust Trace Property Preservation}\label{sec:prop}
The first of these criteria is called \emph{Robust Trace Property
  Preservation}, or \rtp, and corresponds to the robust preservation
of all trace properties.

\begin{definition}[Robust Trace Property Preservation (\rtp)]\label{def:rtp}
  $$
  \begin{multlined}
    \criteria{\rtp}{rtp}:\quad \forall \pi \in 2^\ii{Trace}\ldotp\forall\src{P}\ldotp
    (\forall\src{C_S}~t \ldotp
    \src{C_S\hole{P} \rightsquigarrow}\ t \Rightarrow t\in\pi) \Rightarrow\\
    (\forall \trg{C_T}~t \ldotp
    \trg{C_T\hole{\cmp{P}} \rightsquigarrow}\ t \Rightarrow t\in\pi)\quad
  \end{multlined}
  $$
\end{definition}

The property-free characterization of \rtp is \pf{\rtp}. This
characterization captures the fact that if a target program can
produce a given trace, then the source can also produce this
trace. Intuitively, this corresponds to the fact that any violation of
a trace property in the target can be explained by the same violation
in the source.

\begin{definition}[Equivalent Characterization of \rtp (\pf{\rtp})]\label{def:rtc}
  \begin{align*}
    \pf{\rtp}:\quad \forall\src{P}.~ \forall\trg{C_T}.~ \forall t.~
    &\trg{C_T\hole{\cmp{P}}} \rightsquigarrow t \Rightarrow\\
    &\exists \src{C_S}\ldotp \src{C_S\hole{P}}\rightsquigarrow t
  \end{align*}
\end{definition}


\begin{theorem}[\rtp and \pf{\rtp} are equivalent]\label{thm:rtp-eq}
	\begin{align*}
		\rtp \iff \pf{\rtp}
	\end{align*}
\end{theorem}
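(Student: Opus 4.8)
The plan is to prove the two implications separately, as is standard for relating a property-full criterion to its property-free characterization. The easy direction is $\pf{\rtp} \Rightarrow \rtp$. Assuming $\pf{\rtp}$, I would fix an arbitrary trace property $\pi$, a source program $\src{P}$, and the hypothesis that $\src{P}$ robustly satisfies $\pi$. To establish that $\cmp{P}$ robustly satisfies $\pi$, I take any target context $\trg{C_T}$ and trace $t$ with $\trg{C_T\hole{\cmp{P}}} \rightsquigarrow t$; applying $\pf{\rtp}$ yields a source context $\src{C_S}$ with $\src{C_S\hole{P}} \rightsquigarrow t$, and feeding this into the robust-satisfaction hypothesis for $\src{P}$ gives $t \in \pi$, as required. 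No cleverness is needed in this direction.

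The interesting direction is $\rtp \Rightarrow \pf{\rtp}$, which I would prove by instantiating the universally quantified property in $\rtp$ with a carefully chosen $\pi$. Fix $\src{P}$, $\trg{C_T}$, and $t$ with $\trg{C_T\hole{\cmp{P}}} \rightsquigarrow t$; the goal is to produce a source context emitting $t$. The key move is to take $\pi$ to be the set of all traces that $\src{P}$ can exhibit under some source context, namely $\pi_{\src{P}} = \{\, t' \mid \exists \src{C_S}.~ \src{C_S\hole{P}} \rightsquigarrow t' \,\}$. By construction $\src{P}$ robustly satisfies $\pi_{\src{P}}$, since any trace emitted by $\src{C_S\hole{P}}$ is, by definition of $\pi_{\src{P}}$, a member of $\pi_{\src{P}}$. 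Applying $\rtp$ at $\pi_{\src{P}}$ and $\src{P}$ then tells me that $\cmp{P}$ robustly satisfies $\pi_{\src{P}}$; instantiating this conclusion at $\trg{C_T}$ and $t$ gives $t \in \pi_{\src{P}}$, which unfolds exactly to the existence of a source context $\src{C_S}$ with $\src{C_S\hole{P}} \rightsquigarrow t$, discharging $\pf{\rtp}$.

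The proof has no real obstacle beyond spotting the right instantiation; the only subtlety worth checking is that $\pi_{\src{P}}$ depends on $\src{P}$ but on neither $\trg{C_T}$ nor $t$, which is harmless because $\rtp$ quantifies over $\pi$ and $\src{P}$ jointly and only afterwards over the contexts and traces. I would expect the argument to be short and to mechanize cleanly, matching the Coq-verified equivalence claimed in the main text (\CoqSymbol). This same ``image property'' instantiation is the template I would reuse for the analogous equivalences of the other criteria in \autoref{fig:order}, replacing $\pi_{\src{P}}$ by the corresponding set of behaviors or of finite prefixes as appropriate.
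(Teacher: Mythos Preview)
Your proof is correct and matches the paper's own argument essentially line for line: the same ``image property'' $\pi_{\src{P}} = \{\, t' \mid \exists \src{C_S}.~ \src{C_S\hole{P}} \rightsquigarrow t' \,\}$ is used for the $\rtp \Rightarrow \pf{\rtp}$ direction, and the $\pf{\rtp} \Rightarrow \rtp$ direction is handled identically by back-translating the witnessing context and invoking the robust-satisfaction hypothesis.
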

\begin{proof}
  See file Criteria.v, theorem RTC\_RTP for a Coq proof.
  The proof is simple, but still illustrative for how such proofs work in general:

  \begin{description}
  \item[$\Rightarrow$] Let $\src{P}$ be arbitrary.
    We need to show that
    \(
    \forall\trg{C_T}\ldotp\forall t\ldotp
    \trg{C_T\hole{\cmp{P}} \sem}\ t \Rightarrow
    \exists \src{C_S}\ldotp \src{C_S\hole{P}\sem}\ t
    \).
    We can directly conclude this by applying $\rtp$ to $\src{P}$ and the property
    $\pi = \{ t ~|~ \exists \src{C_S}.~\src{C_S\hole{P}\sem} t \}$;
    for this application to be possible we need to show that
    \(\forall\src{C_S}~t\ldotp
    \src{C_S\hole{P} \sem}\ t \Rightarrow
    \exists \src{C_S'}\ldotp\src{C_S'\hole{P} \sem}\ t \),
    which is trivial if taking $\src{C_S'} = \src{C_S}$.

  \item[$\Leftarrow$] Given a compilation chain that satisfies \pf{\rtp} and
    some $\src{P}$ and $\pi$ so that 
    $\forall\src{C_S}~t \ldotp
    \src{C_S\hole{P} \sem}\ t \Rightarrow t\in\pi\;(H)$
    we have to show that
    $\forall \trg{C_T}~t \ldotp
    \trg{C_T\hole{\cmp{P}} \sem}\ t \Rightarrow t\in\pi$.
    Let $\trg{C_T}$ and $t$ so that $\trg{C_T\hole{\cmp{P}} \sem}\ t$,
    we still have to show that $t\in\pi$.
    We can apply \pf{\rtp} to obtain 
    $\exists \src{C_S}\ldotp \src{C_S\hole{P}\sem} t$,
    which we can use to instantiate $H$ to conclude that $t \in \pi$.\qedhere
  \end{description}
\end{proof}

\subsubsection{Robust Safety Property Preservation}\label{sec:safety}


Robust Safety Property Preservation is the criterion corresponding to the
robust preservation of \emph{safety} properties, \IE properties that can be
finitely refuted: for any safety property, and any trace not in the property,
there exists a finite \emph{bad prefix} of the trace that can not be extended
to belong to the property.

\begin{definition}[Safety Property]\label{def:safety}
  We define the set of safety properties, \ii{Safety}:
  \[
  \ii{Safety} \triangleq \myset{\pi \in 2^\ii{Trace} }{ \forall t {\not\in} \pi.~
  \exists m {\leq} t.~ \forall t' {\geq} m.~ t' {\not\in} \pi }
  \]

  A property \(\pi\) is a safety property if and only if \(\pi \in \ii{Safety}\).
\end{definition}

\begin{definition}[Robust Safety Property Preservation (\rsp)]\label{def:rsp}
  $$
  \begin{multlined}
    \criteria{\rsp}{rsp}:\quad \forall \pi \in \ii{Safety}.~\forall\src{P}.~
    (\forall\src{C_S}~t \ldotp \src{C_S\hole{P} \rightsquigarrow}\ t
    \Rightarrow t\in\pi) \Rightarrow\\
    (\forall \trg{C_T}~t \ldotp
    \trg{C_T\hole{\cmp{P}} \rightsquigarrow}\ t \Rightarrow t\in\pi)\quad
  \end{multlined}
  $$
\end{definition}

The equivalent property-free characterization, \pf{\rsp}, captures the
fact that a safety property can be refuted by one finite bad prefix
\(m\): all finite violation of a safety property at the target level
can be explained by the same finite violation at the source level.

\begin{definition}[Equivalent Characterization of \rsp (\pf{\rsp})]\label{def:rsc}
	\begin{align*}
	  \pf{\rsp}:\quad \forall\src{P}.~ \forall\trg{C_T}.~ \forall m.~
          &\trg{C_T\hole{\cmp{P}} \rightsquigarrow}\ m \Rightarrow\\
          &\exists \src{C_S}\ldotp \src{C_S\hole{P}\rightsquigarrow}\ m
	\end{align*}
\end{definition}

\begin{theorem}[\rsp and \pf{\rsp} are equivalent]\label{thm:rsp-eq}
  \begin{align*}
    \rsp \iff \pf{\rsp}
  \end{align*}
\end{theorem}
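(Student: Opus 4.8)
The plan is to mirror the structure of the proof of \Cref{thm:rtp-eq}, establishing the two implications separately, but with the crucial difference that in the $\rsp \Rightarrow \pf{\rsp}$ direction we may only instantiate $\rsp$ at \emph{safety} properties, so the trace-collecting property used for \rtp no longer applies directly. In both directions it is the finitely-refutable structure of safety properties that does the work, and I expect the forward direction to be the main obstacle precisely because of this restriction. As with \Cref{thm:rtp-eq}, a Coq version (file Criteria.v) is intended.

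For the easier direction, $\pf{\rsp} \Rightarrow \rsp$, I would fix a safety property $\pi$, a program $\src{P}$, and assume $\src{P}$ robustly satisfies $\pi$, i.e. $(H)\ \forall \src{C_S}\,t.\ \src{C_S\hole{P}} \sem t \Rightarrow t \in \pi$. Given $\trg{C_T}$ and $t$ with $\trg{C_T\hole{\cmp{P}}} \sem t$, I argue $t \in \pi$ by contradiction. If $t \notin \pi$, then since $\pi \in \ii{Safety}$ there is a bad prefix $m \leq t$ with $\forall t' \geq m.\ t' \notin \pi$. Because $m \leq t$ and $\trg{C_T\hole{\cmp{P}}} \sem t$, we obtain $\trg{C_T\hole{\cmp{P}}} \sem m$ (witnessed by $t$ itself, using $W \sem m \triangleq \exists t \geq m.\ W \sem t$). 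Applying $\pf{\rsp}$ yields a source context $\src{C_S}$ with $\src{C_S\hole{P}} \sem m$, hence some $t'' \geq m$ with $\src{C_S\hole{P}} \sem t''$; by $(H)$ we get $t'' \in \pi$, contradicting that $m$ is a bad prefix. Thus $t \in \pi$, as required.

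For the harder direction, $\rsp \Rightarrow \pf{\rsp}$, I would fix $\src{P}$, $\trg{C_T}$, and a finite prefix $m$ with $\trg{C_T\hole{\cmp{P}}} \sem m$, and must produce a source context replicating $m$. The idea is to instantiate $\rsp$ with the safety property that forbids $m$ as a prefix, $\pi_m = \{t \mid m \not\leq t\}$. This is genuinely in $\ii{Safety}$: any violating trace $t$ satisfies $m \leq t$, and $m$ itself is a witnessing bad prefix since every $t' \geq m$ still has $m \leq t'$ and so lies outside $\pi_m$. I then argue by contradiction: suppose no source context produces $m$, i.e. $\forall \src{C_S}.\ \neg(\src{C_S\hole{P}} \sem m)$, which unfolds to $\forall \src{C_S}\,t.\ \src{C_S\hole{P}} \sem t \Rightarrow m \not\leq t$. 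This is exactly the statement that $\src{P}$ robustly satisfies $\pi_m$, so $\rsp$ gives that $\cmp{P}$ robustly satisfies $\pi_m$, i.e. $\forall t.\ \trg{C_T\hole{\cmp{P}}} \sem t \Rightarrow m \not\leq t$. But $\trg{C_T\hole{\cmp{P}}} \sem m$ unfolds to some $t \geq m$ with $\trg{C_T\hole{\cmp{P}}} \sem t$, and for this $t$ robust satisfaction forces $m \not\leq t$ — a contradiction. Hence some $\src{C_S}$ produces $m$.

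The single delicate point, and the place I would concentrate the formalization effort, is verifying that $\pi_m$ really is a safety property and that ``$\src{P}$ robustly satisfies $\pi_m$'' coincides on the nose with ``no source context produces the prefix $m$''; once this correspondence is in place, both directions are short. This is the safety analogue of the $\pi = \{t \mid \exists \src{C_S}.\ \src{C_S\hole{P}} \sem t\}$ instantiation used for \rtp, with the key adaptation that we forbid a \emph{single finite prefix} rather than collect whole traces, which is exactly what keeps the instantiating property inside $\ii{Safety}$.
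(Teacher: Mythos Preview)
Your proof is correct and is precisely the expected adaptation of the \rtp argument: the paper itself only cites the Coq development (file Criteria.v, theorem RSC\_RSP) without spelling out the details, but your instantiation with the single-bad-prefix safety property $\pi_m = \{t \mid m \not\leq t\}$ in the forward direction, and your use of the bad-prefix witness from the safety hypothesis in the backward direction, are exactly the standard moves. Nothing is missing.
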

\begin{proof}
	See file Criteria.v, theorem RSC\_RSP.
\end{proof}

\subsubsection{Robust Dense Property Preservation}\label{sec:dense}
Robust Dense Property Preservation is the criterion corresponding to the robust preservation
of \emph{dense} properties. Dense properties are the properties that include
all finite traces, and roughly correspond to \emph{liveness} in our model. See \Cref{sec:traces-details} and \Cref{sec:dense-in-depth} for more details.

A more detailed view of Robust Dense Property Preservation is given in \Cref{sec:rdp}.

\begin{definition}[Dense Property]\label{def:dense}
  We define the set of dense property, \ii{Dense}:
  \[
  \ii{Dense} \triangleq \myset{\pi \in 2^\ii{Trace} }{
    \forall t~\ii{terminating}.~ t \in \pi }
  \]
  A property \(\pi\) is a dense property if and only if \(\pi \in \ii{Dense}\).
\end{definition}

\begin{definition}[Robust Dense Property Preservation (\rdp)]\label{def:rdp}
  $$
  \begin{multlined}
    \criteria{\rdp}{rdp}:\quad
    \forall \pi \in \ii{Dense}.~\forall\src{P}.~
    (\forall\src{C_S}~t \ldotp
    \src{C_S\hole{P} \rightsquigarrow}\ t \Rightarrow t\in\pi)
    \Rightarrow\\
    (\forall \trg{C_T}~t \ldotp
    \trg{C_T\hole{\cmp{P}} \rightsquigarrow}\ t \Rightarrow t\in\pi)\quad
  \end{multlined}
  $$
\end{definition}

The property-free characterization \pf{\rdp} captures the fact that
dense properties can be violate only by infinite traces.  \jt{Not sure what to explain here}
\begin{definition}[Equivalent Characterization of \rdp (\pf{\rdp})]\label{def:rdc}
  \begin{align*}
    \pf{\rdp}:\quad \forall\src{P}.~ \forall\trg{C_T}.~ \forall t~\text{infinite}.~
    &\trg{C_T\hole{\cmp{P}}\rightsquigarrow}\ t \Rightarrow\\
    &\exists \src{C_S}\ldotp \src{C_S\hole{P}\rightsquigarrow}\ t
  \end{align*}
\end{definition}

\begin{theorem}[\rdp and \pf{\rdp} are equivalent]\label{thm:rdp-eq}
	\begin{align*}
		\rdp \iff \pf{\rdp}
	\end{align*}
\end{theorem}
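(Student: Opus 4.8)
The plan is to mirror the proof that \rtp and \pf{\rtp} are equivalent (\Cref{thm:rtp-eq}), the only new ingredient being the role played by density in compensating for the fact that \pf{\rdp} quantifies only over non-terminating traces. I would prove the two implications separately.

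For the $\Rightarrow$ direction I assume \rdp and establish \pf{\rdp}. Fix $\src{P}$, $\trg{C_T}$, and a non-terminating trace $t$ with $\trg{C_T\hole{\cmp{P}}} \sem t$; the goal is to produce a source context $\src{C_S}$ with $\src{C_S\hole{P}} \sem t$. The key move is to instantiate \rdp at $\src{P}$ with the property
\[
\pi_{\src{P}} = \myset{t'}{t'~\ii{terminating}} \cup \myset{t'}{\exists \src{C_S}\ldotp \src{C_S\hole{P}} \sem t'}.
\]
This $\pi_{\src{P}}$ lies in \ii{Dense} by construction, since it contains every terminating trace, so \rdp applies to it. Its source-side premise $\forall \src{C_S}~t' \ldotp \src{C_S\hole{P}} \sem t' \Rightarrow t' \in \pi_{\src{P}}$ holds trivially, witnessed by $\src{C_S}$ itself through the second disjunct. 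Hence \rdp yields $\forall \trg{C_T}~t' \ldotp \trg{C_T\hole{\cmp{P}}} \sem t' \Rightarrow t' \in \pi_{\src{P}}$, which I instantiate at our $\trg{C_T}$ and $t$ to obtain $t \in \pi_{\src{P}}$. Because $t$ is non-terminating, the first disjunct fails, so the second must hold, giving exactly the desired $\src{C_S}$.

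For the $\Leftarrow$ direction I assume \pf{\rdp}, fix a dense property $\pi$, a program $\src{P}$, and the hypothesis $H : \forall \src{C_S}~t \ldotp \src{C_S\hole{P}} \sem t \Rightarrow t \in \pi$, and take $\trg{C_T}$ and $t$ with $\trg{C_T\hole{\cmp{P}}} \sem t$, aiming to show $t \in \pi$. Here I would case-split on whether $t$ terminates. If $t$ is terminating then $t \in \pi$ immediately, since $\pi \in \ii{Dense}$ contains all terminating traces; this is precisely the place where the restriction of \pf{\rdp} to non-terminating traces is absorbed. If $t$ is non-terminating, I apply \pf{\rdp} to obtain some $\src{C_S}$ with $\src{C_S\hole{P}} \sem t$, and discharge the goal via $H$.

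I do not expect a serious obstacle: the argument is routine once the dense witness property $\pi_{\src{P}}$ is chosen correctly in the forward direction and the termination case split is performed in the backward direction, both of which hinge on the single fact that dense properties are insensitive to terminating traces. The only point requiring care is to make sure the trichotomy of traces (terminating, silently divergent, infinite) is handled uniformly, so that ``non-terminating'' really covers both silent divergence and infinite traces and thereby lines up exactly with the complement of the terminating traces that \ii{Dense} always admits.
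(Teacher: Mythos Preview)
Your proposal is correct and follows exactly the expected approach: the paper's own proof is deferred to the Coq development (``See file Criteria.v, theorem RDC\_RDP''), but the argument there is the natural adaptation of the \rtp/\pf{\rtp} equivalence that you spell out, with density supplying the terminating-trace case in the backward direction and the witness property padded with all terminating traces in the forward direction. One minor remark: the appendix definition \Cref{def:rdc} is stated with ``$t$ infinite'' rather than ``$t$ non-terminating''; your version with ``non-terminating'' (matching \autoref{sec:rdp}) is the one that makes the equivalence go through, since a silently divergent trace must also be handled in the backward case split.
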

\begin{proof}
	See file Criteria.v, theorem RDC\_RDP.
\end{proof}

\subsection{Hyperproperty-Based Criteria}\label{sec:hyper-criteria}

The criteria in this section describe the robust preservation of
\emph{hyperproperty}, that is sets of allowed program
behaviors. Formally, a hyperproperty is an element \(H\) of the set
\(2^{2^\ii{Trace}}\), and a program \(P\) satisfies this hyperproperty
\(H\) if and only if \(\behav{P} \in H\). Hyperproperties allow to
express more security properties than trace properties, such as
noninterference for instance.

Again, these criteria state that for any hyperproperty \(H\) in the
class they preserve, if a source program's behavior when linked with
any any source context belongs to \(H\), then the same is true for the
compiled program.

Note that the behavior being considered is not the set of traces generated
by the program when linked with all source contexts, but only the set
of traces generated when linked with a particular context.
\jt{Not sure if this is clear or useful, someone else decides}
\rb{Personally, I understood it without further clarification. It seems the only
    interpretation that makes sense.}

\subsubsection{Robust Hyperproperty Preservation}\label{sec:hyper}
Robust Hyperproperty Preservation is the criterion corresponding to the
robust preservation of all hyperproperties.

\begin{definition}[Robust Hyperproperty Preservation (\rhp)]\label{def:rhp}
  $$
  \begin{multlined}
    \criteria{\rhp}{rhp}:\quad
    \forall H \in 2^{2^\ii{Trace}}.~\forall\src{P}.~
    (\forall\src{C_S} \ldotp
    \src{\behav{C_S\hole{P}}} \in H)
    \Rightarrow
    \\
    (\forall \trg{C_T} \ldotp
    \trg{\behav{C_T\hole{\cmp{P}}}} \in H)\quad
  \end{multlined}
  $$
\end{definition}

The equivalent characterization, \pf{\rhp}, states that all behaviors of the
compiled program are behaviors of the source program: if the compiled program
violate a hyperproperty with a particular behavior, then the source program
does too.

\begin{definition}[Equivalent Characterization of \rhp (\pf{\rhp})]\label{def:rhc}
  \begin{align*}
    \pf{\rhp}:\quad &\forall\src{P}.~ \forall\trg{C_T}.~ \exists \src{C_S}.~
    \trg{\behav{C_T\hole{\cmp{P}}}} = \src{\behav{C_S\hole{P}}}\\
  \end{align*}
\end{definition}
Unfolding this definition gives:
\[
\pf{\rhp}:\quad \forall\src{P}.~ \forall\trg{C_T}.~ \exists \src{C_S}.~ \forall t.~
\left(\trg{C_T\hole{\cmp{P}} \rightsquigarrow}\ t \iff
\src{C_S\hole{P}\rightsquigarrow}\ t\right)
\]

\begin{theorem}[\rhp and \pf{\rhp} are equivalent]\label{thm:hrp-eq}
	\begin{align*}
		\rhp \iff \pf{\rhp}
	\end{align*}
\end{theorem}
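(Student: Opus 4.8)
The plan is to follow exactly the template of the earlier equivalence proof for \rtp{} and \pf{\rtp} (theorem \texttt{RTC\_RTP} in the excerpt), proving the two implications of the biconditional separately. The forward direction will turn on instantiating \rhp{} with a cleverly chosen ``collecting'' hyperproperty, while the backward direction is a direct rewrite.

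\textbf{Direction $\rhp \Rightarrow \pf{\rhp}$.} First I would fix an arbitrary source program $\src{P}$ and target context $\trg{C_T}$, so the goal is to exhibit a source context $\src{C_S}$ with $\trg{\behav{C_T\hole{\cmp{P}}}} = \src{\behav{C_S\hole{P}}}$. The key step is to instantiate \rhp{} at this $\src{P}$ with the hyperproperty collecting all behaviors realizable by plugging $\src{P}$ into some source context,
\[
  H \triangleq \myset{\src{\behav{C_S\hole{P}}}}{\src{C_S}}.
\]
The premise of \rhp{}, namely $\forall\src{C_S}\ldotp \src{\behav{C_S\hole{P}}} \in H$, then holds by construction of $H$, so \rhp{} delivers its conclusion $\forall\trg{C_T}\ldotp \trg{\behav{C_T\hole{\cmp{P}}}} \in H$. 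Specializing to the fixed $\trg{C_T}$ and unfolding membership in $H$ yields precisely a source context $\src{C_S}$ with $\trg{\behav{C_T\hole{\cmp{P}}}} = \src{\behav{C_S\hole{P}}}$, which is \pf{\rhp}.

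\textbf{Direction $\pf{\rhp} \Rightarrow \rhp$.} Here I would fix an arbitrary hyperproperty $H \in 2^{2^\ii{Trace}}$, a source program $\src{P}$, and assume the premise $\forall\src{C_S}\ldotp \src{\behav{C_S\hole{P}}} \in H$. For an arbitrary target context $\trg{C_T}$, applying \pf{\rhp} produces a source context $\src{C_S}$ with $\trg{\behav{C_T\hole{\cmp{P}}}} = \src{\behav{C_S\hole{P}}}$. Since the premise gives $\src{\behav{C_S\hole{P}}} \in H$, rewriting along this equality yields $\trg{\behav{C_T\hole{\cmp{P}}}} \in H$, establishing the conclusion of \rhp{}.

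\textbf{Main obstacle.} Unlike the \pf{\rtp} case, where the witnessing property is a set of traces, the forward direction here must guess a \emph{hyperproperty}; the only genuine design choice is identifying the right $H$. The ``collecting'' hyperproperty above is the canonical one, and the only thing to check is that its defining premise holds trivially, which it does by reflexivity of the witness. No trace-level reasoning or structural induction on programs is needed, so once $H$ is chosen both directions reduce to unfolding the definition of behavior membership and a single rewrite; I expect the entire argument to be as short and mechanical as the \rtp{} proof it mirrors.
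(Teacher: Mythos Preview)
Your proof is correct and follows exactly the approach the paper intends: the paper's own proof simply references the Coq development (theorem \texttt{RHC\_RHP}), but the argument you give mirrors the written proof of \Cref{app:thm:rtp-eq} for \rtp/\pf{\rtp}, lifted from trace properties to hyperproperties by replacing the collecting trace property with the collecting hyperproperty $H = \{\src{\behav{C_S\hole{P}}} \mid \src{C_S}\}$. Both directions are handled as expected.
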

\begin{proof}
	See file Criteria.v, theorem RHC\_RHP.
\end{proof}

\subsubsection{Robust Subset-Closed Hyperproperty Preservation}\label{sec:subset-closed}

\jt{I don't really understand subset-closed hyperproperty well enough to describe this section.
  I think \(k\)-SCHP are refutable by \(k\) traces, but I don't see how that extend to
  \rschp}
\jt{No definition of SC?} 

In general a program satisfies a certain hyperproperty
if its set of traces, its behavior, is in the hyperproperty.
With subset-closed hyperproperties (\oldautoref{sec:rschp}), if a set
of traces is accepted then so are all smaller sets of traces.
Subset closed hyperproperties can therefore be used to formalize the
notion of \emph{refinement}~\cite{ClarksonS10}.

\begin{definition}[Subset-Closed Hyperproperties]\label{def:sch}
  We define the set of Subset-Closed Hyperproperties, \ii{SC}:
  \[
  \ii{SC} \triangleq \myset{H}{\forall b_1 \subseteq b_2. b_2 \in H \Rightarrow b_1 \in H } 
  \]

  A hyperproperty \(H\) is subset-closed if and only if \(H \in \ii{SC}\).
\end{definition}

\begin{definition}[Robust Subset-Closed Hyperproperty Preservation (\rschp)]\label{def:rschp}
  $$
  \begin{multlined}
    \criteria{\rschp}{rschp}:\quad
    \forall H \in \ii{SC}.~\forall\src{P}.~
    (\forall\src{C_S} \ldotp
    \src{\behav{C_S\hole{P}}} \in H)
    \Rightarrow\\
    (\forall \trg{C_T} \ldotp
    \trg{\behav{C_T\hole{\cmp{P}}}} \in H)\quad
  \end{multlined}
  $$
\end{definition}

The equivalent characterization of $\rschp$ states that the behaviors of a compiled program
in an arbitrary target context are the refinement of the behaviors of the original 
program in some source context.

\begin{definition}[Equivalent Characterization of \rschp (\pf{\rschp})]\label{def:rschc}
  \begin{align*}
    \pf{\rschp}:\quad \forall\src{P}.~ \forall\trg{C_T}.~ \exists \src{C_S}.~ \forall t.~
    \trg{C_T\hole{\cmp{P}} \rightsquigarrow}\ t \Rightarrow
    \src{C_S\hole{P}\rightsquigarrow}\ t
  \end{align*}
\end{definition}

\begin{theorem}[\rschp and \pf{\rschp} are equivalent]\label{thm:rschp-eq}
	\begin{align*}
		\rschp \iff \pf{\rschp}
	\end{align*}
\end{theorem}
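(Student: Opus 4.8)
The plan is to reuse the two-directional template from the proof that \rtp $\iff$ \pf{\rtp} (\Cref{thm:rtp-eq}), exploiting the key observation that the inner clause $\forall t.~\trg{C_T\hole{\cmp{P}} \sem}\ t \Rightarrow \src{C_S\hole{P}\sem}\ t$ appearing in \pf{\rschp} is exactly the set inclusion $\trg{\behav{C_T\hole{\cmp{P}}}} \subseteq \src{\behav{C_S\hole{P}}}$. Thus \pf{\rschp} can be read as ``for every program $\src{P}$ and every target context $\trg{C_T}$ there is a source context $\src{C_S}$ whose behavior over-approximates that of $\trg{C_T\hole{\cmp{P}}}$.'' With this reformulation in hand, both implications become short.

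For the direction \pf{\rschp} $\Rightarrow$ \rschp, I would fix a subset-closed hyperproperty $H \in \ii{SC}$, a program $\src{P}$ satisfying the premise $\forall\src{C_S}.~\src{\behav{C_S\hole{P}}} \in H$, and an arbitrary $\trg{C_T}$. Applying \pf{\rschp} yields some $\src{C_S}$ with $\trg{\behav{C_T\hole{\cmp{P}}}} \subseteq \src{\behav{C_S\hole{P}}}$; the premise gives $\src{\behav{C_S\hole{P}}} \in H$, and subset-closure of $H$ then immediately delivers $\trg{\behav{C_T\hole{\cmp{P}}}} \in H$, as required. This direction is essentially mechanical, and I expect no difficulty there.

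The more interesting direction is \rschp $\Rightarrow$ \pf{\rschp}, where the only real design choice is which witness hyperproperty to instantiate \rschp with, mirroring the way a canonical property is chosen in \Cref{thm:rtp-eq}. Fixing $\src{P}$, I would take the downward closure of $\src{P}$'s source behaviors,
\[
H \triangleq \myset{ b }{ \exists \src{C_S}.~ b \subseteq \src{\behav{C_S\hole{P}}} },
\]
and verify the two facts that make the instantiation go through. First, $H \in \ii{SC}$: if $b_1 \subseteq b_2$ and $b_2 \subseteq \src{\behav{C_S\hole{P}}}$ for some $\src{C_S}$, then transitivity of $\subseteq$ gives $b_1 \subseteq \src{\behav{C_S\hole{P}}}$, so $b_1 \in H$. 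Second, the premise of \rschp holds for this $H$: for every $\src{C_S}$, reflexivity gives $\src{\behav{C_S\hole{P}}} \subseteq \src{\behav{C_S\hole{P}}}$, hence $\src{\behav{C_S\hole{P}}} \in H$. Applying \rschp to this $H$ and an arbitrary $\trg{C_T}$ then yields $\trg{\behav{C_T\hole{\cmp{P}}}} \in H$, which by definition of $H$ means there is some $\src{C_S}$ with $\trg{\behav{C_T\hole{\cmp{P}}}} \subseteq \src{\behav{C_S\hole{P}}}$, i.e.\ precisely the clause $\forall t.~\trg{C_T\hole{\cmp{P}} \sem}\ t \Rightarrow \src{C_S\hole{P}\sem}\ t$ required by \pf{\rschp}.

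The main obstacle is simply identifying that the witness must be the \emph{subset-closure} of the family of source behaviors of $\src{P}$ (a single source behavior would not be subset-closed, and would not satisfy the premise for all $\src{C_S}$ simultaneously), and then checking that this closure genuinely lies in \ii{SC}. Once that is settled, reflexivity and transitivity of $\subseteq$ finish everything, so the whole argument is the \rtp/\pf{\rtp} equivalence with ``$t \in \pi$'' replaced by ``$\subseteq \src{\behav{\cdot}}$'' and the canonical property replaced by its downward closure; as with \Cref{thm:rtp-eq}, I would ultimately discharge it in Coq in the file Criteria.v.
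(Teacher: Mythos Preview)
Your proof is correct and follows exactly the natural generalization of the \rtp/\pf{\rtp} argument that the paper's Coq development (file Criteria.v, theorem RSCHC\_RSCHP) encodes. The witness hyperproperty you chose---the downward closure of the family of source behaviors of $\src{P}$---is precisely the right one, and your two verifications (that it lies in $\ii{SC}$ and that the premise holds) are the only nontrivial steps.
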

\begin{proof}
	See file Criteria.v, RSCHC\_RSCHP.
\end{proof}

\subsubsection{Robust \texorpdfstring{$K$}{K}-Subset-Closed Hyperproperty Preservation}\label{sec:k-subset-closed}

While for \(K\)-Hypersafety a set of \(K\) bad finite prefixes is
enough to refuse a behavior, for \(K\)-subset-closed hyperproperties,
\(K\) complete traces could be necessary.


\begin{definition}[\(K\)-Subset-Closed Hyperproperties \cite{mastroeni2018verifying}]\label{def:ksch} 
  We define the set of \(K\)-Subset-Closed Hyperproperties, \ii{KSC}:
  \[
  \ii{KSC} \triangleq \myset{H}{ \forall b. ~ b \notin H \iff (\exists T_K \subseteq b. ~(|T_K| \leq K \wedge T_K \notin H)) } 
  \]

  A hyperproperty \(H\) is \(K\)-subset-closed if and only if \(H \in \ii{KSC}\).
\end{definition}

\begin{definition}[Robust $K$-Subset-Closed Hyperproperty Preservation (\rkschp)]\label{def:rkschp}
  $$
  \begin{multlined}
    \criteria{\rkschp}{rkschp}:\quad
    \forall H \in \ii{KSC}.~\forall\src{P}.~
    (\forall\src{C_S} \ldotp
    \src{\behav{C_S\hole{P}}} \in H)
    \Rightarrow\\
    (\forall \trg{C_T} \ldotp
    \trg{\behav{C_T\hole{\cmp{P}}}} \in H)\quad
  \end{multlined}
  $$
\end{definition}


\begin{definition}[Equivalent Characterization of \rkschp (\pf{\rkschp})]\label{def:rkschc}
	\begin{align*}
		\pf{\rkschp}:\quad
		&\
		\forall \src{P}, \trg{C_T}. \forall \com{\set{t}}.\card{\com{\set{t}}}=K.
                \\
		&\
		(\com{\set{t}}\subseteq \behavt{ \trg{C_T}\hole{ \compgen{ \src{P} } } } )
		\Rightarrow
		\exists \src{C_S}.(\com{\set{t}}\subseteq \behavs{ \src{C_S}\hole{ \src{P} }} )
	\end{align*}
\end{definition}

\begin{theorem}[\rkschp and \pf{\rkschp} are equivalent]\label{thm:rkschp-eq}
	\begin{align*}
		\rkschp \iff \pf{\rkschp}
	\end{align*}
\end{theorem}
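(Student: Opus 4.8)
The plan is to prove the biconditional by two separate implications, following the same template as the \rtp/\pf{\rtp} equivalence (\Cref{thm:rtp-eq}): the direction from the characterization to the criterion is a direct unfolding, while the direction from the criterion to the characterization proceeds by instantiating \rkschp{} with a carefully chosen $K$-subset-closed hyperproperty. A small lemma I will establish up front is that $\ii{KSC} \subseteq \ii{SC}$, i.e.\ every $K$-subset-closed hyperproperty is subset-closed: if $T_K \subseteq b_1 \subseteq b_2$ and $T_K \notin H$ with $\card{T_K} \leq K$, then $T_K$ refutes both $b_1$ and $b_2$, so $b_2 \in H \Rightarrow b_1 \in H$. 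This subset-closure is used repeatedly below.

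For $\rkschp \Rightarrow \pf{\rkschp}$, fix $\src{P}$, $\trg{C_T}$, and a set $\set{t}$ with $\card{\set{t}} = K$ and $\set{t} \subseteq \behavt{\trg{C_T}\hole{\compgen{\src{P}}}}$. I would apply \rkschp{} to $\src{P}$ and to the hyperproperty $H_{\src{P}} = \myset{b}{\forall \set{s} \subseteq b.~ \card{\set{s}} \leq K \Rightarrow \exists \src{C_S}.~ \set{s} \subseteq \behavs{\src{C_S}\hole{\src{P}}}}$, the set of behaviours all of whose size-$\leq K$ subsets are reproducible by some source context linked with $\src{P}$. Two facts must be checked. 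First, $H_{\src{P}} \in \ii{KSC}$: it is subset-closed by construction, and it is $K$-refutable because whenever $b \notin H_{\src{P}}$ the offending subset $\set{s}$ witnessing this (with $\card{\set{s}} \leq K$) is itself a subset of $b$ lying outside $H_{\src{P}}$, so it serves as the required $T_K$. Second, $\src{P}$ robustly satisfies $H_{\src{P}}$: for any $\src{C_S}$, every size-$\leq K$ subset of $\behavs{\src{C_S}\hole{\src{P}}}$ is trivially reproduced by $\src{C_S}$ itself, hence $\behavs{\src{C_S}\hole{\src{P}}} \in H_{\src{P}}$. Then \rkschp{} yields $\behavt{\trg{C_T}\hole{\compgen{\src{P}}}} \in H_{\src{P}}$, and applying the membership condition to $\set{t}$ produces the desired $\src{C_S}$.

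For $\pf{\rkschp} \Rightarrow \rkschp$, fix $H \in \ii{KSC}$, $\src{P}$ with $\forall \src{C_S}.~ \behavs{\src{C_S}\hole{\src{P}}} \in H$, and a target context $\trg{C_T}$; the goal is $\behavt{\trg{C_T}\hole{\compgen{\src{P}}}} \in H$. Reading the $K$-refutability condition contrapositively, this reduces to showing that every subset $\set{s} \subseteq \behavt{\trg{C_T}\hole{\compgen{\src{P}}}}$ with $\card{\set{s}} \leq K$ lies in $H$. When $\card{\set{s}} = K$, \pf{\rkschp} supplies a source context $\src{C_S}$ with $\set{s} \subseteq \behavs{\src{C_S}\hole{\src{P}}}$; by the hypothesis this source behaviour is in $H$, and by subset-closure so is $\set{s}$.

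I expect the only genuinely delicate step to be the mismatch between the cardinality constraint $\card{\set{t}} = K$ in \pf{\rkschp} and the constraint $\card{T_K} \leq K$ in the definition of $\ii{KSC}$, which surfaces in the previous paragraph precisely when $\card{\set{s}} < K$. The plan is to bridge this gap by padding: extend such a small $\set{s}$ to a subset $\set{s}' \subseteq \behavt{\trg{C_T}\hole{\compgen{\src{P}}}}$ of cardinality exactly $K$, apply \pf{\rkschp} to $\set{s}'$, and then descend from $\set{s}'$ back to $\set{s}$ by subset-closure of $H$. This works cleanly whenever the behaviour contains at least $K$ distinct traces; the residual case of a behaviour with fewer than $K$ traces must be treated separately, relying on the ambient trace set supplying enough distinct traces to pad with (equivalently, on reading the characterization's cardinality bound as $\leq K$, the two formulations being interderivable by exactly this padding argument). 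This reconciliation, rather than either implication per se, is the main obstacle; the remaining reasoning mirrors the established \rschp/\pf{\rschp} proof.
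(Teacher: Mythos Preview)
Your approach is the standard one and matches what the paper's (Coq-delegated) proof must be doing: instantiate \rkschp{} with the hyperproperty ``all size-$\leq K$ subsets are source-reproducible'' for the forward direction, and use $K$-refutability plus subset-closure for the backward direction. The paper itself gives no detailed argument here (it just says ``analogous to \Cref{thm:rtschp-eq}'', which in turn points to the Coq file), so there is nothing substantive to compare on technique.

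Your diagnosis of the cardinality mismatch is correct, but your proposed resolution is not. Padding ``with the ambient trace set'' cannot work: the hypothesis of \pf{\rkschp} requires $\set{t} \subseteq \behavt{\trg{C_T}\hole{\compgen{\src{P}}}}$, so any padding traces must come from the target behaviour itself, not from the universe of traces. When that behaviour has fewer than $K$ distinct traces, no subset of size exactly $K$ exists, \pf{\rkschp} (as literally written with $\card{\set{t}} = K$) is vacuous, and you get no information whatsoever. Consequently the $=K$ and $\leq K$ formulations of the characterization are \emph{not} interderivable by padding; the $=K$ version is strictly weaker, and a compilation chain where some target behaviour has a single trace not reproducible in the source can satisfy \pf{\rkschp}$_{=K}$ vacuously while violating \rkschp. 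The right fix is the one you mention parenthetically: take the bound as $\leq K$ (equivalently, treat $\set{t}$ as a $K$-tuple allowing repetitions, as in the relational characterizations \pf{\rtrtp} and \pf{\rtrsp}). That is almost certainly what the Coq formalization does; the $=K$ in the paper's definition should be read this way. Once you adopt $\leq K$, both directions go through exactly as you outline, with no residual case.
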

\begin{proof}
	Analogous to that of \Cref{thm:rtschp-eq} below.
\end{proof}

$\pf{\rtschp}$ is an instance of \Cref{def:rkschc}
with $\card{\com{\set{t}}}=2$.
Similarly, $\rtschp$ is an instance of \Cref{def:rkschp} for $\ii{2SC}$.
\begin{theorem}[\rtschp and \pf{\rtschp} are equivalent]\label{thm:rtschp-eq}
	\begin{align*}
		\rtschp \iff \pf{\rtschp}
	\end{align*}
\end{theorem}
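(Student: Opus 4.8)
The plan is to prove the two implications separately, following the template of \Cref{thm:rtp-eq}: for $\rtschp \Rightarrow \pf{\rtschp}$ I would instantiate the robust-preservation criterion with a carefully chosen $2$-subset-closed hyperproperty whose source premise holds by reflexivity, and for the converse I would exploit the fact that membership of a behavior in a $2$-subset-closed hyperproperty is entirely determined by its subsets of cardinality at most two.

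For the forward direction, fix $\src{P}$, $\trg{C_T}$, and a pair $\com{\set{t}}$ with $\card{\com{\set{t}}} = 2$ and $\com{\set{t}} \subseteq \behavt{\trg{C_T}\hole{\compgen{\src{P}}}}$. I would apply $\rtschp$ at this same $\src{P}$ and at the hyperproperty
\[
  H = \myset{ b }{ \forall \com{\set{s}}.~ \card{\com{\set{s}}} {=} 2 \wedge \com{\set{s}} \subseteq b \Rightarrow \exists \src{C_S}.~ \com{\set{s}} \subseteq \behavs{\src{C_S}\hole{\src{P}}} },
\]
i.e.\ the behaviors all of whose two-element subsets are realizable in the source for this fixed $\src{P}$. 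Two checks are then needed. First, $H \in \ii{2SC}$: a behavior $b$ fails $H$ exactly when it contains a ``bad pair'' (a two-element subset realizable by no source context), and since the only two-element subset of a bad pair is itself, this failure is always witnessed by a subset of size at most two; conversely, any bad pair inside a subset is also a bad pair of the whole, which is precisely the iff in the definition of $\ii{KSC}$ instantiated at $K{=}2$. Second, the source premise $\forall\src{C_S}.~\behavs{\src{C_S}\hole{\src{P}}} \in H$ holds by reflexivity, taking the witnessing context in the definition of $H$ to be the context itself. Then $\rtschp$ yields $\behavt{\trg{C_T}\hole{\compgen{\src{P}}}} \in H$, and unfolding $H$ at our $\com{\set{t}}$ gives exactly the source context demanded by $\pf{\rtschp}$.

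For the converse, assume $\pf{\rtschp}$, fix $H \in \ii{2SC}$ and $\src{P}$ with $\forall\src{C_S}.~\behavs{\src{C_S}\hole{\src{P}}} \in H$, and take any $\trg{C_T}$; write $b_T = \behavt{\trg{C_T}\hole{\compgen{\src{P}}}}$. Unfolding the definition of $\ii{2SC}$, membership $b_T \in H$ is equivalent to every subset of $b_T$ of cardinality at most two lying in $H$, and the same unfolding also yields subset-closure of $H$ (a member's small subsets are again members). For a two-element subset $\com{\set{t}} \subseteq b_T$ I would invoke $\pf{\rtschp}$ to obtain a source context $\src{C_S}$ with $\com{\set{t}} \subseteq \behavs{\src{C_S}\hole{\src{P}}}$, and then conclude $\com{\set{t}} \in H$ from the source premise together with subset-closure. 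The smaller subsets (the empty set and singletons) I would handle by the same move, using that at least one source context exists to place $\emptyset$ in $H$, and duplicating a trace so that $\pf{\rtschp}$ still applies to a singleton.

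I expect the main obstacle to be the bookkeeping at the boundary of the cardinality bound in the converse direction: reconciling the ``exactly two traces'' appearing in $\pf{\rtschp}$ with the ``at most two'' in the definition of $\ii{2SC}$, so that empty and singleton target behaviors are covered. The clean resolution is to read the premise of $\pf{\rtschp}$ as quantifying over two \emph{possibly equal} traces, so that a singleton behavior $\{\com{t_0}\}$ is handled by instantiating both traces with $\com{t_0}$ and descending through subset-closure; getting this degenerate case right, rather than the main size-two argument, is the delicate part. The $K$-ary version, \Cref{thm:rkschp-eq}, then follows by exactly the same argument with $2$ replaced by $K$ throughout.
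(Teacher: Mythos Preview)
Your approach is correct and follows exactly the template the paper uses for its explicit equivalence proofs such as \Cref{thm:rtp-eq}; the paper gives no detailed argument for this particular theorem, deferring to the Coq development. The forward direction is fine: your $H$ is indeed in $\ii{2SC}$ (bad behaviors are witnessed by a bad two-element subset, and sets of size ${<}2$ lie in $H$ vacuously), and the source premise holds by taking each $\src{C_S}$ as its own witness. The converse direction is also right in spirit, and you correctly isolate the only delicate point.

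One sharpening of that point: with $\card{\com{\set{t}}} = 2$ read literally as set cardinality, the equivalence actually \emph{fails}. If some $\trg{C_T}[\cmp{P}]$ has singleton behavior $\{t_0\}$ and no source context produces $t_0$, then $\pf{\rtschp}$ holds vacuously (no two-element subset of $\{t_0\}$ exists), yet $\rtschp$ fails for the $2$-subset-closed hyperproperty $\{\,b \mid t_0 \notin b\,\}$. So ``duplicating a trace'' does not rescue the set-based reading: $\{t_0,t_0\}$ still has cardinality $1$. Your resolution --- read the quantifier as ranging over two possibly-equal traces, i.e.\ over tuples or equivalently over sets of size $\leq 2$ --- is exactly what is needed, and it matches the ``$|T_K| \leq K$'' in the definition of $\ii{KSC}$. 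With that reading your argument goes through cleanly, and this is presumably how the Coq statement is phrased.
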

\begin{proof}
	See file Criteria.v, theorem R2SCHC\_R2SCHPC.
\end{proof}

\subsubsection{Robust Hypersafety Preservation}\label{sec:hypersafety}

Robust Hypersafety Preservation (\oldautoref{sec:rhsp})
is the criterion corresponding to the robust
preservation of hypersafety properties (aka. safety hyperproperties), \IE
hyperproperties that can be refuted by a finite number of finite trace prefixes.

Hypersafety is a generalization of safety that captures many important
security properties, such as noninterference.
Informally, a hypersafety property disallows a certain finite observation,
\IE a finite set of finite prefixes.  This observation
\(o\in\ii{Obs}\) is a ``bad observation'', and all its extensions
cannot satisfy the hyperproperty.

\begin{definition}[Observations]\label{def:obs}
  We define the set of observations, \ii{Obs}:
  \[
  \ii{Obs} \triangleq 2^{\ii{FinPref}}_{\ii{Fin}}
  \]
\end{definition}

\begin{definition}[Hypersafety Property]\label{def:hypersafety}
  We define the set of hypersafety properties, \ii{Hypersafety}:
  \begin{align*}  
    \ii{Hypersafety} \triangleq \myset{ H }{
      \forall b \not\in H.~
      (\exists o \in \ii{Obs}.~ o {\leq} b \wedge
      (\forall b' {\geq} o.~ b' \not\in H)) }
  \end{align*}

  A hyperproperty \(H\) is a safety hyperproperty, or a hypersafety property, if and only if
  \(H \in \ii{Hypersafety}\).
\end{definition}

\begin{definition}[Robust Hypersafety Preservation (\rhsp)]\label{def:rhsp}
  $$
  \begin{multlined}
  \criteria{\rhsp}{rhsp}:\quad
  \forall H \in \ii{Hypersafety}.~\forall\src{P}.~
  (\forall\src{C_S} \ldotp
  \src{\behav{C_S\hole{P}}} \in H)
  \Rightarrow\\
  (\forall \trg{C_T} \ldotp
  \trg{\behav{C_T\hole{\cmp{P}}}} \in H)\quad
  \end{multlined}
  $$
\end{definition}

The property-free characterization captures the fact that if a
compiled program can produce an observation \(o\) (in the sense that
it contains a prefix of each trace of the program) that refutes a
hypersafety property, then the same observation can also be produced
by the source program.
\begin{definition}[Equivalent Characterization of \rhsp (\pf{\rhsp})]\label{def:rhsc}
  \begin{align*}
    \pf{\rhsp}:\quad
    \forall\src{P}.~ \forall\trg{C_T}.~ \forall o \in \ii{Obs}.~
    &o \leq \trg{\behav{C_T\hole{\cmp{P}}}} \Rightarrow\\
    &\exists \src{C_S}.~
    o \leq \src{\behav{C_S\hole{P}}}
  \end{align*}
\end{definition}

\begin{theorem}[\rhsp and \pf{\rhsp} are equivalent]\label{thm:rhsp-eq}
	\begin{align*}
		\rhsp \iff \pf{\rhsp}
	\end{align*}
\end{theorem}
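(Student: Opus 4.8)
The plan is to prove the two implications separately, mirroring the proof of \Cref{thm:rtp-eq} but lifting everything one level: from traces and finite prefixes to behaviors and observations. The pivot in both directions is the shape of \ii{Hypersafety}: a behavior $b \not\in H$ is witnessed by a ``bad'' observation $o \in \ii{Obs}$ that is simultaneously a prefix of $b$ (so $o \leq b$) and \emph{irremediable} ($\forall b' \geq o.~ b' \not\in H$). I expect both directions to go through cleanly by contradiction, reducing to a single nontrivial check.

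For the $\Leftarrow$ direction (\pf{\rhsp} implies \rhsp), I would fix a hypersafety $H$ and a source program $\src{P}$ satisfying the premise $\forall \src{C_S}.~ \src{\behav{C_S\hole{P}}} \in H$, take an arbitrary target context $\trg{C_T}$, and assume for contradiction that $\trg{\behav{C_T\hole{\cmp{P}}}} \not\in H$. Hypersafety of $H$ then yields an observation $o \in \ii{Obs}$ with $o \leq \trg{\behav{C_T\hole{\cmp{P}}}}$ and $\forall b' \geq o.~ b' \not\in H$. Feeding $\src{P}$, $\trg{C_T}$ and this $o$ into \pf{\rhsp} produces a source context $\src{C_S}$ with $o \leq \src{\behav{C_S\hole{P}}}$; since $\src{\behav{C_S\hole{P}}}$ is then an extension of $o$, irremediability gives $\src{\behav{C_S\hole{P}}} \not\in H$, contradicting the premise.

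For the $\Rightarrow$ direction (\rhsp implies \pf{\rhsp}), the idea is to instantiate \rhsp with a tailored hypersafety property. Given $\src{P}$, $\trg{C_T}$ and $o$ with $o \leq \trg{\behav{C_T\hole{\cmp{P}}}}$, I would assume for contradiction that no source context reproduces $o$, \IE $\forall \src{C_S}.~ o \not\leq \src{\behav{C_S\hole{P}}}$, and apply \rhsp to the hyperproperty $H_o = \{ b ~|~ o \not\leq b \}$, which rejects exactly the behaviors extending $o$. Under the contradiction hypothesis the premise of \rhsp holds for $\src{P}$, so \rhsp forces $\trg{\behav{C_T\hole{\cmp{P}}}} \in H_o$, \IE $o \not\leq \trg{\behav{C_T\hole{\cmp{P}}}}$, contradicting $o \leq \trg{\behav{C_T\hole{\cmp{P}}}}$.

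The main obstacle, and essentially the only step that is not bookkeeping, is checking that $H_o$ genuinely lies in \ii{Hypersafety}. This is a self-witnessing argument: for any $b \not\in H_o$, \IE any $b$ with $o \leq b$, the observation $o$ itself serves as the required bad observation, since $o \leq b$ holds by assumption and every extension $b'$ (a behavior with $o \leq b'$) fails to be in $H_o$ directly by the definition of $H_o$. The remaining care is purely in keeping the lifted prefix relation straight---reading $b' \geq o$ as $o \leq b'$ and $o \leq b$ as $\forall m \in o.~ \exists t \in b.~ m \leq t$---so that the irremediability clauses of \ii{Hypersafety} line up with the witnesses produced by \pf{\rhsp} and by \rhsp in the two directions.
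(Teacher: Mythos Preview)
Your proof is correct and follows exactly the expected approach: the paper itself only cites the Coq development (``See file Criteria.v, theorem RHSC\_RHSP'') rather than spelling out the argument, but your two directions are the natural lift of the explicit \rtp/\pf{\rtp} proof to the hypersafety level, with $H_o = \{ b \mid o \not\leq b \}$ playing the role that $\pi = \{ t \mid \exists \src{C_S}.~\src{C_S\hole{P}\sem} t \}$ plays there. The self-witnessing check that $H_o$ is hypersafety is indeed the only nontrivial step, and your verification of it is correct.
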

\begin{proof}
	See file Criteria.v, theorem RHSC\_RHSP.
\end{proof}

\subsubsection{Robust \texorpdfstring{$K$}{K}-Hypersafety Preservation}\label{sec:k-hypersafety}

\(K\)-hypersafety properties are hypersafety properties that can be
refuted by observations of size at most \(K\), that is one need only
\(K\) appropriately chosen finite prefixes to prove a program doesn't
satisfy the hyperproperty.

\begin{definition}[\(K\)-Observations]\label{def:kobs}
  We define the set of \(K\)-observations, \IE observations of cardinal at most \(K\):
  \[ \ii{Obs}_K \triangleq 2^{\ii{FinPref}}_{\ii{Fin(K)}} \]
\end{definition}

\begin{definition}[\(K\)-Hypersafety Property]\label{def:khypersafety}
  We define the set of \(K\)-hypersafety properties:
  \begin{align*}
    \ii{KHypersafety} \triangleq \myset{ H }{
      \forall b \not\in H.~
      (\exists o \in \ii{Obs}_K.~ o {\leq} b \wedge
      (\forall b' {\geq} o.~ b' \not\in H)) }
  \end{align*}
  A hyperproperty \(H\) is \(K\)-hypersafety if and only if \(H\in\ii{KHypersafety}\).
\end{definition}

\begin{definition}[Robust $K$-Hypersafety Preservation (\rkhsp)]\label{def:rkhsp}
  $$
  \begin{multlined}
    \criteria{\rkhsp}{rkhsp}:\quad
    \forall H \in \ii{KHypersafety}.~\forall\src{P}.~
    (\forall\src{C_S} \ldotp
    \src{\behav{C_S\hole{P}}} \in H)
    \Rightarrow\\
    (\forall \trg{C_T} \ldotp
    \trg{\behav{C_T\hole{\cmp{P}}}} \in H)\quad
  \end{multlined}
  $$
\end{definition}

The property-free characterization has the same intuition, except it is restricted to
behaviors of size 
\(K\).
\begin{definition}[Equivalent Characterization of \rkhsp (\pf{\rkhsp})]\label{def:rkhsc}
  \begin{align*}
    \pf{\rkhsp}:\quad
    \forall \src{P}, \trg{C_T}.~ \forall \com{\set{m}}.~ \card{\com{\set{m}}}=K
    \Rightarrow~
    &
    \com{\set{m}}\leq \behavt{ \trg{C_T}\hole{ \compgen{ \src{P} } } }
    \Rightarrow\\
    &\exists \src{C_S}. \com{\set{m}}\leq \behavs{ \src{C_S}\hole{ \src{P} }}
  \end{align*}
\end{definition}
\jt{\(o\) or \(\set{m}\)?}

\begin{theorem}[\rkhsp and \pf{\rkhsp} are equivalent]\label{thm:rkhsc-eq}
  \begin{align*}
    \rkhsp \iff \pf{\rkhsp}
  \end{align*}
\end{theorem}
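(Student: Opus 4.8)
The plan is to follow the same two-step recipe used for the proof of \Cref{thm:rtp-eq}: establish each direction of the biconditional separately, with the interesting direction ($\rkhsp \Rightarrow \pf{\rkhsp}$) driven by instantiating $\rkhsp$ at a carefully chosen witnessing $K$-hypersafety property, and the routine direction ($\pf{\rkhsp} \Rightarrow \rkhsp$) driven by the definition of $K$-hypersafety together with the property-free back-translation. Throughout I would treat an observation $o \in \ii{Obs}_K$ and a size-$K$ set of finite prefixes $\com{\set{m}}$ interchangeably.

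For $\rkhsp \Rightarrow \pf{\rkhsp}$, fix $\src{P}$, $\trg{C_T}$, and $\com{\set{m}}$ with $\card{\com{\set{m}}} = K$ and $\com{\set{m}} \leq \behavt{\trg{C_T}\hole{\compgen{\src{P}}}}$. The key idea is to apply $\rkhsp$ to the hyperproperty of ``source-realizable observations''
\[
H_{\src{P}} \triangleq \myset{ b }{ \forall o \in \ii{Obs}_K.~ o \leq b \Rightarrow \exists \src{C_S}.~ o \leq \behavs{\src{C_S}\hole{\src{P}}} }.
\]
I would first check $H_{\src{P}} \in \ii{KHypersafety}$: if $b \not\in H_{\src{P}}$ then by definition some $o \in \ii{Obs}_K$ has $o \leq b$ and is realized by no source context, and this very $o$ is a bad observation witnessing $b \not\in H_{\src{P}}$, while $o \leq b'$ for every $b' \geq o$ forces $b' \not\in H_{\src{P}}$ as well. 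Next, the premise of $\rkhsp$ holds trivially: for any $\src{C_S}$, every $o \leq \behavs{\src{C_S}\hole{\src{P}}}$ is realized by $\src{C_S}$ itself, so $\behavs{\src{C_S}\hole{\src{P}}} \in H_{\src{P}}$. Hence $\rkhsp$ yields $\behavt{\trg{C_T}\hole{\compgen{\src{P}}}} \in H_{\src{P}}$, and unfolding this membership at $o = \com{\set{m}}$ (which lies in $\ii{Obs}_K$ and satisfies $\com{\set{m}} \leq \behavt{\trg{C_T}\hole{\compgen{\src{P}}}}$) produces the required $\src{C_S}$ with $\com{\set{m}} \leq \behavs{\src{C_S}\hole{\src{P}}}$.

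For the converse $\pf{\rkhsp} \Rightarrow \rkhsp$, I would argue by contraposition on the conclusion. Fix $H \in \ii{KHypersafety}$ and a $\src{P}$ satisfying the robust-satisfaction premise, and suppose some $\trg{C_T}$ gives $b \triangleq \behavt{\trg{C_T}\hole{\compgen{\src{P}}}} \not\in H$. Since $H$ is $K$-hypersafety, there is a bad observation $o \in \ii{Obs}_K$ with $o \leq b$ and $\forall b' \geq o.~ b' \not\in H$. Applying $\pf{\rkhsp}$ to $o$ (using $o \leq b$) returns a source context $\src{C_S}$ with $o \leq \behavs{\src{C_S}\hole{\src{P}}}$; taking $b' = \behavs{\src{C_S}\hole{\src{P}}} \geq o$ gives $\behavs{\src{C_S}\hole{\src{P}}} \not\in H$, contradicting the premise that $\src{P}$ robustly satisfies $H$. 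Therefore $b \in H$, which establishes $\rkhsp$.

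The main obstacle I anticipate is purely bookkeeping: reconciling the ``cardinality at most $K$'' convention of $\ii{Obs}_K$ with the ``cardinality exactly $K$'' quantification in $\pf{\rkhsp}$. I would resolve this by padding any observation of size below $K$ with copies of the empty not-yet-terminated prefix $\circ$, which satisfies $\circ \leq t$ for every trace $t$ and hence $\circ \leq b$ for every nonempty behavior $b$; padding preserves both $o \leq b$ and the set $\myset{b'}{b' \geq o}$, so it affects neither the hypersafety witness in the converse direction nor the instantiation in the forward direction. Modulo this padding, the argument is a direct transcription of the $\rtp$ and $\rhsp$ equivalence proofs (\Cref{thm:rtp-eq} and \Cref{thm:rhsp-eq}), specialized to finite observations of bounded size.
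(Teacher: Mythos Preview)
Your two directions follow exactly the standard argument the paper has in mind; its own proof simply says the result is ``analogous'' to the $K=2$ instance and defers to the Coq development, so you have actually written out what the paper leaves implicit. The one technical wrinkle worth flagging is your padding step: observations are \emph{sets}, so you cannot add several ``copies'' of $\circ$---adjoining $\circ$ raises the cardinality by at most one, which does not suffice when $|o| < K-1$. The cleanest resolution is to observe that the constraint $\card{\com{\set{m}}}=K$ in the paper's stated $\pf{\rkhsp}$ ought really to read $\card{\com{\set{m}}}\leq K$, matching the appendix definition of $\ii{Obs}_K$ as observations of cardinality \emph{at most} $K$; with that reading the mismatch vanishes and no padding is needed. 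Modulo this bookkeeping point, the structure of your proof is correct and complete.
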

\begin{proof}
	Analogous to \Cref{thm:rthsc-eq} below.
\end{proof}

\criteria{\rthsp}{rthsp} is an instance of \Cref{def:rkhsp}
for $K=2$.
Similarly, $\pf{\rthsp}$ is an instance of \Cref{def:rkhsc} for $K=2$.
\begin{theorem}[\rthsp and \pf{\rthsp} are equivalent]\label{thm:rthsc-eq}
	\begin{align*}
		\rthsp \iff \pf{\rthsp}
	\end{align*}
\end{theorem}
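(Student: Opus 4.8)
The plan is to prove the two implications separately; once the right hyperproperty and observation are in hand, both directions are short. For $\rthsp \Rightarrow \pf{\rthsp}$ I would instantiate the universally quantified hyperproperty in \rthsp with a $2$-hypersafety property tailored to the given observation, and for $\pf{\rthsp} \Rightarrow \rthsp$ I would extract a ``bad observation'' from the defining clause of $2$-hypersafety and feed it to \pf{\rthsp}. The same reasoning, uniform in the arity, yields the analogous statement for general $K$.

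For the forward direction, I would fix $\src{P}$, $\trg{C_T}$, and an observation $o \in \ii{Obs}_2$ with $o \leq \behavt{\trg{C_T}\hole{\compgen{\src{P}}}}$, and define the hyperproperty $H_o = \myset{b}{o \not\leq b}$, i.e., the set of behaviours that do \emph{not} have $o$ as a prefix. The first step is to check that $H_o$ is $2$-hypersafety: whenever $b \notin H_o$ we have $o \leq b$, and $o$ itself is the witnessing bad observation, since $o \in \ii{Obs}_2$ and every $b' \geq o$ satisfies $o \leq b'$ and hence $b' \notin H_o$. Now $\behavt{\trg{C_T}\hole{\compgen{\src{P}}}} \notin H_o$ by assumption, so the target-side conjunct of \rthsp fails for this $\trg{C_T}$; taking the contrapositive of \rthsp instantiated at $H_o$ and $\src{P}$, the source-side conjunct must also fail, i.e., there is a source context $\src{C_S}$ with $\behavs{\src{C_S}\hole{\src{P}}} \notin H_o$, which unfolds to $o \leq \behavs{\src{C_S}\hole{\src{P}}}$, exactly the conclusion of \pf{\rthsp}.

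For the backward direction, I would fix a $2$-hypersafety property $H$, a program $\src{P}$ such that $\behavs{\src{C_S}\hole{\src{P}}} \in H$ for every source context $\src{C_S}$, and a target context $\trg{C_T}$, and show $\behavt{\trg{C_T}\hole{\compgen{\src{P}}}} \in H$. Suppose not: then by the definition of $2$-hypersafety there is an observation $o$ with $o \leq \behavt{\trg{C_T}\hole{\compgen{\src{P}}}}$ such that every $b' \geq o$ lies outside $H$. Applying \pf{\rthsp} to $o$ yields a source context $\src{C_S}$ with $o \leq \behavs{\src{C_S}\hole{\src{P}}}$, so $\behavs{\src{C_S}\hole{\src{P}}} \geq o$ is a behaviour outside $H$, contradicting the source-side hypothesis.

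The one genuinely fiddly point, and the step I would be most careful about, is the cardinality bookkeeping between the ``exactly $2$'' quantification in \pf{\rthsp} and the bad observation supplied by $2$-hypersafety: if that witness has fewer than two prefixes I would pad it with an extra prefix that is already a prefix of some trace of the behaviour in play (for instance the empty prefix), which enlarges $o$ to cardinality exactly $2$ without changing the relation $o \leq b$ for the relevant behaviours. Everything else is a routine double use of the contrapositive together with the verification that $H_o$ is $2$-hypersafety, which is the crux of the argument.
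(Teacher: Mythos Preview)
Your proof is correct and follows the canonical approach for these equivalences; the paper itself only points to the Coq development (file \texttt{Criteria.v}, theorem \texttt{R2HSC\_R2HSP}), but the argument there is exactly the one you give: instantiate the hyperproperty with $H_o = \{b \mid o \not\leq b\}$ in the forward direction, and extract the bad observation from the $2$-hypersafety clause in the backward direction.

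Your attention to the cardinality mismatch is well placed and is the one genuinely non-trivial detail. The paper's $\ii{Obs}_2$ is defined (in the appendix) as observations of cardinality \emph{at most} $2$, while $\pf{\rthsp}$ quantifies over sets with cardinality \emph{exactly} $2$, so the padding step is indeed needed in the backward direction. Your fix is the right one: enlarging $o$ to $o' \supseteq o$ preserves the implication $o' \leq b' \Rightarrow o \leq b' \Rightarrow b' \notin H$, and choosing the extra prefix to already lie below the target behaviour (the empty not-yet-terminated prefix works whenever behaviours are nonempty, which is standard) ensures $o' \leq \behavt{\trg{C_T}\hole{\compgen{\src{P}}}}$. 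Nothing is missing.
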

\begin{proof}
	See file Criteria.v, theorem R2HSC\_R2HSP.
\end{proof}

A particular instance of \rthsp is \criteria{\rtinip}{rtinip}
(\oldautoref{sec:rnip}).

\subsubsection{Robust Hyperliveness Preservation}\label{sec:hyperliveness}

\begin{definition}[Hyperliveness Property]\label{def:hyperliveness}
  We define the set of hyperliveness properties (or liveness hyperproperties) \ii{Hyperliveness}:
  \begin{align*}
    \ii{Hyperliveness} \triangleq
    \myset{ H }{ \forall o \in \ii{Obs}.~ \exists b {\geq} o.~ b \in H  }
  \end{align*}
  A hyperproperty \(H\) is a hyperliveness property if and only if \(H\in\ii{Hyperliveness}\).
\end{definition}


\begin{definition}[Robust Hyperliveness Preservation (\rhlp)]\label{def:rhlp}
  $$
  \begin{multlined}
    \criteria{\rhlp}{rhlp}:\quad
    \forall H \in \ii{Hyperliveness}.~\forall\src{P}.~
    (\forall\src{C_S} \ldotp
    \src{\behav{C_S\hole{P}}} \in H)
    \Rightarrow\\
    (\forall \trg{C_T} \ldotp
    \trg{\behav{C_T\hole{\cmp{P}}}} \in H)\quad
  \end{multlined}
  $$
\end{definition}

We give no property-free characterization for \rhlp,
since \rhlp collapses with \rhp, as was pointed out in
\oldautoref{sec:hyperliveness}:

\begin{theorem}[\rhp and \rhlp are equivalent]\label{thm:rhp-rhlp-eq}
  \begin{align*}
    \rhp \iff \rhlp
  \end{align*}
\end{theorem}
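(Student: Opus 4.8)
The plan is to prove the two implications separately, dispatching one by class inclusion and treating the other through the property-free characterization \pf{\rhp}.

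First, $\rhp \Rightarrow \rhlp$ is immediate. Since \rhp is exactly $\rhp(2^{2^\ii{Trace}})$ and \rhlp is $\rhp(\ii{Hyperliveness})$, and every hyperliveness hyperproperty is in particular an arbitrary hyperproperty ($\ii{Hyperliveness} \subseteq 2^{2^\ii{Trace}}$), robustly preserving the larger class trivially entails robustly preserving the subclass.

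For the converse $\rhlp \Rightarrow \rhp$, I would work with the equivalent characterization \pf{\rhp}, justified by \Cref{thm:hrp-eq}, so that it suffices to produce, for every $\src{P}$ and $\trg{C_T}$, a source context $\src{C_S}$ with $\src{\behav{C_S\hole{P}}} = \trg{\behav{C_T\hole{\cmp{P}}}}$. Write $b_T = \trg{\behav{C_T\hole{\cmp{P}}}}$ for the target behavior we must reproduce, and consider the hyperproperty
\[
H = \{ b ~|~ b \neq b_T \}
\]
containing every behavior other than $b_T$. The key lemma is that $H$ is hyperliveness: given any observation $o \in \ii{Obs}$, both the finite behavior obtained by completing each finite prefix in $o$ to a terminating trace and the full behavior $\ii{Trace}$ extend $o$; these two behaviors are distinct (one is finite, the other infinite), so at least one of them differs from $b_T$ and hence lies in $H$. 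This establishes $\forall o \in \ii{Obs}.~\exists b {\geq} o.~ b \in H$, which is precisely the definition of \ii{Hyperliveness}.

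With this lemma in hand, I would conclude by contradiction. Suppose no source context reproduces $b_T$, i.e.\ $\forall \src{C_S}.~ \src{\behav{C_S\hole{P}}} \neq b_T$; this says exactly that $\src{P}$ robustly satisfies $H$ in the source. Applying \rhlp to the hyperliveness hyperproperty $H$ and the program $\src{P}$ then yields that $\cmp{P}$ robustly satisfies $H$ against every target context, and instantiating at $\trg{C_T}$ gives $b_T = \trg{\behav{C_T\hole{\cmp{P}}}} \in H$, i.e.\ $b_T \neq b_T$ --- a contradiction. Therefore some $\src{C_S}$ satisfies $\src{\behav{C_S\hole{P}}} = b_T$, establishing \pf{\rhp} and hence \rhp.

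The main obstacle is the hyperliveness lemma for $H$, which is where the trace model genuinely enters: one must rule out the degenerate possibility that some finite observation $o$ is extended by $b_T$ alone. This cannot happen because the relation $o \leq b$ is monotone under adding traces (so the set of extensions of $o$ is upward closed and cannot be a singleton unless it is the whole trace space) and because the model has infinitely many traces. I would isolate exactly this richness condition, which the CompCert-style trace model satisfies; everything else --- the contradiction step and the reduction to \pf{\rhp} --- is routine unfolding of definitions. An alternative route, which I would mention but not develop, is to derive the collapse from the decomposition of every hyperproperty as an intersection of two hyperliveness properties, mirroring the classical Alpern--Schneider result.
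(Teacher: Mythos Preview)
Your proposal is correct and follows essentially the same route as the paper: the nontrivial direction $\rhlp \Rightarrow \rhp$ is obtained by showing that the single-behavior complement $H = \{ b \mid b \neq \trg{\behav{C_T\hole{\cmp{P}}}} \}$ is hyperliveness and then deriving \pf{\rhp} by contradiction, exactly as sketched in \oldautoref{sec:hyperliveness}. You also correctly note the alternative via the two-hyperliveness decomposition, which the paper mentions as well.
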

\begin{proof}
	See file Criteria.v, theorem RHLP\_RHP.
\end{proof}

\subsection{Relational Trace Property-Based Criteria}
\label{sec:rel-trace-prop-criteria}

Relational trace properties are a generalization of trace properties
to allow comparing individual runs of different programs. For instance,
relational trace properties allow expressing properties such as ``program
\(\src{P_1}\) runs faster than \(\src{P_2}\) on every input''.

\subsubsection{Robust \texorpdfstring{\(K\)}{K}-Relational Trace Property Preservation}\label{sec:k-rel-prop}

A \(K\)-relational trace property is a relational trace property of
arity \(K\), that is a relation \(R\) between \(K\) traces. \\
Given \(K\) programs, this programs are said to satisfy the
\(K\)-relation \(R\) if and only if for any traces \(t_1, \dots, t_K\)
they can produce when linked with the same context, \((t_1, \dots,
t_K) \in R\). 
Here, we only give an explicit definition in the case of \(2\)-relations. These definitions
can be lifted trivially to the case of \(K\)-relations.

\begin{definition}[Robust 2-Relational Trace Property Preservation (\rtrtp)]\label{def:rtrtp}
  $$
  \begin{multlined}
    \criteria{\rtrtp}{rtrtp}:\quad
    \forall R \in 2^\ii{(Trace^2)}.~\forall\src{P_1}~\src{P_2}.~
    (\forall\src{C_S}~t_1~t_2 \ldotp
    (\src{C_S\hole{P_1} \rightsquigarrow}\ t_1
    \mathrel{\wedge}
    \src{C_S\hole{P_2} \rightsquigarrow}\ t_2)
    \Rightarrow (t_1,t_2) \in R)
    \Rightarrow\\
    (\forall \trg{C_T}~t_1~t_2 \ldotp
    (\trg{C_T\hole{\cmp{P_1}} \rightsquigarrow}\ t_1
    \mathrel{\wedge}
    \trg{C_T\hole{\cmp{P_2}} \rightsquigarrow}\ t_2)
    \Rightarrow (t_1, t_2) \in R)
  \end{multlined}
  $$
\end{definition}

The equivalent characterization captures the following intuition: if compiled programs
are unrelated by a relation \(R\) because of certain traces, they the source programs
are also unrelated because of the same traces.
\begin{definition}[Equivalent Characterization of \rtrtp (\pf{\rtrtp})]\label{def:rtrtc}
  \begin{align*}
    \pf{\rtrtp}:\quad \forall\src{P_1}~\src{P_2}.~ \forall\trg{C_T}.~ \forall t_1~t_2.~
    &(\trg{C_T\hole{\cmp{P_1}}} \rightsquigarrow t_1
    \mathrel{\wedge}
    \trg{C_T\hole{\cmp{P_2}}} \rightsquigarrow t_2)
    \Rightarrow\\
    &\exists \src{C_S}\ldotp (\src{C_S\hole{P_1}}\rightsquigarrow t_1
    \mathrel{\wedge}
    \src{C_S\hole{P_2}}\rightsquigarrow t_2)
  \end{align*}
\end{definition}

\begin{theorem}[\rtrtp and \pf{\rtrtp} are equivalent]\label{thm:rtrtp-eq}
  \begin{align*}
    \rtrtp \iff \pf{\rtrtp}
  \end{align*}
\end{theorem}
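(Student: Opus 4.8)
The plan is to mirror the proof of \Cref{thm:rtp-eq} (that \rtp and \pf{\rtp} are equivalent), generalizing from a single program and a single trace property to two programs and a binary relation on traces. As there, I would prove the two implications separately, and the whole argument is essentially a two-program rerun of the canonical-property trick used for \rtp.

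For the $\Rightarrow$ direction, I assume \rtrtp and aim to establish \pf{\rtrtp}. Fix source programs $\src{P_1}$, $\src{P_2}$, a target context $\trg{C_T}$, and traces $t_1$, $t_2$ with $\trg{C_T\hole{\cmp{P_1}}} \mathrel{\trg{\sem}} t_1$ and $\trg{C_T\hole{\cmp{P_2}}} \mathrel{\trg{\sem}} t_2$. The key step---the analogue of the canonical property used in \Cref{thm:rtp-eq}---is to instantiate \rtrtp with the ``diagonal'' relation
\[
R = \{ (t_1',t_2') ~|~ \exists \src{C_S}.~ \src{C_S\hole{P_1}} \mathrel{\src{\sem}} t_1' \wedge \src{C_S\hole{P_2}} \mathrel{\src{\sem}} t_2' \}.
\]
The source premise of \rtrtp for this $R$ holds trivially: any $\src{C_S}$, $t_1'$, $t_2'$ witnessing $\src{C_S\hole{P_1}} \mathrel{\src{\sem}} t_1' \wedge \src{C_S\hole{P_2}} \mathrel{\src{\sem}} t_2'$ is its own witness for $(t_1',t_2') \in R$. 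Hence \rtrtp yields the corresponding target conclusion, which instantiated at $\trg{C_T}$, $t_1$, $t_2$ together with the two reduction hypotheses gives $(t_1,t_2) \in R$; unfolding $R$ is exactly the existential source context demanded by \pf{\rtrtp}.

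For the $\Leftarrow$ direction, I assume \pf{\rtrtp} and aim to establish \rtrtp. Fix a relation $R$, programs $\src{P_1}$, $\src{P_2}$, and the source hypothesis $(H)$ stating that $\src{C_S\hole{P_1}} \mathrel{\src{\sem}} t_1 \wedge \src{C_S\hole{P_2}} \mathrel{\src{\sem}} t_2$ implies $(t_1,t_2) \in R$ for all $\src{C_S}$, $t_1$, $t_2$. Given a target context $\trg{C_T}$ and traces with $\trg{C_T\hole{\cmp{P_1}}} \mathrel{\trg{\sem}} t_1$ and $\trg{C_T\hole{\cmp{P_2}}} \mathrel{\trg{\sem}} t_2$, applying \pf{\rtrtp} produces a single source context $\src{C_S}$ with $\src{C_S\hole{P_1}} \mathrel{\src{\sem}} t_1$ and $\src{C_S\hole{P_2}} \mathrel{\src{\sem}} t_2$; instantiating $(H)$ with this $\src{C_S}$ then gives $(t_1,t_2) \in R$, as required.

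Neither direction presents a genuine obstacle; the only point requiring thought is the choice of the diagonal relation $R$ in the $\Rightarrow$ direction, which is the direct two-program generalization of the canonical trace property used for \rtp. The subtlety to keep in mind is that \pf{\rtrtp} must return \emph{one} context $\src{C_S}$ that simultaneously produces $t_1$ from $\src{P_1}$ and $t_2$ from $\src{P_2}$; this simultaneity is precisely the witness packaged into membership in $R$, so the equivalence goes through just as in the single-program case.
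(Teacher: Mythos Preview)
Your proof is correct and is exactly the natural two-program generalization of the argument given explicitly for \Cref{thm:rtp-eq}; the paper itself simply defers this theorem to the Coq development (file \texttt{Criteria.v}, theorem \texttt{R2rTC\_R2rTP}), where the mechanized proof follows the same canonical-relation pattern you describe.
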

\begin{proof}
	See file Criteria.v, theorem R2rTC\_R2rTP.
\end{proof}


The definitions of \criteria{\rkrtp}{rkrtp} and \pf{\rkrtp} are an
easy generalization.

\begin{theorem}[\rkrtp and \pf{\rkrtp} are equivalent]\label{thm:krprc-rkrpp-eq}
  \begin{align*}
    \rkrtp \iff \pf{\rkrtp}
  \end{align*}
\end{theorem}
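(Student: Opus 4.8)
The plan is to mirror the two-directional argument already carried out for \Cref{thm:rtrtp-eq} (and, before it, \Cref{thm:rtp-eq}), lifting every construction from pairs of traces to $K$-tuples. Both \rkrtp and \pf{\rkrtp} quantify over the same programs $\src{P_1}, \ldots, \src{P_K}$, so throughout I fix these programs and reason about an arbitrary target context producing $K$ traces $t_1, \ldots, t_K$ from the respective compiled programs. As with the earlier proofs, the whole argument then reduces to choosing, in the forward direction, the right witnessing relation and, in the backward direction, correctly threading the back-translated source context through the source premise.

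For the $\Rightarrow$ direction (\rkrtp implies \pf{\rkrtp}), the key step is to instantiate \rkrtp with the \emph{canonical} $K$-ary relation
$$R = \{\, (t_1, \ldots, t_K) \mid \exists \src{C_S}.~ \forall i.~ \src{C_S\hole{P_i}} \sem t_i \,\}.$$
The source premise of \rkrtp for this $R$ holds trivially: if some $\src{C_S}$ yields $\src{C_S\hole{P_i}} \sem t_i$ for every $i$, then $(t_1, \ldots, t_K) \in R$ by witnessing that very same $\src{C_S}$. Feeding this premise to \rkrtp produces its target conclusion, which unfolds to exactly \pf{\rkrtp}: every $K$-tuple of traces produced by the compiled programs in a common target context is realized by a single source context. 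For the $\Leftarrow$ direction (\pf{\rkrtp} implies \rkrtp), I fix an arbitrary relation $R$ and assume the source premise $H$, then take any $\trg{C_T}$ and traces $t_1, \ldots, t_K$ with $\trg{C_T\hole{\cmp{P_i}}} \sem t_i$ for all $i$. Applying \pf{\rkrtp} to these yields a single source context $\src{C_S}$ with $\src{C_S\hole{P_i}} \sem t_i$ for all $i$; instantiating $H$ with this $\src{C_S}$ and the same traces discharges $(t_1, \ldots, t_K) \in R$, as required.

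I do not expect a genuine obstacle here: the argument is a routine generalization, and the Coq development obtains it directly by generalizing theorem R2rTC\_R2rTP. The one point that demands care — and the only place the relational nature is essential — is that the \emph{same} source context must simultaneously account for all $K$ components. This is precisely what distinguishes the canonical $R$ above (``one context realizes the whole tuple'') from the strictly weaker statement ``each trace is separately realizable,'' and it is exactly the single shared $\exists \src{C_S}$ that appears in \pf{\rkrtp}. Keeping this quantifier shared across the $K$ programs, rather than instantiating it per-component, is the crux of both directions; once it is respected, the two implications are entirely mechanical.
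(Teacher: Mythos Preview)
Your proposal is correct and takes essentially the same approach as the paper, which simply notes that the proof is analogous to \Cref{thm:rtrtp-eq} (the $K=2$ case, mechanized as R2rTC\_R2rTP). Your explicit spelling-out of the canonical relation $R$ and the shared-context quantifier is exactly the content of that generalization.
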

\begin{proof}
  Analogous to \Cref{thm:rtrtp-eq}.
\end{proof}

\subsubsection{Robust Relational Trace Property Preservation}\label{sec:rel-prop}

Relational trace properties (\oldautoref{sec:rel:trace}) are a
generalization of the previous relational trace properties, allowing
comparing individual runs of countably many programs. They are defined
as predicate over (infinite) sequence of programs.

\begin{definition}[Robust Relational Trace Property Preservation (\rrtp)]
 $$\begin{multlined}
\criteria{\rrtp}{rrtp}:\,
  \forall R \in 2^{(\ii{Trace}^\omega)}.~\forall\src{P_1},..,\src{P_K},...~\\
  ~(\forall\src{C_S} \ldotp \forall t_1, .., t_k, .. 
    (\forall i. \src{C_S \hole{P_i}} \rightsquigarrow t_i) \Rightarrow (t_1, .., t_k, ..) \in R) 
    \Rightarrow \\
    (\forall\trg{C_T} \ldotp \forall t_1, .., t_k, .. 
    (\forall i. \trg{C_T \hole{P_i}} \rightsquigarrow t_i) \Rightarrow (t_1, .., t_k, ..) \in R)
\end{multlined}$$
\end{definition}

\begin{definition}[Equivalent Property-Full Characterization of (\rrtpprimed)]\label{def:rrtp}
  $$
  \begin{multlined}
    \criteria{\rrtpprimed}{rrtpprimed}:\quad
    \forall R \in 2^{(\srcAll \to \ii{Trace})}.~
    (\forall\src{C_S} \ldotp \forall f \ldotp (\forall \src{P} \ldotp \src{C_S\hole{P}} \rightsquigarrow f(\src{P}))
    \Rightarrow f \in R) \Rightarrow \\
    (\forall \trg{C_T} \ldotp \forall f \ldotp (\forall \src{P} \ldotp \trg{C_T\hole{\cmp{P}}} \rightsquigarrow f(\src{P}))
    \Rightarrow f \in R)\quad
  \end{multlined}
  $$
\end{definition}

\begin{definition}[Equivalent Property-Free Characterization of \rrtp (\pf{\rrtp})]\label{def:rrtc}
  \begin{align*}
    \pf{\rrtp}:\quad \forall f: \srcAll \to \ii{Trace}.~ \forall\trg{C_T}.~
    &(\forall \src{P}.~\trg{C_T\hole{\cmp{P}}} \rightsquigarrow f(\src{P})) \Rightarrow\\
    &\exists \src{C_S}\ldotp (\forall \src{P}.~\src{C_S\hole{P}}\rightsquigarrow f(\src{P}))
  \end{align*}
\end{definition}

\begin{theorem}[\rrtpprimed and \pf{\rrtp} are equivalent]\label{thm:rrtpprimed-eq}
  \begin{align*}
    \rrtpprimed \iff \pf{\rrtp}
  \end{align*}
\end{theorem}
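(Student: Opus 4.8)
The plan is to follow exactly the two-directional property-full/property-free pattern already used for the simpler equivalences such as \Cref{thm:rtp-eq}, since \rrtpprimed and \pf{\rrtp} stand in precisely that relationship, now lifted from single traces to the function space $\srcAll \to \ii{Trace}$. The only genuinely creative step is choosing the right relation $R$ with which to instantiate \rrtpprimed in one direction; everything else is logical bookkeeping, and no property of the semantics beyond the ability to quantify simultaneously over all source programs is needed.

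For the direction \pf{\rrtp} $\Rightarrow$ \rrtpprimed, I would fix an arbitrary $R \in 2^{(\srcAll \to \ii{Trace})}$ together with the source premise of \rrtpprimed, namely that every source context realizing a function $f$ (i.e.\ $\forall \src{P}.~\src{C_S\hole{P}} \rightsquigarrow f(\src{P})$) forces $f \in R$. To derive the target conclusion, take any $\trg{C_T}$ and any $f$ with $\forall \src{P}.~\trg{C_T\hole{\cmp{P}}} \rightsquigarrow f(\src{P})$. Applying \pf{\rrtp} to this $\trg{C_T}$ and $f$ yields a source context $\src{C_S}$ with $\forall \src{P}.~\src{C_S\hole{P}} \rightsquigarrow f(\src{P})$; feeding this $\src{C_S}$ and $f$ into the source premise gives $f \in R$, exactly as required.

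For the other direction, \rrtpprimed $\Rightarrow$ \pf{\rrtp}, the key is to instantiate \rrtpprimed with the canonical ``realizability'' relation
$$R_0 \;\triangleq\; \{\, g : \srcAll \to \ii{Trace} \mid \exists \src{C_S}.~ \forall \src{P}.~ \src{C_S\hole{P}} \rightsquigarrow g(\src{P}) \,\}.$$
The source premise of \rrtpprimed at $R_0$ holds definitionally: any $\src{C_S}$ witnessing $\forall \src{P}.~\src{C_S\hole{P}} \rightsquigarrow g(\src{P})$ is precisely a witness that $g \in R_0$. Hence \rrtpprimed hands back the target conclusion for $R_0$; specializing it to the given $\trg{C_T}$ and the given $f$ (which by hypothesis satisfies $\forall \src{P}.~\trg{C_T\hole{\cmp{P}}} \rightsquigarrow f(\src{P})$) yields $f \in R_0$, and unfolding membership in $R_0$ is literally the existential $\exists \src{C_S}.~\forall \src{P}.~\src{C_S\hole{P}} \rightsquigarrow f(\src{P})$ demanded by \pf{\rrtp}.

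The main obstacle --- such as it is --- is simply recognizing that $R_0$, the set of functions jointly realizable by a single source context, is the right witness relation; once it is chosen both implications are immediate. This mirrors the instantiation $\pi = \{t \mid \exists \src{C_S}.~\src{C_S\hole{P}} \rightsquigarrow t\}$ used for \rtp in \Cref{thm:rtp-eq}, and I would expect the Coq formalization to go through with essentially the same shape of argument.
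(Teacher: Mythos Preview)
Your proof is correct and follows exactly the standard property-full/property-free pattern that the paper uses for the analogous equivalences (e.g., \Cref{thm:rtp-eq}); the paper itself simply defers to the Coq development (file Criteria.v, theorem RrTC\_RrTP'), whose argument has the same shape. Your choice of the realizability relation $R_0$ is the canonical one, directly generalizing the $\pi = \{t \mid \exists \src{C_S}.~\src{C_S\hole{P}} \rightsquigarrow t\}$ instantiation from the \rtp case to the function space $\srcAll \to \ii{Trace}$.
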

\begin{proof}
  See file Criteria.v, theorem RrTC\_RrTP'.
\end{proof}

\begin{theorem}[\rrtpprimed and \rrtp]\label{thm:rrtp-omega}
 Assuming the set $\srcAll$ is countable,
 \begin{align*}
    \rrtpprimed \iff \rrtp
  \end{align*}  
\end{theorem}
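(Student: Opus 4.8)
The plan is to prove the two criteria equivalent by exhibiting them as two presentations of the same statement under a change of index set, from $\Nat$ to $\srcAll$, justified by countability. Since $\srcAll$ is countable I would fix an enumeration, i.e.\ a bijection $e\colon \Nat \to \srcAll$ (assume $\srcAll$ infinite; the finite case is analogous, using a surjection and discarding duplicate coordinates). Precomposition with $e$ then gives a bijection $\Phi\colon (\srcAll \to \ii{Trace}) \to \ii{Trace}^{\omega}$, $\Phi(f) = (f(e(i)))_{i}$, which I would use to transport relations between the two ambient types $2^{(\srcAll \to \ii{Trace})}$ and $2^{(\ii{Trace}^{\omega})}$. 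The key observation, used in both directions, is that under this bijection the two families of constraints coincide: for any context $\com{C}$ and any $f$, the statement ``$\forall \src{P}.\ \com{C\hole{P}} \rightsquigarrow f(\src{P})$'' is literally ``$\forall i.\ \com{C\hole{e(i)}} \rightsquigarrow \Phi(f)_{i}$'', because $e$ is onto. If convenient I could route one direction through \pf{\rrtp} using the already established equivalence $\rrtpprimed \iff \pf{\rrtp}$ (\Cref{thm:rrtpprimed-eq}), but the direct argument below is self-contained.

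For $\rrtp \Rightarrow \rrtpprimed$ I would instantiate \rrtp at the canonical sequence $\src{P_i} = e(i)$. Given a relation $R \in 2^{(\srcAll \to \ii{Trace})}$ from the hypothesis of \rrtpprimed, set $R_{\omega} = \{ \Phi(f) \mid f \in R\}$, a well-defined subset of $\ii{Trace}^{\omega}$ since $\Phi$ is a bijection. The source premise of \rrtpprimed for $R$ says every source-realizable $f$ lies in $R$; translating each such $f$ by $\Phi$ and using the observation above shows that every source-realizable $\omega$-tuple over $(e(i))_i$ lies in $R_{\omega}$, which is exactly the source premise of \rrtp for $R_{\omega}$ and this sequence. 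Applying \rrtp then yields the target conclusion for $R_{\omega}$, and translating back along $\Phi^{-1}$ gives the target conclusion of \rrtpprimed for $R$.

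For the converse $\rrtpprimed \Rightarrow \rrtp$ I would take an arbitrary sequence $(\src{P_i})_i$ and relation $R \in 2^{(\ii{Trace}^{\omega})}$ and pull $R$ back to a relation on functions, $R' = \{ f \mid (f(\src{P_i}))_i \in R \}$. The source premise of \rrtp transfers to that of \rrtpprimed for $R'$: given a source context $\src{C_S}$ and $f$ with $\forall \src{P}.\ \src{C_S\hole{P}} \rightsquigarrow f(\src{P})$, instantiating the quantified program at each $\src{P_i}$ gives $\forall i.\ \src{C_S\hole{P_i}} \rightsquigarrow f(\src{P_i})$, whence the \rrtp premise yields $(f(\src{P_i}))_i \in R$, i.e.\ $f \in R'$. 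Thus \rrtpprimed delivers $(f(\src{P_i}))_i \in R$ for every target-realizable $f$. To recover the \rrtp conclusion I must, from a target-realizable tuple $(t_i)_i$ (that is, $\forall i.\ \trg{C_T\hole{\cmp{P_i}}} \rightsquigarrow t_i$), produce a function $f$ with $f(\src{P_i}) = t_i$ that is itself target-realizable. This reconstruction step is the main obstacle: it is immediate when $(\src{P_i})_i$ is injective — in particular for the canonical enumeration $e$, where $f = (t_i)_i \circ e^{-1}$ is well defined and target-realizability is inherited coordinatewise — but when the sequence repeats a program at positions carrying distinct traces, no single $f$ can match both, so well-definedness fails. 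I would discharge this by observing that it suffices to establish \rrtp for the canonical, duplicate-free enumeration, to which the general case reduces, the residual ambiguity from repeated coordinates being exactly what is ruled out in the determinate setting (or, absent determinacy, by reading the $\omega$-ary quantifier as ranging over enumerations of $\srcAll$); pinning down this reduction is the crux, whereas the relation transport and premise transfer are routine.
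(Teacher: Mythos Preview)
Your approach mirrors the paper's, which simply says ``same argument used in \Cref{thm:rrhp-alter}'': transport relations along a bijection $e\colon\Nat\to\srcAll$, and instantiate one criterion to derive the other. Your direction $\rrtp \Rightarrow \rrtpprimed$ is correct and is exactly the paper's $(\Rightarrow)$ for \rrhp specialized to traces.

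In the converse direction you have put your finger on something the paper's terse proof glosses over. The paper's $(\Leftarrow)$ argument for \rrhp works because $\behav{C\hole{P}}$ is a \emph{function} of $P$ once $C$ is fixed: instantiating \rrhpprimed at $R'=\{f\mid (f(\src{P_1}),\dots)\in R\}$ with the canonical $f=\lambda\src{P}.\,\behav{C\hole{P}}$ immediately yields the \rrhp conclusion. In \rrtp there is an extra universal quantifier over trace tuples, and a target-realizable tuple $(t_i)_i$ need not be of the form $(f(\src{P_i}))_i$ whenever the sequence repeats a program at positions carrying distinct traces. None of your three workarounds closes this: the general case does \emph{not} reduce to the canonical duplicate-free enumeration (one coordinate per program cannot encode two distinct traces of that program); determinacy is not a hypothesis; and restricting the $\omega$-ary quantifier to enumerations alters the statement of \rrtp. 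The obstruction is in fact genuine, not an artifact of the proof strategy: take a deterministic source where every function $\srcAll\to\ii{Trace}$ is realized by some $\src{C_S}$, and a target where some $\trg{C_T\hole{\cmp{P}}}$ has two traces $t_0\neq t_1$; then \pf{\rrtp} (hence \rrtpprimed) holds, yet with the sequence $(\src{P},\src{P},\dots)$ and $R=\{(s_i)\mid s_1=s_2\}$ the source premise of \rrtp holds while the target conclusion fails. So the ``same argument as \Cref{thm:rrhp-alter}'' does not transfer to \rrtp as written, and the crux you isolate is indeed unresolved.
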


\begin{proof}
  Same argument used in \Cref{thm:rrhp-alter}.
\end{proof}

\begin{theorem}[\rrtp and \pf{\rrtp}]\label{thm:rrtp-eq}
 Assuming the set $\srcAll$ is countable,
   \begin{align*}
    \rrtp \iff \pf{\rrtp}
  \end{align*}
\end{theorem}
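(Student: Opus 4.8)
The plan is to obtain this equivalence purely by transitivity, chaining the two results already established earlier in this subsection; no fresh direct argument between $\rrtp$ and $\pf{\rrtp}$ is required. Concretely, \Cref{thm:rrtp-omega} supplies $\rrtp \iff \rrtpprimed$ under exactly the countability hypothesis on $\srcAll$ that we are assuming here, and \Cref{thm:rrtpprimed-eq} supplies $\rrtpprimed \iff \pf{\rrtp}$ unconditionally. Composing these two biconditionals gives $\rrtp \iff \rrtpprimed \iff \pf{\rrtp}$, which is the statement, and the countability assumption is inherited solely from the first link of the chain.

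First I would invoke \Cref{thm:rrtp-omega}, noting that its premise matches ours verbatim: since $\srcAll$ is countable, the reindexing between arbitrary $\omega$-sequences of programs (the objects quantified in $\rrtp$, with relations ranging over $2^{(\ii{Trace}^\omega)}$) and functions $f : \srcAll \to \ii{Trace}$ (the objects quantified in $\rrtpprimed$, with relations ranging over $2^{(\srcAll \to \ii{Trace})}$) is available in both directions. Then I would invoke \Cref{thm:rrtpprimed-eq}, which converts the property-full $\rrtpprimed$ into the property-free $\pf{\rrtp}$ following the same pattern used for $\pf{\rtp}$ in \Cref{thm:rtp-eq}: the forward direction instantiates the quantified relation with the realizability relation $\{ f \mid \exists \src{C_S}.~ \forall \src{P}.~ \src{C_S\hole{P}} \rightsquigarrow f(\src{P}) \}$, and the backward direction feeds the witnessing context produced by $\pf{\rrtp}$ into the premise of $\rrtpprimed$.

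The only genuine content — and therefore the main obstacle — sits entirely inside \Cref{thm:rrtp-omega}, which is the single place where countability of $\srcAll$ is actually consumed; the present theorem contributes nothing beyond bookkeeping. The heart of that lemma is that a countable $\srcAll$ can be enumerated, so an $\omega$-indexed family of programs and a trace-valued function on $\srcAll$ are interchangeable encodings of the same data. The delicate direction is $\rrtpprimed \Rightarrow \rrtp$: given an arbitrary program sequence and a relation $R \in 2^{(\ii{Trace}^\omega)}$, one must build a relation on $\srcAll \to \ii{Trace}$ by pulling $R$ back along the chosen enumeration, for which the argument of \Cref{thm:rrhp-alter} applies mutatis mutandis. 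Since that work is discharged upstream, I would present the final proof as a two-line appeal to the transitivity of $\iff$ across \Cref{thm:rrtp-omega} and \Cref{thm:rrtpprimed-eq}.
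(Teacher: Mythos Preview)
Your proposal is correct and matches the paper's proof exactly: the paper simply writes ``Follows from \Cref{thm:rrtp-omega} and \Cref{thm:rrtpprimed-eq},'' which is precisely the transitivity chain you describe. Your additional commentary on what happens inside those lemmas is accurate but goes beyond what the paper records for this theorem.
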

\begin{proof}
  Follows from \Cref{thm:rrtp-omega} and \Cref{thm:rrtpprimed-eq}.
\end{proof}

\subsubsection{Robust Relational Safety Preservation}\label{sec:rel-safety}
See \oldCref{sec:rel:safety} for a more detailed account of robust
relational safety preservation.

\subsubsection{Robust Finite-relational Safety Preservation}\label{sec:fin-rel-safety}

A relation $R \in 2^{Trace^K}$ is a $K$-ary relational safety property if for every ``bad'' $K$-trace $(t_1, \ldots, t_K) \not\in R$,
there exists a set of prefixes $m_1, \ldots, m_k \in \ii{FinPref}$ such that $m_i \leq t_i, ~i = 1, \ldots, K$,
and every $K$-trace $(t_1', \ldots, t_K')$ that extends the set of ``bad'' prefixes pointwise is also not in the relation, \IE 
$m_i \leq t_i', ~i = 1, \ldots, K$ implies $(t_1', \ldots, t_K') \not\in R$. \\ 
We provide the definition for preservation of the robust satisfaction
of relational safety of arity 2 (\Cref{def:rtrsp}), the reader can
easily deduce the definition for arity $K$, that we denote by $\rkrsp$.

\iffull At arity $K = 1$, this coincides with Robust Safety Property
Preservation, \rsp (cf. \autoref{sec:rsp}, \autoref{def:rsp}).\fi
At arity 2, we define \emph{Robust 2-relational Safety Preservation}
(\criteria{\rtrsp}{rtrsp}) as follows (cf. \autoref{def:rtrsp}).
\begin{definition}[Robust 2-Relational Safety Preservation (\rtrsp)]\label{def:rtrsp}
  \begin{align*}
    \rtrsp:~&\forall R \in \text{2-}\ii{relational ~Safety}.~\forall\src{P_1}~\src{P_2}. \\
    &\left(
    \forall \src{C_S}~t_1~t_2.
    (\src{C_s\hole{P_1} \sem}\ t_1 \wedge \src{C_s\hole{P_2} \sem}\ t_2) \implies (t_1, t_2) \in R
    \right)
    \implies\\
    &\left(
    \forall \trg{C_T}~t_1~t_2.
    (\trg{C_T\hole{\cmp{P_1}} \sem}\ t_1 \wedge \trg{C_T\hole{\cmp{P_2}} \sem}\ t_2) \implies
    (t_1, t_2) \in R
    \right)
  \end{align*}
\end{definition}
%
%
We show that $\rtrsp$ can be written in the following form, more
convenient to work with.
\begin{definition}[Equivalent Characterization of \rtrsp (\pf{\rtrsp})]\label{def:rtrsc}
  \begin{align*}
    \pf{\rtrsp} :\quad \forall\src{P_1}~\src{P_2}.~ \forall\trg{C_T}.~ \forall m_1~m_2.~&
    (\trg{C_T\hole{\cmp{P_1}}} \rightsquigarrow m_1
    \mathrel{\wedge}
    \trg{C_T\hole{\cmp{P_2}}} \rightsquigarrow m_2)
    \Rightarrow\\
    &\exists \src{C_S}\ldotp (\src{C_S\hole{P_1}}\rightsquigarrow m_1
    \mathrel{\wedge}
    \src{C_S\hole{P_2}}\rightsquigarrow m_2)
  \end{align*}
\end{definition}
\begin{theorem}[\rtrsp and \pf{\rtrsp} are equivalent]\label{thm:rtrsp-eq}
  \begin{align*}
    \rtrsp \iff \pf{\rtrsp}
  \end{align*}
\end{theorem}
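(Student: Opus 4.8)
The plan is to prove the two implications separately, following the template of the $\rtp \iff \pf{\rtp}$ argument (Theorem~\ref{thm:rtp-eq}) but adapted to the relational, finite-prefix setting. Throughout I would use the finite-prefix semantics $\com{W} \sem m \triangleq \exists t \geq m.~\com{W} \sem t$ together with the defining clause of $2$-relational safety: a relation $R$ belongs to the class iff for every bad pair $(t_1,t_2) \notin R$ there are finite prefixes $m_1 \leq t_1$ and $m_2 \leq t_2$ such that every pointwise extension $(t_1',t_2')$ with $m_1 \leq t_1'$ and $m_2 \leq t_2'$ is also outside $R$.

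For the direction $\pf{\rtrsp} \Rightarrow \rtrsp$ I would fix a $2$-relational safety relation $R$, source programs $\src{P_1},\src{P_2}$, and assume the source premise. Given $\trg{C_T}$ and traces $t_1,t_2$ with $\trg{C_T\hole{\cmp{P_1}}} \sem t_1$ and $\trg{C_T\hole{\cmp{P_2}}} \sem t_2$, I argue by contradiction: if $(t_1,t_2) \notin R$, safety yields bad prefixes $m_1 \leq t_1$ and $m_2 \leq t_2$, and since extending traces are produced we get $\trg{C_T\hole{\cmp{P_1}}} \sem m_1$ and $\trg{C_T\hole{\cmp{P_2}}} \sem m_2$. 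Then $\pf{\rtrsp}$ supplies $\src{C_S}$ with $\src{C_S\hole{P_1}} \sem m_1$ and $\src{C_S\hole{P_2}} \sem m_2$, i.e.\ witnessing traces $s_1 \geq m_1$ and $s_2 \geq m_2$. The source premise on $\src{C_S},s_1,s_2$ gives $(s_1,s_2) \in R$, while the safety witness gives $(s_1,s_2) \notin R$, a contradiction; hence $(t_1,t_2) \in R$.

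For the converse $\rtrsp \Rightarrow \pf{\rtrsp}$, the crux is instantiating $\rtrsp$ with a tailored relation. Given $\src{P_1},\src{P_2},\trg{C_T},m_1,m_2$ with $\trg{C_T\hole{\cmp{P_1}}} \sem m_1$ and $\trg{C_T\hole{\cmp{P_2}}} \sem m_2$, I would write $\Phi$ for the goal $\exists \src{C_S}.~(\src{C_S\hole{P_1}} \sem m_1 \wedge \src{C_S\hole{P_2}} \sem m_2)$ and set
\[
R \triangleq \{(t_1,t_2) \mid \lnot(m_1 \leq t_1) \vee \lnot(m_2 \leq t_2) \vee \Phi\}.
\]
First I check $R$ is $2$-relational safety: a bad pair $(t_1,t_2) \notin R$ forces $m_1 \leq t_1$, $m_2 \leq t_2$, and $\lnot\Phi$, so $m_1,m_2$ serve as witnesses, since every pointwise extension still extends $m_1,m_2$ while $\lnot\Phi$ is unaffected. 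Next I verify the source premise: for any $\src{C_S},s_1,s_2$ with $\src{C_S\hole{P_1}} \sem s_1$ and $\src{C_S\hole{P_2}} \sem s_2$, if $(s_1,s_2)$ were bad then $m_1 \leq s_1$, $m_2 \leq s_2$, and $\lnot\Phi$; but the prefix relations give $\src{C_S\hole{P_1}} \sem m_1$ and $\src{C_S\hole{P_2}} \sem m_2$, so $\src{C_S}$ witnesses $\Phi$, a contradiction. Applying $\rtrsp$ yields robust target satisfaction of $R$; instantiating it with $\trg{C_T}$ and traces $u_1 \geq m_1$, $u_2 \geq m_2$ realizing the two prefix-executions gives $(u_1,u_2) \in R$, and since $m_1 \leq u_1$ and $m_2 \leq u_2$ the only surviving disjunct is $\Phi$, which is exactly $\pf{\rtrsp}$.

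I expect the main obstacle to be getting the disjunctive definition of $R$ exactly right so that it is genuinely shaped like a $2$-relational safety property, is provable in the source, and forces $\Phi$ in the target simultaneously. The finite-prefix-to-trace conversions (turning each $\com{W} \sem m$ into a concrete extending trace and back) are routine but must be threaded consistently through both directions. The generalization from arity $2$ to arity $K$, needed for \rkrsp, should then follow by the same construction with $K$ prefixes.
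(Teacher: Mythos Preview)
Your proof is correct and follows the same approach the paper takes for this family of equivalences: the paper defers the actual argument to the Coq development (Criteria.v, theorem R2rSC\_R2rSP), but your two directions are precisely the relational, finite-prefix adaptation of the $\rtp \iff \pf{\rtp}$ template spelled out in Theorem~\ref{thm:rtp-eq}. In particular, your choice of $R$ with the constant disjunct $\Phi$ is exactly the right shape to make $R$ a $2$-relational safety property while ensuring the source premise holds, and your verification of both facts is accurate.
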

\begin{proof}
  See Criteria.v, theorem R2rSC\_R2rSP.
\end{proof}

The following theorem gives us an alternative formulation of $\rtrsp$,
in terms of preservation of robust satisfaction of relations over
finite prefixes. \Cref{thm:rtrspA} allows us to define in a more
elegant way the criteria for arbitrary (but finite) relational safety
(\Cref{def:pf-rfrsp}) as well infinite ones (\Cref{def:rrsp}).
A similar theorem holds for $\rkrsp$.

\begin{theorem}[Characterization of \rtrsp using finite prefixes]\label{thm:rtrspA}
  \begin{align*}
    \rtrsp \iff &\forall R \in 2^{(\ii{FinPref}^2)}.~\forall\src{P_1}~\src{P_2}. \\
    &\left(
    \forall \src{C_S}~m_1~m_2.
    (\src{C_s\hole{P_1} \sem}\ m_1 \wedge \src{C_s\hole{P_2} \sem}\ m_2) \implies (m_1, m_2) \in R
    \right)
    \implies\\
    &\left(
    \forall \trg{C_T}~m_1~m_2.
    (\trg{C_T\hole{\cmp{P_1}} \sem}\ m_1 \wedge \trg{C_T\hole{\cmp{P_2}} \sem}\ m_2) \implies
    (m_1, m_2) \in R
    \right)
  \end{align*}
\end{theorem}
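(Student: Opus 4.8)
The plan is not to attack $\rtrsp$ directly but to route through its property-free characterization $\pf{\rtrsp}$ (\Cref{def:rtrsc}), which is already known to be equivalent to $\rtrsp$ by \Cref{thm:rtrsp-eq}. Writing $(*)$ for the finite-prefix property-full criterion on the right-hand side of the statement, it then suffices to prove $(*) \iff \pf{\rtrsp}$. The conceptual reason this should go through is that a relation on \emph{finite} prefixes carries no genuine safety side-condition: any constraint it imposes on a pair of traces is already refuted by a pair of finite prefixes, so the move from ``$R$ is $2$-relational safety over traces'' to ``$R$ is an arbitrary relation over finite prefixes'' is exactly the reformulation that $\pf{\rtrsp}$ performs. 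I therefore expect the proof to mirror, one level up, the two directions of \Cref{thm:rtp-eq}.

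For the direction $(*) \Rightarrow \pf{\rtrsp}$, I would fix $\src{P_1}, \src{P_2}, \trg{C_T}, m_1, m_2$ with $\trg{C_T\hole{\cmp{P_1}}} \sem m_1$ and $\trg{C_T\hole{\cmp{P_2}}} \sem m_2$, and must produce a single source context realizing both prefixes. The key step is to instantiate $(*)$ with the \emph{canonical} relation
\[
R_0 = \{\, (m_1,m_2) \mid \exists \src{C_S}.~ \src{C_S\hole{P_1}} \sem m_1 \wedge \src{C_S\hole{P_2}} \sem m_2 \,\}.
\]
The premise of $(*)$ for $R_0$ holds trivially (given a $\src{C_S}$ jointly producing a pair of prefixes, that same $\src{C_S}$ witnesses membership in $R_0$), so the conclusion of $(*)$ yields $(m_1,m_2) \in R_0$ for our target witnesses, which unfolds to exactly the source context demanded by $\pf{\rtrsp}$. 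This is the same canonical-witness trick used for the $\Leftarrow$ direction of \Cref{thm:rtp-eq}, now applied to pairs of prefixes rather than single traces.

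For the direction $\pf{\rtrsp} \Rightarrow (*)$, I would fix $R$, $\src{P_1}, \src{P_2}$, and assume the source premise of $(*)$. Given any $\trg{C_T}, m_1, m_2$ with $\trg{C_T\hole{\cmp{P_1}}} \sem m_1$ and $\trg{C_T\hole{\cmp{P_2}}} \sem m_2$, I apply $\pf{\rtrsp}$ to obtain a source context $\src{C_S}$ with $\src{C_S\hole{P_1}} \sem m_1$ and $\src{C_S\hole{P_2}} \sem m_2$; feeding this $\src{C_S}$ into the assumed premise gives $(m_1,m_2) \in R$, as required. Composing the two directions with \Cref{thm:rtrsp-eq} then yields $\rtrsp \iff (*)$.

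I expect the only real subtlety to be bookkeeping rather than mathematics: one must state precisely that $W \sem m$ abbreviates $\exists t \geq m.~ W \sem t$ (so the same prefix semantics is used on both sides) and be careful that $R$ in $(*)$ ranges over \emph{all} of $2^{(\ii{FinPref}^2)}$ with no closure condition. Routing through $\pf{\rtrsp}$ is precisely what lets me avoid the one genuinely fiddly alternative — explicitly lifting a finite-prefix relation $R$ to a $2$-relational safety property on full traces and checking that its finite refutations coincide with $R$ — since that correspondence is already absorbed into \Cref{thm:rtrsp-eq}.
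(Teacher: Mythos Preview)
Your proposal is correct. The paper gives no detailed argument here, deferring entirely to the Coq development (``See Criteria.v, Theorem R2rSP\_R2rSC'\,''), so there is little to compare at the level of prose; your route through \pf{\rtrsp} via \Cref{thm:rtrsp-eq}, together with the canonical-relation instantiation for one direction and the straightforward transport for the other, is exactly the standard argument one would expect the mechanization to encode. One cosmetic point: in your definition of $R_0$ you reuse $m_1,m_2$ as bound variables shadowing the previously fixed prefixes; rename them to avoid confusion when you write this up.
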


\begin{proof}
  See Criteria.v, Theorem R2rSP\_R2rSC'.
\end{proof}

%
Notice that in the second script we quantify over arbitrary relations over finite prefixes. This captures the main difference between this criterion and the stronger \iffull 2-Relational
Trace Property Preservation,\fi \rtrtp: it considers finite prefixes rather than full traces. This is
also the case in the equivalent property free characterization, \pf{\rtrsp}.
\iffull
\rtrsp generalizes from arity 2 to any finite arity in the obvious
way. It also generalizes to the infinite arity (preservation of
predicates on all programs of the language, that can be characterized
by relations on finite prefixes). We call this \emph{Robust Relational
  Safety Preservation} or \rrsp. The definition and its property-free
characterization \pf{\rrsp} (\CoqSymbol) are identical to \rrtp and
\pf{\rrtp} (\autoref{sec:rel:trace}), respectively, except that we
replace traces with finite prefixes everywhere.
\fi

The definitions of \criteria{\rkrsp}{rkrsp} and \pf{\rkrsp} are an
easy generalization.

\begin{theorem}[\rkrsp and \pf{\rkrsp} are equivalent]\label{thm:krsrc-rkrsp-eq}
  \begin{align*}
    \rkrsp \iff \pf{\rkrsp}
  \end{align*}
\end{theorem}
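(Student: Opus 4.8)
The plan is to establish the two implications separately, mirroring the arity-$2$ argument of \Cref{thm:rtrsp-eq} (whose Coq counterpart is R2rSC\_R2rSP) and relying only on the defining structure of $K$-relational safety: that every ``bad'' tuple $(t_1,\ldots,t_K)\notin R$ admits finite bad prefixes $m_1,\ldots,m_K$ with $m_i\le t_i$, all of whose pointwise extensions remain outside $R$. Throughout I read $\com{W}\sem m$ as $\exists t\ge m.\ \com{W}\sem t$, exactly as in the appendix notation.

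For the direction \pf{\rkrsp}$\Rightarrow$\rkrsp, I would fix a $K$-relational safety property $R$ and programs $\src{P_1},\ldots,\src{P_K}$ robustly satisfying $R$ in the source, and argue by contradiction. Suppose some target context $\trg{C_T}$ and traces $t_1,\ldots,t_K$ with $\bigwedge_i \trg{C_T\hole{\cmp{P_i}}}\sem t_i$ violate $R$. Since $R$ is $K$-relational safety, I extract finite bad prefixes $m_1,\ldots,m_K$ with $m_i\le t_i$; hence $\trg{C_T\hole{\cmp{P_i}}}\sem m_i$ for every $i$. Applying \pf{\rkrsp} yields a \emph{single} source context $\src{C_S}$ with $\src{C_S\hole{P_i}}\sem m_i$ for all $i$, i.e.\ source traces $t_i'\ge m_i$. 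Source robust satisfaction forces $(t_1',\ldots,t_K')\in R$, yet each $t_i'$ pointwise extends the bad prefixes, so $(t_1',\ldots,t_K')\notin R$ --- the desired contradiction.

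For the converse \rkrsp$\Rightarrow$\pf{\rkrsp}, I again argue by contradiction, this time constructing a witness relation. Given $\src{P_1},\ldots,\src{P_K}$, $\trg{C_T}$, and prefixes $m_1,\ldots,m_K$ with $\trg{C_T\hole{\cmp{P_i}}}\sem m_i$ for all $i$, suppose no source context simultaneously produces all the $m_i$. I define
\[
R=\{(t_1,\ldots,t_K)\mid \neg(\forall i.\ m_i\le t_i)\},
\]
so that $(t_1,\ldots,t_K)\notin R$ holds exactly when every $t_i$ extends $m_i$. This $R$ is $K$-relational safety, witnessed by the bad prefixes $m_1,\ldots,m_K$ themselves, and it is robustly satisfied by $\src{P_1},\ldots,\src{P_K}$ in the source: any source run jointly extending all $m_i$ would furnish a context producing all $m_i$, contradicting the assumption. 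By \rkrsp, $R$ holds robustly in the target; but choosing, for each $i$, a target trace $t_i\ge m_i$ with $\trg{C_T\hole{\cmp{P_i}}}\sem t_i$ yields a tuple outside $R$, the final contradiction.

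The argument is almost entirely a reindexing of the arity-$2$ proof, so I expect no conceptual obstacle. The only point demanding care is the bookkeeping over the $K$ indices: verifying that the bad prefixes extracted from the $K$-relational safety definition are shared coherently across all $K$ components, and that the same witness context $\src{C_S}$ obtained from \pf{\rkrsp} serves every component in the forward direction. This uniform treatment of all $K$ indices is precisely what makes the constructed $R$ genuinely $K$-relational safety in the converse, and it is the reason the whole equivalence is, as claimed, a routine generalization rather than a new argument.
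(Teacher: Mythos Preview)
Your proposal is correct and follows exactly the approach the paper intends: the paper's proof is simply ``Analogous to \Cref{thm:rtrsp-eq}'', and you have faithfully spelled out what that analogy amounts to for general $K$, with both directions handled just as in the arity-$2$ case.
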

\begin{proof}
  Analogous to \Cref{thm:rtrsp-eq}.
\end{proof}

Finally, we define \rfrsp as the union over all \(K\) of the \rkrsp%
\!\!s. \jt{ugly hack}

\begin{definition}[Robust Finite-relational Safety Preservation (\rfrsp)]\label{def:pf-rfrsp}
  \begin{align*}
    \rfrsp:\quad
    &
    \forall K,\src{P_1},\cdots,\src{P_k},\com{R} \in 2^{(\ii{FinPref^k})}.
    \\
    &
    (\forall \src{C_S}, \com{m_1}, \cdots, \com{m_k}, 
    (
    \src{C_S\hole{P_1}\sem\com{m_1}} 
    \wedge \cdots \wedge
	   {\src{C_S\hole{P_k}\sem\com{m_k}}}
	   )
	   \\
	   &\quad \Rightarrow
	   (\com{m_1},\cdots,\com{m_k})\in\com{R}) \Rightarrow\\	   
	   &
	   (\forall \trg{C_T}.
	   (
	       {\trg{C_T\hole{\compgen{\src{P_1}}}\sem\com{m_1}}}
	       \wedge \cdots \wedge {\trg{C_T\hole{\compgen{\src{P_k}}}\sem\com{m_k}}}
	       )
	       \\
	       &\quad
	       \Rightarrow (\com{m_1},\cdots,\com{m_k})\in\com{R})
  \end{align*}
\end{definition}
The intuition for this property-free criterion is the same as
for finite-relational properties, except it only requires considering
finite prefixes.
\begin{definition}[Equivalent Characterization of \rfrsp (\pf{\rfrsp})]\label{def:rfrsp}
  \begin{align*}
    \pf{\rfrsp}:\quad \forall K.~\forall\src{P_1}\ldots\src{P_K}.~ \forall\trg{C_T}.~ \forall m_1 \ldots m_K.~
    &(\trg{C_T\hole{\cmp{P_1}} \rightsquigarrow}\ m_1
    \mathrel{\wedge} \ldots \mathrel{\wedge}
    \trg{C_T\hole{\cmp{P_K}} \rightsquigarrow}\ m_K)
    \Rightarrow\\
    &\exists \src{C_S}\ldotp (\src{C_S\hole{P_1}\rightsquigarrow}\ m_1
    \mathrel{\wedge} \ldots \mathrel{\wedge}
    \src{C_S\hole{P_K}\rightsquigarrow}\ m_K)
  \end{align*}
\end{definition}

\begin{theorem}[\rfrsp and \pf{\rfrsp} are equivalent]\label{thm:rfrsp-eq}
  \begin{align*}
    \rfrsp \iff \pf{\rfrsp}
  \end{align*}
\end{theorem}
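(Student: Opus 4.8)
The plan is to follow the template of the proof of \Cref{thm:rtp-eq}, adapted to $K$ programs and finite prefixes. Both $\rfrsp$ (\Cref{def:pf-rfrsp}) and $\pf{\rfrsp}$ (\Cref{def:rfrsp}) are already phrased directly in terms of finite prefixes and arbitrary relations $\com{R} \in 2^{(\ii{FinPref}^K)}$, so the relational-safety structure plays no role in this particular equivalence; the argument is purely the standard transposition between a ``property-full'' criterion (quantifying over relations) and its ``property-free'' characterization (exhibiting a back-translated context).

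For the forward direction ($\rfrsp \Rightarrow \pf{\rfrsp}$), I would fix $K$, programs $\src{P_1},\ldots,\src{P_K}$, a target context $\trg{C_T}$, and finite prefixes $m_1,\ldots,m_K$ satisfying $\trg{C_T\hole{\cmp{P_i}}} \sem m_i$ for every $i$, and then instantiate $\rfrsp$ with the canonical relation
\[
\com{R} = \myset{(m_1',\ldots,m_K')}{\exists \src{C_S}.~ \src{C_S\hole{P_1}} \sem m_1' \wedge \cdots \wedge \src{C_S\hole{P_K}} \sem m_K'}.
\]
The source-side premise of $\rfrsp$ holds vacuously for this $\com{R}$: if a single source context $\src{C_S}$ produces $m_1',\ldots,m_K'$ from $\src{P_1},\ldots,\src{P_K}$, then by definition $(m_1',\ldots,m_K') \in \com{R}$, taking the very same $\src{C_S}$ as the witness. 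Hence $\rfrsp$ delivers its target-side conclusion, which states precisely that $(m_1,\ldots,m_K) \in \com{R}$, i.e.\ that some source context $\src{C_S}$ produces all the $m_i$ from the corresponding $\src{P_i}$ — exactly the witness required by $\pf{\rfrsp}$.

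For the reverse direction ($\pf{\rfrsp} \Rightarrow \rfrsp$), I would fix $K$, programs $\src{P_1},\ldots,\src{P_K}$, and a relation $\com{R}$ together with the source-side premise $(H)$ of $\rfrsp$. To establish its target-side conclusion, I would fix $\trg{C_T}$ and $m_1,\ldots,m_K$ with $\trg{C_T\hole{\cmp{P_i}}} \sem m_i$ for all $i$, apply $\pf{\rfrsp}$ to obtain a single source context $\src{C_S}$ with $\src{C_S\hole{P_i}} \sem m_i$ for all $i$, and finally instantiate $(H)$ with this $\src{C_S}$ and these prefixes to conclude $(m_1,\ldots,m_K) \in \com{R}$, as required.

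There is no genuinely hard step here; the only care needed is the bookkeeping of the $K$-tuples and checking that the source-side premise is vacuously satisfied by the canonical $\com{R}$ above. Since this is merely the $K$-ary, finite-prefix instance of the same transposition already carried out for $\rtp$ (\Cref{thm:rtp-eq}) and $\rtrsp$ (\Cref{thm:rtrsp-eq}), I expect a formalization to reuse those proofs essentially verbatim, replacing whole traces by finite prefixes and pairs by $K$-tuples.
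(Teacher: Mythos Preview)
Your proposal is correct and matches the paper's approach: the paper's proof simply reads ``Analogous to \Cref{thm:rtrsp-eq}'', and you have spelled out exactly that analogy, reusing the canonical-relation instantiation pattern from \Cref{thm:rtp-eq} with $K$-tuples of finite prefixes in place of single traces.
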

\begin{proof}
	Analogous to \Cref{thm:rtrsp-eq}.
\end{proof}




\begin{definition}[Robust relational Safety Preservation (\rrsp)]\label{def:rrsp}
  $$
  \begin{multlined}
    \criteria{\rrsp}{rrsp}:\quad
    \forall R \in 2^{(\ii{FinPref})^{\omega}}. \forall P_1,.., P_k, ..
    (\forall\src{C_S} \ldotp \forall m_1, .., m_k, .. \ldotp (\forall i \ldotp \src{C_S\hole{P_i}} \rightsquigarrow m_i) 
     \Rightarrow (m_1, .., m_k, ..) \in R)
    \Rightarrow\\
    (\forall\trg{C_T} \ldotp \forall m_1, .., m_k, .. \ldotp (\forall i \ldotp \trg{C_T\hole{\cmp{P_i}}} \rightsquigarrow m_i) 
     \Rightarrow (m_1, .., m_k, ..) \in R)\quad
  \end{multlined}
  $$
\end{definition}

\begin{definition}[Equivalent Property-Full Characterization of \rrspprimed (\rrspprimed)]\label{def:rrspprimed}
  $$
  \begin{multlined}
    \criteria{\rrspprimed}{rrspprimed}:\quad
    \forall R \in 2^{(\srcAll \to \ii{FinPref})}.~
    (\forall\src{C_S} \ldotp \forall f \ldotp (\forall \src{P} \ldotp \src{C_S\hole{P}} \rightsquigarrow f(\src{P}))
    \Rightarrow R(f))
    \Rightarrow\\
    (\forall \trg{C_T} \ldotp \forall f \ldotp (\forall \src{P} \ldotp \trg{C_T\hole{\cmp{P}}} \rightsquigarrow f(\src{P}))
    \Rightarrow R(f))\quad
  \end{multlined}
  $$
\end{definition}

\begin{definition}[Equivalent Property-Free Characterization of \rrsp (\pf{\rrsp})]\label{def:rrsc}
  \begin{align*}
    \pf{\rrsp} :\quad \forall f: \srcAll \to \ii{FinPref}.~ \forall\trg{C_T}.~
    &(\forall \src{P}.~\trg{C_T\hole{\cmp{P}}} \rightsquigarrow f(\src{P})) \Rightarrow\\
    &\exists \src{C_S}\ldotp (\forall \src{P}.~\src{C_S\hole{P}}\rightsquigarrow f(\src{P}))
  \end{align*}
\end{definition}

\begin{theorem}[\pf{\rrsp} and \rrspprimed are equivalent]\label{thm:rrspprimed-eq}
  \begin{align*}
    \rrspprimed \iff \pf{\rrsp}
  \end{align*}
\end{theorem}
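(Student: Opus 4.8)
The plan is to mirror the equivalence proof already carried out in detail for \rtp and \pf{\rtp} (\Cref{thm:rtp-eq}), since \rrspprimed and \pf{\rrsp} stand in exactly the same relationship: the single trace $t$ is replaced by a choice function $f : \srcAll \to \ii{FinPref}$, and the trace property $\pi$ is replaced by the predicate $R$ on such functions. I would prove the two implications separately, both by purely logical manipulation of the quantifiers.

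For the $\rrspprimed \Rightarrow \pf{\rrsp}$ direction, I would fix an arbitrary $f$ and $\trg{C_T}$ satisfying $\forall \src{P}.~ \trg{C_T\hole{\cmp{P}}} \rightsquigarrow f(\src{P})$, and instantiate \rrspprimed with the canonical relation
$$R = \{\, g : \srcAll \to \ii{FinPref} \mid \exists \src{C_S}.~ \forall \src{P}.~ \src{C_S\hole{P}} \rightsquigarrow g(\src{P}) \,\}.$$
The source-side premise of \rrspprimed for this $R$ — namely that every $\src{C_S}$ and $g$ with $\forall \src{P}.~\src{C_S\hole{P}} \rightsquigarrow g(\src{P})$ satisfy $R(g)$ — holds by construction, taking the same $\src{C_S}$ as the existential witness. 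Hence the target-side conclusion of \rrspprimed applies; instantiated at $g = f$ and our $\trg{C_T}$, and using the assumption $\forall \src{P}.~ \trg{C_T\hole{\cmp{P}}} \rightsquigarrow f(\src{P})$, it yields $R(f)$. Unfolding $R(f)$ gives exactly a source context $\src{C_S}$ with $\forall \src{P}.~ \src{C_S\hole{P}} \rightsquigarrow f(\src{P})$, which is the conclusion of \pf{\rrsp}.

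For the converse $\pf{\rrsp} \Rightarrow \rrspprimed$ direction, I would fix an arbitrary $R$ together with the source-side hypothesis of \rrspprimed, which I abbreviate $(H)$: for all $\src{C_S}$ and $f$, $(\forall \src{P}.~\src{C_S\hole{P}} \rightsquigarrow f(\src{P}))$ implies $R(f)$. I then fix $\trg{C_T}$ and $f$ with $\forall \src{P}.~\trg{C_T\hole{\cmp{P}}} \rightsquigarrow f(\src{P})$, and must establish $R(f)$. Applying \pf{\rrsp} to this $f$ and $\trg{C_T}$ produces a source context $\src{C_S}$ with $\forall \src{P}.~\src{C_S\hole{P}} \rightsquigarrow f(\src{P})$; feeding this $\src{C_S}$ and $f$ into $(H)$ discharges $R(f)$, completing the direction.

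Neither direction poses a genuine obstacle — the argument is propositional, just as in \Cref{thm:rtp-eq}, and involves no facts about the languages, the compiler, or the structure of finite prefixes. The only point requiring care is the choice of the canonical relation $R$ in the forward direction and the observation that its source-side premise is discharged by reflexivity (the witnessing source context is the one already given); the remainder is quantifier bookkeeping. In the mechanized development this equivalence would be recorded alongside the other property-free characterizations in \texttt{Criteria.v}.
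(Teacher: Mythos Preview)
Your proof is correct and is exactly the argument one would expect; it mirrors the \rtp/\pf{\rtp} equivalence precisely as you intended, and the paper's own proof simply points to the Coq development (theorem \texttt{RrSC\_RrSP'} in \texttt{Criteria.v}), which formalizes the same two-direction quantifier manipulation you wrote out. There is nothing to add or compare.
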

\begin{proof}
	See file Criteria.v, theorem RrSC\_RrSP'.
\end{proof}

\begin{theorem}[\rrspprimed and \rrsp]\label{thm:rrsp-omega}
 Assuming the set $\srcAll$ is countable,
 \begin{align*}
    \rrspprimed \iff \rrsp
  \end{align*}  
\end{theorem}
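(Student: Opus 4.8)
The plan is to prove both implications by transporting data across a fixed enumeration of the source programs, in the same style as the analogous trace-property result \Cref{thm:rrtp-omega}. Since $\srcAll$ is countable, I would first fix a bijection $e : \mathbb{N} \to \srcAll$ (if $\srcAll$ is finite, the obvious finite enumeration works with minor changes). This bijection induces an isomorphism between the $\omega$-indexed sequences of finite prefixes $(\ii{FinPref})^{\omega}$ used by \rrsp and the functions $\srcAll \to \ii{FinPref}$ used by \rrspprimed, and hence a bijection between relations $R \in 2^{(\ii{FinPref})^{\omega}}$ and relations $R' \in 2^{(\srcAll \to \ii{FinPref})}$, sending $R'$ to $\{ (m_i)_i \mid R'(\hat{m}) \}$ where $\hat{m}(e(i)) = m_i$, and conversely. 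The whole argument then reduces to checking that, under this transport, the source premises and the target conclusions of the two criteria coincide. It is convenient to first pass through the property-free form \pf{\rrsp}, which is equivalent to \rrspprimed by \Cref{thm:rrspprimed-eq}, since \pf{\rrsp} already speaks directly about the witnessing functions $g : \srcAll \to \ii{FinPref}$.

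For the direction $\rrsp \Rightarrow \rrspprimed$, I would take an arbitrary $R'$, form the corresponding $R$, and instantiate the universally quantified program family of \rrsp with the enumeration $(e(i))_i$, which lists every program. A prefix-valued witness $f : \srcAll \to \ii{FinPref}$ then corresponds to the sequence $(f(e(i)))_i$, so that the source premise and the target conclusion of \rrspprimed for $R'$ become exactly the source premise and the target conclusion of \rrsp for $R$ along this family; invoking \rrsp on this instance therefore yields \rrspprimed. For the direction $\rrspprimed \Rightarrow \rrsp$, I would take an arbitrary family $(\src{P_i})_i$ and relation $R$, and define $R'(g) \iff (g(\src{P_i}))_i \in R$ by precomposing a candidate function $g$ with the family. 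The source premise of \rrsp then implies the source premise of \rrspprimed for this $R'$: given $g$ with $\src{C_S\hole{P}} \rightsquigarrow g(\src{P})$ for all $\src{P}$, one sets $m_i = g(\src{P_i})$ and applies the \rrsp premise. Then \rrspprimed delivers the conclusion for $R'$, and recovering the conclusion of \rrsp amounts to presenting a given target witness sequence $(m_i)_i$ as $(g(\src{P_i}))_i$ for a suitable total function $g$.

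The hard part will be this last recovery step in the backward direction. To reuse the conclusion of \rrspprimed I must present the witness $(m_i)_i$ as coming from a genuine function $g$ defined on all of $\srcAll$, which requires \emph{(i)} assigning a value on programs that do not occur in the family---harmless, since every program trivially produces the initial, not-yet-terminated prefix $\circ$ (because $\circ \leq t$ for every trace $t$), so I can set $g$ to $\circ$ there---and \emph{(ii)} making $g$ well defined when the family repeats a program, where the two prescribed prefixes must agree. Point \emph{(ii)} is the genuinely delicate case, and it is exactly the place where the argument mirrors the relational-hyperproperty reasoning invoked for \Cref{thm:rrtp-omega}: there the datum attached to each program is its entire behavior, so repeated occurrences of a program collapse automatically, and transporting that reasoning through the bijection $e$---working with the canonical $\srcAll$-indexed family underlying the enumeration rather than with an arbitrary $\omega$-sequence---is what discharges the obligation here. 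Once the transport of premises and conclusions is verified in both directions, the equivalence $\rrspprimed \iff \rrsp$ follows.
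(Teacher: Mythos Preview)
Your plan—fix an enumeration $e:\mathbb{N}\to\srcAll$ and transport relations across it, taking $R'(g)\iff(g(\src{P_i}))_i\in R$ for the backward direction—is exactly what the paper intends by ``same argument as \Cref{thm:rrhp-alter}'', and the direction $\rrsp\Rightarrow\rrspprimed$ goes through as you describe. You also correctly isolate where the other direction diverges from the hyperproperty case: in \Cref{thm:rrhp-alter} the expanded $\rrhpA$-instance with $R'$ is \emph{literally} the desired $\rrhp$-instance, because the datum at position $i$ is $\behav{C\hole{P_i}}$, a function of $P_i$ given the context; for $\rrsp$ the datum is a prefix $m_i$ with $C\hole{P_i}\rightsquigarrow m_i$, and nothing forces $m_i=m_j$ when $\src{P_i}=\src{P_j}$.

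Your resolution of point (ii), however, does not close this gap. Appealing to ``the canonical $\srcAll$-indexed family underlying the enumeration'' cannot help: $\rrsp$ quantifies over \emph{all} $\omega$-families, so you are not free to restrict to the injective one, and invoking ``the relational-hyperproperty reasoning'' only restates why the $\rrhp$ case was easy. Concretely, to pull the $\rrsp$ target conclusion back from the $\rrspprimed$-with-$R'$ conclusion you must, given $(m_i)_i$ with $\forall i.\ \trg{C_T\hole{\cmp{P_i}}}\rightsquigarrow m_i$, produce a total $g$ with $\forall P.\ \trg{C_T\hole{\cmp{P}}}\rightsquigarrow g(P)$ and $g(\src{P_i})=m_i$ for every $i$; when $\src{P_1}=\src{P_2}$ and $m_1\neq m_2$—entirely consistent with the hypotheses, since a single whole program has many finite prefixes—no such $g$ exists. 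This is not a mere technicality: a two-program chain in which some target context gives $\cmp{Q}$ two incomparable prefixes while each source context gives $\src{Q}$ only one of them satisfies $\pf{\rrsp}$ (hence $\rrspprimed$) yet violates $\rrsp$ for the family $(\src{Q},\src{Q},\ldots)$. So the step you call ``delicate'' is exactly where the literal transport of the \Cref{thm:rrhp-alter} argument breaks for the $\rightsquigarrow$-based criteria, and your proposal does not supply the missing ingredient.
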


\begin{proof}
  Same argument used in \Cref{thm:rrhp-alter}.
\end{proof}

\begin{theorem}[\rrsp and \pf{\rrsp}]\label{thm:rrsp-eq}
Assuming the set $\srcAll$ is countable,
   \begin{align*}
    \rrsp \iff \pf{\rrsp}
  \end{align*}
\end{theorem}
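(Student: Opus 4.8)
The plan is to derive this equivalence by composing the two results that immediately precede it, exactly as the proof of the analogous trace-based statement \Cref{thm:rrtp-eq} does. Concretely, \Cref{thm:rrspprimed-eq} already gives $\rrspprimed \iff \pf{\rrsp}$, and \Cref{thm:rrsp-omega} gives $\rrspprimed \iff \rrsp$ under the hypothesis that $\srcAll$ is countable. Chaining these by transitivity of $\iff$ yields $\rrsp \iff \rrspprimed \iff \pf{\rrsp}$, hence $\rrsp \iff \pf{\rrsp}$, which is the desired conclusion. Thus the statement is a direct corollary, and the only hypothesis it inherits---countability of $\srcAll$---is exactly the one required by \Cref{thm:rrsp-omega}.

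It is worth recalling what each borrowed equivalence contributes. The step $\rrspprimed \iff \pf{\rrsp}$ is the standard passage from a property-full criterion to its property-free characterization, structurally identical to \Cref{thm:rtp-eq}: for $\Rightarrow$ one instantiates $\rrspprimed$ with the canonical relation $R = \{\, g : \srcAll \to \ii{FinPref} \mid \exists \src{C_S}.\, \forall \src{P}.\, \src{C_S\hole{P}} \rightsquigarrow g(\src{P}) \,\}$, whose source premise is discharged by reusing the given context, so that the target conclusion of $\rrspprimed$ at $f$ unfolds to precisely $\pf{\rrsp}$; for $\Leftarrow$ one uses $\pf{\rrsp}$ to turn a target context satisfying the antecedent for some $f$ into a source context satisfying the same antecedent, which then feeds the source premise of $\rrspprimed$ to derive $R(f)$. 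The step $\rrspprimed \iff \rrsp$ is a purely presentational reindexing: $\rrsp$ quantifies over relations $R \in 2^{(\ii{FinPref})^{\omega}}$ indexed by $\omega$-sequences of finite prefixes, whereas $\rrspprimed$ quantifies over relations $R \in 2^{(\srcAll \to \ii{FinPref})}$ indexed by maps out of $\srcAll$; fixing an enumeration of $\srcAll$ induces a bijection between these two index sets under which the two quantifications over $R$ coincide.

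I expect no genuine obstacle at this level, since all the real content has already been discharged upstream. The single delicate point is the countability hypothesis, and it is localized entirely in \Cref{thm:rrsp-omega}: without an enumeration $\src{P_1}, \src{P_2}, \ldots$ of the source programs, there is no canonical way to view a map $\srcAll \to \ii{FinPref}$ as an $\omega$-indexed sequence, and the reindexing identifying $\rrspprimed$ with $\rrsp$ would break down. Since that assumption is already carried (and used) by \Cref{thm:rrsp-omega}, the present statement follows by nothing more than the transitive composition of the two equivalences, mirroring the trace-level proof of \Cref{thm:rrtp-eq}.
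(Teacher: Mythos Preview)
Your proposal is correct and matches the paper's own proof, which simply states that the result follows from \Cref{thm:rrsp-omega} and \Cref{thm:rrspprimed-eq}. The additional explanation you provide of what each borrowed equivalence contributes is accurate elaboration, but the core argument is identical.
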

\begin{proof}
  Follows from \Cref{thm:rrsp-omega} and \Cref{thm:rrspprimed-eq}.
\end{proof}

\subsubsection{Robust Relational Relaxed Safety Preservation}\label{sec:rel-xsafety}

Relational Relaxed Safety properties generalize relational safety properties
as they consider $\ii{XPref}$ instead of $\ii{FinPref}$. The semantics of a programming
language can capture more than just finite prefixes of complete traces justifying the definition of $\ii{Xpref}$ (see \Cref{sec:not}).
For instance, silent divergence is not finitely observable, but can be represented and produced by small-step semantics, for
instance with a rule such as \(\typerule{Silent-Div}{\forall n, e
  \xto{\epsilon}^n e'}{e \sem\Uparrow}{ex-div}\) where
\(\Uparrow\) represents silent divergence. \\ \\ 
%
%
The criteria in this section are defined exactly as the criteria in the previous section,
except they deal with extended prefixes instead of finite prefixes.

\begin{definition}[Robust relational relaXed safety Preservation (\rrxp)]\label{def:rrxp}
  $$
  \begin{multlined}
    \criteria{\rrsp}{rrsp}:\quad
    \forall R \in 2^{(\ii{XPref})^{\omega}}. \forall P_1,.., P_k, ..
    (\forall\src{C_S} \ldotp \forall x_1, .., x_k, .. \ldotp (\forall i \ldotp \src{C_S\hole{P_i}} \rightsquigarrow x_i) 
     \Rightarrow (x_1, .., x_k, ..) \in R)
    \Rightarrow\\
    (\forall\trg{C_T} \ldotp \forall x_1, .., x_k, .. \ldotp (\forall i \ldotp \trg{C_T\hole{\cmp{P_i}}} \rightsquigarrow x_i) 
     \Rightarrow (x_1, .., x_k, ..) \in R)\quad
  \end{multlined}
  $$
\end{definition}

\begin{definition}[ (\rrxpprimed)]\label{def:rrxpprimed}
  $$
  \begin{multlined}
    \criteria{\rrxp}{rrxp}:\quad
    \forall R \in 2^{(\srcAll \to \ii{XPref})}.~
    (\forall\src{C_S} \ldotp \forall f \ldotp (\forall \src{P} \ldotp \src{C_S\hole{P}} \rightsquigarrow f(\src{P}))
    \Rightarrow R(f))
    \Rightarrow\\
    (\forall \trg{C_T} \ldotp \forall f \ldotp (\forall \src{P} \ldotp \trg{C_T\hole{\cmp{P}}} \rightsquigarrow f(\src{P}))
    \Rightarrow R(f))
  \end{multlined}
  $$
\end{definition}

\begin{definition}[Equivalent Characterization of \rrxpprimed (\pf{\rrxp})]\label{def:rrxc}
  \begin{align*}
    \pf{\rrxp} :\quad \forall f: \srcAll \to \ii{XPref}.~ \forall\trg{C_T}.~
    &(\forall \src{P}.~\trg{C_T\hole{\cmp{P}}} \rightsquigarrow f(\src{P})) \Rightarrow\\
    &\exists \src{C_S}\ldotp (\forall \src{P}.~\src{C_S\hole{P}}\rightsquigarrow f(\src{P}))
  \end{align*}
\end{definition}

\begin{theorem}[\pf{\rrxp} and \rrxpprimed are equivalent]\label{thm:rrxpprimed-eq}
  \begin{align*}
    \rrxp \iff \pf{\rrxp}
  \end{align*}
\end{theorem}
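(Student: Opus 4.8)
The plan is to prove the two implications of this biconditional separately, following exactly the template used for the earlier property-full / property-free equivalences in this appendix, such as \Cref{thm:rtp-eq} and \Cref{thm:rrspprimed-eq}. The only change is that the codomain of the witnessing functions is $\ii{XPref}$ rather than $\ii{Trace}$ or $\ii{FinPref}$, and this difference plays no role in the argument, which is purely a matter of quantifier reordering together with one well-chosen instantiation of the universally quantified relation.

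First I would dispatch the easy direction, $\pf{\rrxp} \Rightarrow \rrxpprimed$. Assume $\pf{\rrxp}$ and fix an arbitrary relation $R \in 2^{(\srcAll \to \ii{XPref})}$ satisfying the source-side hypothesis of $\rrxpprimed$, namely that every $f$ realized by some source context lies in $R$. To establish the target-side conclusion, take any target context $\trg{C_T}$ and any $f$ with $\trg{C_T\hole{\cmp{P}}} \rightsquigarrow f(\src{P})$ for all $\src{P}$. Applying $\pf{\rrxp}$ to this $\trg{C_T}$ and $f$ yields a source context $\src{C_S}$ with $\src{C_S\hole{P}} \rightsquigarrow f(\src{P})$ for all $\src{P}$; the source-side hypothesis then gives $R(f)$, as required.

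For the converse, $\rrxpprimed \Rightarrow \pf{\rrxp}$, the key step is to instantiate the universally quantified relation in $\rrxpprimed$ with the ``realizable at the source'' predicate
\[
  R \;=\; \{\, f : \srcAll \to \ii{XPref} \;\mid\; \exists \src{C_S}.\ \forall \src{P}.\ \src{C_S\hole{P}} \rightsquigarrow f(\src{P}) \,\}.
\]
Its source-side hypothesis holds trivially: for any $\src{C_S}$ and any $f$ realized by $\src{C_S}$, that very $\src{C_S}$ witnesses $f \in R$. Hence $\rrxpprimed$ delivers the target-side conclusion for this $R$, which, after unfolding the definition of $R$, is precisely $\pf{\rrxp}$ --- for every $\trg{C_T}$ and every $f$ realized by $\trg{C_T}$ against the compiled programs, there exists a source context realizing the same $f$.

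I expect no real obstacle here: the whole proof is the standard trick of choosing $R$ to be the image of source-context realizability, and nothing about extended prefixes (the silent-divergence marker $\sdiv$) interacts with this quantifier manipulation. The single point meriting a line of care is simply that $R$ is a legitimate element of $2^{(\srcAll \to \ii{XPref})}$, i.e.\ a set of functions, which it visibly is. This mirrors the machine-checked proof of \Cref{thm:rrspprimed-eq} (file \texttt{Criteria.v}, theorem \texttt{RrSC\_RrSP'}) essentially verbatim, up to replacing $\ii{FinPref}$ with $\ii{XPref}$.
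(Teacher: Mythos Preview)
Your proposal is correct and matches the paper's approach: the paper simply defers to the Coq formalization (file \texttt{Criteria.v}, theorem \texttt{RrXC\_RrXP'}), but the argument you spell out---the standard instantiation of $R$ with the source-realizability predicate, exactly as in the explicit proof of \Cref{thm:rtp-eq}---is precisely what that formalization encodes, and you have correctly identified that the passage from $\ii{FinPref}$ (in \Cref{thm:rrspprimed-eq}) to $\ii{XPref}$ is immaterial to the quantifier manipulation. One small remark: the displayed body of the theorem in the paper reads $\rrxp \iff \pf{\rrxp}$, which appears to be a typo for $\rrxpprimed \iff \pf{\rrxp}$ (as the theorem title, the Coq name \texttt{RrXC\_RrXP'}, and the parallel \Cref{thm:rrspprimed-eq} all indicate); you have correctly proved the intended statement.
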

\begin{proof}
	See file Criteria.v, theorem RrXC\_RrXP'.
\end{proof}

\begin{theorem}[\rrxpprimed and \rrxp]\label{thm:rrxp-omega}
 Assuming the set $\srcAll$ is countable,
 \begin{align*}
    \rrxpprimed \iff \rrxp
  \end{align*}  
\end{theorem}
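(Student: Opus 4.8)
The plan is to reduce the statement to a pure reindexing by exploiting the countability of $\srcAll$, following the same recipe already used for \Cref{thm:rrhp-alter}: the only difference between $\rrxp$ and $\rrxpprimed$ is that one quantifies over $\omega$-indexed families of relaxed prefixes while the other quantifies over total functions $f:\srcAll\to\ii{XPref}$, and under countability these two kinds of objects are in canonical bijection. The relaxed-prefix content plays no role whatsoever in this bijection, so the behavior-based argument for $\rrhp$ transfers essentially verbatim, with $\behav{\cdot}$ replaced by $\ii{XPref}$.

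First I would use countability to fix an enumeration $e:\Nat\to\srcAll$ that is a bijection (the finite case degenerates to the corresponding finite-arity criterion and is handled analogously). This $e$ induces a bijection $\Psi$ from functions $f:\srcAll\to\ii{XPref}$ to sequences in $(\ii{XPref})^{\omega}$, namely $\Psi(f)=(f(e(1)),f(e(2)),\dots)$, with inverse sending $(x_n)_n$ to the function $\com{P}\mapsto x_{e^{-1}(\com{P})}$. Lifting $\Psi$ pointwise to subsets yields a bijection between relations $R\subseteq(\srcAll\to\ii{XPref})$ and relations $R'\subseteq(\ii{XPref})^{\omega}$, with $R'=\{\,\Psi(f)\mid f\in R\,\}$ and $R=\{\,f\mid\Psi(f)\in R'\,\}$. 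The second ingredient is the one semantic fact that makes the bodies line up: for any context $\com{C}$ and any $f$, the hypothesis $\forall \com{P}.\ \com{C}\hole{\com{P}}\rightsquigarrow f(\com{P})$ holds iff $\forall n.\ \com{C}\hole{e(n)}\rightsquigarrow \Psi(f)_n$ (and symmetrically for the compiled programs $\cmp{P}$), which is immediate since $e$ is a bijection and $\Psi(f)_n=f(e(n))$ by definition. Crucially the semantics $\rightsquigarrow$ and the compiler are never inspected here, only the index set is renamed, so nothing about relaxed prefixes needs to be re-examined.

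With these two ingredients in hand, instantiating $\rrxpprimed$ at a relation $R$ produces exactly the body of $\rrxp$ taken at the enumerating family $(e(n))_n$ and the transported relation $R'=\Psi(R)$; because $\Psi$ is a bijection on relations, the universal quantifier $\forall R$ of $\rrxpprimed$ ranges over precisely the instances that $\forall R'$ of $\rrxp$ ranges over at this family, and both implications of the equivalence fall out simultaneously. The part that deserves the most care — the \emph{main obstacle} — is the quantifier transport over relations together with the reading of $\rrxp$'s quantification over program families: the infinite-arity criterion must be understood as relating the whole collection of source programs, each contributing once, so that under countability this collection is exactly the bijective enumeration $e$ and no spurious repetition of a program arises. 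Once that identification is fixed, establishing that the lift of $\Psi$ is a genuine bijection on powersets (this is where countability is indispensable) and that it commutes with the producibility predicate are the only steps that require explicit justification; the remainder is a mechanical transport that mirrors the proof of \Cref{thm:rrhp-alter}.
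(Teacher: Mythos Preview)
Your proposal follows the paper's approach, which is simply to reuse the argument of \Cref{thm:rrhp-alter}; at that level of granularity the two agree.

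One point deserves correction, however. In the proof of \Cref{thm:rrhp-alter} the two directions are handled asymmetrically: the bijection $\mathtt{G}:\srcAll\to\Nat$ is used only for $\rrhp\Rightarrow\rrhpprimed$, by instantiating the $\omega$-indexed criterion at the enumerating family $\src{Q_i}=\mathtt{G}^{-1}(i)$. The converse direction $\rrhpprimed\Rightarrow\rrhp$ is proved for an \emph{arbitrary} sequence $\src{P_1},\src{P_2},\ldots$ (repetitions allowed) by setting $R'=\{f\mid(f(\src{P_1}),f(\src{P_2}),\ldots)\in R\}$ and instantiating the primed criterion at $R'$; no bijection enters that direction, and the definition is not reinterpreted. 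Your claim that the infinite-arity criterion ``must be understood as relating the whole collection of source programs, each contributing once, so that \ldots\ no spurious repetition of a program arises'' is a restriction of the statement of $\rrxp$ rather than a proof of it as written. Transport via a single bijection $\Psi$ on relations, as you set it up, only yields the instance of $\rrxp$ at the enumerating family, not at every family. Follow the paper's two-direction treatment instead: use the bijection for $\rrxp\Rightarrow\rrxpprimed$, and for $\rrxpprimed\Rightarrow\rrxp$ take the sequence $(\src{P_i})_i$ as given and build $R'$ from it directly.
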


\begin{proof}
  Same argument used in \Cref{thm:rrhp-alter}.
\end{proof}

\begin{theorem}[\rrxp and \pf{\rrxp}]\label{thm:rrxp-eq}
 Assuming the set $\srcAll$ is countable,
   \begin{align*}
    \rrxp \iff \pf{\rrxp}
  \end{align*}
\end{theorem}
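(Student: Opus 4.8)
The plan is to obtain \Cref{thm:rrxp-eq} by chaining the two equivalences that immediately precede it, exactly as \Cref{thm:rrsp-eq} is derived from \Cref{thm:rrsp-omega} and \Cref{thm:rrspprimed-eq} in the relational-safety case. \Cref{thm:rrxpprimed-eq} already gives $\rrxpprimed \iff \pf{\rrxp}$, relating the property-full function-indexed criterion to its property-free characterization, and \Cref{thm:rrxp-omega} already gives $\rrxpprimed \iff \rrxp$, relating the function-indexed form to the $\omega$-indexed form. Composing these biconditionals yields $\rrxp \iff \rrxpprimed \iff \pf{\rrxp}$, hence $\rrxp \iff \pf{\rrxp}$, which is the claim. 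Since both auxiliary statements appear above and may be assumed, the only real work is to check that their contents combine cleanly; I sketch each ingredient below.

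For $\rrxpprimed \iff \pf{\rrxp}$ (\Cref{thm:rrxpprimed-eq}) I would follow the same two-sided manipulation used for the base criterion $\rtp \iff \pf{\rtp}$ in \Cref{thm:rtp-eq}, lifted from traces to functions $f : \srcAll \to \ii{XPref}$. For the $\Leftarrow$ direction, given a relation $R$ that the source robustly satisfies, any target context $\trg{C_T}$ and any $f$ with $\trg{C_T\hole{\cmp{P}}} \rightsquigarrow f(\src{P})$ for all $\src{P}$ can be fed to $\pf{\rrxp}$ to produce a source context $\src{C_S}$ with $\src{C_S\hole{P}} \rightsquigarrow f(\src{P})$ for all $\src{P}$; instantiating the source hypothesis at this $\src{C_S}$ and this $f$ then gives $R(f)$, the desired target conclusion. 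For the $\Rightarrow$ direction I would instantiate $\rrxpprimed$ with the witnessing relation $R = \myset{g}{\exists \src{C_S}.~\forall \src{P}.~\src{C_S\hole{P}} \rightsquigarrow g(\src{P})}$; the source premise of $\rrxpprimed$ then holds trivially, taking the given context as the witness, so the target conclusion applied to the $f$ and $\trg{C_T}$ at hand yields precisely the existential $\src{C_S}$ demanded by $\pf{\rrxp}$.

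For $\rrxpprimed \iff \rrxp$ (\Cref{thm:rrxp-omega}) the essential ingredient is the countability hypothesis on $\srcAll$: fixing an enumeration $\srcAll = \{\src{Q_1}, \src{Q_2}, \dots\}$ induces a bijection between functions $f : \srcAll \to \ii{XPref}$ and sequences $(f(\src{Q_1}), f(\src{Q_2}), \dots) \in (\ii{XPref})^\omega$. Under this bijection an $\omega$-ary relation on relaxed prefixes corresponds to a relation on such functions, and the robust-satisfaction premise and conclusion translate pointwise, since the condition ``the context produces $f(\src{P})$ on every program $\src{P}$'' coincides with ``the context produces $f(\src{Q_i})$ at every index $i$''. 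This is the argument already recorded for \Cref{thm:rrhp-alter}, applied verbatim with $\ii{XPref}$ in place of behaviors.

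The main obstacle lies entirely in \Cref{thm:rrxp-omega}: one must ensure the translation between the two indexings respects repeated or omitted programs in an $\omega$-indexed program sequence, which is exactly what the enumeration-based bijection of \Cref{thm:rrhp-alter} is designed to handle. Once that correspondence is in place, the remaining steps are the routine relation-instantiation manipulations familiar from \Cref{thm:rtp-eq}, and \Cref{thm:rrxp-eq} follows immediately by transitivity of $\iff$.
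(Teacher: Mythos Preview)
Your proposal is correct and matches the paper's proof exactly: the paper simply writes ``Follows from \Cref{thm:rrxp-omega} and \Cref{thm:rrxpprimed-eq},'' i.e., it chains the same two equivalences you identify. Your additional sketches of those two ingredients are not needed here (they are assumed), but they are accurate and mirror the arguments the paper gives elsewhere for the analogous results.
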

\begin{proof}
  Follows from \Cref{thm:rrxp-omega} and \Cref{thm:rrxpprimed-eq}.
\end{proof}

\subsubsection{Robust Finite-Relational Relaxed Safety Preservation}\label{sec:fin-rel-Xsafety}

\ch{Broke the way the \rfrxp looks below, since I had to take it
  out of align to properly add a label}

\begin{definition}[Robust Finite-relational relaXed safety Preservation (\rfrxp)]\label{def:pf-rfrxp}
~\\
  $
    \criteria{\rfrxp}{rfrxp}:\quad
  $
  \begin{align*}
    &
    \forall K,\src{P_1},\cdots,\src{P_K},\com{R} \in 2^{(\ii{XPref^K})}.
    \\
    &
    (\forall \src{C_S}, \com{x_1}, \cdots, \com{x_K}, 
    (
    \src{C_S\hole{P_1}\sem\com{x_1}} 
    \wedge \cdots \wedge
	   {\src{C_S\hole{P_K}\sem\com{x_K}}}
	   )
	   \\
	   &\quad \Rightarrow
	   (\com{x_1},\cdots,\com{x_K})\in\com{R}) \Rightarrow\\	   
	   &
	   (\forall \trg{C_T}.
	   (
	       {\trg{C_T\hole{\compgen{\src{P_1}}}\sem\com{x_1}}}
	       \wedge \cdots \wedge {\trg{C_T\hole{\compgen{\src{P_K}}}\sem\com{x_K}}}
	       )
	       \\
	       &\quad
	       \Rightarrow (\com{x_1},\cdots,\com{x_K})\in\com{R})
  \end{align*}
\end{definition}

\begin{definition}[Equivalent Characterization of \rfrxp (\pf{\rfrxp})]\label{def:rfrxp}
  \begin{align*}
    \pf{\rfrxp}:\quad \forall K.~\forall\src{P_1}\ldots\src{P_K}.~ \forall\trg{C_T}.~ \forall x_1 \ldots x_K.~
    &(\trg{C_T\hole{\cmp{P_1}} \rightsquigarrow}\ x_1
    \mathrel{\wedge} \ldots \mathrel{\wedge}
    \trg{C_T\hole{\cmp{P_K}} \rightsquigarrow}\ x_K)
    \Rightarrow\\
    &\exists \src{C_S}\ldotp (\src{C_S\hole{P_1}\rightsquigarrow}\ x_1
    \mathrel{\wedge} \ldots \mathrel{\wedge}
    \src{C_S\hole{P_K}\rightsquigarrow}\ x_K)
  \end{align*}
\end{definition}

\begin{theorem}[\rfrxp and \pf{\rfrxp} are equivalent]\label{thm:rfrxp-eq}
  \begin{align*}
    \rfrxp \iff \pf{\rfrxp}
  \end{align*}
\end{theorem}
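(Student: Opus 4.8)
The plan is to follow exactly the two-directional template established for \Cref{thm:rtp-eq} (the equivalence of \rtp and \pf{\rtp}), since \rfrxp stands to \pf{\rfrxp} in precisely the same relationship, only with single traces $t$ replaced by $K$-tuples of relaxed prefixes $(x_1,\ldots,x_K)$ and with the relation $R$ ranging over $2^{(\ii{XPref}^K)}$ rather than over $2^\ii{Trace}$. Neither direction should require anything about relaxed prefixes or the compiler specifically; both are pure rearrangements of quantifiers, exactly as in \Cref{thm:rtrsp-eq} and \Cref{thm:rfrsp-eq}.

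For the $\Rightarrow$ direction I would assume \rfrxp and aim to establish \pf{\rfrxp}. After fixing an arity $K$, source programs $\src{P_1},\ldots,\src{P_K}$, a target context $\trg{C_T}$, and relaxed prefixes $x_1,\ldots,x_K$ with $\trg{C_T\hole{\cmp{P_i}}}\sem x_i$ for every $i$, the key move is to instantiate \rfrxp at the same $K$ and programs with the canonical relation $R = \{(x_1',\ldots,x_K') \mid \exists\src{C_S}.~\forall i.~\src{C_S\hole{P_i}}\sem x_i'\}$. The source premise of \rfrxp for this $R$ holds trivially: any prefixes jointly produced from $\src{P_1},\ldots,\src{P_K}$ under a common source context $\src{C_S}$ lie in $R$ by taking that very $\src{C_S}$ as witness. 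The conclusion of \rfrxp then applies to $\trg{C_T}$ and the given $x_1,\ldots,x_K$, yielding $(x_1,\ldots,x_K)\in R$, which unfolds to exactly the source context demanded by \pf{\rfrxp}.

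For the $\Leftarrow$ direction I would assume \pf{\rfrxp} and derive \rfrxp. Here I fix $K$, programs $\src{P_1},\ldots,\src{P_K}$, and an arbitrary relation $R\in 2^{(\ii{XPref}^K)}$, together with the source hypothesis $(H)$ that every joint source production lands in $R$; I must show the analogous target statement. Fixing $\trg{C_T}$ and target-producible prefixes $x_1,\ldots,x_K$ with $\trg{C_T\hole{\cmp{P_i}}}\sem x_i$ for each $i$, I apply \pf{\rfrxp} to obtain a single source context $\src{C_S}$ jointly producing the same $x_1,\ldots,x_K$ from $\src{P_1},\ldots,\src{P_K}$, and then discharge the goal $(x_1,\ldots,x_K)\in R$ by instantiating $(H)$ with this $\src{C_S}$ and these prefixes.

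I do not expect a genuine obstacle, as the argument is purely logical and uniform across the whole family of criteria. The only point deserving care is, as in the \rtp proof, the choice of the witnessing relation $R$ in the forward direction: it must be exactly the set of prefix-tuples explained by some common source context, so that its source-side robust satisfaction is trivial enough to feed \rfrxp while its target-side membership unfolds back into \pf{\rfrxp}. Consequently this equivalence admits the same short mechanized proof as its siblings.
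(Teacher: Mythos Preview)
Your proposal is correct and matches the paper's approach: the paper's proof simply reads ``Analogous to \Cref{thm:rtrxp-eq}'' (which in turn defers to the Coq development), and the two-directional quantifier-shuffling argument you spell out, with the canonical relation $R = \{(x_1',\ldots,x_K') \mid \exists\src{C_S}.~\forall i.~\src{C_S\hole{P_i}}\sem x_i'\}$ in the forward direction, is precisely the uniform template underlying all of these equivalences.
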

\begin{proof}
	Analogous to \Cref{thm:rtrxp-eq}.
\end{proof}

\begin{definition}[Robust 2-relational relaXed safety Preservation (\rtrxp)]\label{def:rtrxp}
  $$
  \begin{multlined}
    \criteria{\rtrxp}{rtrxp}:\quad
    \forall R \in 2^\ii{(XPref^2)}.~\forall\src{P_1}~\src{P_2}.~
    (\forall\src{C_S}~x_1~x_2 \ldotp
    (\src{C_S\hole{P_1} \rightsquigarrow}\ x_1
    \mathrel{\wedge}
    \src{C_S\hole{P_2} \rightsquigarrow}\ x_2)
    \Rightarrow (x_1,x_2) \in R)
    \Rightarrow\\
    (\forall \trg{C_T}~x_1~x_2 \ldotp
    (\trg{C_T\hole{\cmp{P_1}} \rightsquigarrow}\ x_1
    \mathrel{\wedge}
    \trg{C_T\hole{\cmp{P_2}} \rightsquigarrow}\ x_2)
    \Rightarrow (x_1, x_2) \in R)
  \end{multlined}
  $$
\end{definition}

\begin{definition}[Equivalent Characterization of \rtrxp (\pf{\rtrxp})]\label{def:rtrxc}
  \begin{align*}
    \pf{\rtrxp} :\quad \forall\src{P_1}~\src{P_2}.~ \forall\trg{C_T}.~ \forall x_1~x_2.~&
    (\trg{C_T\hole{\cmp{P_1}}} \rightsquigarrow x_1
    \mathrel{\wedge}
    \trg{C_T\hole{\cmp{P_2}}} \rightsquigarrow x_2)
    \Rightarrow\\
    &\exists \src{C_S}\ldotp (\src{C_S\hole{P_1}}\rightsquigarrow x_1
    \mathrel{\wedge}
    \src{C_S\hole{P_2}}\rightsquigarrow x_2)
  \end{align*}
\end{definition}
\begin{theorem}[\rtrxp and \pf{\rtrxp} are equivalent]\label{thm:rtrxp-eq}
  \begin{align*}
    \rtrxp \iff \pf{\rtrxp}
  \end{align*}
\end{theorem}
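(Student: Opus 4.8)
The plan is to prove the two implications separately, following exactly the template already used for \Cref{thm:rtp-eq} (the equivalence of \rtp and \pf{\rtp}), but now carrying two programs and extended prefixes instead of one program and full traces. The reason this template applies almost verbatim is that the appendix formulation in \Cref{def:rtrxp} already quantifies over \emph{arbitrary} relations $R \in 2^{(\ii{XPref}^2)}$ on pairs of extended prefixes, rather than over relaxed-safety relations on pairs of full traces; consequently \rtrxp and \pf{\rtrxp} differ only in quantifier structure, and no closure or well-formedness condition on $R$ ever has to be verified.

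For the forward direction ($\rtrxp \Rightarrow \pf{\rtrxp}$), I would fix $\src{P_1}$, $\src{P_2}$, $\trg{C_T}$, $x_1$, $x_2$ with $\trg{C_T\hole{\cmp{P_1}}} \rightsquigarrow x_1$ and $\trg{C_T\hole{\cmp{P_2}}} \rightsquigarrow x_2$, and instantiate \rtrxp at $\src{P_1}$, $\src{P_2}$ and the \emph{canonical} relation
\[
R = \{\, (y_1,y_2) \in \ii{XPref}^2 ~|~ \exists \src{C_S}.~ \src{C_S\hole{P_1}} \rightsquigarrow y_1 \mathrel{\wedge} \src{C_S\hole{P_2}} \rightsquigarrow y_2 \,\}.
\]
The source-side premise of \rtrxp holds for this $R$ by reflexivity: any $\src{C_S}$ that produces $y_1$ from $\src{P_1}$ and $y_2$ from $\src{P_2}$ is itself the required witness, so $(y_1,y_2) \in R$. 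Hence the target-side conclusion of \rtrxp yields $(x_1,x_2) \in R$, which unfolds precisely to the existence of a source context $\src{C_S}$ with $\src{C_S\hole{P_1}} \rightsquigarrow x_1$ and $\src{C_S\hole{P_2}} \rightsquigarrow x_2$ — that is, \pf{\rtrxp}.

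For the backward direction ($\pf{\rtrxp} \Rightarrow \rtrxp$), I would fix an arbitrary $R \in 2^{(\ii{XPref}^2)}$, programs $\src{P_1}$, $\src{P_2}$, and assume the source-side hypothesis that every pair of prefixes simultaneously produced by a common source context lies in $R$. Given any $\trg{C_T}$, $x_1$, $x_2$ with $\trg{C_T\hole{\cmp{P_1}}} \rightsquigarrow x_1$ and $\trg{C_T\hole{\cmp{P_2}}} \rightsquigarrow x_2$, I apply \pf{\rtrxp} to obtain a \emph{single} source context $\src{C_S}$ producing $x_1$ from $\src{P_1}$ and $x_2$ from $\src{P_2}$; feeding this $\src{C_S}$ together with $x_1$, $x_2$ into the source-side hypothesis then gives $(x_1,x_2) \in R$, establishing the target-side conclusion.

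This argument is entirely routine, and I do not anticipate a genuine obstacle; the only point deserving a moment of care is checking that the canonical relation $R$ used in the forward direction is a legitimate element of $2^{(\ii{XPref}^2)}$, which is immediate since \Cref{def:rtrxp} imposes no structural requirement on $R$. The same two-step template — instantiate the property-full criterion at the canonical ``simultaneously producible pair'' relation, and back-translate via the property-free criterion in the converse — uniformly discharges all the property-full/property-free equivalences of this section, which is exactly why companion results such as \Cref{thm:rfrxp-eq} can simply be recorded as analogous.
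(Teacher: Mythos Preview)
Your proof is correct and follows exactly the template the paper uses for \Cref{thm:rtp-eq}; the paper itself simply defers to the Coq development (file Criteria.v, theorem R2rXC\_R2rXP'), and the underlying argument there is precisely the one you give. Your observation that \Cref{def:rtrxp} quantifies over arbitrary $R \in 2^{(\ii{XPref}^2)}$, so that no structural condition on the canonical relation needs to be checked, is the key point that makes the template apply verbatim.
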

\begin{proof}
  See file Criteria.v, R2rXC\_R2rXP'.
\end{proof}


\begin{theorem}[\rkrxp and \pf{\rkrxp} are equivalent]\label{thm:krxrc-rkrxp-eq}
  \begin{align*}
    \rkrxp \iff \pf{\rkrxp}
  \end{align*}
\end{theorem}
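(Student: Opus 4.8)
The plan is to follow exactly the two-step pattern used for \Cref{thm:rtp-eq} and its relational refinement \Cref{thm:rtrxp-eq}, lifting the argument componentwise from arity $2$ to arity $K$. The key structural observation, which keeps the proof as short as the plain \rtp case, is that \rkrxp (the evident $K$-ary generalisation of \Cref{def:rtrxp}) quantifies over \emph{all} relations $R$ on $K$-tuples of relaxed prefixes. Hence, unlike the trace-level formulation of relaxed safety in \oldautoref{sec:fa}, no separate ``relaxed safety'' side condition ever has to be discharged when instantiating $R$; every relation we build is automatically an admissible instance.

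For the easy direction, \pf{\rkrxp} $\Rightarrow$ \rkrxp, I would fix an arbitrary $R$ and programs $\src{P_1},\ldots,\src{P_K}$ satisfying the source premise of \rkrxp, then fix a target context $\trg{C_T}$ and relaxed prefixes $x_1,\ldots,x_K$ with $\trg{C_T\hole{\cmp{P_i}}} \rightsquigarrow x_i$ for every $i$. Applying \pf{\rkrxp} to these data yields a single source context $\src{C_S}$ with $\src{C_S\hole{P_i}} \rightsquigarrow x_i$ for every $i$. Instantiating the source premise with this $\src{C_S}$ and the tuple $x_1,\ldots,x_K$ gives $(x_1,\ldots,x_K)\in R$, which is precisely the target conclusion required. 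This direction uses no property of $R$ at all and is pure quantifier reshuffling.

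For the other direction, \rkrxp $\Rightarrow$ \pf{\rkrxp}, I would fix $\src{P_1},\ldots,\src{P_K}$, a target context $\trg{C_T}$, and relaxed prefixes $x_1,\ldots,x_K$ with $\trg{C_T\hole{\cmp{P_i}}} \rightsquigarrow x_i$ for all $i$, and instantiate \rkrxp with the relation $R_\star = \{\, (x_1',\ldots,x_K') \mid \exists \src{C_S}.\, \forall i.\, \src{C_S\hole{P_i}} \rightsquigarrow x_i' \,\}$. The source premise of \rkrxp then holds trivially: any tuple jointly produced by some source context is in $R_\star$ by construction, with that same context serving as witness. The conclusion of \rkrxp, applied to our $\trg{C_T}$ and $x_1,\ldots,x_K$, yields $(x_1,\ldots,x_K)\in R_\star$; unfolding the definition of $R_\star$ is exactly the existential statement $\exists \src{C_S}.\, \forall i.\, \src{C_S\hole{P_i}} \rightsquigarrow x_i$ demanded by \pf{\rkrxp}.

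The step I expect to carry the only real content is the choice of $R_\star$ in the forward direction together with the observation that it is an admissible instance; this is the analogue of taking $\pi = \{t \mid \exists \src{C_S}.\ \src{C_S\hole{P}} \rightsquigarrow t\}$ in \Cref{thm:rtp-eq}. Beyond this there is no genuine obstacle: the proof is a routine $K$-component lift of \Cref{thm:rtrxp-eq}, relying on nothing about the semantics $\rightsquigarrow$ except the ability to read $\src{C_S\hole{P_i}} \rightsquigarrow x_i$ off membership in $R_\star$. This mirrors the Coq development, where the lemma should be established just as the arity-$2$ theorem R2rXC\_R2rXP', only with $K$ conjuncts in place of the two.
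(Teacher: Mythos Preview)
Your proposal is correct and follows exactly the approach the paper intends: the paper's proof is simply ``Analogous to \Cref{thm:rtrxp-eq}'' (itself delegated to Coq), and you have spelled out precisely that analogous argument---the standard two-direction pattern from \Cref{thm:rtp-eq}, with the canonical relation $R_\star = \{(x_1',\ldots,x_K') \mid \exists \src{C_S}.\ \forall i.\ \src{C_S\hole{P_i}} \rightsquigarrow x_i'\}$ as the instantiation in the forward direction. Your observation that \rkrxp quantifies over \emph{all} $R \in 2^{(\ii{XPref}^K)}$, so no side condition on $R_\star$ is needed, is exactly the point that makes the proof go through unchanged from the trace-property case.
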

\begin{proof}
	Analogous to \Cref{thm:rtrxp-eq}.
\end{proof}

\subsection{Relational Hyperproperty-Based Criteria}\label{sec:rel-criteria}
\jt{Still very confused about these functions}

\subsubsection{Robust Relational Hyperproperty Preservation}\label{sec:rel-hyper}

\begin{definition}[Robust Relational Hyperproperty Preservation (\rrhp)]\label{def:rrhp}
  $$\begin{multlined}
    \criteria{\rrhp}{rrhp}:\quad
    \forall R \in 2^{(\behavAll^\omega)}.~\forall\src{P_1},..,\src{P_K},...~
    (\forall\src{C_S} \ldotp
    (\src{\behav{{C_S}\hole{P_1}}},..,\src{\behav{C_S\hole{P_K}}},..) \in R)
    \Rightarrow\\
    (\forall\trg{C_T} \ldotp
    (\trg{\behav{{C_T}\hole{\cmp{P_1}}}},..,\trg{\behav{C_S\hole{\cmp{P_K}}}},..) \in R)
  \end{multlined}$$
\end{definition}

\rrhp has an equivalent definition (\rrhpA below) that is also \emph{not}
property-free. We introduce this definition for technical reasons as
we use it in proofs later.

\begin{definition}[(\rrhp')]\label{def:rrhpprimed}
  $$\begin{multlined}
    \rrhpA:\quad
    \forall R \in 2^{(\srcAll \to \behavAll)}.~
    (\forall\src{C_S} \ldotp
    (\lambda \src{P}\ldotp \src{\behav{C_S\hole{P}}}) \in R)
    \Rightarrow\\
    (\forall \trg{C_T} \ldotp (\lambda \src{P}\ldotp \trg{\behav{C_T\hole{\cmp{P}}}})
    \in R)\quad
  \end{multlined}$$
\end{definition}

\begin{theorem}[\rrhp and \rrhpprimed are equivalent]\label{thm:rrhp-alter}
Assuming the set $\srcAll$ is countable,
	\begin{align*}
		\rrhp \iff \rrhpprimed
	\end{align*}
\end{theorem}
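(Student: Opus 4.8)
The plan is to set up an isomorphism between the two domains over which the universally quantified relations range: in \rrhpprimed{} a relation $R$ lives in $2^{(\srcAll \to \behavAll)}$, whereas in \rrhp{} a relation lives in $2^{(\behavAll^\omega)}$, i.e.\ it acts on $\Nat$-indexed sequences of behaviors. Since $\srcAll$ is countable (and nonempty), I would fix a surjective enumeration $e : \Nat \to \srcAll$; this is the \emph{only} place where the countability hypothesis is used. A function $f : \srcAll \to \behavAll$ and a sequence $(b_i)_{i \in \Nat}$ then correspond via $b_i = f(e(i))$, and for any context $\com{C}$ the map $\lambda\src{P}\ldotp\com{\behav{C\hole{P}}}$ evaluated along $e$ is exactly the sequence $(\com{\behav{C\hole{e(i)}}})_i$. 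The proof transports the quantified relation across this correspondence in each direction.

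For $\rrhpprimed \Rightarrow \rrhp$ (which in fact needs no countability), I would be given $R' \in 2^{(\behavAll^\omega)}$ and an arbitrary sequence $\src{P_1}, \src{P_2}, \dots$, and define $R = \{\, f : \srcAll \to \behavAll \mid (f(\src{P_1}), f(\src{P_2}), \dots) \in R' \,\}$, i.e.\ $R$ just samples a function at the programs occurring in the given sequence. Since $(\lambda\src{P}\ldotp\src{\behav{C_S\hole{P}}})(\src{P_i}) = \src{\behav{C_S\hole{P_i}}}$, the source hypothesis of \rrhp{} (namely $\forall\src{C_S}\ldotp (\src{\behav{C_S\hole{P_1}}},\dots) \in R'$) is literally the source hypothesis of \rrhpprimed{} for this $R$; dually, the target conclusion of \rrhpprimed{} unfolds to the target conclusion of \rrhp{}. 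Applying \rrhpprimed{} to $R$ therefore closes this direction.

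For $\rrhp \Rightarrow \rrhpprimed$, I would be given $R \in 2^{(\srcAll \to \behavAll)}$ and instantiate \rrhp{} with the particular sequence $\src{P_i} = e(i)$ and the relation $R' = \{\, (b_i)_i \mid \exists f \in R\ldotp \forall i\ldotp b_i = f(e(i)) \,\}$. The source hypothesis of \rrhpprimed{} gives $(\lambda\src{P}\ldotp\src{\behav{C_S\hole{P}}}) \in R$ for every $\src{C_S}$; using this function as the witness $f$ shows $(\src{\behav{C_S\hole{P_1}}},\dots) \in R'$, so the source hypothesis of \rrhp{} holds. Applying \rrhp{} yields, for each $\trg{C_T}$, a witness $f \in R$ with $f(e(i)) = \trg{\behav{C_T\hole{\cmp{P_i}}}}$ for all $i$.

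The main obstacle is the final step: I must recover that the target function $\lambda\src{P}\ldotp\trg{\behav{C_T\hole{\cmp{P}}}}$ \emph{itself} lies in $R$, not merely that some $f \in R$ agrees with it at the enumerated points. This is exactly where surjectivity of $e$ is indispensable: every $\src{P} \in \srcAll$ equals $e(i)$ for some $i$, so $f$ and $\lambda\src{P}\ldotp\trg{\behav{C_T\hole{\cmp{P}}}}$ agree on all of $\srcAll$ and are hence extensionally equal, giving the target conclusion of \rrhpprimed{}. A pleasant feature of the $\exists f$ packaging of $R'$ is that this argument is insensitive to repetitions in $e$ (as arise when $\srcAll$ is finite), since $R'$ never forces distinct sequence entries at two indices that $e$ maps to the same program.
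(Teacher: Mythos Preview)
Your proof is correct and follows essentially the same approach as the paper: both directions transport the relation along an enumeration of $\srcAll$, and your $\rrhpprimed \Rightarrow \rrhp$ construction (sampling a function at the given program sequence) is identical to the paper's. The only noteworthy difference is in the $\rrhp \Rightarrow \rrhpprimed$ direction: the paper assumes a \emph{bijection} $\mathtt{G}:\srcAll\to\Nat$ and defines $R' = \{(b_i)_i \mid (\lambda\src{P}\ldotp b_{\mathtt{G}(\src{P})}) \in R\}$ directly, handwaving the finite-$\srcAll$ case in a footnote; your $\exists f$ packaging with a mere surjection $e$ handles finite and countably infinite $\srcAll$ uniformly, which is a small but genuine improvement in presentation.
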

\begin{proof}
($\Rightarrow$) Following the definition of \rrhpA, assume $R \in
  2^{(\srcAll \to \behavAll)}$. We need to prove:
  $$\begin{multlined}
    (\forall\src{C_S} \ldotp
    (\lambda \src{P}\ldotp \src{\behav{C_S\hole{P}}}) \in R)
    \Rightarrow\\
    (\forall \trg{C_T} \ldotp (\lambda \src{P}\ldotp \trg{\behav{C_T\hole{\cmp{P}}}})
    \in R)
  \end{multlined}$$
  Let $\mathtt{G}$ be a bijective function from
  source programs to $\Nat$.\footnote{When the source language has
    fewer programs than $\omega$, the proof isn't very different.}
  Define $R' \in 2^{(\behavAll^\omega)}$ as follows:
  $$
  R' = \{ (b_1,..,b_k,..) ~|~ (\lambda \src{P}. b_\mathtt{G(\src{P})}) \in R\}
  $$
  For $i \in \Nat$, let $\src{Q_i} = \mathtt{G}^{-1}(i)$. Instantiate
  \rrhp to $R'$ and $\src{Q_1},..,\src{Q_K},..$. We get:
  $$\begin{multlined}
    (\forall\src{C_S} \ldotp
    (\src{\behav{{C_S}\hole{Q_1}}},..,\src{\behav{C_S\hole{Q_K}}},..) \in R')
    \Rightarrow\\
    (\forall\trg{C_T} \ldotp
    (\trg{\behav{{C_T}\hole{\cmp{Q_1}}}},..,\trg{\behav{C_S\hole{\cmp{Q_K}}}},..) \in R')
  \end{multlined}$$
  Plugging in the definition of $R'$ above, this becomes:
  $$\begin{multlined}
    (\forall\src{C_S} \ldotp
    (\lambda \src{P}\ldotp \src{\behav{{C_S}\hole{Q_{\mathtt{G}(P)}}}}) \in R)
    \Rightarrow\\
    (\forall\trg{C_T} \ldotp
    (\lambda \src{P}\ldotp \trg{\behav{{C_T}\hole{\cmp{Q_{\mathtt{G}(P)}}}}}) \in R)
  \end{multlined}$$
  However, by definition, $\src{Q_{\mathtt{G}(P)}} = \src{P}$. So, the above is equal to
  $$\begin{multlined}
    (\forall\src{C_S} \ldotp
    (\lambda \src{P}\ldotp \src{\behav{{C_S}\hole{P}}}) \in R)
    \Rightarrow\\
    (\forall\trg{C_T} \ldotp
    (\lambda \src{P}\ldotp \trg{\behav{{C_T}\hole{\cmp{P}}}}) \in R)
  \end{multlined}$$
  which is what we had to prove.

  \medskip
  
  \noindent ($\Leftarrow$)
  Following the definition of \rrhp, assume $R \in 2^{(\behavAll^\omega)}$ and some infinite sequence $\src{P_1},..,\src{P_K},..$. We have to show:
  $$\begin{multlined}
    (\forall\src{C_S} \ldotp
    (\src{\behav{{C_S}\hole{P_1}}},..,\src{\behav{C_S\hole{P_K}}},..) \in R)
    \Rightarrow\\
    (\forall\trg{C_T} \ldotp
    (\trg{\behav{{C_T}\hole{\cmp{P_1}}}},..,\trg{\behav{C_S\hole{\cmp{P_K}}}},..) \in R)
  \end{multlined}$$
  Define $R' \in 2^{(\srcAll \to \behavAll)}$ as follows:
  \[ R' = \{ f ~|~ (f(\src{P_1}),..,f(\src{P_K}),..) \in R \}
  \]
  Instantiating \rrhpA to $R'$, we get:
  $$\begin{multlined}
    (\forall\src{C_S} \ldotp
    (\lambda \src{P}\ldotp \src{\behav{C_S\hole{P}}}) \in R')
    \Rightarrow\\
    (\forall \trg{C_T} \ldotp (\lambda \src{P}\ldotp \trg{\behav{C_T\hole{\cmp{P}}}})
    \in R')
  \end{multlined}$$
  Expanding the definition of $R'$, this immediately reduces to what we wanted to show.
\end{proof}

\begin{definition}[Equivalent Characterization of \rrhp (\pf{\rrhp})]\label{def:rrhc}
	\begin{align*}
		\pf{\rrhp}:\quad
\forall\trg{C_T}.~ \exists \src{C_S}.~\forall \src{P}.~
  \trg{\behav{C_T\hole{\cmp{P}}}} = \src{\behav{C_S\hole{P}}}
	\end{align*}
\end{definition}

\begin{theorem}[\rrhpA and \pf{\rrhp} are equivalent]\label{thm:rrhpprimed-eq}
	\begin{align*}
		\rrhpA \iff \pf{\rrhp}
	\end{align*}
\end{theorem}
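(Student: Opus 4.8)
The plan is to treat this as a routine equivalence between a property-full criterion and its property-free characterization, exactly in the style of \Cref{thm:rtp-eq} and \Cref{thm:hrp-eq}, with the two directions proved by rather different arguments. First I would introduce shorthand for the two function-assignments that both criteria implicitly quantify over: for a source context $\src{C_S}$ let $\src{f_{C_S}} = \lambda \src{P}\ldotp \src{\behav{C_S\hole{P}}}$, and for a target context $\trg{C_T}$ let $\trg{g_{C_T}} = \lambda \src{P}\ldotp \trg{\behav{C_T\hole{\cmp{P}}}}$, both viewed as elements of $\srcAll \to \behavAll$. Under this notation \rrhpA reads $\forall R\ldotp (\forall\src{C_S}\ldotp \src{f_{C_S}} {\in} R) \Rightarrow (\forall\trg{C_T}\ldotp \trg{g_{C_T}} {\in} R)$, while \pf{\rrhp} reads $\forall\trg{C_T}\ldotp \exists\src{C_S}\ldotp \trg{g_{C_T}} = \src{f_{C_S}}$.

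For the forward direction $\pf{\rrhp} \Rightarrow \rrhpA$, I would reason directly. Assume \pf{\rrhp}, fix an arbitrary $R$ satisfying the antecedent $\forall\src{C_S}\ldotp \src{f_{C_S}} \in R$, and fix an arbitrary $\trg{C_T}$. Applying \pf{\rrhp} to $\trg{C_T}$ yields a source context $\src{C_S}$ with $\trg{g_{C_T}} = \src{f_{C_S}}$; since $\src{f_{C_S}} \in R$ by the antecedent, rewriting along this equality gives $\trg{g_{C_T}} \in R$. As $\trg{C_T}$ was arbitrary, this is precisely the consequent of \rrhpA.

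The backward direction $\rrhpA \Rightarrow \pf{\rrhp}$ is the one requiring a choice, and uses the standard ``image relation'' instantiation. I would instantiate the universally quantified $R$ in \rrhpA with the set of all functions realized by some source context, $R \triangleq \myset{\src{f_{C_S}}}{\src{C_S}\ \text{a source context}}$. The antecedent $\forall\src{C_S}\ldotp \src{f_{C_S}} \in R$ then holds by construction, so \rrhpA delivers $\forall\trg{C_T}\ldotp \trg{g_{C_T}} \in R$. Unfolding the definition of this $R$, membership $\trg{g_{C_T}} \in R$ says exactly that $\trg{g_{C_T}} = \src{f_{C_S}}$ for some source context $\src{C_S}$, which is \pf{\rrhp}.

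I do not expect any genuine obstacle here: once the shorthand is in place each direction is essentially a one-liner, and the only creative step is selecting the image relation in the backward direction, a pattern already reused throughout the property-free equivalences in this appendix. The corresponding Coq proof should mirror this plan, discharging the forward direction by rewriting with the equality supplied by \pf{\rrhp} and the backward direction by the single instantiation of $R$ described above.
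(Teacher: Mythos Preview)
Your proposal is correct and follows exactly the standard pattern the paper uses for these equivalences (cf.\ the explicit proof of \Cref{thm:rtp-eq}); the paper itself merely defers to the Coq formalization (file \texttt{Criteria.v}, theorem \texttt{RrHC\_RrHP'}) without spelling out the argument. The only minor point worth flagging is that \pf{\rrhp} literally gives pointwise equality $\forall\src{P}.\ \trg{g_{C_T}}(\src{P}) = \src{f_{C_S}}(\src{P})$, so your forward direction implicitly appeals to functional extensionality when you ``rewrite along'' $\trg{g_{C_T}} = \src{f_{C_S}}$ to conclude membership in $R$; this is harmless in the ambient set-theoretic reading and is presumably how the Coq proof proceeds as well.
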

\begin{proof}
	See file Criteria.v, theorem RrHC\_RrHP'.
\end{proof}

\begin{theorem}[\rrhp and \pf{\rrhp}]\label{thm:rrhp-eq}
  Assuming the set $\srcAll$ is countable,
   \begin{align*}
    \rrhp \iff \pf{\rrhp}
  \end{align*}
\end{theorem}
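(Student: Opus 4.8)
The plan is to obtain this equivalence purely by composing the two auxiliary results already established for the primed criterion, consuming the countability hypothesis only where it is genuinely needed. Concretely, I would chain \Cref{thm:rrhp-alter} with \Cref{thm:rrhpprimed-eq}: the former gives the equivalence between \rrhp and the primed criterion \rrhpA (under countability of \srcAll), while the latter gives the unconditional equivalence $\rrhpA \iff \pf{\rrhp}$. Transitivity of $\iff$ then yields $\rrhp \iff \pf{\rrhp}$, which is exactly the statement.

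First I would appeal to \Cref{thm:rrhp-alter}. This is the only place the countability of \srcAll is used: its proof fixes a bijection $\mathtt{G}$ between source programs and $\Nat$ and uses it to pass between relations on $\omega$-indexed sequences of behaviors (as quantified in \rrhp) and relations on functions $\srcAll \to \behavAll$ (as quantified in \rrhpA), reindexing a sequence $(b_1,\ldots,b_k,\ldots)$ as $\lambda \src{P}.\, b_{\mathtt{G}(\src{P})}$ and, conversely, turning a function relation $R$ into $\{(b_1,\ldots,b_k,\ldots) \mid (\lambda \src{P}.\, b_{\mathtt{G}(\src{P})}) \in R\}$. Since $\src{Q_{\mathtt{G}(\src{P})}} = \src{P}$, both definitions describe literally the same preservation statement, so the equivalence holds in both directions. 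Second I would appeal to \Cref{thm:rrhpprimed-eq}, the step from the property-full \rrhpA to its property-free characterization \pf{\rrhp}; this requires no countability and is the standard quantifier-reordering argument, extracting a witnessing source context \src{C_S} from an instantiation of $R$ at the graph of the source behaviors, and folding such a witness back into an arbitrary $R$ for the converse.

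The conclusion is then immediate by transitivity. I expect no real obstacle in this final theorem itself, as both halves are discharged by the cited lemmas; all the genuine content lives in \Cref{thm:rrhp-alter}, namely the reindexing via the bijection $\mathtt{G}$, which is precisely the step that fails without countability of \srcAll and is therefore the natural place to scrutinize. Accordingly, the proof reduces to stating that the result follows from \Cref{thm:rrhp-alter} and \Cref{thm:rrhpprimed-eq}.
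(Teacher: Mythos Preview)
Your proposal is correct and follows exactly the same approach as the paper: the paper's proof simply states that the result follows from \Cref{thm:rrhp-alter} and \Cref{thm:rrhpprimed-eq}, which is precisely the chaining you describe.
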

\begin{proof}
  Follows from \Cref{thm:rrhp-alter} and \Cref{thm:rrhpprimed-eq}.
\end{proof}

\subsubsection{Robust \texorpdfstring{\(K\)}{K}-Relational Hyperproperty Preservation}\label{sec:k-rel-hyper}

\(K\)-relational hyperproperties are relations between the behaviors
of several programs. \(K\) programs satisfy a \(K\)-relational hyperproperty if and only if, when
plugged into any same context, their behaviors are related.

The criteria are as expected, generalizing the intuition of hyperproperties for multiple programs.
\begin{definition}[Robust 2-Relational Hyperproperty Preservation (\rtrhp)]\label{def:rtrhp}
	$$\begin{multlined}
		\criteria{\rtrhp}{rtrhp}:\quad
  \forall R \in 2^{(\behavAll^2)}.~\forall\src{P_1}~\src{P_2}.~
    (\forall\src{C_S} \ldotp
  (\src{\behav{{C_S}\hole{P_1}}},\,\src{\behav{C_S\hole{P_2}}}) \in R)
    \Rightarrow\\
    (\forall\trg{C_T} \ldotp
  (\trg{\behav{{C_T}\hole{\cmp{P_1}}}},\,\trg{\behav{C_S\hole{\cmp{P_2}}}}) \in R)\quad
	\end{multlined}$$
\end{definition}

\begin{definition}[Equivalent Characterization of \rtrhp (\pf{\rtrhp})]\label{def:rtrhc}
	\begin{align*}
		\pf{\rtrhp}:\quad \forall\src{P_1}~\src{P_2}.~ \forall\trg{C_T}.~ \exists \src{C_S}.~&
  \trg{\behav{C_T\hole{\cmp{P_1}}}} = \src{\behav{C_S\hole{P_1}}} \mathrel{\wedge} \\
  & \trg{\behav{C_T\hole{\cmp{P_2}}}} = \src{\behav{C_S\hole{P_2}}}
	\end{align*}
\end{definition}

\begin{theorem}[\rtrhp and \pf{\rtrhp} are equivalent]\label{thm:rtrhp-eq}
	\begin{align*}
		\rtrhp \iff \pf{\rtrhp}
	\end{align*}
\end{theorem}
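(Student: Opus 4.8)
The plan is to prove the two implications separately, following exactly the skeleton already used for \Cref{thm:rtp-eq} (\pf{\rtp} vs.\ \rtp) and \Cref{thm:hrp-eq} (\pf{\rhp} vs.\ \rhp): the forward direction instantiates \rtrhp at a carefully chosen relation, and the backward direction chains the existential source context produced by \pf{\rtrhp} with the universally quantified source-side premise of \rtrhp. No structural induction or semantic reasoning is needed; everything is a quantifier-shuffling argument over the fixed definitions.

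For the $\Rightarrow$ direction I would fix arbitrary $\src{P_1}$, $\src{P_2}$, and $\trg{C_T}$, and instantiate \rtrhp with the relation
\[
R = \{ (b_1, b_2) \mid \exists \src{C_S}.~ \src{\behav{C_S\hole{P_1}}} = b_1 \wedge \src{\behav{C_S\hole{P_2}}} = b_2 \}.
\]
The source-side premise of \rtrhp holds trivially for this $R$, since for every $\src{C_S}$ the pair $(\src{\behav{C_S\hole{P_1}}},\,\src{\behav{C_S\hole{P_2}}})$ lies in $R$ by taking that very $\src{C_S}$ as the witness. Hence \rtrhp yields $(\trg{\behav{C_T\hole{\cmp{P_1}}}},\,\trg{\behav{C_T\hole{\cmp{P_2}}}}) \in R$, and unfolding the definition of $R$ delivers precisely the single source context $\src{C_S}$ with matching behaviors demanded by \pf{\rtrhp}.

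For the $\Leftarrow$ direction I would fix an arbitrary relation $R$, programs $\src{P_1}$, $\src{P_2}$, and assume the premise $\forall\src{C_S}.~ (\src{\behav{C_S\hole{P_1}}},\,\src{\behav{C_S\hole{P_2}}}) \in R$. For an arbitrary target context $\trg{C_T}$, \pf{\rtrhp} supplies one source context $\src{C_S}$ satisfying $\trg{\behav{C_T\hole{\cmp{P_1}}}} = \src{\behav{C_S\hole{P_1}}}$ and $\trg{\behav{C_T\hole{\cmp{P_2}}}} = \src{\behav{C_S\hole{P_2}}}$. Instantiating the premise at this $\src{C_S}$ and rewriting along the two behavioral equalities gives $(\trg{\behav{C_T\hole{\cmp{P_1}}}},\,\trg{\behav{C_T\hole{\cmp{P_2}}}}) \in R$, closing the goal.

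There is no genuine obstacle here; both directions are one-step instantiation arguments. The only subtlety — the closest thing to a hard part — is that \pf{\rtrhp} must hand back a \emph{single} witness $\src{C_S}$ that simultaneously matches \emph{both} behaviors, which is exactly why its existential $\src{C_S}$ is bound outside the conjunction of two equalities rather than per program. This is what prevents \pf{\rtrhp} from collapsing to two independent instances of \pf{\rhp}, and it is precisely the feature exploited by the joint relation $R$ above in the forward direction. A mechanized proof (\CoqSymbol) would follow the same two-case structure as \Cref{thm:rtp-eq}.
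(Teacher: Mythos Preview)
Your proof is correct and is exactly the intended argument: the paper defers the details to the Coq development (file Criteria.v, theorem R2rHC\_R2rHP), but your two directions are the straightforward 2-relational lift of the explicit proof given for \Cref{thm:rtp-eq}, instantiating the relation $R$ with the ``canonical'' relation of pairs of source behaviors in the forward direction and chaining the existential witness with the universal premise in the backward direction. Nothing more is needed.
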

\begin{proof}
	See file Criteria.v, theorem R2rHC\_R2rHP.
\end{proof}

To obtain \criteria{\rkrhp}{rkrhp} and \pf{\rkrhp},
take the definitions of \rtrhp and \pf{\rtrhp} above
and replace $\forall\src{P_1},\src{P_2}$ with $\forall\src{P_1},\cdots,\src{P}_K$.

\begin{theorem}[\rkrhp and \pf{\rkrhp} are equivalent]\label{thm:rkrhp-eq}
	\begin{align*}
		\rkrhp \iff \pf{\rkrhp}
	\end{align*}
\end{theorem}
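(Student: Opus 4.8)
The plan is to follow the same template used for \Cref{thm:rtrhp-eq} (the arity-$2$ case), the only difference being that both the witness relation and the conjunction of behavioral equalities now range over $K$ programs rather than two. I would establish the two implications separately, and neither requires any hypothesis about the languages beyond what is already baked into the notation $\behavAll$.

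For the direction $\rkrhp \Rightarrow \pf{\rkrhp}$, I would fix source programs $\src{P_1}, \ldots, \src{P_K}$ and a target context $\trg{C_T}$, and must exhibit a single source context matching all $K$ compiled behaviors. The key step is to instantiate \rkrhp with the relation
\[
R = \myset{ (b_1, \ldots, b_K) }{ \exists \src{C_S}.~ \forall i.~ b_i = \src{\behav{C_S\hole{P_i}}} }.
\]
The premise of \rkrhp for this $R$ is immediate: for every source context $\src{C_S}$, the tuple $(\src{\behav{C_S\hole{P_1}}}, \ldots, \src{\behav{C_S\hole{P_K}}})$ lies in $R$, witnessed by $\src{C_S}$ itself. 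The conclusion of \rkrhp then asserts $(\trg{\behav{C_T\hole{\cmp{P_1}}}}, \ldots, \trg{\behav{C_T\hole{\cmp{P_K}}}}) \in R$, and unfolding the definition of $R$ yields precisely the source context required by \pf{\rkrhp}.

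For the direction $\pf{\rkrhp} \Rightarrow \rkrhp$, I would fix an arbitrary relation $R \in 2^{(\behavAll^K)}$, programs $\src{P_1}, \ldots, \src{P_K}$, and assume the robust-satisfaction premise $\forall\src{C_S}.~ (\src{\behav{C_S\hole{P_1}}}, \ldots, \src{\behav{C_S\hole{P_K}}}) \in R$. Fixing an arbitrary $\trg{C_T}$, I would apply \pf{\rkrhp} to obtain a source context $\src{C_S}$ with $\trg{\behav{C_T\hole{\cmp{P_i}}}} = \src{\behav{C_S\hole{P_i}}}$ for every $i$. Feeding this $\src{C_S}$ into the premise gives $(\src{\behav{C_S\hole{P_1}}}, \ldots, \src{\behav{C_S\hole{P_K}}}) \in R$, and rewriting along the $K$ equalities concludes $(\trg{\behav{C_T\hole{\cmp{P_1}}}}, \ldots, \trg{\behav{C_T\hole{\cmp{P_K}}}}) \in R$, as needed.

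I expect essentially no obstacle: the argument is a routine instance of trading a universal-over-contexts against an existential-over-contexts by packaging the witness into a defining relation. The only genuine design choice is the shape of $R$ in the forward direction, where the single existential $\exists \src{C_S}$ inside $R$ is exactly what forces one source context to account simultaneously for all $K$ target behaviors — the defining feature of \pf{\rkrhp} as compared with the weaker characterizations (such as \pf{\rtp}) in which a different context may be chosen per trace. Since $K$ is a fixed finite arity, no countability or reindexing machinery is needed here, in contrast to the infinite-arity case of \Cref{thm:rrhp-eq}.
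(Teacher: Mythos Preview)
Your proposal is correct and is precisely the argument the paper intends: the paper's proof simply reads ``Analogous to \Cref{thm:rtrhp-eq}'', and you have spelled out exactly that analogy, generalizing the witness relation and the conjunction of behavioral equalities from $2$ to $K$. The forward instantiation with $R = \{(b_1,\ldots,b_K) \mid \exists \src{C_S}.~\forall i.~b_i = \src{\behav{C_S\hole{P_i}}}\}$ and the backward rewriting along the $K$ equalities match the standard pattern used throughout the paper (cf.\ the explicit proof of \Cref{thm:rtp-eq}).
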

\begin{proof}
	Analogous to \Cref{thm:rtrhp-eq}.
\end{proof}


%

\subsection{Comparison of Proof Obligations}\label{sec:obligations}
We briefly compare the robust preservation of (variants of) relational
hyperproperties (\rrhp), 
relational trace
properties (\rrtp), 
and relational safety
properties (\rrsp, this subsection) in terms of the difficulty of
back-translation proofs.
For this, it is
instructive to look at the property-free characterizations. In a proof
of \rrsp or any of its variants, we must construct a source context
$\src{C_S}$ that can induce a given set of \emph{finite prefixes of
  traces}, one from each of the programs being related. In \rrtp and
its variants, this obligation becomes harder---now the constructed
$\src{C_S}$ must be able to induce a given set of \emph{full
  traces}. In \rrhp and its variants, the obligation is even
harder---$\src{C_S}$ must be able to induce entire behaviors (sets of
traces) from each of the programs being related. Thus, the increasing
strength of \rrsp, \rrtp and \rrhp is directly reflected in
their corresponding proof obligations.

Furthermore, looking just at the different variants of relational safety,
we note that the number of trace
prefixes the constructed context $\src{C_S}$ must simultaneously
induce in the source programs is exactly the arity of the corresponding relational
property. Constructing $\src{C_S}$ from a finite number of prefixes is
much easier than constructing $\src{C_S}$ from an infinite number of
prefixes. Consequently, it is meaningful to define a special point in
the partial order of \autoref{fig:order} that is the join of $\rkrsp$
for all finite $K$s.
This point is the criterion we call \emph{Robust Finite-Relational
  Safety Preservation} (see \Cref{sec:fin-rel-safety}), or \rfrsp.
%


\clearpage
\section{Which of our Criteria Imply
  Robust Trace Equivalence Preservation?}
This section extends \oldautoref{thm:rhp-doesnt-imply-r2rsp} from
\oldautoref{sec:fa}. While \rtep is always implied by \rtrhp,
we show that in many cases, \rtep is a consequence of weaker relational criteria.

\subsection{Relational Criteria and Robust Trace Equivalence Preservation}

We start by recalling the definition of \criteria{\rtep}{rtep}, which is an instance of \rtrhp:
\begin{definition}[Robust Trace Equivalence Preservation (\rtep)]\label{def:rtep}
	$$\begin{multlined}
		\criteria{\rtep}{rtep}:\quad
  \forall\src{P_1}~\src{P_2}.~
    (\forall\src{C_S} \ldotp
  \src{\behav{{C_S}\hole{P_1}}}=\src{\behav{C_S\hole{P_2}}})
    \Rightarrow\\
    (\forall\trg{C_T} \ldotp
  \trg{\behav{{C_T}\hole{\cmp{P_1}}}}=\trg{\behav{C_S\hole{\cmp{P_2}}}})
	\end{multlined}$$
\end{definition}

In general, as explained in \oldautoref{sec:fa}, \rtep is implied by \rtrhp.
\begin{theorem} $\rtrhp \Rightarrow \rtep$. 
\end{theorem}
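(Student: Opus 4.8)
The statement asserts $\rtrhp \Rightarrow \rtep$. The plan is to observe that \rtep is quite literally an instantiation of \rtrhp obtained by fixing the binary relation $R$ on behaviors to be equality. This was already flagged in \oldautoref{sec:rel:hyper}, where it is noted that instantiating $R$ with equality in the definition of \rtrhp yields exactly \rtep, so the proof is essentially a matter of unfolding both definitions and checking that they line up.

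Concretely, I would proceed as follows. First I would assume \rtrhp holds and fix two source programs $\src{P_1}$ and $\src{P_2}$ together with the hypothesis that $\src{\behav{C_S\hole{P_1}}} = \src{\behav{C_S\hole{P_2}}}$ for every source context $\src{C_S}$; the goal is to show $\trg{\behav{C_T\hole{\cmp{P_1}}}} = \trg{\behav{C_T\hole{\cmp{P_2}}}}$ for every target context $\trg{C_T}$. Next I would instantiate \rtrhp at the same programs $\src{P_1}, \src{P_2}$ and at the relation $R = \{ (b_1, b_2) \mid b_1 = b_2 \} \in 2^{(\behavAll^2)}$, i.e.\ the equality relation on behaviors. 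Under this choice of $R$, the source-side antecedent of \rtrhp, namely $\forall\src{C_S}.\ (\src{\behav{C_S\hole{P_1}}},\, \src{\behav{C_S\hole{P_2}}}) \in R$, is definitionally the hypothesis of \rtep that the two source behaviors coincide in all contexts. Discharging this antecedent with our assumption, \rtrhp delivers the conclusion $\forall\trg{C_T}.\ (\trg{\behav{C_T\hole{\cmp{P_1}}}},\, \trg{\behav{C_T\hole{\cmp{P_2}}}}) \in R$, which by the definition of $R$ is exactly the equality of target behaviors required by \rtep.

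There is essentially no hard step here: the entire argument is a single instantiation plus a rewriting of membership in the equality relation into an equation. The only thing worth being careful about is that the relation $R$ chosen is a genuine element of $2^{(\behavAll^2)}$ (it is, being just a subset of pairs of behaviors) so that the universally quantified $R$ in \rtrhp can legitimately be specialized to it, and that the quantifier structure over programs and contexts matches on both sides after instantiation. Given those routine checks, the implication follows immediately, and this is precisely why \rtep sits below \rtrhp in the Hasse diagram of \autoref{fig:order}.
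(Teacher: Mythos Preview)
Your proposal is correct and takes exactly the same approach as the paper: the paper's proof simply states that the thesis immediately follows by instantiating \rtrhp with the equality relation, which is precisely what you do.
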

\begin{proof} The thesis immediately follows by instantiating $\rtrhp$ with the equality relation, 
              see file Robustdef.v, theorem R2rHP\_RTEP for a formal proof. 
\end{proof}

Similarly, in a deterministic setting, \rtep is implied by \rtrtp.
\begin{theorem} For deterministic source languages $\rtrtp \Rightarrow \rtep$.
\end{theorem}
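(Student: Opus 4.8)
The plan is to route the argument through the property-free characterization $\pf{\rtrtp}$, which is equivalent to $\rtrtp$ (\Cref{thm:rtrtp-eq}), and then to exploit determinacy of the source language to collapse each source behavior to a singleton. The point of $\pf{\rtrtp}$ is that it connects \emph{one} trace of $\trg{C_T\hole{\cmp{P_1}}}$ with \emph{one} trace of $\trg{C_T\hole{\cmp{P_2}}}$ via a common source context, whereas $\rtep$ asks for full equality of the two target \emph{behaviors}; determinacy is exactly what bridges this gap.

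Concretely, I would fix $\src{P_1}$, $\src{P_2}$ and assume the antecedent of $\rtep$, namely $\forall\src{C_S}.~\src{\behav{C_S\hole{P_1}}} = \src{\behav{C_S\hole{P_2}}}$, then fix an arbitrary target context $\trg{C_T}$. The goal $\trg{\behav{C_T\hole{\cmp{P_1}}}} = \trg{\behav{C_T\hole{\cmp{P_2}}}}$ is a set equality, so by symmetry it suffices to prove the inclusion $\trg{\behav{C_T\hole{\cmp{P_1}}}} \subseteq \trg{\behav{C_T\hole{\cmp{P_2}}}}$. For this, take any $t$ with $\trg{C_T\hole{\cmp{P_1}}} \mathrel{\trg{\sem}} t$, and use nonemptiness of behaviors (the semantics produces at least one trace for every whole program) to pick some $t_2$ with $\trg{C_T\hole{\cmp{P_2}}} \mathrel{\trg{\sem}} t_2$. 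Applying $\pf{\rtrtp}$ to $\src{P_1}, \src{P_2}, \trg{C_T}$ and the pair $(t, t_2)$ yields a single source context $\src{C_S}$ with $\src{C_S\hole{P_1}} \mathrel{\src{\sem}} t$ and $\src{C_S\hole{P_2}} \mathrel{\src{\sem}} t_2$. Instantiating the source robust-trace-equivalence hypothesis at this $\src{C_S}$ gives $\src{\behav{C_S\hole{P_1}}} = \src{\behav{C_S\hole{P_2}}}$. Now determinacy of the source is invoked: since $\src{C_S\hole{P_1}}$ is deterministic and produces $t$, its behavior is exactly $\{t\}$, and likewise $\src{\behav{C_S\hole{P_2}}} = \{t_2\}$; equality of these singletons forces $t = t_2$. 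Hence $\trg{C_T\hole{\cmp{P_2}}} \mathrel{\trg{\sem}} t$, i.e. $t \in \trg{\behav{C_T\hole{\cmp{P_2}}}}$, which closes the inclusion; the reverse inclusion follows by swapping the roles of $\src{P_1}$ and $\src{P_2}$.

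I expect the main subtlety to be the determinacy step rather than the mechanical manipulation: without it, knowing only that $\src{C_S\hole{P_1}} \mathrel{\src{\sem}} t$, that $\src{C_S\hole{P_2}} \mathrel{\src{\sem}} t_2$, and that the two source behaviors coincide would \emph{not} entail $t = t_2$, since those behaviors could be large overlapping sets and the single pair delivered by $\pf{\rtrtp}$ would pin down nothing. Determinacy is what turns that single pair into a statement about the entire behavior, and this is precisely why the theorem is restricted to deterministic source languages. A secondary, more routine point to discharge is the selection of the witness $t_2$, which relies on every whole target program having at least one trace (guaranteed here since the trace model records silent divergence and infinite reactivity); this mild totality assumption is the only ingredient beyond source determinacy and the equivalence $\rtrtp \iff \pf{\rtrtp}$.
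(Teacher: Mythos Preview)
Your argument is correct. The paper itself gives no detailed proof for this statement (it simply defers to the Coq development), but the style of the surrounding proofs suggests a slightly different route: for $\rtrhp \Rightarrow \rtep$ the paper instantiates the relational criterion directly with the equality relation, and for the more general determinate case (\Cref{thm:rtrtpRTEP}) it instantiates $\rtrtp$ with the determinacy relation $\mathcal{R}$. The analogous direct argument here would be: from source trace equivalence and source determinism, show that the diagonal relation $R = \{(t,t)\mid t\in\ii{Trace}\}$ is robustly satisfied by $\src{P_1},\src{P_2}$; then $\rtrtp$ transfers this to the target, and together with nonemptiness of target behaviors this forces equality of the two target behaviors.

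Your route via the property-free characterization $\pf{\rtrtp}$ is equivalent in content---both hinge on source determinism collapsing a single pair of traces to equality, and both need a witness $t_2$ on the target side, hence nonemptiness of target behaviors. The direct instantiation saves the detour through \Cref{thm:rtrtp-eq} and makes the role of the equality relation explicit; your version, on the other hand, makes the back-translation structure visible (one source context simulating both traces), which is closer to how one would actually mechanize the step. Neither approach has any real advantage over the other here.
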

\begin{proof} See file Criteria.v, theorem  R2rTP\_RTEP.
\end{proof} 

The determinism of the source language is a strong assumption though.
We show that \rtrtp (and even the weaker \rtrxp) imply \rtep even if
we weaken the determinism assumption to just determinacy, if we add
two more assumptions on the target language: input totality, and
``safety-like'' behavior.

\begin{definition}[Determinate Languages]\label{def:determinate}
We say a language is determinate \textit{iff} 
\begin{equation*}
 \forall W. ~\forall t_1 ~t_2.~  W \rightsquigarrow t_1 \wedge W \rightsquigarrow t_1 \Rightarrow
 t_1  ~\mathcal{R}  ~t_2
\end{equation*}
where  
\begin{equation*}
  \begin{split}
   t_1  ~\mathcal{R}  ~t_2 \iff & t_1 = t_2 ~\vee \\
                               & ~\exists m. ~\exists e_1 e_2 \in \mathit{Input}. 
                                 ~ e_1 \neq e_2 ~ \wedge ~ m ~:: ~e_1 \leq t_1
                                 ~\wedge ~m ~:: ~e_2 \leq t_2
  \end{split}                                                                       
\end{equation*}
\end{definition}

Intuitively, determinacy states that a language has no internal
non-determinism, or equivalently that the only source of
non-determinism is the inputs from the environment.

\begin{definition}[Input Totality] \label{def:intot}
  We say a language satisfies input totality \textit{iff} 
  \begin{equation*}
    \forall W. ~\forall m. ~\forall e_1 e_2 \in \mathit{Input}. 
     ~W \rightsquigarrow^* m ~:: ~e_1 \Rightarrow W \rightsquigarrow^* m ~:: ~e_2
  \end{equation*}
\end{definition}
Intuitively, input totality states that whenever a program receives an input from the environment,
then it could have received any other input as well.
Both determinacy~\cite{Engelfriet85, ChevalCD13} and input
totality~~\cite{ZakinthinosL97, FocardiG95} are standard assumptions
and are for instance satisfied by the CompCert compiler~\cite{Leroy09}.

\begin{definition}[``Safety-like'' semantics]
Given a language $\LL$, it semantics is ``safety-like'' \emph{iff}
\begin{equation*}
\forall W. \forall t \text{~infinite}. W \cancel{\rightsquigarrow} t \Rightarrow
\exists m. \exists e. W \rightsquigarrow m \wedge m :: e \leq t \wedge W \cancel{\rightsquigarrow} m :: e
\end{equation*}
\end{definition}

Intuitively, any infinite trace that cannot not produced by a program can be
explained as a finite prefix of that trace that \emph{can} produced by the
program, but after which the next event can no longer be produced by it.
While this property is non-trivial, in \autoref{sec:semSafe} we show
that any small-step semantics satisfying a particular kind of
determinacy always satisfies this property.

We can now state the following theorems:
\begin{theorem} \label{thm:rtrtpRTEP} If the following assumptions hold
  \begin{enumerate}
  \item The source language is determinate.
  \item The target language satisfies input totality.
  \item The target language is ``safety-like''.
  \end{enumerate}
then $\rtrtp \Rightarrow \rtep$.
\end{theorem}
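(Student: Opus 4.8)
The plan is to reason through the property-free characterization $\pf{\rtrtp}$ (\Cref{def:rtrtc}), which lets us convert a target-level interaction into a source-level one. Assume the source language is determinate, and the target semantics is input total and ``safety-like''. Fix $\src{P_1}$ and $\src{P_2}$ satisfying the premise of \rtep, i.e.\ $\src{\behav{C_S\hole{P_1}}} = \src{\behav{C_S\hole{P_2}}}$ for every source context $\src{C_S}$. Fix a target context $\trg{C_T}$; since both the premise and the goal are symmetric in $\src{P_1}$ and $\src{P_2}$, it suffices to prove the single inclusion $\trg{\behav{C_T\hole{\cmp{P_1}}}} \subseteq \trg{\behav{C_T\hole{\cmp{P_2}}}}$. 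So I would take any $t_1$ with $\trg{C_T\hole{\cmp{P_1}}} \sem t_1$ and, aiming for a contradiction, suppose $\trg{C_T\hole{\cmp{P_2}}}$ cannot produce $t_1$.

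First I would extract a \emph{finite boundary} along $t_1$ beyond which $\trg{C_T\hole{\cmp{P_2}}}$ cannot follow: a finite prefix $m_0 \leq t_1$ together with the next event $e$ (so $m_0 \cdot e \leq t_1$) such that $\trg{C_T\hole{\cmp{P_2}}} \sem m_0$ but $\trg{C_T\hole{\cmp{P_2}}}$ cannot produce $m_0 \cdot e$. When $t_1$ is terminating or silently divergent, this $m_0$ is just the longest prefix of the finite trace $t_1$ that $\trg{C_T\hole{\cmp{P_2}}}$ can still reach. When $t_1$ is infinite the chain of prefixes is infinite, and this is exactly where the ``safety-like'' assumption is needed: it directly hands us a finite $m_0 \cdot e \leq t_1$ witnessing that the infinite trace is not produced.

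Next I would pick any $t_2 \in \trg{\behav{C_T\hole{\cmp{P_2}}}}$ with $m_0 \leq t_2$ (one exists since $\trg{C_T\hole{\cmp{P_2}}} \sem m_0$) and apply $\pf{\rtrtp}$ to the \emph{full} traces $t_1$ and $t_2$, obtaining a single source context $\src{C_S}$ with $\src{C_S\hole{P_1}} \sem t_1$ and $\src{C_S\hole{P_2}} \sem t_2$. By the \rtep premise $\src{\behav{C_S\hole{P_1}}} = \src{\behav{C_S\hole{P_2}}}$, so from $\src{C_S\hole{P_1}} \sem t_1$ we obtain $\src{C_S\hole{P_2}} \sem t_1$ as well. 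Thus the single whole source program $\src{C_S\hole{P_2}}$ produces both $t_1$ and $t_2$, and source determinacy (\Cref{def:determinate}) forces $t_1 \mathrel{\mathcal{R}} t_2$: either $t_1 = t_2$, or the traces agree on some prefix $m'$ and then diverge on two distinct input events.

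The closing case analysis is the technical heart, and the main obstacle. Since $t_1 \geq m_0 \cdot e$ while $t_2$ does not extend $m_0 \cdot e$, we have $t_1 \neq t_2$, so determinacy yields an input divergence at some $m'$ with $m_0 \leq m'$ (both traces extend $m_0$, hence agree on it). If $m'$ lies strictly beyond $m_0$, then $m_0 \cdot e \leq m' \leq t_2$, giving $\trg{C_T\hole{\cmp{P_2}}} \sem m_0 \cdot e$ and contradicting the boundary. Otherwise $m' = m_0$, the diverging event of $t_1$ at $m_0$ is $e$ itself (so $e$ is an input event), and $t_2$ takes some other input $e'$ after $m_0$; target input totality (\Cref{def:intot}) then upgrades the fact that $\trg{C_T\hole{\cmp{P_2}}}$ can input $e'$ after $m_0$ into the fact that it can input $e$ after $m_0$, i.e.\ $\trg{C_T\hole{\cmp{P_2}}} \sem m_0 \cdot e$, again contradicting the boundary. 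In every case the assumption fails, so $\trg{C_T\hole{\cmp{P_2}}} \sem t_1$; by symmetry the two behaviors coincide, establishing \rtep. The delicate part is making the three hypotheses interlock precisely: ``safety-like'' to keep the boundary finite in the infinite case, determinacy to pin any disagreement down to a single input event at the boundary, and input totality to turn that disagreement into the very production $\trg{C_T\hole{\cmp{P_2}}} \sem m_0 \cdot e$ that the boundary forbids.
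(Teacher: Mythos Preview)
Your argument is correct and essentially matches the paper's proof. The only cosmetic difference is that the paper instantiates the property-full $\rtrtp$ directly with the determinacy relation $\mathcal{R}$ (first showing it is robustly satisfied in the source under the trace-equivalence premise, then transferring it to the target), whereas you route through the equivalent property-free characterization $\pf{\rtrtp}$, back-translating the specific pair $(t_1,t_2)$ and invoking source determinacy there; the ensuing contradiction via the maximal prefix, safety-like assumption, and input totality is the same.
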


\begin{proof}
  We give a sketch of the proof here, see file R2rTP\_RTEP.v, theorem
  R2rTP\_RTEP for a complete proof.

  Two contextually equivalent programs $\src{P_1}, \src{P_2}$ have the same behavior in any source context.
  This means that for any two traces $t_1\in \behav{\src{C_s\hole{P_1}}}$ and $t_2 \in \behav{\src{C_s\hole{P_2}}}$,
  since $\behav{\src{C_s\hole{P_1}}} = \behav{\src{C_s\hole{P_2}}}$
  we can use the determinacy of the source language (1) to obtain that $t_1 \mathcal{R} ~t_2$, for the relation $\mathcal{R}$ used to define determinacy.
  This allows us to instantiate $\rtrtp$ with the relational property $\mathcal{R}$ and deduce that for an arbitrary target context $\trg{C_T}$, 
  programs $\trg{C_T\hole{\cmp{P_1}}}$ and $\trg{C_T\hole{\cmp{P_2}}}$ can only produce traces also related by $\mathcal{R}$.
  Together with hypotheses 2 and 3 this is enough to show mutual inclusion of the target behaviors. 
  We show that for an arbitrary $\trg{C_T}$, $\behav{\trg{C_T}\hole{\cmp{P_1}}} \subseteq \behav{\trg{C_T}\hole{\cmp{P_2}}}$, the other inclusion is
  symmetric. \\
  Assume by contradiction that there exists $t_1 \in \behav{\trg{C}\hole{\cmp{P_1}}} \setminus \behav{\trg{C}\hole{\cmp{P_2}}}$. 
  Let $m_{max} \leq t_1$ be given by hypothesis 3. Therefore there exists $t_2 \in \behav{\trg{C_T}\hole{\cmp{P_2}}}$
  such that $m_{max} \leq t_2$, with $t_1 \neq t_2$ but still $t_1 \mathcal{R} ~t_2$. By determinacy, $t_1$ and $t_2$ have a common prefix $m$, 
  and there exist two input events $e_1 \neq e_2$ such that $m :: e_i \leq t_i, ~ i = 1, 2$. By maximality of $m_{max}$ it must be 
  $m \leq m_{max}$. The inequality cannot be strict, otherwise both $m :: e_1, ~m :: e_2 \leq m_{max}$. 
  In case $m = m_{max}$ apply input totality and deduce that $\trg{C_T\hole{\cmp{P_2}} \leadsto^*} m_{max} :: e_2$
  contradicting the maximality of $m_{max}$.
\end{proof}

The next result was discussed previously as \oldautoref{thm:main-rtrxp-rtep}.

\begin{theorem} \label{thm:xthm}  Under the same assumptions of \Cref{thm:rtrtpRTEP}, $ \criterion{R2rX} \Rightarrow \rtep $. 

\end{theorem}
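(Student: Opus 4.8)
The plan is to strengthen the argument used for \Cref{thm:rtrtpRTEP}, whose proof instantiated $\rtrtp$ with the determinacy relation $\mathcal{R}$ of \Cref{def:determinate}. Since $\rtrxp$ is strictly weaker than $\rtrtp$ --- it only permits transporting relations that are \emph{2-relational relaXed safety} (equivalently, by \Cref{thm:rtrxp-eq} and \Cref{def:rtrxp}, relations over relaxed prefixes) --- the single genuinely new obligation is to show that $\mathcal{R}$ itself belongs to this smaller class; every other step can then be reused almost verbatim. So first I would prove the key lemma: $\mathcal{R}$ is 2-relational relaXed safety, i.e.\ for every pair $(t_1,t_2) \notin \mathcal{R}$ there exist relaxed prefixes $x_1 \leq t_1$ and $x_2 \leq t_2$ such that every pointwise extension $(t_1',t_2')$ with $t_1' \geq x_1$ and $t_2' \geq x_2$ still lies outside $\mathcal{R}$.

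The heart of this lemma is a case analysis on \emph{why} $(t_1,t_2)\notin\mathcal{R}$: by \Cref{def:determinate} this means $t_1 \neq t_2$ while there is no common finite prefix after which $t_1$ and $t_2$ continue with two distinct input events. I would consider the earliest position at which $t_1$ and $t_2$ disagree. If they disagree on an output event, or one produces an event where the other has terminated, then the ordinary finite prefixes up to and including that position already witness an irremediable difference. The subtle case --- and exactly the reason the theorem needs relaXed rather than plain safety --- is silent divergence: e.g.\ $t_1$ diverges silently after some finite prefix $m$ while $t_2$ continues with a further event. A plain finite prefix of $t_1$ would remain extendable to traces that could re-enter $\mathcal{R}$, so I would instead take the relaxed prefix $x_1 = m \cdot \sdiv$, which pins the divergence down and is irremediable, while $x_2$ keeps the extra event; any $(t_1',t_2') \geq (x_1,x_2)$ then stays outside $\mathcal{R}$. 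Enumerating these cases shows $\mathcal{R}$ is refutable by a pair of relaxed prefixes.

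With the lemma established I would mirror the proof of \Cref{thm:rtrtpRTEP}. Assume $\src{P_1},\src{P_2}$ are robustly trace equivalent in the source, so that for every $\src{C_S}$ we have $\behav{\src{C_S\hole{P_1}}} = \behav{\src{C_S\hole{P_2}}}$. Then any $t_1$ produced by $\src{C_S\hole{P_1}}$ and any $t_2$ produced by $\src{C_S\hole{P_2}}$ are both behaviors of $\src{C_S\hole{P_1}}$, and source determinacy yields $t_1 \mathrel{\mathcal{R}} t_2$; hence $\src{P_1},\src{P_2}$ robustly satisfy $\mathcal{R}$ in the source. Instantiating $\rtrxp$ with $\mathcal{R}$ --- legitimate precisely because of the lemma --- transports this to the target: for every $\trg{C_T}$, all trace pairs drawn from $\trg{C_T\hole{\cmp{P_1}}}$ and $\trg{C_T\hole{\cmp{P_2}}}$ are related by $\mathcal{R}$.

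Finally I would derive the two behavior inclusions exactly as in \Cref{thm:rtrtpRTEP}. Suppose for contradiction $t_1 \in \behav{\trg{C_T\hole{\cmp{P_1}}}} \setminus \behav{\trg{C_T\hole{\cmp{P_2}}}}$; the safety-like assumption gives a maximal prefix $m_{max} \leq t_1$ that $\trg{C_T\hole{\cmp{P_2}}}$ can still produce, and some $t_2 \geq m_{max}$ in $\behav{\trg{C_T\hole{\cmp{P_2}}}}$. As $t_1 \neq t_2$ but $t_1 \mathrel{\mathcal{R}} t_2$, they share a common prefix $m$ after which they take distinct input events $e_1 \neq e_2$; maximality of $m_{max}$ forces $m = m_{max}$, and input totality then makes $\trg{C_T\hole{\cmp{P_2}}}$ capable of producing $m_{max} \cdot e_1$, contradicting maximality. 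The symmetric inclusion follows identically, giving behavior equality and hence $\rtep$. The main obstacle is the relaXed-safety lemma for $\mathcal{R}$ (in particular its divergence case); everything downstream is a reuse of the determinacy and input-totality bookkeeping already developed for \Cref{thm:rtrtpRTEP}.
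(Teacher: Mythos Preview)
Your approach is correct and shares the paper's essential insight (that silent divergence is precisely what forces relaXed rather than plain relational safety), but the technical packaging differs. The paper does not prove that the determinacy relation $\mathcal{R}$ on traces is itself 2-relational relaXed safety; instead it works through the equivalent $\ii{XPref}$-based characterization of $\rtrxp$ (\Cref{def:rtrxp}), defining a fresh relation $\mathcal{R}_X$ directly on relaxed prefixes by
\[
x_1 \mathrel{\mathcal{R}_X} x_2 \iff x_1 \leq x_2 \;\vee\; x_2 \leq x_1 \;\vee\; (\exists m.\,\exists e_1,e_2 \in \mathit{Input}.\ e_1 \neq e_2 \wedge m{::}e_1 \leq x_1 \wedge m{::}e_2 \leq x_2),
\]
shows that source determinacy plus the unfolding of $\rightsquigarrow$ on x-prefixes makes $\src{P_1},\src{P_2}$ robustly satisfy $\mathcal{R}_X$, transports $\mathcal{R}_X$ to the target, and then must exhibit \emph{non-comparable} x-prefixes $x_1 \leq t_1$, $x_2 \leq t_2$ so that $x_1 \mathrel{\mathcal{R}_X} x_2$ collapses to the input-divergence disjunct; the divergence case analysis shows up for the paper exactly at this non-comparability step. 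Your route is more direct: by proving once that $\mathcal{R}$ is relaXed safety you instantiate $\rtrxp$ with $\mathcal{R}$ itself and reuse the target-side argument of \Cref{thm:rtrtpRTEP} verbatim, without the auxiliary relation or the non-comparability manoeuvre. The paper's route has the mild advantage of staying aligned with the $\ii{XPref}$-relation formulation and the Coq development (file R2rXP\_RTEP.v); yours has the advantage that the case analysis is isolated in a clean structural lemma about $\mathcal{R}$. Either way the crux is the same divergence case and the same endgame via input totality.
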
 

\begin{proof} 
  See file R2rXP\_RTEP.v, theorem R2rXP\_RTEP for a complete proof.
  The argument is very similar to the one in \Cref{thm:rtrtpRTEP}, the relation $\mathcal{R}$ is adapted to $\ii{XPref}$
  as following.
  
  \begin{equation*}
  \begin{split}
   x_1  ~\mathcal{R}_X  ~x_2 \iff & x_1 \leq x_2 ~\vee x_2 ~\leq x_1 ~\vee \\
                                 & ~\exists m. ~\exists e_1 e_2 \in \mathit{Input}. 
                                   ~ e_1 \neq e_2 ~ \wedge ~ m ~:: ~e_1 \leq x_1
                                                   ~\wedge ~m ~:: ~e_2 \leq x_2
  \end{split}                                                                       
\end{equation*} 
 $\mathcal{R}$ holds for traces produces by contextually equivalent source programs $\src{P_1}, \src{P_2}$, 
 and by unfolding $\src{C_S\hole{\cmp{P_i}} \leadsto^*} x_i, ~ i = 1, 2$, $ x_1 \mathcal{R}_X ~x_2$ holds as well. \\
 Proceed as in \Cref{thm:rtrtpRTEP} to show mutual inclusion of behaviors. Determinacy ensures that
 if $t_1 \neq t_2$ then there exist two non comparable $x_1, x_2$ such that $x_1 \leq t_1, ~ x_2 \leq t_2$ both with 
 $m_{max}$ as common prefix and still $x_1 \mathcal{R}_X ~x_2$ and we can conclude with the same argument as in \Cref{thm:rtrtpRTEP}.
 It is crucial to observe that considering $\ii{XPref}$ instead of $\ii{FinPref}$, $x_1, x_2$ can be considered non comparable.
 This can be proved by case analysis on the two traces, in particular if $t_1 = m :: \circlearrowleft$ and $t_2 = m :: \termevent$, 
 $x_1 = m :: \circlearrowleft$ and $x_2 = m :: \termevent$ are two non comparable $x$-prefixes still related by $\mathcal{R}_X$
 but all finite prefixes will be comparable, so that input totality is not useful to reach a contradiction.
\end{proof}

While under the rather liberal condition above \rtrsp does {\bf not}
imply \rtep, it does imply \rtep in the very special case that target
programs cannot produce any silently diverging traces, for instance
because in the target language is terminating. This is a technical
result that we use in a later proof (\autoref{thm:rhp-not-rtep}).

\begin{theorem}\label{thm:rtrsp-rtep} Under the following assumptions:
  \begin{enumerate}
  \item The source language is determinate.
  \item The target language satisfies input totality.
  \item The target language is ``safety-like''.
  \item Target programs cannot produce silently diverging traces.
  \end{enumerate}
then $\rtrsp \Rightarrow \rtep$.
\end{theorem}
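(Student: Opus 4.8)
The plan is to combine the finite-prefix reformulation of \rtrsp (\Cref{thm:rtrspA}) with the proof strategy already used for \Cref{thm:rtrtpRTEP}, paying for the loss of full traces with the extra assumption that target programs never silently diverge. First I would fix two programs $\src{P_1},\src{P_2}$ that are trace-equivalent in every source context (the premise of \rtep) and an arbitrary target context $\trg{C_T}$, aiming to show $\trg{\behav{C_T\hole{\cmp{P_1}}}} = \trg{\behav{C_T\hole{\cmp{P_2}}}}$. As in \Cref{thm:rtrtpRTEP}, the engine is the determinacy relation $\mathcal{R}$ of \Cref{def:determinate}, but adapted to finite prefixes: define $R \subseteq \ii{FinPref}^2$ by letting $(m_1,m_2)\in R$ hold iff $m_1$ and $m_2$ are comparable ($m_1 \leq m_2$ or $m_2 \leq m_1$) or they first diverge at a position whose two events are distinct inputs. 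Since \Cref{thm:rtrspA} lets us instantiate \rtrsp with an \emph{arbitrary} relation on finite prefixes, $R$ itself need not be safety.

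Next I would discharge the source premise of this instance: for every $\src{C_S}$ and every pair of prefixes with $\src{C_S\hole{P_1}} \sem m_1$ and $\src{C_S\hole{P_2}} \sem m_2$, we must have $(m_1,m_2)\in R$. This follows because the trace-equivalence hypothesis gives $\src{\behav{C_S\hole{P_1}}} = \src{\behav{C_S\hole{P_2}}}$, so the witnessing traces $t_1 \geq m_1$ and $t_2 \geq m_2$ both lie in one behavior set; source determinacy then yields $t_1 \mathrel{\mathcal{R}} t_2$, and a short case analysis on whether $m_1$ and $m_2$ reach the divergence point transfers $\mathcal{R}$ on traces to $R$ on their prefixes. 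Applying \rtrsp in the form of \Cref{thm:rtrspA} transports this to the target: for all $m_1,m_2$ with $\trg{C_T\hole{\cmp{P_1}}} \sem m_1$ and $\trg{C_T\hole{\cmp{P_2}}} \sem m_2$, we get $(m_1,m_2)\in R$.

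It then remains to lift this prefix-level relation to full behavioral equality in the target, which I would do by the contradiction argument of \Cref{thm:rtrtpRTEP} run entirely on prefixes. Suppose, for the inclusion $\subseteq$ (the converse being symmetric), that some $t_1 \in \trg{\behav{C_T\hole{\cmp{P_1}}}} \setminus \trg{\behav{C_T\hole{\cmp{P_2}}}}$. By assumption 4, $t_1$ is either terminating or infinite; in the infinite case assumption 3 (``safety-like'') produces a maximal prefix $m_{\mathit{max}} \leq t_1$ that $\trg{C_T\hole{\cmp{P_2}}}$ can produce but cannot extend along $t_1$, and in the terminating case such a maximal producible prefix exists by finiteness. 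Because $\trg{C_T\hole{\cmp{P_2}}}$ does not silently diverge, its witnessing trace past $m_{\mathit{max}}$ either terminates there or branches with a concrete event $e' \neq e$, where $e$ is the next event of $t_1$. Comparing the prefix $m_{\mathit{max}} :: e$ from $\src{P_1}$ with the divergent prefix from $\src{P_2}$ and invoking the transported relation $R$: if the branch is at a non-input (or at termination) the two prefixes are $R$-unrelated, an immediate contradiction; if it is at an input, input totality (assumption 2) forces $\trg{C_T\hole{\cmp{P_2}}}$ to also produce $m_{\mathit{max}} :: e$, contradicting maximality.

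I expect the main obstacle to be precisely this reconstruction step, where silent divergence must be excluded, which is exactly what assumption 4 buys. The reason finite prefixes suffice here but would fail without it is that a silently diverging continuation of $m_{\mathit{max}}$ contributes no finite prefix incomparable to $m_{\mathit{max}} :: e$: its only finite prefixes stop at $m_{\mathit{max}}$ with the not-yet-terminated marker, which is $\leq m_{\mathit{max}} :: e$ and hence always $R$-related, so the discrepancy becomes invisible at the prefix level and the contradiction evaporates. Ruling out silent divergence guarantees that every genuine divergence between $t_1$ and $\src{P_2}$'s behavior is witnessed by a finite, incomparable prefix, after which the determinacy, input-totality, and safety-like machinery of \Cref{thm:rtrtpRTEP} closes the argument. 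Some care will also be needed in the terminating-$t_1$ boundary case where $\src{P_2}$ reproduces all events of $t_1$ but continues rather than terminating; this is handled by the same prefix comparison, with termination playing the role of the distinguishing non-input event.
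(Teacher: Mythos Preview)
Your direct argument is sound and, in fact, mirrors what the paper alludes to as the ``complete direct proof'' in Coq. However, the paper's actual proof is a one-line reduction rather than a replay of \Cref{thm:rtrtpRTEP}: under assumption~4, the criteria \rtrxp and \rtrsp coincide, because the only $x$-prefixes a target program can produce are ordinary finite prefixes (an $x$-prefix ending in $\sdiv$ is a prefix only of a silently diverging trace, which is excluded). Hence \rtrsp gives \rtrxp for free, and then \Cref{thm:xthm} (which needs only assumptions~1--3) yields \rtep immediately.

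Your route and the paper's buy different things. The reduction is shorter and cleanly isolates the role of assumption~4: it is exactly what collapses $\ii{XPref}$ to $\ii{FinPref}$ on the target side, after which no new work is needed. Your direct proof, by contrast, re-derives the contradiction argument at the prefix level and makes explicit \emph{where} silent divergence would break it---namely, a silently diverging $t_2$ after $m_{\mathit{max}}$ contributes no finite prefix incomparable with $m_{\mathit{max}} :: e$, so the transported relation $R$ never fires. That diagnostic is valuable and matches precisely the obstruction that motivates introducing $\ii{XPref}$ in \Cref{thm:xthm}; but as a proof of the present theorem it is more laborious than necessary, since the heavy lifting has already been done once in \Cref{thm:xthm}.
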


\begin{proof}
  See file R2rSP\_RTEP.v, theorem R2rSP\_RTEP for a complete direct
  proof.\ifsooner\ch{It seems stupid to give a complete direct proof when
   the result is trivial:}\fi{}
  Here we just highlight that \rtrxp and \rtrsp are equivalent
  under the very strong hypothesis 4, so we can simply apply
  \autoref{thm:xthm} above.
\end{proof}
While assumption (4) above is very strong, it does hold for strictly
terminating languages, and in all other cases one can use
\autoref{thm:xthm}.

\subsection{Safety-Like Small-Step Semantics} \label{sec:semSafe}

In this section we state and prove the property that many small-step
semantics have the previous ``safety-like'' behavior, in the sense
that we can determine whether an infinite trace cannot be produced by
a program after a finite number of steps.

 First, we state our semantic model and its
basic constituents. 

\begin{definition}[Small-step semantics]\label{def:smallstep}
A small-step semantics is defined in terms of the following abstract components:

\begin{itemize}
\item Program states are represented by \emph{configurations}, $c$.
\item An \emph{initial relation} characterizes initial program states.
\item A \emph{step relation}, $c \xrightarrow{e} c'$ between pairs of states,
      producing an event. Its reflexive and transitive closure is denoted
      ${\xrightarrow{e_1 \cdots e_n}}^*$.
\item A well-founded \emph{order relation} on elements of a type of ``measures.''
\end{itemize}

Events can be either \emph{visible} or \emph{silent}. A configuration
is \emph{stuck} when there is no configuration it can step to; it can \emph{loop
silently} if there is an infinite sequence of silent steps starting from it.

A small-step semantics relates program configurations and the traces they
produce; the relation is moreover parameterized by an element of the type of
measures. In our trace model, there are four possible cases, starting from a
configuration $c$:

\begin{itemize}
\item If $c$ is stuck, the semantics produces the terminating trace $\termevent$
      with some associated information $\varepsilon$.
\item If $c$ can loop silently, the semantics produces the silently diverging
      trace $\sdiv$.
\item If $c$ can silently step $c \xrightarrow{\emptyset}^* c'$ to a $c'$ while
      decreasing its ordering measure with respect to $c$, the semantics recurses
      on $c'$.
\item If $c$ can step with some visible events $c \xrightarrow{m}^* c'$, the
      semantics emits $m$ and recurses on $c'$.
\end{itemize}

\end{definition}

The addition of the well-founded order relation between measures is used to
avoid the usual problem of infinite stuttering on silent events, which is
properly captured by silent divergence. Between two visible events there must
mediate a finite number of silent events. This requirement is enforced by having
the ordered measure decrease when silent steps are taken (there are no
restrictions on ordering between states connected by visible events). A similar
device is used, for example, in the CompCert verified compiler.

The final result holds for a wide class of reasonable languages.
The following determinacy condition is sufficient to prove the result.

\begin{definition}[Weak determinacy]
Two program configurations are related if they produce the same traces under the
semantics; we write $c_1 R c_2$ for this.

Under weak determinacy, if a pair of states is related and each element of this
initial pair steps to another state producing the same sequence of events, the
pair of final states is also related:
\begin{equation*}
\forall c_1. \forall c_1'. \forall m. \forall c_2. \forall c_2'.
        c_1 R c_1' \Rightarrow
        c_1 \xrightarrow{m}^* c_2 \Rightarrow
        c_1' \xrightarrow{m}^* c_2' \Rightarrow
        c_2 R c_2'
\end{equation*}
\end{definition}

Thus stated, the ``safety-like'' quality of small-step semantics follows easily.

\begin{theorem}
Assuming weak determinacy holds, all small-step semantics (that can be encoded
by the scheme of \Cref{def:smallstep}) are ``safety-like.''
\end{theorem}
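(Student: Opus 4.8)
The plan is to reduce the goal to a single lemma stating that producibility of every finite prefix of an infinite trace forces producibility of the whole trace, and then to establish that lemma by a coinductive construction justified by weak determinacy.

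First I would unfold the definition of ``safety-like''. Fix a whole program $W$ with initial configuration $c_0$ and an infinite trace $t = e_1 \cdot e_2 \cdots$ such that $W \mathrel{\cancel{\rightsquigarrow}} t$; the task is to produce a finite prefix $m$ and an event $e$ with $W \rightsquigarrow m$, $m :: e \leq t$, and $W \mathrel{\cancel{\rightsquigarrow}} (m :: e)$. Recall that $W \rightsquigarrow m$ abbreviates $\exists t' \geq m.~ W \rightsquigarrow t'$. Consider the set of producible prefix lengths
\[
  P \triangleq \{\, k \in \mathbb{N} \mid W \rightsquigarrow (e_1 \cdots e_k \circ) \,\}.
\]
This set is nonempty, since every configuration produces at least one trace under the four-case scheme of \Cref{def:smallstep} and the empty prefix $\circ$ lies below any trace, so $0 \in P$. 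It is downward closed: if $t' \geq (e_1 \cdots e_k \circ)$ then $t' \geq (e_1 \cdots e_j \circ)$ for every $j \leq k$, so $k \in P$ implies $j \in P$. Hence $P$ is an initial segment of $\mathbb{N}$.

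The crux is the following lemma: \emph{if $W \rightsquigarrow (e_1\cdots e_k \circ)$ for every $k$, then $W \rightsquigarrow t$.} Granting it, $P = \mathbb{N}$ would yield $W \rightsquigarrow t$, contradicting the assumption; therefore $P$ is a \emph{proper} initial segment and has a greatest element $k$. Setting $m = e_1 \cdots e_k \circ$ and $e = e_{k+1}$ then discharges the goal: $k \in P$ gives $W \rightsquigarrow m$; by construction $m :: e = e_1 \cdots e_{k+1} \circ \leq t$; and $k+1 \notin P$ gives $W \mathrel{\cancel{\rightsquigarrow}} (m :: e)$.

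It remains to prove the lemma, which I expect to be the main obstacle. I would build, coinductively, an infinite execution of $c_0$ emitting exactly $e_1 e_2 \cdots$. Call a configuration $d$ \emph{good at level $k$} if $c_0 \xrightarrow{e_1 \cdots e_k}^* d$ and $d$ produces $e_{k+1} \cdots e_{k+j} \circ$ for every $j \geq 0$; by hypothesis $c_0$ is good at level $0$. The key inductive step is that a configuration good at level $k$ admits a successor good at level $k+1$: because such a $d$ produces each prefix $e_{k+1} \cdots e_{k+j}\circ$, for each $j$ there is a segment $d \xrightarrow{e_{k+1}}^* d^{(j)}$ whose target further produces $e_{k+2} \cdots e_{k+j}\circ$; applying weak determinacy to the reflexively related pair $d \mathrel{R} d$ along the common emitted sequence $e_{k+1}$ shows that all the $d^{(j)}$ are $R$-related and hence produce the same traces. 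Choosing $d^{(1)}$ as the next configuration therefore yields one that is good at level $k+1$, since $R$ transports the producibility of every $e_{k+2}\cdots e_{k+j}\circ$ from the other witnesses. Iterating gives an infinite execution emitting the visible events $e_1, e_2, \ldots$ in order, and this execution is exactly a derivation of $W \rightsquigarrow t$ in the infinitely-reactive case. The delicate points to get right are the reindexing between the semantics' event-chunking and the single-event bookkeeping above, the guarantee (from the well-founded measure of \Cref{def:smallstep}) that only finitely many silent steps separate consecutive visible events, so that no spurious silent divergence intervenes, and the careful use of weak determinacy to ensure the locally chosen successors cohere into one global run.
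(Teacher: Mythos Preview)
The paper gives no written proof for this statement; it simply points to the Coq development (file \texttt{SemanticsSafetyLike.v}, theorem \texttt{tgt\_sem}). Your outline---reducing ``safety-like'' to the lemma that producibility of every finite prefix of an infinite trace forces producibility of the whole trace, then establishing that lemma by coinductively threading together one-event extensions whose intermediate configurations are kept coherent via weak determinacy applied to the reflexive pair $d\mathrel{R}d$---is sound and is the natural shape for such an argument. The observation that weak determinacy, instantiated reflexively, makes all the one-step witnesses $d^{(j)}$ behaviourally interchangeable (so that any one of them inherits the prefix-producibility of the others) is exactly the right use of the hypothesis. The points you flag as delicate (aligning the semantics' event-chunking with single-event bookkeeping, and ensuring no silent divergence is interposed on the constructed run) are where the formal work lies, but nothing in your sketch is wrong. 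Since the paper defers entirely to the mechanization, no textual comparison of approaches is possible; your decomposition is the expected one and would be a reasonable reconstruction of the Coq proof.
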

\begin{proof}
See file SemanticsSafetyLike.v, theorem tgt\_sem.
\end{proof}


\section{Separation Results}\label{sec:sep}

The implications represented by arrows in \Cref{fig:order} are
strict, that is, the two criteria linked by an arrow are \emph{not} equivalent.
This section justifies these separation results by giving, for each of
them, counterexample compilation chains that satisfy the criterion
occurring lower in the diagram (pointed to by the arrow), but not the
upper one.
Finally, in \autoref{sec:rtep-useless}
we prove that \rtep does not imply even the weakest criteria
in our diagram (\rsp and \rdp), even when also assuming compiler
correctness (\tp, \scc, and \ccc).

\subsection{\rsp and \rdp Do Not Imply \rtp} \label{sec:sepCoq}

In this section, we show that the robust preservation of \emph{either}
all safety properties (\Cref{sepS}) \emph{or} of all dense properties
(\Cref{sepL}) is not enough to guarantee the robust preservation of
all trace properties.
(Note that, as a corollary to the decomposition result in
\autoref{thm:tp-safety-cap-dense}, a compiler that preserves all safety
properties \emph{and} all dense properties preserves all properties.)
The two compilation chains in this section have been formalized in the
Coq; see file Separation.v for more details.
This section expands upon the description from \oldautoref{sec:rsp}
(for safety properties) and \autoref{sec:rdp} (for dense properties).


Take an arbitrary language $\LL$ described by a small-step semantics. Assume it is possible to
write a non-terminating program in $\LL$, e.g., a program that produces some infinite trace.
Assume moreover that such a program is independent from the context
with which it is linked (for instance, it is already whole).
To keep things concrete, we consider a standard \emph{while} language
as our $\LL$ and the following non-terminating program $P_{\Omega}$,
where \verb|n|${\in}\mathbb{N}$:

\begin{center}
  \begin{BVerbatim}
  while (true) {
    output(n);
  }
  \end{BVerbatim}
\end{center}


Next, define a language transformer $\phi(\LL)$, which produces a new
language that is identical to $\LL$, except that it bounds program
executions by a certain number of steps (its \textit{``fuel''}).
In particular:
\begin{itemize}
\item If $C$ is a context in $\LL$, then for every $n \in \mathbb{N}$, $(n, C)$ is a context in $\phi(\LL)$ 
      with fuel $n$.
\item Plugging in $\phi(\LL)$ is defined by $(n,C) [ P ]_{\phi(\LL)} \equiv (n, C [ P ]_{\LL})$.
      Subscripts will be omitted when doing so introduces no ambiguities.
\item The semantics of $\phi(\LL)$ extends the semantics of $\LL$ as follows.
      If the amount fuel is $0$, no
      steps are allowed.
      Otherwise, every time
      a step would be taken in $\LL$, the same step is taken in $\phi(\LL)$ and the amount of fuel is decremented by one.
\end{itemize}

\begin{lemma} \label{sepS} \rsp  $\not \Rightarrow$ \rdp
\end{lemma}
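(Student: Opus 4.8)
The plan is to reuse the fuel-bounded compilation chain just constructed, now reading it as a counterexample for this separation: the source language is $\phi(\LL)$, the target language is $\LL$, and compilation is the identity on partial programs (it simply forgets the fuel annotation carried by whole programs and contexts). Since $\rsp \iff \pf{\rsp}$ (\Cref{thm:rsp-eq}) and $\rdp \iff \pf{\rdp}$ (\Cref{thm:rdp-eq}), it suffices to work with the property-free characterizations, so I would show that this chain satisfies \pf{\rsp} but refutes \pf{\rdp}.

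First I would verify \pf{\rsp}. Fix a partial program $\src{P}$, a target context $\trg{C_T}$, and a finite prefix $m$ with $\trg{C_T\hole{\cmp{P}}} \sem m$. Unfolding the prefix semantics, there is a full target trace $t \geq m$ and a \emph{finite} number of steps $k$ of the $\LL$-execution of $\trg{C_T\hole{\cmp{P}}}$ after which $m$ has been realized (and, when $m$ is a terminated prefix, with $k$ large enough for the execution to halt naturally). I would back-translate to the source context $(k, \trg{C_T})$: since $\phi(\LL)$ runs in lock-step with $\LL$ until the fuel is spent, the bounded execution of $(k, \trg{C_T})$ plugged with $\src{P}$ reproduces the same first $k$ steps and hence emits a source trace extending $m$. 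This yields $\src{C_S\hole{P}} \sem m$ with $\src{C_S} = (k, \trg{C_T})$, which is exactly \pf{\rsp}.

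Next I would refute \pf{\rdp} using $P_\Omega$. Because $P_\Omega$ is independent of the context it is linked with, for every target context $\trg{C_T}$ we have $\trg{C_T\hole{\cmp{P_\Omega}}} \sem w$, where $w$ is the infinite (hence non-terminating) trace that $P_\Omega$ produces in $\LL$. However, every source context has the shape $(n, C)$ for some finite $n$, so every source run of $P_\Omega$ halts after at most $n$ steps once the fuel is exhausted and therefore yields only finite, terminated traces; in particular no source context can produce the non-terminating trace $w$. This exhibits a non-terminating target trace with no source witness, contradicting \pf{\rdp}, so the chain does not satisfy \rdp, and the two facts together give $\rsp \not\Rightarrow \rdp$.

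The main obstacle lies entirely in the \pf{\rsp} direction, namely choosing the fuel correctly: the bound must be read off the \emph{number of execution steps} of the target run that realizes $m$, not the number of events in $m$, since silent steps may separate observable events, and for a terminated prefix the bound must be large enough for the source execution to reach natural termination rather than being cut off by fuel exhaustion. Once the fuel is set from the finite target execution that realizes $m$, reproducing the prefix in the source is immediate from the agreement between $\LL$ and $\phi(\LL)$ below the bound; the refutation of \rdp is comparatively routine, relying only on the context-independence of $P_\Omega$ and the finiteness of source fuel.
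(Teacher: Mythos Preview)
Your proof is correct and follows the same construction as the paper: source $\phi(\LL)$, target $\LL$, identity compiler, back-translate a target context by pairing it with enough fuel. The paper argues directly with the property-full definitions (a contradiction argument for \rsp, and an explicit dense property $L=\{t\mid t~\text{finite}\vee t=\mathit{output}(42)^\omega\}$ for the failure of \rdp) rather than via the property-free characterizations, and sets the fuel to $|m|$ rather than the step count; your choice of the step count is the more careful one in the presence of silent steps, but otherwise the arguments coincide.
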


\begin{proof}
Take $\phi(\LL)$ as source language, $\LL$ as target, and the compiler to be
the projection of contexts of $\phi(\LL)$ on their second component. We are going to show
that all safety properties that are robustly satisfied in the source
are also robustly satisfied in the target, but not all dense properties are preserved.

Let $S \in \mathit{Safety}$. Assume that all safety properties are robustly preserved,
i.e., that for every program $P$, every source context $(n, C)$ and every trace $t$,
\begin{equation*}
  (n, C [ P ]) \rightsquigarrow t \Rightarrow t \in S
\end{equation*}
In addition, assume for contradiction that there exists some target context $C'$ and trace $t'$ such that
\begin{equation*}
  C' [ P\downarrow ] \rightsquigarrow t' \wedge t' \not \in S
\end{equation*}
where $P \downarrow = P$.
By definition of safety, there exists $m \leq t'$ such that every continuation $t''$
of $m$ violates the property,
\begin{equation*}
  \forall t''. ~ m \leq t'' \Rightarrow t'' \not \in S
\end{equation*}
Consider the source context $(|m|, C')$ where $|m|$ is the length of $m$. Denote by $t_m$ the trace
that contains the events of $m$ followed by a termination marker.
Since $m \leq t_m$ we have that $t_m \not \in S$.
However, $(|m|, C') \rightsquigarrow t_m$,
 which implies that $t_m \in S$, a contradiction.

Next, we produce a dense property that is not robustly preserved by this compiler.
Consider
\begin{equation*}
  L = \{ t | ~t \mathit{~is ~finite} \vee t = output(42)^\omega \}
\end{equation*}
Observe that $L$ is a dense property as it includes all finite traces.
Since programs in the source can produce only finite traces, these will be in $L$.
In the target, however, the program $P = P \downarrow$
\begin{center}
  \begin{BVerbatim}
  while (true) {
    output(41);
  }
  \end{BVerbatim}
\end{center}
is no longer forced to stop after a finite number of steps, and produces an infinite
trace different from $output(42)^\omega$.
\end{proof}

\begin{lemma}\label{sepL}  \rdp $\not \Rightarrow$ \rsp
\end{lemma}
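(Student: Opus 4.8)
The plan is to prove \Cref{sepL} by exhibiting a counterexample compilation chain that robustly preserves all dense properties but fails to robustly preserve some safety property, dually to the chain of \Cref{sepS}. Concretely, I would reuse the fuel-bounded language transformer $\phi(\LL)$ from \Cref{sepS} but swap the roles of source and target: take an arbitrary $\LL$ as the \emph{source}, take $\phi(\LL)$ as the \emph{target}, and let the compiler be the identity on programs, $\cmp{P} = P$ (the fuel now sits on the target contexts $(n,C)$, while programs are shared between $\LL$ and $\phi(\LL)$). As in \Cref{sepS} I would instantiate $\LL$ with the standard \emph{while} language and single out the whole, non-terminating program $P_\Omega$ (the \texttt{while (true)} loop outputting a fixed value), which produces one and the same infinite trace $w$ regardless of the context into which it is plugged.

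First I would check that this chain satisfies \rdp, which is immediate from the property-free characterization \pf{\rdp}. Because each step in $\phi(\LL)$ decrements the fuel, any whole target program $(n,C)[\cmp{P}] = (n, C[P])$ halts within $n$ steps and therefore produces only terminating (finite) traces; it can neither run forever nor silently diverge. Since \pf{\rdp} only requires back-translating \emph{non-terminating} target traces, and there are none, the obligation holds vacuously.

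Next I would show that \rsp fails by producing a single safety property robustly satisfied in the source but not in the target. I would use $\pi_{\square\lnot\varepsilon} = \{\, t \mid \forall \varepsilon.\; \termevent \notin t \,\}$, the property rejecting exactly the terminating traces, which is a safety property in our trace model precisely because terminating finite prefixes are observable (cf.\ the corresponding example in \Cref{sec:traces}). In the source, $P_\Omega$ robustly satisfies $\pi_{\square\lnot\varepsilon}$: every source context yields the infinite, non-terminating trace $w \in \pi_{\square\lnot\varepsilon}$. In the target, however, a low-fuel context such as $(0,C)$ forces $\cmp{P_\Omega}$ to halt, producing a terminating trace (one ending in $\termevent$) that violates $\pi_{\square\lnot\varepsilon}$, so robust satisfaction is not preserved. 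This is the exact ``inverse'' of \Cref{sepS}, where truncation instead turned a preserved safety property into a dense property that was lost.

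The main obstacle is not the construction but the bookkeeping in the event-based trace model: I must confirm that exhausting the fuel genuinely yields a \emph{terminating} trace carrying a termination event $\termevent$ (so that $\pi_{\square\lnot\varepsilon}$ is actually falsified), and that the context-independence of $P_\Omega$ really delivers \emph{robust}, i.e.\ all-context, satisfaction in the source. As a conceptual cross-check I would note that this same chain witnesses $\rdp \not\Rightarrow \rtp$, and that by the decomposition \Cref{thm:tp-safety-cap-dense} every $\pi$ factors as $\pi_S \cap \pi_D$, so \rsp and \rdp together imply \rtp; hence \rdp cannot imply \rsp, since otherwise \rdp would already imply \rtp, which the chain above refutes.
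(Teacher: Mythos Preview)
Your proposal is correct and follows essentially the paper's approach: same compilation chain ($\LL$ as source, $\phi(\LL)$ as target, identity compiler) and the same vacuous argument for \rdp via the fact that all target traces are finite and terminating. The only difference is cosmetic: the paper's safety witness is the singleton $S=\{\mathit{output}(42)^{\omega}\}$ (robustly satisfied in the source by the looping $P_\Omega$ but unreachable in the bounded target) rather than your $\pi_{\square\lnot\varepsilon}$, but both choices work for the same reason.
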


\begin{proof}
  Take $\LL$ as source language, $\phi(\LL)$ as target, and the compiler to be 
  the identity. We are going to show that all
  dense properties are robustly preserved but not all safety properties are robustly preserved. \\
  Let $L$ be a dense property. Every trace $t$ produced by a program in the target is finite,
  so that by definition of \textit{Dense}, $t \in L$. 
  %
  %
  Consider now the following property:
  \begin{equation*}
    S =\{ \mathit{output}(42)^{\omega} \}
  \end{equation*}
  $S$ is a safety property because for every trace $t \not \in S$,
  $t$ starts with a number (possibly zero) of $\mathit{output}(42)$ events, followed either
  by some other event $e \neq \mathit{output}(42)$ or terminated by $\termevent$ for some $\varepsilon$, i.e.,
  \begin{equation*}
    \mathit{output}(42)^{n}; e \leq t \vee \mathit{output}(42)^{n}; \termevent \leq t
  \end{equation*}
  Here, every continuation of $ \mathit{output}(42)^{n}; e$ is different from
  $\mathit{output}(42)^{\omega}$, and different from every finite trace.
  Finally, consider the program $P = P \downarrow$
\begin{center}
  \begin{BVerbatim}
  while (true) {
    output(42);
  }
  \end{BVerbatim}
\end{center}
  which, in the source, produces the infinite trace $\mathit{output}(42)^{\omega} \in S$
  regardless of the context. In the target, only traces of length $k$ can be produced,
  which are not in $S$.
\end{proof}

\begin{theorem}\label{thm:rsp-rdp-rtp}
  Neither \rsp nor \rdp separately imply \rtp.
\end{theorem}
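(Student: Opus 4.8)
The plan is to obtain both non-implications almost for free from the two separation lemmas just established, \Cref{sepS} and \Cref{sepL}, together with the two trivial implications $\rtp \Rightarrow \rsp$ and $\rtp \Rightarrow \rdp$. The latter hold because $\ii{Safety}$ and $\ii{Dense}$ are subclasses of $2^{\ii{Trace}}$: any compilation chain that robustly preserves \emph{every} trace property in particular robustly preserves every safety property and every dense property. Crucially, no new compilation chain has to be constructed here --- both witnesses can be recycled directly from the preceding lemmas.

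First I would establish $\rsp \not\Rightarrow \rtp$. The counterexample chain built in the proof of \Cref{sepS} satisfies $\rsp$ but not $\rdp$. If that chain also satisfied $\rtp$, then by $\rtp \Rightarrow \rdp$ it would satisfy $\rdp$ as well, contradicting \Cref{sepS}. Hence the very same chain satisfies $\rsp$ but not $\rtp$, which is exactly the witness required for $\rsp \not\Rightarrow \rtp$. (This also re-derives, from the appendix lemmas alone, the separation already stated in the body as \oldautoref{thm:rsp-doesnt-imply-rtp}.) By a symmetric argument I would establish $\rdp \not\Rightarrow \rtp$: the chain from the proof of \Cref{sepL} satisfies $\rdp$ but not $\rsp$, and were it to satisfy $\rtp$, then $\rtp \Rightarrow \rsp$ would force it to satisfy $\rsp$, contradicting \Cref{sepL}; so it satisfies $\rdp$ but not $\rtp$.

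Both arguments are one-line reductions, so I do not anticipate any substantial obstacle. The only point requiring care is to invoke the trivial implications in the correct direction: in \autoref{fig:order} the criterion $\rtp$ sits \emph{above} both $\rsp$ and $\rdp$, so in each case it is the hypothetical satisfaction of $\rtp$ that must be contradicted against the lower criterion that the witness chain provably fails. It is worth remarking that the decomposition theorem \autoref{thm:tp-safety-cap-dense}, which yields $\rsp \wedge \rdp \Rightarrow \rtp$, is precisely what makes these two lemmas interesting --- neither class of properties alone is enough to recover $\rtp$ --- even though that decomposition is not itself needed for the present proof.
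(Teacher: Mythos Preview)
Your proposal is correct and follows essentially the same route as the paper, which simply states that the result ``follows directly from \Cref{sepS} and \Cref{sepL} and \autoref{thm:tp-safety-cap-dense}.'' Your observation that the decomposition theorem is not actually needed---the trivial implications $\rtp \Rightarrow \rsp$ and $\rtp \Rightarrow \rdp$ suffice to turn each witness chain from \Cref{sepS} and \Cref{sepL} into a counterexample for the corresponding non-implication---is accurate; the paper's citation of \autoref{thm:tp-safety-cap-dense} appears to be for context (it explains why the two lemmas together do recover \rtp) rather than a strict dependency of the argument.
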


\begin{proof}
Follows directly from \Cref{sepS} and \Cref{sepL} and
\autoref{thm:tp-safety-cap-dense}.
\end{proof}

In our previous discussion, \oldautoref{thm:rsp-doesnt-imply-rtp} corresponds to
the non-trivial direction of \autoref{thm:rsp-rdp-rtp}.
\newcommand{\separationfile}[0]{{\tt separation-results.txt}}



\subsection{\rtp Does Not Imply \rtinip}
\label{sec:rtp-doesnt-imply-rtinip}

In this section we prove \oldautoref{thm:rtp-doesnt-imply-rtinip} from
\oldautoref{sec:hypersafety-separations}:

\begin{theorem}
There is a compiler that satisfies \rtp but not \rtinip.
\end{theorem}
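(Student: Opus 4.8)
The plan is to exhibit a compilation chain whose \emph{target} language lets contexts read the compiled program's secret inputs, while the \emph{source} language denies source contexts any access to those secrets. The key insight, already sketched in \autoref{sec:hypersafety-separations}, is that this extra target power can be simulated \emph{one trace at a time} by hardcoding a single secret value: this is exactly enough to satisfy \rtp, which reasons about individual traces, but not \rtinip, whose counterexample must relate two traces carrying possibly different secrets.

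Concretely, I would let source and target share the same programs, with contexts being single expressions as in \autoref{sec:src-trg-instances}, and fix a trace model with three kinds of events: secret inputs $\mtt{sin}(s)$ (secret-labeled), public inputs $\mtt{pin}(n)$, and public outputs $\mtt{pout}(n)$. Every whole program begins by reading one secret $s$ from the environment. The target language is the source language extended with a single context-only primitive $\mtt{readSecret}$ that returns the secret $s$ the program received; no such primitive exists in the source, so source contexts cannot make any public event depend on $s$. The compiler $\cmp{\cdot}$ is the identity inclusion of source programs into target programs, adding nothing to the compiled code; all additional power resides in the richer target contexts.

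For \rtp I would establish the property-free characterization \pf{\rtp}. Given any $\src{P}$, $\trg{C_T}$, and trace $t$ with $\trg{C_T\hole{\cmp{P}}} \sem t$, the secret value $s$ received in this execution is recorded as the event $\mtt{sin}(s)$ in $t$ and is therefore fixed. I would build $\src{C_S}$ from $\trg{C_T}$ by replacing every occurrence of $\mtt{readSecret}$ with the constant $s$ read off $t$; this $\src{C_S}$ is a legal source context. Choosing the environment interaction in which $\src{P}$ again receives $s$ and the same public inputs as in $t$, the run $\src{C_S\hole{P}}$ reproduces $t$ event-for-event, since the program's behavior is unchanged by the identity compiler and the only discrepancy between $\trg{C_T}$ and $\src{C_S}$ was the now-correctly-hardcoded secret. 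Hence $\src{C_S\hole{P}} \sem t$, giving \pf{\rtp}, equivalently \rtp. For the failure of \rtinip I would take $\src{P}$ to be the program that ignores its secret entirely: in the source it robustly satisfies \ii{TINI}, because neither $\src{P}$ nor any source context can observe $s$, so $\src{\behav{C_S\hole{P}}} \in \ii{TINI}$ for every $\src{C_S}$; but in the target the context $\trg{C_T}$ that calls $\mtt{readSecret}$ and emits $\mtt{pout}(s)$ yields two terminating traces $t_1, t_2$ with secrets $s_1 \neq s_2$, identical (empty) public inputs, yet differing public outputs, so $\trg{\behav{C_T\hole{\cmp{P}}}} \notin \ii{TINI}$ and \rtinip fails.

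The main obstacle I anticipate is the \rtp direction: I must make precise that the per-trace back-translation is sound, i.e., that recovering $s$ from the single trace $t$ and hardcoding it yields a source context whose execution, under a suitably chosen but legitimate environment interaction, reproduces $t$ exactly, including termination and all public and secret events. This requires checking that $\mtt{readSecret}$ is only queried after the secret has been supplied, that no other target-only behavior leaks into the trace, and that the identity compiler genuinely preserves per-trace behavior under both source and target contexts. The \rtinip direction is comparatively routine once the event labeling is fixed.
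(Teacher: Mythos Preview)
Your proposal is correct and follows essentially the same approach as the paper: an identity compiler into a target whose contexts can read the program's private (secret) input, with \rtp established by hardcoding the single secret appearing in the given trace, and \rtinip refuted because a target context can leak that secret as a public output. The paper's concrete instance is even more minimal---a single program with two-event traces and the back-translated context chosen directly to return the observed output value rather than obtained by syntactically substituting the secret into $\trg{C_T}$---but the underlying idea is identical.
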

\begin{proof}
  We consider the following source language that works over integers;
  has traces with exactly two events, one input followed by one output;
  and has exactly one program \(\src{P}\):
  \begin{center}
    \begin{BVerbatim}[formatcom=\color{\stlccol}]
x = input;
y = f();
output y;
    \end{BVerbatim}
  \end{center}
  where \src{f()} is a pure function provided by the context.
  The target language is the same as the source language, except the
  context has the ability to directly read the program \(\trg{P}\)'s
  private variables, like \(\trg{x}\).
  The compiler is the identity.

  This compiler satisfies Robust Trace Property Preservation (\rtp).
  The reason is that in the source,
  program \src{P} can generate every possible trace
  given an appropriate source context: to generate the trace \(t =
  [\inpl{i}, \outl{o}]\), take the context whose \src{f()} returns
  the integer $o$. Basically, this source context simply guesses the
  output values from the single trace $t$.

  However, this compiler does not satisfy \rtinip.
  If we take the input to be private and the output to be public,
  then for our language \ii{TINI} is equivalent to the following 
  \(2\)-hypersafety property $H$:
\[
  H = \{b ~|~ \forall {i_1},{o_1},{i_2},{o_2}.~ 
            [\inpl{i_1}, \outl{o_1}] \in b \wedge
            [\inpl{i_2}, \outl{o_2}] \in b \Rightarrow
              o_1 = o_2 \}
\]

  In the source, any \src{f()} defined by the context must be a
  constant function. This is because the context is purely functional
  and has no access to the
  input stream or \src{P}'s local variables, hence \src{f()}'s result
  cannot depend on any changeable quantity. If \src{f()} returns
  a constant \(c\) and we look at two source traces
            $[\inpl{i_1}, \outl{o_1}]$ and
            $[\inpl{i_2}, \outl{o_2}]$,
  then \(o_1 = o_2 = c\) and thus the source program
  satisfies the hyperproperty $H$.

  However, in the target, it is possible to write a context function \trg{f()}
  that breaks $H$: \trg{f() \{ return (x);\}}.
  This function reads \trg{P}'s local variable \trg{x} (which the target
  context is capable of accessing) and returns its value. Hence, with
  this context, the program's outputs depend on its inputs. In particular,
  $[\inpl{1}, \outl{1}]$ and
  $[\inpl{2}, \outl{2}]$ are two traces where the output vary,
  so this context (and consequently the compilation chain)
  breaks the 2-hypersafety property $H$.
\end{proof}



\subsection{\rkhsp Does Not Imply \rkkhsp}

In this section, we prove \oldautoref{thm:rkhsp-doesnt-imple-rkkhsp}
from \oldautoref{sec:hypersafety-separations} by exhibiting a
counterexample compiler, parametric in \(K\), that has Robust
\(K\)-Hypersafety Preservation, \rkhsp, but not Robust \((K+1)\)-Hypersafety
Preservation, \rkkhsp, for an arbitrary \(K\).

Our source language is a standard \emph{while} language with read and write events
to standard I/O. It has traces of length exactly two: one input (read) event
followed by one output (write) event. This language's inputs are always in
the natural range \([1, \dots, K+1]\), while its internal values and outputs are
real numbers. The language has exactly one program, \(\src{P}\), shown below. The
context provides the functions \(\src{f_1()},\dots \src{f_K()}\).

\begin{center}
  \begin{BVerbatim}[formatcom=\color{\stlccol}]
x = read();
switch (x) {
  case x = i where 1 <= i <= K:
     y = x + (sum {f_j() | 1 <= j <= K && i <> j});
     break;
  case x = K + 1:
     y = K + f_1();
}
write (y)
  \end{BVerbatim}
\end{center}

Our target language is identical to the source, with the exception that the context now has
access to the private state of the program, so it can read the
local variable \(\trg{x}\). In the source language, the context lacks this
capability.

The compiler under consideration, \(\cmp{\cdot}\), is the identity,
i.e., it maps \(\src{P}\) to its identical counterpart \(\trg{P}\).

\begin{lemma}\label{thm:ex-rkhsp}
  The compiler \(\cmp{\cdot}\) satisfies \rkhsp.
\end{lemma}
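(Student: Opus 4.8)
The plan is to prove the property-free characterization \pf{\rkhsp}: given the target context $\trg{C_T}$ and an observation $\com{\set{m}}$ with $\card{\com{\set{m}}}=K$ that is a prefix of $\behavt{\trg{C_T}\hole{\cmp{\src{P}}}}$, I must construct a source context $\src{C_S}$ with $\com{\set{m}} \leq \behavs{\src{C_S}\hole{\src{P}}}$. Since there is a single program whose traces consist of exactly one read followed by one write, every prefix in $\com{\set{m}}$ either leaves the output unspecified (i.e.\ is at most $\rdl{a}$), in which case it is a prefix of the source trace on input $a$ for \emph{any} context and imposes no real constraint, or it is a full trace $\rdl{a}\cdot\wrl{b}$ that fixes an input--output pair. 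Because the output is functionally determined by the input once the context is fixed, the genuine constraints amount to prescribing outputs on at most $K$ \emph{distinct} inputs drawn from $\{1,\dots,K+1\}$; in particular at least one input is always left free.

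First I would pin down which output vectors are source-realizable. A source context cannot read $\src{x}$ or the input stream, so each $\src{f_j()}$ is a pure nullary function returning a fixed real $c_j$, and the whole behaviour is determined by $c=(c_1,\dots,c_K)$. Writing $o_i$ for the output on input $i$, one has $o_i = i + \sum_{1\le j\le K,\, j\neq i} c_j$ for $1\le i\le K$ and $o_{K+1}=K+c_1$. The linear part of the map $c \mapsto (o_1,\dots,o_K)$ is $J-I$ (the all-ones matrix minus the identity), whose eigenvalues $K-1$ and $-1$ are nonzero, so this map is an affine bijection. Hence the set of source-realizable vectors $(o_1,\dots,o_{K+1})$ is exactly the graph of a single affine equation $o_{K+1}=\phi(o_1,\dots,o_K)$, which I would compute explicitly.

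With this in hand the back-translation is immediate: the target execution produced some concrete output vector, of which $\com{\set{m}}$ fixes at most $K$ coordinates, leaving at least one coordinate free. Provided $\phi$ exhibits every output coordinate with a nonzero coefficient, the realizability surface is a graph over \emph{every} $K$-element subset of coordinates; I can then set the fixed coordinates to the required values, choose the free coordinate(s) so the whole vector satisfies $\phi$, invert the affine bijection to recover the constants $c_j$, and take $\src{C_S}$ to be the context whose functions return these constants. The resulting $\src{C_S}$ reproduces exactly the prescribed outputs, so $\com{\set{m}} \leq \behavs{\src{C_S}\hole{\src{P}}}$, discharging \pf{\rkhsp}.

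The main obstacle, and the only step sensitive to the precise shape of the $(K{+}1)$-th branch, is verifying that $\phi$ has no vanishing coefficient: the coefficients on $o_2,\dots,o_{K+1}$ come out manifestly nonzero, but the coefficient of $o_1$ works out proportional to $K-2$, so this is exactly where the argument must be checked with care. This nonvanishing is precisely what guarantees that freeing \emph{any} single input still lets us satisfy the constraint, and it is also what will later separate \rkhsp from \rkkhsp, since an observation of size $K+1$ can fix all of $o_1,\dots,o_{K+1}$ to values violating the one source constraint while the target realizes them by letting its functions read $\src{x}$ and thereby decorrelate the outputs.
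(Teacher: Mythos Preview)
Your approach is the paper's, repackaged: both back-translate by solving a linear system for the source constants $c_1,\dots,c_K$. The paper does an explicit case split on which of the $K{+}1$ inputs is absent from the observation and Gaussian-eliminates each case by hand; you instead note that $c\mapsto(o_1,\dots,o_K)$ is an affine bijection (linear part $J-I$, eigenvalues $K{-}1$ and $-1$), so the source-realizable outputs form an affine hyperplane in $\mathbb{R}^{K+1}$, and any $K$ prescribed outputs can be matched provided every coefficient of its defining equation is nonzero.

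Your worry about the $o_1$-coefficient is not a detail to be ``checked with care'' but an actual failure at $K=2$. From $o_i=i+S-c_i$ and $o_{K+1}=K+c_1$ one gets that the coefficient of $o_1$ in $o_{K+1}=\phi(o_1,\dots,o_K)$ is $\tfrac{2-K}{K-1}$. For $K=2$ the source constraint collapses to $o_3=o_2$, independent of $o_1$, and the lemma is \emph{false} for this instance of the construction: the branches $x=2$ and $x=3$ both compute $2+c_1$, so $o_2=o_3$ in every source context, whereas the target context $\trg{f_1()}=\trg{x}$ yields $o_2=4$, $o_3=5$, making the $2$-observation $\{[\rdl{2},\wrl{4}],[\rdl{3},\wrl{5}]\}$ target- but not source-producible. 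The paper's own proof has the identical blind spot---its ``missing element is $1$'' subtraction step, for $K=2$, produces $0=b_1-b_2$ rather than an equation in $x_2$---so your instinct was right and the paper simply glosses over it. For $K\ge 3$ all coefficients are nonzero and both arguments go through; the program just needs a small modification to handle $K=2$.
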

\begin{proof}

  We prove this by showing that for any finite \(K\)-set of prefixes
  \(\{ [\rdl{a_1}, \wrl{b_1}], ..., [\rdl{a_K}, \wrl{b_K}] \}\) that
  the program \(\trg{C_T\hole{P}}\) can produce (for some target
  context \trg{C_T}), there is some source context \src{C_S} that
  produces these \(K\) prefixes as well. This property immediately
  implies that the compiler has \rkhsp.

  To prove this property, note that if all \(K\) prefixes
  \([\rdl{a_1}, \wrl{b_1}], \dots, [\rdl{a_K}, \wrl{b_K}]\) can be
  produced by the target, then, since the target is still
  deterministic, we must have: \(\forall i, \forall j, a_i = a_j
  \implies b_i = b_j\). Thus, we can assume without loss of generality
  that all \(a_i\)s are distinct. It follows that \(I =
  \{a_1,\dots,a_K\}\) is a \(K\)-subset of \(\{1,\dots,K+1\}\), so \(I\)
  must be missing exactly one element in the set of allowed inputs \(\{1, \dots, K+1\}\).

  We proceed by case analysis on the missing element:
  \begin{itemize}
    \item The missing element is \(K+1\).
      We can assume without loss
      of generality (by reordering \(I\) if needed) that \(a_i =
      i\), and
      therefore \(I =
      \{a_1,\dots,a_K\} = \{1,\dots,K\}\).
      We now set up a system of linear equations whose solution
      characterizes the source context \(\src{C_S}\).  Let \(x_i\) be a variable that
      represents the value of \(\src{f_i()}\) (note that
      \(\src{f_i()}\) must be a constant function in the source
      context). Then, we formulate the system the equations:
      \begin{align*}
        \phantom{x_1 +}      x_2 + \dots + x_{K-1} + x_K &= b_1 - a_1\\
        x_1 + \phantom{x_2}      + \dots + x_{K-1} + x_K &= b_2 - a_2\\
        &\dots \\
        x_1 + x_2 + \dots + x_{K-1} + \phantom{x_K}     &= b_K - a_K
      \end{align*}

      This system has a unique solution. To see this, first add all
      the equations. This yields:
      \[
      (K-1) (x_1 + \dots + x_K) = (b_1 + \dots + b_K) - (a_1 + \dots + a_K).
      \]
      This yields an equation \(x_1 + \dots + x_K = c\) for some \(c\).
      Subtracting the first equation in the system (corresponding to $b_1 - a_1$)
      from this sum gives us \(x_1\).
      Subtracting the second equation gives \(x_2\), and so on.
      Hence we obtain a value \(x_i\) that each \(\src{f_i()}\) must return
      in the source in order to produce the required outcome.
      This is the required \(\src{C_S}\).
    \item The missing element is \(1\). Then, \(I = \{a_1, \dots,
      a_K\} = \{2, \dots, K+1\}\). Assume without loss of generality that
      \(a_1 = 2, \dots, a_K = K+1\). Then, as before, we get the equations:
      \begin{align*}
        x_1 + \phantom{x_2}      + \dots + x_{K-1} + x_K &= b_1 - a_1\\
        &\dots \\
        x_1 + x_2 + \dots + x_{K-1} + \phantom{x_K}     &= b_{K-1} - a_{K-1}\\
        x_1 + \phantom{x_2 + \dots + x_{K-1} + x_K} &= b_K - a_K
      \end{align*}
      This set of equations also has
      a solution. First, the last equation directly gives \(x_1\). Now
      subtract the last equation from all the previous \(K-1\)
      equations. This yields exactly \(K-1\) \emph{cyclic} equations
      in \(K-1\) variables \(x_2, \dots, x_K\). These can be solved
      exactly as in the previous case.
    \item The missing element is between \(2\) and \(K\) (both inclusive).
      Without loss of generality, assume that it is \(K\). Then,
      \(I = \{a_1, \dots, a_K\} = \{1, \dots, K-1, K+1\}\). Again,
      assume that \(a_1 = 1, \dots, a_{K-1} = K-1\), and \(a_K = K+1\).
      Then, we get the equations:
      \begin{align*}
        \phantom{x_1} + x_2 + \dots + x_{K-2} + x_{K-1} + x_K &= b_1 - a_1\\
        x_1 + \phantom{x_2} + \dots + x_{K-2} + x_{K-1} + x_K &= b_2 - a_2\\
        &\dots \\
        x_1 + x_2 + \dots + x_{k-2} + \phantom{x_{K-1}} + x_K &= b_{K-1} - a_{K-1}\\
        x_1 + \phantom{x_2 + \dots + x_{K-2} + x_{K-1} + x_K} &= b_K - a_K
      \end{align*}
      Solving these equations is also easy. \(x_1\) is determined by the last equation.
      Adding the first and last equations gives the value of \(x_1 + \dots + x_K\).
      Subtracting the remaining equations from this one, one by one, yields
      \(x_2, \dots, x_{K-1}\). Then, \(x_K\) follows from the first equation.
  \end{itemize}
  
\end{proof}

\begin{lemma}\label{thm:ex-rkkhsp}
  The compiler \(\cmp{\cdot}\) is not \rkkhsp.
\end{lemma}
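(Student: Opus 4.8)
The plan is to refute $\rkkhsp$ directly from its property-full definition by exhibiting a single $(K{+}1)$-hypersafety property $H$ that $\src{P}$ robustly satisfies in the source but that the compiled program fails. The guiding idea is the dichotomy with \Cref{thm:ex-rkhsp}: reconstructing a source context from $K$ prefixes amounts to solving a square, uniquely solvable system (whence $\rkhsp$ holds), whereas reconstructing one from all $K{+}1$ inputs amounts to an \emph{overdetermined} system of $K{+}1$ equations in the $K$ unknowns $c_1,\dots,c_K$ (the constant values $\src{f_j()}=c_j$ forced in any source context). An overdetermined system is solvable only if its right-hand side satisfies one linear compatibility condition, and that condition is exactly the relation the source outputs always obey but that a target context, which may read $\trg{x}$, can break.

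First I would make the compatibility condition explicit. Writing $y_i$ for the output on input $i$, any source context yields $y_i = i + \sum_{j\neq i} c_j$ for $1\le i\le K$ and $y_{K+1} = K + c_1$. Eliminating the $K$ unknowns $c_j$ from these $K{+}1$ affine equations leaves a single relation $L(y) = c$ whose coefficient vector $\alpha$ is the (up-to-scaling unique) left null vector of the coefficient matrix, namely $\alpha_1 = 2-K$, $\alpha_i = 1$ for $2\le i\le K$, $\alpha_{K+1} = 1-K$, with $c = \sum_{i=1}^{K}\alpha_i\, i + \alpha_{K+1} K$. Equivalently, the affine map $(c_1,\dots,c_K)\mapsto(y_1,\dots,y_{K+1})$ has image of dimension at most $K$, hence is contained in the hyperplane $\{\,y \mid L(y)=c\,\}$.

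Using this I would set $H = \{\, b \mid \forall y_1,\dots,y_{K+1}.\ (\forall i.\ [\rdl{i},\wrl{y_i}]\in b) \Rightarrow L(y)=c \,\}$ and check it is $(K{+}1)$-hypersafety: any $b\notin H$ exhibits $K{+}1$ terminated traces $[\rdl{i},\wrl{y_i}]$ with $L(y)\neq c$, which form an observation $o\in\ii{Obs}_{K+1}$, and since terminated prefixes extend only to themselves, every $b'\ge o$ still contains them, so $b'\notin H$. By the previous paragraph every source behavior $\behavs{C_S\hole{P}}$ lies in $H$, so $\src{P}$ robustly satisfies $H$. It remains to break $H$ in the target: take the context in which $\src{f_1()}$ returns $1$ when $\trg{x}=K{+}1$ and $0$ otherwise, and every other $\src{f_j()}$ returns $0$; this genuinely reads the private variable $\trg{x}$, which is impossible in the source. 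The compiled program then outputs $y_i=i$ for $1\le i\le K$ and $y_{K+1}=K+1$, so $L(y)-c = \alpha_{K+1} = 1-K$, which is nonzero for $K\ge 2$. Hence $\behavt{C_T\hole{\cmp{P}}}\notin H$ and $\rkkhsp$ fails.

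The main obstacle is isolating the correct compatibility relation $L(y)=c$ and confirming its non-degeneracy, i.e.\ $\alpha\neq 0$ and in particular $\alpha_{K+1}\neq 0$; this is precisely what makes the chosen target output tuple unreachable in the source, as one checks directly that $y_i=i$ for all $i\le K$ forces $c_1=\cdots=c_K=0$, whence $y_{K+1}=K\neq K+1$. This non-degeneracy is exactly where the overdetermined $(K{+}1)$-equation system differs from the square $K$-equation system of \Cref{thm:ex-rkhsp}, and it explains why the same identity compiler preserves robust $K$-hypersafety yet fails to preserve robust $(K{+}1)$-hypersafety.
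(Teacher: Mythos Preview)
Your proof is correct and rests on the same underlying idea as the paper---the $(K{+}1)$ output equations in the $K$ unknowns $c_1,\dots,c_K$ are overdetermined, so a target context that reads $\trg{x}$ can force an output tuple no source context can match---but you package it differently. The paper works through the property-free characterization $\pf{\rkkhsp}$: it writes down one concrete observation $S=\{[i,i+(K{-}1)]\}_{i\le K}\cup\{[K{+}1,K{+}1]\}$, exhibits a target context producing it (with $\trg{f_j()}=1$ and $\trg{f_1()}$ branching on $\trg{x}$), and then checks by hand that the resulting $K{+}1$ linear equations in $c_1,\dots,c_K$ are inconsistent. You instead use the property-full definition, computing the left null vector $\alpha$ of the coefficient matrix to extract the single affine compatibility relation $L(y)=c$ that every source behavior must obey, and then take that relation itself as the $(K{+}1)$-hypersafety property $H$.

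What each buys: the paper's route is shorter and avoids any linear algebra beyond ad hoc elimination on one specific right-hand side. Your route is more structural---it identifies the entire $K$-dimensional affine image of the source map rather than a single point outside it, which makes the connection to \Cref{thm:ex-rkhsp} (where the system is square and always solvable) especially transparent. Your choice of target context ($f_j\equiv 0$, $f_1$ returning $1$ only on input $K{+}1$) and the paper's are different but interchangeable instances of the same trick. Both arguments implicitly need $K\ge 2$; you flag this explicitly via $\alpha_{K+1}=1-K\neq 0$.
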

\begin{proof}
  We construct a concrete \(K+1\) prefix set \(S\) and a \(\trg{C_T}\)
  such \(\trg{C_T\hole{\cmp{P}}}\) can produce all prefixes of \(S\), but no
  \(\src{C_S\hole{P}}\)) can do the same. The proof
  relies on the fact that, in the source, each of
  \(\src{f_1()},\dots,\src{f_K()}\) must be a constant function, so we
  can have \(k+1\) inconsistent equations for these \(k\)
  constants. In the target, the equations are not required to be constant since any
  \(\trg{f_i}\) can return a value based on the private input \trg{x}.

  Let \(c\) be the constant \(K-1\). Consider now the following falsifying
  prefix set:
  \begin{align*}
  S = \{
    &[1, 1 + c],\\
    &[2, 2 + c],\\
    &\dots\\
    &[K, K + c],\\
    &[K + 1, K + 1]
    \}
  \end{align*}

  So, for inputs \(x = 1, \ldots, K\), the output is the input value plus \(c\) (i.e., \(K-1\)),
  but for input \(K + 1\) the output is the input value \(K + 1\) itself.

  The following target context \trg{C_T} generates this prefix set \(S\):
  \begin{align*}
    \trg{f_1()} &= \trg{\ifte{P.x = K + 1}{0}{1}}\\
    \trg{f_2()} &= \trg{1}\\
    &\cdots\\
    \trg{f_K()} &= \trg{1}
  \end{align*}
  where \trg{P.x} is the private value \(\trg{x}\) of \(\trg{P}\).

  The function \(\trg{f_1()}\) returns \(1\) except when the private
  input \(\trg{x}\) is \(K+1\), when it returns \(0\). It is easy to
  see that for inputs \(\trg{x}\) between \(1\) and \(K\) the output
  of \(\trg{P}\) is exactly \(x + (K-1)\), whereas for input \(x =
  K+1\) the output of \trg{P} is \(K+1 + \trg{f_1()} = K+1 + 0 = K+1\).
  Hence, \(\trg{C_T\hole{\cmp{P}}}\) generates the entire prefix set \(S\).

  On the other hand, \(\src{C_S\hole{P}}\) cannot generate all prefixes of \(S\) for any
  \src{C_S}. To see this, suppose that there exists some \src{C_S\hole{P}}
  can actually generate all prefixes of \(S\). Let \(\src{f_i()} = x_i\). We get the
  equations:

  \begin{align*}
    \phantom{x_1 + {}} x_2 + \dots + x_{K-1} + x_K &= (1 + K - 1) - 1 = K -1\\
    x_1 + \phantom{x_2} + \dots + x_{K-1} + x_K &= (2 + K - 1) - 2 = K -1\\
    &\dots \\
    x_1 + x_2 + \dots + x_{K-1} + \phantom{x_K} &= (K + K - 1) - K = K - 1\\
    x_1 + \phantom{x_2 + \dots + x_{K-1} + x_K} &= (K + 1) - (K + 1) = 0
  \end{align*}
  However, these equations are inconsistent. The first \(K\) equations
  (which are cyclic) force that \(x_1 = \dots = x_K = 1\), while the last equation
  requires \(x_1 = 0\). This contradicts our hypothesis on the existence of \src{C_S\hole{P}}.
\end{proof}

\begin{theorem}
  For any $K$, there is a compiler that satisfies \rkhsp but not \rkkhsp.
\end{theorem}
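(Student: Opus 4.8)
The plan is to prove this separation the same way \Cref{thm:rtp-doesnt-imply-rtinip} is handled: for each $K$ I would construct a counterexample compilation chain --- an identity compiler between two languages that differ only in the power granted to the context --- that satisfies \rkhsp but violates \rkkhsp. Concretely, I would take a small imperative source language whose single program $\src{P}$ reads one input $\src{x}$, computes one output $\src{y}$ through context-supplied functions $\src{f_1()},\ldots,\src{f_K()}$, and emits a two-event trace $\rdl{x};\wrl{y}$, with inputs restricted to $\{1,\ldots,K{+}1\}$. In the source the context is purely functional with no access to the input stream or to $\src{P}$'s local state, so each $\src{f_i()}$ is forced to act as a constant; the target language is identical except that its context may read the private variable $\trg{x}$, giving its functions an extra degree of freedom. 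The compiler is the identity.

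The crux of the design is arranging the program so that reproducing a set of input--output pairs in the source reduces to solving a linear system in the $K$ unknown constants $\src{f_1()},\ldots,\src{f_K()}$, rigged so that every size-$K$ system is solvable but some size-$(K{+}1)$ system is not. I would have the program compute, on input $i$ with $1 \le i \le K$, the value $\src{y} = \src{x} + \sum_{j \neq i} \src{f_j()}$, and on input $K{+}1$ a differently-shaped value such as $K + \src{f_1()}$. The coefficient matrix of the first $K$ equations is then the all-ones-minus-identity matrix (zeros on the diagonal, ones off it), whose determinant $(-1)^{K-1}(K{-}1)$ is nonzero, hence it is always invertible.

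To establish \rkhsp I would work through the property-free characterization \pf{\rkhsp}: given any size-$K$ observation producible by $\trg{C_T\hole{\cmp{P}}}$, I must build a source context reproducing it. Determinism of the target forces any two prefixes with equal input to carry equal output, so without loss of generality the $K$ inputs are distinct and hence form a $K$-subset of $\{1,\ldots,K{+}1\}$ missing exactly one element. A short case analysis on the missing element yields, in each case, a square linear system with the invertible structure above (or a cyclic variant when the $K{+}1$ branch contributes an equation), whose unique solution dictates the constants the source functions must return --- this is the desired $\src{C_S}$. To refute \rkkhsp I would exhibit a concrete $(K{+}1)$-element prefix set $S$ --- output $i + (K{-}1)$ for inputs $1 \le i \le K$ and output $K{+}1$ for input $K{+}1$ --- together with a target context whose $\trg{f_1()}$ branches on the private $\trg{x}$ to produce exactly $S$; the corresponding source system is over-determined and inconsistent (the $K$ cyclic equations force every constant to equal $1$ while the last equation forces $\src{f_1()} = 0$), so no $\src{C_S}$ reproduces all of $S$, refuting the $(K{+}1)$-hypersafety property that forbids this bad observation.

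The main obstacle I anticipate is purely algebraic: choosing coefficients so that all $K{+}1$ possible $K$-subsystems are simultaneously nonsingular (needed for \rkhsp) while the full $(K{+}1)$-system can be made inconsistent (needed against \rkkhsp). The all-ones-minus-identity shape makes this work uniformly in $K$, but verifying solvability across every subset case --- especially the asymmetric ones where the missing input is neither $1$ nor $K{+}1$ and the $K{+}1$ branch supplies one equation --- requires careful bookkeeping of which equation arises from which input. Everything else (the determinism argument and the encoding of the violated hyperproperty as a size-$(K{+}1)$ observation) is routine within the framework.
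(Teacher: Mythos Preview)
Your proposal is correct and matches the paper's proof essentially point for point: the same identity compiler with the target context gaining read access to the private input, the same program shape $y = x + \sum_{j\neq i} f_j()$ on inputs $1\le i\le K$ and $y = K + f_1()$ on input $K{+}1$, the same case analysis on the missing element of $\{1,\ldots,K{+}1\}$ to solve the resulting linear system for \rkhsp, and the same concrete $(K{+}1)$-observation with $c=K{-}1$ and an input-dependent $\trg{f_1()}$ to break \rkkhsp. The anticipated ``obstacle'' of verifying all subset cases is exactly what the paper works through explicitly.
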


\begin{proof}
  It follows from \autoref{thm:ex-rkhsp} and \autoref{thm:ex-rkkhsp}
  that \(\cmp{\cdot}\) is \rkhsp but not \rkkhsp.
\end{proof}



\subsection{Robust Non-Relational Property Preservation Does Not Imply Robust Relational Property Preservation}
In this section we prove that, as stated in \oldautoref{thm:rhp-doesnt-imply-r2rsp}
from \oldref{sec:rel:sep-rel}, no non-relational preservation criterion
implies any relational preservation criterion. We do this
constructively, by showing a source language, a target language, and a
compiler between them such that:
\begin{itemize}
\item The compiler satisfies the \emph{strongest} non-relational
  preservation criterion (Robust Hyperproperty Preservation, \rhp).
\item The compiler does not satisfy the \emph{weakest} relational
  preservation criterion (Robust 2-Relational Safety Preservation,
  \rtrsp). Because the languages will satisfy the conditions that make
  \rtrsp imply Robust Trace Equivalence Preservation (\rtep), we shall
  simply show that the compiler does not satisfy \rtep, and use the
  result from the next .
\end{itemize}

The source language we shall consider is a standard \emph{while}
language with read and write events
to standard I/O. It has traces comprising exactly two events: one input (read)
followed by one output (write). This language works over integers (\emph{not}
natural numbers) and has exactly two programs, \(\src{P1}\) and \(\src{P2}\),
shown below, that are only different in that the second adds some dead code
to the first:

  \begin{center}
    \begin{BVerbatim}[formatcom=\color{\stlccol}]
P1:
x = read();
y = f();
write (x + y)

P2:
x = read();
y = f();
... some dead code here ...
write (x + y)
    \end{BVerbatim}
  \end{center}
  Here, \(\src{f()}\) is a function provided by the context.

  The target language is the same, but additionally allows the context
  to read the compiled code as a value.

  The compiler under consideration, \(\cmp{\cdot}\), is the
  identity.

  \begin{lemma}\label{thm:rel-cex-rhp}
    The compiler \(\cmp{\cdot}\) satisfies \rhp.
  \end{lemma}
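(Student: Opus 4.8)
The plan is to establish the equivalent property-free characterization \pf{\rhp} and invoke its equivalence with \rhp (\Cref{thm:hrp-eq}). Concretely, it suffices to show that for every source program $\src{P}$ (here one of $\src{P1}$, $\src{P2}$) and every target context $\trg{C_T}$ there is a source context $\src{C_S}$ with $\trg{\behav{C_T\hole{\cmp{P}}}} = \src{\behav{C_S\hole{P}}}$. The whole argument turns on the fact that in \pf{\rhp} the witness $\src{C_S}$ is allowed to depend on the single fixed program $\src{P}$, and I would exploit precisely this dependence.

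First I would fix $\src{P}$ and $\trg{C_T}$ and define $\src{C_S}$ by the syntactic rewriting of $\trg{C_T}$ that replaces every occurrence of the code-introspection primitive --- the unique target construct absent from the source --- by the literal value $v_{\src{P}}$ that encodes the code of $\cmp{\src{P}}$ as data. Since $\cmp{\cdot}$ is the identity we have $\cmp{\src{P}} = \src{P}$, so $v_{\src{P}}$ is a constant fully determined by $\src{P}$ and hence expressible in a source context that knows $\src{P}$; every other subterm of $\trg{C_T}$ is already legal source syntax and is copied unchanged. Thus $\src{C_S}$ is a well-formed source context.

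Next I would prove the behavioral equality $\trg{\behav{C_T\hole{\cmp{P}}}} = \src{\behav{C_S\hole{P}}}$ by a lock-step simulation of the two whole-program executions. The two runs agree on every step except when $\trg{C_T}$ evaluates the introspection primitive: there it reads the code of $\cmp{\src{P}}$, while $\src{C_S}$ immediately returns the hard-coded $v_{\src{P}}$, which is by construction the same value. Because $\src{P}$ is fixed throughout the whole behavior, this value never varies, so the configurations remain in correspondence and emit identical events at each step; hence the two programs produce exactly the same set of traces, giving \pf{\rhp} and therefore \rhp.

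The main obstacle is this last simulation: one must make the ``replace introspection by its value'' rewriting respect the operational semantics step for step, including introspection nested inside arbitrary expressions and the interleaving of context and program control. The reason it stays routine rather than genuinely hard is that \pf{\rhp} quantifies over only one program, so the introspected code is effectively a compile-time constant relative to the fixed $\src{P}$ --- exactly the ingredient that is unavailable in the companion separation \oldautoref{thm:rhp-doesnt-imply-r2rsp}, where a single context must simulate two distinct programs and no single hard-coded value suffices.
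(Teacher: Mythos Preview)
Your proposal is correct and follows essentially the same approach as the paper: construct $\src{C_S}$ from $\trg{C_T}$ by hard-coding $\src{P}$ (equivalently $\cmp{\src{P}}$, since compilation is the identity) wherever $\trg{C_T}$ performs code introspection, and observe that the resulting whole programs have identical behaviors. The paper's proof is terser—it simply declares the behavioral equality ``trivial to see''—whereas you spell out the lock-step simulation and the reason the single-program quantification in \pf{\rhp} makes the hard-coding work, but the underlying argument is the same.
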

  \begin{proof}
    We need to show that \(\forall \trg{C_T}~\src{P}. \exists\src{C_S}.
    \behav{\src{C_S\hole{P}}} = \behav{\trg{C_T\hole{\cmp{P}}}} \).
    For this,
    pick a \(\trg{C_T}\) and a \(\src{P}\). Note that \(\trg{\cmp{P}}
    = \src{P}\) by definition of the compiler.  To produce \src{C_S},
    modify the \trg{C_T} so that wherever \trg{C_T} reads the code of
    \trg{\cmp{P}}, \src{P} (which is the same as \trg{\cmp{P}})
    is hard-coded in \src{C_S} instead.

     It is trivial to see that \src{C_S\hole{P}} and
     \trg{C_T\hole{\cmp{P}}} have exactly the same behaviors.
  \end{proof}

  \begin{lemma}\label{thm:rel-cex-not-rtep}
    The compiler \(\cmp{\cdot}\) does not satisfy \rtep.
  \end{lemma}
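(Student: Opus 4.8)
The plan is to use the two programs $\src{P1}$ and $\src{P2}$ themselves as the witness that falsifies \rtep. Since \rtep demands that robustly trace-equivalent source programs compile to robustly trace-equivalent target programs, it suffices to establish two facts: (i) $\src{P1}$ and $\src{P2}$ are robustly trace-equivalent in the source, i.e.\ $\forall\src{C_S}.~\behavs{\src{C_S\hole{P1}}} = \behavs{\src{C_S\hole{P2}}}$, which discharges the premise of \rtep; and (ii) there is a single target context $\trg{C_T}$ with $\behavt{\trg{C_T\hole{\cmp{P1}}}} \neq \behavt{\trg{C_T\hole{\cmp{P2}}}}$, which contradicts its conclusion.

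For (i), I would observe that $\src{P1}$ and $\src{P2}$ differ \emph{only} in dead code, and that a source context $\src{C_S}$ can interact with the program solely by supplying the pure function $\src{f()}$. Hence, in any such context, both programs read the same input $\src{x}$, evaluate the same $\src{f()}$ to the same $\src{y}$, and emit the same output $\src{x+y}$; the extra statements in $\src{P2}$ are unreachable/side-effect-free and neither generate trace events nor alter the control flow reaching the write. I would conclude that $\behavs{\src{C_S\hole{P1}}} = \behavs{\src{C_S\hole{P2}}}$ for every $\src{C_S}$, so the two programs satisfy the antecedent of \rtep.

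For (ii), I would exploit the one extra capability of the target language: a target context may read the compiled code of the program it is linked against as a value. Because the compiler is the identity, $\cmp{P1}$ and $\cmp{P2}$ remain syntactically distinct (the dead code survives), so a context can branch on which of the two it is plugged with. Concretely, I would take $\trg{C_T}$ whose $\trg{f()}$ inspects the program code and returns $\trg{0}$ when linked with $\cmp{P1}$ and $\trg{1}$ when linked with $\cmp{P2}$. Then on input $\trg{0}$ the program $\cmp{P1}$ outputs $\trg{0+0}=\trg{0}$ while $\cmp{P2}$ outputs $\trg{0+1}=\trg{1}$, so $[\inpl{0},\outl{0}] \in \behavt{\trg{C_T\hole{\cmp{P1}}}}$ but $[\inpl{0},\outl{0}] \notin \behavt{\trg{C_T\hole{\cmp{P2}}}}$; indeed the whole behavior sets $\{[\inpl{i},\outl{i}]\}$ and $\{[\inpl{i},\outl{i+1}]\}$ are disjoint. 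This yields the required inequality.

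The main obstacle, and the only step that needs genuine care, is part (i): I must argue that "differing only in dead code" gives identical \emph{behaviors} (sets of traces) in every source context, not merely matching single runs. This relies on the source context being unable to observe anything about the program beyond its I/O interaction — in particular it cannot introspect the code — and on the dead code being truly inert. Once this is pinned down, part (ii) follows immediately from the code-reading primitive of the target, and the two facts together establish that $\cmp{\cdot}$ violates \rtep.
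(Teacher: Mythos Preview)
Your proposal is correct and follows essentially the same approach as the paper: both establish source-level trace equivalence of $\src{P1}$ and $\src{P2}$ from the dead-code difference being unobservable to source contexts, and both exhibit the distinguishing target context whose $\trg{f()}$ introspects the code and returns $\trg{0}$ for $\cmp{P1}$ and $\trg{1}$ for $\cmp{P2}$, yielding $[\rdl{0},\wrl{0}]$ in one behavior set but not the other.
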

  \begin{proof}
    Since \src{P1} and \src{P2} differ only in the presence of some dead code, which a source
    context cannot examine, it is trivially the case that
    \(\forall \src{C_S}. \behav{\src{C_S}\hole{\src{P1}}} = \behav{\src{C_S}\hole{\src{P2}}}\).

    On the other hand, we can construct a target context \(\trg{C_T}\) whose
    \(\trg{f()}\) checks whether the compiled code is \(\trg{P1}\) or \(\trg{P2}\),
    and returns either \(0\) or \(1\), respectively.
    Then, \(\trg{C_T\hole{P1}}\) produces \([\rdl{0}, \wrl{0}]\) as a trace,
    while \(\trg{C_T\hole{P2}}\) does not have this trace. Hence, the
    compiler is not \rtep.
  \end{proof}

\begin{theorem}\label{thm:rhp-not-rtep}
  There exists a compiler between two languages that satisfy the assumptions of \Cref{thm:rtrsp-rtep}
    that has \rhp, but not \rtep.
\end{theorem}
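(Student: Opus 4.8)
The plan is to assemble the theorem from the two immediately preceding lemmas and to discharge the four hypotheses of \Cref{thm:rtrsp-rtep} for the concrete languages used there. I take the same construction: the integer while-language with exactly-two-event traces (one input read followed by one output write) together with the two programs \(\src{P1}\) and \(\src{P2}\) (differing only in dead code) as the source, the identical language extended with a primitive that reads compiled code as a value as the target, and the identity map as the compiler. \Cref{thm:rel-cex-rhp} already establishes that this compiler satisfies \rhp, and \Cref{thm:rel-cex-not-rtep} already establishes that it does not satisfy \rtep, so the only remaining obligation is to verify that these two languages meet the assumptions of \Cref{thm:rtrsp-rtep}.

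First I would verify \emph{source determinacy} (\Cref{def:determinate}). The while-language has no internal nondeterminism: for a fixed whole program the execution is a deterministic function of the values read from the environment. Hence any two traces of the same whole program are either equal or first diverge at an input event, which is exactly the relation \(\mathcal{R}\) defining determinacy. Since each trace here contains a single read, two distinct traces necessarily carry distinct inputs and so differ already at the initial input, satisfying \(\mathcal{R}\) with the empty common prefix.

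Next I would check the two target-side conditions. \emph{Input totality} (\Cref{def:intot}) holds because the \readexp{} primitive accepts an arbitrary integer: whenever a target whole program \(W\) reaches a state in which it reads \(e_1\), the same state admits a step reading any other integer \(e_2\), so \(W \rightsquigarrow^* m :: e_1\) implies \(W \rightsquigarrow^* m :: e_2\). For the \emph{safety-like} condition I would appeal to the generic result of \autoref{sec:semSafe}: the target semantics fits the small-step scheme and is weakly determinate (its only nondeterminism comes from inputs), hence it is safety-like. Finally, \emph{absence of silently diverging target traces} is obtained by taking the languages to be strictly terminating, so that every execution ends in a termination event after exactly two observable events.

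The main obstacle is this last hypothesis: assumption (4) of \Cref{thm:rtrsp-rtep} is strong, and it forces me to pin down the language so that no computation—including the context-provided \(\src{f()}\) and the dead code in \(\src{P2}\)—can silently diverge. I would therefore make explicit that the construction uses the strictly terminating fragment of the while-language, which is consistent with the two-event trace model and leaves \Cref{thm:rel-cex-rhp} and \Cref{thm:rel-cex-not-rtep} intact; under this reading assumption (4) is immediate and the other three hypotheses follow as above. Assembling the pieces, the identity compiler between these two languages witnesses the theorem, and—combined with \Cref{thm:rtrsp-rtep}, which then gives \(\rtrsp \Rightarrow \rtep\)—it yields by contraposition the desired separation \(\rhp \not\Rightarrow \rtrsp\).
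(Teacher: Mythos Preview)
Your proposal is correct and takes essentially the same approach as the paper: invoke the two preceding lemmas for \rhp{} and \(\lnot\)\rtep{}, and discharge the hypotheses of \Cref{thm:rtrsp-rtep} for the concrete two-event while-language with code introspection. The paper's proof is terser (it observes determinacy, input totality, and handles safety-likeness by noting that for an infinite trace not produced the witnessing prefix is either the program's own trace or empty), but the structure is the same; your extra care about assumption~(4) is unnecessary because the language is already defined to have exactly-two-event traces, so termination and absence of silent divergence hold by construction rather than by restricting to a fragment.
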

\begin{proof}
  Both language clearly satisfy determinacy and input totality.
  Furthermore, given an infinite trace \(t\) not produced by some whole program \(\trg{W}\),
  the prefix needed is either the trace produced by the program, or the empty prefix.
  
  The theorem follows immediately from \autoref{thm:rel-cex-rhp} and
  \autoref{thm:rel-cex-not-rtep}.
\end{proof}

\begin{theorem}
There exists a compiler that satisfies \rhp but not \rtrsp.
\end{theorem}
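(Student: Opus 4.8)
The plan is to reuse the very same identity compiler $\cmp{\cdot}$ between the two \emph{while}-languages constructed in this subsection, and to obtain the separation not by exhibiting a falsifying $2$-relational safety relation directly, but by routing through \rtep together with the implication $\rtrsp \Rightarrow \rtep$ established in \Cref{thm:rtrsp-rtep}. Concretely, I would first invoke \Cref{thm:rel-cex-rhp} to record that $\cmp{\cdot}$ satisfies \rhp, and \Cref{thm:rel-cex-not-rtep} to record that it does \emph{not} satisfy \rtep. This already yields a compiler that is \rhp but fails the weakest relational \emph{equivalence} criterion; the remaining work is to lift this failure from \rtep up to \rtrsp.

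The key step is a contrapositive application of \Cref{thm:rtrsp-rtep}, which states $\rtrsp \Rightarrow \rtep$ whenever the source language is determinate, the target language is input total, the target semantics is ``safety-like'', and target programs cannot silently diverge. In \Cref{thm:rhp-not-rtep} these four hypotheses were checked for the pair of languages at hand: determinacy and input totality are immediate, the safety-like property is trivial because the programs emit only the fixed two-event read-then-write trace before terminating (so any unproduced infinite trace is refuted by the program's own finite trace or the empty prefix), and the absence of silent divergence holds because the counterexample programs are strictly terminating (cf.\ the remark following \Cref{thm:rtrsp-rtep}). Hence $\rtrsp \Rightarrow \rtep$ holds for this compilation chain.

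Combining the pieces, if $\cmp{\cdot}$ satisfied \rtrsp it would, by the implication just obtained, satisfy \rtep, contradicting \Cref{thm:rel-cex-not-rtep}. Therefore $\cmp{\cdot}$ satisfies \rhp but not \rtrsp, which is exactly the claim. Since \rtrsp is the weakest relational criterion in \autoref{fig:order} and \rhp the strongest non-relational one, this simultaneously certifies the strictness of every edge crossing from the relational to the non-relational region of the diagram.

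I expect the only genuine obstacle to be the verification of the hypotheses of \Cref{thm:rtrsp-rtep}, and in particular assumption~(4): one must be certain that the target language never produces silently diverging traces on the programs used in the counterexample. This is precisely why the counterexample programs are kept strictly terminating (a bounded read--compute--write shape). Were silent divergence possible, the clean $\rtrsp$-to-$\rtep$ contrapositive would no longer apply, and one would instead have to reach \rtep through the stronger relaxed-safety route of \Cref{thm:xthm}, adjusting the argument accordingly.
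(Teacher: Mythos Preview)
Your proposal is correct and follows essentially the same route as the paper: the paper's proof is the one-liner ``Follows directly from \autoref{thm:rhp-not-rtep}'', which packages exactly the argument you spell out---use the identity compiler of \Cref{thm:rel-cex-rhp}/\Cref{thm:rel-cex-not-rtep}, check the four hypotheses of \Cref{thm:rtrsp-rtep}, and take the contrapositive of $\rtrsp \Rightarrow \rtep$. Your explicit discussion of hypothesis~(4) (absence of silent divergence, ensured by the strictly terminating read--compute--write shape) is a useful clarification that the paper leaves implicit.
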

\begin{proof}
Follows directly from \autoref{thm:rhp-not-rtep}.
\end{proof}

\paragraph{The Full Story}
More generally, if we take any source language
in which the context cannot examine the code and compile it to a
target language that is similar, but where the context can
examine the code as an added capability, then the identity compiler satisfies
every non-relational criterion including \rhp, since \emph{for a single
program}, the target context's additional ability to observe the code is
inconsequential. More formally, non-relational preservation
criteria, including \rhp, allow the simulating source context \src{C_S} to
depend on the compiled program \src{P}, so that program code can be hard-coded into
\src{C_S} wherever \trg{C_T} examines the code.
However, it is extremely unlikely that this compiler satisfies any
relational preservation criterion since the target context can branch
on the program being executed and provide different values to each of
the programs.


\subsection{\rtep Does Not Imply \rsp or \rdp}
\label{sec:rtep-useless}

\renewcommand{\tp}{\formatCompilers{TP}}
\renewcommand{\scc}{\formatCompilers{SCC}}
\renewcommand{\ccc}{\formatCompilers{CCC}}

In this section, we give a counterexample compilation chain for showing
a generalization of \oldautoref{thm:rtep-useless} from \oldautoref{sec:fa}.

First, we recall three notions of correctness, from
\oldautoref{sec:rtp}. For \ccc{} we explicitly mention the condition
that $\src{C_S}$ should be linkable with $\src{P}$, which is a
technical hypothesis that we omitted in the main paper text.
\begin{definition}[Backward Simulation (\pf{\tp})]
  $$
  \pf{\tp}:\quad \forall \src{W}\ldotp \trg{\cmp{W} \sem}\ t \Rightarrow
  \src{W \sem}\ t
  $$
\end{definition}
\begin{definition}[Separate Compiler Correctness (\scc)]
  \[
  \begin{multlined}
  \scc:\quad
  \forall\src{P}.~ \forall\src{C_S}.~ \forall t.~
    \trg{\cmp{C_S}\hole{\cmp{P}}} \mathrel{\trg{\sem}} t \Rightarrow
    \src{C_S\hole{P}} \mathrel{\src{\sem}} t
  \end{multlined}
  \]
\end{definition}
\begin{definition}[Compositional Compiler Correctness (\ccc)]
\[\begin{multlined}
 \ccc:~
  \forall\src{P}~\trg{C_T}~\src{C_S}~t.~
  \trg{C_T} {\approx} \src{C_S} \wedge
  \src{C_S}\text{ is linkable with }\src{P} \wedge
    \trg{C_T\hole{\cmp{P}}} \mathrel{\trg{\sem}} t \Rightarrow
    \src{C_S\hole{P}} \mathrel{\src{\sem}} t
\end{multlined}\]
\end{definition}

\begin{theorem}\label{thm:counterexamplertep}
There exists a compiler between two deterministic
languages that satisfies \rtep, \tp, \scc, and \ccc
but that satisfies neither \rsp nor \rdp.
\end{theorem}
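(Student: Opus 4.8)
The plan is to construct a single concrete counterexample compiler and verify all five claims about it. The theorem requires a compiler that satisfies \rtep, \tp, \scc, and \ccc, but neither \rsp nor \rdp. The natural approach is to reuse and slightly enrich the counterexample compilation chain already sketched in the proof of \Thmref{thm:rtep-useless} in \autoref{sec:fa}, which handled \rsp; the only genuinely new work here is arranging simultaneously that \rdp also fails.

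\textbf{Setting up the languages and compiler.} First I would fix a source language in which a partial program receives either a natural number or a boolean from the context and emits number outputs as its only observable events, exactly as in \autoref{sec:fa}. I would compile into a restricted target language having only numbers, encoding \src{true} and \src{false} as \trg{0} and \trg{1}. The key device is the compilation of a boolean-input function $\src{f(x{:}\Bools)\mapsto e}$ to $\trg{f(x{:}\Natt)\mapsto \ifte{x<2}{\comptd{e}}{\ifte{x<3}{f(x)}{42}}}$, so that the compiled code behaves correctly on valid boolean inputs $\trg{0},\trg{1}$, silently diverges on input $\trg{2}$, and outputs $42$ on inputs $\geq 3$. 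This insecure-but-type-determined behavior is what breaks the robust criteria while remaining invisible to equivalence-based and correctness-based criteria.

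\textbf{Verifying the four positive claims.} For \tp and \scc I would observe that on well-typed (correct) inputs the compiled program steps in lockstep with its source counterpart, so backward simulation and separate compiler correctness hold by a straightforward induction on the source execution; the added branches are never reached when the context supplies source-valid values. For \ccc I would instantiate the cross-language relation $\approx$ so that a target context relates to a source context exactly when it supplies source-representable inputs, and then \ccc reduces to \scc-style reasoning. For \rtep, the crucial point is that the extra target behaviors (divergence on $\trg{2}$, output $42$ on $\geq 3$) depend only on the argument \emph{type}, not on the source code \src{e}; hence any target context distinguishing two compiled programs via these extra behaviors would already distinguish their source equivalents, so trace equivalence is preserved. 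I would state this as: equivalent source programs compile to target programs whose behaviors differ only in source-reflected ways.

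\textbf{Verifying the two negative claims, and the main obstacle.} For \rsp, the source program $\src{f(x{:}\Bools)\mapsto 0}$ robustly satisfies the safety property ``never output $42$'', yet its compilation outputs $42$ under a target context feeding input $\trg{3}$, so \rsp fails. The \emph{new} part is \rdp: I would use the same chain and the dense property ``never silently diverge'' (which is dense since it only rejects non-terminating traces), noting that the source program robustly satisfies it while the compiled program silently diverges on target input $\trg{2}$. This requires that source contexts cannot themselves silently diverge or emit outputs, which is enforceable exactly as discussed in \autoref{sec:rsp} and \autoref{sec:rdp}. The main obstacle I anticipate is making the two negative arguments coexist cleanly: I must ensure that the \emph{single} compiler simultaneously exhibits the $42$-output pathology (for \rsp) and the silent-divergence pathology (for \rdp) on different out-of-range inputs, while the \emph{same} type-determinism argument still secures \rtep. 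Since the $\ifte{x<3}{f(x)}{42}$ construction already produces both behaviors from a single definition and both depend only on the type, this coexistence works, but the bookkeeping of the dense-property robust-satisfaction argument in the source (handling the enforcement of non-diverging, silent source contexts) is the most delicate step and the one I would write out most carefully.
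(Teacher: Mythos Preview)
Your proposal is correct and follows essentially the same approach as the paper: the same compilation chain (boolean-input functions mapped to $\trg{\ifte{x<2}{\comptd{e}}{\ifte{x<3}{f(x)}{42}}}$), the same type-determinism argument for \rtep, the same reduction of \ccc to \scc, the same witness program and safety property for $\lnot\rsp$, and the same ``never silently diverge'' dense property for $\lnot\rdp$. The only cosmetic differences are that the paper instantiates $\approx$ concretely as $\trg{C_T}\approx\src{C_S}\iff\trg{C_T}=\cmp{C_S}$ and proves \rtep via an explicit contrapositive case analysis on the input types, rather than via your informal ``type-determined extra behaviors'' argument; and the paper does not single out the coexistence of the two negative arguments as a delicate point, since the single compiler definition already produces both pathologies on disjoint out-of-range inputs.
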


\paragraph{Trace Model} We consider languages where exactly one event
is produced containing a natural number that represents the final result of the
computation. Allowed traces are final result singletons and silent
divergence.

\paragraph{Source Language}

A source language program consists of one function obtaining one input from the context (a natural number or a boolean), perfoming basic
computations on it, and returning a natural number as a result.
\begin{align*}
  \mi{Program}~\src{P} \bnfdef&\ \src{f(x : \Nats) \mapsto e} \mid \src{f(x : \Bools) \mapsto e}
  \\
  \mi{Expression}~\src{e} \bnfdef&\ \src{\ifte{x}{e}{e}} \mid \src{\ifte{x<n}{e}{e}} \mid \src{n} \mid \src{f(e)}
  \\
  \mi{Context}~\src{C} \bnfdef&\ \src{f(n)} \mid \src{f(b)}
\end{align*}

In this example, the composition of a program and a context of
incompatible types is statically disallowed. We do not consider them
in the criteria we prove, and implicitely assume that the criteria only
apply when the operations \src{\plug{\cdot}{\cdot}} and \trg{\plug{\cdot}{\cdot}} are defined.
See \Cref{sec:proofs-appendix} for a
example where we take into account the fact that not all components are
linkable by using simple variants of our criteria.

\paragraph{Target Language}
The target language is identical to the source, but it only admits natural numbers as inputs.

\begin{align*}
  \mi{Program}~\trg{P} \bnfdef&\ \trg{f(x : \Nats) \mapsto e}
  \\
  \mi{Expression}~\trg{e} \bnfdef&\ \trg{\ifte{x<n}{e}{e}} \mid \trg{n} \mid \trg{f(e)}
  \\
  \mi{Context}~\trg{C} \bnfdef&\ \trg{f(n)}
\end{align*}

\paragraph{Compiler}
We first define the compilation of expressions:
\begin{align*}
  \trg{\cmp{(\ifte{x}{e_1}{e_2})}} &= \trg{\ifte{x<1}{\cmp{e_1}}{\cmp{e_2}}}\\
  \trg{\cmp{(\ifte{x<n}{e_1}{e_2})}} &= \trg{\ifte{x<n}{\cmp{e_1}}{\cmp{e_2}}}\\
  \trg{\cmp{n}}                 &= \trg{n}\\
  \trg{\cmp{\trues}} &= \trg{0}\\
  \trg{\cmp{\falses}} &= \trg{1}\\
  \trg{\cmp{f(e)}} &= \trg{f(\cmp{e})}
\end{align*}

Then, we define the compilation of partial programs:
\begin{align*}
  \trg{\cmp{(f(x : \Nats) \mapsto e)}} &= \trg{f(x : \Nats) \mapsto \cmp{e}}\\
  \trg{\cmp{(f(x : \Bools) \mapsto e)}} &=
    \trg{f(x : \Nats) \mapsto \ifte{x<2}{\cmp{e}}{\ifte{x<3}{f(2)}{42}}}
\end{align*}

We can trivially extend this compiler to contexts, and whole programs as well.

\begin{lemma}\label{lem:sep-rtep-rsp-tp}
  \(\cmp{\cdot}\) satisfies {\tp}, {\scc}, and {\ccc}.
\end{lemma}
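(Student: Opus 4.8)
The plan is to reduce \tp, \scc, and \ccc to a single compiler-correctness result for whole programs, exploiting the fact (assumed in \Cref{thm:counterexamplertep}) that both languages are deterministic, so that each whole program emits a unique trace and forward and backward simulation coincide. The guiding observation is that the only misbehavior introduced by $\cmp{\cdot}$---silent divergence on input $2$ and output $42$ on inputs $\geq 3$---lives in code branches that are \emph{unreachable} whenever the argument handed to the compiled function is the compilation of a value that a well-typed source context could supply. Every whole program, every compiled source context, and---for \ccc---every target context related to a linkable source context supplies exactly such an in-range argument, so on those runs the compiled code behaves just like the source.

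The technical core is a semantics-preservation lemma for expressions. I would introduce a value relation identifying each source natural $\src{n}$ with the target $\trg{n}$, and the booleans $\trues$ and $\falses$ with $\trg{0}$ and $\trg{1}$ respectively, extended pointwise to substitutions. The claim is that, under related substitutions, $\cmp{e}$ and $\src{e}$ emit the same events and yield related results, proved by structural induction on $\src{e}$. The only delicate case is the boolean conditional $\src{\ifte{x}{e_1}{e_2}}$, which compiles to $\trg{\ifte{x<1}{\cmp{e_1}}{\cmp{e_2}}}$: under the encoding $\trues \mapsto \trg{0}$, $\falses \mapsto \trg{1}$, the guard $\trg{x<1}$ selects the then-branch exactly when $\src{x} = \trues$, matching the source semantics. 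Natural conditionals and arithmetic are preserved verbatim, and function calls discharge directly through the induction hypothesis.

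From this lemma, \tp and \scc follow quickly. A source whole program is a well-typed call $\src{f(v)}$ plugged into a function; its compilation calls $\trg{f}$ with argument $\cmp{v}$, which is always in range ($\trg{n}$ for a natural, or $\trg{0}/\trg{1}$ for a boolean). In the boolean case the dynamic guard $\trg{x<2}$ in the compiled body is therefore always satisfied, control enters $\cmp{e}$, and the expression lemma gives the same unique trace as the source; backward simulation $\pf{\tp}$ then follows by determinism. For \scc, I would additionally use the purely syntactic fact that $\cmp{\cdot}$ commutes with linking---the compilation of $\src{C_S\hole{P}}$ equals $\trg{\cmp{C_S}\hole{\cmp{P}}}$---which reduces \scc to the same whole-program statement.

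Finally, for \ccc I would instantiate the cross-language relation $\approx$ so that $\trg{C_T} \approx \src{C_S}$ holds precisely when, for $\src{C_S}$ linkable with $\src{P}$, the two contexts call $\trg{f}$ (resp.\ $\src{f}$) with matching in-range arguments: $\trg{f(n)} \approx \src{f(v)}$ iff $n = \cmp{v}$. Under this relation the target context again supplies an in-range argument, the reachability argument and the expression lemma apply unchanged, and we obtain $\src{C_S\hole{P}} \sem t$; since the relation validates $\cmp{C_S} \approx \src{C_S}$, \scc reappears as the special case $\trg{C_T} = \cmp{C_S}$. The main obstacle is not depth but bookkeeping: choosing $\approx$ permissive enough to cover every behaviorally in-range target context while still validating the implication, and threading the boolean encoding cleanly through both the function guard $\trg{x<2}$ and the compiled conditional guard $\trg{x<1}$ without off-by-one slips.
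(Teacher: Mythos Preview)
Your overall approach is sound, and your instantiation of $\approx$ coincides with the paper's: the paper takes $\trg{C_T} \approx \src{C_S}$ iff $\trg{C_T} = \cmp{\src{C_S}}$, which for contexts of the shape $\src{f(v)}$ is exactly your condition $n = \cmp{v}$. The organisation differs slightly. Rather than deriving all three criteria from a single expression lemma, the paper first reduces \tp and \ccc to \scc---\tp because every whole program already decomposes as $\src{C\hole{P}}$ with $\cmp{\src{C\hole{P}}} = \trg{\cmp{C}\hole{\cmp{P}}}$, and \ccc because with the chosen $\approx$ the hypothesis $\trg{C_T} \approx \src{C_S}$ simply substitutes $\cmp{\src{C_S}}$ for $\trg{C_T}$---and then proves \scc alone.

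There is one gap in your expression lemma. You propose structural induction on $\src{e}$, with the call case $\src{f(e')}$ ``discharging through the induction hypothesis.'' The IH handles the argument $e'$, but after the call unfolds you are back at the \emph{full body} of $f$ (in the boolean case, additionally wrapped with the $\trg{x<2}$ guard), which is not a structural sub-term of the expression you started from; plain structural induction does not close here. The paper sidesteps this with a case split on the argument type of $\src{P}$. If it is $\Nats$, then the sole variable $\src{x}$ in scope is a natural, so no sub-term of the form $\src{\ifte{x}{e_1}{e_2}}$ can occur, and one checks by induction that $\cmp{e}$ is \emph{syntactically identical} to $\src{e}$; recursion is then preserved for free. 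If it is $\Bools$, every expression evaluates to a natural (the only literals are $\src{n}$), so a recursive call $\src{f(e')}$ would feed a natural to a boolean parameter and is ill-typed; the body is therefore non-recursive and your induction goes through. With these two observations inserted, your plan is complete.
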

\begin{proof}
  We start by noting that for this compilation chain,
  \tp is implied by \scc. Indeed, let \(\src{W}\) be a source
  whole program. Then, by definition, there exists \src{C} and \src{P}
  such that \(\src{W} = \src{C\hole{P}}\).  Furthermore,
  \(\cmp{\src{W}} = \trg{\cmp{C}\hole{\cmp{P}}}\) by definition of the
  compiler. Hence, to show \tp, one can simply apply \scc to this
  decomposition of \src{W} into two components.

  We then show that for this compilation chain
  \ccc is also implied by \scc. To show this, we need to explicit the
  instantiation of \(\approx\) and of the notion of linkable components
  in the definition of \ccc:
  \begin{itemize}
  \item \(\approx\) is defined by \(\trg{C_T} \approx \src{C_S}\) iff
    \(\trg{C_T} = \cmp{\src{C_S}}\).
  \item \(\src{P}\) and \(\src{C_S}\) are linkable iff they agree on the
    argument type of the program's function.
  \end{itemize}
  Then, by unfolding these definitions in \ccc, and substituting equalities,
  we obtain exactly \scc.
  
  So all that is left to prove is \scc.
  Suppose \(\trg{\cmp{C}\hole{\cmp{P}} \sem}\ t\).
  We will show that \(\src{C\hole{P} \sem}\ t\),
  proceeding by case analysis on \(\src{P}\):
  \begin{itemize}
  \item \(\src{P} = \src{f(x : \Nats) \mapsto e}\).  By induction on
    \(\src{e}\), we can prove that \(\cmp{e} = \src{e}\): indeed,
    because the function argument \(\src{x}\) must be a natural number, and because \(\src{x}\)
    is the only ``variable'' ever in scope, \(\src{e}\) cannot contain a subexpression
    \(\src{\ifte{x}{e_1}{e_2}}\).
    Otherwise, it would mean that \(\src{x}\) is a boolean, this would be a contradiction.
  \item \(\src{P} = \src{f(x : \Bools) \mapsto e}\).  In this case,
    \(\trg{\cmp{P}} = \trg{f(x : \Nats) \mapsto
      \ifte{x<2}{\cmp{e}}{\ifte{x<3}{f(2)}{42}}}\).  Since the target
    context is a compiled context, this steps to the expression
    \(\trg{\cmp{e}}\).  Now, \(\src{f}\) cannot be called recursively
    in \(\src{e}\), because expressions are natural numbers, but
    \(\src{f}\) expects a boolean.
    Hence, the same is true in the compiled version. Now, we can conclude by
    proving the thesis by induction on \(\src{e}\).
    In particular, note that
    \(\trg{\cmp{(\ifte{x}{e_1}{e_2})}} = \trg{\ifte{x<1}{\cmp{e_1}}{\cmp{e_2}}}\),
    that if \(\src{x} = \trues\) then \(\trg{x<1}\) is true, and if \(\src{x} = \falses\)
    then \(\trg{x<1}\) evaluates to false, where \(\trg{x}\) is the compilation of
    \(\src{x}\).    
    \jt{not sure I can do better without expliciting the semantics}
   \end{itemize}
\end{proof}


\begin{lemma}
  \(\cmp{\cdot}\) satisfies \rtep.
\end{lemma}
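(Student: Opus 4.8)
The plan is to exploit the fact that, apart from correctly simulating the source program on well-typed inputs, the compiler only adds behaviors that depend on the argument \emph{type} and never on the source code itself: silent divergence on input $\trg{2}$ and the output $\trg{42}$ on inputs $\geq\trg{3}$. These added behaviors are therefore literally the same for $\cmp{P_1}$ and $\cmp{P_2}$, so no target context can use them to distinguish the two compiled programs. Since both languages are deterministic and a trace is either a single output event or silent divergence, every behavior $\behavc{\ctxhc{P}}$ is a singleton, and every target context is a call $\trg{f(n)}$ for some $n\in\mb{N}$; proving \rtep thus amounts to comparing, for each $n$, the unique trace of $\cmp{P_1}$ with that of $\cmp{P_2}$ under input $n$. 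Following the interface-aware reading used in \Cref{lem:sep-rtep-rsp-tp}, $\src{P_1}$ and $\src{P_2}$ are only ever compared when they share an argument type, so I would begin by case-splitting on whether that common type is $\Nats$ or $\Bools$.

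In the natural case neither body may use $\src{x}$ as a boolean, so by the same induction as in the \scc part of \Cref{lem:sep-rtep-rsp-tp} the compiler is the identity on these expressions and $\cmp{P_i}$ coincides syntactically with $\src{P_i}$. The source and target contexts linkable with such programs coincide ($\src{f(n)}$ versus $\trg{f(n)}$) and evaluate identically, so the source hypothesis $\forall\src{C_S}\ldotp\behavs{C_S\hole{P_1}}=\behavs{C_S\hole{P_2}}$ transfers verbatim to the target and the conclusion is immediate.

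The boolean case carries the real content. Here $\cmp{P_i}=\trg{f(x)\mapsto\ifte{x<2}{\cmp{e_i}}{\ifte{x<3}{f(2)}{42}}}$, and I would read off the behavior of the whole program under input $n$ by cases: for $n\in\{\trg{0},\trg{1}\}$ control enters $\cmp{e_i}$ with $\trg{x}=n$; for $n=\trg{2}$ the body loops on $\trg{f(2)}$ and silently diverges; and for $n\geq\trg{3}$ it outputs $\trg{42}$. The last two outcomes are visibly independent of $i$. For $n\in\{\trg{0},\trg{1}\}$ I would appeal directly to the already-established correctness (\scc from \Cref{lem:sep-rtep-rsp-tp}): since $\cmp{\trues}=\trg{0}$ and $\cmp{\falses}=\trg{1}$, the target contexts $\trg{f(0)}$ and $\trg{f(1)}$ are exactly $\cmp{\src{f(\trues)}}$ and $\cmp{\src{f(\falses)}}$, so \scc together with determinism (singleton behaviors) yields $\behavt{f(0)\hole{\cmp{P_i}}}=\behavs{f(\trues)\hole{P_i}}$ and $\behavt{f(1)\hole{\cmp{P_i}}}=\behavs{f(\falses)\hole{P_i}}$.

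Finally, the only two boolean source contexts are $\src{f(\trues)}$ and $\src{f(\falses)}$, so instantiating the source hypothesis at these gives $\behavs{f(\trues)\hole{P_1}}=\behavs{f(\trues)\hole{P_2}}$ and the analogue for $\falses$. Combining this with the two correctness equalities and the $i$-independence of the $n=\trg{2}$ and $n\geq\trg{3}$ cases shows that for every $n$ the programs $\cmp{P_1}$ and $\cmp{P_2}$ emit the same singleton trace, which is exactly $\trg{\behav{C_T\hole{\cmp{P_1}}}}=\trg{\behav{C_T\hole{\cmp{P_2}}}}$ for all $\trg{C_T}$. The main obstacle is conceptual rather than computational: one must notice that the two ``extra'' input regions ($n=\trg{2}$ and $n\geq\trg{3}$) are code-independent so they cancel, while the genuine inputs $\trg{0},\trg{1}$ reduce to the source equivalence through plain compiler correctness; the rest is routine once these two observations are in place.
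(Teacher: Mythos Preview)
Your proof is correct and follows essentially the same approach as the paper: both case-split on the common argument type, use that the compiler is the identity on $\Nats$-programs, and for $\Bools$-programs split on whether the target input is $\trg{0}/\trg{1}$ (handled via compiler correctness and the source hypothesis) or $\geq\trg{2}$ (where the added behavior is code-independent). The only cosmetic difference is that the paper argues the contrapositive---assuming a distinguishing target context and producing a distinguishing source context---whereas you argue the direct implication; the underlying case analysis is identical.
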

\begin{proof}
  We prove the contrapositive form of the statement of \rtep.

  Let \(\src{P_1}\) and \(\src{P_2}\) be two programs and suppose
  their compilations are not observationally equivalent; let \(\trg{C}
  = \trg{f(n)}\) be the distinguishing context.
  We consider three cases:
  \begin{itemize}
  \item \(\src{P_1}\) and \(\src{P_2}\) both expect a natural number.
    Take \(\src{C} = \src{f(n)}\). Since the compiler is the identity
    for programs that expect a natural number, we obtain the desired result.
  \item \(\src{P_1}\) and \(\src{P_2}\) both expect a boolean.
    If \(\src{n} = 0\) or \(\src{n} = 1\), then by
    taking \(\src{C} = \src{f(\trues)}\) in the first case, and \(\src{C} = \src{f(\falses)}\)
    in the second case, we obtain the desired result by compiler correctness.
    Otherwise, this case is discharged by contradiction: \(\trg{C\hole{\cmp{P_1}}}\) and
    \(\trg{C\hole{\cmp{P_2}}}\) have the same behavior, by definition of
    the compiler.
  \item \(\src{P_1}\) and \(\src{P_2}\) have different input types.
    This is a case we do not consider, because then the source context
    cannot be linked with both programs.
  \end{itemize}
\end{proof}

\begin{lemma}
  \(\cmp{\cdot}\) does not satisfy \rsp.
\end{lemma}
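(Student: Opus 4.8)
The plan is to reuse the counterexample compilation chain of this section (the same one that underlies \oldautoref{thm:rtep-useless}) and to exhibit a single source program together with a safety property that it robustly satisfies in the source, but whose robust satisfaction is destroyed by compilation. Concretely, I would take the source program $\src{P} = \src{f(x{:}\Bools) \mapsto 0}$, which expects a boolean and always returns $0$, and the trace property ``never output $42$'', namely $\pi = \myset{t}{\text{the result recorded by } t \text{ is not } 42}$.

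First I would verify that $\pi \in \ii{Safety}$. In the minimal trace model of this section a trace is either a single result event or silent divergence, so any trace violating $\pi$ (one whose result is $42$) already contains a finite bad prefix, the prefix that records this result; since that prefix is terminated, its only extension is the violating trace itself, and every such extension is outside $\pi$. Hence $\pi$ is a safety property.

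Next I would show that $\src{P}$ robustly satisfies $\pi$ in the source. Because $\src{P}$ expects a boolean, the only contexts linkable with it have the form $\src{f(\trues)}$ or $\src{f(\falses)}$; in either case the body evaluates to $0$, so the unique trace of $\src{C\hole{P}}$ records the result $0 \neq 42$ and thus lies in $\pi$. This establishes $\forall\src{C}~t.\ \src{C\hole{P}} \sem t \Rightarrow t \in \pi$. I would then refute robust satisfaction in the target: by definition of the compiler, $\cmp{P} = \trg{f(x{:}\Nats) \mapsto \ifte{x<2}{0}{\ifte{x<3}{f(2)}{42}}}$. Choosing the target context $\trg{C} = \trg{f(3)}$ and substituting $\trg{x} = \trg{3}$ makes both guards $\trg{x<2}$ and $\trg{x<3}$ false, so $\trg{C\hole{\cmp{P}}}$ reduces to $\trg{42}$ and produces a trace outside $\pi$. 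Thus $\cmp{P}$ does not robustly satisfy $\pi$, which together with the previous steps contradicts \rsp.

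The main (and rather minor) obstacle is simply being careful on two fronts: checking that $\pi$ is genuinely a safety property in this reduced trace model where traces are result singletons or silent divergence, and respecting the typing/linkability discipline that restricts source contexts to boolean arguments. It is precisely the interplay between this discipline—which keeps the input in $\set{\trg{0},\trg{1}}$ at the source level—and the compiler's insertion of out-of-range behavior on input $\trg{3}$ that preserves robust satisfaction in the source while breaking it in the target, so no deeper argument is needed.
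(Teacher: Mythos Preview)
Your proposal is correct and matches the paper's approach essentially verbatim: the paper uses the source program $\src{f(x{:}\Bools) \mapsto 0}$ (or $\src{f(x{:}\Bools) \mapsto 1}$ in the appendix---an immaterial difference), the safety property ``never output $42$'', and observes that the compiled program violates it under the target context $\trg{f(3)}$. Your write-up is more detailed than the paper's (you explicitly check that $\pi$ is safety and trace through the compiled code), but the argument is identical.
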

\begin{proof}
  Consider the program \(\src{P} = \src{f(x : \Bools) \mapsto 1}\).
  This program satisfies the safety property ``never outputs 42,''
  but its compilation does not (it violates it with input \trg{3}, for instance).
\end{proof}

\begin{lemma}
  \(\cmp{\cdot}\) does not satisfy \rdp.
\end{lemma}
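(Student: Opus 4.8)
The plan is to reuse the structure of the companion \rsp separation immediately above, replacing the safety property \emph{never output 42} with the dense property that forbids silent divergence. Concretely, I would take $\pi = \{ t ~|~ t~\ii{terminating} \}$, the set of all terminating traces. In the trace model of this section every trace is either a final-result singleton or the silently diverging trace $\sdiv$, so $\pi$ contains all terminating traces and is therefore a dense property (it lies in \ii{Dense} by definition), while being violated precisely by $\sdiv$.

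For the source witness I would again use a boolean-typed program, say $\src{P} = \src{f(x{:}\Bools) \mapsto 1}$, and show it robustly satisfies $\pi$. Any source context linkable with $\src{P}$ has the form $\src{f(b)}$ with $b \in \{\trues,\falses\}$ (contexts of incompatible type being statically disallowed), and in either case the constant body reduces to $1$ and emits a single terminating result event. Hence for every source context and every produced trace $t$ we have $t \in \pi$, establishing robust satisfaction of $\pi$ in the source.

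Next I would unfold the compiler to obtain $\trg{\cmp{P}} = \trg{f(x : \Nats) \mapsto \ifte{x<2}{1}{\ifte{x<3}{f(2)}{42}}}$ and exhibit a target context $\trg{C} = \trg{f(2)}$ that breaks $\pi$. On input $2$ the guard $\trg{x<2}$ is false and $\trg{x<3}$ is true, so the body steps to the recursive call $\trg{f(2)}$, which loops without ever returning a numeral; thus $\trg{C\hole{\cmp{P}}} \sem \sdiv$. Since $\sdiv \notin \pi$, the compiled program does not robustly satisfy $\pi$ in the target, so the dense property that was robustly satisfied at the source is not preserved, and $\cmp{\cdot}$ violates \rdp.

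The argument is routine rather than deep; the only point requiring a little care, and thus the closest thing to an obstacle, is verifying that the $\trg{f(2)}$ branch genuinely yields silent divergence in the sense of this trace model, i.e.\ that no result event is emitted before the recursion closes the loop. This holds because the single result event is produced only upon returning a numeral, and that branch never returns, so the execution is a nonterminating sequence of silent recursive calls.
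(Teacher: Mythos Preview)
Your proposal is correct and follows essentially the same approach as the paper: the paper also uses $\src{P} = \src{f(x : \Bools) \mapsto 1}$, the dense property ``never silently diverge'' (which in this trace model is exactly your $\pi$), and the target context $\trg{f(2)}$ to witness the violation. Your write-up is simply more detailed than the paper's one-line proof.
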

\begin{proof}
  Consider the same program \(\src{P} = \src{f(x : \Bools) \mapsto 1}\).
  This program satisfies the safety property ``never silently diverge,''
  but its compilation does not (it violates it with input \trg{2}).
\end{proof}

The proof of \Cref{thm:counterexamplertep} is immediate from the previous lemmas.

We now extend this compilation chain to show that \rtep does not imply
\rtinip either. We introduce a new command at the target level,
\trg{leak}, to model information leakage.  The semantics of this new
instruction is simple: \trg{leak} reduces to a non-deterministically
chosen natural number.
Hence, this command can model looking-up the value of a secret inside
memory, and outputting it publicly.
All outputs are considered public.

Then, we also modify the compilation of partial programs having a boolean argument:
\begin{align*}
    \trg{\cmp{(f(x : \Bools) \mapsto e)}} &=
    \trg{f(x : \Nats) \mapsto \ifte{x<2}{\cmp{e}}{\ifte{x<3}{f(2)}{\ifte{x<4}{42}{leak()}}}}
\end{align*}

It is easy to show that the previous lemmas still hold.
Now is left to show that the compiler does not satisfy \rtinip.
\begin{itemize}
\item Every source whole program trivially satisfies
  termination-insenstive noninterference, because whole source
  programs are completely deterministic.
\item Now, consider a source program \(\src{P}\) with a boolean
  argument, and the target context \(\trg{C_T} = \trg{f(3)}\).  Then,
  \(\trg{\plug{C_T}{\cmp{P}}} \leadsto 0\), and
  \(\trg{\plug{C_T}{\cmp{P}}} \leadsto 1\).  The public inputs are
  identical, but not the public output. Hence the compiler does not
  satisfy \rtinip.
\end{itemize}


\section{Context Composition by Full Reflection or Internal
  Nondeterminism in the Source Language}
\label{sec:composing-contexts}

In this section we prove the theorems from
\oldautoref{sec:context-composition}, where we analyzed how certain
features of the source language can greatly influence the partial
order in \Cref{fig:order}.
In \Cref{sec:java} we assume source programs can completely examine their own
code, a mechanism that is sometimes called \emph{full reflection}.
First of all we need to introduce relational subset-closed hyperproperties, 
the classes of relational hyperproperties that are downward-closed
in each of its arguments.
Then in \Cref{sec:pical}, we assume it is possible to build a source
context $\src{C}$ whose behaviors approximate two given source
contexts $\src{C_1}$ and $\src{C_2}$.
This is the case when an operator for internal nondeterministic choice
is available.


\begin{definition}[\ii{2rSCH}] 
Given $R \in 2^{(2^{\ii{Trace}} \times ~2^{\ii{Trace}})}$
  \begin{align*}
     ~R \in \ii{2rSCH} \iff \forall (b_1, b_2) \in R \ldotp
     \forall s_1 \subseteq b_1, ~ s_2 \subseteq b_2 \ldotp (s_1, s_2) \in R \\  
  \end{align*}
  
\end{definition}

\begin{definition}[\criterion{R2rSCH}] 

  \begin{align*}
    \criterion{R2rSCH} : \quad \forall \src{P_1} \src{P_2} ~ R \in \mathit{2rSCH} ~ &  \forall \src{C_s} \ldotp (\behav{\src{C_s \hole{P_1}}}, \behav{\src{C_s \hole{P_2}}}) \in R \\                                                                                                                    &  \forall \trg{C_t} \ldotp (\behav{\trg{C_T\hole{\cmp{P_1}}}}, \behav{\trg{C_T\hole{\cmp{P_1}}}}) \in R
  \end{align*}
  
\end{definition}

\begin{definition} [\pf{\criterion{R2rSCH}}] 
  \begin{align*}
    \pf{\criterion{R2rSCH}} : \forall \src{P_1} \src{P_2} \trg{C_T} \ldotp ~\exists \src{C_S} \ldotp ~ & \behav{\src{C_s \hole{P_1}}} \subseteq (\behav{\trg{C_T\hole{\cmp{P_1}}}} \wedge \\
                                                                                     & \behav{\src{C_s \hole{P_2}}} \subseteq (\behav{\trg{C_T\hole{\cmp{P_2}}}}
  \end{align*}  
\end{definition}

\begin{lemma} 
 \criterion{R2rSC} $\iff$ \pf{\criterion{R2rSC}}  
\end{lemma}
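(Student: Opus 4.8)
The plan is to establish the two implications separately, mirroring the template of the earlier equivalence proofs (\Cref{thm:rtp-eq}, \Cref{thm:rschp-eq}, \Cref{thm:rtrhp-eq}), since \criterion{R2rSCH} is precisely the $2$-relational, subset-closed analogue of those criteria and the argument specializes theirs. Both directions turn on a single observation: membership in a relation $R \in \ii{2rSCH}$ is preserved when either component is shrunk, so the subset relations supplied by the property-free characterization let one transport membership from the source pair of behaviors to the (smaller) target pair.

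For \pf{\criterion{R2rSCH}} $\Rightarrow$ \criterion{R2rSCH}, I would fix source programs $\src{P_1}, \src{P_2}$, a relation $R \in \ii{2rSCH}$ satisfying the source-side premise $\forall \src{C_S}.~(\behav{\src{C_S\hole{P_1}}}, \behav{\src{C_S\hole{P_2}}}) \in R$, and an arbitrary target context $\trg{C_T}$. Applying \pf{\criterion{R2rSCH}} to $\src{P_1}, \src{P_2}, \trg{C_T}$ produces a source context $\src{C_S}$ whose two behaviors contain the two target behaviors $\behav{\trg{C_T\hole{\cmp{P_1}}}}$ and $\behav{\trg{C_T\hole{\cmp{P_2}}}}$ respectively. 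The source premise gives $(\behav{\src{C_S\hole{P_1}}}, \behav{\src{C_S\hole{P_2}}}) \in R$, and then downward-closure of $R$ in each argument separately yields $(\behav{\trg{C_T\hole{\cmp{P_1}}}}, \behav{\trg{C_T\hole{\cmp{P_2}}}}) \in R$, which is exactly the target-side conclusion.

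For the converse \criterion{R2rSCH} $\Rightarrow$ \pf{\criterion{R2rSCH}}, the crux is choosing the right witnessing relation. Given $\src{P_1}, \src{P_2}$ and $\trg{C_T}$, I would instantiate \criterion{R2rSCH} with
\[
R_0 = \{(s_1, s_2) \mid \exists \src{C_S}.~ s_1 \subseteq \behav{\src{C_S\hole{P_1}}} \wedge s_2 \subseteq \behav{\src{C_S\hole{P_2}}}\}.
\]
Two side conditions must be discharged: that $R_0 \in \ii{2rSCH}$, which is immediate since replacing $s_i$ by any subset preserves the defining inclusions; and that the source-side premise of \criterion{R2rSCH} holds for $R_0$, which is trivial because each $\src{C_S}$ witnesses membership of its own pair $(\behav{\src{C_S\hole{P_1}}}, \behav{\src{C_S\hole{P_2}}})$ by reflexivity of $\subseteq$. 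The conclusion of \criterion{R2rSCH} then states $(\behav{\trg{C_T\hole{\cmp{P_1}}}}, \behav{\trg{C_T\hole{\cmp{P_2}}}}) \in R_0$, and unfolding the definition of $R_0$ delivers precisely the source context $\src{C_S}$ demanded by \pf{\criterion{R2rSCH}}.

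The main obstacle is purely bookkeeping: ensuring the orientation of the subset relations is consistent throughout—$R_0$ must be downward-closed, and the inclusions it encodes (target behavior contained in source behavior) must match those the characterization asserts, exactly as in the non-relational \pf{\rschp}. Once $R_0$ is set up with this orientation, both directions reduce to mechanical unfolding analogous to the Coq development underlying \Cref{thm:rtrhp-eq}, and no further difficulty remains.
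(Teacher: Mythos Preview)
Your proof is correct and follows exactly the standard template the paper uses for all such equivalences (and defers to Coq here, theorem R2rSCHC\_R2rSCHP). Your choice of the witnessing relation $R_0$ and the use of componentwise downward closure are precisely what the mechanized proof does; there is nothing to add.
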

\begin{proof}
  See file Criteria.v, theorem R2rSCHC\_R2rSCHP. 
\end{proof}

As usual, it is possible to generalize these definitions from binary relations
to relations of finite or arbitrary arities.

\begin{definition}[\criterion{RrSCH}]
	\begin{align*}
		\criterion{RrSCH}:~
  \forall R \in 2^{(\srcAll \to \mathit{SCH})}.~&
    (\forall\src{C_S} \ldotp
  (\lambda \src{P}\ldotp \src{\behav{C_S\hole{P}}}) \in R)
  \Rightarrow \\
  &(\forall \trg{C_T} \ldotp (\lambda \src{P}\ldotp \trg{\behav{C_T\hole{\cmp{P}}}})
  \in R)
	\end{align*}
\end{definition} 


\subsection{Context Composition by Full Reflection}\label{sec:java}

In this section we discuss our criteria assuming source programs can
fully examine their own code, as is enabled by the use of {\em full
  reflection} mechanisms in languages Lisp~\cite{Smith84} and Smalltalk.
More precisely, details we assume that given two distinct source programs
$\src{P_1}, \src{P_2}$ it is possible to compose two source contexts
$\src{C_1}, \src{C_2}$, we write $\src{C} = \src{C_1} \otimes \src{C_2}$ 
such that $\behav{\src{C\hole{P_i}}} = \behav{\src{C_{i}\hole{P_i}}}, ~ i = 1, 2$. 
\Cref{fig:order} reduces to the following diagram: 
\medskip
\begin{center}
\begin{tikzpicture}[scale = 0.5]
		\node  (0) at (-9, 10) {RrHP};
		\node  (1) at (-9, 6) {RHP \small{$\iff$} RFrHP};
		\node  (2) at (-9, 2.5) {RSCHP \small{$\iff$} RFrTP};
		\node  (4) at (-9, -2.5) {RTP};
		\node  (5) at (-11, -4.5) {RDP};
		\node  (6) at (-7, -4.5) {RSP};
		\node  (7) at (-0.75, -0.25) {R2HSP \small{$\iff$} RFrSP};
		\node  (8) at (3, 2) {RrSP};
		\node  (9) at (-9, -0.5) {RkrTP};
		\node  (10) at (-1.25, 5.5) {RrTP};
		\node  (11) at (-4.25, 7.25) {RrSCHP};
		\draw (0) to (11);
		\draw (11) to (10);
		\draw (10) to (8);
		\draw (8) to (7);
		\draw (7) to (6);
		\draw (4) to (6);
		\draw (4) to (5);
		\draw (9) to (4);
		\draw (2) to (1);
		\draw (1) to (0);
		\draw (2) to (9);
		\draw (9) to (4);
		\draw (11) to (2);
		\draw (10) to (2);
                \draw (2) to (7);
\end{tikzpicture}
\end{center}
\medskip

%
The file FullReflection.v contains proofs of the following collapses
\begin{itemize}
\item $\rthsp \Rightarrow \rtrsp$                         (theorem R2HSP\_R2rSP)
\ch{Premise didn't match, so I fixed the diagram. Please double check.}
\item $\rhp \Rightarrow \rtrhp$                           (theorem RHP\_R2rHP)
\item $\criterion{RSCH} \Rightarrow \criterion{R2rSCH}$ (theorem RSCHP\_R2rSCHP)
\end{itemize}
\medskip
To sketch a proof of $\rhp \Rightarrow \rtrhp$, consider their \emph{property-free} characterizations.
For $\src{P_1}, \src{P_2}$ distinct and $\trg{C}$ apply twice $\pf{\rhp}$ and get two source contexts $\src{C_1}, \src{C_2}$.
Then $\src{C_1} \otimes \src{C_2}$ satisfies the thesis. We can generalize these facts to finitary relations. 

\begin{theorem}
 $\rthsp \Rightarrow \rfrsp$  
\end{theorem}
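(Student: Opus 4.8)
The plan is to argue entirely with the property-free characterizations, taking $\pf{\rhsp}$ (\Cref{def:rhsc}) specialized to observations in $\ii{Obs}_2$ as the hypothesis and deriving $\pf{\rfrsp}$ (\Cref{def:rfrsp}), from which $\rfrsp$ follows by \Cref{thm:rfrsp-eq}. First I would record that this specialized hypothesis already yields single-prefix back-translation: for any $\src{P}$, $\trg{C_T}$ and finite prefix $m$ with $\trg{C_T\hole{\cmp{P}}} \sem m$, padding the singleton $\{m\}$ with the trivial prefix (which prefixes every behavior) gives a two-element observation in $\ii{Obs}_2$ dominated by $\trg{\behav{C_T\hole{\cmp{P}}}}$, so $\pf{\rthsp}$ returns a source context reproducing $m$. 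Equivalently, since lifted safety properties are exactly $1$-hypersafety, $\rthsp$ subsumes $\rsp$.

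Given the data of $\pf{\rfrsp}$ — an arity $K$, programs $\src{P_1},\dots,\src{P_K}$, a target context $\trg{C_T}$, and prefixes $m_1,\dots,m_K$ with $\trg{C_T\hole{\cmp{P_i}}}\sem m_i$ for each $i$ — I would group the indices by program identity, yielding finitely many pairwise distinct source programs $\src{Q_1},\dots,\src{Q_n}$ and, for each $\src{Q_j}$, the finite prefix set $M_j=\{m_i \mid \src{P_i}=\src{Q_j}\}$ that must be reproduced from $\src{Q_j}$. For each $j$ the premise gives $M_j \leq \trg{\behav{C_T\hole{\cmp{Q_j}}}}$, and I would back-translate $M_j$ to a source context $\src{D_j}$ with $M_j \leq \src{\behav{D_j\hole{Q_j}}}$. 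Because the $\src{Q_j}$ are pairwise distinct, I would then glue the $\src{D_j}$ with the full-reflection composition operator, generalized from the binary $\src{C_1}\otimes\src{C_2}$ to $\src{C_S}=\src{D_1}\otimes\cdots\otimes\src{D_n}$ whose defining property is $\src{\behav{C_S\hole{Q_j}}}=\src{\behav{D_j\hole{Q_j}}}$ for every $j$ (the composed context simply dispatches on the reflected code of the program it is linked with). This $\src{C_S}$ then satisfies $\src{C_S\hole{P_i}}\sem m_i$ for all $i$, which is exactly $\pf{\rfrsp}$.

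The hard part will be the back-translation of a repeated program, i.e.\ a $\src{Q_j}$ whose prefix set $M_j$ has more than one element: a single context $\src{D_j}$ must make $\src{D_j\hole{Q_j}}$ exhibit, at once, a trace extending each prefix of $M_j$, and here full reflection gives no leverage, since there is only one program code to branch on. For $|M_j|=1$ this is just $\rsp$, and for $|M_j|=2$ it is precisely the two-trace obligation discharged by $\pf{\rthsp}$; the delicate point is reducing a general finite $M_j$ to this pairwise obligation. I would attack this by induction on the arity, taking the already-established collapse $\rthsp \Rightarrow \rtrsp$ (file \texttt{FullReflection.v}, theorem R2HSP\_R2rSP) as the base case and threading it through the generalized composition lemma, which must itself be shown to preserve each per-program behavior. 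Verifying this reduction together with the composition lemma is where essentially all of the work lies; by contrast, the index grouping and the applications of $\rsp$ at the non-repeated indices are routine.
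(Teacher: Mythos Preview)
Your approach—group the indices by program identity, back-translate each group separately, then glue the per-program contexts via the full-reflection operator $\otimes$—is precisely the generalization of the binary argument the paper points to (its entire proof is a reference to the Coq lemma \texttt{R2HSP\_R2rSP}). You have also correctly isolated the only real obstacle: a repeated program $\src{Q_j}$ whose prefix set $M_j$ has more than two elements requires a single context $\src{D_j}$ with $M_j\leq\src{\behav{D_j\hole{Q_j}}}$, and $\pf{\rthsp}$ only delivers this for $|M_j|\le 2$.

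Your plan to close this gap by induction on the arity cannot work, because the obstruction is not compositional in the way $\otimes$ is: at the inductive step you would have to merge two contexts that act on the \emph{same} program $\src{Q_j}$, but $\otimes$ dispatches on the code of the linked program and its defining equation is only stated for \emph{distinct} programs—with a single program there is nothing to branch on. In fact the paper's own $\rkhsp\not\Rightarrow\rkkhsp$ separation, instantiated at $K{=}2$, already blocks the statement as written: the source language there has exactly one partial program, so the full-reflection hypothesis holds vacuously; $\rthsp$ holds, yet $\formatCompilers{R3HSP}$ fails, and since $\rfrsp\Rightarrow\formatCompilers{R3HSP}$ unconditionally (instantiate $\pf{\rfrsp}$ at $K{=}3$ with all $P_i$ equal), $\rfrsp$ fails there as well. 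What your per-group step genuinely needs is $\rhsp$ rather than $\rthsp$, so that an arbitrary finite $M_j$ can be back-translated in one shot; under that stronger hypothesis your scheme goes through verbatim.
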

\begin{proof}
  Same argument used in reflection.v, theorem R2HSP\_R2rSP. 
\end{proof}

\begin{theorem}
 $\rhp \Rightarrow \criterion{RFrH}$
\end{theorem}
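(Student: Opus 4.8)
The plan is to observe that $\criterion{RFrH}$ is, by definition, the join over all finite arities $K$ of the $K$-relational criterion $\rkrhp$ (exactly as $\rfrsp$ is the join of the $\rkrsp$ in \autoref{def:pf-rfrsp}). Hence it suffices to establish $\rhp \Rightarrow \rkrhp$ for every fixed finite $K$ under the full-reflection hypothesis; the implication to the join then follows immediately because $K$ is universally quantified inside $\criterion{RFrH}$. As in the preceding collapses, I would run the argument on the property-free characterizations $\pf{\rhp}$ and $\pf{\rkrhp}$, which respectively produce a single simulating source context for one program and a single simulating source context that must work simultaneously for $K$ programs.

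First I would fix a finite $K$, programs $\src{P_1},\ldots,\src{P_K}$, and a target context $\trg{C_T}$. Applying $\pf{\rhp}$ once for each $\src{P_i}$ against the \emph{same} $\trg{C_T}$ yields source contexts $\src{C_1},\ldots,\src{C_K}$ with $\trg{\behav{C_T\hole{\cmp{P_i}}}} = \src{\behav{C_i\hole{P_i}}}$ for each $i$. These contexts are a priori unrelated, so the remaining task is to merge them into a single $\src{C_S}$ that behaves like $\src{C_i}$ precisely when linked against $\src{P_i}$. This is where full reflection enters: I would generalize the binary composition operator $\otimes$ of \autoref{sec:java} to a $K$-ary operator $\src{C_S} = \src{C_1} \otimes \cdots \otimes \src{C_K}$ that inspects the code of the program it is linked with, compares it (by reflection) against the hard-coded $\src{P_1},\ldots,\src{P_K}$, and dispatches to the matching $\src{C_i}$. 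By construction $\src{\behav{C_S\hole{P_i}}} = \src{\behav{C_i\hole{P_i}}} = \trg{\behav{C_T\hole{\cmp{P_i}}}}$ for every $i$, which is exactly $\pf{\rkrhp}$ for this $K$.

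The main obstacle is establishing the $K$-ary composition operator together with its correctness property $\src{\behav{(C_1 \otimes \cdots \otimes C_K)\hole{P_i}}} = \src{\behav{C_i\hole{P_i}}}$, which is the only place the full-reflection assumption is actually consumed. The binary case is taken for granted in \autoref{sec:java}, and the generalization is routine provided the programs can be told apart: when the $\src{P_i}$ are pairwise distinct the dispatch is unambiguous, and when some coincide any consistent choice suffices, since identical programs necessarily have identical behaviors under the shared $\src{C_S}$. Since $K$ was arbitrary, assembling these per-$K$ implications gives $\rhp \Rightarrow \criterion{RFrH}$, mirroring the structure of the $\rthsp \Rightarrow \rfrsp$ proof in reflection.v.
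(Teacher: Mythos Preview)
Your proposal is correct and matches the paper's approach essentially exactly: the paper's own proof is simply ``Same argument used in reflection.v, theorem RHP\_R2rHP,'' i.e., take the binary construction (apply $\pf{\rhp}$ once per program and compose the resulting source contexts via the full-reflection operator $\otimes$) and generalize it to arbitrary finite arity $K$, which is precisely what you spell out. Your handling of the case where some $\src{P_i}$ coincide is a nice detail that the paper leaves implicit.
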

\begin{proof}
  Same argument used in reflection.v, theorem RHP\_R2rHP.
\end{proof}

\begin{theorem}
 $\criterion{RSCH} \Rightarrow \criterion{RFrSCH}$ 
\end{theorem}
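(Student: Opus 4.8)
The plan is to prove the implication exactly as in the two preceding reflection collapses, by working through the property-free characterizations rather than the property-full criteria. By the equivalences between each criterion and its property-free form established earlier in this appendix (the binary case being the lemma $\criterion{R2rSCH} \iff \pf{\criterion{R2rSCH}}$ above), it suffices to derive the characterization of $\criterion{RFrSCH}$ from that of $\criterion{RSCH}$. Recall that $\pf{\rschp}$ gives, for every source program $\src{P}$ and target context $\trg{C_T}$, a source context $\src{C_S}$ with $\behavt{C_T\hole{\cmp{P}}} \subseteq \behavs{C_S\hole{P}}$, while $\pf{\criterion{RFrSCH}}$ asks, for every finite family $\src{P_1},\ldots,\src{P_K}$ and every $\trg{C_T}$, for a \emph{single} source context $\src{C_S}$ with $\behavt{C_T\hole{\cmp{P_i}}} \subseteq \behavs{C_S\hole{P_i}}$ for all $i$. (Note the correct inclusion direction is target-inside-source, matching $\pf{\rschp}$, since these hyperproperty classes are downward closed.)

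First I would fix $\src{P_1},\ldots,\src{P_K}$ and $\trg{C_T}$ and apply $\pf{\rschp}$ once per program against this same target context, obtaining $K$ a priori unrelated source contexts $\src{C_1},\ldots,\src{C_K}$ with $\behavt{C_T\hole{\cmp{P_i}}} \subseteq \behavs{C_i\hole{P_i}}$. I would then invoke full reflection to merge them: generalizing the binary operator $\otimes$ from the start of this section to finite arity (full reflection lets a source context read the code of the program it is linked against and dispatch on its identity), I build $\src{C_S} = \src{C_1} \otimes \cdots \otimes \src{C_K}$ so that $\behavs{C_S\hole{P_i}} = \behavs{C_i\hole{P_i}}$ for every $i$. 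Chaining the two facts yields $\behavt{C_T\hole{\cmp{P_i}}} \subseteq \behavs{C_i\hole{P_i}} = \behavs{C_S\hole{P_i}}$, which is precisely $\pf{\criterion{RFrSCH}}$, and hence $\criterion{RFrSCH}$ by the characterization equivalence.

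The one point requiring care, which I expect to be the main obstacle, is that reflective composition can only dispatch on \emph{distinct} programs, whereas the family $\src{P_1},\ldots,\src{P_K}$ may contain repeats. This is harmless here precisely because the subset-closed characterization demands only inclusions rather than equalities, unlike the $\rhp \Rightarrow \criterion{RFrH}$ argument that forces equality of behaviors: if $\src{P_i} = \src{P_j}$ then $\behavt{C_T\hole{\cmp{P_i}}} = \behavt{C_T\hole{\cmp{P_j}}}$, so a single source context covering that common target behavior serves both indices. Concretely, I would first pass to the set of syntactically distinct programs among $\src{P_1},\ldots,\src{P_K}$, run the preceding construction on those distinct programs (where $\otimes$ is well defined), and then observe that the inclusion obtained for each distinct program transports to every occurrence of it. The remainder is the verbatim $K$-fold repetition of the reflection argument already used for the two theorems above, so beyond the $K$-ary generalization of $\otimes$ and this deduplication bookkeeping no new ideas are needed.
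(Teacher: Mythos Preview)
Your proposal is correct and follows exactly the approach the paper intends: the paper's own proof is literally ``Same argument used in reflection.v, theorem RSCHP\_R2rSCHP,'' i.e., the $K$-fold version of the binary argument sketched just above it (pass to property-free characterizations, apply $\pf{\rschp}$ once per program, then merge the resulting source contexts via the reflective $\otimes$). Your explicit handling of possibly repeated programs via deduplication is a detail the paper leaves implicit but that is indeed needed given that $\otimes$ is only specified for distinct programs.
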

\begin{proof}
  Same argument used in reflection.v, theorem RSCHP\_R2rSCHP.
\end{proof}

Some of the variants of the results in this section where previously stated
in \oldautoref{thm:main-reflection}.

\subsection{Context Composition by Internal Nondeterministic Choice} \label{sec:pical}

In this section we discuss our criteria in presence of source contexts
that can nondeterministically behave like one of two already existing
source contexts.
Many criteria, in general stronger,
become equivalent to weaker ones. For instance an $\pf{\rsp}$ compiler preserves much more than the 
robust satisfaction of safety properties, including \(2\)-hypersafety. 
Formally we assume to have an operator $\oplus : \src{\ctx} \times \src{\ctx} \rightarrow \src{\ctx}$ such that 
\begin{equation*}
  \forall \src{C_1} \src{C_2} \src{P}. ~ \behav{\src{C_1} \oplus \src{C_2}} \src{\hole{P}}) \supseteq  \behav{\src{C_1 \hole{P}}} \cup \behav{\src{C_2 \hole{P}}}
\end{equation*} 
\Cref{fig:order} reduces to the following diagram: 
\medskip
\begin{center}
\begin{tikzpicture}[scale = 0.50]
		\node (0) at (-11, 11.25) {RrHP};
		\node (1) at (-11, 8) {RFrHP};
		\node (2) at (-11, 5) {RHP};
		\node (3) at (-11, 2) {RSCHP \small{$\iff$} RFrSCHP};
		\node (4) at (-11, -1.25) {RTP $\iff$ RFrTP};
		\node (5) at (-14, -4) {RDP};
		\node (6) at (-8, -4) {RSP \small{$\iff$} RFrSP};
		\node (7) at (-6.75, 9) {RrSCHP};
		\node (8) at (-4, 7.25) {RrTP};
		\node (9) at (-1.75, 5.75) {RrSP};
		\draw (0) to (1);
		\draw (1) to (2);
		\draw (2) to (3);
		\draw (3) to (4);
		\draw (4) to (5);
		\draw (4) to (6);
		\draw (0) to (7);
		\draw (7) to (8);
		\draw (8) to (9);
		\draw (9) to (6);
		\draw (7) to (2);
		\draw (8) to (3);
\end{tikzpicture}
\end{center}
\medskip

%
\medskip
The file InternalNondet.v contains proofs of the following binary collapses
\begin{itemize} 
 \item $\rschp \Rightarrow \criterion{R2rSCH}$ (theorem RSCHP\_R2rSCHP)
 \item $\rtp \Rightarrow \rtrtp$                 (theorem RTP\_R2rTP)
 \item $\rsp \Rightarrow \rtrsp$                 (theorem RSP\_R2rSP)
\end{itemize}
To sketch a proof for $\rsp \Rightarrow \rtrsp$ consider their \emph{property-free} characterizations, 
and assume $\trg{C\hole{\cmp{P}} \rightsquigarrow}\ m_1, m_2$. Apply twice $\pf{\rsp}$ and 
get two, possibly different, source contexts $\src{C_1}, \src{C_2}$, then $\src{C} = \src{C_1} \oplus \src{C_2}$ satisfies the thesis.
We can generalize these facts to finitary relations. \\

\begin{theorem}
  $\rschp \Rightarrow \criterion{RFrSCH}$  
\end{theorem}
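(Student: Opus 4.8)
The plan is to descend to the property-free characterizations and then replay the binary argument for $\rschp \Rightarrow \criterion{R2rSCH}$ (the Coq theorem \texttt{RSCHP\_R2rSCHP}), this time folding together $K$ source contexts with finitely many applications of the internal nondeterministic choice operator $\oplus$. By the equivalence $\rschp \iff \pf{\rschp}$ (\Cref{thm:rschp-eq}) and the analogous routine equivalence between $\criterion{RFrSCH}$ and its property-free form, it suffices to prove $\pf{\rschp} \Rightarrow \pf{\criterion{RFrSCH}}$. Unfolding the goal, I am given an arity $K$, source programs $\src{P_1},\dots,\src{P_K}$, and a target context $\trg{C_T}$, and I must exhibit a single source context $\src{C_S}$ with $\trg{\behav{C_T\hole{\cmp{P_i}}}} \subseteq \src{\behav{C_S\hole{P_i}}}$ for every $i$ — note that, exactly as in $\pf{\rschp}$, the required inclusion points from the target behavior into the source behavior, since subset-closed preservation needs the source to \emph{over-approximate} the target.

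First I would apply $\pf{\rschp}$ once for each index: instantiating it at $(\src{P_i}, \trg{C_T})$ produces a source context $\src{C_i}$ such that $\trg{\behav{C_T\hole{\cmp{P_i}}}} \subseteq \src{\behav{C_i\hole{P_i}}}$. I then set $\src{C_S} = \src{C_1} \oplus \cdots \oplus \src{C_K}$, the $K$-fold iteration of $\oplus$. From the defining over-approximation property of the operator,
\[
\src{\behav{(C_1 \oplus C_2)\hole{P}}} \;\supseteq\; \src{\behav{C_1\hole{P}}} \cup \src{\behav{C_2\hole{P}}},
\]
a one-line induction on $K$ gives $\src{\behav{C_S\hole{P}}} \supseteq \bigcup_{i=1}^{K} \src{\behav{C_i\hole{P}}}$ for every $\src{P}$. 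Specializing to $\src{P} = \src{P_j}$ and chaining the inclusions yields $\src{\behav{C_S\hole{P_j}}} \supseteq \src{\behav{C_j\hole{P_j}}} \supseteq \trg{\behav{C_T\hole{\cmp{P_j}}}}$, which is precisely the $j$-th conjunct of $\pf{\criterion{RFrSCH}}$. This establishes the property-free criterion, and hence $\criterion{RFrSCH}$.

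The proof is not hard, but two points deserve care. The first is the compatibility of the subset directions: robust preservation of subset-closed hyperproperties requires the constructed source behavior to contain the target behavior, and $\oplus$ only ever \emph{enlarges} behaviors, so combining the $\src{C_i}$ this way can never destroy any of the $K$ inclusions obtained from $\pf{\rschp}$. The second, and the genuine limitation, is finiteness: $\oplus$ is a binary operator on contexts, which are finite syntactic objects, so it can be iterated only finitely often. This is exactly why the collapse halts at finite arity ($\criterion{RFrSCH}$) and does not reach the infinite-arity criterion $\criterion{RrSCH}$ — there is no way to amalgamate countably many $\src{C_i}$ into one finite source context. A minor bookkeeping obligation is that each $\src{C_i}$, and therefore $\src{C_S}$, must remain linkable with the relevant $\src{P_j}$; this holds because $\oplus$ preserves context interfaces, so I would simply carry the interface side-conditions through unchanged.
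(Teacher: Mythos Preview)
Your proposal is correct and follows the same approach as the paper, which simply says the proof is the ``same argument'' as the binary case \texttt{RSCHP\_R2rSCHP}: apply $\pf{\rschp}$ once per program to obtain $\src{C_i}$ with target-into-source inclusions, then fold them into a single $\src{C_S} = \src{C_1}\oplus\cdots\oplus\src{C_K}$ using the over-approximation property of $\oplus$. Your observation about the subset direction being compatible with $\oplus$ enlarging behaviors is exactly the point; the closing remark about linkability is harmless but extraneous here, since the general criteria in this section do not carry interface side-conditions.
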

\begin{proof}
  Same argument used in nd\_ctxs.v, theorem RSCHP\_R2rSCHP. 
\end{proof}

\begin{theorem}
  $\rtp \Rightarrow \criterion{RFrT}$  
\end{theorem}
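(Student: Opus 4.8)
The plan is to argue entirely at the level of the property-free characterizations, exactly as in the sketch just given for $\rsp \Rightarrow \rtrsp$, replacing $\pf{\rsp}$ by $\pf{\rtp}$ (\Cref{def:rtc}) and working with full traces in place of finite prefixes. Since $\criterion{RFrT}$ is the join over all finite arities $K$ of $\rkrtp$, its characterization $\pf{\criterion{RFrT}}$ (the evident trace analogue of $\pf{\rfrsp}$) asks, for every $K$, programs $\src{P_1},\ldots,\src{P_K}$, target context $\trg{C_T}$, and traces $t_1,\ldots,t_K$ with $\trg{C_T\hole{\cmp{P_i}}} \sem t_i$ for each $i$, that there be a \emph{single} source context $\src{C_S}$ with $\src{C_S\hole{P_i}} \sem t_i$ for all $i$. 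By \Cref{thm:rtp-eq} we may assume $\pf{\rtp}$, and proving $\pf{\criterion{RFrT}}$ suffices, as $\criterion{RFrT}$ is equivalent to it by an argument analogous to \Cref{thm:rtrtp-eq}.

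First I would fix such $K$, the programs $\src{P_1},\ldots,\src{P_K}$, the target context $\trg{C_T}$, and the traces $t_1,\ldots,t_K$. Applying $\pf{\rtp}$ to each pair $(\src{P_i},t_i)$ against the common context $\trg{C_T}$ produces, for each $i$, a source context $\src{C_i}$ with $\src{C_i\hole{P_i}} \sem t_i$; a priori these $K$ contexts are unrelated, and it is precisely this gap that the nondeterministic choice operator closes. I would then set $\src{C_S} = \src{C_1} \oplus \cdots \oplus \src{C_K}$, folding the binary operator $\oplus$ finitely many times. Because $\oplus$ over-approximates the union of behaviors, a short induction on $K$ yields $\src{\behav{C_S\hole{P_i}}} \supseteq \src{\behav{C_i\hole{P_i}}}$ for every $i$; since $t_i \in \src{\behav{C_i\hole{P_i}}}$ by construction, we obtain $\src{C_S\hole{P_i}} \sem t_i$ for all $i$, discharging $\pf{\criterion{RFrT}}$.

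There is no real calculation here; the single genuine constraint — and the reason the statement concerns the \emph{finite}-relational criterion rather than the infinite-arity $\rrtp$ — is that building $\src{C_S}$ requires composing only finitely many contexts. As contexts are finite syntactic objects, iterating $\oplus$ is available only for finite $K$ and cannot be folded over a countable family, so this collapse does not propagate up to $\rrtp$, matching the remark following \oldautoref{thm:main-nondet-choice}. The step I would handle most carefully is the simultaneous over-approximation: checking that the \emph{one} context $\src{C_S}$ reproduces $t_i$ when linked with the \emph{corresponding} $\src{P_i}$, for each $i$ at once, since that is exactly where the point of $\oplus$ — a single source context standing in for several independent runs — has to land.
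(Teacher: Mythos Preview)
Your proposal is correct and follows essentially the same approach as the paper: work with the property-free characterizations, apply $\pf{\rtp}$ once per index to obtain contexts $\src{C_1},\ldots,\src{C_K}$, and combine them via the finite fold $\src{C_1}\oplus\cdots\oplus\src{C_K}$, using the over-approximation property of $\oplus$ to conclude. The paper's proof is literally ``Same argument used in nd\_ctxs.v, theorem RTP\_R2rTP,'' i.e.\ the binary sketch given just above (for $\rsp\Rightarrow\rtrsp$) adapted to traces and iterated to finite arity---which is exactly what you wrote out.
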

\begin{proof}
  Same argument used in nd\_ctxs.v, theorem RTP\_R2rTP.
\end{proof}

\begin{theorem}
  $\rsp \Rightarrow \rfrsp$  
\end{theorem}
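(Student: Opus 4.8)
The plan is to establish the property-free implication $\pf{\rsp} \Rightarrow \pf{\rfrsp}$, which suffices by the equivalences $\rsp \iff \pf{\rsp}$ (\Cref{thm:rsp-eq}) and $\rfrsp \iff \pf{\rfrsp}$ (\Cref{thm:rfrsp-eq}). This generalizes the binary sketch already given for $\rsp \Rightarrow \rtrsp$ to arbitrary finite arity $K$, the key ingredient being the internal nondeterministic choice operator $\oplus$, whose defining inequality $\behavs{(C_1 \oplus C_2)\hole{P}} \supseteq \behavs{C_1\hole{P}} \cup \behavs{C_2\hole{P}}$ lets us fuse finitely many source contexts into a single one.

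Concretely, assuming $\pf{\rsp}$, I would fix an arity $K$, source programs $\src{P_1}, \ldots, \src{P_K}$, a target context $\trg{C_T}$, and finite prefixes $m_1, \ldots, m_K$ with $\trg{C_T\hole{\cmp{P_i}}} \rightsquigarrow m_i$ for every $i \in \{1, \ldots, K\}$. First I would apply $\pf{\rsp}$ once per index $i$, instantiated with the program $\src{P_i}$, the shared target context $\trg{C_T}$, and the prefix $m_i$; this yields for each $i$ a source context $\src{C_{S_i}}$ such that $\src{C_{S_i}\hole{P_i}} \rightsquigarrow m_i$. Exactly as in the binary case, these $K$ contexts are a priori unrelated, so no single one of them need reproduce all the prefixes simultaneously.

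Next I would combine them into $\src{C_S} = \src{C_{S_1}} \oplus \cdots \oplus \src{C_{S_K}}$, obtained by iterating the binary operator $\oplus$. A short induction on $K$ from the defining inequality shows $\behavs{C_S\hole{P_i}} \supseteq \behavs{C_{S_i}\hole{P_i}}$ for each $i$ (at each step the newly fused context dominates the behaviors of both operands, so the relevant inclusion is preserved). Since $\src{C_{S_i}\hole{P_i}} \rightsquigarrow m_i$ unfolds to the existence of a trace $t_i \geq m_i$ lying in $\behavs{C_{S_i}\hole{P_i}}$, that same $t_i$ lies in $\behavs{C_S\hole{P_i}}$, hence $\src{C_S\hole{P_i}} \rightsquigarrow m_i$. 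As this holds for all $i$ at once, the single context $\src{C_S}$ witnesses the conclusion of $\pf{\rfrsp}$.

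The main point requiring care is the behavior-inclusion bookkeeping: the defining property of $\oplus$ is only an \emph{inclusion} (the composite may produce additional behaviors), so it must be used in precisely the direction that preserves prefix production, and one must verify that this inclusion lifts through the iterated composition. This is routine, and the only genuine limitation is that the argument relies on $K$ being finite: contexts are finite syntactic objects, so only finitely many can be fused by $\oplus$. This is exactly why the collapse reaches $\rfrsp$ but stops short of the infinite-arity $\rrsp$, consistent with the remark in \oldautoref{sec:context-composition}.
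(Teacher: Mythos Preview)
Your proposal is correct and follows essentially the same approach as the paper: reduce to the property-free characterizations, apply $\pf{\rsp}$ once per index to obtain contexts $\src{C_{S_i}}$, and fuse them via iterated $\oplus$ using the defining behavior-inclusion property. The paper merely states that the argument is the same as for the binary case $\rsp \Rightarrow \rtrsp$ generalized to finite arity, which is exactly what you spell out in more detail.
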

\begin{proof}
  Same argument used in nd\_ctxs.v, theorem RSP\_R2rSP.
\end{proof}

Some of the variants of the results in this section where previously stated
in \oldautoref{thm:main-nondet-choice}.

\section{Proof Techniques for \(\pf{\rrhp}_{\bowtie}\) and \(\pf{\rfrxp}_{\bowtie}\)}
\label{sec:proofs-appendix}

This section presents the formal details of \oldautoref{sec:example}.
As explained in the main paper, we use two different proof techniques, one
that is ``context-based'', and the other ``trace-based'', to prove two different
security criteria for the same compilation chain.
We argue that one of these techniques, the trace-based one, while less powerful,
still gives us an interesting criterion, and should be more generic, as it
relies less on the details of the languages.

\paragraph{A remark on the security criteria used in this section}
In the languages used in this example, not all programs and contexts can be
linked together.
In order for it to be the case, they have to satisfy some interfacing
constraints.
Here, these constraints are the existence of functions called but not defined by
the context, and, in the source language, also agreement on the types of these
functions.
We introduce the operators \(\src{\bowtie}\) and \(\trg{\bowtie}\) to represent
these constraints. For instance, \(\src{P\bowtie C}\) means that \(\src{C}\) is
linkable with \(\src{P}\).
We prove two 
(\autoref{sec:rel-hyper}) and \pf{\rfrxp}
(\autoref{sec:fin-rel-Xsafety}), named \(\pf{\rrhp}_{\bowtie}\)
(\autoref{def:rrhc-linkable}) and \(\pf{\rfrxp}_{\bowtie}\)
(\autoref{def:rfrxc-linkable}), that take into account these
linkability predicates.


\subsection{The Source Language \texorpdfstring{\Lt}{}}\label{sec:inst-src}

A list of elements $e_1,\cdots,e_n$ is indicated as \OB{e}, the empty list is $\emptyset$.

\subsubsection{Syntax}
\begin{align*}
	\mi{Program}~\src{P} \bnfdef&\ \src{\OB{I};\OB{F}}
	\\
	\mi{Contexts}~\src{C} \bnfdef&\ \src{e}
	\\
	\mi{Interfaces}~\src{I} \bnfdef&\ \src{f:\tau\to\tau}
	\\
	\mi{Functions}~\src{F} \bnfdef&\ \src{f(x:\tau):\tau \mapsto \ret e}
	\\
	\mi{Types}~\src{\tau} \bnfdef&\ \Bools \mid \Nats
	\\
	\mi{Operations}~\src{\op} \bnfdef&\ \src{+} \mid \src{-}
	\\
	\mi{Values}~\src{v} \bnfdef&\ \trues \mid \falses \mid \src{n}\in\mb{N}
	\\
	\mi{Expressions}~\src{e} \bnfdef&\ \src{x} \mid \src{v} \mid \src{e \op e} \mid \src{\letin{x:\tau}{e}{e}} \mid \src{\ifte{e}{e}{e}} \mid \src{e \geq e}
	\\
	\mid&\ \src{\call{f}~e} \mid \src{\readexp} \mid \src{\writeexp{e}} \mid \fails 
	\\
	\mi{Runtime\ Expr.} ~\src{e} \bnfdef&\ \cdots \mid \src{\ret e}
	\\
	\mi{Eval.\ Ctxs.}~\src{\evalctx} \bnfdef&\ \src{\hole{\cdot}} \mid \src{e \op \evalctx} \mid \src{\evalctx \op n} \mid \src{\letin{x}{\evalctx}{e}} \mid \src{\ifte{\evalctx}{e}{e}} \mid \src{e \geq \evalctx} \mid \src{\evalctx \geq n}
	\\
	\mid&\ \src{\call{f}~\evalctx} \mid \src{\writeexp{\evalctx}} \mid \src{\ret \evalctx}
	\\
	\mi{Substitutions}~\src{\rho} \bnfdef&\ \subs{v}{x}
	\\
	\mi{Prog.\ States}~\src{\Omega}\bnfdef&\ \src{P\triangleright e } \mid \fails
	\\ 
	\mi{Environments}~\src{\Gamma} \bnfdef&\ \srce \mid \src{\Gamma; (x:\tau)}
	\\
	\mi{Labels}~\bl{\lambda} \bnfdef&\ \bl{\epsilon} \mid \bl{\alpha}
	\\
	\mi{Actions}~\bl{\alpha} \bnfdef&\ \bl{\rdl{n}} \mid \bl{\wrl{n}} \mid \terc \mid \divc \mid \failactc
	\\
        \mi{Interactions}~\bl{\gamma} \bnfdef&\ \bl{\cl{f}{v}} \mid \bl{\rt{v}} 
	\\
	\mi{Behaviors}~\bl{\beta} \bnfdef&\ \bl{\OB{\alpha}}
        \\
        \mi{Traces}~\bl{\sigma} \bnfdef&\ \emptyset \mid \bl{\sigma\alpha} \mid \bl{\sigma\gamma}
\end{align*}

\subsubsection{Static Semantics}\label{sec:typ-src}
The static semantics follows these typing judgements.
\begin{align*}
	&\vdash \src{P}
	&&\text{Program \src{P} is well-typed.}
	\\
	&\src{P}\vdash\src{F} : \src{\tau\to\tau} 
	&&\text{Function \src{F} has type \src{\tau\to\tau} in program \src{P}.}
	\\
	&\src{\Gamma}\vdash\diamond 
	&&\text{Environment \src{\Gamma} is well-formed.}
	\\
	&\src{P;\Gamma}\vdash \src{e} : \src{\tau}
	&&\text{Expression \src{e} has type \src{\tau} in \src{\Gamma} and \src{P}.}
\end{align*}

\mytoprule{\vdash \src{P}}
\begin{center}
	\typerule{T\Lt-component}{
		\src{P}\equiv\src{\OB{I} ; \OB{F}} 
		&
		\src{P}\vdash\src{\OB{F}}:\src{\tau\to\tau}
		&
		\dom{\src{\OB{F}}}\subseteq\src{\OB{I}}
	}{
		\vdash \src{P}
	}{ts-co}
\end{center}
\botrule
\mytoprule{\src{P}\vdash\src{F} : \src{\tau\to\tau}}
\begin{center}
	\typerule{T\Lt-function}{
		\src{F}\equiv \src{f(x:\tau):\tau'\mapsto \ret e}
		&
		\src{P};\src{x:\tau}\vdash \src{e} : \src{\tau'}
	}{
		\src{P}\vdash\src{F}:\src{\tau\to\tau'}
	}{ts-fu}
\end{center}
\botrule
\mytoprule{\src{P;\Gamma}\vdash \src{e} : \src{\tau}}
\begin{center}
	\typerule{T\Lt-true}{
		\src{\Gamma}\vdash\diamond
	}{
		\src{P;\Gamma}\vdash\trues:\Bools
	}{ts-true}
	\typerule{T\Lt-false}{
		\src{\Gamma}\vdash\diamond
	}{
		\src{P;\Gamma}\vdash\falses:\Bools
	}{ts-false}
	\typerule{T\Lt-nat}{
		\src{\Gamma}\vdash\diamond
	}{
		\src{P;\Gamma}\vdash\src{n}:\Nats
	}{ts-nat}
	\typerule{T\Lt-var}{
		\src{x:\tau}\in\src{\Gamma}
	}{
		\src{P;\Gamma}\vdash\src{x}:\src{\tau}
	}{ts-var}
	\typerule{T\Lt-op}{
		\src{P;\Gamma}\vdash \src{e} : \Nats
		\\
		\src{P;\Gamma}\vdash \src{e'} : \Nats
	}{
		\src{P;\Gamma}\vdash \src{e \op e'} : \Nats
	}{ts-op}
	\typerule{T\Lt-geq}{
		\src{P;\Gamma}\vdash \src{e} : \Nats
		\\
		\src{P;\Gamma}\vdash \src{e'} : \Nats
	}{
		\src{P;\Gamma}\vdash \src{e \geq e'} : \Bools
	}{ts-geq}
	\typerule{T\Lt-letin}{
		\src{P;\Gamma}\vdash \src{e} : \src{\tau} 
		\\
		\src{P;\Gamma;x:\tau}\vdash \src{e'} : \src{\tau'} 
	}{
		\src{P;\Gamma}\vdash \src{\letin{x:\tau}{e}{e'}} : \src{\tau'} 
	}{ts-vardef}
	\typerule{T\Lt-if}{
		\src{P;\Gamma}\vdash \src{e} : \Bools
		\\
		\src{P;\Gamma}\vdash \src{e_t} : \src{\tau} 
		&
		\src{P;\Gamma}\vdash \src{e_f} : \src{\tau} 
	}{
		\src{P;\Gamma}\vdash \src{\ifte{e}{e_t}{e_f}} : \src{\tau} 
	}{ts-if}
	\typerule{T\Lt-function-call}{
		(\src{f(x:\tau):\tau' \mapsto \ret e}\in\src{\OB{F}})
		\\
		\src{P}\equiv \src{\OB{I};\OB{F}}
		&
		\src{P;\Gamma}\vdash \src{e} : \src{\tau}
	}{
		\src{P;\Gamma}\vdash \src{\call{f}~e} : \src{\tau'}
	}{ts-fun}
	\typerule{T\Lt-context-function-call}{
		(\src{f(x:\tau):\tau' \mapsto \ret e}\in\src{\OB{I}})
		\\
		\src{P}\equiv \src{\OB{I};\OB{F}}
		&
		\src{P;\Gamma}\vdash \src{e} : \src{\tau}
	}{
		\src{P;\Gamma}\vdash \src{\call{f}~e} : \src{\tau'}
	}{ts-fun-c}
        \typerule{T\Lt-read}{
	}{
		\src{P;\Gamma}\vdash\src{\readexp}:\Nats
	}{ts-read}
	\typerule{T\Lt-write}{
		\src{P;\Gamma}\vdash \src{e} : \Nats
	}{
		\src{P;\Gamma}\vdash\src{\writeexp{e}}:\Nats
	}{ts-write}
	\typerule{T\Lt-fail}{
	}{
		\src{P;\Gamma}\vdash\fails:\src{\tau}
	}{ts-fail}
\end{center}
\botrule

\mytoprule{\src{P \bowtie C} }
\begin{center}
  \typerule{Link-\Lt}{
		\src{P}\equiv \src{\OB{I};\OB{F}}
		&
		\src{C}\equiv\src{e}
		&\
		\vdash\src{P}
		&\
		\src{P};\src{\Gamma}\vdash\src{e}:\src{\tau}
		\\
		\fails \notin \src{P}
		&
		\src{\readexp}, \src{\writeexp{\_}}\notin\src{C}
		&
		\forall \src{\call{f}}\in\src{C}, \src{f}\in\src{\OB{I}}
  }{
    \src{P \bowtie C}
  }{linkable-src}
\end{center}
\botrule

\subsubsection{Dynamic Semantics}\label{sec:sem-src}
\begin{align*}
	&\src{\Omega \xtosb{\lambda} \Omega'} 
	&& \text{Program state \src{\Omega} steps to \src{\Omega'} emitting action \src{\lambda}.}
	\\
	&\src{\Omega \Xtosb{\beta} \Omega'} 
	&& \text{Program state \src{\Omega} steps to \src{\Omega'} with behavior \src{\beta}.}
\end{align*}

\mytoprule{\src{P\triangleright e \xtosb{\lambda} P\triangleright e'} }
\begin{center}
	\typerule{E\Lt-op}{
		n \op n'= n''
	}{
		\src{P\triangleright n\op n' \xtosb{\epsilon} P\triangleright n''} 
	}{ev-s-op}
	\typerule{E\Lt-geq-true}{
		n \geq n'
	}{
		\src{P\triangleright n\geq n' \xtosb{\epsilon} P\triangleright \trues} 
	}{ev-s-geq-t}
	\typerule{E\Lt-geq-false}{
		n < n'
	}{
		\src{P\triangleright n\geq n' \xtosb{\epsilon} P\triangleright \falses} 
	}{ev-s-geq-f}
	\typerule{E\Lt-if-true}{
	}{
		\src{P\triangleright \ifte{\trues}{e}{e'} \xtosb{\epsilon} P\triangleright e} 
	}{ev-s-if-t}
	\typerule{E\Lt-if-false}{
	}{
		\src{P\triangleright \ifte{\falses}{e}{e'} \xtosb{\epsilon} P\triangleright e'} 
	}{ev-s-if-f}
	\typerule{E\Lt-let}{
	}{
		\src{P\triangleright \letin{x}{v}{e} \xtosb{\epsilon} P\triangleright e\subs{v}{x}} 
	}{ev-s-let}
        \typerule{E\Lt-call-internal}{
		\src{f(x : \tau_1) :\tau_2 \mapsto \ret e} \in \src{P}
	}{
		\src{P\triangleright_{\OB{f}} \call{f}~v \xtosb{\epsilon} P\triangleright_{\OB{f},f} \ret e\subs{v}{x}} 
	}{ev-s-call-inte}
	\typerule{E\Lt-call-in}{
		\src{f(x : \tau_1) :\tau_2 \mapsto \ret e} \in \src{P}
	}{
		\src{P\triangleright_{\epsilon} \call{f}~v \xtosb{\cl{f}{v}} P\triangleright_{f} \ret e\subs{v}{x}} 
	}{ev-s-call-in}
        \typerule{E\Lt-ret-internal}{
	}{
		\src{P\triangleright_{\OB{f},f,f'} \ret v \xtosb{\epsilon} P\triangleright_{\OB{f},f} v} 
	}{ev-s-ret-inte}
        \typerule{E\Lt-ret-out}{
	}{
		\src{P\triangleright_{f} \ret v \xtosb{\rt{v}} P\triangleright_{} v} 
	}{ev-s-ret-out}
	\typerule{E\Lt-read}{
	}{
		\src{P\triangleright \readexp \xtosb{\rdl{n}} P\triangleright n} 
	}{ev-s-read}
	\typerule{E\Lt-write}{
	}{
		\src{P\triangleright \writeexp{n} \xtosb{\wrl{n}} P\triangleright n} 
	}{ev-s-write}
	\typerule{E\Lt-ctx}{
		\src{P\triangleright e} \xtosb{\epsilon} \src{P\triangleright e'}
	}{
		\src{P \triangleright \evalctxs{e}} \xtosb{\epsilon} \src{P\triangleright \evalctxs{e'}}
	}{ev-s-cth}
	\typerule{E\Lt-fail}{
	}{
		\src{P \triangleright \fails} \xtosb{\failactc} \fails
	}{ev-s-fail}
\end{center}
\botrule

\mytoprule{\src{P\triangleright e \Xtosb{\beta} P\triangleright e'} }
\begin{center}
	\typerule{E\Lt-refl}{
	}{
		\src{\Omega} \Xtosb{} \src{\Omega}
	}{ev-s-refl}
	\typerule{E\Lt-terminate}{
		\src{\Omega} \nXtos{} \src{\_}
	}{
		\src{\Omega} \Xtosb{\termc} \src{\Omega}
	}{ev-s-term}
	\typerule{E\Lt-diverge}{
		\forall \src{n}.~ \src{\Omega} \src{\xtos{\epsilon}\redapp{n}} \src{\Omega_n'}
	}{
		\src{\Omega} \Xtosb{\divrc} \src{\Omega}
	}{ev-s-divr}
	\typerule{E\Lt-silent}{
		\src{\Omega}\xtosb{\epsilon}\src{\Omega'}
	}{
		\src{\Omega}\Xtosb{}\src{\Omega'}
	}{beh-s-silent}
	\typerule{E\Lt-single}{
		\src{\Omega}\xtosb{\alpha}\src{\Omega'}
	}{
		\src{\Omega}\Xtosb{\alpha}\src{\Omega'}
	}{beh-s-sin}
        \typerule{E\Lt-silent-act}{
		\src{\Omega}\xtosb{\gamma}\src{\Omega'}
	}{
		\src{\Omega}\Xtosb{}\src{\Omega'}
	}{beh-s-silent-a}
	\typerule{E\Lt-cons}{
		\src{\Omega}\Xtosb{\beta}\src{\Omega''}
		&
		\src{\Omega''}\Xtosb{\beta'}\src{\Omega'}
	}{
		\src{\Omega}\Xtosb{\beta\beta'}\src{\Omega'}
	}{beh-s-cons}	
\end{center}
\botrule

\mytoprule{\src{P\triangleright e \Xtolsb{t} P\triangleright e'} }
\begin{center}	
	\typerule{E\Lt-silent}{
		\src{\Omega}\xtosb{\epsilon}\src{\Omega'}
	}{
		\src{\Omega\Xtolsb{} \Omega'}
	}{tr-s-silent}
	\typerule{E\Lt-action}{
		\src{\Omega}\Xtosb{\alpha}\src{\Omega'}
	}{
		\src{\Omega\Xtolsb{\alpha}\Omega'}
	}{tr-s-act}
	\typerule{E\Lt-single}{
		\src{\Omega}\xtosb{\gamma}\src{\Omega'}
	}{
		\src{\Omega\Xtolsb{\gamma}\Omega'}
	}{tr-s-sin}
	\typerule{E\Lt-cons}{
		\src{\Omega\Xtolsb{\sigma}\Omega''}
		\\
		\src{\Omega''\Xtolsb{\sigma'}\Omega'}
	}{
		\src{\Omega\Xtolsb{\sigma\sigma'}\Omega'}
	}{tr-s-cons}	
\end{center}
\botrule

\subsubsection{Auxiliaries and Definitions}

\mytoprule{\text{Helpers}}
\begin{center}
	\typerule{\Lt-Initial State}{
	  \src{P \bowtie C} &
          \src{C} \equiv \src{e}
	}{
		\SInits{\src{\plug{C}{P}}} = \src{P\triangleright e}
	}{s-ini}
\end{center}
\botrule

\begin{definition}[Program Behaviors]\label{def:src-prog-beh}
	\begin{align*}
		\behavs{\src{P}}  
		&= \myset{ \bl{\beta} }{ \exists\src{\Omega'}. \SInits{P}\Xtosb{\beta}\src{\Omega'} }		
	\end{align*}	
\end{definition}

\begin{theorem}[Progress]
If $\src{P;\Gamma}\vdash\src{e}:\src{\tau}$ then either $\src{e}\equiv\src{v}$ or $\exists\src{e'}. \src{P\triangleright e\reds P\triangleright e'}$.
\end{theorem}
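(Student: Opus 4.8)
The plan is to prove this by induction on the typing derivation $\src{P;\Gamma}\vdash\src{e}:\src{\tau}$ (equivalently, by structural induction on $\src{e}$), following the standard progress recipe. The statement is to be read for \emph{closed} expressions, i.e.\ with $\src{\Gamma}=\srce$, so that the case for rule (T$\Lt$-var) is vacuous: a free variable is the only well-typed source term that is neither a value nor a redex, and there is no reduction rule for a bare variable. The two ingredients are a canonical forms lemma and a case analysis that uses the evaluation contexts $\src{\evalctx}$ to delegate the ``congruence'' subcases to the induction hypothesis.

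First I would establish canonical forms: a closed value of type $\Bools$ is either $\trues$ or $\falses$, and a closed value of type $\Nats$ is a numeral $\src{n}$. Each fact follows by inspecting which value-forming typing rules ((T$\Lt$-true), (T$\Lt$-false), (T$\Lt$-nat)) can assign the type in question. These are exactly what let me fire a computation rule once the relevant operands are known to be values.

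Next comes the induction. The value cases discharge the left disjunct immediately. For each compound form --- $\src{e\op e'}$, $\src{e\geq e'}$, $\src{\ifte{e}{e_t}{e_f}}$, $\src{\letin{x:\tau}{e}{e'}}$, $\src{\call{f}~e}$, and $\src{\writeexp{e}}$ --- I apply the induction hypothesis to the relevant immediate subexpression(s), where the evaluation-context grammar dictates the order. For a binary operator I examine the right operand first (context $\src{e\op\evalctx}$, $\src{e\geq\evalctx}$), and only once it is a value do I examine the left operand (context $\src{\evalctx\op n}$, $\src{\evalctx\geq n}$); in either situation, if the examined subterm is not a value the induction hypothesis supplies a step, which the congruence rule (E$\Lt$-ctx) lifts to the whole term. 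When all relevant subterms are values, canonical forms fixes their shape ($\src{n_1},\src{n_2}$ for arithmetic and comparison, $\trues$ or $\falses$ for a conditional guard, a value argument for a call), and the matching reduction rule ((E$\Lt$-op), (E$\Lt$-geq-true/false), (E$\Lt$-if-true/false), (E$\Lt$-let), (E$\Lt$-call-in) or (E$\Lt$-call-internal)) applies. Here well-typedness of $\src{P}$ is used to guarantee that a called $\src{f}$ actually has a definition $\src{f(x{:}\tau){:}\tau'\mapsto\ret{e}}\in\src{P}$; the form $\src{\readexp}$ always steps by (E$\Lt$-read).

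The main obstacle, and the only genuinely non-routine point, is the treatment of the terminal and runtime forms. The expression $\fails$ does take a step, but by (E$\Lt$-fail) it goes to the dedicated failure state $\fails$ rather than to a configuration of the shape $\src{P\triangleright e'}$; strictly speaking the conclusion must therefore be read as ``$\src{e}$ is a value, or $\src{P\triangleright e}$ steps to \emph{some} state $\src{\Omega'}$'', or else $\fails$ must be admitted as a terminal configuration. A similar remark applies to the runtime-only form $\src{\ret{e}}$, which carries no source typing rule and so is either outside the scope of the statement or handled by its evident reductions (reduce the body under $\src{\ret{\evalctx}}$, and pass a value out of the innermost frame by (E$\Lt$-ret-out) / (E$\Lt$-ret-internal)). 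I would pin down this intended reading at the outset, after which the induction proceeds mechanically.
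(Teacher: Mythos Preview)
The paper states this theorem without proof; it is listed as a standard type-safety result for the source language \Lt{} and no argument is given in the text. Your proposal is exactly the textbook induction-on-the-typing-derivation approach one would expect here, so there is nothing in the paper to compare against beyond noting that the authors evidently regarded the result as routine.

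Your handling of the cases is sound, and you are right to flag the $\fails$ case as the one point of friction: (T\Lt-fail) types $\fails$ at any $\src{\tau}$, yet (E\Lt-fail) sends $\src{P\triangleright\fails}$ to the terminal state $\fails$ rather than to some $\src{P\triangleright e'}$, so the theorem as literally stated does not cover it; reading the conclusion as ``steps to some program state $\src{\Omega'}$'' is the obvious fix. One small refinement: your justification for the call case (``well-typedness of $\src{P}$ guarantees a called $\src{f}$ has a definition'') is not quite supported by (T\Lt-component), which only gives $\dom{\src{\OB{F}}}\subseteq\src{\OB{I}}$; rule (T\Lt-context-function-call) lets a well-typed expression call an interface function that need not be in $\src{\OB{F}}$, and neither (E\Lt-call-in) nor (E\Lt-call-internal) would fire. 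You would need the additional assumption that every function in $\src{\OB{I}}$ is defined in $\src{\OB{F}}$ (or restrict attention to expressions typed with (T\Lt-function-call) only) to close this case.
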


\begin{theorem}[Preservation]
If $\src{P;\Gamma}\vdash\src{e}:\src{\tau}$ and $\src{P\triangleright e\reds P\triangleright e'}$ then $\src{P;\Gamma}\vdash\src{e'}:\src{\tau}$.
\end{theorem}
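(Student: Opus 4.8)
The statement is the standard subject-reduction (preservation) property for the source language \Lt, and the plan is to prove it by the usual syntactic method: after isolating two auxiliary lemmas, I would do a case analysis on the last rule used in the single-step reduction $\src{P\triangleright e\reds P\triangleright e'}$, restricting attention to reductions whose result is again of the form $\src{P\triangleright e'}$ (so the $\src{\fails}$ rule, which steps to the terminal state $\fails$ carrying no expression, is out of scope). Since the program $\src{P}$ is fixed along every reduction, I would adopt as a standing invariant the program well-formedness judgment $\vdash\src{P}$ (rule T\Lt-component): this is what guarantees, via T\Lt-function, that the body of every callable function is well-typed at its declared result type. Without it the function-call cases cannot be closed, because inverting T\Lt-function-call only recovers the \emph{declared} signature of $\src{f}$, not the typing of its body. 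I would also read the runtime marker $\src{\ret{e}}$ as transparent, i.e.\ typed at the type of $\src{e}$, matching the declared result type of the enclosing function.

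The first auxiliary result is a substitution lemma: if $\src{P;\Gamma;(x:\tau)}\vdash\src{e}:\src{\tau'}$ and $\src{P;\Gamma}\vdash\src{v}:\src{\tau}$ then $\src{P;\Gamma}\vdash\src{e\subs{v}{x}}:\src{\tau'}$, proved by induction on the typing derivation of $\src{e}$. The only mildly delicate cases are the variable case (split on whether the variable is $\src{x}$) and the binders $\src{\letin{x}{e}{e'}}$ and $\src{\ifte{e}{e'}{e''}}$, which extend $\src{\Gamma}$ and thus appeal to a routine weakening lemma; weakening is in any case immediate here, since every substituted $\src{v}$ is a ground value ($\trues$, $\falses$, or some $\src{n}$) and hence typeable in any well-formed environment.

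The second auxiliary result is a decomposition (replacement) lemma for evaluation contexts: whenever $\src{P;\Gamma}\vdash\src{\evalctxs{e}}:\src{\tau}$ there is a type $\src{\sigma}$ with $\src{P;\Gamma}\vdash\src{e}:\src{\sigma}$, and conversely $\src{P;\Gamma}\vdash\src{e''}:\src{\sigma}$ entails $\src{P;\Gamma}\vdash\src{\evalctxs{e''}}:\src{\tau}$; this is shown by structural induction on $\src{\evalctx}$ and is exactly what lets the induction hypothesis pass through the congruence rule E\Lt-ctx. With both lemmas available, the case analysis is routine. The arithmetic, comparison, $\readexp$ and $\writeexp{\cdot}$ reductions are immediate, since the reduct is a value of the type ($\Nats$ or $\Bools$) dictated by the matching typing rule; the two conditional rules follow by inverting T\Lt-if, which assigns both branches the type of the whole expression; the two return rules $\src{\ret v}\reds\src{v}$ follow by inversion. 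The substantive base cases are $\src{let}$ and the two call rules: each inverts the relevant typing rule (T\Lt-letin, resp.\ T\Lt-function-call, the latter using $\vdash\src{P}$ to obtain that the body is typed at $\src{\tau'}$ under $\src{x:\tau}$) and then invokes the substitution lemma, after which the $\src{\ret}$-wrapped reduct retains the type of its subexpression. The single inductive case, E\Lt-ctx, is discharged by the decomposition lemma together with the induction hypothesis.

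The hard part will be bookkeeping rather than conceptual depth: phrasing the substitution and weakening statements so that the binder cases compose cleanly with the hole-typing of the decomposition lemma, and—most importantly—making explicit the reliance on $\vdash\src{P}$ in the call cases, since the typing judgment threads $\src{P}$ but the theorem as stated does not list its well-formedness as a hypothesis. I would therefore either add $\vdash\src{P}$ to the statement or carry it as a global assumption justified by the fact that $\src{P}$ is invariant under $\reds$.
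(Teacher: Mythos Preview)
The paper does not actually supply a proof for this theorem: it is stated in the appendix alongside Progress as a standard metatheoretic fact about \Lt, with no accompanying argument. Your proposal is the textbook syntactic route (substitution lemma, evaluation-context decomposition/replacement, then case analysis on the step relation), and it is correct for this language; in particular you are right that the call cases rely on $\vdash\src{P}$ to recover well-typedness of the callee's body, and that $\src{\ret e}$ must be treated as type-transparent since no explicit typing rule is given for this runtime form. One small slip: you list $\src{\ifte{e}{e'}{e''}}$ among the binders that extend $\src{\Gamma}$ in the substitution lemma, but it is not a binder here---only $\src{let}$ is---so that case needs no weakening.
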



\subsection{The Target Language \texorpdfstring{\Ld}{}}\label{sec:inst-trg-dyn}

\subsubsection{Syntax}
\begin{align*}
	\mi{Program}~\trg{P} \bnfdef&\ \trg{\OB{I};\OB{F}}
	\\
	\mi{Contexts}~\trg{C} \bnfdef&\ \trg{e}
	\\
	\mi{Interfaces}~\trg{I} \bnfdef&\ \trg{f}
	\\
	\mi{Functions}~\trg{F} \bnfdef&\ \trg{f(x) \mapsto \ret e}
	\\
	\mi{Types}~\trg{\tau} \bnfdef&\ \Boolt \mid \Natt
	\\
	\mi{Operations}~\trg{\op} \bnfdef&\ \trg{+} \mid \trg{-}
	\\
	\mi{Values}~\trg{v} \bnfdef&\ \truet \mid \falset \mid \trg{n}\in\mb{N}
	\\
	\mi{Expressions}~\trg{e} \bnfdef&\ \trg{x} \mid \trg{v} \mid \trg{e \op e} \mid \trg{\letin{x}{e}{e}} \mid \trg{\ifte{e}{e}{e}} \mid \trg{e \geq e}
	\\
	\mid&\ \trg{\call{f}~e} \mid \trg{\readexp} \mid \trg{\writeexp{e}} \mid \failt \mid \trg{e \checkty{\tau}}
	\\
	\mi{Runtime\ Expr.} ~\trg{e} \bnfdef&\ \cdots \mid \trg{\ret e}
	\\
	\mi{Eval.\ Ctxs.}~\trg{\evalctx} \bnfdef&\ \trg{\hole{\cdot}} \mid \trg{e \op \evalctx} \mid \trg{\evalctx \op n} \mid \trg{\letin{x}{\evalctx}{e}} \mid \trg{\ifte{\evalctx}{e}{e}} \mid \trg{e \geq \evalctx} \mid \trg{\evalctx \geq n}
	\\
	\mid&\ \trg{\call{f}~\evalctx} \mid \trg{\writeexp{\evalctx}} \mid \trg{\ret \evalctx} \mid \trg{\evalctx \checkty{\tau}}
	\\
	\mi{Substitutions}~\trg{\rho} \bnfdef&\ \subt{v}{x}
	\\
	\mi{Prog.\ States}~\trg{\Omega}\bnfdef&\ \trg{P\triangleright_{\OB{f}} e } \mid \failt
	\\
	\mi{Labels}~\bl{\lambda} \bnfdef&\ \bl{\epsilon} \mid \bl{\alpha} \mid \bl{\gamma}
	\\
	\mi{Actions}~\bl{\alpha} \bnfdef&\ \bl{\rdl{n}} \mid \bl{\wrl{n}} \mid \terc \mid \divrc \mid \failactc
	\\
	\mi{Interactions}~\bl{\gamma} \bnfdef&\ \bl{\cl{f}{v}} \mid \bl{\rt{v}} 
	\\
	\mi{Behaviors}~\bl{\beta} \bnfdef&\ \bl{\OB{\alpha}}
	\\
	\mi{Traces}~\bl{\sigma} \bnfdef&\ \emptyset \mid \bl{\sigma\alpha} \mid \bl{\sigma\gamma}
\end{align*}
Program states carry around the stack of called functions (the \trg{\OB{f}} subscript) in order to correctly characterise calls and returns that go in Traces.
We mostly omit this stack when it just clutters the presentation without itself changing and make it explicit only when it is needed.

We define the linkability operator as follows:

\mytoprule{\trg{P \bowtie C} }
\begin{center}
  \typerule{Link-\Ld}{
		\trg{P}\equiv \trg{\OB{I};\OB{F}}
		&
		\trg{C}\equiv \trg{e}
		\\
		\trg{\readexp}, \trg{\writeexp{\_}}\notin\trg{C}
		&
		\forall \trg{\call{f}}\in\trg{C}, \trg{f}\in\trg{\OB{I}}
  }{
    \trg{P \bowtie C}
  }{linkable-src}
\end{center}
\botrule

\subsubsection{Dynamic Semantics}
\begin{align*}
	&\trg{\Omega \xtotb{\lambda} \Omega'} 
	&& \text{Program state \trg{\Omega} steps to \trg{\Omega'} emitting action \trg{\lambda}.}
	\\
	&\trg{\Omega \Xtotb{\beta} \Omega'} 
	&& \text{Program state \trg{\Omega} steps to \trg{\Omega'} with behavior \trg{\beta}.}
	\\
	&\trg{\Omega \Xtoltb{\sigma} \Omega'} 
	&& \text{Program state \trg{\Omega} steps to \trg{\Omega'} with trace \trg{\sigma}.}
\end{align*}

\mytoprule{\trg{P\triangleright e \xtotb{\lambda} P\triangleright e'} }
\begin{center}
	\typerule{E\Ld-op}{
		n \op n'= n''
	}{
		\trg{P\triangleright n\op n' \xtotb{\epsilon} P\triangleright n''} 
	}{ev-t-op}
	\typerule{E\Ld-geq-true}{
		n \geq n'
	}{
		\trg{P\triangleright n\geq n' \xtotb{\epsilon} P\triangleright \truet} 
	}{ev-t-geq-t}
	\typerule{E\Ld-geq-false}{
		n < n'
	}{
		\trg{P\triangleright n\geq n' \xtotb{\epsilon} P\triangleright \falset} 
	}{ev-t-geq-f}
	\typerule{E\Ld-if-true}{
	}{
		\trg{P\triangleright \ifte{\truet}{e}{e'} \xtotb{\epsilon} P\triangleright e} 
	}{ev-t-if-t}
	\typerule{E\Ld-if-false}{
	}{
		\trg{P\triangleright \ifte{\falset}{e}{e'} \xtotb{\epsilon} P\triangleright e'} 
	}{ev-t-if-f}
	\typerule{E\Ld-let}{
	}{
		\trg{P\triangleright \letin{x}{v}{e} \xtotb{\epsilon} P\triangleright e\subt{v}{x}} 
	}{ev-t-let}
	\typerule{E\Ld-read}{
	}{
		\trg{P\triangleright \readexp \xtotb{\rdl{n}} P\triangleright n} 
	}{ev-t-read}
	\typerule{E\Ld-write}{
	}{
		\trg{P\triangleright \writeexp{n} \xtotb{\wrl{n}} P\triangleright n} 
	}{ev-t-write}
	\typerule{E\Ld-ctx}{
		\trg{P\triangleright e} \xtotb{\epsilon} \trg{P\triangleright e'}
	}{
		\trg{P \triangleright \evalctxt{e}} \xtotb{\epsilon} \trg{P\triangleright \evalctxt{e'}}
	}{ev-t-cth}
	\typerule{E\Ld-check-bool-true}{
		\trg{v}\equiv\truet \vee \trg{v}\equiv\falset
	}{
		\trg{P\triangleright v \checkty{\Bool} \xtotb{\epsilon} P\triangleright \truet} 
	}{ev-t-ch-b-t}
	\typerule{E\Ld-check-bool-false}{
	}{
		\trg{P\triangleright n \checkty{\Bool} \xtotb{\epsilon} P\triangleright \falset} 
	}{ev-t-ch-b-f}
	\typerule{E\Ld-check-nat-true}{
	}{
		\trg{P\triangleright n \checkty{\Nat} \xtotb{\epsilon} P\triangleright \truet} 
	}{ev-t-ch-n-t}
	\typerule{E\Ld-check-nat-false}{
		\trg{v}\equiv\truet \vee \trg{v}\equiv\falset
	}{
		\trg{P\triangleright v \checkty{\Nat} \xtotb{\epsilon} P\triangleright \falset} 
	}{ev-t-ch-n-f}
	\typerule{E\Ld-call-internal}{
		\trg{f(x) \mapsto \ret e} \in \trg{P}
	}{
		\trg{P\triangleright_{\OB{f}} \call{f}~v \xtotb{\epsilon} P\triangleright_{\OB{f},f} \ret e\subt{v}{x}} 
	}{ev-t-call-inte}
	\typerule{E\Ld-call-in}{
		\trg{f(x) \mapsto \ret e} \in \trg{P}
	}{
		\trg{P\triangleright_{\epsilon} \call{f}~v \xtotb{\cl{f}{v}} P\triangleright_{f} \ret e\subt{v}{x}} 
	}{ev-t-call-in}
	\typerule{E\Ld-ret-internal}{
	}{
		\trg{P\triangleright_{\OB{f},f,f'} \ret v \xtotb{\epsilon} P\triangleright_{\OB{f},f} v} 
	}{ev-t-ret-inte}
	\typerule{E\Ld-ret-out}{
	}{
		\trg{P\triangleright_{f} \ret v \xtotb{\rt{v}} P\triangleright_{} v} 
	}{ev-t-ret-out}
	\typerule{E\Ld-op-fail}{
		\trg{v}\equiv\truet \vee \trg{v}\equiv\falset \vee \trg{v'}\equiv\truet \vee \trg{v'}\equiv\falset
	}{
		\trg{P\triangleright v\op v' \xtotb{\failactc} \failt} 
	}{ev-t-op-fail}
	\typerule{E\Ld-geq-fail}{
		\trg{v}\equiv\truet \vee \trg{v}\equiv\falset \vee \trg{v'}\equiv\truet \vee \trg{v'}\equiv\falset
	}{
		\trg{P\triangleright v\geq v' \xtotb{\failactc} \failt} 
	}{ev-t-geq-fail}
	\typerule{E\Ld-if-fail}{
	}{
		\trg{P\triangleright \ifte{n}{e}{e'} \xtotb{\failactc} \failt} 
	}{ev-t-if-fail}
	\typerule{E\Ld-fail}{
	}{
		\trg{P \triangleright \failt} \xtotb{\failactc} \failt
	}{ev-t-fail}
\end{center}
\botrule

\mytoprule{\trg{P\triangleright e \Xtotb{\beta} P\triangleright e'} }
\begin{center}
	\typerule{E\Ld-refl}{
	}{
		\trg{\Omega} \Xtotb{} \trg{\Omega}
	}{ev-t-refl}
	\typerule{E\Ld-terminate}{
		\trg{\Omega} \nXtot{} \trg{\_}
	}{
		\trg{\Omega} \Xtotb{\termc} \trg{\Omega}
	}{ev-t-term}
	\typerule{E\Ld-diverge}{
		\forall \trg{n}.~ \trg{\Omega} \trg{\xtot{\epsilon}\redapp{n}} \trg{\Omega_n'}
	}{
		\trg{\Omega} \Xtotb{\divrc} \trg{\Omega}
	}{ev-t-divr}
	\typerule{E\Ld-silent}{
		\trg{\Omega}\xtotb{\epsilon}\trg{\Omega'}
	}{
		\trg{\Omega}\Xtotb{}\trg{\Omega'}
	}{beh-t-silent}
	\typerule{E\Ld-silent-act}{
		\trg{\Omega}\xtotb{\gamma}\trg{\Omega'}
	}{
		\trg{\Omega}\Xtotb{}\trg{\Omega'}
	}{beh-t-silent-a}
	\typerule{E\Ld-single}{
		\trg{\Omega}\xtotb{\alpha}\trg{\Omega'}
	}{
		\trg{\Omega}\Xtotb{\alpha}\trg{\Omega'}
	}{beh-t-sin}
	\typerule{E\Ld-cons}{
		\trg{\Omega}\Xtotb{\beta}\trg{\Omega''}
		&
		\trg{\Omega''}\Xtotb{\beta'}\trg{\Omega'}
	}{
		\trg{\Omega}\Xtotb{\beta\beta'}\trg{\Omega'}
	}{beh-t-cons}	
\end{center}
\botrule

\mytoprule{\trg{P\triangleright e \Xtoltb{\sigma} P\triangleright e'} }
\begin{center}	
	\typerule{E\Ld-silent}{
		\trg{\Omega}\xtotb{\epsilon}\trg{\Omega'}
	}{
		\trg{\Omega\Xtoltb{} \Omega'}
	}{tr-t-silent}
	\typerule{E\Ld-action}{
		\trg{\Omega}\Xtotb{\alpha}\trg{\Omega'}
	}{
		\trg{\Omega\Xtoltb{\alpha}\Omega'}
	}{tr-t-act}
	\typerule{E\Ld-single}{
		\trg{\Omega}\xtotb{\gamma}\trg{\Omega'}
	}{
		\trg{\Omega\Xtoltb{\gamma}\Omega'}
	}{tr-t-sin}
	\typerule{E\Ld-cons}{
		\trg{\Omega\Xtoltb{\sigma}\Omega''}
		\\
		\trg{\Omega''\Xtoltb{\sigma'}\Omega'}
	}{
		\trg{\Omega\Xtoltb{\sigma\sigma'}\Omega'}
	}{tr-t-cons}	
\end{center}
\botrule

\subsubsection{Auxiliaries and Definitions}

\mytoprule{\text{Helpers}}
\begin{center}
	\typerule{\Ld-Initial State}{
	  \trg{P \bowtie C} &
          \trg{C} \equiv \trg{e}
	}{
		\SInitt{\trg{\plug{C}{P}}} = \trg{P\triangleright e}
	}{t-ini}
\end{center}
\botrule

\begin{definition}[Program Behaviors]\label{def:trg-prog-beh}
	\begin{align*}
		\behavt{\trg{P}}  
		&= \myset{ \bl{\beta} }{ \exists\trg{\Omega'}. \SInitt{P}\Xtotb{\beta}\trg{\Omega'} }		
	\end{align*}	
\end{definition}

\begin{definition}[Program Traces]\label{def:trg-prog-tr}
	\begin{align*}
		\trt{\trg{P}}  
		&= \myset{ \bl{\sigma} }{ \exists\trg{\Omega'}. \trg{\SInitt{P}\Xtoltb{\sigma}\Omega'} }		
	\end{align*}	
\end{definition}


\subsection{\texorpdfstring{\comptd{\cdot}: {}}{} A Compiler from \texorpdfstring{\Lt}{Source} to \texorpdfstring{\Ld}{Target}}\label{sec:inst-comp-td}

\begin{align*}
	\tag{\comptd{\cdot}-Prog}\oldlabel{app:tr:comptd-prog}
	\comptd{\src{I_1,\cdots,I_m;F_1,\cdots,F_n}} 
		=&\ 
		\trg{\comptd{\src{I_1}},\cdots,\comptd{I_m};\comptd{F_1},\cdots,\comptd{F_n}}
	\\
	\tag{\comptd{\cdot}-Intf}\label{tr:comptd-intf}
	\comptd{\src{f:\tau\to\tau'}} 
		=&\ 
		\trg{f}
	\\
	\tag{\comptd{\cdot}-Fun}\oldlabel{app:tr:comptd-fun}
	\comptd{\src{f(x:\tau):\tau'\mapsto \ret e}} 
		=&\ 
		\trg{
			f(x)\mapsto \ret \ifte{x \checkty{\comptd{\tau}}}{\comptd{e}}{\failt}
		}
	\\
	\tag{\comptd{\cdot}-Nat}\label{tr:comptd-nat}
	\comptd{\src{n}} 
		=&\ 
		\trg{n}
	\\
	\tag{\comptd{\cdot}-True}\label{tr:comptd-true}
	\comptd{\trues} 
		=&\ 
		\truet
	\\
	\tag{\comptd{\cdot}-False}\label{tr:comptd-false}
	\comptd{\falses} 
		=&\ 
		\falset
	\\
	\tag{\comptd{\cdot}-Var}\label{tr:comptd-var}
	\comptd{\src{x}} 
		=&\ 
		\trg{x}
	\\
	\tag{\comptd{\cdot}-Op}\label{tr:comptd-op}
	\comptd{\src{e\op e'}} 
		=&\ 
		\trg{\comptd{e}\op \comptd{e'}}
	\\
	\tag{\comptd{\cdot}-Geq}\label{tr:comptd-geq}
	\comptd{\src{e\geq e'}} 
		=&\ 
		\trg{\comptd{e}\geq \comptd{e'}}
	\\
	\tag{\comptd{\cdot}-Let}\label{tr:comptd-let}
	\comptd{\src{\letin{x:\tau}{e}{e'}}} 
		=&\ 
		\trg{\letin{x}{\comptd{e}}{\comptd{e'}}}
	\\
	\tag{\comptd{\cdot}-If}\label{tr:comptd-if}
	\comptd{\src{\ifte{e}{e'}{e''}}} 
		=&\ 
		\trg{\ifte{\comptd{e}}{\comptd{e'}}{\comptd{e''}}}
	\\
	\tag{\comptd{\cdot}-Call}\oldlabel{app:tr:comptd-call}
	\comptd{\src{\call{f}~e}} 
		=&\ 
		\trg{\call{f}~\comptd{e}}
	\\
	\tag{\comptd{\cdot}-Read}\label{tr:comptd-read}
	\comptd{\src{\readexp}} 
		=&\ 
		\trg{\readexp}
	\\
	\tag{\comptd{\cdot}-Write}\label{tr:comptd-write}
	\comptd{\src{\writeexp{e}}} 
		=&\ 
		\trg{\writeexp{\comptd{e}}}
	\\
	\tag{\comptd{\cdot}-Ty-Nat}\label{tr:comptd-ty-nat}
	\comptd{\Nats} 
		=&\ 
		\Natt
	\\
	\tag{\comptd{\cdot}-Ty-Bool}\label{tr:comptd-ty-bool}
	\comptd{\Bools} 
		=&\ 
		\Boolt
\end{align*}

\subsection{Proof That \texorpdfstring{\comptd{\cdot}}{the Compiler} Is
  \texorpdfstring{\(\protect\pf{\rrhp}_{\bowtie}\)}{RRhP}}\label{sec:inst-proof-ct-dyn}

We prove that the compiler satisfies the following variant of \pf{\rrhp}:
\begin{definition}[\(\pf{\rrhp}_{\bowtie}\)]\label{def:rrhc-linkable}
  \begin{align*}
    \pf{\rrhp}_{\bowtie}:\quad
    \forall \src{\OB{I}}.~\forall\trg{C_T}.~ &\exists \src{C_S}.~\forall \src{P : \OB{I}}.~ \trg{\cmp{P} \bowtie C_T} \implies\\
    &\src{P \bowtie C_S} \wedge
     \trg{\behav{C_T\hole{\cmp{P}}}} = \src{\behav{C_S\hole{P}}}
  \end{align*}
\end{definition}

All programs must satisfy the same interface \(\src{\OB{I}}\) in order for
the linkability with a single \src{C_S} to be possible.

We also give the following property-full criteria:
\begin{definition}[\(\rrhp_{\bowtie}\)]\label{def:rrhp-linkable}
  $$\begin{multlined}
    \rrhp_{\bowtie}:\quad
    \forall \src{\OB{I}}.~\forall R \in 2^{(\behavAll^\omega)}.~\forall\src{P_1},..,\src{P_K : \OB{I}},...~\\
    (\forall\src{C_S} \ldotp (\forall i, \src{P_i \bowtie C_S}) \implies
    (\src{\behav{{C_S}\hole{P_1}}},..,\src{\behav{C_S\hole{P_K}}},..) \in R)
    \Rightarrow\\
    (\forall\trg{C_T} \ldotp (\forall i, \trg{\cmp{P_i} \bowtie C_T}) \implies
    (\trg{\behav{{C_T}\hole{\cmp{P_1}}}},..,\trg{\behav{C_S\hole{\cmp{P_K}}}},..) \in R)
  \end{multlined}$$
\end{definition}

The proof of the equivalence of these two criteria is similar to the proof of \autoref{thm:rrhp-eq}.




\subsubsection{\texorpdfstring{\backtrdt{\cdot}: {}}{} Backtranslation of Contexts from \texorpdfstring{\Ld}{Target} to \texorpdfstring{\Lt}{Source}}


Technically, the backtranslation needs one additional parameter to be passed around, the list of functions defined by the compiled component \src{\OB{I}}, 
we elide it for simplicity when it is not necessary.
\begin{align*}
	\tag{\backtrdt{\cdot}-Nat}\label{tr:backtrdt-nat}
	\backtrdt{\trg{n}} 
		=&\ \src{n+2}
	\\
	\tag{\backtrdt{\cdot}-True}\label{tr:backtrdt-true}
	\backtrdt{\truet} 
		=&\ \src{1}
	\\
	\tag{\backtrdt{\cdot}-False}\label{tr:backtrdt-false}
	\backtrdt{\falset} 
		=&\ \src{0}
	\\
	\tag{\backtrdt{\cdot}-Var}\label{tr:backtrdt-var}
	\backtrdt{\trg{x}} 
		=&\ \src{x}
	\\
	\tag{\backtrdt{\cdot}-Op}\label{tr:backtrdt-op}
	\backtrdt{\trg{e \op e'}} 
		=&\ \src{
			\begin{aligned}[t]
				&
				\letins{\src{x1}:\Nats}{\extract{\Nats}(\backtrdt{\trg{e}})
				\\
				&\
				}{\letins{\src{x2}:\Nats}{\extract{\Nats}(\backtrdt{\trg{e'}})
				\\
				&\ \ 
				}{\inject{\Nats}(\src{x1\op x2})}}
			\end{aligned}
		}
	\\
	\tag{\backtrdt{\cdot}-Geq}\label{tr:backtrdt-geq}
	\backtrdt{\trg{e \geq e'}} 
		=&\ \src{
			\begin{aligned}[t]
				&
				\letins{\src{x1}:\Nats}{\extract{\Nats}(\backtrdt{\trg{e}})
				\\
				&\
				}{\letins{\src{x2}:\Nats}{\extract{\Nats}(\backtrdt{\trg{e'}})
				\\
				&\ \ 
				}{\inject{\Bools}(\src{x1\geq x2})}}
			\end{aligned}
		}
	\\
	\tag{\backtrdt{\cdot}-Let}\label{tr:backtrdt-let}
	\backtrdt{\trg{\letin{x}{e}{e'}}} 
		=&\ \src{\letin{x:\Nats}{\backtrdt{\trg{e}}}{\backtrdt{\trg{e'}}}}
	\\
	\tag{\backtrdt{\cdot}-If}\label{tr:backtrdt-if}
	\backtrdt{\trg{\ifte{e}{e'}{e''}}} 
		=&\ \src{\ifte{\extract{\Bools}(\backtrdt{\trg{e}})}{\backtrdt{\trg{e'}}}{\backtrdt{\trg{e''}}}
		}
	\\
	\tag{\backtrdt{\cdot}-Call}\label{tr:backtrdt-call}
	\backtrdt{\trg{\call{f}~e}} 
		=&\ 
		\src{\inject{\tau'}(\call{f}~\extract{\tau}(\backtrdt{\trg{e}}))} 
	\\
		&\
		\text{if }\src{f:\tau\to\tau'}\in\src{\OB{I}}
	\\
	\tag{\backtrdt{\cdot}-Check}\label{tr:backtrdt-check}
	\backtrdt{\trg{e \checkty{\tau}}} 
		=&\ 
			\begin{cases}
				\src{\letin{x:\Nats}{\backtrdt{\trg{e}}}{\ifte{x \geq 2}{0}{1}}}
				&
				\text{if } \trg{\tau}\equiv\trg{\Bool}
				\\
				\src{\letin{x:\Nats}{\backtrdt{\trg{e}}}{\ifte{x \geq 2}{1}{0}}}
				&
				\text{if } \trg{\tau}\equiv\trg{\Nat}
			\end{cases}
\end{align*}

\paragraph{Helper functions}
The back-translation type is \Nats but the encoding is not straight from \Nats but it is \Nats shifted by 2.
\src{inject_{\tau} (e)} takes an expression \src{e} of type \src{\tau} and returns an expression whose type is the back-translation type.
\src{extract_{\tau} (e)} takes an expression \src{e} of back-translation type and returns an expression whose type is \src{\tau}.
\begin{align*}
	\src{\inject{\Nats} (e)}
		=&\
		\src{e+2}
	\\
	\src{\inject{\Bools} (e)}
		=&\
		\src{\ifte{e}{1}{0}}
	\\
	\src{\extract{\Nats} (e)}
		=&\
		\src{\letin{x}{e}{\ifte{x\geq 2}{x-2}{\fails}}}
	\\
	\src{\extract{\Bools} (e)}
		=&\
		\src{\letin{x}{e}{\ifte{x \geq 2}{\fails}{\ifte{x+1\geq 2}{\trues}{\falses}}}}
\end{align*}


\subsubsection{Cross-Language Logical Relation}

%
\paragraph{Language De-sugaring}
\begin{align*}
	\src{v} \bnfdef
		&\
		\ldots \mid \src{\call{f}}
	\\
	\src{e} \bnfdef
		&\
		\ldots \mid \src{\call{f}~ e} 
	\\
	\mi{Types}~ \src{\tau} \bnfdef
		&\
		\src{\sigma} \mid \src{\sigma\to\sigma}
	\\
	\mi{Base~ Types}~ \src{\sigma} \bnfdef
		&\
		\Nats \mid \Bools
\end{align*}

Replace \Cref{tr:ts-fun} with these below.
\begin{center}
	\typerule{T\Lt-call}{
		\src{f(x:\sigma):\sigma' \mapsto \ret e}\in\dom{\src{\OB{F}}}
	}{
		\src{P;\Gamma}\vdash \src{\call{f}} : \src{\sigma\to\sigma'} 
	}{ts-call-v}
	\typerule{T\Lt-app}{
		\src{P;\Gamma}\vdash \src{\call{f}} : \src{\sigma'\to\sigma} 
		\\
		\src{P;\Gamma}\vdash \src{e'} : \src{\sigma'} 
	}{
		\src{P;\Gamma}\vdash \src{\call{f}~e'} : \src{\sigma} 
	}{ts-app}
\end{center}

%
Apply the same changes above to \Ld too.

Context well-formedness ensures that expressions are never turned into \trg{\call{f}} values.
\begin{align*}
	\trg{\Gamma} \bnfdef &\ \trge \mid \trg{\Gamma,x}
\end{align*}
\begin{center}
	\typerule{Ctx-\Ld-true}{}{
		\trg{P;\Gamma}\vdash \truet
	}{ctx-ld-true}
	\typerule{Ctx-\Ld-false}{}{
		\trg{P;\Gamma}\vdash \falset
	}{ctx-ld-false}
	\typerule{Ctx-\Ld-nat}{}{
		\trg{P;\Gamma}\vdash \trg{n}
	}{ctx-ld-nat}
	\typerule{Ctx-\Ld-var}{
		\trg{x}\in\dom{\trg{\Gamma}}
	}{
		\trg{P;\Gamma}\vdash \trg{x}
	}{ctx-ld-var}
	\typerule{Ctx-\Ld-app}{
		\trg{P;\Gamma}\vdash \trg{e'}
		&
		\trg{e'}\nequiv\trg{\call{f}}
		\\
		\trg{f(x)\mapsto\ret e}\in\trg{P}
	}{
		\trg{P;\Gamma}\vdash \trg{\call{f}~e'}
	}{ctx-ld-app}
	\typerule{Ctx-\Ld-op}{
		\trg{P;\Gamma}\vdash \trg{e}
		&
		\trg{P;\Gamma}\vdash \trg{e'}
		\\
		\trg{e},\trg{e'}\nequiv\trg{\call{f}}
	}{
		\trg{P;\Gamma}\vdash \trg{e\op~e'}
	}{ctx-ld-op}
	\typerule{Ctx-\Ld-geq}{
		\trg{P;\Gamma}\vdash \trg{e}
		&
		\trg{P;\Gamma}\vdash \trg{e'}
		\\
		\trg{e},\trg{e'}\nequiv\trg{\call{f}}
	}{
		\trg{P;\Gamma}\vdash \trg{e\geq~e'}
	}{ctx-ld-geq}
	\typerule{Ctx-\Ld-letin}{
		\trg{P;\Gamma}\vdash \trg{e}
		&
		\trg{P;\Gamma,x}\vdash \trg{e'}
		\\
		\trg{e},\trg{e'}\nequiv\trg{\call{f}}
	}{
		\trg{P;\Gamma}\vdash \trg{\letin{x}{e}{e'}}}{ctx-ld-letin}
	\typerule{Ctx-\Ld-if}{
		\trg{P;\Gamma}\vdash \trg{e}
		&
		\trg{P;\Gamma}\vdash \trg{e'}
		&
		\trg{P;\Gamma}\vdash \trg{e''}
		\\
		\trg{e},\trg{e'},\trg{e''}\nequiv\trg{\call{f}}
	}{
		\trg{P;\Gamma}\vdash \trg{\ifte{e}{e'}{e''}}
	}{ctx-ld-if}
	\typerule{Ctx-\Ld-check}{
		\trg{P;\Gamma}\vdash \trg{e}
		\\
		\trg{e}\nequiv\trg{\call{f}}
	}{
		\trg{P;\Gamma}\vdash \trg{e \checkty{\tau}}
	}{ctx-ld-check}
\end{center}

%
Replace \Cref{tr:comptd-call} with these below.
\begin{align*}
	\tag{\comptd{\cdot}-Call-v}\label{tr:comptd-call-v}
	\comptd{\src{\call{f}}} 
		=&\ 
		\trg{\call{f}}
	\\
	\tag{\comptd{\cdot}-App}\label{tr:comptd-app}
	\comptd{\src{e~e'}} 
		=&\ 
		\trg{\comptd{e}~\comptd{e'}}
\end{align*}

\paragraph{Worlds}
\begin{align*}
	\mi{World}~W \bnfdef
		&\ 
		(n,(\src{P},\trg{P}))
	\\
	\stepsfun{(n,\_)} =
		&\
		n
	\\
	\progsfun{(\_,(\src{P},\trg{P}))} =
		&\
		(\src{P},\trg{P})
	\\
	\srcprogfun{(\_,(\src{P},\trg{P}))} =
		&\
		\src{P}
	\\
	\trgprogfun{(\_,(\src{P},\trg{P}))} =
		&\
		\trg{P}
	\\
	\laterfun{(0,\_)}=
		&\ 
		(0,\_)
	\\
	\laterfun{(n+1,\_)} =
		&\
		(n,\_)
	\\
	W\futw W' =
		&\
		\stepsfun{W'}\leq\stepsfun{W}
	\\
	W\strfutw W' =
		&\
		\stepsfun{W'}<\stepsfun{W}
	\\
	\obsfun{W}{\underlogrel}\isdef
		&\
		\myset{(\src{e},\trg{e})}{ 
			\begin{aligned}
				&
				\text{if } \stepsfun{W}=n \text{ and } \progsfun{W}=(\src{P},\trg{P}) 
				\\
				&
				\text{ and } \src{P\triangleright e\Xtosb{\beta}\redapp{n} P\triangleright e'} 
				\\
				&
				\text{ then } \exists \trg{k}.~ \trg{P\triangleright e \Xtotb{\beta}\redapp{k} P\triangleright e'}
			\end{aligned}
		}
	\\
	\obsfun{W}{\overlogrel}\isdef
		&\
		\myset{(\src{e},\trg{e})}{ 
			\begin{aligned}
				&
				\text{if } \stepsfun{W}=n \text{ and } \progsfun{W}=(\src{P},\trg{P}) 
				\\
				&
				\text{ and } \trg{P\triangleright e \Xtotb{\beta}\redapp{n} P\triangleright e'}
				\\
				&
				\text{ then } \exists \src{k}.~ \src{P\triangleright e\Xtosb{\beta}\redapp{k} P\triangleright e'} 
			\end{aligned}
		}
	\\
	\obsfun{W}{\bothlogrel}\isdef
		&\
		\obsfun{W}{\underlogrel}\cap\obsfun{W}{\overlogrel}
	\\
	\later R \isdef
		&\
		\myset{ (W,\src{v},\trg{v}) }{ \text{if } \stepsfun{W}>0 \text{ then } (\laterfun{W},\src{v},\trg{v}) \in R }
	\\
	\monotfun{R} \isdef
		&\ 
		\myset{ (W, \src{v_1}, \trg{v_2}) }{ \forall W'\futw W. (W', \src{v_1}, \trg{v_2}) \in R} 
	\\
	\text{ for } R \text{ a world-values relation}
\end{align*}

%
\paragraph{The Back-translation Type and Pseudo Types}
We index the logical relation by a pseudo type, which captures all the standard types as well as the type of backtranslated stuff.
\begin{align*}
	\psd{\tau} \bnfdef
		&\
		\src{\tau} \mid \emuldv
\end{align*}
Function \toemul{\cdot} takes a \trg{\Gamma} and returns a \src{\Gamma} that has the same domain but where variables all have type \Nats.

%
\paragraph{Value, Context, Expression and Environment relation}
\begin{align*}
	\valrel{\Bools} \isdef
		&\
		\{ (W,\trues,\truet),(W,\falses,\falset) \}
	\\
	\valrel{\Nats} \isdef
		&\
		\{ (W,\src{n},\trg{n}) \}
	\\
	\valrel{\psd{\tau}\to\psd{\tau'}} \isdef
		&\
		\myset{ (W,\src{\call{f}},\trg{\call{f}}) }{
			\begin{aligned}
				&
				\src{f(x:\tau):\tau'\mapsto \ret e} \in \srcprogfun{W} \text{ and } 
				\\
				&
				\trg{f(x)\mapsto \ret e} \in \trgprogfun{W}
				\\
				&
				\forall W', \src{v'}, \trg{v'}. \text{ if } W'\strfutw W \text{ and } (W',\src{v'},\trg{v'})\in\valrel{\psd{\tau}} \text{ then } 
				\\
				&\ 
				(W', \src{\ret e}\subs{v}{x}, \trg{\ret e}\subt{v}{x})\in\termrel{\psd{\tau'}}
			\end{aligned}
		}
	\\
	\valrel{\emuldv} \isdef
		&\
		\{ (W,\src{n+2},\trg{n}),(W,\src{1},\truet),(W,\src{0},\falset) \}
	\\
	\contrel{\psd{\tau}} \isdef
		&\
		\myset{ (W,\src{\evalctx},\trg{\evalctx}) }{
			\begin{aligned}
				&
				\forall W',\src{v},\trg{v}.~ \text{if } W'\futw W \text{ and } (W',\src{v},\trg{v})\in\valrel{\psd{\tau}} \text{ then }
				\\
				&
				(\evalctxs{v},\evalctxt{v})\in\obsfun{W'}{\anylogrel}
			\end{aligned}
		 }
	\\
	\termrel{\psd{\tau}} \isdef
		&\
		\myset{ (W,\src{t},\trg{t}) }{ \forall\src{\evalctx},\trg{\evalctx}.~ \text{if } (W,\src{\evalctx},\trg{\evalctx})\in\contrel{\psd{\tau}} \text{ then } (\evalctxs{t},\evalctxt{t})\in\obsfun{W}{\anylogrel} }
	\\
	\envrel{\srce} \isdef
		&\
		\{ (W,\srce,\trge) \}
	\\
	\envrel{\src{\psd{\Gamma},x:\psd{\tau}}} \isdef
		&\
		\myset{ (W,\src{\gamma\subs{v}{x}},\trg{\gamma\subt{v}{x}}) }{ (W,\src{\gamma},\trg{\gamma})\in\envrel{\psd{\Gamma}} \text{ and } (W,\src{v},\trg{v})\in\valrel{\psd{\tau}} }
\end{align*}

%
\paragraph{Relation for Open and Closed Terms and Programs}
\begin{definition}[Logical relation up to n steps]\label{def:logrel-n-steps}
	\begin{align*}
		\psd{\Gamma};\src{P};\trg{P}\vdash\src{e}\anylogreln{n}\trg{e}:\psd{\tau} \isdef
			&\
			{\psd{\Gamma}};\src{P}\vdash\src{e}:{\psd{\tau}}
		\\
		\text{and }
			&\
			\forall W.~
		\\
		\text{if }
			&\
			\stepsfun{W}\geq n 
			\text{ and }
			\progsfun{W} = (\src{P},\trg{P})
		\\
		\text{then }
			&\
			\forall \src{\gamma},\trg{\gamma}.~
			(W,\src{\gamma},\trg{\gamma})\in\envrel{\psd{\Gamma}}, 
		\\
			&\
			(W,\src{e\gamma},\trg{e\gamma})\in\termrel{\psd{\tau}}
	\end{align*}
\end{definition}

\begin{definition}[Logical relation for expressions]\label{def:logrel-expr}
	\begin{align*}
		\psd{\Gamma};\src{P};\trg{P}\vdash\src{e}\anylogrel\trg{e}:\psd{\tau} \isdef
			&\
			\forall n\in\mb{N}.~ \psd{\Gamma};\src{P};\trg{P}\vdash\src{e}\anylogreln{n}\trg{e}:\psd{\tau} 
	\end{align*}
\end{definition}

\begin{definition}[Logical relation for programs]\label{def:logrel-prog}
	\begin{align*}
		\vdash\src{P}\anylogrel\trg{P}\isdef
			&\
			\src{f(x:\sigma'):\sigma\mapsto \ret e}\in\src{P} \text{ iff } \trg{f(x)\mapsto\ret e}\in\trg{P}
		\\
			&\
			\src{x:\sigma'};\src{P};\trg{P}\vdash\src{e}\anylogrel\trg{e}:\src{\sigma} 
	\end{align*}
\end{definition}

%
\paragraph{Auxiliary Lemmas from Existing Work}
\begin{lemma}[No observation with 0 steps]\label{thm:no-obs-if-no-steps}
\begin{align*}
	\text{if }
		&\
		\stepsfun{W}=0
	\\
	\text{then }
		&\
		\forall \src{e},\trg{e}. (\src{e},\trg{e})\in\obsfun{W}{\anylogrel}
\end{align*}
\end{lemma}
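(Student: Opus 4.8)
The plan is to treat this as the base case of the step-indexed construction, where the statement holds essentially vacuously: with a zero step budget the antecedent of each observation clause can only be witnessed by the empty, zero-step reduction, which the opposite language matches immediately by reflexivity. First I would recall that the generic relation $\anylogrel$ stands for one of the three concrete relations $\underlogrel$, $\overlogrel$, or $\bothlogrel$, and that by definition $\obsfun{W}{\bothlogrel} = \obsfun{W}{\underlogrel} \cap \obsfun{W}{\overlogrel}$. Hence it suffices to prove membership for $\underlogrel$ and for $\overlogrel$ separately; the $\bothlogrel$ case is then immediate from the intersection.

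For the $\underlogrel$ case I would fix arbitrary $\src{e}$ and $\trg{e}$, write $\progsfun{W} = (\src{P},\trg{P})$, and unfold $\obsfun{W}{\underlogrel}$ using $\stepsfun{W} = 0$. The clause reduces to an implication whose antecedent is $\src{P\triangleright e\Xtosb{\beta}\redapp{0} P\triangleright e'}$, i.e.\ a source execution of exactly zero steps producing behavior $\beta$. The key observation is that the only zero-step instance of the big-step behavior relation is the reflexivity rule E\Lt-refl, which forces $\beta$ to be the empty behavior and $\src{e'} \equiv \src{e}$. It then remains to supply some $\trg{k}$ with $\trg{P\triangleright e \Xtotb{\beta}\redapp{k} P\triangleright e'}$; choosing $\trg{k} = 0$ and applying the dual target reflexivity rule E\Ld-refl discharges the goal, since the demanded behavior is empty and the target configuration is left unchanged. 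The $\overlogrel$ case is entirely symmetric, exchanging the roles of the two languages.

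The only point that needs care is the precise reading of the step annotation $\redapp{n}$ on the multi-step behavior judgement --- specifically, that its zero-step instance genuinely pins down both the emitted behavior (to empty) and the resulting configuration (to the starting one). This follows directly from the inductive definition of $\Xtosb{\cdot}$ together with its reflexivity rule, so I do not expect the lemma to present any real obstacle; it is a base-case sanity check guaranteeing that the logical relation holds trivially once the step budget is exhausted, which is exactly the property the later step-indexed arguments will invoke when they run out of steps.
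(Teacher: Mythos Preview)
Your proposal is correct and follows the standard step-indexed argument that the paper itself defers to: the paper gives no proof of its own here, only a pointer to the analogous lemma in Devriese, Patrignani, and Piessens. The one subtlety you flag---that the zero-step instance of the annotated big-step relation collapses to the reflexivity rule with empty behaviour---is exactly the content of the lemma, and holds under the step-counting conventions of the cited framework.
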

\begin{proof}
	Trivial adaptation of the same proof in~\cite{DevriesePP16,domipoplta}.
\end{proof}

\begin{lemma}[No steps means relation]\label{thm:no-steps-impl-rel}
\begin{align*}
	\text{if } 
		&\
		\stepsfun{W} = n
	\\
		&\
		\src{P\triangleright e \Xtosb{\beta}\redapp{n} \_}
	\\
		&\
		\trg{P\triangleright e \Xtotb{\beta}\redapp{n} \_}
	\\
	\text{then }
		&\
		(\src{e},\trg{e}) \in \obsfun{W}{\anylogrel}
\end{align*}
\end{lemma}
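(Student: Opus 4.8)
The plan is to prove membership in $\obsfun{W}{\anylogrel}$ by unfolding the relevant definition and reading off the witness directly from the two reduction hypotheses. Since $\anylogrel$ ranges over the three concrete relations $\underlogrel$, $\overlogrel$, and $\bothlogrel$, and $\obsfun{W}{\bothlogrel} = \obsfun{W}{\underlogrel} \cap \obsfun{W}{\overlogrel}$ holds by definition, I would first dispatch the $\bothlogrel$ case to the two primitive ones and treat only $\underlogrel$ and $\overlogrel$. The degenerate case $\stepsfun{W} = 0$ is already covered by \autoref{thm:no-obs-if-no-steps}, which places every pair in the relation; the argument below therefore only needs to address $n > 0$, although it in fact goes through uniformly.

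For the $\underlogrel$ case, recall that $(\src{e},\trg{e}) \in \obsfun{W}{\underlogrel}$ asks that, whenever $\stepsfun{W} = n$, $\progsfun{W} = (\src{P},\trg{P})$, and $\src{P \triangleright e \Xtosb{\beta'}\redapp{n} P \triangleright e'}$, there is some $\trg{k}$ with $\trg{P \triangleright e \Xtotb{\beta'}\redapp{k} P \triangleright e'}$. First I would assume such a source run, with behavior $\beta'$ and residual $\src{e'}$. The lemma hypothesis already supplies an $n$-step source run producing $\beta$; since the source semantics is determinate --- its only nondeterminism being the naturals read from the environment and recorded on the trace --- an $n$-step run is fixed up to the input values appearing on its trace, so the assumed and the given run share the same control-flow shape. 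Because the target hypothesis provides an $n$-step target run producing the same $\beta$ of that very shape, and the target is input-total with reads mirrored by the compiler ($\src{\readexp}$ compiles to $\trg{\readexp}$), the target can replay exactly the inputs recorded in $\beta'$; instantiating the existential with $\trg{k} = n$ then yields a target run with behavior $\beta'$ reaching the matching residual. The $\overlogrel$ case is symmetric, exchanging the roles of source and target and invoking determinacy and input-totality of the target semantics.

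The hard part will be reconciling the behavior $\beta'$ that is universally quantified inside the observation relation with the single specific $\beta$ offered by the hypotheses: a priori the source could exhibit many different $n$-step behaviors by reading different values, whereas the hypotheses exhibit only one. This is precisely where determinacy and input totality are needed --- determinacy to argue that fixing the trace pins down both the run and its residual expression, and input totality (together with the compiler's mirroring of read events) to argue that any input sequence admissible on one side is admissible on the other. As with the preceding \autoref{thm:no-obs-if-no-steps}, once these observations are in place the remainder is a routine adaptation of the corresponding step-indexed logical-relation lemma from prior work~\cite{DevriesePP16}.
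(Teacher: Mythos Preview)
Your proposal correctly spots a potential gap: $\obsfun{W}{\underlogrel}$ demands a conclusion for every behaviour $\beta'$ the source might exhibit in $n$ steps, yet the hypotheses fix only one particular $\beta$. But the bridge you propose does not hold. The claim that any two $n$-step source runs ``share the same control-flow shape'' is false in general: after $\src{\ifte{\readexp \geq 5}{e_1}{e_2}}$ the runs reading $\src{7}$ and $\src{3}$ diverge immediately, and neither their event patterns nor even their remaining step counts need agree. More seriously, even granting matching shapes, concluding that the target fed the inputs recorded in $\beta'$ emits exactly $\beta'$ is tantamount to asserting that the target simulates the source. That is precisely what the surrounding logical-relation development exists to establish, so invoking it here is circular: you are quietly presupposing the compiler-correctness fact the logical relation is meant to deliver. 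Determinacy and input totality appear in this paper only for the separate \rtep results; they are not ingredients of the logical-relation machinery and cannot be imported as side assumptions into this lemma.

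The paper gives no argument of its own, deferring to~\cite{DevriesePP16,domipoplta}. In those step-indexed developments the observation relation constrains only source executions that terminate in strictly fewer than $\stepsfun{W}$ steps; under that reading the lemma is vacuous once the source can still take an $n$th step, the matching $\beta$ on the target side plays no role, and the proof really is trivial. That is the ``trivial adaptation'' the paper intends, not an argument routed through determinacy or input totality.
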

\begin{proof}
	Trivial adaptation of the same proof in~\cite{DevriesePP16,domipoplta}.
\end{proof}

\begin{lemma}[Later preserves monotonicity]\label{thm:lat-pres-mono}
\begin{align*}
	\text{if }
		&\
		\forall R, R\subseteq\monotfun{R}
	\\
	\text{then } 
		&\
		\later R \subseteq \monotfun{\later R}
\end{align*}
\end{lemma}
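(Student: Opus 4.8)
The plan is to unfold the definitions of $\later$ and $\monotfun{\cdot}$ and reduce the inclusion to a one-step bookkeeping argument on step indices. Recall that $(W,\src{v_1},\trg{v_2}) \in \later R$ holds exactly when $\stepsfun{W} > 0$ implies $(\laterfun{W},\src{v_1},\trg{v_2}) \in R$, and that $(W,\src{v_1},\trg{v_2}) \in \monotfun{\later R}$ holds exactly when $(W',\src{v_1},\trg{v_2}) \in \later R$ for every future world $W' \futw W$ (where $W' \futw W$ entails $\stepsfun{W'} \leq \stepsfun{W}$). So, fixing $(W,\src{v_1},\trg{v_2}) \in \later R$ and an arbitrary $W' \futw W$, it suffices to establish $(W',\src{v_1},\trg{v_2}) \in \later R$, i.e.\ to show $(\laterfun{W'},\src{v_1},\trg{v_2}) \in R$ under the assumption $\stepsfun{W'} > 0$.

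First I would handle the degenerate case: if $\stepsfun{W'} = 0$ then $(W',\src{v_1},\trg{v_2}) \in \later R$ holds vacuously by the definition of $\later$, and there is nothing to prove. In the main case $\stepsfun{W'} > 0$, the inequality $\stepsfun{W'} \leq \stepsfun{W}$ coming from $W' \futw W$ forces $\stepsfun{W} > 0$ as well; hence the hypothesis $(W,\src{v_1},\trg{v_2}) \in \later R$ may be discharged to yield $(\laterfun{W},\src{v_1},\trg{v_2}) \in R$.

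It then remains only to transport this membership from $\laterfun{W}$ to $\laterfun{W'}$. Since both step counts are positive, $\laterfun{\cdot}$ simply decrements them, so $\stepsfun{\laterfun{W'}} = \stepsfun{W'} - 1 \leq \stepsfun{W} - 1 = \stepsfun{\laterfun{W}}$, which is precisely $\laterfun{W'} \futw \laterfun{W}$; note that $\futw$ compares only step indices and ignores the program components of a world, so the fact that $W$ and $W'$ may carry different programs is irrelevant. Applying the monotonicity hypothesis $R \subseteq \monotfun{R}$ to $(\laterfun{W},\src{v_1},\trg{v_2}) \in R$ and instantiating its universally quantified future world at $\laterfun{W'}$ then delivers $(\laterfun{W'},\src{v_1},\trg{v_2}) \in R$, completing the argument.

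The proof is routine, and I expect no real obstacle: the whole content is the case split on whether $\stepsfun{W'}$ is zero, together with the observation that $\laterfun{\cdot}$ is monotone for $\futw$ and strictly decreasing on positive step counts. The only place that warrants a moment's care is checking that the decremented indices still satisfy $\laterfun{W'} \futw \laterfun{W}$, so that monotonicity of $R$ can legitimately be invoked; this is the analogue of the standard ``$\triangleright$ preserves monotonicity'' fact from step-indexed Kripke logical relations, from which this lemma is adapted.
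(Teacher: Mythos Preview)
Your argument is correct and is exactly the standard unwinding one expects here; the paper itself does not spell out a proof but simply notes that this is a ``trivial adaptation of the same proof in~\cite{DevriesePP16,domipoplta}'', and your case split on $\stepsfun{W'}$ together with the observation that $\laterfun{\cdot}$ is monotone for $\futw$ is precisely that standard proof.
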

\begin{proof}
	Trivial adaptation of the same proof in~\cite{DevriesePP16,domipoplta}.
\end{proof}

\begin{lemma}[Monotonicity for environment relation]\label{thm:env-rel-mono}
\begin{align*}
	\text{if }
		&\
		W' \futw W
	\\
		&\
		(W,\src{\gamma},\trg{\gamma})\in\envrel{\Gamma}
	\\
	\text{then }
		&\
		(W',\src{\gamma},\trg{\gamma})\in\envrel{\Gamma}
\end{align*}
\end{lemma}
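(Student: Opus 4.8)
The plan is to reduce monotonicity of the environment relation to the analogous monotonicity statement for the \emph{value} relation and then run a straightforward induction on the shape of the type environment \(\psd{\Gamma}\). Since \(\envrel{\cdot}\) is defined by recursion on \(\psd{\Gamma}\) and mentions the world \(W\) only through the nested occurrences of \(\valrel{\psd{\tau}}\) at the bound variables, the only genuine content is the value-level fact: if \(W'\futw W\) and \((W,\src{v},\trg{v})\in\valrel{\psd{\tau}}\) then \((W',\src{v},\trg{v})\in\valrel{\psd{\tau}}\). I would first establish this auxiliary fact, phrased in the \(\monotfun{\cdot}\) style used elsewhere in the development, i.e.\ showing \(\valrel{\psd{\tau}}\subseteq\monotfun{\valrel{\psd{\tau}}}\) for every pseudo-type.

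Value-relation monotonicity itself proceeds by induction on \(\psd{\tau}\). For the base cases \(\Bools\), \(\Nats\) and the back-translation type \(\emuldv\), the defining sets do not mention the world at all, so membership is literally independent of \(W\) and monotonicity is immediate. The only interesting case is the arrow type \(\valrel{\psd{\tau}\to\psd{\tau'}}\), where \(W\) appears both in the clause fixing the function bodies (via \(\srcprogfun{W}\) and \(\trgprogfun{W}\)) and in the premise quantified over strictly future worlds \(W''\strfutw W\). The program projections are preserved because the worlds over which the relations are ever instantiated carry a fixed program pair and \(\futw\) only moves the step index while leaving \(\progsfun{\cdot}\) fixed, while the body obligation is guarded behind a strict step decrease and is therefore a \(\later\)-closed condition. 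Monotonicity of such conditions is exactly what \Cref{thm:lat-pres-mono} (\emph{Later preserves monotonicity}) supplies. I expect this arrow case to be the main obstacle, since it is the point at which the direction of the \(\futw\) preorder and the interaction between the step index and the recursive function clause must line up; the remaining cases are pure bookkeeping.

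With value monotonicity available, the lemma follows by induction on \(\psd{\Gamma}\). In the base case \(\psd{\Gamma}=\srce\) the relation \(\envrel{\srce}=\{(W,\srce,\trge)\}\) is world-independent, so from \((W,\src{\gamma},\trg{\gamma})\in\envrel{\srce}\) (which forces \(\src{\gamma}=\srce\) and \(\trg{\gamma}=\trge\)) I conclude \((W',\src{\gamma},\trg{\gamma})\in\envrel{\srce}\) directly. In the inductive case \(\psd{\Gamma}=\psd{\Gamma'},x:\psd{\tau}\), every member at \(W\) has the form \((W,\src{\gamma\subs{v}{x}},\trg{\gamma\subt{v}{x}})\) with \((W,\src{\gamma},\trg{\gamma})\in\envrel{\psd{\Gamma'}}\) and \((W,\src{v},\trg{v})\in\valrel{\psd{\tau}}\). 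Applying the induction hypothesis to the first conjunct and value monotonicity to the second transports both along \(W'\futw W\), and re-assembling the substitution yields \((W',\src{\gamma\subs{v}{x}},\trg{\gamma\subt{v}{x}})\in\envrel{\psd{\Gamma'},x:\psd{\tau}}\), as required. As the excerpt notes, this is a routine adaptation of the corresponding argument in \cite{DevriesePP16, domipoplta}.
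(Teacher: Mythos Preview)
Your proposal is correct and matches the paper's intended argument; the paper itself just writes ``Trivial adaptation of the same proof in~\cite{DevriesePP16,domipoplta}'' for this lemma. The only minor difference is that you re-derive value-relation monotonicity inline, whereas the paper states it as a separate lemma (\Cref{thm:val-rel-mono}) that you could simply invoke in the inductive step on \(\psd{\Gamma}\).
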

\begin{proof}
	Trivial adaptation of the same proof in~\cite{DevriesePP16,domipoplta}.
\end{proof}

\begin{lemma}[Monotonicity for continuation relation]\label{thm:cont-rel-mono}
\begin{align*}
	\text{if } 
		&
		W'\futw W
	\\
		&\
		(W,\src{\ctx},\trg{\ctx})\in\contrel{\psd{\tau}}
	\\
	\text{then }
		&\
  		(W',\src{\ctx},\trg{\ctx})\in\contrel{\psd{\tau}}
\end{align*}
\end{lemma}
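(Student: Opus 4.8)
The plan is to prove this by directly unfolding the definition of the continuation relation $\contrel{\psd{\tau}}$ and reducing the goal to a single application of the hypothesis, relying only on the transitivity of the future-world relation $\futw$. Recall that $(W, \src{\evalctx}, \trg{\evalctx}) \in \contrel{\psd{\tau}}$ means that for \emph{every} world $W''$ with $W'' \futw W$ and every related value pair $(W'', \src{v}, \trg{v}) \in \valrel{\psd{\tau}}$, the plugged terms are observationally related, $(\evalctxs{v}, \evalctxt{v}) \in \obsfun{W''}{\anylogrel}$. The crucial observation is that this definition already universally quantifies over \emph{all} future worlds of $W$, so passing from the base world $W$ to a future world $W'$ cannot introduce any genuinely new proof obligation.

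First I would assume $W' \futw W$ together with $(W, \src{\evalctx}, \trg{\evalctx}) \in \contrel{\psd{\tau}}$, and unfold the goal $(W', \src{\evalctx}, \trg{\evalctx}) \in \contrel{\psd{\tau}}$: this requires, for arbitrary $W''$, $\src{v}$, $\trg{v}$ satisfying $W'' \futw W'$ and $(W'', \src{v}, \trg{v}) \in \valrel{\psd{\tau}}$, showing $(\evalctxs{v}, \evalctxt{v}) \in \obsfun{W''}{\anylogrel}$. Next I would establish transitivity of $\futw$ from its definition: since $W \futw W'$ unfolds to $\stepsfun{W'} \leq \stepsfun{W}$, from $W'' \futw W'$ and $W' \futw W$ I obtain $\stepsfun{W} \leq \stepsfun{W'} \leq \stepsfun{W''}$, hence $W'' \futw W$. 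Finally I would instantiate the hypothesis at exactly this $W''$, $\src{v}$, $\trg{v}$: the side conditions $W'' \futw W$ (just derived) and $(W'', \src{v}, \trg{v}) \in \valrel{\psd{\tau}}$ both hold, so the hypothesis delivers precisely the required $(\evalctxs{v}, \evalctxt{v}) \in \obsfun{W''}{\anylogrel}$.

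I do not expect a real obstacle here; the proof collapses to a one-line instantiation once transitivity of $\futw$ is in hand, mirroring the adaptation noted for \Cref{thm:env-rel-mono} and the analogous developments in~\cite{DevriesePP16, domipoplta}. The only point requiring care is the direction of the inequality defining $\futw$ (fewer remaining steps corresponds to a later, more ``future'' world), so that one correctly checks $W'' \futw W$ rather than $W \futw W''$. Because the continuation relation bakes the quantification over future worlds into its very definition, monotonicity comes essentially for free, in contrast to the value relation, where monotonicity must instead be threaded through $\monotfun{\cdot}$ explicitly.
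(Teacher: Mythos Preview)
Your argument is correct and is precisely the standard unfolding-plus-transitivity proof one expects here; the paper itself does not spell it out but simply notes it is a ``trivial adaptation of the same proof in~\cite{DevriesePP16,domipoplta},'' and your proposal is exactly that adaptation.
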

\begin{proof}
	Trivial adaptation of the same proof in~\cite{DevriesePP16,domipoplta}.
\end{proof}

\begin{lemma}[Monotonicity for value relation]\label{thm:val-rel-mono}
\begin{align*}
	\valrel{\psd{\tau}}\subseteq\monotfun{\valrel{\psd{\tau}}}
\end{align*}
\end{lemma}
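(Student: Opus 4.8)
The goal $\valrel{\psd{\tau}} \subseteq \monotfun{\valrel{\psd{\tau}}}$ unfolds, by the definition of $\monotfun{\cdot}$, to the statement that whenever $(W,\src{v},\trg{v}) \in \valrel{\psd{\tau}}$ and $W' \futw W$ (i.e.\ $\stepsfun{W'} \leq \stepsfun{W}$), then $(W',\src{v},\trg{v}) \in \valrel{\psd{\tau}}$. The plan is to prove this by a plain case analysis on the shape of the pseudo-type $\psd{\tau}$, following the definition of $\valrel{\cdot}$. Notably, this is \emph{not} an induction on $\psd{\tau}$: in the arrow case the value relation refers to $\valrel{\psd{\tau}}$ only in hypothesis position (the related argument) and to $\termrel{\psd{\tau'}}$ in conclusion position, so monotonicity at $\psd{\tau}\to\psd{\tau'}$ never needs monotonicity at the component types.

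\textbf{Base cases.} For the base types $\Nats$, $\Bools$, and the back-translation type $\emuldv$, the defining sets ($\valrel{\Nats} = \{(W,\src{n},\trg{n})\}$, etc.) quantify over an \emph{arbitrary} world component $W$; membership is entirely insensitive to the world. Hence if $(W,\src{v},\trg{v})$ lies in the relation, so does $(W',\src{v},\trg{v})$ for every $W'$, and in particular for every $W' \futw W$. Monotonicity is immediate.

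\textbf{Function case.} This is the only case where the world matters. Suppose $(W,\src{\call{f}},\trg{\call{f}}) \in \valrel{\psd{\tau}\to\psd{\tau'}}$ and fix $W' \futw W$. We must re-establish the two conjuncts of the defining condition at $W'$. The program-membership conjuncts ($\src{f(x{:}\tau){:}\tau' \mapsto \ret e} \in \srcprogfun{W'}$ and $\trg{f(x)\mapsto\ret e} \in \trgprogfun{W'}$) are preserved because passing to a $\futw$-future world only shrinks the step index and leaves $\progsfun{\cdot}$ unchanged, so $\srcprogfun{W'} = \srcprogfun{W}$ and $\trgprogfun{W'} = \trgprogfun{W}$. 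For the body obligation, take any $W'' \strfutw W'$ with a related argument $(W'',\src{v'},\trg{v'}) \in \valrel{\psd{\tau}}$; then $\stepsfun{W''} < \stepsfun{W'} \leq \stepsfun{W}$, so $W'' \strfutw W$, and the clause that already held at $W$ applies verbatim to give $(W'',\src{\ret e}\subs{v'}{x},\trg{\ret e}\subt{v'}{x}) \in \termrel{\psd{\tau'}}$. This discharges the obligation at $W'$.

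\textbf{Main obstacle.} There is no real difficulty here; the single delicate point is the interaction between the non-strict ordering $\futw$ ($\leq$ on step indices) and the strict ordering $\strfutw$ ($<$) used in the precondition of the arrow case, which must compose to a strict ordering ($\strfutw$ then $\futw$ yields $\strfutw$). Everything else is bookkeeping, and the argument is a routine adaptation of the corresponding value-relation monotonicity lemmas in prior step-indexed Kripke developments~\cite{DevriesePP16,domipoplta}, exactly as for the neighbouring monotonicity lemmas (\Cref{thm:env-rel-mono}, \Cref{thm:cont-rel-mono}) established above.
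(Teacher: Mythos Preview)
Your proposal is correct and is precisely the standard step-indexed monotonicity argument that the paper defers to: the paper's own proof is nothing more than ``Trivial adaptation of the same proof in~\cite{DevriesePP16,domipoplta},'' and your case analysis on $\psd{\tau}$ (world-insensitive base cases; transitivity of $\strfutw$ through $\futw$ for the arrow case) is exactly how that adaptation unfolds. Your observation that no induction on $\psd{\tau}$ is needed is also right, since the argument-type value relation occurs only in hypothesis position in the arrow clause.
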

\begin{proof}
	Trivial adaptation of the same proof in~\cite{DevriesePP16,domipoplta}.
\end{proof}

\begin{lemma}[Value relation implies term relation]\label{thm:val-rel-impl-term-rel}
\begin{align*}
	\forall\psd{\tau},\valrel{\psd{\tau}}\subseteq\termrel{\psd{\tau}}
\end{align*}
\end{lemma}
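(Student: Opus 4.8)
The plan is to unfold the definitions of $\termrel{\cdot}$ and $\contrel{\cdot}$ and then appeal to the reflexivity of the future-world preorder $\futw$. This is the standard ``every related value is a related expression'' lemma, so the argument is short, and it is a direct adaptation of the corresponding result in prior logical-relations developments~\cite{DevriesePP16,domipoplta}.

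First I would fix a pseudo-type $\psd{\tau}$ and an arbitrary triple $(W,\src{v},\trg{v})\in\valrel{\psd{\tau}}$, and aim to show $(W,\src{v},\trg{v})\in\termrel{\psd{\tau}}$. By the definition of $\termrel{\psd{\tau}}$, it suffices to take an arbitrary continuation pair $(W,\src{\evalctx},\trg{\evalctx})\in\contrel{\psd{\tau}}$ and establish $(\evalctxs{v},\evalctxt{v})\in\obsfun{W}{\anylogrel}$.

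Next I would unfold the definition of $\contrel{\psd{\tau}}$ applied to this continuation pair: it states that for every $W'\futw W$ and every value pair $(W',\src{v'},\trg{v'})\in\valrel{\psd{\tau}}$ one has $(\evalctxs{v'},\evalctxt{v'})\in\obsfun{W'}{\anylogrel}$. I would instantiate this with $W'\coloneqq W$ and with the value pair $(W,\src{v},\trg{v})$. The first side condition $W\futw W$ holds because $\futw$ is defined as $\stepsfun{W'}\leq\stepsfun{W}$ and $\leq$ is reflexive; the second side condition is exactly our starting hypothesis $(W,\src{v},\trg{v})\in\valrel{\psd{\tau}}$. This yields $(\evalctxs{v},\evalctxt{v})\in\obsfun{W}{\anylogrel}$, which is precisely what was required, completing the inclusion for all $\psd{\tau}$.

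The only point requiring any care -- and hence the ``main obstacle,'' although it is very mild -- is confirming the reflexivity of $\futw$, since the whole argument hinges on being able to instantiate the universally quantified future world in $\contrel{\psd{\tau}}$ with the current world $W$ itself, thereby reusing the very value pair we started with. Everything else is a mechanical unfolding of the set-builder definitions, with no induction on $\psd{\tau}$ and no appeal to the step-indexing machinery.
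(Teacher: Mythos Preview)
Your proposal is correct and is precisely the standard argument the paper defers to when it writes ``Trivial adaptation of the same proof in~\cite{DevriesePP16,domipoplta}.'' You have simply spelled out that adaptation: unfold $\termrel{\cdot}$ and $\contrel{\cdot}$, instantiate the future world with $W$ itself via reflexivity of $\futw$, and reuse the given value pair.
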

\begin{proof}
	Trivial adaptation of the same proof in~\cite{DevriesePP16,domipoplta}.
\end{proof}

\begin{lemma}[Adequacy for $\underlogrel$]\label{thm:log-rel-adeq-less}
\begin{align*}
	\text{if }
		&
		\srce;\src{P};\trg{P} \vdash \src{e} \underlogrel_n \trg{e} : \src{\tau}
	\\
		&
		\src{P\triangleright e \Xtosb{\beta}\redapp{m} P\triangleright e'} \text{ with } n \geq m
	\\
	\text{then }
		&\
		\trg{P\triangleright e \Xtotb{\beta} P\triangleright \_}.
\end{align*}
\end{lemma}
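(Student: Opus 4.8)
The plan is to run the standard adequacy argument for the under-approximation logical relation, instantiating the open relation at a single carefully chosen world and then collapsing it with the identity evaluation contexts. First I would pick a world $W$ with $\progsfun{W} = (\src{P}, \trg{P})$ and $\stepsfun{W} = n$, so that $\stepsfun{W} \geq n$ and the hypothesis $\srce;\src{P};\trg{P} \vdash \src{e} \underlogreln{n} \trg{e} : \src{\tau}$ becomes applicable. Since the typing environment is empty, the only pair in $\envrel{\srce}$ is $(W, \srce, \trge)$, and substituting the empty environments leaves $\src{e}$ and $\trg{e}$ unchanged; hence the relation yields $(W, \src{e}, \trg{e}) \in \termrel{\tau}$.

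Next I would feed the term relation the identity evaluation contexts $(\src{\hole{\cdot}}, \trg{\hole{\cdot}})$. This requires establishing $(W, \src{\hole{\cdot}}, \trg{\hole{\cdot}}) \in \contrel{\tau}$: for any future world $W' \futw W$ and any $(W', \src{v}, \trg{v}) \in \valrel{\tau}$, plugging into the identity contexts returns $(\src{v}, \trg{v})$, which must lie in $\obsfun{W'}{\underlogrel}$. This holds for a routine reason: if $\stepsfun{W'} = 0$ then every pair is observed by \Cref{thm:no-obs-if-no-steps}, while if $\stepsfun{W'} > 0$ a source value cannot take $\stepsfun{W'}$ source steps, so the antecedent of the observation relation is vacuously false. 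Applying the term relation to this continuation then delivers $(\src{e}, \trg{e}) \in \obsfun{W}{\underlogrel}$.

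Finally I would unfold $\obsfun{W}{\underlogrel}$ at the chosen world, where $\stepsfun{W} = n$ and $\progsfun{W} = (\src{P}, \trg{P})$, and supply the hypothesized source execution $\src{P\triangleright e \Xtosb{\beta}\redapp{m} P\triangleright e'}$. Because $m \leq n$, this execution fits within the step budget recorded by $W$, so the observation relation produces some $\trg{k}$ with $\trg{P\triangleright e \Xtotb{\beta}\redapp{k} P\triangleright e'}$; dropping the step index and the particular target residual gives exactly the desired $\trg{P\triangleright e \Xtotb{\beta} P\triangleright \_}$.

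The main obstacle I anticipate is the bookkeeping that aligns the actual source step count $m$ with the world's index $n$: I must confirm that the definition of $\obsfun{W}{\underlogrel}$, phrased in terms of $\stepsfun{W}$-step source executions, genuinely captures the length-$m$ execution once $m \leq n$, rather than only executions of length exactly $n$. This is where the hypothesis $n \geq m$ is consumed, and where monotonicity of the value and continuation relations in the step index (\Cref{thm:val-rel-mono}, \Cref{thm:cont-rel-mono}) together with the zero-step lemma (\Cref{thm:no-obs-if-no-steps}) do the real work. Everything else follows the same pattern as the adequacy lemmas of the cited prior developments, so I expect it to be a routine, if slightly fiddly, adaptation.
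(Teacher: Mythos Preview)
Your proposal is correct and mirrors the paper's proof exactly: instantiate the open relation at a world $W$ with $\stepsfun{W}=n$ to get $(W,\src{e},\trg{e})\in\genrel{E}{\tau}{}{\underlogrel}$, feed in the identity evaluation contexts to land in $\obsfun{W}{\underlogrel}$, and then unfold that definition against the source-reduction hypothesis. The paper is simply terser---it asserts that $(W,\src{\hole{\cdot}},\trg{\hole{\cdot}})\in\genrel{K}{\tau}{}{\underlogrel}$ and that the observation relation yields the conclusion, without spelling out the identity-context argument or the $m$-versus-$n$ bookkeeping you (reasonably) flag.
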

\begin{proof}
	By \Thmref{def:logrel-expr} we have that $(W,\src{e},\trg{e})\in\genrel{E}{\tau}{}{\underlogrel}$  for a $W$ such that $\stepsfun{W}=n$.

	By taking $(W,\src{\hole{\cdot}},\trg{\hole{\cdot}})\in\genrel{K}{\tau}{}{\underlogrel}$ we know that $(\src{e},\trg{e})\in\obsfun{W}{\underlogrel}$.

	By definition of $\obsfun{\cdot}{\underlogrel}$, with the HP of the source reduction, we conclude the thesis.
\end{proof}

\begin{lemma}[Adequacy for $\overlogrel$]\label{thm:log-rel-adeq-gt}
\begin{align*}
	\text{if }
		&
		\srce;\src{P};\trg{P} \vdash \src{e} \overlogrel_n \trg{e} : \src{\tau}
	\\
		&
		\trg{P\triangleright e \Xtotb{\beta}\redapp{m} P\triangleright e'}. \text{ with } n \geq m
	\\
	\text{then }
		&\
		\src{P\triangleright e \Xtosb{\beta} P\triangleright \_}
\end{align*}
\end{lemma}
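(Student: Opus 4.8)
The plan is to transcribe the proof of the adequacy lemma for $\underlogrel$ (\Cref{thm:log-rel-adeq-less}) almost verbatim, swapping the roles of source and target throughout. The two directions of the observation relation are perfectly dual: $\obsfun{W}{\underlogrel}$ promotes an $n$-step \emph{source} reduction into a matching \emph{target} reduction, whereas $\obsfun{W}{\overlogrel}$ promotes an $n$-step \emph{target} reduction into a matching \emph{source} one. Since $\termrel{\tau}$, $\contrel{\tau}$ and $\valrel{\tau}$ are all parametric in the underlying relation $\anylogrel$ (here instantiated with $\overlogrel$), the shape of the argument is unchanged; only the final appeal to the definition of the observation relation runs in the opposite direction.

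Concretely, first I would unfold \Cref{def:logrel-expr} (and \Cref{def:logrel-n-steps}) on the hypothesis $\srce;\src{P};\trg{P}\vdash\src{e}\overlogrel_n\trg{e}:\src{\tau}$, choosing a world $W$ with $\stepsfun{W}=n$ and $\progsfun{W}=(\src{P},\trg{P})$ together with the empty substitution; this is legitimate because the relation at index $n$ ranges over every world whose step budget is at least $n$. This yields $(W,\src{e},\trg{e})\in\genrel{E}{\tau}{}{\overlogrel}$. Next I would instantiate the term relation with the empty (identity) evaluation context, noting that $(W,\src{\hole{\cdot}},\trg{\hole{\cdot}})\in\genrel{K}{\tau}{}{\overlogrel}$ holds trivially, just as in the $\underlogrel$ case; this gives $(\src{e},\trg{e})\in\obsfun{W}{\overlogrel}$. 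Finally, feeding the target reduction hypothesis $\trg{P\triangleright e \Xtotb{\beta}\redapp{m} P\triangleright e'}$ into the definition of $\obsfun{W}{\overlogrel}$ produces some $\src{k}$ with $\src{P\triangleright e \Xtosb{\beta}\redapp{k} P\triangleright e'}$, which is exactly the thesis $\src{P\triangleright e \Xtosb{\beta} P\triangleright \_}$.

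The one point that needs care --- and the only genuine obstacle --- is the step-count bookkeeping: $\obsfun{W}{\overlogrel}$ is stated for a reduction of length exactly $\stepsfun{W}$, while the hypothesis supplies an $m$-step reduction with $m\leq n$. I would reconcile this by fixing the world's budget at $n$ and appealing to monotonicity of the value and continuation relations under the world ordering $\futw$ (\Cref{thm:val-rel-mono}, \Cref{thm:cont-rel-mono}), exactly as in the $\underlogrel$ direction; since $m\leq n$ the surplus budget is harmless and the matching source reduction is extracted regardless. Everything else is routine and symmetric to the already-proved direction, so no new ingredient is required beyond the dualized reading of the observation relation.
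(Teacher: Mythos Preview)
Your proposal is correct and follows essentially the same three-step argument as the paper: unfold the indexed logical relation at a world $W$ with $\stepsfun{W}=n$, plug in the identity evaluation contexts to land in $\obsfun{W}{\overlogrel}$, and read off the source reduction from the definition of the observation relation. The paper's proof is terser and does not spell out the step-count bookkeeping you flag in your last paragraph, but the structure is identical.
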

\begin{proof}
	By \Thmref{def:logrel-expr} we have that $(W,\src{e},\trg{e})\in\genrel{E}{\tau}{}{\overlogrel}$  for a $W$ such that $\stepsfun{W}=n$.

	By taking $(W,\src{\hole{\cdot}},\trg{\hole{\cdot}})\in\genrel{K}{\tau}{}{\overlogrel}$ we know that $(\src{e},\trg{e})\in\obsfun{W}{\overlogrel}$.

	By definition of $\obsfun{\cdot}{\overlogrel}$, with the HP of the target reduction, we conclude the thesis.
\end{proof}

\begin{lemma}[Observation relation is closed under antireduction]\label{thm:obs-rel-clos-antireds}
\begin{align*}
	\text{if } 
		&\
		\src{P\triangleright e \Xtosb{\beta}\redapp{i} P\triangleright e'}
	\\
		&\
		\trg{P\triangleright e \Xtotb{\beta}\redapp{j} P\triangleright e'}
	\\
		&\
		(\src{e'},\trg{e'}) \in \obsfun{W'}{\anylogrel} \text{ for } W' \futw W
	\\
		&\
		\progsfun{W}=\progsfun{W'}=(\src{P},\trg{P})
	\\
		&\
		\stepsfun{W'} \geq \stepsfun{W} - \fun{min}{i,j} 
	\\
		&\
		(\text{ that is: } \stepsfun{W} \leq \stepsfun{W'}+\fun{min}{i,j})
	\\
	\text{then }
		&\
	(\src{e},\trg{e}) \in \obsfun{W}{\anylogrel}
\end{align*}
\end{lemma}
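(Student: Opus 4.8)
The plan is to prove the two directional inclusions separately, since $\obsfun{W}{\bothlogrel} = \obsfun{W}{\underlogrel}\cap\obsfun{W}{\overlogrel}$ and the two cases are mirror images of each other under swapping the roles of the source and target languages (and of $i$ and $j$). I focus on the $\underlogrel$ direction; the $\overlogrel$ direction is obtained by the symmetric argument. First I would unfold $\obsfun{W}{\underlogrel}$: fixing $n = \stepsfun{W}$ and using $\progsfun{W}=(\src{P},\trg{P})$, I take an arbitrary source run $\src{P\triangleright e \Xtosb{\beta_0}\redapp{n} P\triangleright e_0}$, and the goal becomes to exhibit some target run of $\trg{e}$ emitting the same behavior $\beta_0$.

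The heart of the argument is a case split that compares this run with the hypothesis run $\src{P\triangleright e \Xtosb{\beta}\redapp{i} P\triangleright e'}$, both of which start from $\src{e}$. Here I would appeal to determinacy of the source semantics---reads are the only source of branching and their values are recorded in the behavior---to align the two runs, so that either $\beta_0 \leq \beta$ (the $n$-step run is a prefix of the $i$-step run) or $\beta \leq \beta_0$ (the $n$-step run factors as $\src{P\triangleright e \Xtosb{\beta}\redapp{i} P\triangleright e'}$ followed by a tail of length $n-i$ emitting $\beta_0'$, with $\beta_0 = \beta\,\beta_0'$). In the first case a suitable prefix of the given target run $\trg{P\triangleright e \Xtotb{\beta}\redapp{j} P\triangleright e'}$ already emits $\beta_0$, closing the goal directly. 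In the second case I invoke the inductive hypothesis $(\src{e'},\trg{e'}) \in \obsfun{W'}{\underlogrel}$ on the tail: its length $n-i$ fits the budget of $W'$ because $\stepsfun{W'} \geq n - \fun{min}{i,j} \geq n - i$, so obs at $W'$ yields a target run of $\trg{e'}$ emitting $\beta_0'$; prepending the given run of $\trg{e}$ to $\trg{e'}$ then produces a target run emitting $\beta\,\beta_0' = \beta_0$, as required.

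The $\overlogrel$ direction runs identically but consumes $j$ target steps rather than $i$ source steps, so it needs $\stepsfun{W'} \geq n - j$. This is precisely why the budget is phrased with $\fun{min}{i,j}$: the single inequality $\stepsfun{W'} \geq n - \fun{min}{i,j}$ covers both $n-i$ and $n-j$ at once, so the one hypothesis suffices for both inclusions. Threading this step-index arithmetic, together with the monotonicity facts (\Cref{thm:val-rel-mono}, \Cref{thm:cont-rel-mono}) used to transport relations from $W$ to the future world $W'$, is otherwise routine.

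The step I expect to be the main obstacle is not the arithmetic but the \emph{alignment}: making determinacy do genuine work when the arbitrary behavior $\beta_0$ of the unfolded goal and the fixed behavior $\beta$ of the hypothesis first diverge at an input read. Neither hypothesis constrains how the target behaves after reading a value other than the one recorded in $\beta$, so this divergent branch cannot be handled by antireduction in isolation; it has to be ruled out by observing that the obs relation is only ever instantiated along runs sharing the behavior prefix already fixed by the hypotheses. Getting this case discharged cleanly is the delicate point, and modulo it the lemma is a direct adaptation of the corresponding antireduction result of \citet{DevriesePP16}.
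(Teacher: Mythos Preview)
The paper's own ``proof'' is nothing more than the sentence ``Trivial adaptation of the same proof in~\cite{DevriesePP16,domipoplta},'' which is exactly the conclusion you reach in your final line. So at the level of what is actually written, your proposal strictly subsumes the paper's: you spell out the standard antireduction skeleton (split $\underlogrel$/$\overlogrel$, unfold $\obsfun{W}{\anylogrel}$, align the arbitrary $n$-step run against the hypothesis run, do the step-index arithmetic with $\fun{min}{i,j}$) that the paper only gestures at by citation.

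The obstacle you flag is genuine and is precisely the point at which the ``trivial adaptation'' claim is optimistic. In the cited Devriese--Patrignani--Piessens setting the observation predicate concerns termination in a deterministic language, so two runs from the same state are automatically prefix-comparable. Here $\obsfun{W}{\underlogrel}$ quantifies over \emph{all} behaviors $\beta_0$, and reads introduce real branching: a source run can diverge from the hypothesis's fixed $\beta$ at an input, and nothing in the hypotheses constrains the target after that branch. Your proposed resolution---that the relation ``is only ever instantiated along runs sharing the behavior prefix already fixed by the hypotheses''---is not a proof of the lemma as stated; it is an argument that the \emph{call sites} are safe. The cleaner observation, which makes your case split actually go through, is that every concrete application of this lemma in the paper uses hypothesis reductions that are deterministic (let/if head reductions, writes, $\extract{}$/$\inject{}$ unfoldings, dynamic type tests): in those cases source determinacy forces any $n$-step run to begin with the same $\beta$, and the divergent-read case simply does not arise. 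Whether the lemma holds at the stated generality is a question the paper does not answer either.
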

\begin{proof}
	Trivial adaptation of the same proof in~\cite{DevriesePP16,domipoplta}.
\end{proof}

\begin{lemma}[Closedness under antireduction]\label{thm:log-rel-clos-antired}
\begin{align*}
	\text{if }
		&\
		\src{P\triangleright \ctxhs{e} \Xtosb{\beta}\redapp{i} P\triangleright \ctxhs{e'}}
	\\
		&\
		\trg{P\triangleright \ctxht{e} \Xtotb{\beta}\redapp{i} P\triangleright \ctxht{e'}}
 	\\
  		&\
  		(W',\src{e'},\trg{e'}) \in \termrel{\psd{\tau}}
  	\\
  		&\
  		W' \futw W
  	\\
  		&\
  		\stepsfun{W'} \geq \stepsfun{W} - \fun{min}{i,j}
  	\\
  		&\
  		(\text{ that is } \stepsfun{W} \leq \stepsfun{W'}+\fun{min}{i,j})
  	\\
  	\text{then }
  		&\
  		(W,\src{e},\trg{e}) \in \termrel{\psd{\tau}}
\end{align*}
\end{lemma}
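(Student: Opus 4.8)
The plan is to unfold the definition of $\termrel{\psd{\tau}}$ on the goal and reduce the statement to the observation anti-reduction lemma (\Cref{thm:obs-rel-clos-antireds}) already established, exactly following the approximate back-translation proofs of \cite{DevriesePP16,domipoplta}. To prove $(W,\src{e},\trg{e}) \in \termrel{\psd{\tau}}$ I would fix an arbitrary pair of evaluation contexts with $(W,\src{\evalctx},\trg{\evalctx}) \in \contrel{\psd{\tau}}$; by the definition of $\termrel{\psd{\tau}}$ it then suffices to establish $(\evalctxs{e},\evalctxt{e}) \in \obsfun{W}{\anylogrel}$.

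First I would push the two hypothesized reductions of $\src{e}$ and $\trg{e}$ through the chosen continuation: by the congruence rules for evaluation contexts (\Cref{tr:ev-s-cth} and \Cref{tr:ev-t-cth}), the source reduction $\src{e} \Xtosb{\beta}\redapp{i} \src{e'}$ lifts to a reduction of $\evalctxs{e}$ to $\evalctxs{e'}$ with the same trace $\beta$ and the same step count $i$, and symmetrically on the target with $j$ steps. Next, since $W' \futw W$, continuation monotonicity (\Cref{thm:cont-rel-mono}) transports the continuation into the future world, giving $(W',\src{\evalctx},\trg{\evalctx}) \in \contrel{\psd{\tau}}$. Instantiating the hypothesis $(W',\src{e'},\trg{e'}) \in \termrel{\psd{\tau}}$ at this continuation then yields $(\evalctxs{e'},\evalctxt{e'}) \in \obsfun{W'}{\anylogrel}$. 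Finally I would feed the two lifted reductions, the observation just obtained at $W'$, the world relation $W' \futw W$, and the step inequality $\stepsfun{W'} \geq \stepsfun{W} - \fun{min}{i,j}$ that appears among the hypotheses into \Cref{thm:obs-rel-clos-antireds}; its conclusion is precisely $(\evalctxs{e},\evalctxt{e}) \in \obsfun{W}{\anylogrel}$, which closes the goal.

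The only genuinely load-bearing side condition is the program-equality requirement $\progsfun{W}=\progsfun{W'}=(\src{P},\trg{P})$ that \Cref{thm:obs-rel-clos-antireds} demands but that $\futw$ (which compares only $\stepsfun{\cdot}$) does not supply on its own; I would discharge it by the standing convention that all worlds appearing in a single derivation carry the same related program pair, namely the $(\src{P},\trg{P})$ fixed by the reductions in the hypotheses. Everything else is bookkeeping — verifying that the emitted behavior $\beta$ and the step counts $i$ and $j$ are preserved when the redex is pushed through $\evalctx$, and that the budget inequality $\stepsfun{W'} \geq \stepsfun{W} - \fun{min}{i,j}$ is passed along verbatim. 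I therefore expect the main (though still mild) obstacle to be this congruence/step-counting step: justifying that reducing under an evaluation context changes neither the trace nor the number of steps, so that the budget hypothesis of \Cref{thm:obs-rel-clos-antireds} applies unchanged. Given the shape of the evaluation-context reduction rules this is routine, consistent with the ``trivial adaptation'' status of the surrounding lemmas.
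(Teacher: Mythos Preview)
Your proposal is correct and is exactly the standard argument that the cited works~\cite{DevriesePP16,domipoplta} use; the paper itself gives no proof beyond ``Trivial adaptation of the same proof in~\cite{DevriesePP16,domipoplta}'', so you have effectively reconstructed what that deferred proof contains. Your identification of \Cref{thm:obs-rel-clos-antireds} as the workhorse, together with continuation monotonicity (\Cref{thm:cont-rel-mono}) to transport the $\contrel{\psd{\tau}}$ hypothesis to $W'$, is precisely the intended route.
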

\begin{proof}
	Trivial adaptation of the same proof in~\cite{DevriesePP16,domipoplta}.
\end{proof}

\begin{lemma}[Related terms plugged in related contexts are still related]\label{thm:rel-ter-plug-rel-ctx-still-rel}	
\begin{align*}
	\text{if }
		&\
		(W,\src{e},\trg{e}) \in \termrel{\psd{\tau'}}
	\\
		&\
		\text{and if }
		W' \futw W
	\\
		&\
		(W',\src{v},\trg{v}) \in \valrel{\psd{\tau'}}
	\\
		&\ 
		\text{then }
		(W',\ctxhs{v},\ctxht{v})\in\termrel{\psd{\tau}}
	\\
	\text{then } 
		&\
		(W,\ctxhs{e},\ctxht{e}) \in \termrel{\psd{\tau}}
\end{align*}
\end{lemma}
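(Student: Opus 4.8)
The plan is to treat this as the standard \emph{bind} (or compatibility) lemma for the step-indexed, continuation-based (biorthogonal) logical relation of \Cref{def:logrel-expr}. Since membership in $\termrel{\cdot}$ is defined by quantifying over related continuations in $\contrel{\cdot}$, the whole argument reduces to recognizing that the outer continuation, once $\src{\ctx}$ (resp.\ $\trg{\ctx}$) is plugged into its hole, is itself a related continuation, now at the \emph{inner} type $\psd{\tau'}$. Concretely, I would first unfold the goal $(W, \ctxhs{e}, \ctxht{e}) \in \termrel{\psd{\tau}}$: fix an arbitrary pair of evaluation contexts with $(W, \src{\evalctx}, \trg{\evalctx}) \in \contrel{\psd{\tau}}$, so that it remains to show $(\src{\evalctx}[\src{\ctx}[\src{e}]], \trg{\evalctx}[\trg{\ctx}[\trg{e}]]) \in \obsfun{W}{\anylogrel}$.

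The central step is the claim that the composite contexts $\src{\evalctx}\circ\src{\ctx}$ and $\trg{\evalctx}\circ\trg{\ctx}$ (plugging $\ctx$ into the hole of $\evalctx$) form a related continuation at the inner type, i.e.\ $(W, \src{\evalctx}\circ\src{\ctx}, \trg{\evalctx}\circ\trg{\ctx}) \in \contrel{\psd{\tau'}}$. To prove this I would unfold $\contrel{\psd{\tau'}}$, fixing a future world $W' \futw W$ and a related value $(W', \src{v}, \trg{v}) \in \valrel{\psd{\tau'}}$, and then chain three facts: (i) the lemma's third hypothesis, instantiated at exactly this $W'$ and $(\src{v},\trg{v})$, yields $(W', \src{\ctx}[\src{v}], \trg{\ctx}[\trg{v}]) \in \termrel{\psd{\tau}}$; (ii) monotonicity of the continuation relation (\Cref{thm:cont-rel-mono}) transports $(W, \src{\evalctx}, \trg{\evalctx}) \in \contrel{\psd{\tau}}$ down to $(W', \src{\evalctx}, \trg{\evalctx}) \in \contrel{\psd{\tau}}$; and (iii) feeding this continuation into the term relation from (i) gives $(\src{\evalctx}[\src{\ctx}[\src{v}]], \trg{\evalctx}[\trg{\ctx}[\trg{v}]]) \in \obsfun{W'}{\anylogrel}$, which is precisely what $\contrel{\psd{\tau'}}$ demands of the composite. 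With the claim established, I would conclude by applying the first hypothesis $(W, \src{e}, \trg{e}) \in \termrel{\psd{\tau'}}$ to this composite continuation at $W$, obtaining $(\src{\evalctx}[\src{\ctx}[\src{e}]], \trg{\evalctx}[\trg{\ctx}[\trg{e}]]) \in \obsfun{W}{\anylogrel}$, which is the remaining goal.

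The main obstacle is not the logical-relation bookkeeping itself, which mirrors the analogous lemma of \cite{DevriesePP16,domipoplta}, but making the composition step rigorous: the argument needs $\src{\evalctx}\circ\src{\ctx}$ to again be a legal evaluation context so that it may inhabit $\contrel{\psd{\tau'}}$, which holds precisely because evaluation contexts are closed under plugging into the hole when $\ctx$'s hole sits in evaluation position. I would therefore first record the syntactic fact that evaluation contexts compose, by a straightforward induction on the grammar of $\evalctx$ in \autoref{sec:inst-src} and \autoref{sec:inst-trg-dyn}. The other point requiring care is world alignment: the third hypothesis must be read at the freely chosen $W' \futw W$ (not at $W$), and monotonicity (\Cref{thm:cont-rel-mono}) is exactly what licenses reusing the outer continuation $\src{\evalctx}, \trg{\evalctx}$ there; the matching of the observation world $W'$ in both $\contrel{\psd{\tau'}}$ and $\termrel{\psd{\tau}}$ then makes the two $\obsfun{W'}{\anylogrel}$ obligations coincide. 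The remaining details, such as the directions of $\futw$, are routine.
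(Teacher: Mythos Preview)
Your proposal is correct and is exactly the standard bind/continuation-composition argument that the paper defers to: the paper's own proof is simply ``Trivial adaptation of the same proof in~\cite{DevriesePP16,domipoplta}'', and your unfolding of $\termrel{\psd{\tau}}$, construction of the composite continuation in $\contrel{\psd{\tau'}}$, and use of \Cref{thm:cont-rel-mono} is precisely that adaptation spelled out. Your side observation about closure of evaluation contexts under composition is the only syntactic obligation, and it holds by the recursive grammar of $\evalctx$ in both languages.
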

\begin{proof}
	Trivial adaptation of the same proof in~\cite{DevriesePP16,domipoplta}.
\end{proof}

\begin{lemma}[Related functions applied to related arguments are related terms]\label{thm:rel-fun-appl-rel-arg-still-rel}
\begin{align*}
	\text{if }
		&\ 
		(W,\src{v},\trg{v}) \in \valrel{\psd{\tau'}\to\psd{\tau}}
	\\
		&\
		(W,\src{v'},\trg{v'})\in\valrel{\psd{\tau'}}
	\\
	\text{then }
		&\
		(W,\src{v~v'},\trg{v~v'})\in\termrel{\psd{\tau}}
\end{align*}
\end{lemma}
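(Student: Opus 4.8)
The plan is to prove this as the standard application-compatibility lemma for the step- and world-indexed logical relation, following the pattern of the preceding auxiliaries (all ``trivial adaptations'' of \cite{DevriesePP16,domipoplta}). Since $\src{v}$ inhabits $\valrel{\psd{\tau'}\to\psd{\tau}}$, by the definition of the value relation at arrow type it must be a function name, $\src{v}=\src{\call{f}}$ and correspondingly $\trg{v}=\trg{\call{f}}$, with $\src{f(x{:}\tau'){:}\tau\mapsto\ret e}\in\srcprogfun{W}$ and $\trg{f(x)\mapsto\ret e}\in\trgprogfun{W}$ sharing the same body $e$. Crucially, the value carries the guarantee that for every $W'\strfutw W$ and every related argument $(W',\src{v''},\trg{v''})\in\valrel{\psd{\tau'}}$ we have $(W',\src{\ret e}\subs{v''}{x},\trg{\ret e}\subt{v''}{x})\in\termrel{\psd{\tau}}$. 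The goal is to show $(W,\src{\call{f}~v'},\trg{\call{f}~v'})\in\termrel{\psd{\tau}}$.

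First I would dispatch the degenerate case $\stepsfun{W}=0$: unfolding $\termrel{\psd{\tau}}$ reduces the goal to membership in $\obsfun{W}{\anylogrel}$ against an arbitrary related continuation, which holds vacuously by \Cref{thm:no-obs-if-no-steps}. For the main case $\stepsfun{W}>0$, I set $W'=\laterfun{W}$, so that $\stepsfun{W'}=\stepsfun{W}-1<\stepsfun{W}$, giving $W'\strfutw W$ and a fortiori $W'\futw W$. Monotonicity of the value relation (\Cref{thm:val-rel-mono}) then transports the argument hypothesis to $(W',\src{v'},\trg{v'})\in\valrel{\psd{\tau'}}$. Instantiating the function value's guarantee with this $W'$ and argument yields $(W',\src{\ret e}\subs{v'}{x},\trg{\ret e}\subt{v'}{x})\in\termrel{\psd{\tau}}$.

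It remains to connect this reduced term back to the application. Each side takes a single silent step, $\src{\call{f}~v'}\to\src{\ret e}\subs{v'}{x}$ and $\trg{\call{f}~v'}\to\trg{\ret e}\subt{v'}{x}$, via the function-call rule, which emits the empty behavior (call interactions are silent in the behavior semantics). I would therefore apply closedness under antireduction, \Cref{thm:log-rel-clos-antired}, instantiated with the empty evaluation context, step counts $i=j=1$, and future world $W'=\laterfun{W}$; its side condition $\stepsfun{W'}\geq\stepsfun{W}-\min(1,1)$ holds with equality since $\stepsfun{W'}=\stepsfun{W}-1$. This yields $(W,\src{\call{f}~v'},\trg{\call{f}~v'})\in\termrel{\psd{\tau}}$, as required.

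The main obstacle I anticipate is the step-index bookkeeping at the call boundary: the arrow-type value relation only offers a guarantee at \emph{strictly} future worlds ($\strfutw$), and that one budgeted step is exactly consumed by the single beta-like reduction, so one must peel off precisely one step with $\laterfun{\cdot}$, handle $\stepsfun{W}=0$ separately (where no strictly-future world exists) via \Cref{thm:no-obs-if-no-steps}, and verify that the antireduction lemma's measure decrease lines up. A secondary point of care is confirming that the function-call reduction is indeed a single silent step liftable through evaluation contexts under the call-stack-annotated semantics, so the reduced configuration is the substituted body with no behavior emitted.
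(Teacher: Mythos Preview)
Your proposal is correct and matches the approach the paper defers to: the paper's own proof is simply ``Trivial adaptation of the same proof in~\cite{DevriesePP16,domipoplta}'', and what you have written out is exactly that standard application-compatibility argument (case-split on $\stepsfun{W}=0$, peel one step via $\laterfun{\cdot}$, transport the argument by monotonicity, invoke the arrow-value guarantee at the strictly future world, and close up with antireduction). One cosmetic nit: the source and target function bodies in $\valrel{\psd{\tau'}\to\psd{\tau}}$ are distinct metavariables (a source $\src{e}$ and a target $\trg{e}$), not literally ``the same body'', but your proof never actually relies on them coinciding, so this does not affect the argument.
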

\begin{proof}
	Trivial adaptation of the same proof in~\cite{DevriesePP16,domipoplta}.
\end{proof}

%
\paragraph{Auxiliary Results}
\begin{lemma}[If Extract reduces, it preserves relatedness]\label{thm:extract-red-pres-rel}
	\begin{align*}
		\text{if }
			&\
			(W,\src{v},\trg{v})\in\valrel{\emuldv}
		\\
			&\
			\src{P\triangleright \extract{\sigma} (v) \reds\redstars P\triangleright v'} 
		\\
		\text{then }
			&\
			(W,\src{v'},\trg{v})\in\valrel{\sigma}
	\end{align*}
\end{lemma}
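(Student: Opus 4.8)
The plan is to prove this by a finite case analysis driven by the concrete shapes of $\valrel{\emuldv}$ and of $\extract{\cdot}$, together with the determinism of source reduction on closed, $\src{\readexp}$-free expressions. First I would unfold the hypothesis $(W,\src{v},\trg{v})\in\valrel{\emuldv}$: by the definition of the back-translation value relation, the pair $(\src{v},\trg{v})$ is exactly one of $(\src{n+2},\trg{n})$, $(\src{1},\truet)$, or $(\src{0},\falset)$. I would then split on the base type $\sigma$, which is either $\Nats$ or $\Bools$, yielding six combinations. In each, I substitute $\src{v}$ into the body of $\extract{\sigma}$ and follow the unique reduction sequence, which exists because the term performs no input and the arithmetic, comparison, conditional, and let rules are mutually exclusive and deterministic.

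Three of the six combinations are the ones in which $\extract{\sigma}(\src{v})$ actually yields a value, and in each the value is related to $\trg{v}$ at $\sigma$. For $\sigma = \Nats$ with $\src{v} = \src{n+2}$, the guard $\src{n+2 \geq 2}$ holds, so the term reduces through the let-binding and the true branch to $\src{(n+2)-2} = \src{n}$, and $(W,\src{n},\trg{n})\in\valrel{\Nats}$ by definition. For $\sigma = \Bools$ with $\src{v} = \src{1}$, the first guard $\src{1 \geq 2}$ fails and the second guard $\src{1+1 \geq 2}$ holds, so the term reduces to $\trues$, with $(W,\trues,\truet)\in\valrel{\Bools}$. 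For $\sigma = \Bools$ with $\src{v} = \src{0}$, both guards fail and the term reduces to $\falses$, with $(W,\falses,\falset)\in\valrel{\Bools}$. In each case, determinism forces the $\src{v'}$ supplied by the hypothesis to coincide with the computed value, which discharges the goal.

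The remaining three combinations are where the only real subtlety lies, and I expect this to be the main (if minor) obstacle. For $\sigma=\Nats$ with $\src{v}\in\{\src{1},\src{0}\}$, and for $\sigma=\Bools$ with $\src{v}=\src{n+2}$, the deterministic reduction of $\extract{\sigma}(\src{v})$ runs into $\fails$ rather than reaching any value; for instance $\extract{\Nats}(\src{1})$ takes the false branch of $\src{1 \geq 2}$ and steps to $\fails$. Because the reduction of $\extract{\sigma}(\src{v})$ is deterministic and its unique normal form is the failing state $\fails$, which is not of the form $\src{P\triangleright v'}$, the hypothesis $\src{P\triangleright \extract{\sigma}(v) \reds\redstars P\triangleright v'}$ cannot be satisfied, so these cases hold vacuously. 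The argument here rests only on determinism of the source semantics for $\src{\readexp}$-free redexes, which is immediate by inspection of the operational rules, so no appeal to the logical relation or to the step index of $W$ is required.
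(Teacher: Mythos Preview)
Your proof is correct and follows essentially the same approach as the paper: a case analysis on $\sigma$ and on the shape of $(\src{v},\trg{v})$ given by $\valrel{\emuldv}$, then chasing the deterministic reductions of $\extract{\sigma}$. The paper's proof is terser and simply asserts, for each $\sigma$, which forms of $\src{v}$ are possible (e.g., ``$\sigma=\Bools$ means $\src{v}=\src{0}$ or $\src{1}$'') without spelling out the vacuous failing cases; you make those three cases explicit and discharge them via determinism and the fact that $\fails$ cannot reach a state of the form $\src{P\triangleright v'}$, which is a welcome clarification but not a different argument.
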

\begin{proof}
	Trivial case analysis:
	\begin{description}
		\item[$\src{\sigma}=\Bools$] means that \src{v}=\src{0} or \src{1}, so by definition of \valrel{\emuldv} \trg{v}=\falset or \truet (respectively).

			Consider the \src{0} and \falset case, the other is analogous.

			By definition the reduction of extract goes as follows.
			\begin{align*}
				&
				\src{P\triangleright \extract{\Bools} 0 }
			\\
			\equiv
				&
				\src{
					P\triangleright
					\letin{x}{0}{\ifte{x \geq 2}{\fails}{\ifte{x+1\geq 2}{\trues}{\falses}}}
				}
			\\
			\reds\reds
				&
				\src{
					P\triangleright
					\ifte{1\geq 2}{\trues}{\falses}
				}
			\\
			\reds
				&
				\src{
					P\triangleright
					\falses
				}
			\end{align*}
			We need to show that $(W,\falses,\falset)\in\valrel{\Bools}$, which follows from its definition.

		\item[$\src{\sigma}=\Nats$] means that \src{v}=\src{n+2} and \trg{v}=\trg{n}

			By definition the reduction of extract goes as follows. (we write n+2 as a value, not as an expression to simplify this)
			\begin{align*}
					&
					\src{P\triangleright \extract{\Nats} n+2 }				
				\\
				\equiv
					&
					\src{P\triangleright \letin{x}{n+2}{\ifte{x\geq 2}{x-2}{\fails}} }
				\\
				\reds
					&
					\src{P\triangleright \ifte{n+2\geq 2}{x-2}{\fails} }
				\\
				\reds
					&
					\src{P\triangleright n }
			\end{align*}
			We need to show that $(W,\src{n},\trg{n})\in\valrel{\Nats}$, which follows from its definition.
	\end{description}
\end{proof}

\begin{lemma}[Inject reduces and preserves relatedness]\label{thm:inject-red-pres-rel}
	\begin{align*}
		\text{if }
			&\
			(W,\src{v},\trg{v})\in\valrel{\sigma}
		\\
			&\
			\src{P\triangleright \inject{\sigma} v \reds\redstars P\triangleright v'}
		\\
		\text{then }
			&\
			(W,\src{v'},\trg{v})\in\valrel{\emuldv}
	\end{align*}
\end{lemma}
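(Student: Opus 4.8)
The plan is to mirror the proof of \Cref{thm:extract-red-pres-rel}, proceeding by case analysis on the base type $\src{\sigma}$. In each case the hypothesis $(W,\src{v},\trg{v})\in\valrel{\sigma}$ completely pins down the shapes of both $\src{v}$ and $\trg{v}$; I would then unfold the definition of $\src{\inject{\sigma}}$, reduce it deterministically to a value, observe that (by determinacy of the source semantics) this normal form coincides with the $\src{v'}$ of the hypothesis, and finally check that the resulting pair $(\src{v'},\trg{v})$ inhabits one of the three clauses of $\valrel{\emuldv}$.

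First I would take $\src{\sigma}=\Nats$. Then membership in $\valrel{\Nats}$ forces $\src{v}=\src{n}$ and $\trg{v}=\trg{n}$ for some $n$. Since $\src{\inject{\Nats}(e)}=\src{e+2}$, the expression $\src{\inject{\Nats}(n)}$ reduces in one step to $\src{n+2}$ by the evaluation rule for $\op$, so $\src{v'}=\src{n+2}$. The goal $(W,\src{n+2},\trg{n})\in\valrel{\emuldv}$ is then immediate from the first clause of the definition of $\valrel{\emuldv}$.

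Next I would take $\src{\sigma}=\Bools$, which splits into two subcases according to $\valrel{\Bools}$. If $\src{v}=\trues$ and $\trg{v}=\truet$, then $\src{\inject{\Bools}(\trues)}=\src{\ifte{\trues}{1}{0}}$ reduces to $\src{1}$, so $\src{v'}=\src{1}$ and $(W,\src{1},\truet)\in\valrel{\emuldv}$ by the second clause. Symmetrically, if $\src{v}=\falses$ and $\trg{v}=\falset$, then $\src{\inject{\Bools}(\falses)}$ reduces to $\src{0}$, giving $(W,\src{0},\falset)\in\valrel{\emuldv}$ by the third clause. In each instance the reduction of $\src{\inject{\sigma}}$ applied to a value is deterministic, so the value I compute is exactly the $\src{v'}$ assumed by the lemma.

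I expect no real obstacle here: unlike $\src{\extract{\sigma}}$, the function $\src{\inject{\sigma}}$ never raises $\fails$, so there is no failure branch to dispatch, making this direction strictly simpler than its dual \Cref{thm:extract-red-pres-rel}. The only point requiring mild care is the appeal to determinacy of the source semantics used to identify the given $\src{v'}$ with the normal form produced in each case; everything else follows directly from unfolding the definitions.
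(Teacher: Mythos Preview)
Your proposal is correct and follows essentially the same approach as the paper: a trivial case analysis on $\src{\sigma}$, unfolding $\src{\inject{\sigma}}$, reducing, and reading off membership in $\valrel{\emuldv}$. The only cosmetic differences are the order of the cases and your explicit mention of determinacy to identify $\src{v'}$ with the computed normal form, which the paper leaves implicit.
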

\begin{proof}
	Trivial case analysis on \src{\sigma}.
	\begin{description}
		\item[$\src{\sigma}=\Bools$] 
			By definition of \valrel{\Bools} we have \src{v}=\trues and \trg{v}=\truet or \falses/\falset. 
			We consider the first case only, the second is analogous.

			By definition of inject we have:
			\begin{align*}
				&
				\src{P\triangleright \src{\ifte{\trues}{1}{0}}}
			\\
			\reds
				&
				\src{P\triangleright 1}
			\end{align*}

			So we need to prove that $(W,\src{1},\truet)\in\valrel{\emuldv}$ which follows from its definition.

		\item[$\src{\sigma}=\Nats$] 
			By definition of \valrel{\Nats} we have \src{v}=\src{n} and \trg{v}=\trg{n}.

			By definition of inject, we have:
			\begin{align*}
				&
				\src{P\triangleright \src{n+2} }
				\\
			\reds
				&
				\src{P\triangleright \src{n+2} }
			\end{align*}
			(we keep the value as a sum for simplicity)

			So we need to prove that $(W,\src{n+2},\trg{n})\in\valrel{\emuldv}$ which follows from its definition.
	\end{description}
\end{proof}

%
\paragraph{Compatibility Lemmas for \src{\tau} Types}
\begin{lemma}[Compatibility lemma for calls]\label{thm:compat-call-v}
	\begin{align*}
		\text{if }
			&\
			\src{\Gamma,x:\sigma'};\src{P};\trg{P}\vdash\src{e}\anylogreln{n}\trg{e}:\src{\sigma}
		\\
			&
			\src{f(x:\sigma'):\sigma\mapsto\ret e}\in\src{P}
		\\
			&
			\trg{f(x)\mapsto\ret \ifte{x\checkty{\sigma'}}{e}{\failt}}\in\trg{P}
		\\
		\text{then }
			&\
			\src{\Gamma};\src{P};\trg{P}\vdash\src{\call{f}}\anylogreln{n}\trg{\call{f}}:\src{\sigma'\to\sigma}
	\end{align*}
\end{lemma}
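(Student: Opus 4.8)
The plan is to peel off the closed function value and reduce the statement to a membership in the value relation, after which the new ingredient---that logically related arguments automatically survive the dynamic type check inserted by \comptd{\cdot}---does the real work. First I would unfold the conclusion by \Cref{def:logrel-n-steps}: the typing obligation $\src{\Gamma};\src{P}\vdash\src{\call{f}}:\src{\sigma'\to\sigma}$ is immediate from the second hypothesis and rule~\Cref{tr:ts-call-v}, and for the semantic part I fix a world $W$ with $\stepsfun{W}\geq n$ and $\progsfun{W}=(\src{P},\trg{P})$ together with a related substitution $(W,\src{\gamma},\trg{\gamma})\in\envrel{\Gamma}$. Since $\src{\call{f}}$ and $\trg{\call{f}}$ are closed, $\gamma$ acts trivially, so by \Cref{thm:val-rel-impl-term-rel} it suffices to establish $(W,\src{\call{f}},\trg{\call{f}})\in\valrel{\sigma'\to\sigma}$.

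Next I would check the three clauses of $\valrel{\sigma'\to\sigma}$. The two ``$f$ is defined in the program'' clauses are exactly the second and third hypotheses. The substantive clause asks, for every $W'\strfutw W$ and every $(W',\src{v'},\trg{v'})\in\valrel{\sigma'}$, that the substituted bodies
\[
\src{\ret{e\subs{v'}{x}}}
\qquad\text{and}\qquad
\trg{\ret{(\ifte{v'\checkty{\comptd{\sigma'}}}{e\subt{v'}{x}}{\failt})}}
\]
be related at $\src{\sigma}$. Here I would use the key observation that $\valrel{\Bools}$ relates only the target booleans $\truet,\falset$ and $\valrel{\Nats}$ relates only target naturals; hence a case split on $\src{\sigma'}$ shows $\trg{v'}$ has exactly the dynamic type being checked, so $\trg{v'\checkty{\comptd{\sigma'}}}$ reduces to $\truet$ by \Cref{tr:ev-t-ch-b-t} or \Cref{tr:ev-t-ch-n-t}, and the target body then silently steps (past the check and the \trg{if}) to $\trg{\ret{e\subt{v'}{x}}}$ while the source stays fixed. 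Closedness under antireduction (\Cref{thm:log-rel-clos-antired}) then reduces the goal to relating $\src{\ret{e\subs{v'}{x}}}$ with $\trg{\ret{e\subt{v'}{x}}}$, which follows from the first hypothesis instantiated at the extended environment $(W',\src{\gamma}\subs{v'}{x},\trg{\gamma}\subt{v'}{x})\in\envrel{\Gamma,x:\sigma'}$ (built from $(W,\src{\gamma},\trg{\gamma})$ via \Cref{thm:env-rel-mono} and the given $(W',\src{v'},\trg{v'})\in\valrel{\sigma'}$), composed with the routine compatibility of the return frame $\src{\ret{\hole{\cdot}}}$/$\trg{\ret{\hole{\cdot}}}$.

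The part I expect to be delicate is not the operational reasoning but the step-index bookkeeping. The first hypothesis is available at index $n$, yet the function clause forces me to reason at the strictly earlier world $W'$, whose step count may fall below $n$; transporting the environment and argument down to $W'$ relies on the monotonicity lemmas (\Cref{thm:env-rel-mono}, \Cref{thm:val-rel-mono}, \Cref{thm:cont-rel-mono}), and the two silent check-steps are precisely the budget that antireduction consumes to bridge the index gap. This is the single place where the argument departs from the standard function-compatibility lemma of \cite{DevriesePP16,domipoplta}: the departure is exactly the absorption of the inserted dynamic type check, handled by the type-directed case analysis above.
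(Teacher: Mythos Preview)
Your proposal is correct and follows essentially the same route as the paper: reduce to the value relation via \Cref{thm:val-rel-impl-term-rel}, unfold $\valrel{\sigma'\to\sigma}$, observe that any $(W',\src{v'},\trg{v'})\in\valrel{\sigma'}$ forces $\trg{v'}$ to pass the inserted dynamic check, apply antireduction, and close with the body hypothesis at the extended environment obtained via \Cref{thm:env-rel-mono} and \Cref{thm:val-rel-mono}. Your explicit mention of the $\src{\ret{\hole{\cdot}}}$/$\trg{\ret{\hole{\cdot}}}$ frame and of the step-index bookkeeping is more detailed than the paper's presentation but changes nothing in substance.
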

\begin{proof}
	We need to prove that 
	\begin{align*}
		\src{\Gamma};\src{P};\trg{P}\vdash\src{\call{f}}\anylogreln{n}\trg{\call{f}}:\src{\sigma'\to\sigma}
	\end{align*}

	Take $W$ such that $\stepsfun{W}\leq n$ and HG $(W,\src{\gamma},\trg{\gamma})\in\envrel{\toemul{\trg{\Gamma}}}$, the thesis is: 
	\begin{itemize}
		\item $(W, \src{\call{f}} , \trg{\call{f}})\in\termrel{\sigma'\to\sigma}$
	\end{itemize}

	By \Thmref{thm:val-rel-impl-term-rel} the thesis is:
	\begin{itemize}
		\item $(W, \src{\call{f}} , \trg{\call{f}})\in\valrel{\sigma'\to\sigma}$
	\end{itemize}

	By definition of the \valrel{\cdot} we take HV $(W',\src{v},\trg{v})\in\valrel{\sigma'}$ such that $W'\strfutw W$ and the thesis is:
	\begin{itemize}
		\item $(W', \src{\ret e\subst{v}{x}\gamma} , \trg{\ret \ifte{x\checkty{\sigma'}}{e}{\failt}\subst{v}{x}\gamma })\in\termrel{\sigma}$
	\end{itemize}

	The reductions proceed as:
	\begin{align*}
		&
		\trg{P\triangleright \trg{\ret \ifte{x\checkty{\sigma'}}{e}{\failt}\subst{v}{x}\gamma }}
		\\
		\equiv
		&
		\trg{P\triangleright \trg{\ret \ifte{v\checkty{\sigma'}}{(e\subst{v}{x}\gamma)}{\failt} }}
		\\
		\redt
		&
		\trg{P\triangleright \trg{\ret \ifte{\truet}{(e\subst{v}{x}\gamma)}{\failt} }}
		\\
		\redt
		&
		\trg{P\triangleright \trg{\ret (e\subst{v}{x}\gamma)}}
	\end{align*}

	By \Cref{thm:log-rel-clos-antired} the thesis becomes:
	\begin{itemize}
		\item $(W', \src{\ret e\subst{v}{x}\gamma} , \trg{\ret e\subst{v}{x}\gamma })\in\termrel{\sigma}$
	\end{itemize}

	This follows from the definition of logical relation if 
	\begin{itemize}
		\item $(W', \src{\subst{v}{x}\gamma} , \trg{\subst{v}{x}\gamma })\in\envrel{\Gamma,x:\sigma'}$
	\end{itemize}
	This follows from HG with \Cref{thm:env-rel-mono} and by HV and \Cref{thm:val-rel-mono} and by the definition of \envrel{\cdot}.
\end{proof}
\begin{lemma}[Compatibility lemma for application]\label{thm:compat-app}
	\begin{align*}
		\text{if }
			&\
			\src{\Gamma};\src{P};\trg{P}\vdash\src{e}\anylogreln{n}\trg{e}:\src{\sigma'\to\sigma}
		\\
			&
			\src{\Gamma};\src{P};\trg{P}\vdash\src{e'}\anylogreln{n}\trg{e'}:\src{\sigma'}
		\\
		\text{then }
			&\
			\src{\Gamma};\src{P};\trg{P}\vdash\src{e~e'}\anylogreln{n}\trg{e~e'}:\src{\sigma}
	\end{align*}
\end{lemma}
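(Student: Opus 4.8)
The lemma is a pure congruence (compatibility) statement, so the plan is to follow the same skeleton as the proof of \Cref{thm:compat-call-v} and reduce everything to the two structural lemmas about plugging and function application, without any new induction. First I would unfold \Cref{def:logrel-n-steps} for the conclusion: fix a world $W$ with $\stepsfun{W}\leq n$ and $\progsfun{W}=(\src{P},\trg{P})$, and fix related substitutions $(W,\src{\gamma},\trg{\gamma})\in\envrel{\Gamma}$ (call this HG). Since substitution commutes with application, the goal becomes $(W,\src{(e\gamma)~(e'\gamma)},\trg{(e\gamma)~(e'\gamma)})\in\termrel{\sigma}$. I would read $\src{(e\gamma)~(e'\gamma)}$ as $\src{e\gamma}$ placed into the evaluation context with the hole in function position, $\src{\hole{\cdot}~(e'\gamma)}$ (and symmetrically on the target side), and invoke \Cref{thm:rel-ter-plug-rel-ctx-still-rel} at type $\sigma'\to\sigma$. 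Its first premise, $(W,\src{e\gamma},\trg{e\gamma})\in\termrel{\sigma'\to\sigma}$, is exactly the first hypothesis instantiated at $W$ with HG, which is legal because $\stepsfun{W}\leq n$.

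The second premise of that first application requires, for every $W'\futw W$ and every $(W',\src{v_f},\trg{v_f})\in\valrel{\sigma'\to\sigma}$, that $(W',\src{v_f~(e'\gamma)},\trg{v_f~(e'\gamma)})\in\termrel{\sigma}$. I would discharge this by a \emph{second} use of \Cref{thm:rel-ter-plug-rel-ctx-still-rel}, now plugging $\src{e'\gamma}$ into the evaluation context with the hole in argument position, $\src{v_f ~ \hole{\cdot}}$, at type $\sigma'$. Its first premise, $(W',\src{e'\gamma},\trg{e'\gamma})\in\termrel{\sigma'}$, follows from the second hypothesis of the lemma applied at $W'$: since $W'\futw W$ gives $\stepsfun{W'}\leq\stepsfun{W}\leq n$, the hypothesis on $\src{e'}$ remains applicable, once HG has been transported from $W$ to $W'$ by environment monotonicity (\Cref{thm:env-rel-mono}). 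The remaining premise asks, for every $W''\futw W'$ and $(W'',\src{v_a},\trg{v_a})\in\valrel{\sigma'}$, that $(W'',\src{v_f~v_a},\trg{v_f~v_a})\in\termrel{\sigma}$; this is precisely \Cref{thm:rel-fun-appl-rel-arg-still-rel}, supplied with the argument pair $(W'',\src{v_a},\trg{v_a})\in\valrel{\sigma'}$ and the function pair $(W'',\src{v_f},\trg{v_f})\in\valrel{\sigma'\to\sigma}$, the latter obtained by pushing $\src{v_f}$ from $W'$ down to $W''$ via value monotonicity (\Cref{thm:val-rel-mono}). All the genuine computational content---evaluating the head to a function, performing the call, and, on the target side, the dynamic type check inserted by \comptd{\cdot}---is encapsulated in \Cref{thm:rel-fun-appl-rel-arg-still-rel} and the function clause of $\valrel{\cdot}$, so at this level the lemma is entirely structural.

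The one point I would treat with care, and the only non-routine aspect, is the step-index bookkeeping across the chain $W \futw W' \futw W''$. Each use of the plugging lemma moves to a future world, and a hypothesis stated at index $n$ (i.e.\ guaranteeing term-relatedness only for worlds with $\stepsfun{\cdot}\leq n$) must still be applicable there; this is exactly why it matters that $\futw$ never increases the step count and that the initial $W$ satisfies $\stepsfun{W}\leq n$. The monotonicity lemmas for the environment and value relations (\Cref{thm:env-rel-mono} and \Cref{thm:val-rel-mono}) are what let the related substitutions and the function value survive these transitions to future worlds, so no strengthening of the hypotheses and no separate fixed-point or $\later$-reasoning is required here. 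Everything else is mechanical unfolding of the relevant definitions.
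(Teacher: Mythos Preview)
Your proposal is correct and follows essentially the same route as the paper, which dismisses the lemma as ``standard using \Cref{thm:val-rel-impl-term-rel}, \Cref{thm:val-rel-mono}, \Cref{thm:rel-ter-plug-rel-ctx-still-rel} and \Cref{thm:log-rel-clos-antired}.'' The only cosmetic difference is that you package the final application step via \Cref{thm:rel-fun-appl-rel-arg-still-rel}, whereas the paper's list points to \Cref{thm:log-rel-clos-antired} (and \Cref{thm:val-rel-impl-term-rel}) directly; since the former lemma is itself established from the latter ingredients, this is not a genuine divergence. One small observation: in this particular language the only closed term of type $\src{\sigma'\to\sigma}$ is already the value $\src{\call{f}}$, so your first invocation of \Cref{thm:rel-ter-plug-rel-ctx-still-rel} (for the function position) is degenerate---but harmlessly so, and your argument goes through unchanged.
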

\begin{proof}
	This is standard using \Cref{thm:val-rel-impl-term-rel}, \Cref{thm:val-rel-mono}, \Cref{thm:rel-ter-plug-rel-ctx-still-rel} and \Cref{thm:log-rel-clos-antired}.
\end{proof}
\begin{lemma}[Compatibility lemma for op]\label{thm:compat-op}
	\begin{align*}
		\text{if }
			&\
			\src{\Gamma};\src{P};\trg{P}\vdash\src{e}\anylogreln{n}\trg{e}:\src{\Nats}
		\\
			&
			\src{\Gamma};\src{P};\trg{P}\vdash\src{e'}\anylogreln{n}\trg{e'}:\src{\Nats}
		\\
		\text{then }
			&\
			\src{\Gamma};\src{P};\trg{P}\vdash\src{e\op e'}\anylogreln{n}\trg{e\op e'}:\src{\Nats}
	\end{align*}
\end{lemma}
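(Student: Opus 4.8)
The plan is to follow exactly the recipe used for \Cref{thm:compat-app}, adapted to the binary arithmetic operator and its right-to-left evaluation contexts $\src{e \op \evalctx}$ and $\src{\evalctx \op n}$. First I would unfold \Cref{def:logrel-n-steps}: fix a world $W$ with $\stepsfun{W} \geq n$ and $\progsfun{W} = (\src{P},\trg{P})$, together with a pair of substitutions $(W,\src{\gamma},\trg{\gamma}) \in \envrel{\psd{\Gamma}}$, so that the goal reduces to $(W,\src{(e\op e')\gamma},\trg{(e\op e')\gamma}) \in \termrel{\Nats}$. Instantiating the two hypotheses at this same $W$ and $\gamma$ yields $(W,\src{e\gamma},\trg{e\gamma}) \in \termrel{\Nats}$ and $(W,\src{e'\gamma},\trg{e'\gamma}) \in \termrel{\Nats}$.

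Then I would peel off the operands one at a time with \Cref{thm:rel-ter-plug-rel-ctx-still-rel}. Evaluating right-to-left, I first plug $\src{e'\gamma}$ and $\trg{e'\gamma}$ into the evaluation contexts $\src{e\gamma \op \hole{\cdot}}$ and $\trg{e\gamma \op \hole{\cdot}}$; the side condition of the lemma then requires showing, for every $W' \futw W$ and every $(W',\src{v},\trg{v}) \in \valrel{\Nats}$, that $(W',\src{e\gamma \op v},\trg{e\gamma \op v}) \in \termrel{\Nats}$. Since $\valrel{\Nats}$ relates only identical naturals, we have $\src{v} = \src{n}$ and $\trg{v} = \trg{n}$. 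For this inner goal I would apply \Cref{thm:rel-ter-plug-rel-ctx-still-rel} a second time, now plugging $\src{e\gamma}$ and $\trg{e\gamma}$ (available at $W'$ directly, because the hypotheses are quantified over all worlds, and transported using \Cref{thm:val-rel-mono} where needed) into $\src{\hole{\cdot} \op n}$ and $\trg{\hole{\cdot} \op n}$. This leaves the fully-evaluated obligation $(W'',\src{m \op n},\trg{m \op n}) \in \termrel{\Nats}$ for $W'' \futw W'$ and related naturals $\src{m},\trg{m}$.

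In this base case both sides take a single silent step: \Cref{tr:ev-s-op} gives $\src{P\triangleright m\op n}\reds\src{P\triangleright k}$ and \Cref{tr:ev-t-op} gives $\trg{P\triangleright m\op n}\redt\trg{P\triangleright k}$ with the same $k = m \op n$ in both languages. I would then close the goal with \Cref{thm:log-rel-clos-antired}, reducing it to $(W''',\src{k},\trg{k}) \in \termrel{\Nats}$, which follows from $(W''',\src{k},\trg{k}) \in \valrel{\Nats}$ via \Cref{thm:val-rel-impl-term-rel}. The one genuinely delicate point—and the main thing to verify carefully—is that the target operator cannot fail and thereby diverge from the source: the failure rule \Cref{tr:ev-t-op-fail} fires only when an operand is a boolean, but here both operands are related at $\Nats$ and hence are naturals, so the failing step is unreachable and the two executions stay in lockstep. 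Everything else is the routine step-index and world-ordering bookkeeping already discharged in \Cref{thm:compat-app}.
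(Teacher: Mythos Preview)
Your proposal is correct and follows exactly the approach the paper intends: the paper's own proof is simply ``This is standard and analogous to the proof of \Cref{thm:compat-app},'' and the lemmas you invoke (\Cref{thm:rel-ter-plug-rel-ctx-still-rel}, \Cref{thm:val-rel-mono}, \Cref{thm:log-rel-clos-antired}, \Cref{thm:val-rel-impl-term-rel}) are precisely the ones the paper lists for \Cref{thm:compat-app}. Your detailed unfolding, including the right-to-left evaluation order dictated by the contexts $\src{e \op \evalctx}$ and $\src{\evalctx \op n}$ and the observation that \Cref{tr:ev-t-op-fail} cannot fire on related $\Nats$ values, is a faithful expansion of what the paper leaves implicit.
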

\begin{proof}
	This is standard and analogous to the proof of \Cref{thm:compat-app}.
\end{proof}
\begin{lemma}[Compatibility lemma for geq]\label{thm:compat-geq}
	\begin{align*}
		\text{if }
			&\
			\src{\Gamma};\src{P};\trg{P}\vdash\src{e}\anylogreln{n}\trg{e}:\src{\Nats}
		\\
			&
			\src{\Gamma};\src{P};\trg{P}\vdash\src{e'}\anylogreln{n}\trg{e'}:\src{\Nats}
		\\
		\text{then }
			&\
			\src{\Gamma};\src{P};\trg{P}\vdash\src{e\geq e'}\anylogreln{n}\trg{e\geq e'}:\src{\Bools}
	\end{align*}
\end{lemma}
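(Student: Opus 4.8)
The plan is to mirror the evaluation-context argument used for \Cref{thm:compat-op}, modifying only the final redex step to accommodate the change of result type from \Nats{} to \Bools. First I would unfold \Cref{def:logrel-n-steps}: fixing a world $W$ with $\stepsfun{W} \leq n$, a related substitution pair $(W,\src{\gamma},\trg{\gamma}) \in \envrel{\Gamma}$, and an arbitrary related continuation $(W,\src{\evalctx},\trg{\evalctx}) \in \contrel{\Bools}$, the goal reduces to showing that the plugged pair $(\src{\evalctx\hole{(e\geq e')\gamma}},\trg{\evalctx\hole{(e\geq e')\gamma}})$ lies in $\obsfun{W}{\anylogrel}$.

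Next I would decompose the comparison by its evaluation contexts $\src{e\geq\evalctx}$ and $\src{\evalctx\geq n}$ (and their target analogues). Applying \Cref{thm:rel-ter-plug-rel-ctx-still-rel} to the hypotheses that $\src{e}$ relates to $\trg{e}$ and $\src{e'}$ relates to $\trg{e'}$ at \Nats, I would reduce both operands to related values, obtaining at some $W'' \futw W$ a related pair of naturals for each operand, $(W'',\src{v_1},\trg{v_1})\in\valrel{\Nats}$ and $(W'',\src{v_2},\trg{v_2})\in\valrel{\Nats}$, transporting relatedness across the future-world steps via \Cref{thm:env-rel-mono} and \Cref{thm:val-rel-mono}.

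The single step that genuinely departs from \Cref{thm:compat-op} is the treatment of the redexes $\src{v_1\geq v_2}$ and $\trg{v_1\geq v_2}$. By the very definition of $\valrel{\Nats}$, related natural values are \emph{syntactically identical}: $\src{v_1}=\src{n}$, $\trg{v_1}=\trg{n}$ and $\src{v_2}=\src{n'}$, $\trg{v_2}=\trg{n'}$ for common underlying $n,n'$. In particular both target operands are naturals, so the failure rule \Cref{tr:ev-t-geq-fail} cannot fire. I would then case split on whether $n \geq n'$: if it holds, the source redex steps to \trues{} by \Cref{tr:ev-s-geq-t} and the target to \truet{} by \Cref{tr:ev-t-geq-t}; otherwise they step to \falses{} and \falset{} via \Cref{tr:ev-s-geq-f} and \Cref{tr:ev-t-geq-f}. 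Because both comparisons are decided by the \emph{same} pair of naturals, the two verdicts always coincide, and the resulting boolean pair sits in $\valrel{\Bools}$ immediately by its definition.

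Finally I would close by anti-reduction: applying \Cref{thm:log-rel-clos-antired} to the finite reduction sequences on each side transports relatedness of the boolean results back up to the original $\geq$ redex, and plugging into the related continuation discharges the observation goal. The only subtlety—and the reason \textbf{geq} is not completely identical to the arithmetic case—is recognizing that the switch of result type is harmless precisely because $\valrel{\Nats}$ forces the compared naturals to be equal across the two languages, so the boolean outcome is forced to agree on both sides; every other ingredient is shared verbatim with \Cref{thm:compat-op}.
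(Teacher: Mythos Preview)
Your proposal is correct and follows exactly the standard route the paper intends: the paper's own proof of \Cref{thm:compat-geq} consists of the single sentence ``This is standard and analogous to the proof of \Cref{thm:compat-app},'' and your sketch spells out precisely that analogy via \Cref{thm:rel-ter-plug-rel-ctx-still-rel}, \Cref{thm:val-rel-mono}, and \Cref{thm:log-rel-clos-antired}, with the expected case split on the boolean outcome.
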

\begin{proof}
	This is standard and analogous to the proof of \Cref{thm:compat-app}.
\end{proof}
\begin{lemma}[Compatibility lemma for letin]\label{thm:compat-letin}
	\begin{align*}
		\text{if }
			&\
			\src{\Gamma};\src{P};\trg{P}\vdash\src{e}\anylogreln{n}\trg{e}:\src{\sigma}
		\\
			&
			\src{\Gamma,x:\sigma};\src{P};\trg{P}\vdash\src{e'}\anylogreln{n}\trg{e'}:\src{\sigma'}
		\\
		\text{then }
			&\
			\src{\Gamma};\src{P};\trg{P}\vdash\src{\letin{x}{e}{e'}}\anylogreln{n}\trg{\letin{x}{e}{e'}}:\src{\sigma'}
	\end{align*}
\end{lemma}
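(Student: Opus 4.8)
The plan is to mirror the structure of the earlier compatibility lemmas, in particular the proof of \Cref{thm:compat-app}, by decomposing the $\src{let}$ construct into one evaluation step on the bound expression followed by a substitution into the body. First I would unfold \Thmref{def:logrel-n-steps}: fixing a world $W$ with $\stepsfun{W}\leq n$ and $\progsfun{W}=(\src{P},\trg{P})$ (as in the proof of \Cref{thm:compat-call-v}), together with related substitutions $(W,\src{\gamma},\trg{\gamma})\in\envrel{\Gamma}$, the goal becomes $(W,\src{\letin{x}{e\gamma}{e'\gamma}},\trg{\letin{x}{e\gamma}{e'\gamma}})\in\termrel{\sigma'}$, using that substitution commutes with the $\src{let}$ binder under the usual freshness convention on $x$.

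The first hypothesis supplies $(W,\src{e\gamma},\trg{e\gamma})\in\termrel{\sigma}$. I would then appeal to \Cref{thm:rel-ter-plug-rel-ctx-still-rel}, instantiating the two evaluation contexts with $\src{\letin{x}{\hole{\cdot}}{e'\gamma}}$ and $\trg{\letin{x}{\hole{\cdot}}{e'\gamma}}$, which are legal evaluation contexts by the grammars of $\src{\evalctx}$ and $\trg{\evalctx}$. This reduces the goal to showing that for every $W'\futw W$ and every $(W',\src{v},\trg{v})\in\valrel{\sigma}$, the plugged terms satisfy $(W',\src{\letin{x}{v}{e'\gamma}},\trg{\letin{x}{v}{e'\gamma}})\in\termrel{\sigma'}$.

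Both plugged terms take a single silent step (the source and target $\src{let}$-reduction rules) to $\src{e'\gamma\subs{v}{x}}$ and $\trg{e'\gamma\subt{v}{x}}$ respectively. Hence \Cref{thm:log-rel-clos-antired}, taken with $i=j=1$ and the world left unchanged at $W'$ (so that the side condition $\stepsfun{W'}\leq\stepsfun{W'}+1$ holds trivially), lets me further reduce the goal to $(W',\src{e'\gamma\subs{v}{x}},\trg{e'\gamma\subt{v}{x}})\in\termrel{\sigma'}$. This is exactly an instance of the second hypothesis applied at $W'$ with the extended environment $(W',\src{\gamma\subs{v}{x}},\trg{\gamma\subt{v}{x}})\in\envrel{\Gamma,x:\sigma}$; the latter follows by pushing $(W,\src{\gamma},\trg{\gamma})\in\envrel{\Gamma}$ down to $W'$ via \Cref{thm:env-rel-mono}, relocating the value with \Cref{thm:val-rel-mono} if needed, and extending with $(W',\src{v},\trg{v})\in\valrel{\sigma}$ per the definition of $\envrel{\cdot}$.

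The only delicate point is the step-index bookkeeping: I must check that $W'$ still satisfies the precondition required to invoke the second hypothesis after moving along $\futw$ and after the antireduction step. Since $W'\futw W$ gives $\stepsfun{W'}\leq\stepsfun{W}\leq n$ and the antireduction leaves the world at $W'$ with no decrease in its step count, the precondition holds directly, so no genuine obstacle arises. Once the correct evaluation contexts and the $i=j=1$ antireduction instance are fixed, the remainder of the argument is entirely routine, exactly as in the cited compatibility lemmas.
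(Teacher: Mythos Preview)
Your proposal is correct and matches the paper's approach: the paper simply states that the proof is standard and analogous to that of \Cref{thm:compat-app}, and you have spelled out exactly those standard details---plugging via \Cref{thm:rel-ter-plug-rel-ctx-still-rel}, a single-step antireduction via \Cref{thm:log-rel-clos-antired}, and closing the environment via \Cref{thm:env-rel-mono} and the definition of $\envrel{\cdot}$.
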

\begin{proof}
	This is standard and analogous to the proof of \Cref{thm:compat-app}.
\end{proof}
\begin{lemma}[Compatibility lemma for if]\label{thm:compat-if}
	\begin{align*}
		\text{if }
			&\
			\src{\Gamma};\src{P};\trg{P}\vdash\src{e}\anylogreln{n}\trg{e}:\src{\Bools}
		\\
			&
			\src{\Gamma};\src{P};\trg{P}\vdash\src{e'}\anylogreln{n}\trg{e'}:\src{\sigma}
		\\
			&
			\src{\Gamma};\src{P};\trg{P}\vdash\src{e''}\anylogreln{n}\trg{e''}:\src{\sigma}
		\\
		\text{then }
			&\
			\src{\Gamma};\src{P};\trg{P}\vdash\src{\ifte{e}{e'}{e''}}\anylogreln{n}\trg{\ifte{e}{e'}{e''}}:\src{\sigma}
	\end{align*}
\end{lemma}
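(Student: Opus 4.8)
The plan is to prove the compatibility lemma for \src{if} by the same schema used for all the other compatibility lemmas in this section (notably \Cref{thm:compat-app}), relying on the standard toolkit of monotonicity and anti-reduction lemmas imported from prior work. First I would unfold \Cref{def:logrel-n-steps}: fix a world $W$ with $\stepsfun{W}\leq n$ and a pair of related substitutions $(W,\src{\gamma},\trg{\gamma})\in\envrel{\psd{\Gamma}}$, reducing the goal to showing $(W,\src{(\ifte{e}{e'}{e''})\gamma},\trg{(\ifte{e}{e'}{e''})\gamma})\in\termrel{\src{\sigma}}$. Since substitution distributes over the term formers, this is $(W,\src{\ifte{e\gamma}{e'\gamma}{e''\gamma}},\trg{\ifte{e\gamma}{e'\gamma}{e''\gamma}})\in\termrel{\src{\sigma}}$.

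Next I would use the hypothesis on the guard together with \Cref{thm:rel-ter-plug-rel-ctx-still-rel} to evaluate the scrutinee in the evaluation context $\src{\ifte{\hole{\cdot}}{e'\gamma}{e''\gamma}}$ (and its target counterpart). Concretely, from the first hypothesis we get $(W,\src{e\gamma},\trg{e\gamma})\in\termrel{\Bools}$, and plugging into the related \src{if}-contexts leaves us, for any future world $W'\futw W$ and any related boolean value $(W',\src{v},\trg{v})\in\valrel{\Bools}$, to show $(W',\src{\ifte{v}{e'\gamma}{e''\gamma}},\trg{\ifte{v}{e'\gamma}{e''\gamma}})\in\termrel{\src{\sigma}}$. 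By the definition of $\valrel{\Bools}$ there are exactly two cases: $(\src{v},\trg{v})=(\trues,\truet)$ or $(\src{v},\trg{v})=(\falses,\falset)$. In the first case both sides take one $\epsilon$-step (by \Cref{tr:ev-s-if-t} and \Cref{tr:ev-t-if-t}) to the \emph{then} branches; in the second case one step each (by \Cref{tr:ev-s-if-f} and \Cref{tr:ev-t-if-f}) to the \emph{else} branches. In either case I would close the goal using \Cref{thm:log-rel-clos-antired} to absorb the single matching reduction on each side, reducing to $(W',\src{e'\gamma},\trg{e'\gamma})\in\termrel{\src{\sigma}}$ (respectively $(W',\src{e''\gamma},\trg{e''\gamma})\in\termrel{\src{\sigma}}$), which follow from the second and third hypotheses after transporting the related substitutions to $W'$ via \Cref{thm:env-rel-mono}.

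There is essentially no serious obstacle here, since the branch expressions are typed at the \emph{same} type \src{\sigma} as the whole conditional, so no \inject{\cdot}/\extract{\cdot} coercions are needed and the step counts match exactly on the two branches; the anti-reduction bookkeeping is the only mildly technical point and is handled uniformly by \Cref{thm:log-rel-clos-antired}. The one place to be slightly careful is that the guard lives at source type \Bools (an \src{\tau}-type, not the back-translation pseudo-type \emuldv), so the two admissible values are genuinely related booleans and the case split is clean; if instead the guard had back-translation type one would need \Cref{thm:extract-red-pres-rel} to recover a boolean, but that situation does not arise in this lemma. The hard part, to the extent there is one, is merely matching the step indices through \Cref{thm:log-rel-clos-antired} so that the future-world hypotheses line up, which is routine given the monotonicity lemmas (\Cref{thm:env-rel-mono}, \Cref{thm:val-rel-mono}, \Cref{thm:cont-rel-mono}) already established.
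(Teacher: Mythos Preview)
Your proposal is correct and follows exactly the approach the paper has in mind: the paper's own proof simply reads ``This is standard and analogous to the proof of \Cref{thm:compat-app},'' and what you have written is precisely the unfolding of that standard pattern (evaluate the guard via \Cref{thm:rel-ter-plug-rel-ctx-still-rel}, case split on $\valrel{\Bools}$, absorb the branch step with \Cref{thm:log-rel-clos-antired}, and close using the branch hypotheses together with \Cref{thm:env-rel-mono}).
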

\begin{proof}
	This is standard and analogous to the proof of \Cref{thm:compat-app}.
\end{proof}
\begin{lemma}[Compatibility lemma for read]\label{thm:compat-read}
	\begin{align*}
		\text{if }
		\\
		\text{then }
			&\
			\src{\Gamma};\src{P};\trg{P}\vdash\src{\readexp}\anylogreln{n}\trg{\readexp}:\src{\Nats}
	\end{align*}
\end{lemma}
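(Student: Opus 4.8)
The plan is to mirror the structure of the preceding compatibility lemmas (e.g.\ \Cref{thm:compat-app}), specializing the argument to the effectful base case. Unfolding \Cref{def:logrel-n-steps}, the typing obligation $\src{\Gamma};\src{P}\vdash\src{\readexp}:\Nats$ is discharged immediately by the typing rule for read (\Cref{tr:ts-read}), and since $\src{\readexp}$ has no free variables, applying any related substitution $\src{\gamma}$ leaves both $\src{\readexp}$ and $\trg{\readexp}$ unchanged. It therefore remains to fix a world $W$ with $\stepsfun{W}\geq n$ and $\progsfun{W}=(\src{P},\trg{P})$ and show $(W,\src{\readexp},\trg{\readexp})\in\termrel{\Nats}$.

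By the definition of $\termrel{\cdot}$ I would take an arbitrary continuation pair $(W,\src{\evalctx},\trg{\evalctx})\in\contrel{\Nats}$ and establish $(\evalctxs{\readexp},\evalctxt{\readexp})\in\obsfun{W}{\anylogrel}$. If $\stepsfun{W}=0$ this is immediate from \Cref{thm:no-obs-if-no-steps}. Otherwise, the key observation is that $\src{\readexp}$ and $\trg{\readexp}$ step in lockstep: by \Cref{tr:ev-s-read} and \Cref{tr:ev-t-read}, for every $m\in\mathbb{N}$ we have both $\src{P\triangleright\evalctxs{\readexp}\xtosb{\rdl{m}}P\triangleright\evalctxs{m}}$ and $\trg{P\triangleright\evalctxt{\readexp}\xtotb{\rdl{m}}P\triangleright\evalctxt{m}}$, emitting the \emph{same} action $\rdl{m}$ and reaching the value $m$ on each side in exactly one step.

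To relate the continuations I would instantiate the continuation relation at the later world $\laterfun{W}\futw W$ together with the pair $(\laterfun{W},\src{m},\trg{m})\in\valrel{\Nats}$, which holds for every world by the definition of $\valrel{\Nats}$. This yields $(\evalctxs{m},\evalctxt{m})\in\obsfun{\laterfun{W}}{\anylogrel}$. Since each side takes a single step to reach these related expressions with matching behavior $\rdl{m}$, and $\stepsfun{\laterfun{W}}=\stepsfun{W}-1\geq\stepsfun{W}-\min(1,1)$ while $\progsfun{\laterfun{W}}=\progsfun{W}$, closure under antireduction (\Cref{thm:obs-rel-clos-antireds}) gives $(\evalctxs{\readexp},\evalctxt{\readexp})\in\obsfun{W}{\anylogrel}$, which is the required membership.

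The only genuinely delicate point is the treatment of the nondeterministic input: I must check that both observation-relation inclusions ($\underlogrel$ and $\overlogrel$) go through for \emph{every} value $m$ that can be read, rather than for one fixed reduction. This is exactly where the identical read semantics of the two languages pays off—any first step available to one side (reading some $m$) is matched by the identical first step on the other side, producing the same trace event and the same value—so the antireduction argument above applies uniformly in $m$, and both directions of $\obsfun{W}{\anylogrel}$ follow by symmetry of the two read rules.
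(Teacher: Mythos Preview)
Your proposal is correct and spells out in full what the paper dispatches in a single line (``By definition of the $\obsfun{W}{\anylogrel}$''). The underlying idea is the same: the read rules \Cref{tr:ev-s-read} and \Cref{tr:ev-t-read} are syntactically identical, so any behaviour one side can exhibit the other matches verbatim. The paper leaves it at that; you make the argument explicit by unfolding $\termrel{\Nats}$, taking a continuation, peeling off the single read step on each side with the same label $\rdl{m}$, and closing with antireduction against the continuation instantiated at $(\laterfun{W},\src{m},\trg{m})$. This is exactly the pattern the paper uses in the neighbouring write lemma (\Cref{thm:compat-write}), so your expansion is idiomatic; the paper simply chose not to repeat it for the even simpler read case. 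Your remark about handling the nondeterministic choice of $m$ uniformly is the one point worth making explicit, and you do so correctly.
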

\begin{proof}
	By definition of the $\obsfun{W}{\anylogrel}$.
\end{proof}
\begin{lemma}[Compatibility lemma for write]\label{thm:compat-write}
	\begin{align*}
		\text{if }
			&\
			\src{\Gamma};\src{P};\trg{P}\vdash\src{e}\anylogreln{n}\trg{e}:\src{\Nats}
		\\
		\text{then }
			&\
			\src{\Gamma};\src{P};\trg{P}\vdash\src{\writeexp{e}}\anylogreln{n}\trg{\writeexp{e}}:\src{\Nats}
	\end{align*}
\end{lemma}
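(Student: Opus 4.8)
The plan is to prove the compatibility lemma for \src{\writeexp{\cdot}} by the same structural recipe used in all the preceding compatibility lemmas (for \src{\op}, \src{\geq}, application, etc.), since \src{\writeexp{\cdot}} is syntactically a unary context surrounding a subexpression of type \src{\Nats}. First I would fix an arbitrary world $W$ with $\stepsfun{W} \leq n$ and an arbitrary pair of related substitutions $(W,\src{\gamma},\trg{\gamma})\in\envrel{\toemul{\trg{\Gamma}}}$, so that the goal reduces to showing $(W,\src{\writeexp{e}}\gamma,\trg{\writeexp{e}}\gamma)\in\termrel{\Nats}$, i.e.\ $(W,\src{\writeexp{(e\gamma)}},\trg{\writeexp{(e\gamma)}})\in\termrel{\Nats}$.

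The key move is to treat $\src{\writeexp{\hole{\cdot}}}$ and $\trg{\writeexp{\hole{\cdot}}}$ as an evaluation context on both sides and invoke \Cref{thm:rel-ter-plug-rel-ctx-still-rel}. From the hypothesis $\src{\Gamma};\src{P};\trg{P}\vdash\src{e}\anylogreln{n}\trg{e}:\src{\Nats}$ I obtain $(W,\src{e\gamma},\trg{e\gamma})\in\termrel{\Nats}$. So by \Cref{thm:rel-ter-plug-rel-ctx-still-rel} it suffices to show that for any $W'\futw W$ and any $(W',\src{v},\trg{v})\in\valrel{\Nats}$, the plugged terms $(W',\src{\writeexp{v}},\trg{\writeexp{v}})$ are in $\termrel{\Nats}$. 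Since $(W',\src{v},\trg{v})\in\valrel{\Nats}$ forces $\src{v}=\src{n}$ and $\trg{v}=\trg{n}$ for the same $n$, both sides take a single step via \Cref{tr:ev-s-write} and \Cref{tr:ev-t-write}, emitting the identical action $\bl{\wrl{n}}$ and reducing to the matching value $(\src{n},\trg{n})$.

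I would then close the goal using \Cref{thm:log-rel-clos-antired} (closedness under antireduction): both reductions take one step emitting the same behavior $\bl{\wrl{n}}$, and the residual terms $(W',\src{n},\trg{n})$ are in $\termrel{\Nats}$ by \Cref{thm:val-rel-impl-term-rel} applied to $(W',\src{n},\trg{n})\in\valrel{\Nats}$. The step counts on the two sides match (both are $1$), so the world bookkeeping condition $\stepsfun{W'} \leq \stepsfun{W'}+\fun{min}{1,1}$ in \Cref{thm:log-rel-clos-antired} holds trivially, and monotonicity of the value relation (\Cref{thm:val-rel-mono}) handles any residual future-world obligations.

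The only mild subtlety—and the step I would double-check—is that the emitted write action $\bl{\wrl{n}}$ is genuinely identical on both sides so that the behaviors in the definition of $\obsfun{W}{\anylogrel}$ line up; this is immediate here because $\valrel{\Nats}$ relates only syntactically equal numerals, so there is no real obstacle. The proof is entirely routine and mirrors \Cref{thm:compat-op} almost verbatim, the only differences being that \src{\writeexp{\cdot}} is unary rather than binary and that it produces an observable action rather than a silent step; both are absorbed cleanly by \Cref{thm:rel-ter-plug-rel-ctx-still-rel} and \Cref{thm:log-rel-clos-antired}.
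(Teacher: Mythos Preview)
Your proof is correct and follows essentially the same approach as the paper's own proof: unfold the logical relation at an arbitrary world and substitution, apply \Cref{thm:rel-ter-plug-rel-ctx-still-rel} to reduce to the case of a value $\src{n}/\trg{n}$ in the hole, observe that both sides emit the identical action $\bl{\wrl{n}}$ and step to $\src{n}/\trg{n}$, then conclude by \Cref{thm:log-rel-clos-antired} and \Cref{thm:val-rel-impl-term-rel}. The paper's proof is slightly terser but structurally identical, invoking exactly the same three auxiliary lemmas in the same order.
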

\begin{proof}
	We need to prove that 
	\begin{align*}
		\src{\Gamma};\src{P};\trg{P}\vdash\src{\writeexp{e}}\anylogreln{n}\trg{\writeexp{e}}:\src{\Nats}
	\end{align*}

	Take $W$ such that $\stepsfun{W}\leq n$ and $(W,\src{\gamma},\trg{\gamma})\in\envrel{\toemul{\trg{\Gamma}}}$, the thesis is: (we omit substitutions as they don't play an active role)
	\begin{itemize}
		\item $(W, \src{\writeexp{e}} , \trg{\writeexp{e}})\in\termrel{\Nats}$
	\end{itemize}

	By \Thmref{thm:rel-ter-plug-rel-ctx-still-rel} with HE, we have that for HW $W'\futw W$, and HV $(W',\src{n},\trg{n})\in\valrel{\Nats}$, the thesis becomes:
	\begin{itemize}
		\item $(W', \src{\writeexp{n}} , \trg{\writeexp{n}})\in\termrel{\Nats}$
	\end{itemize}
	The reductions proceed as:
	\begin{align*}
		\src{P\triangleright \writeexp{n}}
		\Xtosb{\wrl{n}}
		\src{P\triangleright n}
	\end{align*}
	and
	\begin{align*}
		\trg{P\triangleright \writeexp{n}}
		\Xtotb{\wrl{n}}
		\trg{P\triangleright n}
	\end{align*}
	By \Thmref{thm:log-rel-clos-antired} the thesis is:
	\begin{itemize}
		\item $(W', \src{n} , \trg{n})\in\termrel{\Nats}$
	\end{itemize}
	So the theorem holds by \Thmref{thm:val-rel-impl-term-rel} with HV.
\end{proof}

%
\paragraph{Semantic Preservation Results}
\begin{theorem}[\comptd{\cdot} is semantics preserving for expressions]\label{thm:comp-sem-pres-expr}
	\begin{align*}
		\text{if }
			&\
			\src{P;\Gamma}\vdash\src{e}:\src{\tau}
		\\
			&\
			\vdash\src{P}\anylogreln{n}\trg{P}
		\\
		\text{then }
			&\
			\forall n.~
			\src{\Gamma};\src{P};\trg{P}\vdash\src{e}\anylogreln{n}\comptd{\src{e}}:\src{\tau}
	\end{align*}
\end{theorem}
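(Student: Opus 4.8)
The plan is to prove this by structural induction on the typing derivation $\src{P;\Gamma}\vdash\src{e}:\src{\tau}$, after fixing an arbitrary $n$ and keeping the program-relatedness hypothesis $\vdash\src{P}\anylogreln{n}\trg{P}$ available throughout; since $n$ will be arbitrary, the $\forall n$ in the conclusion follows at the end. The observation that makes the whole argument mechanical is that $\comptd{\cdot}$ acts \emph{homomorphically} on expression formers: every source construct is sent to the matching target construct (\Cref{tr:comptd-op}, \Cref{tr:comptd-geq}, \Cref{tr:comptd-let}, \Cref{tr:comptd-if}, \Cref{tr:comptd-read}, \Cref{tr:comptd-write}, together with the literal and variable clauses). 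Hence in each inductive step the compiled term has exactly the shape expected by the corresponding compatibility lemma proved just above, and the goal closes by feeding that lemma the induction hypotheses coming from the premises of the typing rule.

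Concretely, I would dispatch the cases as follows: rule \Cref{tr:ts-op} by \Cref{thm:compat-op}; rule \Cref{tr:ts-geq} by \Cref{thm:compat-geq}; rule \Cref{tr:ts-vardef} by \Cref{thm:compat-letin}; rule \Cref{tr:ts-if} by \Cref{thm:compat-if}; rule \Cref{tr:ts-read} by \Cref{thm:compat-read}; and rule \Cref{tr:ts-write} by \Cref{thm:compat-write}. The base cases are discharged directly from the definitions: for the literals (\Cref{tr:ts-true}, \Cref{tr:ts-false}, \Cref{tr:ts-nat}) the compiled term is the matching target value and membership in $\valrel{\Bools}$ or $\valrel{\Nats}$ is immediate, passing from the value to the term relation via \Cref{thm:val-rel-impl-term-rel}; for the variable case (\Cref{tr:ts-var}) the related substitutions supplied by $\envrel{\psd{\Gamma}}$ in \Cref{def:logrel-n-steps} make $\src{x\gamma}$ and $\trg{x\gamma}$ related values; and for $\fails$ (\Cref{tr:ts-fail}) both sides fail, which the observation relation permits.

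The one genuinely delicate case is the function call. Here I would first desugar $\src{\call{f}~e}$ into the application $\src{\call{f}}~\src{e}$ (\Cref{tr:ts-call-v}, \Cref{tr:ts-app}), whose compilation is $\trg{\call{f}}~\comptd{\src{e}}$ (\Cref{tr:comptd-call-v}, \Cref{tr:comptd-app}). The argument $\src{e}$ is handled by the induction hypothesis, and the application former by \Cref{thm:compat-app}; the function-value part, namely $\src{\call{f}}\anylogreln{n}\trg{\call{f}}:\src{\sigma'\to\sigma}$, is produced by \Cref{thm:compat-call-v}. Its premise — that the source body is related to the type-check-wrapped target body $\trg{\ifte{x \checkty{\comptd{\tau}}}{\comptd{e}}{\failt}}$ generated by \Cref{tr:comptd-fun} — is precisely the per-function relatedness packaged in $\vdash\src{P}\anylogreln{n}\trg{P}$ (\Cref{def:logrel-prog}), so no fresh work is needed inside this theorem.

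I expect this call case to be the main obstacle, for two reasons. First, the compiler does not translate a body $\src{e}$ to $\comptd{\src{e}}$ verbatim but inserts a dynamic guard $\trg{x \checkty{\comptd{\tau}}}$ at procedure entry, so the two sides only align after the guard reduces; this bookkeeping is exactly what \Cref{thm:compat-call-v} absorbs, which is why I route the call case through it rather than unfolding the semantics by hand. Second, the recursive appeal to the program relation must be well-founded: the function-value clause of $\valrel{\psd{\tau}\to\psd{\tau'}}$ quantifies over a strictly smaller world $W'\strfutw W$, so invoking the related bodies consumes a step, and it is this strict-future quantification that lets the step-indexed hypothesis $\vdash\src{P}\anylogreln{n}\trg{P}$ close the loop without circularity. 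Everything else is routine, and since $n$ was arbitrary the statement for all $n$ — and hence, via \Cref{def:logrel-expr}, the unindexed relation — follows.
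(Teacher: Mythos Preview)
Your proposal is correct and matches the paper's proof essentially line for line: induction on the typing derivation, literals and variables handled by the definitions of $\valrel{\cdot}$ and $\envrel{\cdot}$, each compound construct dispatched to its compatibility lemma (\Cref{thm:compat-op}, \Cref{thm:compat-geq}, \Cref{thm:compat-letin}, \Cref{thm:compat-if}, \Cref{thm:compat-read}, \Cref{thm:compat-write}), and the call case split into the desugared \textsf{call}-value and application forms handled by \Cref{thm:compat-call-v} and \Cref{thm:compat-app} respectively. Your additional discussion of why the program-relatedness hypothesis supplies the premise of \Cref{thm:compat-call-v}, and of the step-indexing discipline that keeps the recursive appeal well-founded, is accurate and more explicit than the paper's terse case list; you also cover the $\fails$ case, which the paper silently omits.
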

\begin{proof}
	The proof proceeds by induction on the type derivation.
	\begin{description}
		\item[true, false, nat] By definition of \valrel{\cdot}.
		\item[var] By definition of \envrel{\cdot}.
		\item[call] By \Thmref{thm:compat-call-v}.
		\item[app] By IH with \Thmref{thm:compat-app}.
		\item[op] By IH with \Thmref{thm:compat-op}.
		\item[geq] By IH with \Thmref{thm:compat-geq}.
		\item[letin] By IH with \Thmref{thm:compat-letin}.
		\item[if] By IH with \Thmref{thm:compat-if}.
		\item[read] By \Thmref{thm:compat-read}.
		\item[write] By IH with \Thmref{thm:compat-write}.
	\end{description}
\end{proof}

\begin{theorem}[\comptd{\cdot} is semantics preserving for programs]\label{thm:comp-sem-pres-prog}
	\begin{align*}
		\text{if }
			&\
			\vdash\src{P}
		\\
		\text{then }
			&\
			\vdash\src{P}\anylogrel\comptd{\src{P}}
	\end{align*}
\end{theorem}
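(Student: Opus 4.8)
The statement to prove is \Cref{thm:comp-sem-pres-prog}: if $\vdash\src{P}$ then $\vdash\src{P}\anylogrel\comptd{\src{P}}$. Unfolding \Cref{def:logrel-prog}, this amounts to two obligations: first, that $\src{P}$ and $\comptd{\src{P}}$ have matching function definitions modulo the compiler's wrapping, i.e.\ $\src{f(x{:}\sigma'){:}\sigma\mapsto\ret e}\in\src{P}$ iff $\trg{f(x)\mapsto\ret e'}\in\comptd{\src{P}}$; and second, that for each such function the bodies are logically related, $\src{x{:}\sigma'};\src{P};\comptd{\src{P}}\vdash\src{e}\anylogrel\comptd{\src{e}}:\src{\sigma}$. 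The plan is to discharge these two obligations separately, using the semantic-preservation result for expressions (\Cref{thm:comp-sem-pres-expr}) as the workhorse for the second.

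First I would establish the syntactic correspondence of function tables. This follows directly by inspection of the compiler definitions \eqref{tr:comptd-prog} and \eqref{tr:comptd-fun}: the program compiler maps each interface and function pointwise, and the function compiler transforms $\src{f(x{:}\tau){:}\tau'\mapsto\ret e}$ into $\trg{f(x)\mapsto\ret\ifte{x\checkty{\comptd{\tau}}}{\comptd{e}}{\failt}}$, preserving the function name $\src{f}$ and the $1$--$1$ correspondence between source and target entries. Since $\vdash\src{P}$ holds, \Cref{tr:ts-co} guarantees each function $\src{F}$ is well-typed via \Cref{tr:ts-fu}, so for the body $\src{e}$ we have $\src{P};\src{x{:}\sigma'}\vdash\src{e}:\src{\sigma}$, exactly the typing hypothesis needed to invoke the expression-level theorem.

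Next I would turn to the body-relatedness obligation. For each function $\src{f(x{:}\sigma'){:}\sigma\mapsto\ret e}$ in $\src{P}$, I have $\src{P};\src{x{:}\sigma'}\vdash\src{e}:\src{\sigma}$ from well-typedness, so \Cref{thm:comp-sem-pres-expr} yields $\src{x{:}\sigma'};\src{P};\comptd{\src{P}}\vdash\src{e}\anylogreln{n}\comptd{\src{e}}:\src{\sigma}$ for every $n$, hence $\src{x{:}\sigma'};\src{P};\comptd{\src{P}}\vdash\src{e}\anylogrel\comptd{\src{e}}:\src{\sigma}$ by \Cref{def:logrel-expr}. This matches precisely the second clause of \Cref{def:logrel-prog}, so the two obligations together give $\vdash\src{P}\anylogrel\comptd{\src{P}}$. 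The subtle point to watch is the implicit circularity in $\vdash\src{P}\anylogrel\comptd{\src{P}}$: the program-level relation appears as a hypothesis inside \Cref{thm:comp-sem-pres-expr} (via $\vdash\src{P}\anylogreln{n}\trg{P}$), which seems to require what we are proving. The resolution, and the main obstacle, is the step-indexing: the premise of \Cref{thm:comp-sem-pres-expr} only needs the relation up to $n$ steps, and the compatibility lemma for calls (\Cref{thm:compat-call-v}) only appeals to the relatedness of callee bodies at strictly smaller worlds ($W'\strfutw W$), thanks to the \later/$\strfutw$ machinery in the function-value relation $\valrel{\psd{\tau}\to\psd{\tau'}}$. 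I would therefore phrase the argument as a well-founded induction on the step index $n$, so that when proving body-relatedness at index $n$ the appeal to the whole-program relation is only ever at indices $<n$, closing the loop without true circularity.
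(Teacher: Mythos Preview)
Your proposal is correct and follows the same skeleton as the paper: use the compiler definition on programs (\ref{app:tr:comptd-prog}, \ref{app:tr:comptd-fun}) to establish the syntactic function-table correspondence, then invoke \Cref{thm:comp-sem-pres-expr} on each function body. The paper's one-line proof (``induction on the size of $\src{P}$, then \Cref{thm:comp-sem-pres-expr} on each function body'') does exactly this but leaves the circularity you identified implicit.

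Your explicit treatment of that circularity is the more careful argument. The paper's phrase ``induction on the size of $\src{P}$'' really just ranges over the finitely many function definitions; it does not by itself break the loop where the premise $\vdash\src{P}\anylogreln{n}\trg{P}$ of \Cref{thm:comp-sem-pres-expr} is the very thing being established. Your observation that the function case in $\valrel{\psd{\tau}\to\psd{\tau'}}$ only demands the callee body at a strictly smaller world ($W'\strfutw W$) is exactly the leverage point, and recasting the outer argument as well-founded induction on the step index $n$ is the standard and correct way to close it. So you are not taking a different route from the paper; you are making explicit the step-indexed fixpoint argument that the paper's brief proof leaves to the reader.
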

\begin{proof}
	By induction on the size of \src{P} and then \Cref{tr:comptd-prog} and with \Thmref{thm:comp-sem-pres-expr} on each function body.
\end{proof}

%
\paragraph{Compatibility Lemmas for Pseudo Types}
\begin{lemma}[Compatibility lemma for backtranslation of op]\label{thm:compat-backtr-op}
	\begin{align*}
		\text{if }
			&\ (HE)~
			\toemul{\trg{\Gamma}};\src{P};\trg{P}\vdash\src{e}\anylogreln{n}\trg{e}:\emuldv
		\\
			&\ (HEP)~
			\toemul{\trg{\Gamma}};\src{P};\trg{P}\vdash\src{e'}\anylogreln{n}\trg{e'}:\emuldv
		\\
		\text{then }
			&\
			\toemul{\trg{\Gamma}};\src{P};\trg{P}\vdash
				\begin{aligned}[t]
					&
					\letins{\src{x1}:\Nats}{\extract{\Nats}(\src{e})
					\\
					&\
					}{\letins{\src{x2}:\Nats}{\extract{\Nats}(\src{e'})
					\\
					&\ \ 
					}{\inject{\Nats}(\src{x1\op x2})}}
				\end{aligned}
			\anylogreln{n} \trg{e\op e'} : \emuldv
	\end{align*}
\end{lemma}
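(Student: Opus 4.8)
The plan is to prove this compatibility lemma by the same pattern used in all the preceding compatibility lemmas (\Cref{thm:compat-app}, \Cref{thm:compat-op}, etc.), adapted to the back-translation type \emuldv and the helper functions \inject{\cdot} and \extract{\cdot}. Concretely, I would fix a world $W$ with $\stepsfun{W} \leq n$ and a related environment $(W,\src{\gamma},\trg{\gamma}) \in \envrel{\toemul{\trg{\Gamma}}}$, and then show that the two (substituted) terms are in $\termrel{\emuldv}$. The overall structure is to peel off the evaluation one subexpression at a time, using \Cref{thm:rel-ter-plug-rel-ctx-still-rel} to reduce the goal about the whole \src{let}-nested term to goals about the values produced by the recursive subexpressions.

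First I would apply \Cref{thm:rel-ter-plug-rel-ctx-still-rel} with the hypothesis $HE$, placing the evaluation of \src{e} (resp.\ \trg{e}) into the surrounding evaluation context (the outer \src{\letin{x1}{\extract{\Nats}(-)}{\cdots}} on the source, and the first operand position on the target). This yields a future world $W' \futw W$ and related values $(W',\src{v_1},\trg{v_1}) \in \valrel{\emuldv}$. Next I would reduce \src{\extract{\Nats}(v_1)}: since $(W',\src{v_1},\trg{v_1}) \in \valrel{\emuldv}$, by \Cref{thm:extract-red-pres-rel} the extraction reduces to some \src{v_1'} with $(W',\src{v_1'},\trg{v_1}) \in \valrel{\Nats}$, so \src{v_1'} and \trg{v_1} are literally the same natural \trg{n}. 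I would repeat this for \src{e'}/\trg{e'}, obtaining $(W'',\src{v_2'},\trg{v_2}) \in \valrel{\Nats}$ with $W'' \futw W'$. At this point the source has reduced to \src{\inject{\Nats}(v_1' \op v_2')} after performing the primitive operation, and the target has reduced to \trg{v_1 \op v_2}; since the extracted naturals coincide with the target naturals, the arithmetic results agree. By \Cref{thm:inject-red-pres-rel}, injecting the source result \src{n_1 \op n_2} yields a value related at \emuldv to \trg{n_1 \op n_2}, which is exactly what \trg{e \op e'} steps to. Finally I would use closedness under anti-reduction (\Cref{thm:log-rel-clos-antired}) to lift this relatedness of the reduced values back to the original terms, discharging the $\termrel{\emuldv}$ goal, using \Cref{thm:val-rel-impl-term-rel} at the leaf.

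The main obstacle I anticipate is bookkeeping around the target's failure behavior for operations on non-natural values. The target rule \Cref{tr:ev-t-op-fail} fails when an operand is a boolean, whereas on the source side \src{\extract{\Nats}} itself fails (via \fails) precisely when the back-translated value is a boolean encoding. I would need to verify that these two failure conditions line up: when $(W',\src{v_1},\trg{v_1}) \in \valrel{\emuldv}$ with \trg{v_1} a boolean, the extraction \src{\extract{\Nats}(v_1)} diverges into \fails and the target \trg{e \op e'} steps to \failt, so both sides produce the \failactc action and remain related. However, since the hypotheses $HE$ and $HEP$ only guarantee relatedness at \emuldv (which permits booleans), I should check whether the statement intends operands already known to be naturals or whether failure cases are genuinely part of the goal. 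If failure cases are in scope, the careful part is confirming that the observation relation $\obsfun{W}{\anylogrel}$ treats the matched \failactc on both sides as producing identical behaviors \bl{\beta}; this follows from the definitions of \Cref{def:src-prog-beh} and \Cref{def:trg-prog-beh} but requires threading the failure trace event through the anti-reduction lemma rather than a clean value. Everything else is routine and mirrors the existing compatibility proofs.
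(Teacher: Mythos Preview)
Your proposal is correct and follows essentially the same approach as the paper's proof: set up $W$ and the environment, use \Cref{thm:rel-ter-plug-rel-ctx-still-rel} with $HE$ to obtain related values at \emuldv, case-split on whether the target value is a boolean or a natural, handle the boolean case by observing that both sides emit $\failactc$ (via \Cref{tr:ev-t-op-fail} on the target and the failing branch of $\extract{\Nats}$ on the source), and in the natural case use \Cref{thm:extract-red-pres-rel}, repeat for the second operand via $HEP$, then close with \Cref{thm:log-rel-clos-antired}, \Cref{thm:inject-red-pres-rel}, and \Cref{thm:val-rel-impl-term-rel}. The ``obstacle'' you anticipate is exactly the explicit case analysis the paper performs up front, so there is nothing additional to worry about.
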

\begin{proof}
	We need to prove that 
	\begin{align*}
		\toemul{\trg{\Gamma}};\src{P};\trg{P}\vdash
		\src{
			\begin{aligned}[t]
				&
				\letins{\src{x1}:\Nats}{\extract{\Nats}(\src{e})
				\\
				&\
				}{\letins{\src{x2}:\Nats}{\extract{\Nats}(\src{e'})
				\\
				&\ \ 
				}{\inject{\Nats}(\src{x1\op x2})}}
			\end{aligned}
		}
		\anylogrel\trg{e\op e'}:\emuldv
	\end{align*}

	Take $W$ such that $\stepsfun{W}\leq n$ and $(W,\src{\gamma},\trg{\gamma})\in\envrel{\toemul{\trg{\Gamma}}}$, the thesis is:
	\begin{itemize}
		\item $(W, \src{
			\begin{aligned}[t]
				&
				\letins{\src{x1}:\Nats}{\extract{\Nats}(\src{e})
				\\
				&\
				}{\letins{\src{x2}:\Nats}{\extract{\Nats}(\src{e'})
				\\
				&\ \ 
				}{\inject{\Nats}(\src{x1\op x2})}}
			\end{aligned}
		}
		,
		\trg{e\op e'})\in\termrel{\emuldv}$
	\end{itemize}

	By \Thmref{thm:rel-ter-plug-rel-ctx-still-rel} with HE we need to prove that $\forall W'\futw W$, given IHV $(W',\src{v},\trg{v})\in\valrel{\emuldv}$
	\begin{itemize}
		\item $(W', \src{
			\begin{aligned}[t]
				&
				\letins{\src{x1}:\Nats}{\extract{\Nats}(\src{v})
				\\
				&\
				}{\letins{\src{x2}:\Nats}{\extract{\Nats}(\src{e'})
				\\
				&\ \ 
				}{\inject{\Nats}(\src{x1\op x2})}}
			\end{aligned}
		}
		,
		\trg{v\op e'})\in\termrel{\emuldv}$
	\end{itemize}
	By IHV we perform a case analysis on \trg{v}:
	\begin{itemize}
		\item \truet / \falset and thus \src{v} is \src{1}/\src{0} respectively.

		We show the case for \truet, \src{1} the other is analogous.

		In this case we have:
		\begin{align*}
			\trg{P\triangleright \truet\op e'}
			\Xtotb{\failact}
			\failt
		\end{align*}
		and
		\begin{align*}
				&
				\src{P\triangleright \extract{\Nats}(1)}
			\\
			\equiv
				&
				\src{\letin{x}{1}{\ifte{x\geq 2}{x-2}{\fails}}}
			\\
			\reds
				&
				\src{{\ifte{1\geq 2}{x-2}{\fails}}}
			\\
			\Xtosb{\failact}
				&
				\fails
		\end{align*}

		So this case follows from the definition of $\obsfun{W'}{\anylogrel}$ as both terms perform the same visible action ($\failact$).

		\item \trg{n} and thus \src{v} is \src{n+2}.

		In this case we have:
		\begin{align*}
				&
				\src{P\triangleright \extract{\Nats}(n+2)}
			\\
			\equiv
				&
				\src{\letin{x}{n+2}{\ifte{x\geq 2}{x-2}{\fails}}}
			\\
			\reds
				&
				\src{{\ifte{n+2\geq 2}{x-2}{\fails}}}
			\\
			\reds
				&
				\src{n}
		\end{align*}
		And by \Thmref{thm:extract-red-pres-rel} with IHV we know that IHN $(W',\src{n},\trg{n})\in\valrel{\Nats}$.

		Analogously, \src{e'} and \trg{e'} follow the same treatment.
		So we apply \Thmref{thm:rel-ter-plug-rel-ctx-still-rel} with HEP, perform a case analysis, in one case they fail and in the other they reduce to \src{n'}/\trg{n'} such that IHNP $(W',\src{n'},\trg{n'})\in\valrel{\Nats}$.

		So the reductions are :
		\begin{align*}
				&
				\src{P\triangleright}
				\letins{\src{x1}:\Nats}{\extract{\Nats}(\src{e})
				}{\letins{\src{x2}:\Nats}{\extract{\Nats}(\src{e'})
				\\
				&\ \ 
				}{\inject{\Nats}(\src{x1\op x2})}}
			\\
			\reds\redstars
				&
				\src{P\triangleright}
				\letins{\src{x1}:\Nats}{\extract{\Nats}(\src{n})
				}{\letins{\src{x2}:\Nats}{\extract{\Nats}(\src{e'})
				\\
				&\ \ 
				}{\inject{\Nats}(\src{x1\op x2})}}
			\\
			\reds
				&
				\src{P\triangleright}
				\letins{\src{x2}:\Nats}{\extract{\Nats}(\src{e'})
				\\
				&\ \ 
				}{\inject{\Nats}(\src{n\op x2})}
			\\
			\reds\redstars
				&
				\src{P\triangleright}
				\letins{\src{x2}:\Nats}{\extract{\Nats}(\src{n'})
				\\
				&\ \ 
				}{\inject{\Nats}(\src{n\op x2})}
			\\
			\reds
				&
				\src{P\triangleright
				\inject{\Nats}(\src{n\op n'})}
		\end{align*}
		and
		\begin{align*}
				\trg{P\triangleright e\op e'}
			\redt\redstart
				\trg{P\triangleright n\op e'}
			\redt\redstart
				\trg{P\triangleright n\op n'}
		\end{align*}

		By \Thmref{thm:log-rel-clos-antired} the thesis becomes:
		\begin{itemize}
			\item $(W', \src{\inject{\Nats}(\src{n\op n'}) }, \trg{n\op n'})\in\termrel{\emuldv}$
		\end{itemize}
		If the $\stepsfun{W'}=0$ the thesis follows from \Thmref{thm:no-steps-impl-rel}, otherwise:

		By \Cref{tr:ev-s-op} and \Cref{tr:ev-t-op} we can apply \Thmref{thm:log-rel-clos-antired} (with IHN and IHNP in the term relation by \Thmref{thm:val-rel-impl-term-rel}) and the thesis becomes:
		\begin{itemize}
			\item $(W', \src{\inject{\Nats}(\src{n''}) }, \trg{n''})\in\termrel{\emuldv}$
		\end{itemize}
		The reductions proceed as follows:
		\begin{align*}
			\src{P\triangleright \inject{\Nats}(\src{n''}) \reds P\triangleright n''+2}
		\end{align*}
		By \Thmref{thm:log-rel-clos-antired} and then \Thmref{thm:val-rel-impl-term-rel} the thesis becomes:
		\begin{itemize}
			\item $(W', \src{ n'''+2 } , \trg{n''})\in\valrel{\emuldv}$
		\end{itemize}

		By \Thmref{thm:inject-red-pres-rel} the thesis becomes:
		\begin{itemize}
			\item $(W', \src{ n'' } , \trg{n''})\in\valrel{\Nats}$
		\end{itemize}
		which follows from the definition of \valrel{\Nats}.
	\end{itemize}
\end{proof}

\begin{lemma}[Compatibility lemma for backtranslation of geq]\label{thm:compat-backtr-geq}
	\begin{align*}
		\text{if }
			&\
			\toemul{\trg{\Gamma}};\src{P};\trg{P}\vdash\src{e}\anylogreln{n}\trg{e}:\emuldv
		\\
			&\
			\toemul{\trg{\Gamma}};\src{P};\trg{P}\vdash\src{e'}\anylogreln{n}\trg{e'}:\emuldv
		\\
		\text{then }
			&\
			\toemul{\trg{\Gamma}};\src{P};\trg{P}\vdash
				\begin{aligned}[t]
					&
					\letins{\src{x1}:\Nats}{\extract{\Nats}(\src{e})
					\\
					&\
					}{\letins{\src{x2}:\Nats}{\extract{\Nats}(\src{e'})
					\\
					&\ \ 
					}{\inject{\Bools}(\src{x1\geq x2})}}
				\end{aligned}
			\anylogreln{n} \trg{e\geq e'} : \emuldv
	\end{align*}
\end{lemma}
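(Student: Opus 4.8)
The plan is to follow the same pattern established for the backtranslation of \src{\op} in \Cref{thm:compat-backtr-op}, since the two lemmas differ only in the final injection step (\inject{\Nats} versus \inject{\Bools}) and in the fact that the inner comparison \src{x1\geq x2} produces a boolean rather than a natural. First I would unfold the logical relation: take a world $W$ with $\stepsfun{W}\leq n$ and an environment pair $(W,\src{\gamma},\trg{\gamma})\in\envrel{\toemul{\trg{\Gamma}}}$, reducing the goal to showing that the two source and target expressions are in $\termrel{\emuldv}$. Then, using \Cref{thm:rel-ter-plug-rel-ctx-still-rel} with the two hypotheses HE and HEP, I would push the reasoning under the evaluation contexts for both subexpressions, obtaining related values $(W',\src{v},\trg{v})\in\valrel{\emuldv}$ for the first operand (and analogously for the second).

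Next I would do a case analysis on \trg{v} exactly as in the \src{\op} proof. When \trg{v} is \truet or \falset (so \src{v} is \src{1} or \src{0}), the \extract{\Nats} call on the source side silently reduces and then fails via \fails, while the target comparison \trg{v\geq e'} fails via \Cref{tr:ev-t-geq-fail}; both perform the same visible $\failact$ action, so the pair lies in $\obsfun{W'}{\anylogrel}$ by its definition. When \trg{v} is \trg{n} (so \src{v} is \src{n+2}), I would apply \Thmref{thm:extract-red-pres-rel} to conclude $(W',\src{n},\trg{n})\in\valrel{\Nats}$, handle the second operand symmetrically to obtain $(W',\src{n'},\trg{n'})\in\valrel{\Nats}$, and then let both sides reduce to the comparison: the source reaches $\src{\inject{\Bools}(\src{n\geq n'})}$ and the target reaches \trg{n\geq n'}. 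Using \Cref{tr:ev-s-geq-t}/\Cref{tr:ev-s-geq-f} and the corresponding target rules together with \Thmref{thm:log-rel-clos-antired}, the comparison reduces on both sides to the same boolean value; finally \Thmref{thm:inject-red-pres-rel} establishes that $\src{\inject{\Bools}(\cdot)}$ of that boolean is related to the target boolean at pseudo-type \emuldv.

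The one genuinely new ingredient compared to \Cref{thm:compat-backtr-op} is the closing step. There the result of the operation was a natural, so \inject{\Nats} applied directly; here the comparison yields a boolean, so I must verify that \src{\inject{\Bools}} reduces correctly on both \trues and \falses and that the resulting source value (\src{1} or \src{0}) is related to the target boolean (\truet or \falset) by $\valrel{\emuldv}$. This is precisely what \Thmref{thm:inject-red-pres-rel} guarantees once I have established $(W',\src{\trues},\truet)$ or $(W',\src{\falses},\falset)$ in $\valrel{\Bools}$ from the two cases $n\geq n'$ and $n<n'$ on both languages. I expect the main obstacle to be bookkeeping the step budgets across the nested \letin reductions (ensuring $W'\futw W$ and the $\stepsfun{}$ inequalities line up so that \Thmref{thm:log-rel-clos-antired} applies), rather than any conceptual difficulty; as noted, the proof is otherwise standard and structurally identical to that of \Cref{thm:compat-backtr-op}.

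\begin{proof}
	The proof is analogous to that of \Cref{thm:compat-backtr-op}, the only difference being in the final injection step, where \inject{\Bools} is used in place of \inject{\Nats}: once the two operands reduce to related naturals $(W',\src{n},\trg{n})$ and $(W',\src{n'},\trg{n'})\in\valrel{\Nats}$, both sides evaluate the comparison to the same boolean (using \Cref{tr:ev-s-geq-t}, \Cref{tr:ev-s-geq-f} and their target counterparts), and \Thmref{thm:inject-red-pres-rel} establishes that injecting that boolean on the source is related to the target boolean at pseudo-type \emuldv.
\end{proof}
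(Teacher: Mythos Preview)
Your proposal is correct and matches the paper's approach exactly: the paper's proof is simply ``Analogous to the proof of \Cref{thm:compat-backtr-op}.'' Your elaboration of how the analogy plays out (same case analysis on the related values, same failure case for booleans, and the only real change being the use of \inject{\Bools} together with \Thmref{thm:inject-red-pres-rel} at the end) is accurate and faithful to the structure of the \src{\op} proof.
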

\begin{proof}
	Analogous to the proof of \Cref{thm:compat-backtr-op}.
\end{proof}

\begin{lemma}[Compatibility lemma for backtranslation of letin]\label{thm:compat-backtr-letin}
	\begin{align*}
		\text{if }
			&\
			\toemul{\trg{\Gamma}};\src{P};\trg{P}\vdash\src{e}\anylogreln{n}\trg{e}:\emuldv
		\\
			&\
			\toemul{\trg{\Gamma}}\src{,x:\Nats};\src{P};\trg{P}\vdash\src{e'}\anylogreln{n}\trg{e'}:\emuldv
		\\
		\text{then }
			&\
			\toemul{\trg{\Gamma}};\src{P};\trg{P}\vdash \src{\letin{x:\Nats}{\src{e}}{\src{e'}}} \anylogreln{n} \trg{\letin{x}{e}{e'}} : \emuldv
	\end{align*}
\end{lemma}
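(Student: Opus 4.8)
The plan is to reuse the template of the fully worked-out \Cref{thm:compat-backtr-op}, noting that the argument is much lighter here: by \Cref{tr:backtrdt-let} the back-translation of a let introduces no \extract{\cdot}/\inject{\cdot} wrappers, so the two sides reduce in lockstep and no encoding/decoding bookkeeping is required. First I would unfold \Cref{def:logrel-n-steps}: fix a world $W$ with $\stepsfun{W}\leq n$ and a related substitution $(W,\src{\gamma},\trg{\gamma})\in\envrel{\toemul{\trg{\Gamma}}}$; pushing $\gamma$ through the binder (legitimate since $x$ is fresh for $\gamma$), the goal becomes
\[
(W,\ \src{\letin{x:\Nats}{e\gamma}{e'\gamma}},\ \trg{\letin{x}{e\gamma}{e'\gamma}})\in\termrel{\emuldv}.
\]

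Next I would discharge the bound subexpression with the first hypothesis. Applying \Cref{thm:rel-ter-plug-rel-ctx-still-rel} to it, with the hole placed at the bound-expression position of the let, reduces the goal to the following: for every $W'\futw W$ and every $(W',\src{v},\trg{v})\in\valrel{\emuldv}$, show
\[
(W',\ \src{\letin{x:\Nats}{v}{e'\gamma}},\ \trg{\letin{x}{v}{e'\gamma}})\in\termrel{\emuldv}.
\]
Both configurations now take a single silent step, by \Cref{tr:ev-s-let} and \Cref{tr:ev-t-let} respectively, to the substituted bodies $\src{e'\gamma\subs{v}{x}}$ and $\trg{e'\gamma\subt{v}{x}}$. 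If $\stepsfun{W'}=0$ the claim is immediate from \Cref{thm:no-steps-impl-rel}; otherwise I would apply \Cref{thm:log-rel-clos-antired} (closedness under anti-reduction), leaving the residual goal $(W',\ \src{e'\gamma\subs{v}{x}},\ \trg{e'\gamma\subt{v}{x}})\in\termrel{\emuldv}$.

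This residual goal then follows from the second hypothesis once the extended environment is shown related. From the original $(W,\src{\gamma},\trg{\gamma})\in\envrel{\toemul{\trg{\Gamma}}}$ and \Cref{thm:env-rel-mono} I obtain $(W',\src{\gamma},\trg{\gamma})\in\envrel{\toemul{\trg{\Gamma}}}$, and combining this with the assumed $(W',\src{v},\trg{v})\in\valrel{\emuldv}$ the definition of $\envrel{\cdot}$ yields $(W',\ \src{\gamma\subs{v}{x}},\ \trg{\gamma\subt{v}{x}})$ in the environment relation for $\toemul{\trg{\Gamma}}$ extended with $x$. Instantiating the second hypothesis at $W'$ with this environment gives exactly the residual goal.

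The single point that requires care — and the only genuine difference from the ordinary source-type let compatibility \Cref{thm:compat-letin} — is the relational type at which the freshly bound variable $x$ is interpreted. Because the let back-translation binds $x$ to the raw, still-encoded result of $\backtrdt{e}$ rather than to an \extract{\cdot}-decoded value (contrast the internal lets of \Cref{thm:compat-backtr-op}, whose bound variables range over plain naturals), the value substituted for $x$ lives in $\valrel{\emuldv}$, and the environment extension feeding the second hypothesis must be read at the back-translation pseudo-type $\emuldv$ accordingly. Once this is observed, the remaining obligations are routine instances of the monotonicity and anti-reduction lemmas already invoked throughout this section.
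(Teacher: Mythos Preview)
Your proof is correct and follows exactly the approach the paper sketches: the paper's own proof consists of the single line ``This is a trivial application of \Cref{thm:rel-ter-plug-rel-ctx-still-rel} and \Cref{thm:log-rel-clos-antired} and definitions,'' and your write-up is a faithful expansion of precisely those two lemmas plus the environment-relation bookkeeping. Your closing observation about the bound variable $x$ needing to be interpreted at the pseudo-type $\emuldv$ (rather than plain $\Nats$) is apt and is indeed the only point where care is required; it is implicit in the paper's use of $\toemul{\cdot}$ throughout this section.
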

\begin{proof}
	This is a trivial application of \Thmref{thm:rel-ter-plug-rel-ctx-still-rel} and \Thmref{thm:log-rel-clos-antired} and definitions.
\end{proof}

\begin{lemma}[Compatibility lemma for backtranslation of if]\label{thm:compat-backtr-if}
	\begin{align*}
		\text{if }
			&\ (HE)~
			\toemul{\trg{\Gamma}};\src{P};\trg{P}\vdash\src{e}\anylogreln{n}\trg{e}:\emuldv
		\\
			&\ (HEP)~
			\toemul{\trg{\Gamma}};\src{P};\trg{P}\vdash\src{e'}\anylogreln{n}\trg{e'}:\emuldv
		\\
			&\
			\toemul{\trg{\Gamma}};\src{P};\trg{P}\vdash\src{e''}\anylogreln{n}\trg{e''}:\emuldv
		\\
		\text{then }
			&\
			\toemul{\trg{\Gamma}};\src{P};\trg{P}\vdash \src{\ifte{\extract{\Bools}(e)}{e'}{e''}} \anylogreln{n} \trg{\ifte{e}{e'}{e''}} : \emuldv
	\end{align*}
\end{lemma}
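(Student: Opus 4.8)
The plan is to follow the proof of \Cref{thm:compat-backtr-op} essentially verbatim, adapting it to the shape of a conditional, where $\extract{\Bools}$ guards only the \emph{test} subexpression rather than two operands. First I would unfold \Cref{def:logrel-n-steps}: fix a world $W$ with $\stepsfun{W}\leq n$ and a related substitution $(W,\src{\gamma},\trg{\gamma})\in\envrel{\toemul{\trg{\Gamma}}}$, so that (eliding the substitutions, which play no active role) the goal becomes $(W,\ \src{\ifte{\extract{\Bools}(e)}{e'}{e''}},\ \trg{\ifte{e}{e'}{e''}})\in\termrel{\emuldv}$. Applying \Cref{thm:rel-ter-plug-rel-ctx-still-rel} with hypothesis $(HE)$, instantiated with the source context $\src{\ifte{\extract{\Bools}(\hole{\cdot})}{e'}{e''}}$ and the target context $\trg{\ifte{\hole{\cdot}}{e'}{e''}}$, reduces the problem to: for every $W'\futw W$ and every related pair of guard values $(W',\src{v},\trg{v})\in\valrel{\emuldv}$, show $(W',\ \src{\ifte{\extract{\Bools}(v)}{e'}{e''}},\ \trg{\ifte{v}{e'}{e''}})\in\termrel{\emuldv}$.

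The heart of the argument is then a case analysis on $\trg{v}$ driven by the definition of $\valrel{\emuldv}$. In the two boolean cases, $\trg{v}=\truet$ with $\src{v}=\src{1}$ and $\trg{v}=\falset$ with $\src{v}=\src{0}$, the target steps to the corresponding branch in one step, while on the source side the nested conditionals in the definition of $\extract{\Bools}$ make $\src{\extract{\Bools}(1)}$ reduce to $\trues$ and $\src{\extract{\Bools}(0)}$ reduce to $\falses$, so the source conditional reduces to the matching branch as well. By \Cref{thm:log-rel-clos-antired} the goal collapses to $(W',\src{e'},\trg{e'})\in\termrel{\emuldv}$ (respectively the $e''$ case), which is exactly what hypothesis $(HEP)$ (respectively the hypothesis on $e''$) supplies once instantiated at $W'$; here I use that $\stepsfun{W'}\leq n$ since $W'\futw W$, and that the environment remains related at $W'$ by \Cref{thm:env-rel-mono} and \Cref{thm:val-rel-mono}. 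The remaining case is $\trg{v}=\trg{n}$ with $\src{v}=\src{n+2}$: the target conditional fails by rule \Cref{tr:ev-t-if-fail}, emitting $\failactc$, while on the source side $\src{\extract{\Bools}(n+2)}$ reduces to $\fails$ (as $\src{n+2}\geq\src{2}$), so the whole source conditional fails and likewise emits $\failactc$. Since both executions perform the same single visible action and then halt, they lie in $\obsfun{W'}{\anylogrel}$ directly by its definition, with the degenerate $\stepsfun{W'}=0$ subcase dispatched by \Cref{thm:no-steps-impl-rel}.

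The main obstacle, exactly as in \Cref{thm:compat-backtr-op} but with the success and failure roles swapped, is the natural-number case: one must verify that the deliberate failure compiled into $\extract{\Bools}$ on an encoded natural aligns precisely with the target's dynamic type error for a non-boolean guard, so that the two sides emit exactly $\failactc$ and nothing more and are therefore observationally related. Everything else is the routine anti-reduction and world-monotonicity bookkeeping that recurs across all the compatibility lemmas.
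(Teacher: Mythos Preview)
Your proposal is correct and follows essentially the same approach as the paper's proof: unfold the logical relation, apply \Cref{thm:rel-ter-plug-rel-ctx-still-rel} with $(HE)$, and case-split on the shape of the related guard values in $\valrel{\emuldv}$, with the boolean cases discharged by anti-reduction plus the branch hypotheses and the natural case discharged because both sides emit $\failactc$. The only cosmetic difference is where the $\stepsfun{W'}=0$ check appears: the paper invokes \Cref{thm:no-steps-impl-rel} in the boolean case (after each anti-reduction step, before taking the next reduction), whereas you mention it in the natural case; neither placement is wrong, but the paper's placement is the one actually needed to justify that enough steps remain to fire the \textsf{if-true}/\textsf{if-false} rules and then appeal to $(HEP)$.
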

\begin{proof}
	We need to prove that 
	\begin{align*}
		\toemul{\trg{\Gamma}};\src{P};\trg{P}\vdash
		\src{\ifte{\extract{\Bools}(e)}{e'}{e''}}\anylogrel\trg{\ifte{e}{e'}{e''}}:\emuldv
	\end{align*}

	Take $W$ such that $\stepsfun{W}\leq n$ and $(W,\src{\gamma},\trg{\gamma})\in\envrel{\toemul{\trg{\Gamma}}}$, the thesis is: (we omit substitutions as they don't play an active role)
	\begin{itemize}
		\item $(W, \src{\ifte{\extract{\Bools}(e)}{e'}{e''}} , \trg{\ifte{e}{e'}{e''}})\in\termrel{\emuldv}$
	\end{itemize}

	By \Thmref{thm:rel-ter-plug-rel-ctx-still-rel} with HE, we have that for HW $W'\futw W$, and HV $(W',\src{v},\trg{v})\in\valrel{\emuldv}$, the thesis becomes:
	\begin{itemize}
		\item $(W',\src{\ifte{\extract{\Bool}(v)}{e'}{e''}},\trg{\ifte{v}{e'}{e''}})\in\termrel{\emuldv}$
	\end{itemize}
	We perform a case analysis based on HV:
	\begin{itemize}
		\item \trg{v}=\truet/\falset and \src{v}=\src{1}/\src{0}

		We consider the case \truet/\src{1} the other is analogous.

		The reductions proceed as follows:
		\begin{align*}
				&
				\src{P\triangleright \extract{\Bool}(1)}
			\\
			\equiv
				&
				\src{P\triangleright \letin{x}{1}{\ifte{x \geq 2}{\fails}{\ifte{x+1\geq 2}{\trues}{\falses}}} }
			\\
			\reds
				&
				\src{P\triangleright \ifte{1 \geq 2}{\fails}{\ifte{1+1\geq 2}{\trues}{\falses}} }
			\\
			\reds
				&
				\src{P\triangleright \ifte{1+1\geq 2}{\trues}{\falses} }
			\\
			\reds\reds
				&
				\src{P\triangleright \trues }
		\end{align*}
		By \Thmref{thm:log-rel-clos-antired} the thesis becomes:
		\begin{itemize}
			\item $(W',\src{\ifte{\trues}{e'}{e''}},\trg{\ifte{\truet}{e'}{e''}})\in\termrel{\emuldv}$
		\end{itemize}
		If the $\stepsfun{W'}=0$ the thesis follows from \Thmref{thm:no-steps-impl-rel}, otherwise:

		We can reduce based on \Cref{tr:ev-s-if-t} and \Cref{tr:ev-t-if-t}.
		By \Thmref{thm:log-rel-clos-antired} the thesis becomes:
		\begin{itemize}
			\item $(W',\src{e'},\trg{e'})\in\termrel{\emuldv}$
		\end{itemize}
		If the $\stepsfun{W'}=0$ the thesis follows from \Thmref{thm:no-steps-impl-rel}, otherwise by HEP.

		\item \trg{v}=\trg{n} and \src{v}=\src{n+2}

		In this case we have that:
		\begin{align*}
				&
				\src{P\triangleright \extract{\Bool}(n+2)}
			\\
			\equiv
				&
				\src{P\triangleright \letin{x}{n+2}{\ifte{x \geq 2}{\fails}{\ifte{x+1\geq 2}{\trues}{\falses}}} }
			\\
			\reds
				&
				\src{P\triangleright \ifte{n+2 \geq 2}{\fails}{\ifte{x+1\geq 2}{\trues}{\falses}} }
			\\
			\Xtosb{\failact}
				&
				\fails
		\end{align*}
		and
		\begin{align*}
			\trg{P\triangleright \ifte{n}{e'}{e''}}
			\Xtotb{\failact}
			\failt
		\end{align*}
		So this case holds by definition of $\obsfun{W'}{\anylogrel}$.
	\end{itemize}
\end{proof}


\begin{lemma}[Compatibility lemma for backtranslation of application]\label{thm:compat-backtr-app}
	\begin{align*}
		\text{if }
			&\
			\toemul{\trg{\Gamma}};\src{P};\trg{P}\vdash\src{e}\anylogreln{n}\trg{e}:\emuldv
		\\
			&\
			\src{f(x:\sigma'):\sigma\mapsto\ret e}\in\src{P} 
		\\
			&\ (HP)~
			\src{P};\trg{P}\vdash \src{\call{f}} \anylogreln{n} \trg{\call{f}} : \src{\sigma'\to\sigma}
		\\
		\text{then }
			&\
			\toemul{\trg{\Gamma}};\src{P};\trg{P}\vdash \src{\inject{\tau'}(\call{f}~\extract{\tau}(\src{e}))}  \anylogreln{n} \trg{\call{f}~e} :\emuldv	
	\end{align*}
\end{lemma}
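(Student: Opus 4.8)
The statement is the compatibility lemma for the back-translation of a function call at pseudo type $\emuldv$, and the natural approach is to follow the same pattern established in the preceding compatibility lemmas (\Cref{thm:compat-backtr-op}, \Cref{thm:compat-backtr-if}), combining the three given logical-relation hypotheses through the structural lemmas from prior work. The plan is to unfold the definition of the logical relation up to $n$ steps: fix a world $W$ with $\stepsfun{W}\leq n$ and an environment pair $(W,\src{\gamma},\trg{\gamma})\in\envrel{\toemul{\trg{\Gamma}}}$, reducing the goal to showing
\[
(W,\ \src{\inject{\tau'}(\call{f}~\extract{\tau}(e))},\ \trg{\call{f}~e})\in\termrel{\emuldv}.
\]
First I would use \Thmref{thm:rel-ter-plug-rel-ctx-still-rel} together with the hypothesis on $\src{e}\anylogrel\trg{e}:\emuldv$ to pull the related subexpressions $\src{e}$ and $\trg{e}$ down to related values $(W',\src{v},\trg{v})\in\valrel{\emuldv}$ in some future world $W'\futw W$, leaving the argument-evaluated goal in the evaluation context $\src{\inject{\tau'}(\call{f}~\extract{\tau}(\hole{\cdot}))}$ on the source and $\trg{\call{f}~\hole{\cdot}}$ on the target.

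The core of the argument is then to handle the $\src{\extract{\tau}}$ coercion on the source side by case analysis on $\trg{v}$, exactly as in the op and if cases. When $\trg{v}$ is a value of the wrong type (for example a boolean where $\src{\tau}=\Nats$, or a number where $\src{\tau}=\Bools$), both sides fail: on the target, \trg{\call{f}~v} will trigger the dynamic type check inside $\comptd{f}$ and reduce via $\failactc$, while on the source $\src{\extract{\tau}(v)}$ reduces to $\fails$ emitting the same $\failactc$ action, so relatedness follows directly from the definition of $\obsfun{W'}{\anylogrel}$ since both terms perform the same visible action. When $\trg{v}$ has the right type, I would apply \Thmref{thm:extract-red-pres-rel} to conclude that $\src{\extract{\tau}(v)}$ reduces to some $\src{v'}$ with $(W',\src{v'},\trg{v})\in\valrel{\tau}$, i.e.\ the extraction recovers a properly typed related argument. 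Here the function-argument type check built into $\comptd{f}$ passes, and the target call proceeds to the function body.

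With the argument now related at type $\src{\tau}$, I would invoke the hypothesis $(HP)$ that $\src{\call{f}}\anylogrel\trg{\call{f}}:\src{\tau'\to\tau}$ together with \Thmref{thm:rel-fun-appl-rel-arg-still-rel}, applying related functions to related arguments, to obtain $(W',\src{\call{f}~v'},\trg{\call{f}~v})\in\termrel{\src{\tau'}}$, so that after evaluating the call both sides produce related return values at type $\src{\tau'}$. Finally the outer $\src{\inject{\tau'}}$ coercion is discharged using \Thmref{thm:inject-red-pres-rel}, which states that injecting a value related at type $\src{\tau'}$ yields a value related at $\emuldv$; combined with \Thmref{thm:log-rel-clos-antired} to close the reduction steps and \Thmref{thm:val-rel-impl-term-rel} to pass between value and term relations, this establishes the goal. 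Throughout I would guard each anti-reduction step with a check on $\stepsfun{W'}$, falling back to \Thmref{thm:no-steps-impl-rel} when the step budget is exhausted. I expect the main obstacle to be bookkeeping the step indices correctly across the two coercions and the intervening function call—in particular ensuring that the strict future-world decrease required by the $\valrel{\src{\tau'\to\tau}}$ definition (the $W'\strfutw W$ condition on the call) is compatible with the monotonicity lemmas (\Cref{thm:val-rel-mono}, \Cref{thm:cont-rel-mono}) so that all the intermediate relatedness facts survive to the world in which they are consumed.
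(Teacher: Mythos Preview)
Your proposal is correct and follows essentially the same approach as the paper's proof: unfold the relation at a world $W$, use \Cref{thm:rel-ter-plug-rel-ctx-still-rel} with the hypothesis on $\src{e},\trg{e}$ to reduce the argument to related values in $\valrel{\emuldv}$, then case-split on whether the value's shape matches the expected input type $\tau$, with the mismatch cases discharged because both sides emit $\failactc$ (source via $\src{\extract{\tau}}$, target via the compiled dynamic check) and the match cases handled via extract, the function-call hypothesis, and inject. The only cosmetic differences are that the paper organizes the case split as (value shape)$\times$($\tau$) rather than your ``right type vs.\ wrong type'' abstraction, and that the paper unfolds $(HP)$ directly to obtain relatedness of the substituted function bodies and then applies \Cref{thm:rel-ter-plug-rel-ctx-still-rel} again, whereas you package that step via \Cref{thm:rel-fun-appl-rel-arg-still-rel}; note also a small typo in your proposal where $(HP)$ should be at type $\src{\tau\to\tau'}$ (equivalently $\src{\sigma'\to\sigma}$), not $\src{\tau'\to\tau}$.
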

\begin{proof}
	We need to prove that
	\begin{align*}
		\toemul{\trg{\Gamma}};\src{P};\trg{P}\vdash \src{\inject{\tau'}(\call{f}~\extract{\tau}(\src{e}))}  \anylogreln{n} \trg{\call{f}~e} :\emuldv
	\end{align*}

	Take $W$ such that $\stepsfun{W}\leq n$ and $(W,\src{\gamma},\trg{\gamma})\in\envrel{\toemul{\trg{\Gamma}}}$, the thesis is: (we omit substitutions as they don't play an active role)
	\begin{itemize}
		\item $(W, \src{\inject{\tau'}(\call{f}~\extract{\tau}(\src{e}))} , \trg{\call{f}~e})\in\termrel{\emuldv}$
	\end{itemize}
	By \Thmref{thm:rel-ter-plug-rel-ctx-still-rel} with HE we have that for HW $W'\futw W$, and HV $(W',\src{v},\trg{v})\in\valrel{\emuldv}$, the thesis becomes:
	\begin{itemize}
		\item $(W, \src{\inject{\tau'}(\call{f}~\extract{\tau}(\src{v}))} , \trg{\call{f}~v})\in\termrel{\emuldv}$
	\end{itemize}
	We perform a case analysis based on HV:

	\begin{itemize}
		\item \trg{v}=\truet/\falset and \src{v}=\src{1}/\src{0} (respectively).

		We consider the first case only, the other is analogous.

		We perform a case analysis on \src{\tau}:
		\begin{itemize}
			\item \src{\tau}=\Bools

			The thesis is:
			\begin{itemize}
				\item $(W', \src{\inject{\tau'}(\call{f}~\extract{\Bools}(\src{v}))} , \trg{\call{f}~v})\in\termrel{\emuldv}$
			\end{itemize}

			By definition of \extract{\Boolt} we have
			\begin{align*}
				&
				\src{P\triangleright \inject{\tau'}(\call{f}~\extract{\Bools}(1))}
				\\
				\equiv
				&
				\src{P\triangleright \inject{\tau'}(\call{f}~\letin{x}{1}{\ifte{x \geq 2}{\fails}{\ifte{x+1\geq 2}{\trues}{\falses}}})}
				\\
				\reds
				&
				\src{P\triangleright \inject{\tau'}(\call{f}~{\ifte{1 \geq 2}{\fails}{\ifte{1+1\geq 2}{\trues}{\falses}}})}
				\\
				\reds
				&
				\src{P\triangleright \inject{\tau'}(\call{f}~{{\ifte{1+1\geq 2}{\trues}{\falses}}})}
				\\
				\reds
				&
				\src{P\triangleright \inject{\tau'}(\call{f}~\trues)}
			\end{align*}
			So by \Thmref{thm:log-rel-clos-antired} the thesis becomes:
			\begin{itemize}
				\item $(W', \src{\inject{\tau'}(\call{f}~\trues)} , \trg{\call{f}~\truet})\in\termrel{\emuldv}$
			\end{itemize}
			If the $\stepsfun{W'}=0$ the thesis follows from \Thmref{thm:no-steps-impl-rel}, otherwise:

			By HP and by the Hs on the function bodies, and by the relatedness of \trues and \truet and by the \Thmref{thm:val-rel-mono} we have that HF:
			\begin{align*}
				(W', \src{\ret e}\subs{\trues}{x}, \trg{\ret e}\subt{\truet}{x})\in\termrel{\psd{\tau'}}
			\end{align*}

			By \Thmref{thm:rel-ter-plug-rel-ctx-still-rel} with HF we have that for HW $W''\futw W'$, and HV $(W'',\src{v'},\trg{v'})\in\valrel{\tau'}$, the thesis becomes:
			\begin{itemize}
				\item $(W', \src{\inject{\tau'}(v')} , \trg{v'})\in\termrel{\emuldv}$
			\end{itemize}
			This case follows from \Thmref{thm:val-rel-impl-term-rel} and by \Thmref{thm:inject-red-pres-rel} with HV.

			\item \src{\tau} = \Nats

			By definition of \extract{\Nats} we have:
			\begin{align*}
				&
				\src{P\triangleright \inject{\tau'}(\call{f}~\extract{\Nats}(1))}
				\\
				\equiv
				&
				\src{P\triangleright \inject{\tau'}(\call{f}~ \letin{x}{1}{\ifte{x\geq 2}{x-2}{\fails}})}
				\\
				\reds
				&
				\src{P\triangleright \inject{\tau'}(\call{f}~ {\ifte{1\geq 2}{1-2}{\fails}})}
				\\
				\reds
				&
				\src{P\triangleright \inject{\tau'}(\call{f}~ {\fails}})
				\\
				\reds
				&
				\fails
			\end{align*}
			and by definition of the function bodies and \Cref{tr:comptd-fun}:
			\begin{align*}
				&
				\trg{P\triangleright \trg{\call{f}~\truet}}
				\\
				\redt
				&
				\trg{P\triangleright \ret \ifte{\truet \checkty{\comptd{\Nats}}}{\comptd{e}}{\failt}}
				\\
				\equiv
				&
				\trg{P\triangleright \ret \ifte{\truet \checkty{\Nat}}{\comptd{e}}{\failt}}
				\\
				\redt
				&
				\trg{P\triangleright \ret \ifte{\falset}{\comptd{e}}{\failt}}
				\\
				\redt
				&
				\trg{P\triangleright \ret \failt}
				\\
				\redt
				&
				\failt
			\end{align*}
			So this case holds by definition of $\obsfun{W'}{\anylogrel}$.
		\end{itemize}

		\item \trg{v}=\trg{n} and \src{v}=\src{n+2}

		Case analysis on \src{\tau}

		\begin{itemize}
			\item \src{\tau}=\Bools 

			This is analogous to the case for naturals above.

			\item \src{\tau} = \Nats

			This is analogous to the case for booleans above.
		\end{itemize}
	\end{itemize}
\end{proof}

\begin{lemma}[Compatibility lemma for backtranslation of check]\label{thm:compat-backtr-check}
	\begin{align*}
		\text{if }
			&\ (HE)~
			\toemul{\trg{\Gamma}};\src{P};\trg{P}\vdash\src{e}\anylogreln{n}\trg{e}:\emuldv
		\\
		\text{then }
			1
			&\
			\toemul{\trg{\Gamma}};\src{P};\trg{P}\vdash \src{\letin{x:\Nats}{\src{e}}{\ifte{x \geq 2}{0}{1}}} \anylogreln{n} \trg{e\checkty{\Boolt}} : \emuldv
		\\
			2
			&\
			\toemul{\trg{\Gamma}};\src{P};\trg{P}\vdash \src{\letin{x:\Nats}{\src{e}}{\ifte{x \geq 2}{1}{0}}} \anylogreln{n} \trg{e\checkty{\Nat}} : \emuldv	
	\end{align*}
\end{lemma}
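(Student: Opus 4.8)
The plan is to mirror the proof of \Cref{thm:compat-backtr-if}, since the back-translation of a dynamic type check has the same \src{let}-binding shape but, crucially, no recursive subexpressions $\src{e'},\src{e''}$ to manage. I would establish the two conclusions separately; they are symmetric, so I focus on the \trg{\checkty{\Boolt}} case and treat \trg{\checkty{\Natt}} analogously. Unfolding \Cref{def:logrel-n-steps}, I fix a world $W$ with $\stepsfun{W} \leq n$ together with an environment pair $(W,\src{\gamma},\trg{\gamma}) \in \envrel{\toemul{\trg{\Gamma}}}$; eliding the (inert) substitutions as in the neighbouring proofs, the goal reduces to showing $(W, \src{\letins{x:\Nats}{e}{\iftes{x \geq 2}{0}{1}}}, \trg{e\checkty{\Boolt}}) \in \termrel{\emuldv}$.

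Next I would apply \Cref{thm:rel-ter-plug-rel-ctx-still-rel} using the hypothesis (HE) that \src{e} and \trg{e} are related at \emuldv. This discharges the evaluation of the guarded subexpression and reduces the obligation to: for every future world $W' \futw W$ and every related value pair $(W',\src{v},\trg{v}) \in \valrel{\emuldv}$, show $(W', \src{\letins{x:\Nats}{v}{\iftes{x \geq 2}{0}{1}}}, \trg{v\checkty{\Boolt}}) \in \termrel{\emuldv}$. On the source side the \src{let} fires in one step (by the \src{let} rule) to \src{\iftes{v \geq 2}{0}{1}}. I then case-split on $(\src{v},\trg{v})$ following the three clauses of \valrel{\emuldv}, handling the $\stepsfun{W'}=0$ subcase via \Cref{thm:no-steps-impl-rel} and otherwise stepping both sides forward with \Cref{thm:log-rel-clos-antired}, closing each case with \Cref{thm:val-rel-impl-term-rel} once the residual goal is a value pair.

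The three cases exercise exactly the agreement between the target's type-directed verdict and the source arithmetic guard, using the encoding $\truet \leftrightarrow \src{1}$, $\falset \leftrightarrow \src{0}$, and $\trg{n} \leftrightarrow \src{n+2}$. If $(\src{v},\trg{v}) = (\src{n+2},\trg{n})$, the target steps \trg{n \checkty{\Boolt}} to \falset by \Cref{tr:ev-t-ch-b-f}, while on the source the guard $\src{n+2 \geq 2}$ holds so \src{\iftes{n+2 \geq 2}{0}{1}} steps to \src{0}; after antireduction the residual goal is $(W',\src{0},\falset) \in \valrel{\emuldv}$, which holds by definition. If $(\src{v},\trg{v})$ is $(\src{1},\truet)$ or $(\src{0},\falset)$, the target steps to \truet by \Cref{tr:ev-t-ch-b-t}, while the source guard $\src{v \geq 2}$ is false, so the source yields \src{1}, leaving $(W',\src{1},\truet) \in \valrel{\emuldv}$, again immediate. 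The \trg{\checkty{\Natt}} conclusion is identical up to swapping the two branches of the source guard and replacing \Cref{tr:ev-t-ch-b-t} and \Cref{tr:ev-t-ch-b-f} by \Cref{tr:ev-t-ch-n-t} and \Cref{tr:ev-t-ch-n-f}, so that a natural encoding produces \src{1} and a boolean encoding produces \src{0}.

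The delicate point, rather than a genuine obstacle, will be verifying in each of the six value cases that the source outcome of $\src{v \geq 2}$ matches the target's verdict; this is immediate once one observes that the natural encodings $\src{n+2}$ are precisely the back-translated values that are $\geq 2$, whereas the boolean encodings $\src{0}$ and $\src{1}$ are not. No semantic machinery beyond the lemmas already invoked for \Cref{thm:compat-backtr-if} is needed, so I expect the proof to go through routinely once the case analysis and the small reduction sequences are written out.
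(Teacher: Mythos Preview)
Your proposal is correct and follows essentially the same approach as the paper's proof: unfold the logical relation, apply \Cref{thm:rel-ter-plug-rel-ctx-still-rel} with (HE) to reduce to the value case, then case-split on $\valrel{\emuldv}$ and close each branch by stepping both sides and invoking \Cref{thm:log-rel-clos-antired} and \Cref{thm:val-rel-impl-term-rel}. The paper treats only the boolean/natural value subcases (merging the $\truet$/$\falset$ cases as you do) and omits the explicit $\stepsfun{W'}=0$ remark, but otherwise the structure and the lemmas invoked are identical.
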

\begin{proof}
	We need to prove that 
	\begin{align*}
		1
			&\
			\toemul{\trg{\Gamma}};\src{P};\trg{P}\vdash \src{\letin{x:\Nats}{\src{e}}{\ifte{x \geq 2}{0}{1}}} \anylogreln{n} \trg{e\checkty{\Boolt}} : \emuldv
		\\
			2
			&\
			\toemul{\trg{\Gamma}};\src{P};\trg{P}\vdash \src{\letin{x:\Nats}{\src{e}}{\ifte{x \geq 2}{1}{0}}} \anylogreln{n} \trg{e\checkty{\Nat}} : \emuldv	
	\end{align*}
	We only show case 1, the other is analogous.

	Take $W$ such that $\stepsfun{W}\leq n$ and $(W,\src{\gamma},\trg{\gamma})\in\envrel{\toemul{\trg{\Gamma}}}$, the thesis is: (we omit substitutions as they don't play an active role)
	\begin{enumerate}
		\item $(W, \src{\letin{x:\Nats}{\src{e}}{\ifte{x \geq 2}{0}{1}}} , \trg{e\checkty{\Boolt}})\in\termrel{\emuldv}$
	\end{enumerate}

	By \Thmref{thm:rel-ter-plug-rel-ctx-still-rel} with HE we have that for HW $W'\futw W$, and HV $(W',\src{v},\trg{v})\in\valrel{\emuldv}$, the thesis becomes:
	\begin{itemize}
		\item $(W',\src{\letin{x:\Nats}{\src{v}}{\ifte{x \geq 2}{0}{1}}},\trg{v\checkty{\Boolt}})\in\termrel{\emuldv}$
	\end{itemize}
	We perform a case analysis based on HV:
	\begin{itemize}
		\item \trg{v}=\truet/\falset and \src{v}=\src{1}/\src{0} (respectively).

		We consider only the first case, the other is analogous.

		We have that
		\begin{align*}
			&
			\src{P\triangleright \src{\letin{x:\Nats}{\src{1}}{\ifte{x \geq 2}{0}{1}}}}
			\\
			\reds
			&
			\src{P\triangleright \src{\ifte{1 \geq 2}{0}{1}}}
			\\
			\reds
			&
			\src{P\triangleright \src{1}}
		\end{align*}
		and
		\begin{align*}
			&
			\trg{P\triangleright \trg{\truet\checkty{\Boolt}}}
			\redt
			\trg{P\triangleright \truet}
		\end{align*}
		This case holds by \Thmref{thm:log-rel-clos-antired} and \Thmref{thm:val-rel-impl-term-rel} and by the definition of \valrel{\emuldv}.

		\item \trg{v}=\trg{n} and \src{v}=\src{n+2}

		In this case we have that:
		\begin{align*}
			&
			\src{P\triangleright \src{\letin{x:\Nats}{\src{n+2}}{\ifte{x \geq 2}{0}{1}}}}
			\\
			\reds
			&
			\src{P\triangleright \src{\ifte{n+2 \geq 2}{0}{1}}}
			\\
			\reds
			&
			\src{P\triangleright \src{0}}
		\end{align*}
		and
		\begin{align*}
			&
			\trg{P\triangleright \trg{n\checkty{\Boolt}}}
			\redt
			\trg{P\triangleright \falset}
		\end{align*}
		This case holds by \Thmref{thm:log-rel-clos-antired} and \Thmref{thm:val-rel-impl-term-rel} and by the definition of \valrel{\emuldv}.
	\end{itemize}

\end{proof}

%
\paragraph{Semantic Preservation of Backtranslation}
\begin{theorem}[\backtrdt{\cdot} is semantics preserving]\label{thm:backtr-sem-pres}
	\begin{align*}
		\text{if }
			&\
			\trg{\Gamma}\vdash\trg{e}
		\\
			&\ (HP)~
			\vdash \src{P}\anylogrel\trg{P}
		\\
		\text{then }
			&\
			\toemul{\trg{\Gamma}};\src{P};\trg{P}\vdash\backtr{\trg{e}}\anylogreln{n}\trg{e}:\emuldv
	\end{align*}
\end{theorem}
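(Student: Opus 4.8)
The plan is to prove this by structural induction on the context well-formedness derivation $\trg{\Gamma}\vdash\trg{e}$, at a fixed but arbitrary step index $n$. The well-formedness rules for contexts (the \textsf{Ctx-\Ld-$\ast$} rules) enumerate exactly the constructs that $\backtrdt{\cdot}$ can be applied to, namely values, variables, $\op$, $\geq$, $\letin{}{}{}$, $\ifte{}{}{}$, $\checkty{\tau}$, and calls, and in each case the syntactic shape of $\trg{e}$ selects the corresponding clause of $\backtrdt{\cdot}$. The heavy lifting is already done by the compatibility lemmas for the back-translation pseudo-type $\emuldv$: the induction hypotheses give, for each immediate subexpression $\trg{e_i}$, that $\toemul{\trg{\Gamma}};\src{P};\trg{P}\vdash\backtrdt{\trg{e_i}}\anylogreln{n}\trg{e_i}:\emuldv$, which is precisely the premise shape consumed by \Cref{thm:compat-backtr-op}, \Cref{thm:compat-backtr-geq}, \Cref{thm:compat-backtr-letin}, \Cref{thm:compat-backtr-if}, and \Cref{thm:compat-backtr-check}.

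First I would dispatch the leaves. For the boolean and numeral cases, $\backtrdt{\cdot}$ produces $\src{1}$, $\src{0}$, and $\src{n+2}$, which are exactly the source values paired with $\truet$, $\falset$, and $\trg{n}$ in $\valrel{\emuldv}$, so the goal closes via \Thmref{thm:val-rel-impl-term-rel}. For the variable case, $\backtrdt{\trg{x}}=\src{x}$ with $\trg{x}\in\dom{\trg{\Gamma}}$, and the environment relation $\envrel{\toemul{\trg{\Gamma}}}$ directly supplies related values for $\src{x}$ at pseudo-type $\emuldv$, again closing through \Thmref{thm:val-rel-impl-term-rel}. The arithmetic, comparison, $\letin{}{}{}$, conditional, and $\checkty{\tau}$ cases then follow immediately by routing the induction hypotheses into the matching compatibility lemmas; these lemmas internally perform the antireduction steps that account for the $\extract{\cdot}$ and $\inject{\cdot}$ wrappers.

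The main obstacle is the call case, whose back-translation is $\src{\inject{\tau'}(\call{f}~\extract{\tau}(\backtrdt{\trg{e}}))}$ when $\src{f:\tau\to\tau'}\in\src{\OB{I}}$. Here the induction hypothesis relates only the argument $\backtrdt{\trg{e}}$ to $\trg{e}$ at $\emuldv$, whereas \Cref{thm:compat-backtr-app} additionally demands the premise $\src{P};\trg{P}\vdash\src{\call{f}}\anylogreln{n}\trg{\call{f}}:\src{\tau\to\tau'}$, i.e.\ that the callee --- a function of the linked program rather than a subexpression of the context --- is itself logically related. This is the one place the hypothesis $(HP)$ is essential: unfolding $\vdash\src{P}\anylogrel\trg{P}$ through \Thmref{def:logrel-prog} yields, for each definition $\src{f(x:\tau):\tau'\mapsto\ret e}\in\src{P}$, the relatedness of the function bodies, and \Cref{thm:compat-call-v} lifts this to the required relatedness of the call values. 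The delicate point is matching the target side, since \Cref{thm:compat-call-v} expects the dynamically-checked wrapper $\trg{f(x)\mapsto\ret\ifte{x\checkty{\comptd{\tau}}}{e}{\failt}}$ emitted by the compiler; one must therefore read $(HP)$ against the compiled program shape. The step-index bookkeeping for recursion is absorbed inside \Cref{thm:compat-call-v} via the strict world decrease $\strfutw$ guarding the body, so no separate induction on $n$ is needed at this node.

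Finally I would note why the coercions make the back-translation faithful even on ill-typed target behavior: whenever an $\extract{\cdot}$ meets a value of the wrong shape (as would arise from a target term such as $\trg{\truet\op\trg{n}}$) it reduces to $\fails$, matching the target's $\failt$ on the common visible action $\failactc$, and this failure-matching is discharged entirely within the compatibility lemmas by appeal to the observation relation. Hence the top-level induction merely routes each case to its lemma and, in the call case, threads $(HP)$ through \Cref{thm:compat-call-v}. Together with the dual result \Thmref{thm:comp-sem-pres-prog}, this theorem supplies the two-sided relatedness that establishes $\pf{\rrhp}_{\bowtie}$.
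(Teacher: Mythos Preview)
Your proposal is correct and follows essentially the same approach as the paper: structural induction on the derivation of $\trg{\Gamma}\vdash\trg{e}$, with the value and variable leaves discharged by the definitions of $\valrel{\emuldv}$ and $\envrel{\cdot}$, the compound cases by the respective back-translation compatibility lemmas, and the application case by the induction hypothesis together with $(HP)$ and \Cref{thm:compat-backtr-app}. Your exposition is in fact more detailed than the paper's, which merely lists the lemma used in each case; in particular you correctly spell out that the premise $\src{P};\trg{P}\vdash\src{\call{f}}\anylogreln{n}\trg{\call{f}}:\src{\tau\to\tau'}$ required by \Cref{thm:compat-backtr-app} is obtained from $(HP)$ via \Cref{thm:compat-call-v}, a bridging step the paper leaves implicit.
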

\begin{proof}
	The proof proceeds by induction on the derivation of $\trg{\Gamma}\vdash\trg{e}$.
	\begin{description}
		\item[Base cases]
		\begin{description}
			\item[true,false,nat] By definition of the \valrel{\emuldv}
			\item[var] By definition of the \envrel{\cdot}.
			\item[call] This case cannot arise.     
		\end{description}
		\item[Inductive cases]  
		\begin{description}
			\item[app] By IH and HP and \Thmref{thm:compat-backtr-app}.
			\item[op] By IH and \Thmref{thm:compat-backtr-op}.
			\item[geq] Analogous to the case above.
			\item[if] By IH and \Thmref{thm:compat-backtr-if}.
			\item[letin] By IH and \Thmref{thm:compat-backtr-letin}.
			\item[check] By IH and \Thmref{thm:compat-backtr-check}.
		\end{description}
	\end{description}
\end{proof}

%
\paragraph{Theorems that Yield \pf{\rrhp}}
\begin{theorem}[\comptd{\cdot} preserves behaviors]\label{thm:comptd-pres-beh}
	\begin{align*}
		\text{ if }
			&\ (HT)~
			\trg{\comptd{P}\triangleright e \Xtotb{\beta} \comptd{P}\triangleright e'} 
		\\
		\text{ then }
			&
			\src{P\triangleright \backtrdt{\trg{e}} \Xtosb{\beta} P\triangleright e'} 
	\end{align*}
\end{theorem}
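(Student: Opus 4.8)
The plan is to derive the statement by chaining the two semantics-preservation results already established with adequacy of the cross-language logical relation, instantiated in the target-to-source direction ($\overlogrel$). First I would record the two standing well-formedness facts: linkability of $\trg{e}$ with $\comptd{P}$ forces the context to be a closed, well-formed expression, so $\trge \vdash \trg{e}$, and since $\comptd{P}$ is defined we have $\vdash \src{P}$. Applying \Cref{thm:comp-sem-pres-prog} to $\src{P}$ gives program relatedness $\vdash \src{P} \anylogrel \comptd{P}$. Feeding this into \Cref{thm:backtr-sem-pres}, with the target program taken to be $\comptd{P}$ and the generic relation instantiated to $\overlogrel$, yields for every step budget $n$ the closed-term relatedness $\srce;\src{P};\comptd{P} \vdash \backtrdt{\trg{e}} \overlogreln{n} \trg{e} : \emuldv$ (using $\toemul{\trge} = \srce$). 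Intuitively this is exactly the claim that the back-translated context, run against $\src{P}$, reproduces every target behavior of $\trg{e}$ run against $\comptd{P}$.

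Second, I would turn this relatedness into the desired statement about concrete behaviors via adequacy. The observation relation $\obsfun{W}{\overlogrel}$ mentions only the emitted behavior and not the indexing type, so \Cref{thm:log-rel-adeq-gt} applies unchanged at the pseudo-type $\emuldv$. Whenever the behavior $\beta$ of the hypothesis $\trg{\comptd{P}\triangleright e \Xtotb{\beta} \comptd{P}\triangleright e'}$ is emitted in finitely many steps $m$ (the terminating and failing cases), I instantiate the relation at some $n \geq m$ and invoke \Cref{thm:log-rel-adeq-gt}, obtaining the matching source execution $\src{P\triangleright \backtrdt{\trg{e}} \Xtosb{\beta} P\triangleright \_}$, which is what the theorem requires. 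The symmetric inclusion, needed to upgrade this to behavior equality for $\pf{\rrhp}_{\bowtie}$, is obtained by the dual instantiation at $\underlogrel$ together with \Cref{thm:log-rel-adeq-less}.

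The main obstacle is the non-terminating behaviors---silent divergence ($\divrc$) and infinitely reactive traces---for which there is no finite step count $m$, so \Cref{thm:log-rel-adeq-gt} does not fire directly. For these I would argue by finite approximation: each finite prefix of $\beta$ is emitted by the target in finitely many steps, so adequacy at a large enough budget exhibits a source execution emitting that prefix, and because the relatedness holds at \emph{all} step indices the source can match the target's silent steps for arbitrarily long. I would then reassemble $\beta$ along a single source run using the inductive structure of the source behavior relation (the concatenation rule together with the divergence rule), appealing to determinacy of the source semantics so that the individually matched prefixes lie on one execution. This passage from finite approximations, which adequacy settles cleanly, to the full possibly-infinite behavior is the only delicate part; everything else is a mechanical composition of the two preservation lemmas with adequacy.
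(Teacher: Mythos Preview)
Your core argument---apply \Cref{thm:comp-sem-pres-prog} to obtain program relatedness, feed it into \Cref{thm:backtr-sem-pres} to obtain closed-term relatedness at $\emuldv$, then invoke adequacy \Cref{thm:log-rel-adeq-gt}---is exactly the paper's three-line proof, and your observation that adequacy applies unchanged at the pseudo-type $\emuldv$ (because the observation relation is type-agnostic) is correct.

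The ``main obstacle'' you identify, however, does not exist in this instance. Behaviors $\beta$ are by grammar finite sequences $\OB{\alpha}$, and the big-step behavior relation is defined inductively, so every derivation of the hypothesis HT is finite and adequacy (which you have at \emph{every} step index) can simply be instantiated at a large enough $n$; infinitely reactive traces cannot be produced by this relation at all. Silent divergence $\divrc$ is emitted by a \emph{single} application of the divergence rule, which packages the infinitary small-step premise into one big step, so it too falls directly under adequacy. The paper therefore applies adequacy once and stops; your finite-approximation argument and the appeal to source determinacy are unnecessary detours. Finally, the symmetric inclusion via $\underlogrel$ and \Cref{thm:log-rel-adeq-less} is not part of this theorem; it is precisely the content of the companion result \Cref{thm:comptd-refl-beh}.
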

\begin{proof}
	By \Thmref{thm:comp-sem-pres-prog} we have HPP:
	\begin{itemize}
		\item $\vdash\src{P}\anylogrel\comptd{\src{P}}$
	\end{itemize}

	Given that $\trge\vdash\trg{e}$, by \Thmref{thm:backtr-sem-pres} with HPP we have HPE:
	\begin{itemize}
		\item $\toemul{\trg{\Gamma}};\src{P};\comptd{\src{P}}\vdash\backtr{\trg{e}}\anylogreln{n}\trg{e}:\emuldv$
	\end{itemize}

	The thesis follows by \Thmref{thm:log-rel-adeq-gt} with HT.
\end{proof}

\begin{theorem}[\comptd{\cdot} reflects behaviors]\label{thm:comptd-refl-beh}
	\begin{align*}
		\text{ if }
			&\ (HS)~
			\src{P\triangleright \backtrdt{\trg{e}} \Xtosb{\beta} P\triangleright e'} 
		\\
		\text{ then }
			&
			\trg{\comptd{P}\triangleright e \Xtotb{\beta} \comptd{P}\triangleright e'} 
	\end{align*}
\end{theorem}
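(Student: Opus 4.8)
The statement is the exact dual of \Thmref{thm:comptd-pres-beh}: where that theorem reads off the ``if the target steps then the source steps'' consequence, here I want the ``if the source steps then the target steps'' consequence. The plan is therefore to replay the preservation proof line for line, changing only which adequacy lemma is invoked at the end. The reason this works is that the Kripke logical relation $\anylogrel$ is \emph{two-sided}: it is parametric in a direction that may be instantiated either with the under-approximation relation $\underlogrel$ or the over-approximation relation $\overlogrel$, and every compatibility lemma together with both semantic-preservation results (\Thmref{thm:comp-sem-pres-expr} and \Thmref{thm:backtr-sem-pres}) is established once at this generic index. Preservation specialises the relation to $\overlogrel$; reflection will specialise it to $\underlogrel$.

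Concretely, I would argue in three steps. First, from well-typedness of \src{P}, apply \Thmref{thm:comp-sem-pres-prog} to obtain $\vdash\src{P}\anylogrel\comptd{\src{P}}$. Second, instantiating $\trg{P} := \comptd{\src{P}}$ and using that the target context \trg{e} is closed (so its environment is $\trge$), apply \Thmref{thm:backtr-sem-pres} to get the term-level relatedness $\toemul{\trge};\src{P};\comptd{\src{P}}\vdash\backtrdt{\trg{e}}\anylogreln{n}\trg{e}:\emuldv$, which by \Thmref{def:logrel-expr} holds for every step index $n$ and where $\toemul{\trge}=\srce$ supplies the empty source environment required by the adequacy lemmas. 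Third, whereas the preservation proof fed its target reduction into \Thmref{thm:log-rel-adeq-gt}, I would instead instantiate the generic relation with $\underlogrel$ and feed the source reduction HS into the under-approximation adequacy lemma \Thmref{thm:log-rel-adeq-less}; this directly produces a target reduction $\trg{\comptd{P}\triangleright e \Xtotb{\beta} \comptd{P}\triangleright \_}$ exhibiting the very same behavior $\beta$, which is the conclusion.

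The only genuinely new bookkeeping, and hence the sole place calling for care rather than a real obstacle, is matching step indices: HS gives a reduction $\src{P\triangleright \backtrdt{\trg{e}} \Xtosb{\beta}\redapp{m} P\triangleright e'}$ for some finite $m$, while \Thmref{thm:log-rel-adeq-less} needs the relation to hold at an index $n \geq m$; since the second step supplies it at all $n$, I simply take $n := m$. Because all the heavy machinery, namely the monotonicity and compatibility lemmas culminating in the two semantic-preservation theorems, was already discharged generically for $\anylogrel$, no new compatibility reasoning is needed here. Together with \Thmref{thm:comptd-pres-beh}, this reflection result yields the perfect behavior equality that $\pf{\rrhp}_{\bowtie}$ demands, once both directions are lifted from single expressions to whole linked programs through the definition of $\behavc{\cdot}$ as the set of producible behaviors.
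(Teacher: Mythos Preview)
Your proposal is correct and follows essentially the same three-step structure as the paper's proof: invoke \Thmref{thm:comp-sem-pres-prog} to relate \src{P} and \comptd{\src{P}}, then \Thmref{thm:backtr-sem-pres} to relate \backtrdt{\trg{e}} and \trg{e} at type \emuldv, and finally discharge the conclusion via \Thmref{thm:log-rel-adeq-less} using HS. Your additional remarks on step-index bookkeeping and the two-sidedness of $\anylogrel$ are accurate elaborations of what the paper leaves implicit.
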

\begin{proof}
	By \Thmref{thm:comp-sem-pres-prog} we have HPP:
	\begin{itemize}
		\item $\vdash\src{P}\anylogrel\comptd{\src{P}}$
	\end{itemize}

	Given that $\trge\vdash\trg{e}$, by \Thmref{thm:backtr-sem-pres} with HPP we have HPE:
	\begin{itemize}
		\item $\toemul{\trg{\Gamma}};\src{P};\comptd{\src{P}}\vdash\backtr{\trg{e}}\anylogreln{n}\trg{e}:\emuldv$
	\end{itemize}

	The thesis follows by \Thmref{thm:log-rel-adeq-less} with HS.
\end{proof}


\subsubsection{Proof That \texorpdfstring{\comptd{\cdot}}{the Compiler} Satisfies \texorpdfstring{\Thmref{def:rrhc-linkable}}{\pf{\rrhp}}}
\begin{align*}
	\forall\trg{e}.\exists\src{e}.&\forall\src{P} \text{ such that }\src{P \bowtie e}, \forall \beta
	\\
	&\
	\trg{\comptd{P}\triangleright e \Xtotb{\beta} \comptd{P}\triangleright e'} 
	\\
	\iff
	&\
	\src{P\triangleright e \Xtosb{\beta} P\triangleright e'} 
\end{align*}
We instantiate \src{e} with \backtrdt{\trg{e}} then two cases arise.

\begin{description}
	\item[$\Rightarrow$ direction]
		By \Thmref{thm:comptd-pres-beh}
	\item[$\Leftarrow$ direction]  
		By \Thmref{thm:comptd-refl-beh}.
\end{description}

\subsection{Proof That \texorpdfstring{\(\comptd{\cdot}\)}{the Compiler} Is \texorpdfstring{\(\pf{\rfrxp}_{\bowtie}\)}{RFrXC}} 
This section focuses on giving a high-level overview of the proof
technique that we use to prove that our compiler satisfies the
following variant of the criterion \pf{\rfrxp}:
\begin{definition}[\(\pf{\rfrxp}_{\bowtie}\)]\label{def:rfrxc-linkable}
  \begin{align*}
    \pf{\rfrxp}_{\bowtie}:\quad &\forall K.~\forall\src{P_1}\ldots\src{P_K : \OB{I}}.~ \forall\trg{C_T}.~ \forall x_1 \ldots x_K.\\
    &(\trg{\cmp{P_1} \bowtie C_T} \wedge \ldots \wedge \trg{\cmp{P_K} \bowtie C_T})
    \implies\\
    &(\trg{C_T\hole{\cmp{P_1}} \rightsquigarrow}\ x_1
    \mathrel{\wedge} \ldots \mathrel{\wedge}
    \trg{C_T\hole{\cmp{P_K}} \rightsquigarrow}\ x_K) \implies\\
    &\exists \src{C_S}\ldotp
     \src{P_1 \bowtie C_S} \wedge \ldots \wedge \src{P_K \bowtie C_S} \wedge\\
    &\phantom{\exists \src{C_S}\ldotp} 
     \src{C_S\hole{P_1}\rightsquigarrow}\ x_1
      \mathrel{\wedge} \ldots \mathrel{\wedge}
      \src{C_S\hole{P_K}\rightsquigarrow}\ x_K
  \end{align*}
\end{definition}

This criterion is equivalent to the following property-full criterion:
\begin{definition}[\(\rfrxp_{\bowtie}\)]\label{def:rfrxp-linkable}
  \begin{align*}
    \rfrxp_{\bowtie}:\quad &
    \forall \src{\OB{I}}.~\forall K,\src{P_1},\cdots,\src{P_k : \OB{I}},\com{R} \in 2^{(\ii{XPref^K})}.
    \\
    &
    (\forall \src{C_S}, \com{x_1}, \cdots, \com{x_K}, 
    (
    \src{P_1 \bowtie C_S} \wedge \src{C_S\hole{P_1}\sem\com{x_1}} 
    \wedge \cdots \wedge
	   {\src{P_K \bowtie C_S} \wedge \src{C_S\hole{P_K}\sem\com{x_K}}}
	   )
	   \\
	   &\quad \Rightarrow
	   (\com{x_1},\cdots,\com{x_K})\in\com{R}) \Rightarrow\\	   
	   &
	   (\forall \trg{C_T}, x_1, \cdots, x_K
	   (
	       {\trg{\cmp{P_1} \bowtie C_T} \wedge \trg{C_T\hole{\compgen{\src{P_1}}}\sem\com{x_1}}}
	       \wedge \cdots \wedge \trg{\cmp{P_K} \bowtie C_T} \wedge {\trg{C_T\hole{\compgen{\src{P_K}}}\sem\com{x_K}}}
	       )
	       \\
	       &\quad
	       \Rightarrow (\com{x_1},\cdots,\com{x_K})\in\com{R})
  \end{align*}
\end{definition}
The proof of the equivalence of these two criteria is very similar to \autoref{thm:rfrxp-eq}.


\subsubsection{Overview of the Proof Technique}


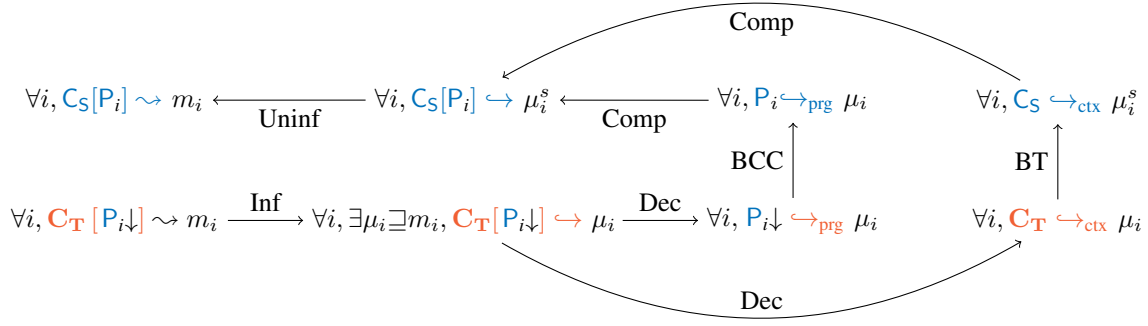
\begin{figure}
\begin{center}
  \begin{tikzpicture}[auto]
    \node(cp) {\( \forall i, \trg{C_T\left[\compgen{\src{P}_{\com{i}}}\right]} \leadsto m_i\) };
    \node[right = of cp,text width=4cm] (cpi) { \(\forall i, \exists \mu_i {\sqsupseteq} m_i,
      \trg{\plug{C_T}{\compgen{\src{P}_{\com{i}}}} \hookrightarrow}\ \mu_i\) };
    \node[right = of cpi] (pi) {\(\forall i, \trg{\compgen{\src{P}_{\com{i}}} \hookrightarrow_{\text{prg}}}\ \mu_i\)};
    \node[right = of pi] (ci) {\(\forall i, \trg{C_T \hookrightarrow_{\text{ctx}}}\ \mu_i\)};
    
    \node[above = of ci] (Ci) {\(\forall i, \src{C_S \hookrightarrow_{\text{ctx}}}\ \mu_i^s\)};
    \node[above = of pi] (Pi) {\(\forall i, \src{P}_i \src{\hookrightarrow_{\text{prg}}}\ \mu_i\)};
    \node[above = of cpi] (CPi) {\(\forall i, \src{\plug{C_S}{P_{\com{i}}} \hookrightarrow}\ \mu_i^s\)};
    \node[above = of cp] (CP) {\(\forall i, \src{\plug{C_S}{P_{\com{i}}} \leadsto}\ m_i\)};
    
    \draw[->] (cp) to node {Inf} (cpi);
    \draw[->] (cpi) to node {Dec} (pi);
    \draw[->,bend right] (cpi) to node {Dec} (ci);
    
    \draw[->] (ci) to node {BT} (Ci);
    \draw[->] (pi) to node {BCC} (Pi);
    
    \draw[->] (Pi) to node {Comp} (CPi);
    \draw[->,bend right] (Ci) to node {Comp} (CPi);
    \draw[->] (CPi) to node {Uninf} (CP);
  \end{tikzpicture}
\end{center}
  \caption{Proposed proof technique}\label{hlo:proof-technique}
\end{figure}

Our proof technique for this is described in Figure~\ref{hlo:proof-technique}. At the heart of this technique is the back-translation
of a finite set of finite trace prefixes into a source context. In particular, this back-translation technique do not inspect
the code of the target context.
The first steps consist in transforming the trace prefixes into prefixes that can be back-translated easily, and separating the target
context from the compiled programs. Then, we build a back-translation that provides us with a source context that can be composed with the initial
source programs to generate the initial traces.

The reason for requiring all programs to share the same interface \(\src{\OB{I}}\) is that it allows us to produce a well-typed context.
Otherwise, two programs could contain the same function, but one returning a natural number and the other a boolean. 
The back-translation would be immediatly impossible.


\subsubsection{Informative Traces}\label{hlo:inf}
The first step of the proof is to augment the existing operational semantics with new events that allow to precisely
track the behavior of the program and of the context. This new semantics is called \emph{informative semantics} and produce
\emph{informative traces}. They are defined at both the source level and the target level.
The relations \(\comc{\hookrightarrow}\) are the equivalent of \(\leadsto\) for these informative semantics, and are defined as:
\begin{align*}
\src{\plug{C}{P} \hookrightarrow}\ \mu &\iff \exists \src{e}, \src{P\triangleright C \Xtolsb{\mu} P\triangleright e}\\
\trg{\plug{C}{P} \hookrightarrow}\ \mu &\iff \exists \trg{e}, \trg{P\triangleright C \Xtoltb{\mu} P\triangleright e}
\end{align*}

We can state the theorem for passing to informative traces as follow
\begin{restatable}[Informative traces]{theorem}{isem}\label{thm:isem}
  Let \(\trg{C_T}\) be a target context and \(\trg{P_T}\) a target program that are linkable.
  Then,
  \[
  \forall m, \trg{\plug{C_T}{P_T} \leadsto}\ m \implies \exists \mu \sqsupseteq m, \trg{\plug{C_T}{P_T} \hookrightarrow}\ \mu
  \]
\end{restatable}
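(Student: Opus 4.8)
The plan is to exploit the fact that the informative and the plain semantics of $\Ld$ are built from the \emph{same} single-step relation $\xtotb{\lambda}$: both the behavior closure $\Xtotb{\beta}$ (used to define $\leadsto$ via $\sem$) and the informative trace closure $\Xtoltb{\sigma}$ (used to define $\hookrightarrow$) are generated by the identical family of rules for $\xtotb{\lambda}$, and they differ \emph{only} in their treatment of interaction labels $\trg{\gamma}$, i.e.\ boundary calls $\trg{\cl{f}{v}}$ and returns $\trg{\rt{v}}$. The behavior closure erases every $\trg{\gamma}$-step (treating it as silent, just like $\trg{\epsilon}$), whereas the informative trace closure records it. Accordingly, I would introduce a forgetful projection $\lfloor\cdot\rfloor$ on informative prefixes that deletes all interaction events, fix the meaning of the relation in the statement as $\mu \sqsupseteq m \;\triangleq\; m \leq \lfloor\mu\rfloor$, and reduce the theorem to the claim that any behavior prefix is the $\lfloor\cdot\rfloor$-image of an informative prefix produced over the very same run.

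Concretely, first I would unfold $\trg{\plug{C_T}{P_T} \leadsto}\ m$ to its definition $\exists t \geq m.~\trg{\plug{C_T}{P_T}} \sem t$, and from the derivation of the behavior $t$ extract the finite initial segment of single $\xtotb{\lambda}$-steps that suffices to emit all (finitely many) events of $m$. Since $m$ is finite, a finite run $\trg{\Omega_0} \xtotb{\lambda_1} \cdots \xtotb{\lambda_k} \trg{\Omega'}$ from the initial state $\trg{\Omega_0}$ of $\trg{\plug{C_T}{P_T}}$ is enough, with the non-silent $\lambda_i$ concatenating to a behavior prefix that is $\geq m$. I would then \emph{replay this same sequence of configurations} under the informative trace closure $\Xtoltb{}$. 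Because the configurations and the underlying single steps are literally identical, every rule application remains valid; the only change is that the interaction steps now contribute their $\trg{\gamma}$ labels, yielding an informative prefix $\mu$ with $\trg{\plug{C_T}{P_T} \hookrightarrow}\ \mu$ and $\lfloor\mu\rfloor$ equal to the behavior emitted by the run, hence $m \leq \lfloor\mu\rfloor$, i.e.\ $\mu \sqsupseteq m$. The clean way to package this is a simulation lemma, proved by induction on the derivation of $\Xtotb{\beta}$: every $\Xtotb{\beta}$ derivation over some states is mirrored by an $\Xtoltb{\sigma}$ derivation over the same states with $\lfloor\sigma\rfloor = \beta$.

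I expect the main difficulty to be bookkeeping rather than conceptual, concentrated in the uniform treatment of the terminal observations. The behavior rules for termination and silent divergence are phrased using the single-step relation $\xtotb{}$ and its negation $\nXtot{}$ (being stuck, respectively admitting an infinite sequence of silent steps), so they transfer verbatim to the informative closure, which shares exactly that relation; one only has to check that ``stuck'' and ``silently looping'' are insensitive to whether a boundary call is labelled $\trg{\gamma}$ or counted as silent, which holds because the step targets are the same configurations. The second delicate point is purely definitional: pinning down $\sqsupseteq$ as $m \leq \lfloor\mu\rfloor$ is what makes the erasure argument immediate, and the finiteness of $m$ is essential, since it guarantees that finitely many steps, and hence a finite informative $\mu$, suffice even when the full execution diverges or runs forever.
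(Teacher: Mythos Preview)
Your proposal is correct and takes essentially the same approach as the paper: both proceed by induction on the derivation of $\trg{\Xtotb{\beta}}$, showing that each rule instance is mirrored by a $\trg{\Xtoltb{\mu}}$ derivation over the identical configurations with the interaction labels erased, and both observe that the terminal cases (termination, divergence) transfer because their premises are stated over the shared single-step relation. One small discrepancy: the paper fixes $\mu \sqsupseteq m$ as the \emph{equality} $|\mu|_{\text{I/O/termination}} = m$, not a prefix relation; your simulation lemma already delivers that equality, so your preliminary step of extracting a possibly-overshooting finite run from the full trace $t$ is unnecessary (the paper works directly from $\trg{\Xtotb{m}}$).
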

where
\[ \mu \sqsupseteq m \iff |\mu|_{\text{I/O/termination}} = m.\]
\begin{proof}
  Let \trg{C_T} be a target context, \trg{P_T} a target program and \(m\) a finite prefix.
  We are going to show that if there exists \( \trg{e} \) such that
  \trg{P_T \triangleright C_T \Xtotb{m} P_T\triangleright e}, then there exists \(\mu\) such that
  \( |\mu|_{\text{I/O}} = m \) and
  \(\trg{P_T \triangleright C_T \Xtoltb{\mu} P_T\triangleright e}\).

  Let us proceed by induction on the relation
  \(\trg{P_T\triangleright C_T \Xtotb{m} P_T\triangleright e}\).
  \begin{description}
  \item[\Cref{tr:ev-t-refl}] Immediate.
  \item[\Cref{tr:ev-t-term}] This is true by taking \(\mu = \termc\), because the informative semantics can progress if and only if the non-informative semantics can.
  \item[\Cref{tr:ev-t-divr}] This is true by taking \(\mu = \divrc\), because the informative semantics can only diverge when executing the program part (the context
    can not loop or do recursion), and calls from the program part do not generate any event.
  \item[\Cref{tr:beh-t-silent}] Then \(\trg{P_T\triangleright C_T \xtotb{\epsilon} P_T\triangleright e}\) according to the non-informative semantics. Since the semantics only differ on the
    events that are generated, we have two cases.
    Either \(\trg{P_T\triangleright C_T \xtotb{\epsilon} P_T\triangleright e}\) according to the informative semantics, in which case we can take \(\mu = \epsilon\).
    Or \(\trg{P_T\triangleright C_T\xtotb{\alpha} P_T\triangleright e}\) according to the informative semantics, in which case we can take \(\mu = \alpha\). This \(\alpha\) must be
    a call or return event by definition of the informative semantics, hence the result.
  \item[\Cref{tr:beh-t-sin}] Since \(\trg{P_T\triangleright C_T \xtotb{\alpha} P_T\triangleright e}\) according to the non-informative semantics, this is also the case
    according to the informative semantics, hence the result.
  \item[\Cref{tr:beh-t-cons}] Then \(\trg{P_T\triangleright C_T\Xtotb{m_1} P_T\triangleright e'}\) and \(\trg{P_T\triangleright e' \Xtotb{m_2} P_T \triangleright e}\) with
    \( m = m_1m_2\).
    By applying the induction hypothesis, there exists \(\mu_1\) and \(\mu_2\) such that \(\trg{P_T\triangleright C_T\Xtoltb{\mu_1} e'}\), \(\trg{P_T\triangleright e'\Xtoltb{\mu_2} e}\),
    \(|\mu_1|_{\text{I/O/termination}} = m_1\), and \(|\mu_2|_{\text{I/O/termination}} = m_2\).

    Therefore by applying \Cref{tr:tr-t-cons}, \( \trg{P_T\triangleright C_T\Xtoltb{\mu_1\mu_2} e}\).
    It is easy to see
    that \( |\mu_1\mu_2|_{\text{I/O/termination}} = m_1m_2\). We are done.
  \end{description}
\end{proof}

\subsubsection{Decomposition}
This decomposition step relies on the definition of \emph{partial semantics}, one for programs and one for contexts.
These partial semantics describe the possible behaviors of a program in any context and
of a context with respect to any program. Partial semantics can often be defined by abstracting away one
part of the whole program (the context for the partial semantics of programs, and the program for
the partial semantics of contexts), by introducing non-determinism for modeling the abstracted part.

We index our relations by either ``ctx'' or ``prg'' to denote the partial semantics.
The partial semantics for contexts defined as:
\begin{center}
  \typerule{E\Lt-ctx-call}{
  }{
    \src{\call{f}~v \xtosbctx{\cl{f}{v}} \ret e} 
  }{ev-s-ctx-call}
  \typerule{E\Ld-ctx-call}{
  }{
    \trg{\call{f}~v \xtotbctx{\cl{f}{v}} \ret e} 
  }{ev-t-ctx-call}

  \typerule{E\Lt-ctx-ret}{
  }{
    \src{\ret v \xtosbctx{\epsilon} v} 
  }{ev-s-ctx-ret}
  \typerule{E\Ld-ctx-ret}{
  }{
    \trg{\ret v \xtotbctx{\epsilon} v} 
  }{ev-t-ctx-ret}
\end{center}
and the relations \(\Xtolsbctx{\cdot}\) and \(\Xtoltbctx{\cdot}\) are defined in the same manner
as the complete semantics.

The partial semantics for programs are defined in terms of the complete semantics, and are parameterized by the interface
of the program \(\src{\OB{I}}\).
Informally, we define \( \comc{P\hookrightarrow_{\text{prg}}}\ \mu \) to mean that the program \(\comc{P}\) is able
to produce each part of the trace \(\mu\) that comes from the program, i.e. each part that starts
with a call event \(\cl{f}{v}\) and ends before or with the corresponding return event, when it is put into the context that
simply calls this function \(\comc{f}\) with this value \(\comc{v}\). For every ``subtrace'' \(\mu'\)
of \(\mu\) starting with a call event \(\bl{\cl{f}{v}}\) and stopping at the latest at the next (corresponding)
return event, it must be that \( \comc{P \triangleright \call{f}~v \hookrightarrow}\ \mu'\).

\begin{definition}[Partial semantics for programs]
\( \src{P \hookrightarrow_{\text{prg}}}\ \mu\) if and only if:
\begin{itemize}
  \item for any trace \(\mu_{f,v,v'} = \cl{f}{v};\mu';\rt{v'}\) such that \(\mu = \mu_1;\mu_{f,v,v'};\mu_2\), such that
    there is no event \(\ret{\dots}\) in \(\mu'\), and such that \(\src{f : \tau \rightarrow \tau' \in \OB{I}}\)
    with \(\src{v \in \tau}\),
    we have \[\src{P_T \triangleright \call{f}~v \Xtolsb{\mu_{f,v,v'}} P\triangleright v'};\]
  \item for any trace \(\mu_{f,v} = \cl{f}{v};\mu'\) such that \(\mu = \mu_1; \mu_{f,v}\), such that there is no
    event \(\ret{\dots}\) in \(\mu'\), and such that \(\src{f : \tau \rightarrow \tau' \in \OB{I}}\) with
    \(\src{v \in \tau}\),
    there exists \(\src{e}\) such that \[\src{P_T\triangleright\call{f}~v \Xtolsb{\mu_{f,v}} P\triangleright e}.\]
\end{itemize}

\( \trg{P \hookrightarrow_{\text{prg}}}\ \mu\) if and only if:
\begin{itemize}
  \item for any trace \(\mu_{f,v,v'} = \cl{f}{v};\mu';\rt{v'}\) such that \(\mu = \mu_1;\mu_{f,v,v'};\mu_2\), such that
    there is no event \(\ret{\dots}\) in \(\mu'\), and such that \(\trg{f} \in \src{\OB{I}}\)
    we have \[\trg{P_T \triangleright \call{f}~v \Xtoltb{\mu_{f,v,v'}} P\triangleright v'};\]
  \item for any trace \(\mu_{f,v} = \cl{f}{v};\mu'\) such that \(\mu = \mu_1; \mu_{f,v}\), such that there is no
    event \(\ret{\dots}\) in \(\mu'\), and such that \(\trg{f} \in \src{\OB{I}}\)
    there exists \(\trg{e}\) such that \[\trg{P_T\triangleright\call{f}~v \Xtoltb{\mu_{f,v}} P\triangleright e}.\]
\end{itemize}
\end{definition}
We must restrict this definition to the well-typed calls in the source level: indeed, a badly-typed call does
not make sense in the source language.

Our decomposition theorem talks about both programs and contexts:
\begin{restatable}[Decomposition]{theorem}{decthm}\label{thm:dec}
  Let \(\trg{C_T}\) be a target context and \(\trg{P_T}\) a target program that are linkable.
  Then,
  \[
  \forall \mu, \trg{C_T\left[P_T\right] \hookrightarrow}\ \mu \implies \trg{C_T\hookrightarrow_{\text{ctx}}}\ \mu \wedge
  \trg{P_T\hookrightarrow_{\text{prg}}}\ \mu
  \]
\end{restatable}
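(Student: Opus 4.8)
The plan is to prove the Decomposition theorem by induction on the derivation of the informative execution $\trg{C_T\left[P_T\right] \hookrightarrow}\ \mu$, i.e.\ on the structure of the reduction sequence $\trg{P_T\triangleright C_T \Xtoltb{\mu} P_T\triangleright e}$ built from the rules \Cref{tr:tr-t-silent,tr:tr-t-act,tr:tr-t-sin,tr:tr-t-cons}. The goal is to show simultaneously that every call/return interaction originating from the context is reproducible by the partial context semantics $\trg{C_T \hookrightarrow_{\text{ctx}}}\ \mu$, and that every call-to-program segment of $\mu$ is reproducible by the partial program semantics $\trg{P_T \hookrightarrow_{\text{prg}}}\ \mu$. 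The key observation driving the whole argument is that in these languages control alternates cleanly: the context is a single expression that can only call into the program (rule \Cref{tr:ev-t-call-in} emits $\bl{\cl{f}{v}}$) and receive values back (rule \Cref{tr:ev-t-ret-out} emits $\bl{\rt{v}}$), while I/O and internal calls/returns occur strictly inside the program and carry no interaction labels. This means the interaction labels $\bl{\gamma}$ recorded in $\mu$ partition the execution into context-controlled phases and program-controlled phases.

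First I would set up the induction and dispatch the base cases: reflexivity gives the empty trace, for which both partial semantics hold vacuously. For the inductive/concatenation case \Cref{tr:tr-t-cons}, I would split $\mu = \mu_1\mu_2$ at an interaction boundary and apply the induction hypothesis to each piece, being careful that the splitting respects the call/return bracketing required by the definition of $\trg{\hookrightarrow_{\text{prg}}}$. The heart of the proof is extracting, for each maximal program-controlled segment of $\mu$ delimited by a $\bl{\cl{f}{v}}$ event and the matching $\bl{\rt{v'}}$ (or the end of the trace), a corresponding derivation $\trg{P_T\triangleright \call{f}~v \Xtoltb{\mu_{f,v,v'}} P_T\triangleright v'}$. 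This follows because once control enters $\trg{P_T}$ via a call, the subsequent steps until the matching outermost return are entirely determined by $\trg{P_T}$'s function bodies and do not consult $\trg{C_T}$; replaying exactly those steps starting from the minimal calling context $\trg{\call{f}~v}$ yields the required partial-program execution. Dually, for the context side I would use the partial context rules \Cref{tr:ev-t-ctx-call,tr:ev-t-ctx-ret}, which replace a concrete program by nondeterministically assuming any return value, so that $\trg{C_T}$'s own computation steps—which only branch on returned values—are faithfully reproduced without reference to $\trg{P_T}$.

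The main obstacle I anticipate is the bookkeeping around nested and internal calls. The definition of $\trg{\hookrightarrow_{\text{prg}}}$ quantifies over subtraces $\mu_{f,v,v'} = \cl{f}{v};\mu';\rt{v'}$ in which \emph{no} $\ret{\dots}$ event appears in $\mu'$, i.e.\ it isolates a single outermost call/return bracket; meanwhile internal calls and returns within the program generate \emph{no} interaction events (rules \Cref{tr:ev-t-call-inte,tr:ev-t-ret-inte}), so they are silently absorbed into $\mu'$. I must verify that the stack discipline maintained in the program-state subscript $\trg{\OB{f}}$ guarantees that each context-level $\bl{\cl{f}{v}}$ is matched by exactly one context-level $\bl{\rt{v'}}$, and that the intervening steps form a well-bracketed program execution returning at stack depth one. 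Establishing this matching rigorously—essentially a lemma that interaction events occur exactly at transitions of the call stack between depth zero and depth one—is the technical crux. The analogous source-level statement (needed for the back-translation's correctness) would be proved identically, with well-typedness of calls handled by the extra interface-membership side conditions in the source partial semantics. Once this bracketing lemma is in hand, the decomposition follows by routing each segment to the appropriate partial semantics, and I expect the remaining cases to be straightforward case analyses on the final step of the informative reduction.
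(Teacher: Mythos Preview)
Your proposal is correct and matches the paper's approach: the paper splits the argument into two separate lemmas—one for the context side by straightforward induction on the derivation of $\trg{P_T\triangleright C_T \Xtoltb{\mu} P_T\triangleright e}$ (each full-semantics step is replayed in the partial context semantics, using the nondeterministic call rule to absorb the program body), and one for the program side by locating each call-to-return subtrace inside an evaluation context and stripping that context via \Cref{tr:ev-t-cth}. Your plan to do both simultaneously and your explicit ``bracketing lemma'' are mild presentational refinements of exactly this; note only that in the \Cref{tr:tr-t-cons} case the split $\mu=\mu_1\mu_2$ is given by the derivation rather than chosen at an interaction boundary, and that in this language there are no callbacks, so the stack-discipline concern is lighter than you anticipate.
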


We are going to prove two different lemmas, one for contexts and one for programs.
\begin{lemma}
  Let \(\trg{C_T}\) be a target context and \(\trg{P_T}\) be a target program, \(\mu\) an informative
  trace and \(\trg{e}\) a target expression. Then,
  \[
  \trg{C_T\left[P_T\right] \Xtoltb{\mu} e} \implies
  \trg{C_T \Xtoltbctx{\mu} e}
  \]
\end{lemma}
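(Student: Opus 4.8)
The plan is to induct on the derivation of \(\trg{C_T\hole{P_T} \Xtoltb{\mu} e}\), i.e.\ on the rules that build the whole-program informative trace relation. The concatenation rule splits \(\mu = \mu_1\mu_2\) with an intermediate state \(\trg{\Omega''}\); this case follows immediately from the induction hypothesis applied to the two sub-derivations together with the corresponding concatenation rule for \(\Xtoltbctx{\mu}\). The reflexive/silent bases are trivial. Every remaining atomic case (a single silent step, a single interaction step \(\trg{\xtotb{\gamma}}\), or an \(\bl{\alpha}\)-labelled behaviour sub-derivation that itself unfolds into single steps) bottoms out in a single whole-program step \(\trg{\xtotb{\lambda}}\). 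So the real content is a single-step simulation lemma relating \(\trg{\xtotb{\lambda}}\) to \(\trg{\xtotbctx{\lambda}}\), and the theorem is assembled from it by the cons rule.

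To make the single-step simulation go through I would strengthen the statement with a configuration correspondence that tracks the call stack \(\trg{\OB{f}}\) carried by whole-program states \(\trg{P_T\triangleright_{\OB{f}} e}\). The invariant is: while \(\trg{\OB{f}}\) is empty, control sits in the context and the context-partial state is literally the current expression, so every ordinary reduction (operations, comparisons, conditionals, lets, type checks, as well as reads, writes, and failures) is matched verbatim by the step rules inherited by \(\trg{\xtotbctx{\lambda}}\). When the context performs a boundary call \(\trg{\call{f}~v}\) out of the empty stack (emitting \(\cl{f}{v}\)), the context-partial semantics takes the matching call step that also emits \(\cl{f}{v}\) and moves to \(\trg{\ret e'}\), choosing its nondeterministic continuation \(\trg{e'}\) to be exactly the body of \(\trg{f}\) in \(\trg{P_T}\) instantiated at \(\trg{v}\). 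From that point the two executions coincide syntactically until the matching return, so the program's internal steps and its I/O events are reproduced step-for-step, and the boundary return is discharged by the context return rule.

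I expect the boundary handling to be the main obstacle, for two reasons. First, while the program runs the stack depth is \(\geq 1\), and in the whole-program semantics calls and returns at depth \(\geq 2\) are internal and emit no interaction event; I must check that, under the chosen continuation \(\trg{e'}\), the context-partial semantics likewise treats these as ordinary silent calls and returns rather than as context-to-program boundary interactions, so that no spurious \(\cl{f}{v}\) or \(\rt{v}\) events are produced. Second, the guess of \(\trg{e'}\) made at each boundary call must be threaded through the induction, which is precisely why the correspondence has to be phrased over whole states (stack included) rather than over top-level traces. Once this invariant is in place, each of the finitely many single-step cases is a direct rule-by-rule match on the emitted label \(\bl{\lambda}\), and reassembling them with the concatenation rule for \(\Xtoltbctx{\mu}\) yields \(\trg{C_T \Xtoltbctx{\mu} e}\).
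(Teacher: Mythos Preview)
Your first paragraph is exactly the paper's proof: induct on the derivation of the informative-trace relation, discharge the cons rule by applying the induction hypothesis to both sub-derivations, and reduce the remaining atomic cases to a single-step case analysis on the emitted label.

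The paper does not, however, introduce any stack-tracking invariant or thread the nondeterministic continuation through the induction. Both worries in your last two paragraphs are unnecessary. The choice of $\trg{e'}$ at a boundary call is purely local: the whole program steps from $\evalctxt{\call{f}~v}$ to a concrete $\evalctxt{\ret e'}$, and the context-partial call rule simply picks that same $\trg{e'}$, so both sides land on the same expression; since the lemma statement already pins the ending expression, the induction hypothesis already carries all the correspondence you need through cons, with no extra bookkeeping. The only implicit generalisation the paper uses is that the induction ranges over derivations from an arbitrary starting expression (so the second sub-derivation of cons is covered); that is just quantifying over the start, not a stack-aware invariant. Your observation about internal calls at depth $\geq 2$ is a remark about how the context-partial single-step relation is defined, not about the proof structure: the paper handles the silent case with a bare ``by case analysis'' and does not dwell on it, and a stack-tracking invariant on your side would not manufacture a silent context-partial step if the semantics does not already provide one.
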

\begin{proof}
  By induction on the relation \(\trg{\plug{C_T}{P_T} \Xtoltb{\mu} P_T\triangleright e}\)
  \begin{description}
  \item[\Cref{tr:tr-t-silent}]
    Therefore \(\trg{\plug{C_T}{P_T} \xtotb{\epsilon} P_T\triangleright e}\). By case analysis, it is
    also the case that \(\trg{C_T \xtotbctx{\epsilon} e}\) hence the result.
  \item[\Cref{tr:tr-t-act}] \(\trg{\plug{C_T}{P_T} \xtotb{\alpha} P_T\triangleright e}\). We proceed
    by case analysis on this relation: if \(\alpha\) is an I/O operation, correct termination or
    failure event, then we indeed have \(\trg{C_T \xtotbctx{\alpha} e}\).

    Otherwise, \(\alpha = \divrc\). Therefore, \(\forall n, \exists \trg{e_n},
    \trg{\plug{C_T}{P_T} \xtotb{\epsilon}\redapp{n} P_T \triangleright e_n}\). Now, by induction on
    \(n\), we can prove that \(\forall n, \exists \trg{e_n}, \trg{C_T \xtotbctx{\epsilon}^n e_n}\).
    Hence the result. 
  \item[\Cref{tr:tr-t-sin}] Then \(\trg{\plug{C_T}{P_T} \xtotb{\beta} P_T\triangleright e}\).
    We proceed by case analysis on this relation:
    \begin{itemize}
    \item If \(\beta = \cl{f}{v}\), then \(\trg{C_T} = \evalctxt{\call{f}~v}\) and \( \trg{e} = \evalctxt{\ret e'}\) for some evaluation
      context \(\trg{\evalctx}\) and some expression \(\trg{e'}\).
      Therefore, \(\trg{e \xtotbctx{\cl{f}{v}} \evalctxt{\ret{e'}}}\) by the partial semantics, hence the result.
    \item If \(\beta = \rt{f}{v}\), then \(\trg{C_T} = \evalctxt{\ret{v}}\) for some evaluation context
      \(\trg{\evalctx}\).
      Therefore, \(\trg{e \xtotbctx{\rt{f}{v}} \evalctxt{v}}\) according to the partial semantics, hence the result.
    \end{itemize}
  \item[\Cref{tr:tr-t-cons}] We have that \( \trg{P_T \triangleright C_t \Xtoltbctx{\mu_1} e'} \)
    and \(\trg{P_T\triangleright e' \Xtoltbctx{\mu_2} e}\).
    Then, by applying the induction hypothesis to the two relations, we are done.
  \end{description}
\end{proof}

Then, we prove a similar lemma for programs:
\begin{lemma}
  Let \(\trg{P_T}\) be a target program, \(\trg{C_T}\) a target context and \(\mu\) an informative trace.
  Suppose that \(\trg{C_T\left[P_T\right] \hookrightarrow}\ \mu\).
  Then:
  \begin{itemize}
  \item for any trace \(\mu_{f,v,v'} = \cl{f}{v};\mu';\rt{v'}\) such that \(\mu = \mu_1;\mu_{f,v,v'};\mu_2\) and such that
    there is no event \(\ret{\dots}\) in \(\mu'\), \(\trg{P_T \triangleright \call{f}~v \Xtotb{\mu_{f,v,v'}} v'}\)
  \item for any trace \(\mu_{f,v} = \cl{f}{v};\mu'\) such that \(\mu = \mu_1; \mu_{f,v}\) and such that there is no
    event \(\ret{\dots}\) in \(\mu'\), there exists \(\trg{e}\) such that \(\trg{P_T\triangleright\call{f}~v \Xtotb{\mu_{f,v}} e}\).
  \end{itemize}
\end{lemma}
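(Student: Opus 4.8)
The plan is to reduce this program-side decomposition to a single \emph{locality} (frame) property of the target semantics, after which the claim follows by segmenting $\mu$ at its interaction events. First I would unfold $\trg{C_T\left[P_T\right] \hookrightarrow}\ \mu$ into an execution $\trg{P_T \triangleright C_T \Xtoltb{\mu} P_T \triangleright e}$ via the initial-state definition, and then analyse the shape of $\mu$. In the informative semantics the only interaction events are $\cl{f}{v}$, emitted by rule E\Ld-call-in (which fires only with an \emph{empty} call stack), and $\rt{v'}$, emitted by rule E\Ld-ret-out (which fires only with a \emph{singleton} stack); every other program step is silent or an I/O action, and the only moves available to the context itself are silent steps and the emission of a call. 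Because a call requires an empty stack, once a top-level $\cl{f}{v}$ is emitted no further call can be emitted until the matching $\rt{v'}$. Hence $\mu$ decomposes uniquely into context-silent phases interleaved with program-controlled segments of the form $\cl{f}{v};\mu';\rt{v'}$ (with $\mu'$ free of $\rt{\cdot}$ events, hence containing only I/O), together with at most one trailing incomplete segment $\cl{f}{v};\mu'$. This is precisely the partition addressed by the two bullets.

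The heart of the argument is the locality lemma: from the moment control enters the program on a call $\cl{f}{v}$ up to the matching return, the execution of $\trg{C_T\left[P_T\right]}$ is \emph{independent of the surrounding context} and coincides step-for-step with that of $\trg{P_T \triangleright \call{f}~v}$ in isolation. Concretely, when E\Ld-call-in fires we have $\trg{C_T} = \trg{\evalctx\hole{\call{f}~v}}$ for some evaluation context $\trg{\evalctx}$, and the whole-program expression becomes $\trg{\evalctx\hole{\ret{e_0'}}}$ with call stack $f$, where $\trg{e_0'}$ is $f$'s body after substituting $v$. I would prove, by induction on the program-controlled steps, that $\trg{\evalctx}$ is carried along untouched until the return and that each such step is mirrored by a step of $\trg{P_T \triangleright \call{f}~v}$ emitting the same label. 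The essential ingredient is the no-callback restriction: inside a function body the program performs only internal calls (which keep the stack non-empty and are silent, by E\Ld-call-internal and E\Ld-ret-internal) or I/O, so no step hands control back into $\trg{\evalctx}$ except the final E\Ld-ret-out, which reduces the stack to a singleton and emits $\rt{v'}$. This yields $\trg{P_T \triangleright \call{f}~v \Xtoltb{\cl{f}{v};\mu';\rt{v'}} P_T \triangleright v'}$ for a completed segment, and $\trg{P_T \triangleright \call{f}~v \Xtoltb{\cl{f}{v};\mu'} P_T \triangleright e}$ for a truncated one.

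With the locality lemma established, the lemma follows by applying it to each segment identified above: the completed segments give the first bullet (the matching return ensures the program execution ends in $\trg{v'}$) and the unique trailing segment gives the second (existence of some resulting $\trg{e}$). I expect the main obstacle to be the locality lemma itself, and in particular the stack bookkeeping inside its induction: one must formulate an invariant strong enough to carry through arbitrarily nested \emph{internal} calls while guaranteeing that the context's evaluation context $\trg{\evalctx}$ and the bottom of the call stack stay frozen until the matching return, and that the emitted labels align. Each individual case should be routine given the deterministic, no-callback structure of the semantics, but getting this invariant right --- including handling the truncated segment, where the induction simply halts early --- is where the real care lies.
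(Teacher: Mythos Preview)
Your proposal is correct and follows essentially the same approach as the paper's proof: identify the surrounding evaluation context $\trg{\evalctx}$ at the moment of the call, argue that it is carried along unchanged throughout the program-controlled segment, and then strip it away via the context rule to obtain the standalone program execution. The paper compresses all of this into two sentences (``there exists an evaluation context $\trg{\evalctx}$ such that $\trg{P \triangleright \evalctxt{\call{f}~v}} \Xtoltb{} \trg{\evalctxt{v'}}$; from this, we can reason by induction and use \textsf{E}\Ld\textsf{-ctx}''), whereas you make the locality/frame lemma and the stack invariant explicit---but the underlying argument is the same.
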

\begin{proof}
  Consider the first case for instance. From the fact that \(\mu_{f,v,v'}\) appears in \(\mu\), we can deduce the fact
  that there exists an evaluation context \(\trg{\evalctx}\) such that \(\trg{ P\triangleright \evalctxt{\call{f}~v} \Xtoltbctx{\mu_{f,v,v'}} \evalctxt{v'}}\).

  From this, we can reason by induction and use \Cref{tr:ev-t-cth} to obtain the result.
\end{proof}

\subsubsection{Backward Compiler Correctness for Programs}
\begin{restatable}[Backward Compiler Correctness]{theorem}{bcc}\label{thm:bcc}
  Let \(\src{P}\) be a source program. Then,
  \[
  \forall \mu, \trg{\comptd{\src{P}}\ \hookrightarrow_{\text{prg}}}\ \mu \implies \src{P \hookrightarrow_{\text{prg}}}\ \mu.
  \]
\end{restatable}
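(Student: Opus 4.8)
The plan is to prove \Cref{thm:bcc} one call segment at a time, reducing each segment to the cross-language logical relation already established for \comptd{\cdot}. Fix \src{P} and assume \(\trg{\comptd{\src{P}}\ \hookrightarrow_{\text{prg}}}\ \mu\). By the definition of the partial program semantics, \(\src{P \hookrightarrow_{\text{prg}}}\ \mu\) requires, for every \emph{well-typed} completed segment \(\mu_{f,v,v'} = \cl{f}{v};\mu';\rt{v'}\) (with \(\src{f:\tau\to\tau'}\in\src{\OB{I}}\) and argument of type \(\src{\tau}\)) and for every well-typed dangling segment \(\mu_{f,v} = \cl{f}{v};\mu'\), a corresponding source execution starting from \(\src{\call{f}~v}\). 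Since \(\src{f:\tau\to\tau'}\in\src{\OB{I}}\) compiles to \(\trg{f}\in\src{\OB{I}}\), the hypothesis provides a target execution of the \emph{same} segment from \(\trg{\call{f}~v}\); note that the source obligation ranges only over well-typed calls while the target hypothesis covers all calls with \(\trg{f}\in\src{\OB{I}}\), so I may use strictly more than I must prove.

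Next I would exploit the structural invariant that, since callbacks from program to context are syntactically disallowed, every call inside a function body is internal and hence silent. Consequently the interaction events \(\bl{\gamma}\) of a program-side segment occur only at its boundaries: the incoming \(\cl{f}{v}\) is produced by \Cref{tr:ev-s-call-in} (resp.\ \Cref{tr:ev-t-call-in}) and, for a completed segment, the outgoing \(\rt{v'}\) by \Cref{tr:ev-s-ret-out} (resp.\ \Cref{tr:ev-t-ret-out}). Everything inside \(\mu'\) is then an ordinary I/O behavior \(\bl{\beta}\). This is the key simplification: it turns each per-segment goal into a claim purely about the I/O behavior of a function body, which is exactly the regime the logical-relation machinery speaks about. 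Making this precise calls for a small auxiliary lemma relating the informative relations \(\Xtolsb{\cdot}\) and \(\Xtoltb{\cdot}\) restricted to program-only runs with the plain behavior relations \(\Xtosb{\cdot}\) and \(\Xtotb{\cdot}\).

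Then I would invoke compiler correctness. From \Cref{thm:comp-sem-pres-prog} we have \(\vdash\src{P}\anylogrel\comptd{\src{P}}\), which by \Cref{def:logrel-prog} relates the body of each \src{f}. For a well-typed argument, the source and target readings \src{v} and \trg{v} of the common trace value \com{v} are related in \(\valrel{\tau}\), so instantiating the body relation yields \((W,\src{\ret{e\subs{v}{x}}},\trg{\ret{\comptd{e}\subt{v}{x}}})\in\termrel{\tau'}\) at every world \(W\), exactly as in the compatibility lemma \Cref{thm:compat-call-v}. On the target side the compiled body first runs its dynamic check \(\trg{x \checkty{\comptd{\tau}}}\), which succeeds silently because \com{v} has the right kind, and then reduces to \(\trg{\ret{\comptd{e}\subt{v}{x}}}\); hence the behavior \(\bl{\beta}=\mu'\) that the hypothesis attributes to the compiled segment is a behavior of this target term. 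Applying the over-approximation adequacy lemma \Cref{thm:log-rel-adeq-gt} transfers \(\bl{\beta}\) to the source body \(\src{e\subs{v}{x}}\) and, reading off the final value through the term relation, forces the same returned value \com{v'}. Re-attaching the boundary events with \Cref{tr:ev-s-call-in} and \Cref{tr:ev-s-ret-out} then gives \(\src{P\triangleright\call{f}~v \Xtolsb{\mu_{f,v,v'}} P\triangleright v'}\); the dangling case is identical except that I stop before any return and need only reach some source expression.

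The hard part will be bridging the two ``currencies'': the logical relation and its adequacy lemmas are stated over I/O behaviors \(\bl{\beta}\) (sequences of actions \(\bl{\alpha}\)), whereas the partial program semantics is stated over informative traces that additionally carry the interaction events \(\bl{\gamma}\). The decomposition of \Cref{thm:dec} already isolates the program-side pieces, but I still need the auxiliary lemma above to certify that the \(\bl{\gamma}\)-erasure of a program-only informative run is a plain behavior derivation and that the only boundary \(\bl{\gamma}\) events are the call-in and ret-out ones; this is precisely where the ``no callbacks'' invariant does the work. A secondary difficulty is handling the non-terminating and failing contents of \(\mu'\) (silent divergence \(\divrc\), failure \(\failactc\)) and the dangling-call case within the step-indexed relation, which forces me to use both approximation directions and to pick step indices large enough for the relevant target reductions, exactly as \Cref{thm:log-rel-adeq-less} and \Cref{thm:log-rel-adeq-gt} do. Once these are in place, reassembling the segment-wise facts into the two clauses defining \(\src{P \hookrightarrow_{\text{prg}}}\ \mu\) is routine.
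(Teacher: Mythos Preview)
Your route differs from the paper's. After a preliminary lemma that peels off the incoming \(\cl{f}{v}\) and the compiled wrapper's dynamic check, the paper does \emph{not} invoke the logical relation at all: it concludes by structural induction on the source body \(\src{e}\) that \(\src{e\subs{v}{x}}\) and \(\trg{\comptd{e}\subt{v}{x}}\) simulate step for step. Since expression compilation is essentially the identity and every internal call carries a well-typed argument (so its wrapper check also passes silently), this induction is immediate and delivers matching I/O \emph{and} matching final value in one stroke.

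Your logical-relation plan has a gap at exactly that second point. The adequacy lemma \Cref{thm:log-rel-adeq-gt} and the observation relation \(\obsfun{W}{\anylogrel}\) are stated over behaviors \(\bl{\beta}\), while \(\rt{v'}\) is a \(\bl{\gamma}\)-event and is erased from \(\bl{\beta}\). From \((W,\src{e\subs{v}{x}},\trg{\comptd{e}\subt{v}{x}})\in\termrel{\tau'}\) you therefore get that the source body reproduces \(\mu'\) and terminates, but \emph{not} that it terminates at \(v'\): plugging into the identity continuation puts both sides in \(\obsfun{W}{\anylogrel}\) regardless of which value each reaches, so ``reading off the final value through the term relation'' is not justified by the stated lemmas. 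The gap is closable by picking a value-observing continuation, for instance \((\src{\writeexp{\hole{\cdot}}},\trg{\writeexp{\hole{\cdot}}})\in\contrel{\Nats}\) when \(\src{\tau'}=\Nats\) and an analogous boolean discriminator when \(\src{\tau'}=\Bools\), so that the final value surfaces as a trailing \(\wrl{\cdot}\) event in \(\bl{\beta}\) and adequacy then forces it to equal \(v'\). Without that extra step (or the paper's direct induction) the completed-segment clause of \(\src{\hookrightarrow_{\text{prg}}}\) is not established; the dangling-segment clause, which names no return value, does go through as you outline.
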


Before proving the theorem, we state a preliminary lemma:
\begin{lemma}
  Suppose that \(\trg{\comptd{\src{P}} \triangleright\ \call{f}~v \Xtoltb{\cl{f}{v};\mu} \comptd{\src{P}} \triangleright e'}\) where the call is well-typed.
  
  Then, \(\trg{\comptd{\src{P}} \triangleright\ \call{f}~v \Xtoltb{\cl{f}{v}} \comptd{\src{P}}\triangleright \comptd{\src{e}}\subt{x}{v}}\) and:
  \begin{itemize}
  \item \(\trg{\comptd{\src{P}} \triangleright\ \comptd{\src{e}}\subt{x}{v} \Xtoltb{\mu} \comptd{\src{P}}\triangleright e'}\),
  \item or, \(\mu = \epsilon\) and \(\trg{\comptd{\src{P}} \triangleright\ \call{f}~v \Xtoltb{\cl{f}{v}} \comptd{\src{P}}\triangleright e'}\)
  \end{itemize}
  where \(\src{e}\) is the code of the function \(f\) in the source program.
\end{lemma}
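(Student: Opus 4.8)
The plan is to observe that the opening segment of any execution of $\trg{\comptd{\src{P}} \triangleright \call{f}~v}$ is entirely forced, and to peel it off by inverting the trace derivation. Starting from the empty-stack configuration $\trg{\comptd{\src{P}} \triangleright_{\epsilon} \call{f}~v}$, the leading interaction event $\bl{\cl{f}{v}}$ can be emitted only by the call-entry rule (the context-to-program transition that fires on the empty stack); since the given trace begins with $\bl{\cl{f}{v}}$, the first step is therefore pinned down, and it necessarily reaches the substituted compiled body $\trg{\ret \ifte{v\checkty{\comptd{\tau}}}{\comptd{\src{e}}\subt{v}{x}}{\failt}}$. The type check and conditional that the compiler wraps around every body then reduce by deterministic silent ($\trg{\epsilon}$) steps.

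First I would establish the auxiliary fact that this wrapped check always succeeds here. Because the call is well-typed, $\trg{v}$ has the source argument type $\src{\tau}$ of $\src{f}$, so by case analysis on $\src{\tau}$ (a natural value when $\src{\tau}=\Nats$, and $\truet$ or $\falset$ when $\src{\tau}=\Bools$) the check $\trg{v\checkty{\comptd{\tau}}}$ reduces silently to $\truet$ (by the check-nat-true, resp.\ check-bool-true rule), and the conditional then reduces to $\trg{\comptd{\src{e}}\subt{v}{x}}$. Chaining the call-entry step with these silent steps, and noting that the intervening silent steps contribute no events to the trace, yields the first conclusion $\trg{\comptd{\src{P}} \triangleright \call{f}~v \Xtoltb{\cl{f}{v}} \comptd{\src{P}} \triangleright \comptd{\src{e}}\subt{v}{x}}$.

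It then remains to locate the residual trace $\mu$. Because the forced silent prefix emits no events, the given derivation of $\trg{\comptd{\src{P}}\triangleright \call{f}~v \Xtoltb{\cl{f}{v};\mu} \comptd{\src{P}}\triangleright e'}$ either runs strictly past the forced prefix or halts inside it. In the former case, determinism of the silent reduction lets me cancel the common prefix and conclude $\trg{\comptd{\src{P}} \triangleright \comptd{\src{e}}\subt{v}{x} \Xtoltb{\mu} \comptd{\src{P}} \triangleright e'}$, which is the first disjunct. In the latter case no event beyond $\bl{\cl{f}{v}}$ has yet been produced, so $\mu = \trg{\epsilon}$ and $\trg{e'}$ is the post-call state $\trg{\ret \ifte{v\checkty{\comptd{\tau}}}{\comptd{\src{e}}\subt{v}{x}}{\failt}}$ (or an intermediate state of the still-reducing check), which is exactly the second disjunct.

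The main obstacle is the inversion of the judgment $\Xtoltb{\cdot}$: its rules allow an execution to terminate at any intermediate configuration, including the half-reduced states inside the forced type-check prefix, so the split of $\bl{\cl{f}{v};\mu}$ into a call step and a remainder is not syntactically unique. Reconciling these stopping points with the clean two-way alternative is where the real work lies: it rests on a determinacy/confluence argument for $\trg{\xtotb{\epsilon}}$ (each configuration has at most one silent successor, so the forced prefix is unique and can be factored out) together with the fact that a check applied to a well-typed value can neither fail nor get stuck. I expect this alignment to carry the technical weight, whereas the type-check fact and the uniqueness of the call-entry step are routine.
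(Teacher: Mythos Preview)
Your approach is correct, and your analysis of the forced prefix (the call-entry step followed by the deterministic reduction of the injected type check) is accurate; in particular you correctly identify that the well-typedness hypothesis is precisely what makes the compiler-inserted guard $\trg{v\checkty{\comptd{\tau}}}$ evaluate to $\truet$ and vanish. The paper, however, takes a more structural route: it proceeds by induction on the derivation of $\trg{\comptd{\src{P}} \triangleright \call{f}~v \Xtoltb{\cl{f}{v};\mu} \comptd{\src{P}} \triangleright e'}$, with the single-interaction rule as base case (so $\mu=\epsilon$, and the conclusion is ``by direct application of the semantics'') and the trace-cons rule as inductive case (apply the induction hypothesis to the first sub-derivation). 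What this buys is that it sidesteps the determinacy/alignment obstacle you single out as the main technical difficulty: by inducting on the derivation tree one never has to reconcile two independently constructed reduction sequences, and the awkward intermediate stopping points inside the type-check wrapper are absorbed into the $\mu=\epsilon$ base case rather than handled by a separate confluence argument. Conversely, your semantic argument is more explicit about \emph{why} the wrapper disappears---a point the paper hides behind ``direct application of the semantics''---but it pays for that transparency with the extra determinacy reasoning you anticipate.
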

\begin{proof}
  By induction on \(\trg{\comptd{\src{P}} \triangleright\ \call{f}~v \Xtoltb{\cl{f}{v};\mu} e'}\).
  \begin{description}
  \item[\Cref{tr:tr-t-sin}] In this case, \(\mu = \epsilon\). The result is obtained by direct application of the
    semantics.
  \item[\Cref{tr:tr-t-cons}] There exists \(\mu_1\) and \(\mu_2\) such that \(\mu_1\mu_2 = \mu\) and
    \[\trg{\comptd{\src{P}} \triangleright\ \call{f}~v \Xtoltb{\cl{f}{v};\mu_1} e_1}\] and
    \[\trg{\comptd{\src{P}} \triangleright\ e_1 \Xtoltb{\mu_2} e}.\]
    By applying the induction hypothesis to the first relation, we obtain the result.
  \item[Other cases:] these cases are impossible
  \end{description}
\end{proof}

We can now prove the backward compiler correctness theorem:
\bcc
\begin{proof}
  Let \(\src{P}\) be a source program and \(\mu\) an informative trace. Suppose that
  \(\trg{\comptd{\src{P}}\ \hookrightarrow_{\text{prg}}}\ \mu\), we will prove that \(\src{P \hookrightarrow_{\text{prg}}}\ \mu\).

  Let \(\mu_{f,v,v'} = \cl{f}{v};\mu';\rt{v'}\) be a trace as defined by the source partial semantics. Let us show that
  \[
  \src{P \triangleright \call{f}~v \Xtolsb{\mu_{f,v,v'}} v'},
  \]
  knowing that
  \[
  \trg{\comptd{\src{P}} \triangleright\ \call{f}~v \Xtoltb{\mu_{f,v,v'}} v'}.
  \]

  By the preliminary lemma, and since \(\mu' \neq \epsilon\), we have that
  \[\trg{\comptd{\src{P}} \triangleright\ \call{f}~v \Xtoltb{\cl{f}{v}} \comptd{\src{e}}}\subt{x}{v}\] where
  \(\src{e}\) is the source of \(\src{f}\) in the source program, because the call is well-typed and
  \(\trg{\comptd{\src{P}} \triangleright\ \comptd{\src{e}}\subt{x}{v} \Xtoltb{\mu'; \rt{v'}} v'}\).

  Now, we can conclude by induction on \(\src{e}\).

\end{proof}

\subsubsection{Back-Translation of a Finite Set of Finite Trace Prefixes}

The theorem we wish to prove in this section is the following theorem:
\begin{restatable}{theorem}{backtranslation}
  Let \(\trg{C_T}\) be a target context and \( \{\mu_i\}\) be a finite set of trace prefixes such that
  \(\forall i, \trg{C_T \hookrightarrow_{\text{ctx}}}\ \mu_i\).
  Then, \[
  \exists \src{C_S}, \forall i, \src{C_S \hookrightarrow_{\text{ctx}}}\ \mu_i^s
  \]
\end{restatable}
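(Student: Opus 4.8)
The plan is to build $\src{C_S}$ \emph{solely} from the finite family $\{\mu_i\}$, never inspecting the code of $\trg{C_T}$, which enters the hypothesis only to guarantee that each $\mu_i$ is a realisable context trace. The guiding observation, already illustrated in \oldautoref{ex:backtr_traces}, is that under the context partial semantics a context behaves as a deterministic expression that only issues calls $\cl{f}{v}$ to the program and branches on the values $\rt{v}$ returned to it; reads, writes, and internal looping are confined to the abstracted program and never originate from the context. Hence the context-visible skeleton of each $\mu_i$ is a path alternating issued calls and received returns, and the whole family can be organised into a single finite tree $T$ whose edges are calls (with a unique successor) and whose branchings are indexed by the received return values, exactly as in the tree on the left of \oldautoref{fig:backtr-trace}.

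First I would make $T$ precise. Since a context has no internal nondeterminism, its only freedom being the return values it receives, two prefixes $\mu_i$ and $\mu_j$ that agree on the history of calls and returns must issue the \emph{same} next call; this is precisely the well-definedness condition that lets me merge the $\mu_i$ into $T$ without conflict. The merge is taken over the context-visible interactions only: the I/O events that a program emits between a call and its return may differ from one $\mu_i$ to another, but they belong to the abstracted program and pass through the context semantics unconstrained, so they never obstruct the merge.

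Second I would define $\src{C_S}$ by recursion on $T$, following the right-hand column of \oldautoref{fig:backtr-trace}. A call node $\cl{f}{v}$ with $\src{f:\tau\to\tau'}\in\src{\OB{I}}$ and $v$ of source type $\tau$ becomes $\src{\letin{x}{\call{f}~v^s}{\dots}}$, where $v^s$ is the concrete source value back-translating $v$; a branching node becomes nested conditionals that test the returned value $x$ for equality against the concrete return values recorded in the $\mu_i$, with a default $\fails$ branch; and the terminal markers $\termc$, $\failactc$ and $\sdiv$ become a terminating source value, $\fails$, and the bare triggering call (after which the source program diverges), respectively. Every branch not witnessed by any $\mu_i$ is filled with $\fails$---the dead-code branches marked $\ast\ast$ in \oldautoref{fig:backtr-trace}. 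A call $\cl{f}{v}$ whose argument $v$ has the wrong type for $\src{f}$ in $\src{\OB{I}}$ is not issued at all: as in \oldautoref{ex:backtr_traces}, the type error that the target raises \emph{inside} the program is shifted into the context by replacing the would-be call with $\fails$. Requiring all $\src{P_i}$ to share the single interface $\src{\OB{I}}$ is essential here, since it fixes these types once and for all and guarantees that the produced $\src{C_S}$ is well-typed and linkable with every $\src{P_i}$.

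Finally I would establish $\forall i.~\src{C_S} \mathrel{\src{\hookrightarrow_{\text{ctx}}}} \mu_i^s$ by induction on the position of $\mu_i$ within $T$, stepping through each issued call and each received value using the context partial semantics (\Cref{tr:ev-s-ctx-call} and \Cref{tr:ev-s-ctx-ret}): the call nodes let $\src{C_S}$ replay exactly the calls of $\mu_i$, and the nested conditionals select, on each received return value, the unique branch along which $\mu_i$ continues. The obstacle is bookkeeping rather than conceptual. The delicate cases are the ill-typed call, where I must check that shifting the failure from program to context reproduces $\mu_i^s$ exactly up to the point of failure---the return event that would otherwise have followed the call never appears in $\mu_i^s$---and the relaxed terminal $\sdiv$, which is emitted by the program and must therefore be mirrored by a context that merely issues the triggering call and then relinquishes control; this last case is the genuinely \emph{relaxed} (rather than plain finite-prefix) content that distinguishes $\pf{\rfrxp}_{\bowtie}$ from its safety counterpart.
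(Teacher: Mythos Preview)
Your proposal is correct and follows essentially the same approach as the paper: organise the $\mu_i$ into a finite tree using context determinacy (the paper formalises this via a relation $F \vDash T$ with explicit rules \textsc{Tree-Empty}, \textsc{Tree-Term}, \textsc{Tree-Fail}, \textsc{Tree-Fail-Type}, \textsc{Tree-Call-Ret}), back-translate the tree by structural recursion into nested $\src{let}$/$\src{if}$ blocks with $\fails$ on unwitnessed branches and on ill-typed calls, and prove correctness by induction on the extraction of each $\mu_i$ from $T$. The only cosmetic difference is that the paper first strips I/O events from the $\mu_i$ before building $T$ (since they are program-side and irrelevant to the context skeleton), whereas you describe them as passing through the merge unconstrained; both amount to the same thing.
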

where the relation between \(\mu_i\) and \(\mu_i^s\) is explicited later.

We will construct a function \(\src{\uparrow}\) such that if \(F\) is a set of finite prefixes, \(F\src{\uparrow}\) is a
source context such that:
\[
\forall \mu \in F, F\src{\uparrow} \hookrightarrow_{\text{ctx}} \mu^s.
\]
where \(\mu^s\), defined later, is the trace \(\mu\) with the possibility of swapping failure and calls events, as described
previously.

We only consider traces that do not have any I/O. Indeed, I/O is produced only by the programs in these
languages, hence do not affect the backtranslation of a source context.
First, we explicit the tree structure that is found in \(F\) by defining the following inductive construction:
\begin{align*}
  T \bnfdef&\ \epsilon \mid \termc \mid \failactc \mid \divrc \\
           \mid &\ (\cl{f}{v}, (v_1, T_1), (v_2, T_2), \dots, (v_i, T_i))
\end{align*}

From a set of trace \(F\), we define a relation \(F \vDash T\) as follow:
\begin{center}
  \typerule{Tree-Empty}{
      F = \emptyset \vee \forall \mu \in F, \mu = \epsilon
  }{F \vDash \epsilon}{tree-empty}
  \typerule{Tree-Term}{
    \forall \mu \in F, \mu \neq \epsilon \implies \mu = \termc
  }{F \vDash \termc}{tree-term}
  \typerule{Tree-Divr}{
    \forall \mu \in F, \mu \neq \epsilon \implies \mu = \divrc
  }{F \vDash \divrc}{tree-div}
  \typerule{Tree-Fail}{
    \forall \mu \in F, \mu \neq \epsilon \implies \mu = \failactc
  }{F \vDash \failactc}{tree-fail}
  \typerule{Tree-Fail-Type}{
    \forall \mu \in F, \mu \neq \epsilon \implies \mu = \cl{f}{v};\mu' \wedge \src{f : \tau \rightarrow \tau'} \wedge v \notin \tau
  }{F \vDash \failactc}{tree-type}
  \typerule{Tree-Call-Ret}{
        \forall i, \exists \mu \in F, \mu = \bl{\cl{f}{v}};\bl{\rt{v_i}};\mu'\\
        \{ \mu' \mid \bl{\cl{f}{v}};\bl{\rt{v_i}};\mu' \in F \} \vDash T_i\\
        \bigcup_{1\leq j \leq i} \{ \bl{\cl{f}{v}};\bl{\rt{v_j}};\mu' \in F \}
        \cup \{\cl{f}{v}; \divrc\} \cup \{\bl{\cl{f}{v}}\} \cup \{ \epsilon \} \supseteq F
  }{F \vDash (\cl{f}{v}, (v_1, T_1), \dots, (v_i, T_i))}{tree-call}
\end{center}

This relation means that the tree \(T\) represents the set of traces \(F\). The first five rules represent the base
cases from the point of view of the context: \Cref{tr:tree-empty} is the case where every trace is empty or there are no trace in \(F\).
\Cref{tr:tree-term} represent the case where all traces terminate. \Cref{tr:tree-div} is a case that should never happen, because the context
should never diverge. \Cref{tr:tree-fail} is the case where all traces fail in the context. \Cref{tr:tree-type} represent the case
where all traces call a function with an incorrect argument and must fail.

The last rule, \Cref{tr:tree-call}, represent the case where some traces may be cut, and the others shall call a function. The next event
must be either divergence, which is ignored because it is part of the program, or a return event.
Then, the remaining traces are separated into groups receiving the same return value: these traces are then considered on their own to
construct subtrees \(T_i\).
The third condition is required to ensure that no trace is forgotten.

The fact that this object is indeed defined is directly derived from the determinacy of the context. Indeed, let \(F\) be a
set of informative traces produced by the same context. They must either be empty, or start by the same event,
by determinacy, and this event has to be a call event. If this call in not correctly typed, then we are in the fifth case.
Otherwise, we are necessarily in the last case, and the \(T_i\) exist by induction.

The back-translation of \(F\) is defined by induction on the tree \(T\) such that \(F \vDash T\):
\begin{definition}[Backtranslation of the tree \(T\)]
  \[
  T\!\uparrow =
  \begin{cases}
    \fails &\text{if \(T = \epsilon\) or \(T = \failactc\)}\\
    \src{0} &\text{if \(T = \termc\)}\\
    \fails &\text{if \(T = \divrc\)}\\
    \src{\letin{x}{\call{f}~v}{\left(\ 
        \begin{minipage}{0.4\textwidth}\(
          \src{{if}~x = v_1~{then}~T_1\!\uparrow~}\\
          \src{{else}~{if}~x=v_2~{then}~\dots}\\
          \src{{else}~{if}~x=v_i~{then}~T_i\!\uparrow{else}~\fails}
        \)\end{minipage}\!\right)
    }} &\
    \begin{minipage}{0.4\textwidth}
      \begin{flushleft}
      if \( T = \left(\bl{\cl{f}{v}}, (v_1, T_1), \dots, (v_i, T_i)\right)\) and
      \(\src{f : \tau \rightarrow \tau'}\) and \( v \in \tau\)
      \end{flushleft}
    \end{minipage}\\
    \fails &\text{otherwise}
  \end{cases}
  \]
\end{definition}

\begin{lemma}
  The back-translation of a set of traces \(F\) generated by a single context is well-typed and linkable.
\end{lemma}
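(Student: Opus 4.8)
The plan is to prove both properties by induction on the derivation of $F \vDash T$, equivalently on the structure of the tree $T$, after strengthening the well-typedness claim to a uniform invariant. Concretely, I would show that for every $T$ with $F \vDash T$ the back-translated expression $T\!\uparrow$ satisfies $\src{P};\srce \vdash T\!\uparrow : \Nats$, fixing $\Nats$ as the result type throughout. This uniform type is exactly what makes the nested conditionals type-check, and it is available because every leaf produced by the back-translation is either $\src{0}$, which has type $\Nats$, or $\fails$, which by the failure typing rule can be assigned any type, in particular $\Nats$.

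The base cases are immediate: when $T$ is $\epsilon$, $\failactc$, or $\divrc$ the back-translation returns $\fails$, typed at $\Nats$; when $T = \termc$ it returns $\src{0}$, typed at $\Nats$ by the rule for naturals. The only interesting case is \emph{Tree-Call}, where $T\!\uparrow$ is $\src{\letin{x:\tau'}{\call{f}~v}{b}}$ with $b$ a cascade of conditionals of the form $\src{\ifte{x = v_1}{T_1\!\uparrow}{\ \cdots\ \ifte{x = v_i}{T_i\!\uparrow}{\fails}}}$. The side condition of the rule supplies $\src{f:\tau\to\tau'} \in \OB{I}$ with $v \in \tau$, so $\src{\call{f}~v}$ types at $\tau'$ via the context-function-call rule and the let binds $\src{x:\tau'}$; each subtree $T_j\!\uparrow$ has type $\Nats$ by the induction hypothesis and the trailing $\fails$ has type $\Nats$, so every conditional, the body $b$, and hence the whole let-expression, types at $\Nats$.

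For linkability I would argue by inspection of the grammar of $T\!\uparrow$: the back-translation never introduces a $\src{\readexp}$ or a $\src{\writeexp{\_}}$, and every $\src{\call{f}}$ it emits arises from a call event $\cl{f}{v}$ occurring in a trace generated by the original, linkable target context, so $f$ lies in the interface $\OB{I}$. Together with the well-typedness just established, this is precisely what the source linkability predicate $\src{P \bowtie C_S}$ requires of $\src{C_S} = T\!\uparrow$.

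The main obstacle is discharging the two typing obligations hidden inside the conditionals of \emph{Tree-Call}: that each guard $\src{x = v_j}$ is well-typed at $\Bools$, and that the two branches of each conditional agree in type. Branch agreement is handled by the $\Nats$ invariant above. The guards, however, require the observed return values $v_j$ to share the return type $\tau'$ of $f$ with the bound variable $x$, and require equality at the base types $\Nats$ and $\Bools$ to be expressible as a $\Bools$-valued operation. I would therefore establish, as a preliminary trace-typing invariant, that in any trace produced by a context linked against programs of interface $\OB{I}$ every return value matching a call to $f$ has type $\tau'$ (ill-typed \emph{arguments} being diverted separately to $\fails$ by the \emph{Tree-Fail-Type} rule), and recall that equality on base types is definable from the available comparison and conditional primitives. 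With these two facts in place the \emph{Tree-Call} step, and hence the whole induction, goes through.
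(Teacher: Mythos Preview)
Your approach is exactly the paper's: the paper's entire proof is the single sentence ``By induction on the relation $F \vDash T$,'' and you have faithfully expanded that induction, correctly choosing $\Nats$ as the uniform result type so that the nested conditionals type-check. The two obligations you flag (typing the guards $\src{x = v_j}$, and ensuring the recorded return values $v_j$ inhabit the declared return type $\tau'$) are real and are glossed over in the paper; your plan to discharge the first by encoding equality from $\geq$ and conditionals, and the second by a trace-typing invariant derived from the fact that compiled well-typed source functions can only return values of their declared type, is the right way to close those gaps.
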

\begin{proof}
  By induction on the relation \(F \vDash T\).
\end{proof}

We define what it means for a trace to be ``part'' of such a tree:
\begin{definition}[Trace extract from a tree]
  We say that a trace \(\mu\) is extracted from a tree \(T\) if:
  \begin{enumerate}
  \item \(\mu = \epsilon \)
  \item \(\mu = \termc \) and \( T = \termc \)
  \item \(\mu = \failactc \) and \( T = \failactc \)
  \item \(\mu = \bl{\cl{f}{v}} :: \epsilon \), \( \typeop(v) \neq \inputtypeop(f) \) and \( T = \failactc \)
  \item \(\mu = \bl{\cl{f}{v}} :: \failactc \), \( \typeop(v) \neq \inputtypeop(f) \) and \( T = \failactc \)
  \item \(\mu = \bl{\cl{f}{v}} :: \epsilon \) or \(\mu = \cl{f}{v}; \divrc\), \(T = \left(\bl{\cl{f}{v}}, \dots\right)\) and
    \( \typeop(v) = \inputtypeop(f) \)
  \item \(\mu = \bl{\cl{f}{v}} :: \bl{\rt{v'}} :: \mu'\),
    \(T = \left(\bl{\cl{f}{v}}, (v_1, T_1), \dots, (v_i, T_i)\right)\),
    and \( \exists j,\) such that \(v_j = v' \) and \(\mu'\) is extracted from \(T_j\)
  \item \(\mu = \bl{\cl{f}{v}} :: \epsilon \) or \(\mu = \cl{f}{v}; \failactc\), \(T = \failactc\) and
    \( \typeop(v) \neq \inputtypeop(f) \)
  \end{enumerate}
\end{definition}

We are going to prove that any such trace extracted from a tree can be produced by the back-translated context, modulo the behaviors allowed at the target level but not at the source level.

\begin{definition}
  \[
  \mu^s = \begin{cases}
    \mu'\failactc\ &\text{if \( \mu = \mu'\bl{\cl{f}{v}} \) such that \(\inputtypeop(f) \neq \typeop(v)\)}\\
    \mu'\failactc\ &\text{if \( \mu = \mu'\bl{\cl{f}{v}}\failactc \) such that \(\inputtypeop(f) \neq \typeop(v)\)}\\
    \mu\ &\text{otherwise}
  \end{cases}
  \]
\end{definition}

\begin{theorem}[Correction of the backtranslation]\label{thm:backtranslation}
  Let \(T\) be a tree and \(\mu\) a trace extracted from \(T\).
  Then, \(T\bttrace \leadsto \mu^s\).
\end{theorem}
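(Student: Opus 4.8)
The plan is to prove the statement by induction on the derivation that \(\mu\) is extracted from \(T\), whose eight clauses were deliberately set up to mirror the case split used to define the back-translation \(T\bttrace\). This makes the induction well-structured: clauses (1)–(6) and (8) are non-recursive and will be discharged by simply unfolding \(T\bttrace\) and running the partial context semantics \(\xtosbctx{\cdot}\), whereas clause (7) is the single genuinely inductive step. Because the definition of \(T\bttrace\) selects exactly one branch for each tree, once the extraction clause fixes the shape of \(T\) the back-translated context is a concrete term whose reductions can be read off directly.

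For the base cases I would unfold \(T\bttrace\) and exhibit the reduction. When \(\mu=\epsilon\) the empty trace is produced by reflexivity; when \(\mu=\termc\) and \(T=\termc\) we have \(T\bttrace=\src{0}\), a value, which terminates emitting \(\termc\) (rule \Cref{tr:ev-s-term}); and when \(\mu=\failactc\) with \(T=\failactc\) we have \(T\bttrace=\fails\), which steps to \(\failactc\) (rule \Cref{tr:ev-s-fail}). The ill-typed-call clauses (4), (5) and (8) are precisely where the superscript \(\mu^s\) earns its keep: there \(\mu\) begins with a call \(\cl{f}{v}\) whose argument violates \(\inputtypeop(f)\), and by definition \(\mu^s\) rewrites this ill-typed call (together with any trailing \(\failactc\)) to a single \(\failactc\). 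Since \(T=\failactc\) in all three clauses, \(T\bttrace=\fails\) produces exactly \(\failactc=\mu^s\). Clause (6), where the trace is cut right after a well-typed call or ends in program divergence, is handled by firing \Cref{tr:ev-s-ctx-call} once, emitting \(\cl{f}{v}\) and leaving the context blocked on the abstracted program; here the call is well-typed so \(\mu^s=\mu\).

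The inductive clause (7) is the crux. Here \(\mu=\cl{f}{v}::\rt{v'}::\mu'\), \(T=(\cl{f}{v},(v_1,T_1),\dots,(v_i,T_i))\) with \(v_j=v'\) for some \(j\), and \(\mu'\) extracted from \(T_j\). I would fire \Cref{tr:ev-s-ctx-call} to emit \(\cl{f}{v}\), then consume the nondeterministically chosen return value \(v'\) via \Cref{tr:ev-s-ctx-ret}, binding \(x=v'\); the nested conditional of \(T\bttrace\) then selects the \(j\)-th branch since \(v_j=v'\), reducing to \(T_j\bttrace\). The induction hypothesis applied to \(\mu'\) and \(T_j\) gives \(T_j\bttrace \leadsto (\mu')^{s}\), and concatenating the two reduction segments with the trace-composition rule of the partial context semantics yields \(T\bttrace \leadsto \cl{f}{v}::\rt{v'}::(\mu')^{s}=\mu^s\), the last equality holding because the head call is well-typed and \((\cdot)^s\) only ever rewrites a trailing ill-typed call.

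The main obstacle I anticipate is bookkeeping rather than conceptual depth: I must verify that \((\cdot)^s\) commutes with prepending a well-typed call/return pair in clause (7) (the subtrace \(\mu'\) never re-exposes an ill-typed call at its head after the return, so no spurious rewriting is triggered), and that clause (6)'s divergence sub-case is admissible, since \(\divrc\) is a program action the context merely tolerates by remaining blocked after its call. The single delicate point is instantiating the nondeterminism of the abstracted program in the partial context semantics so that the emitted \(\rt{v'}\) carries exactly the value \(v'=v_j\) that the back-translation's conditional branches on; matching this value to the branch recorded in \(T\) is what ties the reduction to the correct subtree.
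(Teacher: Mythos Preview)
Your proposal is correct and matches the paper's proof essentially step for step: both proceed by induction on the extraction relation, discharge clauses (1)--(6) and (8) by unfolding \(T\bttrace\) and executing the partial context semantics, and handle clause (7) by firing the call/return rules, reducing the nested conditional to \(T_j\bttrace\), and applying the induction hypothesis. The paper's proof is terser (most cases are just ``OK''), but you have correctly identified the one inductive case and the role of \((\cdot)^s\) in absorbing the ill-typed-call clauses.
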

\begin{proof}
  We are going to prove by induction on the relation ``\(\mu\) is extracted from \(T\)'' that there exists \(\src{e}\) such that \(T\bttrace \Xtolsb{\mu^s} \src{e}\).
  \begin{enumerate}
  \item \(\mu = \epsilon \): OK.
  \item \(\mu = \termc \) and \( T = \termc \): \(T\bttrace = 0\). OK.
  \item \(\mu = \failactc \) and \( T = \failactc \): \(T\bttrace = \src{\fails}\). OK.
  \item \(\mu = \bl{\cl{f}{v}} ; \epsilon \), \( \typeop(v) \neq \inputtypeop(f) \) and \( T = \failactc \) We are in the first case for \(\mu^s\): OK.
  \item \(\mu = \bl{\cl{f}{v}} ; \failactc \), \( \typeop(v) \neq \inputtypeop(f) \) and \( T = \failactc \) We are in the second case for \(\mu^s\): OK.
  \item \(\mu = \bl{\cl{f}{v}} ; \epsilon \), \(T = \left(\bl{\cl{f}{v}}, \dots\right)\) and
    \( \typeop(v) = \inputtypeop(f) \):
    \(T\bttrace = \src{\letin{x}{\call{f}~v}{\dots}} \). OK.
    Idem with \(\divrc\) instead of \(\epsilon\).
  \item \(t = \bl{\cl{f}{v}} ; \bl{\rt{v'}} ; \mu'\),
    \(T = \left(\bl{\cl{f}{v}}, (v_1, T_1), \dots, (v_i, T_i)\right)\),
    and \( \exists j,\) such that \(v_j = v' \) and \(\mu'\) is extracted from \(T_j\):
    Then: \[
    T\bttrace = \src{\letin{x}{\call{f}~v}{\ifte{\dots}{\ifte{x = v_j}{T_j\bttrace}{\dots}}{\dots}}}.
    \]

    By application of the partial semantics: \[
    T\bttrace \Xtosbctx{\bl{\cl{f}{v}}; \bl{\rt{v_j}}} \src{\ifte{x=v_j}{T_j\bttrace}{\dots}\subs{v_j}{x}}
    \]
    and therefore by substituting and application of the partial semantics: \[
    T\bttrace \Xtosbctx{\bl{\cl{f}{v}};{\bl{\rt{v_j}}}} T_j\bttrace.
    \]
    By induction hypothesis, we are done.
  \item \(\mu = \bl{\cl{f}{v}} :: \epsilon \) or \(\mu = \cl{f}{v}; \failactc\), \(T = \failactc\) and
    \( \typeop(v) \neq \inputtypeop(f) \). The result is immediate
  \end{enumerate}
\end{proof}

Now, we can prove that any of the initial traces that are used to construct the tree can be found in this tree,
and then the theorem applies to them.
\begin{lemma}
  Let \(F\) be a set of traces and \(T\) such that \(F\vDash T\).
  Then, any trace \(\mu \in F\) is extracted from the tree \(T\).
\end{lemma}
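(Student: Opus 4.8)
The plan is to prove the lemma by structural induction on the derivation of $F \vDash T$, taking an arbitrary $\mu \in F$ and matching it against one of the eight clauses defining ``$\mu$ is extracted from $T$''. The derivation of $F \vDash T$ is genuinely inductive only through the rule (Tree-Call-Ret), whose premises have the shape $\{\mu' \mid \cl{f}{v};\rt{v_k};\mu' \in F\} \vDash T_k$, so this induction hands us precisely the induction hypothesis we need on each subtree $T_k$. First I would dispose of the five non-recursive rules, which are almost immediate: in each of them every $\mu \in F$ is either $\epsilon$---extracted from any tree by clause~(1)---or has the single shape fixed by the rule. For (Tree-Term) and (Tree-Fail) this shape is $\termc$ (resp. $\failactc$) with $T = \termc$ (resp. $T = \failactc$), matching clauses~(2) and~(3). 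For (Tree-Fail-Type) the nonempty traces are $\cl{f}{v};\mu'$ with $v$ ill-typed for $f$ and $T = \failactc$; since a mistyped call fails immediately, the suffix $\mu'$ is at most a failure, so $\mu' \in \{\epsilon,\failactc\}$ and clauses~(4)/(5)/(8) apply.

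The rule (Tree-Divr) is the delicate corner. Its conclusion is $T = \divrc$, but the extraction definition has no clause matching a nonempty $\mu = \divrc$ against $T = \divrc$. I would dispatch it using the observation already recorded in the surrounding text---that a context can never silently diverge---so that when $F$ is the trace set of an actual context this rule is unreachable and the case is vacuous. If one wants the lemma to be self-contained, the cleanest fix is to carry the hypothesis that $F$ is generated by a context, under which $\divrc \notin F$; this is worth flagging explicitly since it is the only point of the argument that appeals to an external property of contexts rather than to the inductive structure.

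The real work is the inductive case (Tree-Call-Ret), where $T = (\cl{f}{v}, (v_1,T_1), \ldots, (v_i,T_i))$. As the discussion following the rules notes, the call $\cl{f}{v}$ is necessarily well-typed here, i.e. $\inputtypeop(f) = \typeop(v)$, because otherwise (Tree-Fail-Type) would have produced the derivation instead. The third premise of the rule---the coverage inclusion $\bigcup_{j} \{\cl{f}{v};\rt{v_j};\mu' \in F\} \cup \{\cl{f}{v};\divrc\} \cup \{\cl{f}{v}\} \cup \{\epsilon\} \supseteq F$---is exactly a case analysis on the shape of $\mu$. If $\mu = \epsilon$, use clause~(1); if $\mu = \cl{f}{v}$ or $\mu = \cl{f}{v};\divrc$, then well-typedness makes clause~(6) fire; and if $\mu = \cl{f}{v};\rt{v'};\mu'$, coverage forces $v' = v_j$ for some $j$, whence $\mu' \in \{\mu'' \mid \cl{f}{v};\rt{v_j};\mu'' \in F\}$, which by the second premise satisfies $\vDash T_j$. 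The induction hypothesis on that subderivation gives that $\mu'$ is extracted from $T_j$, and clause~(7) then concludes that $\mu$ is extracted from $T$.

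The main obstacle I anticipate is bookkeeping rather than deep mathematics: making the implicit well-typedness of $\cl{f}{v}$ in (Tree-Call-Ret) explicit so that clauses~(6) and~(7) actually apply, threading the coverage premise so that every $\mu \in F$ lands in exactly one bucket, and justifying that the suffix of a mistyped call is confined to $\{\epsilon,\failactc\}$. The only genuinely non-routine point is the (Tree-Divr) corner, which the inductive skeleton cannot resolve on its own and must instead be discharged by the non-divergence of contexts.
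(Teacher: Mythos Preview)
Your proposal is correct and takes essentially the same inductive approach as the paper, which proceeds by induction on $T$ (equivalently, on the derivation of $F \vDash T$, since the rules are almost syntax-directed by the shape of $T$). You are in fact more thorough than the paper's terse proof, which omits the $T = \failactc$ case entirely and dismisses $T = \divrc$ with a bare ``OK''; the external semantic appeals you flag---non-divergence of contexts, and the suffix after a mistyped call being confined to $\{\epsilon,\failactc\}$---are precisely the implicit assumptions the paper relies on without stating them.
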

\begin{proof}
  Let us prove by induction on \(T\) that if there exists \(F\) such that \(T = T(F)\), then \( \forall \mu \in F\), \(\mu\) is extracted
  from \(T\). Since the trace \(\epsilon\) is always extracted from any tree, we ignore this case.
  \begin{description}
  \item[\( T = \epsilon\):] OK.
  \item[\( T = \termc\):] Then \(\mu = \termc\). OK.
  \item[\( T = \divrc\):] Then \(\mu = \divrc\). OK.
  \item[\( T = (\bl{\cl{f}{v}}, (v_1, T_1), \dots, (v_i, T_i)) \):] By induction hypothesis.
  \end{description}
\end{proof}

\subsubsection{Composition}

The composition theorem states that if a context and a program can partially produce two related informative
traces, then plugging the program into the context gives a whole program that can produce
one of the traces.
The relation between the two traces captures the fact that the way things fail in the source is not the same as in the target,
as seen in the back-translation section.
The theorem is stated as follows:
\begin{restatable}[Composition]{theorem}{composition}\label{thm:comp}
  Let \(\src{C_S}\) be a source context, \(\src{P_S}\) be a source program, \(\mu_i \sim \mu_i^s\) two related traces, and suppose \(\src{P_S \bowtie C_S}\).
  Then, if \(\src{C_S\hookrightarrow_{\text{ctx}}}\ \mu_i^s\) and \(\src{P_S\hookrightarrow_{\text{prg}}}\ \mu_i\),
  then \(\src{C_S\left[P_S\right] \hookrightarrow}\ \mu_i^s\).
\end{restatable}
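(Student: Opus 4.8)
The composition theorem is the exact converse of the decomposition theorem (\Cref{thm:dec}): decomposition splits a whole-program execution into a context and a program contribution, whereas here we must weave a context contribution \(\src{C_S \hookrightarrow_{\text{ctx}}}\ \mu_i^s\) together with a program contribution \(\src{P_S \hookrightarrow_{\text{prg}}}\ \mu_i\) back into a single execution of \(\src{\plug{C_S}{P_S}}\). The plan is first to unfold all three relations to the underlying informative-trace semantics, so that the goal reduces to exhibiting an \(\src{e'}\) with \(\src{P_S \triangleright C_S \Xtolsb{\mu_i^s} P_S \triangleright e'}\), and then to reconstruct this execution by running the context silently until it is about to issue a call, splicing in the program's execution of the corresponding function body, and iterating. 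Because all control flow is driven by the context --- calls always go from context to program --- the natural induction is on the derivation of the context's partial execution \(\src{C_S \Xtolsbctx{\mu_i^s} e}\).

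First I would dispatch the structural base cases. Reflexivity and silent context steps lift to the whole program through the evaluation-context congruence (\Cref{tr:ev-s-cth}), and the terminal context actions --- \(\termc\), \(\failactc\), and \(\divrc\) --- transfer unchanged, since in each of these the context alone produces the action and the program is never consulted. The cons case (\Cref{tr:tr-s-cons}) splits \(\mu_i^s\) into two segments; here I would apply the induction hypothesis to each segment and recompose the two resulting whole-program executions by transitivity at the whole-program level, processing the trace at clean call/return boundaries so that the corresponding split of \(\mu_i\) keeps each half satisfying the \(\src{\hookrightarrow_{\text{prg}}}\) hypothesis.

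The crux is the call case. When the partial context semantics fires \(\src{\call{f}~v \xtosbctx{\cl{f}{v}} \ret{e}}\) (\Cref{tr:ev-s-ctx-call}), abstracting the program away as a nondeterministic return, the whole program sits at \(\src{\evalctx\hole{\call{f}~v}}\) with an empty call stack, and I would instead run the \emph{actual} body of \(\src{f}\) in \(\src{P_S}\). Concretely, the subtrace of \(\mu_i\) delimited by this call and its matching return has the form \(\cl{f}{v};\mu';\rt{v'}\) and, by the definition of \(\src{\hookrightarrow_{\text{prg}}}\), is producible by \(\src{P_S \triangleright \call{f}~v \Xtolsb{\cl{f}{v};\mu';\rt{v'}} P_S \triangleright v'}\). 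Decomposing the running context expression into the evaluation context \(\src{\evalctx}\) and the call redex and replaying this body execution there --- while respecting the stack discipline recorded by the \(\src{\triangleright_{\OB{f}}}\) subscript, so that the call pushes \(\src{f}\) and the return pops it --- embeds the program's I/O and internal activity into the whole-program trace and leaves the context resuming from \(\src{\evalctx\hole{v'}}\), exactly where the partial rule \(\src{\ret{v'} \xtosbctx{\epsilon} v'}\) (\Cref{tr:ev-s-ctx-ret}) continues. Iterating over all calls produces the full interleaving.

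The final subtlety, and the reason the \(\sim\) relation is needed at all, is the segment where \(\mu_i\) and \(\mu_i^s\) disagree: a badly-typed last call, where \(\mu_i = \mu'\,\cl{f}{v}\) (optionally followed by \(\failactc\)) with \(\typeop(\src{v}) \neq \inputtypeop(\src{f})\), while \(\mu_i^s = \mu'\,\failactc\). In the back-translated source context this call is guarded by \(\src{\extract{\tau}}\), which fails on \(\src{v}\) before the call is ever issued, so the whole program fails \emph{inside the context}, reproducing \(\mu_i^s\) without ever invoking the program body; only the common prefix \(\mu'\), which both partial semantics produce, must be threaded. I expect the main obstacle to be not any single case but the interleaving bookkeeping carried through the induction: maintaining the invariant that links the context's current evaluation context to the whole-program state, that the per-call subtraces of \(\mu_i\) line up with the calls the context makes, and that the stack discipline is preserved --- together with the careful alignment of the two views of failure at the \(\mu_i \sim \mu_i^s\) boundary.
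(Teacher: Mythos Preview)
Your plan is broadly workable, but it diverges from the paper in a way that makes the argument harder than necessary, and one step leans on information the theorem statement does not give you.

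The paper factors out the \(\mu_i/\mu_i^s\) mismatch up front via a one-line preliminary lemma: from \(\src{P_S \hookrightarrow_{\text{prg}}}\ \mu_i\) one immediately gets \(\src{P_S \hookrightarrow_{\text{prg}}}\ \mu_i^s\), because the only difference between the two traces is a trailing ill-typed call (optionally followed by \(\failactc\)), and the program partial semantics only constrains subtraces beginning at \emph{well-typed} calls. Once both hypotheses speak about the single trace \(\mu_i^s\), the composition becomes a single-trace weaving, which the paper proves by induction on the \emph{length} of the trace (not on the derivation of the context semantics): in the inductive step one writes \(\mu = \mu_1\,\mu_2\,\mu_3\) with \(\mu_2 = \cl{f}{v};\mu_2';\rt{v'}\), applies the induction hypothesis to \(\mu_1\) and \(\mu_3\), and uses the program hypothesis directly on \(\mu_2\). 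Your plan instead carries both \(\mu_i\) and \(\mu_i^s\) through the induction and reasons about their alignment at each step; this is exactly the ``interleaving bookkeeping'' you flag as the main obstacle, and the paper simply sidesteps it.

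Your treatment of the ill-typed tail has a genuine gap: you argue that ``in the back-translated source context this call is guarded by \(\src{\extract{\tau}}\), which fails on \(\src{v}\) before the call is ever issued''. But the composition theorem is stated for an \emph{arbitrary} \(\src{C_S}\), not one produced by the back-translation, so you cannot appeal to the shape of \(\backtrdt{\cdot}\). The hypothesis you actually have is \(\src{C_S \hookrightarrow_{\text{ctx}}}\ \mu_i^s\), and \(\mu_i^s\) already records \(\failactc\) at that point; that is what lets the whole program fail in the context, independently of how \(\src{C_S}\) was built. With the paper's preliminary lemma this case evaporates entirely, since after moving to \(\mu_i^s\) there is no ill-typed call left in either hypothesis.
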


We state a preliminary lemma:
\begin{lemma}
  If \(\src{P \hookrightarrow_{\text{prg}}}\ \mu_i\), then \(\src{P \hookrightarrow_{\text{prg}}}\ \mu_i^s\).
\end{lemma}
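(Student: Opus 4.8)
The plan is to proceed by case analysis on the definition of $\mu^s$. In the ``otherwise'' branch we have $\mu^s = \mu$, so the conclusion is the hypothesis verbatim and there is nothing to do. The two remaining branches are symmetric: in both, $\mu$ ends with an \emph{ill-typed} external call, namely $\mu = \mu'\,\bl{\cl{f}{v}}$ or $\mu = \mu'\,\bl{\cl{f}{v}}\,\failactc$ with $\inputtypeop(f) \neq \typeop(v)$, and in both $\mu^s = \mu'\,\failactc$. Hence the only discrepancy between $\mu$ and $\mu^s$ is confined to the trailing events, and the task reduces to showing that this local rewriting preserves every obligation that $\src{P \hookrightarrow_{\text{prg}}}\ \cdot$ imposes.

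First I would record two structural facts. (1)~An external call event $\bl{\cl{f}{v}}$ is emitted only from a program state whose call stack is empty (the source external-call rule fires from a state $\triangleright_{\epsilon}$), so whenever the context is poised to make such a call the interactions recorded so far are call/return balanced; in particular, in both nontrivial branches the shared prefix $\mu'$ has all of its calls closed and does not itself end in $\failactc$. (2)~By definition, the source program partial semantics places obligations only on subtraces that begin with a call $\bl{\cl{g}{w}}$ whose argument $w$ is \emph{well-typed} for $g$; it is silent about ill-typed calls, and about failure events that do not sit inside an open call.

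Given these facts the implication is almost immediate. The obligations demanded by $\src{P \hookrightarrow_{\text{prg}}}\ \mu^s$ concern exactly the call-subtraces that lie inside $\mu'$: by fact~(1) the trailing $\failactc$ of $\mu^s = \mu'\,\failactc$ lands at a point where $\mu'$ is balanced, so it is a context-level failure that opens no new call-subtrace and extends no existing one, and by fact~(2) it is itself unconstrained. Each such call-subtrace of $\mu'$ is also a call-subtrace of $\mu$ (as $\mu'$ is a prefix of $\mu$), with the identical associated reduction obligation, all of which are furnished by the hypothesis $\src{P \hookrightarrow_{\text{prg}}}\ \mu$; the extra material of $\mu$, i.e.\ the trailing ill-typed call together with its possible failure, was unconstrained by fact~(2). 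Thus every obligation for $\mu^s$ is discharged.

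The main obstacle is precisely the bookkeeping in the last paragraph: one must argue rigorously that appending a bare $\failactc$ after a balanced $\mu'$ neither creates a new call-subtrace nor converts a previously complete call-subtrace into an incomplete one terminating in failure. Fact~(1)---external calls fire only at an empty call stack, so $\mu'$ is balanced and the appended failure sits at context level---carries this argument, in combination with the deliberate design of the source partial semantics (fact~(2)) to leave ill-typed calls and context-level failures unconstrained. Once this is established the two nontrivial branches close uniformly.
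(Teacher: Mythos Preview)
Your proof is correct and is essentially a careful expansion of the paper's argument. The paper's proof is a single line, ``This is by definition of $\mu_i^s$'', which simply asserts that the result is immediate from the definitions; you have unpacked exactly why that is so. Your two structural facts---that external calls are emitted only from an empty call stack (so the prefix $\mu'$ is balanced) and that the source program partial semantics imposes obligations only on \emph{well-typed} call-subtraces---are precisely what make the one-line proof go through, and the paper leaves them implicit. There is no genuine difference in approach, only in level of detail.
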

\begin{proof}
  This is by definition of \(\mu_i^s\).
\end{proof}

\begin{lemma}
  Let \(\src{C_S}\) be a source context, \(\src{P_S}\) be a source program, \(\mu_i \sim \mu_i^s\) two related traces such that \(\mu_i\) was produced by \(\comptd{P_S}\) and some target context,
  and \(\src{e}\) an expression.
  Then, if \(\src{C_S \Xtolsb{\mu_i^s} e}\) and \(\src{P_s \hookrightarrow_{\text{prg}}}\ \mu_i^s\),
  then \(\src{P_S\triangleright C_S \Xtolsb{\mu_i^s} e'}\) where \(\src{C_S \Xtolsb{\mu_i^s} e'}\).
\end{lemma}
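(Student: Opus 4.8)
The plan is to prove this gluing lemma by induction on the derivation of the context's partial execution, reading the hypothesis \(\src{C_S \Xtolsb{\mu_i^s} e}\) as a run of \(\src{C_S}\) in the partial context semantics \(\Xtolsbctx{\cdot}\) (the way a context runs independently of the program it is linked with, i.e.\ the unfolding of \(\src{C_S \hookrightarrow_{\text{ctx}}}\ \mu_i^s\)). Each of the context's moves is simulated by a corresponding move of the whole program \(\src{P_S\triangleright C_S}\) in the full trace semantics, and the concrete executions of the called functions are supplied by the program's partial behavior \(\src{P_S \hookrightarrow_{\text{prg}}}\ \mu_i^s\), which we get from \(\src{P_S \hookrightarrow_{\text{prg}}}\ \mu_i\) by the preliminary lemma. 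Throughout I rely on the no-callback discipline of these languages: control passes from context to program only on a call and back only on the matching return, so the calls recorded in \(\mu_i^s\) are properly nested and the context stays idle while the program runs.

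First I would dispatch the silent cases. When the context takes an internal reduction (including the partial-semantics step collapsing a completed \(\src{\ret v}\) to \(\src{v}\)), the same reduction is a valid step of \(\src{P_S\triangleright C_S}\) by the congruence rule lifting a context-only step into the whole program; the emitted label is unchanged and the induction hypothesis on the tail supplies the rest of \(\mu_i^s\). These steps thread through via the transitivity rule for trace composition.

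The interesting case is a call event. Here the context reaches a state \(\src{\evalctx\hole{\call{f}~v}}\) and the partial context semantics emits \(\src{\cl{f}{v}}\) while jumping abstractly to a guessed return state. In the full semantics the same call actually enters \(\src{P_S}\): the call rule steps \(\src{P_S\triangleright C_S}\) with label \(\src{\cl{f}{v}}\) into the body of \(\src{f}\). I would then consult \(\src{P_S \hookrightarrow_{\text{prg}}}\ \mu_i^s\), which guarantees that for the maximal call-to-return segment \(\src{\cl{f}{v}};\mu';\src{\rt{v'}}\) of \(\mu_i^s\) beginning at this call, the program runs the body producing exactly \(\mu'\) and returning \(\src{v'}\) by the program-return rule. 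Splicing this execution in reproduces the subtrace \(\src{\cl{f}{v}};\mu';\src{\rt{v'}}\), and since \(\mu_i^s\) records this very \(\src{v'}\), the value with which the context resumes matches its guess; concatenating with the tail from the induction hypothesis yields \(\mu_i^s\), with final configuration the one reached by the context's own partial run, i.e.\ the required \(\src{e'}\).

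The hard part will be getting the call/return bracketing exactly right: I must align each call emitted by the context with the unique maximal \(\src{\cl{f}{v}};\mu';\src{\rt{v'}}\) segment of \(\mu_i^s\) that \(\hookrightarrow_{\text{prg}}\) speaks about, and argue that the internal program events of \(\mu'\) interleave without interference (again using that the context is idle during a call). The final, possibly unclosed, call requires the separate clause of \(\hookrightarrow_{\text{prg}}\) that only asserts the program reaches some expression rather than a value; here I must also invoke \(\mu_i \sim \mu_i^s\), whose sole effect is to relocate a failure on an ill-typed call from the program to the context, so that the preliminary lemma's guarantee \(\src{P_S \hookrightarrow_{\text{prg}}}\ \mu_i^s\) stays consistent with the spliced program executions.
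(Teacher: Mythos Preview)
Your proposal is correct and hits the same key idea as the paper: lift the context's own steps directly, and at each call splice in the program's execution supplied by $\src{P_S \hookrightarrow_{\text{prg}}}\ \mu_i^s$ for the matching call-to-return segment, using the no-callback structure to keep the bracketing straight.

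The one real difference is the induction principle. The paper does not induct on the derivation of the context's partial run; it inducts on the \emph{length} $n$ of $\mu$, proving the generalized statement $\forall n.\,\forall \mu.\,|\mu|=n \Rightarrow \ldots$. In the inductive step it decomposes $\mu = \mu_1\mu_2\mu_3$ with $\mu_2 = \cl{f}{v};\mu_2';\rt{v'}$ a complete call segment, applies the IH to the strictly shorter $\mu_1$, uses determinism of the context (read/writes being fixed by the trace) to conclude the intermediate state is $\evalctxs{\call{f}~v}$, invokes the definition of $\hookrightarrow_{\text{prg}}$ to get $\src{P_S\triangleright \evalctxs{\call{f}~v}} \Xtolsb{\mu_2} \evalctxs{v'}$ under the evaluation context, and applies the IH again to $\mu_3$. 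Your derivation-based induction runs into the issue that the cons rule can split $\mu$ at points that do not respect call/return boundaries; you patch this by ``jumping ahead'' across the whole call segment when you hit a call step, which is fine but is really length induction in disguise. The paper's choice aligns more directly with the segment-indexed definition of $\hookrightarrow_{\text{prg}}$, avoiding the need to reason about mid-segment intermediate states of the partial context run.
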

\begin{proof}
  We will prove by induction \(n\) that
  \(\forall n, \forall \mu, |\mu| = n, \forall \src{e}, \forall \src{P_S},
  \src{e \hookrightarrow_{\text{ctx}}}\ \mu \wedge \src{P_S \hookrightarrow_{\text{prg}}}\ \mu
  \implies \exists \src{e'}, \src{e \Xtolsb{\mu} e'} \wedge \src{P_S \triangleright e \Xtolsb{\mu} e'}
  \)
  \begin{description}
  \item[Base case] If \(n = 0\), this is trivially true.
  \item[Inductive case] Let \(n\in\mathbb{N}\), \(\mu\) of length \(n\), \(\src{e}\) and \(\src{P_S}\) such that
    \(\src{e \hookrightarrow_{\text{ctx}}}\ \mu\) and \(\src{P_S \hookrightarrow_{\text{prg}}}\ \mu\).

    We consider only one case, but the other cases are similar:
    \[ \mu = \mu_1 \mu_2 \mu_3 \] where \(\mu_2 = \cl{f}{v} \mu_2' \rt{v'}\) is defined as in the
    definition of \(\src{\hookrightarrow_{\text{prg}}}\).
    \begin{itemize}
    \item First, \(\src{e \hookrightarrow_{\text{ctx}}}\ \mu_1\) and
      \(\src{e \hookrightarrow_{\text{prg}}}\ \mu_1\), by definition of these relations. Therefore, by
      induction hypothesis, \(\exists \src{e'}, \src{e \Xtolsb{\mu_1} e'}\) and
      \(\src{P_S\triangleright e \Xtolsb{\mu_1} e'}\).
      In particular, \(\src{e'}\) is of the form \(\evalctxs{\call{f}~v}\) by determinism of the
      execution of the context (since the read/writes are set by the trace),
      such that \(\src{e' \hookrightarrow_{\text{ctx}} \mu_2\mu_3}\).
    \item We have that \(\src{P_S \triangleright e' \Xtolsb{\mu_2} \evalctxs{v'}}\) by definition
      of the partial semantics for programs, and the rules of evaluations inside contexts.
    \item We can again apply the induction hypothesis to \(\mu_3\).
    \end{itemize}
    Hence, we obtain the result: \(\src{P_S \triangleright e \Xtolsb{\mu_1\mu_2\mu_3} e''}\) where
    \(\src{e \Xtolsb{\mu_1\mu_2\mu_3} e''}\).
  \end{description}

\end{proof}

By using these two lemmas, we can prove the composition theorem.

\subsubsection{Back to Non-Informative Traces}
The last step of the proof is to go back to the non-informative trace model. In particular,
we must take into account that the trace \(\mu_i^s\) that is generated by the whole program
is not exactly equal to the original trace \(\mu_i\).

\begin{restatable}[Back to non-informative traces]{theorem}{noninf}\label{thm:noninf}
  Let \(\src{C_S}\) be a source context, \(\src{P_S}\) be a source program, \(m\) a non-informative trace and
  \(\mu\) an informative trace such that \(\mu \sqsupseteq m\).

  Then, \( \src{C_S\left[P_S\right] \hookrightarrow}\ \mu^s \implies \src{C_S\left[P_S\right] \leadsto}\ m\).
\end{restatable}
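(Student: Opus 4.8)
The plan is to show that the informative trace semantics $\Xtolsb{\cdot}$ and the non-informative behavior semantics $\Xtosb{\cdot}$ describe exactly the same executions, differing only in that the former records the call and return interaction events that the latter silently discards. Unfolding the definition of $\src{C_S\hole{P_S} \hookrightarrow}\ \mu^s$, the hypothesis provides a source expression $\src{e}$ with $\src{P_S \triangleright C_S \Xtolsb{\mu^s} P_S \triangleright e}$. I would therefore first prove an erasure lemma stating that
\[
\src{\Omega \Xtolsb{\sigma} \Omega'} \implies \src{\Omega \Xtosb{|\sigma|_{\text{I/O/termination}}} \Omega'},
\]
where $|\sigma|_{\text{I/O/termination}}$ is the restriction of $\sigma$ to the I/O, termination, divergence and failure actions (equivalently, $\sigma$ with all call and return events deleted), exactly as used in the definition of $\sqsupseteq$.

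This lemma follows by a direct induction on the derivation of $\src{\Omega \Xtolsb{\sigma} \Omega'}$, pairing each informative-trace rule with its behavior counterpart: the rule lifting a silent step and the rule lifting a visible action $\alpha$ are mirrored verbatim and leave the projection unchanged; the rule recording an interaction event is mirrored by the behavior rule that takes the same underlying small step but emits nothing, so that event is dropped from the projection; and the concatenation rule is discharged by the induction hypothesis using $|\sigma\sigma'|_{\text{I/O/termination}} = |\sigma|_{\text{I/O/termination}}|\sigma'|_{\text{I/O/termination}}$. Applying this lemma to the hypothesis yields $\src{P_S \triangleright C_S \Xtosb{|\mu^s|_{\text{I/O/termination}}} P_S \triangleright e}$, that is, $|\mu^s|_{\text{I/O/termination}} \in \behavs{\src{C_S\hole{P_S}}}$.

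It remains to relate $|\mu^s|_{\text{I/O/termination}}$ to $m$. Since $\mu \sqsupseteq m$ unfolds to $|\mu|_{\text{I/O/termination}} = m$, I would conclude by case analysis on the definition of $(\cdot)^s$. If $\mu^s = \mu$, then $|\mu^s|_{\text{I/O/termination}} = m$ directly. If $\mu$ ends in an ill-typed call $\bl{\cl{f}{v}}$ immediately followed by $\failactc$ and $\mu^s$ collapses these into a single $\failactc$, then $|\mu^s|_{\text{I/O/termination}} = |\mu|_{\text{I/O/termination}} = m$, since the call event is invisible to the projection and the failure is kept by both. The only case where the two projections differ is when $\mu$ is a prefix ending in the bare ill-typed call $\bl{\cl{f}{v}}$ and $\mu^s$ appends a $\failactc$: here $|\mu^s|_{\text{I/O/termination}} = m \cdot \failactc$, which strictly extends $m$, an unfinished prefix stopping just before the discarded call. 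In all cases $m \leq |\mu^s|_{\text{I/O/termination}}$. Because $\src{C_S\hole{P_S}} \leadsto m$ unfolds to the existence of a produced trace extending $m$, and $\src{C_S\hole{P_S}} \leadsto |\mu^s|_{\text{I/O/termination}}$ with $m \leq |\mu^s|_{\text{I/O/termination}}$, transitivity of the prefix order gives $\src{C_S\hole{P_S}} \leadsto m$, as required.

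The erasure lemma is routine, amounting to a one-to-one correspondence between the inference rules of the two semantics. The genuine subtlety, and the step I expect to be the main obstacle, is the final case analysis on $(\cdot)^s$: the back-translation manufactures a failure inside the context in place of an ill-typed call, whereas the compiled target program would have failed its inserted dynamic type check slightly later, so $\mu$ and $\mu^s$ really do disagree on where $\failactc$ sits. One must check that this mismatch can only lengthen the I/O projection relative to $m$, and hence stays compatible with $\leadsto m$ under the relaxed prefix order, rather than producing a trace incomparable with $m$.
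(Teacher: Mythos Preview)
Your argument is correct and matches the paper's approach, which simply states the result is ``immediate by definition of $\mu^s$'' without further detail. Your erasure lemma is exactly the routine correspondence between the informative and non-informative semantics that the paper leaves implicit, and your case analysis on the three clauses of $(\cdot)^s$ is precisely what unpacking ``by definition of $\mu^s$'' amounts to; in particular you correctly isolate the one non-trivial case where $\mu$ ends in a bare ill-typed call and $\mu^s$ appends $\failactc$, and you handle it via the prefix order exactly as needed.
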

The proof is immediate by definition of \(\mu^s\).

\subsubsection{Proving the Secure Compilation Criterion}


The proof follows the scheme depicted by \Cref{hlo:proof-technique}.
\begin{proof}
Let \(\src{P}_1\dots \src{P}_k\) be \(k\) programs and \(m_1\dots m_k\) be
\(k\) finite trace prefixes. Let \(\trg{C_T}\) be a target context and suppose the following holds:
\[
  \forall i, \trg{\plug{C_T}{\compgen{\src{P_{\com{i}}}}} \leadsto}\ m_i
\]

We can pass to informative traces by applying Theorem~\ref{thm:isem} to each \(m_i\)
\[
\forall i, \exists \mu_i \sqsupseteq m, \trg{\plug{C_T}{\compgen{\src{P_{\com{i}}}}} \hookrightarrow} \mu_i.
\]

From here, we can apply the decomposition theorem (Theorem~\ref{thm:dec}) to each \(\mu_i\):
\[
\forall i, \trg{C_T \hookrightarrow_{\text{ctx}}}\ \mu_i \wedge
\trg{\compgen{\src{P_{\com{i}}}} \hookrightarrow_{\text{prg}}}\ \mu_i.
\]

By the backward compiler correctness theorem (Theorem~\ref{thm:bcc}) for programs applied to each program, we obtain that:
\[
\forall i, \src{P_{\com{i}} \hookrightarrow_{\text{prg}}}\ \mu_i.
\]

Also, by applying the back-translation theorem, we can produce a source context:
\[
\exists \src{C_S}, \forall i, \src{C_S \hookrightarrow_{\text{ctx}}}\ \mu_i^s.
\]
Moreover, this \src{C_S} is well-typed and linkable with the \(\src{P_i}\).

Now, we are able to apply the composition theorem (Theorem~\ref{thm:comp}) to each program:
\[
\forall i, \src{\plug{C_S}{P_{\com{i}}} \hookrightarrow}\ \mu_i^s
\]

Finally, we can go back to the non-informative traces by the last theorem (Theorem~\ref{thm:noninf}):
\[
\forall i, \src{\plug{C_S}{P_{\com{i}}} \leadsto}\ m_i.
\]
\end{proof}

\paragraph{Remarks on the proof technique}
This proof technique should be fairly generic and could be adapted to other languages. if needed, it is possible to
change the top-level statement by introducing
a more complex relation between source and target, that could for instance model the exchange between failure and calls
that might happen in our instance, or to model non-determinism in a non-deterministic language.
While decomposition and composition are natural properties that we expect
to hold for most languages, and while backward correctness can reasonably be expected from a secure compiler, the
back-translation seems to be the hardest part of the proof and the most subject to change between languages.

\ifanon
\newcommand{\rtcone}{\ensuremath{\pf{\rtp}^{\sim}}\xspace}
\newcommand{\sigmartp}{\ensuremath{\rtp^{\sigma}}\xspace}
\newcommand{\taurtp}{\ensuremath{\rtp^{\tau}}\xspace}

\newcommand{\tildesigmartp}{\ensuremath{\rtp^{\tilde{\sigma}}}\xspace} 
\newcommand{\tildetaurtp}{\ensuremath{\rtp^{\tilde{\tau}}}\xspace} 
\newcommand{\weaktilde}{\ensuremath{\rtp^{\tilde{\sigma}}_{\ii{weak}}}\xspace}

\newcommand{\simst}{\ensuremath{\sim_{\tau}^{\sigma}}\xspace}
\newcommand{\rtctwo}{\ensuremath{\pf{\rtp}^{\simst} }\xspace}

\newcommand{\eundef}{\ensuremath{\mathcal{U}\ii{ndef}}\xspace}

\section{Differing Source and Target Trace Models}
\label{sec:different-traces}


So far we have assumed that the source and target languages share the
sets of traces and properties.
This is a reasonable assumption if observable events are coarse enough
to be common to the source and target language (\EG as it happens
CompCert~\cite{Leroy09}).
There are various settings in which this assumption does not hold though:
\begin{itemize}
\item when values appear in traces and the source and target don't
  share the same notion of values~\cite{PatrignaniG17};
\item when the source language has ``undefined behavior'', which is a
  special event only occurring in source traces, and which leads to
  an arbitrary trace continuation in the target~\cite{Leroy09};
\item when the target language has more events because it is the
  target of multiple source languages;
\item when traces model observations that could become more precise at
  lower levels, \EG because of side-channels like
  timing~\cite{CostanzoSG16};\ch{But with traces formed of such
    leakages even compiler correctness is hopeless~\cite{BartheGL18},
    even more our extensions of it to adversarial contexts!  So maybe
    this is not such a great example after all. The only kind of work
    that could fit here is the one by Zhong Shao~\cite{CostanzoSG16},
    where the source and target observations are different, but still
    equivalent. So started watering this down by adding a ``could'' above.}
\item in general, when there is a big difference between the
  abstractions of the source and target languages.
\end{itemize}


%
\ch{I find this bijection story a confusing digression,
  but would still like to understand why would it be relevant here}%
Using standard set-theoretical arguments we know that, if the two sets of events are both finite or in a bijection,
then so are the sets of traces and properties. In general such a bijection between traces is not constructive,
so that it is interesting to study variants of our criteria handling the discrepancies between trace events.  
In this section we focus on \rtp and \pf{\rtp}, leaving a generalization to the remaining criteria as future work. 


In this section, typographic conventions for source and target elements extend
to traces and properties, previously common to both levels.
From a security point of view, \pf{\rtp} captures the fact that if
some $\trg{C_T}$ can mount an attack $\trg{t_T}$ in the target
language, then the same attack is possible in the source.
A natural generalization of \pf{\rtp} requires that the target attack
$\trg{t_T}$ can be simulated in the source by producing a source trace
$\src{t_S}$ that is {\em related} to the target trace $\trg{t_T}$.


\begin{definition}[\rtcone] \label{defn:rtcone}
Given a relation between source and target traces, $\sim \,\subseteq \src{\ii{Trace}_S} \times \trg{\ii{Trace}_T}$, we define   
\begin{align*}
  \rtcone: \quad\ \forall\src{P}.~ \forall\trg{C_T}.~ \forall \trg{t_T}.~
        \mathrel{\trg{C_T\hole{\cmp{P}} \sem}} \trg{t_T} \Rightarrow
        \mathrel{\exists\src{C_S}\ldotp \exists\src{t_S} \sim \trg{t_T} \ldotp\src{C_S\hole{P}\sem}} \src{t_S}
\end{align*}
\end{definition}


\smallskip

As $\pf{\rtp}$, the statement of \rtcone is ``property-free''. 
In order to bring properties into the picture,
two important questions arise:
(1) what $\trg{\pi_T}$ will robustly satisfied in the
target if we have established (\EG by verification) that
some $\src{\pi_S}$ is robustly satisfied in the source, and dually
(2) what is the $\src{\pi_S}$ that must be robustly satisfied in
the source in order to guarantee the robust satisfaction of
a fixed $\trg{\pi_T}$ we want to obtain in the target.
Reflecting this duality, we therefore provide two different
generalizations of $\rtp$ to different source and target traces:

\begin{definition}[\taurtp and \sigmartp{}] \label{defn:RTP}
\  Given an arbitrary pair of mappings between source and target properties,  $\tau :  2^\src{\ii{Trace}_S} \rightarrow 2^\trg{\ii{Trace}_T}$, 
  $~\sigma : 2^\trg{\ii{Trace}_T} \rightarrow 2^\src{\ii{Trace}_S} $ we define:
%
%

  \begin{align*}
    \taurtp: \quad
    \forall \src{\pi_S} \in 2^\src{\ii{Trace}_S}.~\forall\src{P}.~
    &
      (\forall\src{C_S}~\src{t_S} \ldotp
      \mathrel{\src{C_S\hole{P} \sem}} \src{t_S} \Rightarrow \src{t_S}\in\src{\pi_S})
      \Rightarrow
    \\
    &
      (\forall \trg{C_T}~\trg{t_T} \ldotp
      \mathrel{\trg{C_T\hole{\cmp{P}} \sem}} \trg{t_T} \Rightarrow \trg{t_T}\in\tau(\src{\pi_S}))
    \\ 
    \\
    \sigmartp: \quad
    \forall \trg{\pi_T} \in 2^\trg{\ii{Trace}_T}.~\forall\src{P}.~
    &
      (\forall\src{C_S}~ \src{t_S} \ldotp
      \mathrel{\src{C_S\hole{P} \sem}} \src{t_S} \Rightarrow \src{t_S}\in \sigma(\trg{\pi_T}))
      \Rightarrow
    \\
    &
      (\forall \trg{C_T}~\trg{t_T} \ldotp
      \mathrel{\trg{C_T\hole{\cmp{P}} \sem}} \trg{t_T} \Rightarrow \trg{t_T}\in\trg{\pi_T})
\end{align*}
\end{definition}
%
\bigskip
%
The mapping $\tau$ fixes an interpretation of source properties in the
target language, while $\sigma$ fixes an interpretation of target
properties in the source language.
For example, $\tau$ tells us how to 
interpret the source property \emph{``The program does not encounter an undefined behavior"} in a target 
language with no symbol for undefined behavior in it events alphabet.

\ch{We are still badly missing intuition why a Gallois connection is a
  reasonable thing to require/expect here. Simply making the proof
  below work does not count as intuition! Also we don't yet have an
  explanation why this Gallois connection goes in the opposite
  direction compared to what the intuition would say.}

We are going to show, in \Cref{thm:galois}, that certain conditions on $\tau$ and $\sigma$ provide the equivalence 
of \taurtp and \sigmartp. First of all let us recall the definition and a useful characterization of \emph{Galois connections} \cite{CousotCousot77-1}.

\begin{definition}[Galois connection] Let $(X, \leq)$ and $(Y, \sqsubseteq)$ be two posets.
  A pair of maps, $\alpha: X \to Y$, $\gamma: Y \to X$ is a Galois connection \ii{iff}
  it satisfies the following \emph{adjunction law}:
                   \begin{equation*}
                     \forall x \in X. ~ \forall y \in Y. ~ \alpha(x) \sqsubseteq y \iff x \leq \gamma(y)
                   \end{equation*}
$\alpha$ is referred to as the \emph{lower adjoint} and $\gamma$ as the \emph{upper adjoint}. 
We will often write  $\alpha: (X, \leq) \leftrightarrows (Y, \sqsubseteq) :\gamma$ to denote a Galois connection.
\end{definition}

\begin{lemma}[Characteristic property of Galois connections] \label{lem:chG}
  $\alpha: (X, \leq) \leftrightarrows (Y, \sqsubseteq) :\gamma$ is a Galois connection if and only if $\alpha, \gamma$ are monotone and
  they satisfy the following properties:
  \begin{align*}
   i)& \quad \forall x \in X.  ~ x \leq \gamma(\alpha (x)) \\ 
   ii)& \quad \forall y \in Y. ~ \alpha(\gamma (y)) \sqsubseteq y
  \end{align*}
\end{lemma}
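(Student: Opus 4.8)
The plan is to prove both directions of the biconditional, each by purely order-theoretic manipulation of reflexivity, transitivity, and the defining laws; there is no real machinery involved, only careful bookkeeping of which law gets instantiated where.

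\textbf{The ($\Rightarrow$) direction.} Assuming the adjunction law $\alpha(x) \sqsubseteq y \iff x \leq \gamma(y)$, I would first derive properties $i$ and $ii$ by instantiating the law at the diagonal. For $i$, take $y = \alpha(x)$: the law reads $\alpha(x) \sqsubseteq \alpha(x) \iff x \leq \gamma(\alpha(x))$, and since the left side holds by reflexivity of $\sqsubseteq$, we obtain $x \leq \gamma(\alpha(x))$. Dually, for $ii$, take $x = \gamma(y)$ and use reflexivity of $\leq$ to conclude $\alpha(\gamma(y)) \sqsubseteq y$. Monotonicity of $\alpha$ then follows from $i$: given $x_1 \leq x_2$, property $i$ gives $x_2 \leq \gamma(\alpha(x_2))$, so by transitivity $x_1 \leq \gamma(\alpha(x_2))$, and the adjunction law turns this into $\alpha(x_1) \sqsubseteq \alpha(x_2)$. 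Symmetrically, monotonicity of $\gamma$ follows from $ii$: given $y_1 \sqsubseteq y_2$, property $ii$ yields $\alpha(\gamma(y_1)) \sqsubseteq y_1 \sqsubseteq y_2$, and the adjunction law converts this into $\gamma(y_1) \leq \gamma(y_2)$.

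\textbf{The ($\Leftarrow$) direction.} Now assuming $\alpha, \gamma$ monotone together with $i$ and $ii$, I would establish the adjunction law by proving each implication separately. For the forward implication, suppose $\alpha(x) \sqsubseteq y$; applying monotonicity of $\gamma$ gives $\gamma(\alpha(x)) \leq \gamma(y)$, and composing with property $i$ (which gives $x \leq \gamma(\alpha(x))$) yields $x \leq \gamma(y)$ by transitivity. For the backward implication, suppose $x \leq \gamma(y)$; monotonicity of $\alpha$ gives $\alpha(x) \sqsubseteq \alpha(\gamma(y))$, and composing with property $ii$ (which gives $\alpha(\gamma(y)) \sqsubseteq y$) yields $\alpha(x) \sqsubseteq y$ by transitivity.

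The only thing requiring attention — and what I would flag as the sole ``obstacle'', though it is a very mild one — is keeping straight the orientation of the two orders $\leq$ and $\sqsubseteq$ and the asymmetric roles of $\alpha$ (lower adjoint) and $\gamma$ (upper adjoint), so that each appeal to reflexivity, transitivity, or monotonicity is made in the correct poset. Since the argument is symmetric under swapping $(X, \leq, \alpha, i)$ with $(Y, \sqsubseteq, \gamma, ii)$, I would present the $X$-side computations in full and note that the $Y$-side ones are dual, which both shortens the write-up and makes the symmetry explicit. No deeper property of the posets (such as completeness) is needed anywhere.
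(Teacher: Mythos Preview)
Your proof is correct and is the standard argument for this classical equivalence. The paper itself does not give a proof of this lemma: it states the result as background material (with a reference to Cousot and Cousot) and uses it later without justification, so there is no paper proof to compare against.
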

\bigskip
\ch{zero intuition here}%
It turns out that if $\tau \leftrightarrows \sigma$ is a Galois connection, the two
criteria in \Cref{defn:RTP} are equivalent.
\ca{TODO: not sure if I fully integrated RB comments}

\begin{theorem}[\taurtp and \sigmartp coincide] \label{thm:galois}

 If  $\tau :  (2^\src{\ii{Trace}_S}, \subseteq) \rightleftarrows (2^\trg{\ii{Trace}_T}, \subseteq) : \sigma $ 
 is a Galois connection, with $\tau$ its lower adjoint and $\sigma$ its upper adjoint, then
 $ \taurtp \iff \sigmartp $.

\end{theorem}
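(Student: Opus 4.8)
The plan is to strip both criteria down to a single upward-closed ``robust satisfaction'' predicate and then feed the two triangle laws of the Galois connection into the two implications. First I would introduce abbreviations: let $\mathit{RS}(\src{P},\src{\pi_S})$ stand for the hypothesis that $\src{P}$ robustly satisfies $\src{\pi_S}$ in the source, i.e.\ $\forall\src{C_S}~\src{t_S}.~\src{C_S\hole{P}\sem}\ \src{t_S}\Rightarrow\src{t_S}\in\src{\pi_S}$, and let $\mathit{RT}(\src{P},\trg{\pi_T})$ stand for the corresponding target conclusion about $\cmp{P}$. With this notation $\taurtp$ reads $\forall\src{\pi_S}~\src{P}.~\mathit{RS}(\src{P},\src{\pi_S})\Rightarrow\mathit{RT}(\src{P},\tau(\src{\pi_S}))$ and $\sigmartp$ reads $\forall\trg{\pi_T}~\src{P}.~\mathit{RS}(\src{P},\sigma(\trg{\pi_T}))\Rightarrow\mathit{RT}(\src{P},\trg{\pi_T})$.

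The one bridging observation I would record is that both $\mathit{RS}$ and $\mathit{RT}$ are monotone in their property argument: if $\pi\subseteq\pi'$ then $\mathit{RS}(\src{P},\pi)\Rightarrow\mathit{RS}(\src{P},\pi')$, and likewise for $\mathit{RT}$. This is immediate, since enlarging the set of allowed traces can only make robust satisfaction easier. Combined with the two characteristic inequalities from \Cref{lem:chG} --- the unit $\src{\pi_S}\subseteq\sigma(\tau(\src{\pi_S}))$ and the counit $\tau(\sigma(\trg{\pi_T}))\subseteq\trg{\pi_T}$ --- each direction then falls out by a single instantiation.

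For $\taurtp\Rightarrow\sigmartp$ I would fix $\trg{\pi_T}$, $\src{P}$ and assume $\mathit{RS}(\src{P},\sigma(\trg{\pi_T}))$; instantiating $\taurtp$ at $\src{\pi_S}=\sigma(\trg{\pi_T})$ yields $\mathit{RT}(\src{P},\tau(\sigma(\trg{\pi_T})))$, and the counit law together with monotonicity of $\mathit{RT}$ upgrades this to $\mathit{RT}(\src{P},\trg{\pi_T})$. For the converse $\sigmartp\Rightarrow\taurtp$ I would fix $\src{\pi_S}$, $\src{P}$ and assume $\mathit{RS}(\src{P},\src{\pi_S})$; the unit law and monotonicity of $\mathit{RS}$ first give $\mathit{RS}(\src{P},\sigma(\tau(\src{\pi_S})))$, and instantiating $\sigmartp$ at $\trg{\pi_T}=\tau(\src{\pi_S})$ then delivers exactly $\mathit{RT}(\src{P},\tau(\src{\pi_S}))$.

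There is no genuine obstacle once the adjunction is available; the only care needed is the bookkeeping of which triangle law goes on which side --- the unit $\src{\pi_S}\subseteq\sigma(\tau(\src{\pi_S}))$ for $\sigmartp\Rightarrow\taurtp$, the counit $\tau(\sigma(\trg{\pi_T}))\subseteq\trg{\pi_T}$ for the other --- and checking that monotonicity of robust satisfaction is applied in the correct place: weakening the $\mathit{RS}$ hypothesis via the unit, strengthening the $\mathit{RT}$ conclusion via the counit. Notably the full adjunction law is more than I need here: only the two triangle inequalities and monotonicity of $\tau$ and $\sigma$ are used, so the same argument would go through for any monotone pair satisfying those two inequalities.
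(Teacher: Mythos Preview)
Your proof is correct and follows exactly the same approach as the paper's: both directions are obtained by instantiating one criterion at the image of the other map (\(\src{\pi_S}=\sigma(\trg{\pi_T})\) for \(\Rightarrow\), \(\trg{\pi_T}=\tau(\src{\pi_S})\) for \(\Leftarrow\)) and then invoking the appropriate triangle inequality together with monotonicity of robust satisfaction. Your closing observation is slightly stronger than what you actually use---monotonicity of \(\tau\) and \(\sigma\) never enters the argument, only the unit and counit do---but this does not affect correctness.
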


\begin{proof}
  As a preliminary remark observe that, both in the source and target,
  if a program robustly satisfies a property $\pi$, then it robustly satisfies every extension 
  $\pi' \supseteq \pi$. Using this we prove the claim of the theorem:
\ch{Please make it more explicit below where the remark above
  and the precise conditions of the Galois connection are used:}

  ($\Rightarrow$)\quad Assume \taurtp and that $\src{P}$ robustly satisfies $\sigma(\trg{\pi_T})$. Apply
  \taurtp to $\src{P}$ and $\sigma(\trg{\pi_T})$ and deduce that $\cmp{P}$ robustly satisfies 
  $\tau(\sigma (\trg{\pi_T})) \subseteq \trg{\pi_T}$.

  ($\Leftarrow$)\quad Assume \sigmartp and that $\src{P}$ robustly 
  satisfies $\src{\pi_S} \subseteq \sigma( \tau (\src{\pi_S} ) )$. Apply \sigmartp to $\src{P}$ 
  and $\sigma( \tau (\src{\pi_S} ) )$ deducing $\cmp{P}$ that robustly satisfies $\tau (\src{\pi_S} )$.
\end{proof}

\ch{Does the other direction of the theorem above hold?  Can we obtain
  that $\tau$ and $\sigma $ form a Galois connection from a proof that
  $ \taurtp \iff \sigmartp $?}
 
So far we discussed how to generalize the criterion for the
preservation of robust satisfaction of all trace properties, $\rtp$.
We showed that it is possible to start either from the property-free
characterization $\pf{\rtp}$ and a relation between source and target
traces (\Cref{defn:rtcone}), or from $\rtp$ and an interpretation
of properties of one language into the other (\Cref{defn:RTP}).
Below we investigate the relation between \rtcone and
\taurtp/\sigmartp.  In \Cref{sec:tilde} we define two natural mappings
$\tilde{\tau}$, $\tilde{\sigma}$ starting from a given relation $\sim$
and discuss under which conditions the three criteria presented are
all equivalent.
These extra conditions may restrict the class of target properties for
which a one can use source-level reasoning, even when a secure
compilation chain is available.
In \Cref{sec:connection} we start from a Galois connection
$\tau \rightleftarrows \sigma$ and define a relation that ensures the
equivalence.

\subsection{Deriving Property Mappings from a Relation} \label{sec:tilde}

In this section we show how a relation $\sim$ between source and target traces induces a pair of mappings from properties of one language
to properties of the other language. We specialize \Cref{defn:RTP} with such interpretation and show that one of the criteria is
equivalent to $\rtcone$ with no extra assumption (\Cref{thm:rtprtc}), while in general the other one is stronger (\Cref{lem:left}).   

\begin{definition}[Induced mappings] \label{defn:tilde} 

  Let  $\sim \,\subseteq \src{\ii{Trace}_S} \times \trg{\ii{Trace}_T}$

  \begin{align*}
    \tilde{\sigma} & = \lambda ~\trg{\pi_T}. ~ \myset{ \src{t_S} }{ \exists \trg{t_T} \in \trg{\pi_T}. ~ \src{t_S} \sim \trg{t_T} } \\ 
    \tilde{\tau}   & = \lambda ~\src{\pi_S}. ~ \myset{ \trg{t_T} }{ \exists \src{t_S} \in \src{\pi_S}. ~ \src{t_S} \sim \trg{t_T} }             
  \end{align*}

We denote with {\tildetaurtp}, \tildesigmartp the criteria in
\Cref{defn:RTP} for $\tilde{\tau}$ and $\tilde{\sigma}$ respectively.

\end{definition}

\begin{theorem}[\tildetaurtp and \rtcone always coincide] \label{thm:rtprtc} 
  Let  $\sim \,\subseteq \src{\ii{Trace}_S} \times \trg{\ii{Trace}_T}$, then $\tildetaurtp \iff \rtcone$.
\end{theorem}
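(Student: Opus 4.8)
The plan is to adapt the standard equivalence between a property-full criterion and its property-free characterization (exactly as for \rtp versus \pf{\rtp}) to the setting of differing trace models, exploiting the fact that \(\tilde{\tau}\) is defined by the \emph{same} existential pattern over \(\sim\) that already appears in \rtcone. I expect both directions to go through by pure unfolding of definitions, with \emph{no} additional hypotheses on \(\sim\); this is precisely why the theorem claims the two criteria \emph{always} coincide, in contrast with the \(\tilde{\sigma}\) side (\tildesigmartp), where we will see that extra conditions are needed.

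For the direction \(\rtcone \Rightarrow \tildetaurtp\), I would fix a program \src{P} and a source property \src{\pi_S} robustly satisfied by \src{P}, and then take an arbitrary target context \trg{C_T} and target trace \trg{t_T} with \(\trg{C_T\hole{\cmp{P}}}\sem\trg{t_T}\). Applying \rtcone yields a source context \src{C_S} and a trace \(\src{t_S}\sim\trg{t_T}\) with \(\src{C_S\hole{P}}\sem\src{t_S}\); the robust satisfaction hypothesis gives \(\src{t_S}\in\src{\pi_S}\); and the defining clause of \(\tilde{\tau}\)---namely the existence of some \(\src{t_S}\in\src{\pi_S}\) with \(\src{t_S}\sim\trg{t_T}\)---then gives \(\trg{t_T}\in\tilde{\tau}(\src{\pi_S})\), which is exactly what robust satisfaction of \(\tilde{\tau}(\src{\pi_S})\) by \cmp{P} requires.

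For the converse \(\tildetaurtp \Rightarrow \rtcone\), the key move---and the only nonroutine step---is to instantiate \tildetaurtp with the canonical ``producible-traces'' property \(\src{\pi_S} = \{\src{t_S} \mid \exists\src{C_S}.~\src{C_S\hole{P}}\sem\src{t_S}\}\), for which the source-side premise of \tildetaurtp holds trivially. Given a target run \(\trg{C_T\hole{\cmp{P}}}\sem\trg{t_T}\), we obtain \(\trg{t_T}\in\tilde{\tau}(\src{\pi_S})\), i.e.\ some \(\src{t_S}\in\src{\pi_S}\) with \(\src{t_S}\sim\trg{t_T}\); unfolding membership in \src{\pi_S} recovers the witnessing \src{C_S} with \(\src{C_S\hole{P}}\sem\src{t_S}\), which is precisely the conclusion of \rtcone. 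The main obstacle is conceptual rather than computational: recognizing that this choice of \src{\pi_S} is what bridges the universally quantified property-full statement and the existentially quantified property-free one, and verifying that \(\tilde{\tau}\) transports this property faithfully so that the \(\sim\)-related witness is delivered with no side conditions on \(\sim\). I would also double-check that the quantifier over properties in \tildetaurtp ranges over \emph{all} of \(2^{\src{\ii{Trace}_S}}\), so that this particular \src{\pi_S} is indeed admissible.
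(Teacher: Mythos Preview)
Your proof is correct in both directions. The direction \(\rtcone \Rightarrow \tildetaurtp\) is essentially identical to the paper's (you argue directly; the paper phrases the same steps as a contrapositive).

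For \(\tildetaurtp \Rightarrow \rtcone\) you take a genuinely different route. You instantiate \(\src{\pi_S}\) with the set of all source-producible traces of \src{P}, exactly mirroring the proof of \(\rtp \iff \pf{\rtp}\) earlier in the paper; the source premise then holds tautologically and the definition of \(\tilde{\tau}\) delivers the \(\sim\)-related source witness. The paper instead works in contrapositive and instantiates with \(\src{\pi_S} = \{\src{t_S} \mid \src{t_S} \not\sim \trg{t_T}\}\), a property that depends only on the fixed target trace and on \(\sim\), not on \src{P}; one then observes that \(\trg{t_T}\notin\tilde{\tau}(\src{\pi_S})\), so \(\cmp{P}\) fails the target conclusion, whence by contrapositive \src{P} fails the source premise, yielding some \(\src{t_S}\notin\src{\pi_S}\), i.e.\ \(\src{t_S}\sim\trg{t_T}\). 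Both instantiations are standard tricks and equally elementary; yours has the virtue of reusing verbatim the pattern from the \rtp/\pf{\rtp} equivalence, while the paper's choice of \(\src{\pi_S}\) is ``local'' to \(\trg{t_T}\) and makes it slightly more transparent that no assumption on \(\sim\) is needed.
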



\begin{proof}
($\Rightarrow$)\quad Assume \tildetaurtp in contrapositive form and assume $\mathrel{\trg{C_T\hole{\cmp{P}} \sem}} \trg{t_T}$.
We can apply \tildetaurtp to $\src{P}$ and $\src{\pi_S} \equiv \myset{\src{t_S}}{\src{t_S} \not\sim \trg{t_T}}$.
If it is non-empty, $\tilde{\tau}(\src{\pi_S})$ contains $\trg{t_T}$,
in both cases $\cmp{P}$ does not robustly satisfy it. 
We therefore deduce $\exists \src{C_S}\exists \src{t_S}. ~\mathrel{\src{C_S\hole{P} \sem}} \src{t_S} $ and $\src{t_S} \notin \src{\pi_S}$, from which $\src{t_S} \sim \trg{t_T}$.

($\Leftarrow$)\quad Assume \rtcone and that for $\src{P}, \src{\pi_S}$ there exists $\trg{C_T}$ 
such that $\mathrel{\trg{C_T\hole{\cmp{P}} \sem}} \trg{t_T}$ with $\trg{t_T} \notin \tilde{\tau}(\src{\pi_S})$.
Apply \rtcone and deduce $\mathrel{\exists\src{C_S}\ldotp \exists\src{t_S} \sim \trg{t_T} \ldotp\src{C_S\hole{P}\sem}} \src{t_S}$. If $\src{t_S} \in \src{\pi_S}$,
then $\trg{t_T} \in \tilde{\tau}(\src{\pi_S})$, which is a contradiction and therefore $\src{t_S} \notin \src{\pi_S}$.
\end{proof}

Note that \Cref{thm:rtprtc} does not require any extra assumptions. 
On the other hand, in general \tildesigmartp and \rtcone are not equivalent (as shown by \Cref{ex:length} below),
although the equivalence can be shown under some extra conditions,
which are subsumed by the adjunction law of the Galois connection.\ch{Is one
  of the conditions in \autoref{lem:chG} enough?}
 
\begin{lemma}[\tildesigmartp stronger than \rtcone]\label{lem:left} 
 Let  $\sim \,\subseteq \src{\ii{Trace}_S} \times \trg{\ii{Trace}_T}$ be such that $ \forall \src{t_S}. ~\exists \trg{t_T}. ~ \src{t_S} \sim \trg{t_T}$,
 then \sigmartp $\Rightarrow$ \rtcone.
\end{lemma}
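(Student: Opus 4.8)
The plan is to prove $\tildesigmartp \Rightarrow \rtcone$ by unfolding both criteria into their quantifier structure and showing that the hypothesis of \rtcone can be recast as a robust-satisfaction hypothesis to which \tildesigmartp applies. The key observation is that \tildesigmartp, by \Cref{defn:tilde}, uses the mapping $\tilde{\sigma}(\trg{\pi_T}) = \myset{\src{t_S}}{\exists \trg{t_T} \in \trg{\pi_T}.~\src{t_S} \sim \trg{t_T}}$, which is exactly the set of source traces related to \emph{some} allowed target trace. This is precisely the right object to mediate between a source execution and a simulated target attack.

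First I would assume \tildesigmartp and fix a program $\src{P}$, a target context $\trg{C_T}$, and a target trace $\trg{t_T}$ with $\trg{C_T\hole{\cmp{P}} \sem \trg{t_T}}$; the goal is to produce $\src{C_S}$ and $\src{t_S} \sim \trg{t_T}$ with $\src{C_S\hole{P} \sem \src{t_S}}$. I would argue by contraposition at the level of the target context: suppose toward a contradiction that for \emph{every} source context $\src{C_S}$ and every source trace $\src{t_S}$ with $\src{C_S\hole{P} \sem \src{t_S}}$, we have $\src{t_S} \not\sim \trg{t_T}$. Instantiate \tildesigmartp with $\src{P}$ and the singleton target property $\trg{\pi_T} = \{\trg{t_T}\}$. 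Then $\tilde{\sigma}(\{\trg{t_T}\}) = \myset{\src{t_S}}{\src{t_S}\sim\trg{t_T}}$, and the contradiction hypothesis says exactly that $\src{P}$ robustly satisfies $\tilde{\sigma}(\{\trg{t_T}\})$, i.e.\ every source trace produced by $\src{P}$ in any source context lies in $\tilde{\sigma}(\{\trg{t_T}\})$. By \tildesigmartp we conclude that $\cmp{P}$ robustly satisfies $\{\trg{t_T}\}$ in the target, so every target trace of $\trg{C_T\hole{\cmp{P}}}$ equals $\trg{t_T}$. That on its own is consistent with our execution, so I expect the singleton property is not quite the right choice—the role of the extra assumption $\forall \src{t_S}.~\exists \trg{t_T}.~\src{t_S} \sim \trg{t_T}$ (left-totality of $\sim$) is to guarantee that the source robust-satisfaction premise is \emph{meaningful}, i.e.\ that $\tilde{\sigma}$ of the complement is not vacuously everything.

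The cleaner route, which I would actually carry out, parallels the proof of \Cref{thm:rtprtc} but dualizes it. I would take $\trg{\pi_T}$ to be the complement-style property that \emph{excludes} $\trg{t_T}$, namely $\trg{\pi_T} = \myset{\trg{t_T'}}{\trg{t_T'} \neq \trg{t_T}}$ (or more robustly the set of target traces unrelated to the eventual source witness). Then $\tilde{\sigma}(\trg{\pi_T})$ is the set of source traces related only to target traces other than $\trg{t_T}$, and the left-totality hypothesis ensures $\tilde{\sigma}$ behaves well on these complements, so that the source premise $\src{t_S}\in\tilde{\sigma}(\trg{\pi_T})$ is equivalent to $\src{t_S}$ having \emph{some} related target trace, which always holds. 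Applying \tildesigmartp in contrapositive form: since $\trg{C_T\hole{\cmp{P}} \sem \trg{t_T}}$ and $\trg{t_T}\notin\trg{\pi_T}$, the compiled program does not robustly satisfy $\trg{\pi_T}$, so $\src{P}$ does not robustly satisfy $\tilde{\sigma}(\trg{\pi_T})$; hence there exist $\src{C_S}$ and $\src{t_S}$ with $\src{C_S\hole{P} \sem \src{t_S}}$ and $\src{t_S}\notin\tilde{\sigma}(\trg{\pi_T})$. Unfolding $\tilde{\sigma}$, this means $\src{t_S}$ is related to no target trace outside $\{\trg{t_T}\}$; combined with left-totality (so $\src{t_S}$ \emph{is} related to something), we get $\src{t_S} \sim \trg{t_T}$, which is the desired witness for \rtcone.

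The main obstacle I anticipate is pinning down exactly which property $\trg{\pi_T}$ makes the contrapositive go through while using left-totality in an essential way: a naive complement can make $\tilde{\sigma}$ collapse (e.g.\ if $\sim$ is not total, a source trace related to nothing lies in every $\tilde{\sigma}(\trg{\pi_T})$, breaking the argument). The assumption $\forall \src{t_S}.~\exists \trg{t_T}.~\src{t_S}\sim\trg{t_T}$ is precisely what rules this out, guaranteeing that ``$\src{t_S}\notin\tilde{\sigma}(\trg{\pi_T})$'' forces $\src{t_S}$ to be related to some trace \emph{inside} the complement of $\trg{\pi_T}$, i.e.\ to $\trg{t_T}$. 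I would therefore devote the care of the proof to this step, and I would not expect the converse implication to hold in general, consistent with the statement being only a one-directional ``stronger than'' lemma (and with \Cref{ex:length}, which the excerpt cites as separating the two criteria).
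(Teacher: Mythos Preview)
Your ``cleaner route'' is exactly the paper's proof: instantiate \tildesigmartp with the complement property $\trg{\pi_T} = \myset{\trg{t_T'}}{\trg{t_T'} \neq \trg{t_T}}$, use the target violation to obtain a source context and a trace $\src{t_S}\notin\tilde{\sigma}(\trg{\pi_T})$, observe that this means $\src{t_S}$ is related to no target trace other than $\trg{t_T}$, and invoke left-totality to conclude $\src{t_S}\sim\trg{t_T}$. The initial singleton detour and the aside ``$\tilde{\sigma}(\trg{\pi_T})$ is the set of source traces related \emph{only} to target traces other than $\trg{t_T}$'' are slightly off (it is the set of source traces related to \emph{some} such trace), but your actual proof steps are correct and match the paper.
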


\begin{proof}
  Assume  $\mathrel{\trg{C_T\hole{\cmp{P}} \sem}} \trg{t_T}$, and consider 
  the target property $\trg{\pi_T} = \myset{\trg{t_T'}}{\trg{t_T'} \neq \trg{t_T}}$. By \sigmartp deduce 
  $\mathrel{\src{C_S\hole{P} \sem}} \src{t_S}$ for some $\src{C_S}$ and some $\src{t_S} \not\in \tilde{\sigma}(\trg{\pi_T})$. 
  This means that $\src{t_S}$ is not related to target traces different from $\trg{t_T}$, and from our hypothesis we deduce $\src{t_S} \sim \trg{t_T}$,
  which concludes the proof. 
\end{proof}

We introduce the following, weaker variant of \tildesigmartp, which will be shown to follow from \rtcone.

\begin{definition}[Weak $\tildesigmartp$] \label{def:weaker}

  Let  $\sim \,\subseteq \src{\ii{Trace}_S} \times \trg{\ii{Trace}_T}$,
  
  \begin{align*}
    \weaktilde \equiv \quad
    \forall \trg{\pi_T} \in \mathcal{G}.~\forall\src{P}.~
    &
      (\forall\src{C_S}~ \src{t_S} \ldotp
      \mathrel{\src{C_S\hole{P} \sem}} \src{t_S} \Rightarrow \src{t_S}\in \sigma(\trg{\pi_T}))
      \Rightarrow
    \\
    &
      (\forall \trg{C_T}~\trg{t_T} \ldotp
      \mathrel{\trg{C_T\hole{\cmp{P}} \sem}} \trg{t_T} \Rightarrow \trg{t_T}\in\trg{\pi_T}) 
  \end{align*}

\bigskip

where  $ \mathcal{G} = \{ \trg{\pi_T} | ~ \forall \trg{t_1} \trg{t_2} \src{t_S}. ~ 
                                \src{t_S} \sim \trg{t_1} \wedge \src{t_S} \sim \trg{t_2} \Rightarrow 
                                (\trg{t_1} \in \trg{\pi_T} \iff \trg{t_2} \in \trg{\pi_T}) \} $

\end{definition}
%
\bigskip
It is possible to show, by straightforward manipulations, that the class $\mathcal{G}$ coincides with the class of all 
properties  $\trg{\pi_T}$ such that $\tilde{\tau} (\tilde{\sigma} (\trg{\pi_T})) \subseteq \trg{\pi_T}$. 

\begin{lemma}[\rtcone stronger than \weaktilde] \label{lem:right}
  Let  $\sim \,\subseteq \src{\ii{Trace}_S} \times \trg{\ii{Trace}_T}$, \rtcone $\Rightarrow$ \weaktilde.
\end{lemma}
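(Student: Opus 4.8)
The plan is to prove the implication directly, by unfolding \weaktilde, back-translating each producible target trace with \rtcone, and then using the defining closure property of the class $\mathcal{G}$ to transport property membership from the back-translated trace to the original one. First I would fix a target property $\trg{\pi_T}\in\mathcal{G}$ and a partial program $\src{P}$, and assume the source-side premise of \weaktilde: every source trace $\src{t_S}$ with $\src{C_S\hole{P}\sem}\src{t_S}$ (for some source context $\src{C_S}$) satisfies $\src{t_S}\in\tilde{\sigma}(\trg{\pi_T})$. The goal is then to show $\trg{t_T}\in\trg{\pi_T}$ whenever $\trg{C_T\hole{\cmp{P}}\sem}\trg{t_T}$.

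The core of the argument takes an arbitrary $\trg{C_T}$ and $\trg{t_T}$ with $\trg{C_T\hole{\cmp{P}}\sem}\trg{t_T}$ and applies \rtcone to obtain a source context $\src{C_S}$ together with a source trace $\src{t_S}$ such that $\src{t_S}\sim\trg{t_T}$ and $\src{C_S\hole{P}\sem}\src{t_S}$. The source-side premise then yields $\src{t_S}\in\tilde{\sigma}(\trg{\pi_T})$, and unfolding the definition of $\tilde{\sigma}$ (Definition~\ref{defn:tilde}) produces a \emph{possibly different} target trace $\trg{t_T'}\in\trg{\pi_T}$ with $\src{t_S}\sim\trg{t_T'}$.

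The decisive step is to pass from $\trg{t_T'}\in\trg{\pi_T}$ to the desired $\trg{t_T}\in\trg{\pi_T}$. At this point the single back-translated trace $\src{t_S}$ is related both to $\trg{t_T}$ (from \rtcone) and to $\trg{t_T'}$ (from membership in $\tilde{\sigma}(\trg{\pi_T})$), so the defining property of $\mathcal{G}$ applies with witnesses $\trg{t_1}=\trg{t_T}$, $\trg{t_2}=\trg{t_T'}$ and the common source trace $\src{t_S}$, giving $\trg{t_T}\in\trg{\pi_T}\iff\trg{t_T'}\in\trg{\pi_T}$; since $\trg{t_T'}\in\trg{\pi_T}$, we conclude $\trg{t_T}\in\trg{\pi_T}$, as required.

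I do not expect a genuine technical obstacle here, as the argument is short; the real content is conceptual, namely recognizing why the restriction to $\mathcal{G}$ is exactly what makes the proof go through. Without it (cf.\ Lemma~\ref{lem:left} and Example~\ref{ex:length}) a single back-translated $\src{t_S}$ need not certify the original $\trg{t_T}$, because $\tilde{\sigma}$ only guarantees membership of \emph{some} related trace. The $\mathcal{G}$ condition---equivalently $\tilde{\tau}(\tilde{\sigma}(\trg{\pi_T}))\subseteq\trg{\pi_T}$, as noted just before the statement---forces all target traces related to a common source trace to agree on membership in $\trg{\pi_T}$, which is precisely the closure needed to bridge the gap between $\trg{t_T'}$ and $\trg{t_T}$.
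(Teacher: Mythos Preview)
Your proof is correct and follows essentially the same approach as the paper. The only cosmetic differences are that the paper argues the contrapositive of the inner implication and invokes the equivalent characterization $\tilde{\tau}(\tilde{\sigma}(\trg{\pi_T}))\subseteq\trg{\pi_T}$ of $\mathcal{G}$ rather than its raw definition, whereas you proceed directly and use the relation-based definition of $\mathcal{G}$; the underlying idea---back-translate via \rtcone and then use the $\mathcal{G}$-closure to pass membership across related target traces---is identical.
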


\begin{proof}
 We prove the contrapositive of {\weaktilde}. Assume $\mathrel{\trg{C_T\hole{\cmp{P}} \sem}} \trg{t_T}$ with $\trg{t_T} \not\in \trg{\pi_T}$, 
 for an arbitrary $\trg{\pi_T} \in \mathcal{G}$. Applying \rtcone we deduce $\mathrel{\src{C_S\hole{P} \sem}} \src{t_S}$
 for some $\src{C_S}$ and some $\src{t_S} \sim \trg{t_T}$.  We just need to show that $\src{t_S} \not\in \tilde{\sigma}(\trg{\pi_T})$.  
 If $\src{t_S} \in \tilde{\sigma}(\trg{\pi_T})$ 
 then  $\trg{t_T} \in \tilde{\tau}(\{ \src{t_S} \}) \subseteq \tilde{\tau}(\tilde{\sigma}(\trg{\pi_T})) \subseteq \trg{\pi_T}$, 
 a contradiction.
\end{proof}

When the two maps form a Galois connection it is possible to show that every source trace is related to some target trace,
and that $\mathcal{G} = 2^\trg{\ii{Trace}_T}$, so that \tildesigmartp and its weaker variant coincide. 

\begin{corollary}[\tildetaurtp, \rtcone and \tildesigmartp all coincide]
   If  $\tilde{\tau} \rightleftarrows \tilde{\sigma}$ is a Galois connection then 
   $\tildetaurtp \iff \rtcone \iff \tildesigmartp$.
\end{corollary}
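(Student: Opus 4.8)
The plan is to reduce the three-way equivalence to results already available, so that the only genuine work is extracting two consequences of the adjunction hypothesis. First, the equivalence $\tildetaurtp \iff \rtcone$ is immediate: it is exactly \Cref{thm:rtprtc}, which holds for an arbitrary relation $\sim$ and in particular needs no Galois hypothesis. Thus it remains only to establish $\rtcone \iff \tildesigmartp$ under the assumption that $\tilde{\tau} \rightleftarrows \tilde{\sigma}$ is a Galois connection, with $\tilde{\tau}$ the lower and $\tilde{\sigma}$ the upper adjoint.

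For the direction $\tildesigmartp \Rightarrow \rtcone$ I would appeal to \Cref{lem:left}, whose only side condition is that $\sim$ is total on the source side, i.e. $\forall \src{t_S}. \exists \trg{t_T}. \src{t_S} \sim \trg{t_T}$. I would derive this from the characteristic property \Cref{lem:chG}(i): instantiating $\src{\pi_S}$ with the singleton $\{\src{t_S}\}$ gives $\src{t_S} \in \tilde{\sigma}(\tilde{\tau}(\{\src{t_S}\}))$. Unfolding $\tilde{\tau}$ and $\tilde{\sigma}$ from \Cref{defn:tilde}, membership of $\src{t_S}$ in $\tilde{\sigma}(\tilde{\tau}(\{\src{t_S}\}))$ is equivalent to $\exists \trg{t_T}. \src{t_S} \sim \trg{t_T}$, which is precisely the totality required to invoke \Cref{lem:left}.

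For the direction $\rtcone \Rightarrow \tildesigmartp$ I would first apply \Cref{lem:right}, which always yields $\rtcone \Rightarrow \weaktilde$, where $\weaktilde$ (\Cref{def:weaker}) is the restriction of $\tildesigmartp$ to target properties lying in the class $\mathcal{G}$. The decisive step is then to show that the adjunction forces $\mathcal{G} = 2^\trg{\ii{Trace}_T}$, so that $\weaktilde$ and $\tildesigmartp$ coincide. As observed right after \Cref{def:weaker}, $\mathcal{G}$ is exactly the class of $\trg{\pi_T}$ satisfying $\tilde{\tau}(\tilde{\sigma}(\trg{\pi_T})) \subseteq \trg{\pi_T}$; but the characteristic property \Cref{lem:chG}(ii) asserts this inclusion for every $\trg{\pi_T}$, whence $\mathcal{G}$ is the full powerset. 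Chaining $\rtcone \Rightarrow \weaktilde = \tildesigmartp$ closes this direction, and together with the first paragraph completes the corollary.

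I expect the main obstacle to be the bookkeeping that identifies $\mathcal{G}$ with the fixpoint-style inclusion $\tilde{\tau}(\tilde{\sigma}(\trg{\pi_T})) \subseteq \trg{\pi_T}$: one must check that the pointwise ``$\sim$-saturation'' condition in the definition of $\mathcal{G}$ is logically equivalent to this closure inequality, which requires unfolding $\tilde{\tau}$ and $\tilde{\sigma}$ on arbitrary target properties and reasoning about which target traces share a common $\sim$-predecessor. Everything else is a mechanical combination of \Cref{thm:rtprtc}, \Cref{lem:left}, \Cref{lem:right}, and the two characteristic properties of the Galois connection in \Cref{lem:chG}.
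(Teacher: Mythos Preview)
Your proposal is correct and matches the paper's direct argument essentially step for step: totality of $\sim$ from the unit inequality, $\mathcal{G}=2^{\trg{\ii{Trace}_T}}$ from the counit inequality, then \Cref{lem:left} and \Cref{lem:right}, combined with \Cref{thm:rtprtc}. The paper additionally notes a one-line shortcut you do not mention: since $\tilde{\tau}\rightleftarrows\tilde{\sigma}$ is a Galois connection, \Cref{thm:galois} immediately gives $\tildetaurtp\iff\tildesigmartp$, and \Cref{thm:rtprtc} then supplies the link to $\rtcone$; the direct route you chose is what the paper spells out as the alternative.
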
 

\begin{proof} 
  The thesis\ch{??? please be explicit what!}
  follows from \Cref{thm:galois} and \Cref{thm:rtprtc}, we also propose a direct proof
  that shows the adjunction law ensures that $\forall \src{t_S}. ~ \exists \trg{t_T}. ~ \src{t_S} \sim \trg{t_T}$, 
  and that $\mathcal{G} = 2^{\trg{\ii{Trace}_T}}$.

  Let $\src{t_S}$ be a source trace, trivially $\tilde{\tau}(\{ \src{t_S} \}) \subseteq \tilde{\tau}(\{ \src{t_S} \})$
  so that by the adjunction law we deduce $ \{ \src{t_S} \} \subseteq \tilde{\sigma}(\tilde{\tau}(\{ \src{t_S} \}))$, 
  that means there exists some $\trg{t_T}$ such that $\src{t_S} \sim \trg{t_T}$.

  Finally the characteristic property of Galois connections in \Cref{lem:chG}, 
  $\forall \trg{\pi_T}. ~ \tilde{\tau} (\tilde{\sigma} (\trg{\pi_T})) \subseteq \trg{\pi_T}$.
  Thesis\ch{????? please be explicit what!}
  hence follows  from \Cref{lem:left} and \Cref{lem:right}. 
\end{proof}


\smallskip

\subsection{Deriving a Relation from Maps} \label{sec:connection}


In this section, given a Galois connection between source and target
properties we define a relation over source and target traces ensuring the
equivalence between \rtcone and \sigmartp and \taurtp.

\begin{definition}[Induced relation] \label{defn:relation}
  Let $\tau :  2^\src{\ii{Trace}_S} \rightarrow 2^\trg{\ii{Trace}_T}$ and  $~\sigma : 2^\trg{\ii{Trace}_T} \rightarrow 2^\src{\ii{Trace}_S}$, define 
   $\simst \,\subseteq \src{\ii{Trace}_S} \times \trg{\ii{Trace}_T}$ as following

  \begin{equation*}
    \src{t_S} \simst \trg{t_T} \iff \trg{t_T} \in \tau(\{ \src{t_S} \})
  \end{equation*}
  
\end{definition}

\medskip

\begin{theorem}[\taurtp, \rtctwo and \sigmartp all coincide]
  If  $\tau \rightleftarrows \sigma$ is a Galois connection then  
          $$ \taurtp \iff  \rtctwo \iff \sigmartp $$
\end{theorem}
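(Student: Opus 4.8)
The plan is to reduce this three-way equivalence entirely to two results already proved in this appendix, by showing that the relation $\simst$ derived from $\tau$ induces exactly $\tau$ as its forward property map. Concretely, I would first invoke \Cref{thm:galois}: since $\tau \rightleftarrows \sigma$ is a Galois connection, we already have $\taurtp \iff \sigmartp$. This disposes of one of the two equivalences for free, so the only genuine work is to connect $\rtctwo$ (which is $\pf{\rtp}^{\simst}$, i.e., the relation-based criterion \rtcone instantiated at $\simst$) to $\taurtp$.

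For that, the key observation is that $\tilde{\tau}$, the forward map induced by $\simst$ in the sense of \Cref{defn:tilde}, coincides with the original $\tau$. Unfolding \Cref{defn:relation}, for any source property $\src{\pi_S}$ we have
\begin{align*}
  \tilde{\tau}(\src{\pi_S})
  &= \{ \trg{t_T} \mid \exists \src{t_S} \in \src{\pi_S}.~ \src{t_S} \simst \trg{t_T} \}\\
  &= \{ \trg{t_T} \mid \exists \src{t_S} \in \src{\pi_S}.~ \trg{t_T} \in \tau(\{\src{t_S}\}) \}\\
  &= \bigcup_{\src{t_S} \in \src{\pi_S}} \tau(\{\src{t_S}\}).
\end{align*}
I would then use the standard fact that the lower adjoint of a Galois connection preserves arbitrary joins, which in the powerset lattices $(2^{\src{\ii{Trace}_S}},\subseteq)$ and $(2^{\trg{\ii{Trace}_T}},\subseteq)$ means that $\tau$ commutes with arbitrary unions and sends $\emptyset$ to $\emptyset$. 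Writing $\src{\pi_S} = \bigcup_{\src{t_S}\in\src{\pi_S}} \{\src{t_S}\}$ then gives $\tau(\src{\pi_S}) = \bigcup_{\src{t_S}\in\src{\pi_S}}\tau(\{\src{t_S}\}) = \tilde{\tau}(\src{\pi_S})$, so $\tilde{\tau} = \tau$.

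With this identity in hand, the criteria \tildetaurtp and \taurtp are literally the same statement, since they differ only in the property map applied in their conclusion. Now \Cref{thm:rtprtc}, applied with the relation $\sim := \simst$, yields $\tildetaurtp \iff \rtctwo$ with no further assumptions. Chaining everything, $\taurtp = \tildetaurtp \iff \rtctwo$, and together with $\taurtp \iff \sigmartp$ from \Cref{thm:galois}, all three criteria coincide.

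The main obstacle is isolating and justifying the join-preservation step $\tau(\bigcup_i \src{\pi_i}) = \bigcup_i \tau(\src{\pi_i})$: this is precisely where the Galois-connection hypothesis is essential, since it is the existence of the upper adjoint $\sigma$ that forces $\tau$ to be completely additive, whereas monotonicity alone (\Cref{lem:chG}) would not suffice. Everything else is bookkeeping once $\tilde{\tau} = \tau$ is established; in particular, no fresh reasoning about contexts or executions is required, as that content is already packaged inside \Cref{thm:rtprtc} and \Cref{thm:galois}.
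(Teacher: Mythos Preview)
Your argument is correct, but it takes a genuinely different route from the paper's own proof. Both proofs start by invoking \Cref{thm:galois} to get $\taurtp \iff \sigmartp$, so the only issue is linking $\rtctwo$ to one of the two. You link it to $\taurtp$: you observe that the lower adjoint $\tau$ preserves arbitrary unions, hence $\tilde{\tau}$ (the forward map induced by $\simst$) coincides with $\tau$ on the nose, and then \Cref{thm:rtprtc} applied at $\sim := \simst$ gives $\rtctwo \iff \tildetaurtp = \taurtp$ for free. The paper instead links $\rtctwo$ directly to $\sigmartp$, proving each implication by hand: for $\rtctwo \Rightarrow \sigmartp$ it uses the contrapositive and the adjunction law $\{\src{t_S}\} \subseteq \sigma(\trg{\pi_T}) \Rightarrow \tau(\{\src{t_S}\}) \subseteq \trg{\pi_T}$; for $\sigmartp \Rightarrow \rtctwo$ it instantiates $\trg{\pi_T}$ with the complement of $\{\trg{t_T}\}$ and again invokes the adjunction law.

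What each buys: your proof is more modular and conceptually cleaner, since the only new ingredient is the algebraic identity $\tilde{\tau} = \tau$, after which everything is packaged inside earlier theorems; it also makes explicit \emph{why} the Galois hypothesis matters (complete additivity of $\tau$). The paper's proof is more self-contained and elementary, never needing the join-preservation fact, but at the price of redoing by hand an argument morally already captured by \Cref{thm:rtprtc}. Your join-preservation justification is standard and sound, and your handling of the $\src{\pi_S} = \emptyset$ case via $\tau(\emptyset) = \emptyset$ is the right thing to note.
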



\begin{proof}
  It suffices to show the equivalence between \rtctwo and \sigmartp, then by \Cref{thm:galois}
  we get the other equivalence.

  ($\Rightarrow$)\quad
  Assume \rtctwo and assume that $\mathrel{\trg{C_T\hole{\cmp{P}} \sem}} \trg{t_T}$ and
  $\trg{t_T} \notin \trg{\pi_T}$ for some $\src{P}$ and some $\trg{\pi_T}$. 
  By \rtctwo deduce $\mathrel{\src{C_S\hole{P} \sem}} \src{t_S}$ for some $\src{C_S}$ and 
  some $\src{t_S}$ such that $\trg{t_T} \in \tau(\{ \src{t_S} \} )$.
  It now suffices to show that $\src{t_S} \notin \sigma(\{ \trg{\pi_T}\})$. 
  If $\src{t_S} \in \sigma(\{ \trg{\pi_T}\})$, by the adjunction law we get 
  $\tau(\{ \src{t_S} \}) \subseteq \trg{\pi_T}$ and hence $\trg{t_T} \in \trg{\pi_T}$, a contradiction.
  
  ($\Leftarrow$)\quad
  Assume \sigmartp and $\mathrel{\trg{C_T\hole{\cmp{P}} \sem}} \trg{t_T}$
  for some $\src{P}$ and some $\trg{t_T}$.
  By \sigmartp applied to  $\trg{\pi_T} \equiv \{ \trg{t'} | ~\trg{t'} \neq \trg{t_T} \}$
  deduce the existence of some $\src{C_S}$ such that $\mathrel{\src{C_S\hole{P} \sem} \src{t_S}}$
  and $\src{t_S} \notin \sigma(\trg{\pi_T})$. We still need to show that $\trg{t_T} \in \tau(\{ \src{t_S} \})$. 
  By $\src{t_S} \notin \sigma(\trg{\pi_T})$ and the adjunction law deduce
  $\tau(\{ \src{t_S} \}) \not\subseteq \trg{\pi_T}$ that implies $\tau(\{ \src{t_S} \}) = \{ \trg{t_T} \}$.
  \MP{ it'd be nice to have REFS for stuff like the adjunct law, so you can refer to them and the reader can more easily follow the proof.
  You can use command : thmref for that.}
\end{proof}

\subsection{Unsafe Languages and Side-Channels} \label{sec:examples}

In this section we show how the criteria above deal with two
interesting case studies.
In \Cref{ex:undef} we consider an unsafe source
language~\cite{AbateABEFHLPST18}, while in \Cref{ex:length} a timer is
available to measure time spent for computations of target programs.
To model this, in the first case we assume the set of source events
contains a special symbol denoting that the program encountered an
\emph{undefined behavior}~\cite{Leroy09}.
In the second case we equip target traces with a natural number,
intuitively denoting the number of time units, e.g. ms, needed by the
program to produce the trace.
In both cases source and target traces differ, but are related by
some relation $\sim$.
We explain the guarantees an \rtcone compilation chain provides, and
discuss whether and how \tildetaurtp and \tildesigmartp match the same guarantees.


\begin{example}[Relating traces when the source has undefined behaviour] \label{ex:undef}
Let $\Sigma$ be some set of events and an $\eundef$ a symbol not contained by $\Sigma$. Assume observable events in the source
are $\src{\Sigma} = \Sigma \cup \{ \eundef \}$, while in the target $\trg{\Sigma} = \Sigma$.
If a source program encounters an undefined behavior
then the source semantics immediately terminates its computation, so
that source traces can exhibit the symbol $\eundef$ only at the end
of a (finite) trace.
Notice that properties, traces and finite prefixes in the target are all valid properties, traces and 
finite prefixes in the source.

We say that a target trace \emph{refines} a source trace, and write $\src{t_S} \sim \trg{t_T}$
\emph{iff} $\src{t_S} = \trg{t_T} \vee \exists m \leq \trg{t_T}. ~ \src{t_S} = {m} :: \eundef$.
\rtcone{} holds for a compilation chain if, whenever a trace $\trg{t_T}$ is produced in some target context
by some $\cmp{P}$, then there is a source context in which $\src{P}$ can either faithfully 
reproduce $\trg{t_T}$ or encounter an undefined behavior while trying.
\tildesigmartp holds if, in order to claim
$\cmp{P}$ robustly satisfies a target property, it suffices to show
that computations of $\src{P}$ satisfy the same property or stop
prematurely exhibiting an undefined behavior.
%
Finally, \tildetaurtp holds if whenever a source property is robustly satisfied,
a certain \emph{refinement}
of the same property is robustly satisfied in the target.
%
The equivalence of these three criteria can be rigorously checked by simply unfolding our definitions 
of $\tilde{\tau}$ and $\tilde{\sigma}$. Indeed,
\begin{align*}
  \tilde{\tau}(\src{\pi_S}) &=  \myset{\trg{t_T}}{\trg{t_T} \in \src{\pi_S}} \cup \myset{\trg{t_T}}{\exists m \leq \trg{t_T}. ~ m :: \eundef \in \src{\pi_S}} \\ 
  \tilde{\sigma}(\trg{\pi_T}) &= \trg{\pi_T} \cup \myset{ m :: \eundef}{ \exists \trg{t_T} \in \trg{\pi_T}. ~ m \leq \trg{t_T}} 
\end{align*}
Notice that every source trace is related to some target trace and that for an arbitrary $\trg{\pi_T}$, $\tilde{\tau}(\tilde{\sigma} (\trg{\pi_T})) = \trg{\pi_T}$,
so that \tildesigmartp coincides with its weaker variant and
$\tildetaurtp \iff \rtcone \iff \tildesigmartp$. 
\end{example}

\begin{example}[With side-channels target traces are more informative than source ones] \label{ex:length}
In this example source and target share the same set of observable events, but we assume a timer is available in  the target. 
To model this we equip source traces with a natural number that intuitively gives us an estimation of the time needed to 
produce the trace.

In more detail, let $\src{\ii{Trace}_S}$ denote the set of traces in the source, $\trg{\ii{Trace}_T} = \src{\ii{Trace}_S} \times (\mathbb{N} \cup \{ \omega \})$. 
Write  $\trg{W_T \leadsto}_n ~\src{t_S}$ to denote that $\trg{W_T}$ produces $\src{t_S}$ in at most $n$ units of time, e.g. ms.
If $n = \omega$ then we mean that $\trg{W_T}$ produces $\src{t_S}$ in an arbitrary, maybe infinite, amount of time. 
Define the relation $\sim$ as follows: 

\ch{I find the second conjunct below silly.
  And it's still not even explained. How about removing it?
  Or do the results below depend on it? If so it's fishy.}
\begin{center}
  $\src{t_S} \sim \trg{t_T}$ \emph{iff} $\exists n. ~\trg{t_T} = (\src{t_S}, n) \wedge ~\exists \trg{W_T}. ~ \trg{W_T \leadsto}_n ~\src{t_S}$.
\end{center}
Notice that a trace produced by some target program is
related to the source trace that 
simply ``forgets'' the time spent in the computation. 
\rtcone{} requires that, if an attack can be mounted at the target level, the same attack can be
simulated, no matter in how much time, in the source.
\ch{I would start by explaining that $\tilde{\sigma}$ and $\tilde{\tau}$ do not
  form a Galois connection.}%
It is easy to show that for the set of properties $\mathcal{G}$ we get
from \autoref{def:weaker}\ch{Re-Explain the intuition of this set}
$\mathcal{G} \neq 2^{\trg{Trace}}$ and 
that \weaktilde{} is indeed weaker than $\tildesigmartp.$
Intuitively a $\trg{\pi_T}$ may contain spurious pairs $(\src{t_S}, n)$,
where $n$ is not large enough to represent a sufficient amount of execution time, and $\tilde{\sigma}$ removes all of them. 
For instance let $\src{t_S}$ be an infinite trace, $\tilde{\sigma}(\{ (\src{t_S}, 0)\}) = \emptyset$ because
there is no way to produce an infinite trace in $0$ units of time.
\weaktilde{} tells us that $\cmp{P}$ robustly 
satisfies $\{ (\src{t_S}, 0)\}$ if $\src{P}$ robustly satisfies the empty property, which is not possible. 
\end{example}

\ch{So what's the conclusion of this example? If I care about side-channels,
  should I not use any of these definitions? Os should I only give up on
  $\weaktilde$?}
\fi

\twocolumn
\fi 

\ifieee
\bibliographystyle{abbrvnaturl}
\footnotesize
\else 

\ifcamera
\bibliographystyle{ACM-Reference-Format}
\citestyle{acmauthoryear}   
\else
\bibliographystyle{abbrvnaturl}
\fi

\fi 

\bibliography{local,mp,safe}

\end{document}